\providecommand*{\theHALG@line}{}
\renewcommand*{\theHALG@line}{\thealgorithm.\arabic{ALG@line}}
\numberwithin{equation}{section}
\newtheorem{thm}{Theorem}[section]
\newtheorem{definition}[thm]{Definition}
\newtheorem{lemma}[thm]{Lemma}
\newtheorem{claim}[thm]{Claim}
\newtheorem{fact}[thm]{Fact}
\newtheorem{remark}[thm]{Remark}
\newtheorem*{term*}{Terminology}
\newtheorem*{model*}{Input model (IM)}
\newtheorem*{goal*}{Algorithmic goal (AG)}
\newtheorem*{sett*}{Parameter setting (PS)}
\newtheorem*{question*}{Question (Q)}
\newtheorem*{lemma*}{Lemma}
\newcommand{\imlabel}{\hyperref[im]{\upshape (IM)}}
\newcommand{\aglabel}{\hyperref[ag]{\upshape (AG)}}
\newcommand{\pslabel}{\hyperref[ps]{\upshape (PS)}}
\newtheorem*{nota}{Notation}
\newcounter{todocounter}
\newcommand{\wt}{\widetilde}
\newcommand{\N}{\ensuremath{\mathbb{N}}}
\newcommand{\R}{\ensuremath{\mathbb{R}}}
\DeclareMathOperator*{\E}{\ensuremath{\mathbb{E}}}
\newcommand{\1}{\ensuremath{\mathbbm{1}}}
\newcommand{\e}{\ensuremath{\epsilon}}
\newcommand{\calL}{\ensuremath{\mathcal{L}}}
\newcommand{\tmu}{\ensuremath{\wt{\mu}}}
\newcommand{\D}{\ensuremath{P}}
\newcommand{\op}{\ensuremath{\mathrm{op}}}
\newcommand{\diag}{\ensuremath{\text{diag}}}
\newcommand{\spn}{\ensuremath{\mathrm{span}}}
\newcommand{\unif}{\ensuremath{\mathrm{unif}}}
\newcommand{\supp}{\ensuremath{\mathrm{supp}}}
\newcommand{\cross}{\ensuremath{\mathrm{cross}}}
\newcommand{\inedge}{\ensuremath{\mathrm{in}}}
\newcommand{\Lin}{\ensuremath{\mathcal{L}^{\inedge}}}
\newcommand{\Lcross}{\ensuremath{\mathcal{L}^{\cross}}}
\newcommand{\poly}{\ensuremath{\mathrm{poly}}}
\newcommand{\polylog}{\ensuremath{\mathrm{polylog}}}
\newcommand{\opt}{\ensuremath{\mathrm{OPT}}}
\newcommand{\vol}{\ensuremath{\mathrm{vol}}}
\newcommand{\nei}{\ensuremath{\mathrm{N}}}
\newcommand{\crossvstarapx}{\ensuremath{V^*}}
\newcommand{\crossgapx}{\ensuremath{G}}
\newcommand{\crossvapx}{\ensuremath{V}}
\newcommand{\spec}{\ensuremath{\mathrm{SpecCluster}}}
\newcommand{\apx}{\ensuremath{\mathrm{apx}}}
\newcommand{\specapx}{\ensuremath{\mathrm{SpecCluster}}}
\newcommand{\M}{\ensuremath{\mathcal{X}}}
\newcommand{\Mapx}{\ensuremath{\mathcal{X}}}
\newcommand{\NM}{\ensuremath{\nei(\mathcal{X})^*}}
\newcommand{\NMapx}{\ensuremath{\nei(\mathcal{X})^*}}
\newcommand{\im}{\ensuremath{\Gamma}}
\newcommand{\imapx}{\ensuremath{\Gamma}}
\newcommand{\speccore}{\ensuremath{\mathrm{SpecCore}}}
\newcommand{\mislabeled}{\ensuremath{\mathrm{\Lambda}}}
\newcommand{\lab}{\ensuremath{\mathrm{LabelCluster}}}
\newcommand{\robcon}{\ensuremath{\textsc{RobustConn}}}
\newcommand{\walk}{\text{\ensuremath{\bf{w}}}}
\newcommand{\fn}{\ensuremath{\textsc{FN}}}
\newcommand{\fp}{\ensuremath{\textsc{FP}}}
\newcommand{\specdp}{\ensuremath{\textsc{DP}}}
\title{Spectral Clustering with Side Information} 
\author{
\centering
\begin{tabular}{ccc}
\begin{tabular}{c}
Hendrik Fichtenberger\\
Google Research
\end{tabular}
&
\begin{tabular}{c}
Michael Kapralov\\
EPFL
\end{tabular}
&
\begin{tabular}{c}
Ekaterina Kochetkova\\
EPFL
\end{tabular}
\\[1.4em]
\begin{tabular}{c}
Silvio Lattanzi\\
Google Research
\end{tabular}
&
\begin{tabular}{c}
Davide Mazzali\\
EPFL
\end{tabular}
&
\begin{tabular}{c}
Weronika Wrzos-Kaminska\\
EPFL
\end{tabular}
\end{tabular}
}
\date{}
\begin{document}
\begin{titlingpage}
\maketitle
\abstract{
\noindent
In the graph clustering problem with a planted solution, the input is a graph on $n$ vertices partitioned into $k$ clusters, and the task is to infer the clusters from graph structure. A standard assumption is that clusters induce well-connected subgraphs (i.e. $\Omega(1)$-expanders), and connections between clusters are sparse (i.e. clusters form $\epsilon$-sparse cuts). Such a graph defines the clustering uniquely up to $\approx \epsilon$ misclassification rate, and efficient algorithms for achieving this rate are known. While this vanilla version of graph clustering is extremely well studied,  in the practice of graph analysis vertices of the graph are typically equipped with labels, or features, that provide additional information on cluster ids of the vertices. For example, each vertex could be equipped with a cluster label that is corrupted independently with probability $\delta$. Using either of the two sources of information separately leads to misclassification rate $\min\{\epsilon, \delta\}$, but can one combine the two to achieve misclassification rate $\approx \epsilon \delta$?

In this paper, we give an affirmative answer to this question, and furthermore show that such  a misclassification rate can be achieved in sublinear time in the number of vertices $n$. Our key algorithmic insight is a new observation on ``spectrally ambiguous'' vertices in a well-clusterable graph. 

While our sublinear-time classifier achieves the nearly optimal $\approx \widetilde O(\epsilon \delta)$ misclassification rate, the approximate clusters that it outputs do not necessarily induce expanders in the graph $G$. In our second result, we give a polynomial-time algorithm for reweighting edges of the input graph from the original $(k, \epsilon, \Omega(1))$-clusterable instance to a $(k, \widetilde O(\epsilon \delta), \Omega(1))$-clusterable instance (for constant $k$), improving sparsity of cuts nearly optimally and preserving expansion properties of the communities -- an algorithm for {\em refining community structure} of the input graph.
}

\end{titlingpage}

\setcounter{page}{0}
\tableofcontents
\thispagestyle{empty}
\newpage

\section{Introduction}
\label{sec:intro}

In the clustering problem one is given $n$ data items and the task is to partition them into similar groups, or clusters. The clustering is sometimes based on individual feature embeddings of the data items, such as in the $k$-means problem and its variants, or pairwise similarity information, such as in graph clustering (or its signed version, namely correlation clustering). In the latter case, the algorithm is given a graph $G=(V, E)$ and is tasked with partitioning vertices of $G$ into $k$ densely connected subgraphs, or ``communities'', with few edges crossing from one community to another. This vanilla graph clustering problem has received considerable attention in the literature
\cite{kvv04,von07, mmv12, LGT12,LRTV12,GT14, KLL17, GKLMS21, CAdOM24}, but in the practice of graph analysis one very often has access not just to the graph, but also to feature vectors associated with the vertices. One example is the {\em node property prediction} problem: see \cite{hu2020open} for the problem formulation and \cite{zhang2021scr,li2024long} for some of the best clustering approaches. In the node property prediction problem one would like to cluster vertices based on their features as well as the structure of a graph that they belong to. Examples include predicting the conference that a paper appeared at given its abstract and the citation graph, predicting category of an Amazon product given the product description and the co-purchasing graph, or predict the presence of protein functions (i.e. labels on individual proteins) given a set of edges representing interactions between proteins, each equipped with a feature vector. Thus, an exciting broad direction is the following.

\vspace{0.1in}

\fbox{
\parbox{0.9\textwidth}{
\begin{center}
How does one optimally cluster a graph equipped with vertex features, \\ i.e. perform clustering with side information?
\end{center}
}
}

\vspace{0.1in}

\noindent
This question has been nicely answered in the literature on the stochastic block model \cite{DeshpandeSMM18,MosselXuITCS, JMLR:v24:20-1419}, where the graph with planted cluster structure is sampled and presented to the algorithm together with labels on vertices that represent cluster ids, but are independently corrupted with some small probability.  Recently, also local clustering algorithms have been proposed for clustering problems with side information~\cite{yang2023weighted} based on a general contextual random graph model, which generalize both the stochastic block model and classic random models for planted cluster model. In contrast, our work considers a worst-case model of graphs,  where the notion of ``communities'' is based on conductance.

\begin{definition}[Conductance]
\label{def:conductance}
    Let $G=(V,E)$ be a graph, and let $\emptyset \neq S \subseteq V$. We define the conductance of $S$ in $G$, denoted $\Phi_G(S)$, as the ratio between the number of edges $|E(S,V\setminus S)|$ across $S$ and the volume of the smaller side $\min\{\vol(S),\vol(V\setminus S)\}$\footnote{The notation $\vol(S)$ denotes the sum of degree of vertices in $S$.}. Furthermore, we define the conductance of $G$, denoted $\Phi(G)$, as the minimum $\Phi_G(S)$ over all such cuts $S$.
\end{definition}

\noindent
Specifically, we work with worst-case graphs with a planted clustering as per the following definition.

\begin{definition}[$(k,\epsilon,\phi)$-clustering]
\label{def:clustering}
    Let $G=(V,E)$ be a graph, let $k \ge 2$ be an integer, and let $\epsilon, \phi \in (0,1)$. We say that a partitioning $C_1,\dots,C_k$ of $V$ is a $(k,\epsilon,\phi)$-clustering of $G$ if for every $i \in [k]$ one has $\Phi_G(C_i) \le \epsilon$ and \footnote{The notation $G\{C_i\}$ refers to the subgraph of $G$ induces $C_i$ where we measure volumes with respect to the degrees in $G$.}$\Phi(G\{C_i\}) \ge \phi$. Furthermore, we say that $G$ is $(k,\epsilon,\phi)$-clusterable, or that it admits a $(k,\epsilon,\phi)$-clustering, if there exists such a partitioning $C_1,\dots,C_k$ of $V$.
\end{definition}

\noindent
The problem of clustering such graphs has received a lot of attention in the literature~\cite{CzumajPS15, chiplunkar2018testing, GKLMS21}. A given $(k, \epsilon, \phi)$-clusterable graph can admit multiple $(k,\epsilon,\phi)$-clusterings, even though all such clusterings of a given graph must be $\epsilon/\poly(\phi)$-close to each other (see \Cref{fig:uninformative} and discussion at the end of \Cref{sec:tech_overview} for mode details). Thus, existing work has focused on efficiently recovering a clustering that is $\epsilon/\poly(\phi)$-accurate~\cite{CzumajPS15, chiplunkar2018testing, GKLMS21}. In this paper we ask a different question.

\vspace{0.1in}

\fbox{
\parbox{0.9\textwidth}{
\begin{center}
Can one recover a much better than $\epsilon/\poly(\phi)$-approximation to a {\em target clustering} $C_1, C_2, \ldots, C_k$ of an input $(k, \epsilon, \phi)$-clusterable graph given {\em side information} in the form of vertex labels perturbed with probability $\delta\ll 1$?
\end{center}
}
}

\vspace{0.1in}

\noindent
Specifically, we consider a setting where we are given a $d$-regular $(k, \epsilon, \phi)$-clusterable graph together with a labeling of vertices that provides a hint as to which cluster each vertex belongs to in the target clustering $C_1, C_2,\ldots, C_k$, where these hints are wrong with probability $\delta>0$ for every given vertex, independently. More formally, we work in the following model:

\begin{model*}\label{im}
    Let $V$ be a set of $n$ elements, let $k \ge 2$ be an integer, let $\epsilon,\phi \in (0,1)$ be the conductance parameters, let $\eta \ge 1$ be the balance parameter, and let $\delta \in (0,1)$ be the label perturbation parameter. An adversary performs the following steps:
    \begin{enumerate}
        \item it picks a partitioning $C_1,\dots,C_k$ of $V$ such that $|C_i|/|C_j| \le \eta$ for all $i,j \in [k]$;
        \item it picks an integer $d\ge 3$ and a $d$-regular graph $G=(V,E)$ such that $C_1,\dots,C_k$ is a $(k,\epsilon,\phi)$-clustering of $G$ as per \Cref{def:clustering};
        \item for every $u \in V$, it picks a distribution $p(u)$ over $[k]$ so that the probability of $i \sim p(u)$ being equal to the id of the cluster $u$ belongs to, i.e. $u \in C_{i}$, is at least $1 - \delta$;
        \item independently for every vertex $u \in V$, it samples a label $\sigma(u)$ from the distribution $p(u)$. 
    \end{enumerate}
    Then, an algorithm is given  $G,\sigma$, and the parameters $n, d, k, \epsilon, \phi, \eta$ as input.
\end{model*}

\begin{remark}
    For simplicity of the presentation, we only consider $d$-regular graphs in our input model. The algorithms which we develop in this paper, however, can be applied to $d$-bounded degree graphs, with the notion of conductance (\Cref{def:conductance}) normalised by $d|S|$ rather than $\vol(S)$, via a standard reduction. More specifically, we convert a $d$-bounded degree graph into a $d$-regular graph by adding $d - \deg(x)$ self-loops to each vertex $x \in V$ and apply the algorithms to the resulting graph.
\end{remark}

\noindent
The goal is to design an algorithm that, given a clusterable graph and labeling of its vertices as above, determines which cluster every vertex belongs to as accurately as possible. As mentioned above, the graph alone already determines the clustering up to misclassification rate\footnote{The misclassification rate is the fraction of vertices for which the inferred cluster id does not correspond to the target clustering (see the algorithmic goal \aglabel{}).} of $\epsilon/\poly(\phi)$, i.e. incorrectly assigns at most an $\epsilon/\poly(\phi)$ fraction of vertices. The labels themselves can be used to guess the clustering up to an $O(\delta)$ misclassification rate, and it is not hard to show that a misclassification rate smaller than the product of the two, i.e. $\epsilon \delta/\poly(\phi)$, is not possible -- see \Cref{sec:tech_overview}. Thus, our goal is to obtain an algorithm with misclassification rate $\gamma$ as close to the product $\epsilon\delta/\poly(\phi)$ as possible. Formally,  we seek to do the following:

\begin{goal*}\label{ag}
    Given $G$, $\sigma$, and the parameters $k, \phi, \eta$ as per the input model above, output a labeling $\alpha: V \to [k]$ that with good probability has misclassification rate at most $\gamma \in (0,1)$, i.e. the partitioning $\widehat{C}_1 = \{u \in V: \, \alpha(u)=1\}, \dots, \widehat{C}_k = \{u \in V: \, \alpha(u)=k\}$ satisfies
\begin{equation*}
    \sum_{i=1}^k\left|\widehat{C}_i \triangle C_i\right| \le \gamma n \, .
\end{equation*}
\end{goal*}

\noindent
The algorithm has access to two noisy observations derived from the target partitioning $C_1,\dots,C_k$: the graph $G$ and the labels $\sigma$. The idea is that combining these two distinct observations of the same signal, the algorithm is able to approximately recover the target partitioning. For this intuition to make sense at all, we restrict ourselves to work in a parameter setting where each of these signals is somewhat useful in isolation.

\begin{sett*}\label{ps}
    In Sections \ref{sec:intro} and \ref{sec:tech_overview}, we restrict the input model above to $k,\eta = O(1)$, $\phi = \Omega(1)$, and $\epsilon ,\delta \ll 1$ for simplicity of presentation.
\end{sett*}

\noindent
Under the above parameter setting, our central question is the following.

\vspace{0.1in}

\fbox{
\parbox{0.9\textwidth}{
\begin{center}
    Can an efficient algorithm that sees both $G$ and $\sigma$ achieve misclassification rate $\gamma \approx {\epsilon \delta}$,
    hence doing better than what can be done by looking at either graph structure or labels in isolation?
\end{center}
}
}

\vspace{0.1in}

\noindent
Our main result is an affirmative answer to the above question. In fact, we even show that such misclassification rate can be achieved by a sublinear time clustering oracle, i.e., a small space data structure that allows sublinear (in the number of vertices $n$) time access to the $\approx \epsilon\delta$-approximate clustering. The clusters that our algorithm returns, however, do not necessarily induce expanders, like in the original input instance. We then also ask the following.

\vspace{0.1in}

\fbox{
\parbox{0.9\textwidth}{
\begin{center}
    Can an efficient algorithm that sees both $G$ and $\sigma$  {\em refine} the original $(k, \epsilon, \phi)$-clusterable graph to a $(k, \epsilon', \phi)$-clusterable graph for $\epsilon'\approx \epsilon \delta$?
\end{center}
}
}

\subsection{Our results}
\label{sec:ourresults}

\subsubsection*{Spectral clustering oracles with side information}

\noindent
One might wonder why we are asking specifically about misclassification rate $\epsilon \delta$. The reason is that, even when we are given both signals, we cannot beat the product of what each of them can provide in isolation. In the next paragraph, we show an example supporting this claim.

\paragraph{Why $\epsilon\delta$ is the best we can hope for.} To show this, let us fix $V$ and $k=2$. We show a partitioning $C_1,C_2$ of $V$ that is a $(2,\epsilon,\phi)$-clustering of a graph $G=(V,E)$, but such graph $G$ is completely uninformative about $\epsilon n$ vertices. Since the labeling will be wrong on a $ \delta$-fraction of these $\epsilon n$ vertices, we are bound to misclassify an $ \epsilon \delta$-fraction of vertices. We construct the graph $G$ as follows (see \Cref{fig:uninformative}): we partition $V$ into $A,B$ and a set of middle vertices $M$, where $|A|,|B| \approx n/2$ and $|M|\approx \epsilon n$; we define $G[A]$ and $G[B]$ to be $\phi$-expanders; we symmetrically connect every middle vertex in $M$ to $d/2$ neighbors in $A$ and in $B$. One can see that any partitioning of $V$ into $C_1,C_2$ such that $A\subseteq C_1$, $B\subseteq C_2$, and where $M$ is equally split between $C_1,C_2$, is a valid $(2,\epsilon,\phi)$-clustering of $G$. In this sense, $G$ is uninformative about the $ \epsilon n$ middle vertices~$M$.

\begin{figure}[h]
	\centering
	\begin{tikzpicture}[scale=1]

\def\Xradius{2}
\def\Yradius{1.3}
\definecolor{mydarkgreen}{RGB}{30,100,40}

\coordinate (Lcenter) at (-3,0);
 \node[anchor=west, scale = 1] at (-1, -1) {$A$};
 
\coordinate (Rcenter) at (3,0);
 \node[anchor=west, scale = 1] at (5, -1) {$B$}; 
 
\coordinate (Ccenter) at (0,3.3);
\coordinate (Rcenter) at (3,0);
 \node[anchor=west, scale = 1] at (-2, 3.0) {$M$};

\draw[fill=blue!15, draw=blue] (Lcenter) ellipse[x radius=\Xradius, y radius=\Yradius];
\draw[fill=blue!15, draw=blue] (Rcenter) ellipse[x radius=\Xradius, y radius=\Yradius];
\draw[fill=red!20, draw=red] (Ccenter) ellipse[x radius=1.2, y radius=0.8];

\newcommand{\placeEvenNodes}[5]{%
  
  \foreach \i in {0,...,\numexpr#2-1} {
    \pgfmathsetmacro{\angle}{#5 + 360 * \i / #2}
    \pgfmathsetmacro{\xshift}{\Xradius * cos(\angle)}
    \pgfmathsetmacro{\yshift}{\Yradius * sin(\angle)}
    \path let \p1 = #3 in
      node[circle, fill=black, draw=black, minimum size=#4, inner sep=0pt]
      (#1\i) at (\x1+10*\xshift,\y1+10*\yshift) {};
  }
}

\newcommand{\placeEvenNodesLarger}[5]{%
  \pgfmathsetseed{123} 
  \foreach \i in {0,...,\numexpr#2-1} {
    \pgfmathsetmacro{\angle}{#5 + 360 * \i / #2}
    \pgfmathsetmacro{\xshift}{\Xradius * cos(\angle)}
    \pgfmathsetmacro{\yshift}{\Yradius * sin(\angle)}
    \pgfmathsetmacro{\rnum}{20 + rand*5}
    \path let \p1 = #3 in
      node[circle, fill=black, draw=black, minimum size=#4, inner sep=0pt]
      (#1\i) at (\x1+\rnum*\xshift,\y1+\rnum*\yshift) {};
   
  }
}

\newcommand{\placeEvenNodesSmaler}[5]{%
  \pgfmathsetseed{123} 
  \foreach \i in {0,...,\numexpr#2-1} {
    \pgfmathsetmacro{\angle}{#5 + 360 * \i / #2}
    \pgfmathsetmacro{\xshift}{\Xradius * cos(\angle)}
    \pgfmathsetmacro{\yshift}{\Yradius * sin(\angle)}
    \pgfmathsetmacro{\rnum}{15 + rand*10}
    \path let \p1 = #3 in
      node[circle, fill=black, draw=black, minimum size=#4, inner sep=0pt]
      (#1\i) at (\x1+\rnum*\xshift,\y1+\rnum*\yshift) {};
   
  }
}

\placeEvenNodesLarger{Lbig}{6}{(Lcenter)}{3pt}{0}
\placeEvenNodesSmaler{Lsmall}{25}{(Lcenter)}{2pt}{9}

\placeEvenNodesLarger{Rbig}{6}{(Rcenter)}{3pt}{0}
\placeEvenNodesSmaler{Rsmall}{25}{(Rcenter)}{2pt}{9}

\placeEvenNodes{Cmid}{3}{(Ccenter)}{3pt}{0}

\foreach \i in {0,...,2} {
  \pgfmathtruncatemacro{\j}{2*\i}
  \pgfmathtruncatemacro{\k}{2*\i + 1}
  \ifnum\j<12
    \draw[-, >=stealth, black, thin]  (Cmid\i) -- (Lbig\j);
  \fi
  \ifnum\k<12
    \draw[-, >=stealth, black, thin]  (Cmid\i) -- (Lbig\k);
  \fi
}

\foreach \i in {0,...,2} {
  \pgfmathtruncatemacro{\j}{2*\i}
  \pgfmathtruncatemacro{\k}{2*\i + 1}
  \ifnum\j<12
    \draw[-, >=stealth, black, thin]  (Cmid\i) -- (Rbig\j);
  \fi
  \ifnum\k<12
    \draw[-, >=stealth, black, thin]  (Cmid\i) -- (Rbig\k);
  \fi
}

\draw[rotate = 135][mydarkgreen, dashed, thick] (2.5, 0.3) ellipse[x radius=2.5, y radius=4.0];

 \node[anchor=west, scale = 1, color = mydarkgreen] at (-3, 5) {$C_1$};

\draw[mydarkgreen, dashed, thick] (3,0.1) ellipse[x radius=2.1, y radius=1.7];
 \node[anchor=west, scale = 1, color = mydarkgreen] at (5.5, 0) {$C_2$};
\draw[rotate = 45][red, dotted, thick] (2.5, -0.3) ellipse[x radius=2.5, y radius=4.0];

 \node[anchor=west, scale = 1, color = red] at (3, 5) {$\widetilde{C}_1$};

\draw[red, dotted, thick] (-3,0.1) ellipse[x radius=2.1, y radius=1.7];

 \node[anchor=west, scale = 1, color = red] at (-6, 0) {$\widetilde{C}_2$};

\draw[->, thick] (-4,4.5) -- (-0.8,3.5);
\node[anchor=west, align=left] at (-7.5,4.5) {vertices in $M$\\may be in any cluster};

\end{tikzpicture}
	  \caption{In the clustering $C_1,C_2$ (dashed circles), the vertices in $M$ belong to $C_1$; in the clustering  $\widetilde{C}_1,\widetilde{C}_2$ (dotted circles), the vertices in $M$ belong to $\widetilde{C}_2$. Since both are valid $(2,\epsilon,\phi)$-clustering of this graph, it is uninformative as to which cluster the vertices in $M$ belong to.}
\label{fig:uninformative}
\end{figure}

\vspace{1em}
\noindent
Our main result is an algorithm that meets the optimal misclassification rate of $\epsilon \delta$ up to a logarithmic factor in $1/\delta$, thus positively answering our question above. Moreover, this algorithm classifies any given vertex in sublinear time.

\begin{thm}[Classifier, see \Cref{thm:sublinear}]
\label{thm:informal_classsifier}
    There is a \smash{$d\cdot n^{1/2+O(\epsilon)}\poly(\log(n/\delta))$}-time\footnote{For the stated preprocessing time, we additionally require $\epsilon \geq 1/\poly(\log(n))$. For more details, see \Cref{thm:sublinear} and \Cref{rem:additional_bounds_eps}.} algorithm that, given $G$ and $\sigma$ as per \imlabel{} and respecting the parameter setting \pslabel{}, prepares a data structure that features query time \smash{$d\cdot n^{1/2+O(\epsilon)}\poly(\log(n/\delta))$} such that with probability $0.9$ over the internal randomness of the algorithm and the draw of $\sigma$, all but an \smash{$\widetilde{O}(\epsilon \delta)$}\footnote{Here, the notation \smash{$\widetilde{O}(\cdot)$} hides a $\log (1/\delta)$ factor.} fraction of queries $u \in V$ are answered with a cluster id $i \in [k]$ such that $u \in C_i$, i.e. it has misclassification rate \smash{$\widetilde{O}(\epsilon \delta)$}.
\end{thm}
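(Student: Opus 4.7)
The plan is to combine a sublinear-time spectral clustering oracle in the style of \cite{GKLMS21} with local label aggregation, so that each query vertex is classified using both its random-walk fingerprint in $G$ and the labels observed along those walks. The starting point is that a $t = \Theta((\log n)/\phi^2)$-step lazy random walk from a ``typical'' vertex $v \in C_i$ produces a distribution concentrated on $C_i$ up to the $O(\epsilon)$-mass that leaks through $\epsilon$-sparse cuts, and two such distributions have large inner product precisely when their starting vertices lie in the same cluster. Preprocessing therefore picks $\widetilde{O}(\sqrt{n})$ anchor vertices per cluster, estimates their walk distributions (this is where the $n^{1/2+O(\epsilon)}$ appears), and allows any query $v$ to be classified in sublinear time by comparing its own walk distribution to the anchors, as in \cite{GKLMS21}.

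The new ingredient is to partition $V$ into a ``spectral core'' $S$, on which the walk-based classifier is already correct, and a small ``spectrally ambiguous'' remainder $\mislabeled$. A conductance argument should give $|\mislabeled| = O(\epsilon n)$: a vertex $v \in C_i$ is ambiguous only if the walk from $v$ places $\Omega(1)$ mass outside $C_i$, but the expected total mass a $t$-step walk from a uniformly random vertex deposits in the wrong cluster is at most $O(\epsilon t) = O(\epsilon \log n / \phi^2)$, so by Markov and an absorbing-mass bookkeeping all but an $O(\epsilon)$-fraction of vertices are $\Theta(1)$-concentrated on their own cluster. On $S$ the spectral oracle already contributes no error; on $\mislabeled$ the graph alone is genuinely uninformative, as the middle vertices $M$ in \Cref{fig:uninformative} illustrate, so the labels must do the work.

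For a query $v$, I would collect the labels observed at the endpoints of the walks already sampled, weight each endpoint by the probability it came from each hypothesised cluster $C_j$ (estimated via the anchor scores), and return the cluster of largest weighted label count. For $v \in S$ this reproduces the correct walk-based answer; for $v \in \mislabeled$ the weighted count for cluster $C_j$ is approximately $(1-\delta) \cdot \Pr[\text{walk ends in } C_j] + \delta \cdot \Pr[\text{walk ends outside } C_j]$, so provided $v$'s walk places a $2\delta$-mass gap on its true cluster the correct answer wins. A Chernoff bound over $\Theta(\log(1/\delta))$ independent walks brings the per-vertex failure probability down to $O(\delta \log(1/\delta))$, so the contribution of $\mislabeled$ to the overall misclassification rate is $|\mislabeled|/n \cdot O(\delta \log(1/\delta)) = \widetilde{O}(\epsilon \delta)$ as required, and the $\log(1/\delta)$ factor hidden in $\widetilde{O}(\cdot)$ is exactly this Chernoff overhead.

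The main obstacle, I expect, will be handling the small sub-population of $\mislabeled$ on which the walk mass is too evenly split to yield an $\Omega(\delta)$ gap -- i.e.\ the truly symmetric middle vertices of \Cref{fig:uninformative}. For these the weighted-label estimator essentially ties, and the algorithm must fall back to $v$'s own label (or a short local majority vote), which is correct with probability $1-O(\delta)$. Showing that this sub-population still has size $O(\epsilon n)$, and that the events ``$v$'s walk is symmetric'' and ``$v$'s own label is corrupted'' depend on disjoint sources of randomness so their probabilities multiply rather than merely add, is the key analytical step that drives the product rate $\epsilon \delta$ rather than the union bound rate $\epsilon + \delta$. Assembling this with the sublinear-time oracle's preprocessing and query guarantees, and tracking the $\epsilon$-dependence in the walk length, should yield the claimed $d \cdot n^{1/2+O(\epsilon)} \poly(\log(n/\delta))$ runtime of \Cref{thm:informal_classsifier}.
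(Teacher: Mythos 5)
Your proposal and the paper's proof diverge at the single most important structural issue, and the divergence is fatal to the claimed $\widetilde{O}(\epsilon\delta)$ rate. You split $V$ into a ``spectral core'' where walk-based classification is already correct and a ``spectrally ambiguous'' remainder of size $O(\epsilon n)$, and you design the label-aggregation rule on the assumption that any vertex not in the core has walk mass that is roughly split, so that a $\Theta(\delta)$-gap suffices to recover the truth, with a fallback to the vertex's own label when the estimator ties. But this dichotomy omits a third category that the paper spends most of its effort on: \emph{impostors} (\Cref{def:impostor}), vertices $u \in C_j$ whose spectral embedding (equivalently, walk distribution) is confidently concentrated on the \emph{wrong} cluster mean $\mu_i$ with $i \neq j$. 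The variance bound (\Cref{lemma:variancebound}, \Cref{lemma:impostor_n_cross_size}) only bounds the total number of cross vertices plus impostors by $O(\epsilon n)$; it does not let you rule out that all $\Theta(\epsilon n)$ of these are impostors rather than ambiguous cross vertices (see the discussion after the ``Naive spectral clustering'' paragraph in \Cref{sec:techoverview_attemps}). For an impostor $u$, your anchor comparison returns $i$ confidently, your weighted-label aggregation also returns $i$ confidently (the walk endpoints are mostly in $C_i$, hence labeled $i$ with probability $1-\delta$), and the tie-breaking fallback to $\sigma(u)$ never triggers because there is no tie. So an $\Theta(\epsilon n)$ subset is misclassified with probability $\approx 1$, and the overall rate is $\Omega(\epsilon)$, not $\widetilde{O}(\epsilon\delta)$.

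The paper's resolution is a different mechanism than the one you sketch: it does not try to distinguish an impostor from a core vertex by the shape of its walk distribution (they look identical), but by a \emph{connectivity} test in a carefully chosen auxiliary graph. When the spectral label $\tau(u)=i$ and the noisy label $\sigma(u)=j$ disagree, \Cref{alg:to} tests whether $u$ can reach the cross vertices $\M$ inside the subgraph induced by $(\spec(i)\cap\lab(j))\cup\M$, and the key structural fact (\Cref{lem:imposters_expand}, powered by \Cref{lemma:cores_sparsely_connected}) is that $\im(i,j)$ expands inside $C_j$ away from $\nei_G(\M)$, so correctly-labeled impostors robustly reach $\M$ while mislabeled non-impostors mostly do not (false positives controlled by a Galton--Watson/percolation bound in \Cref{lem:false_pos_p}). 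The $\log(1/\delta)$ factor in the final rate also arises differently than you predict: in the paper it comes from the walk length $L = \Theta(\phi^{-2}\log(1/\delta))$ in the reachability test (so that the escape probability from an expanding set decays below $\delta$), not from a Chernoff bound over $\Theta(\log(1/\delta))$ independent classification trials. You would need to discover some analogue of the cross-graph reachability test, or some independent argument ruling out $\Omega(\epsilon n)$ impostors in worst-case $(k,\epsilon,\phi)$-clusterable graphs, before your plan could recover the $\widetilde{O}(\epsilon\delta)$ bound.
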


\begin{remark}
    Theorems \ref{thm:informal_classsifier}, \ref{thm:informal_reweight_applied} and \ref{thm:informal_reweight} are informal versions of our main results which, in particular, apply to a more general setting of parameters.
\end{remark}

\subsubsection*{Refining community structure with side information}
Our \Cref{thm:informal_classsifier} demonstrates that using the additional signal from $\sigma$ can significantly enhance the recovery of the target clustering, far beyond what is achievable by analyzing $G$ alone. Our second result answers our last question, showing that we can embed the signal from $\sigma$ in $G$ by reweighting its edges to refine its community structure. In other words, we use $\sigma$ to morph $G$ into a new graph that carries a much better signal of the target clustering compared to $G$.

\begin{thm}[Refining communities -- see \Cref{thm:round_random_sigma}]
\label{thm:informal_reweight_applied}
    There is a polynomial-time algorithm that, given $G$ and $\sigma$ as per \imlabel{} and respecting the parameter setting \pslabel{}, outputs a reweighting of $G$ that with probability $0.9$ over $\sigma$ admits a \smash{$(k,\widetilde{O}(\epsilon \delta),\Omega(1))$}-clustering $C_1',\dots,C_k'$ that is \smash{$\widetilde{O}(\epsilon \delta)$}-close to the target clustering i.e. $\sum_{i =1}^k |C_i'\triangle C_i| = \widetilde{O}(\epsilon \delta ) \cdot n$.
\end{thm}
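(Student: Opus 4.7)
The plan is to reduce \Cref{thm:informal_reweight_applied} to \Cref{thm:informal_classsifier}. In polynomial time, I invoke the classifier of \Cref{thm:informal_classsifier} on every vertex $u \in V$ (each query costs $n^{1/2+O(\epsilon)}\poly\log(n/\delta)$, so the total cost is polynomial in $n$), obtaining a labeling $\alpha \colon V \to [k]$ whose induced partition $\widehat{C}_i := \alpha^{-1}(i)$ satisfies $\sum_i |\widehat{C}_i \triangle C_i| = \wt O(\epsilon\delta)\, n$ with probability $0.9$. The reweighted graph $G'$ is then obtained from $G$ by assigning weight $1$ to every edge $(u,v)$ with $\alpha(u) = \alpha(v)$ and weight $w_{\cross} := \Theta(\delta \log(1/\delta))$ to every other edge. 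The witness clustering used to establish the claim will be the target partition $C_1,\dots,C_k$ itself, which trivially satisfies the closeness bound.

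For the sparsity $\Phi_{G'}(C_i) \le \wt O(\epsilon\delta)$, I bound the weighted boundary $\partial_{G'} C_i$ by splitting $\partial_G C_i$ (of cardinality at most $\epsilon\, d|C_i|$) into: (a) edges with $\alpha(u) \neq \alpha(v)$, whose total weight is at most $w_{\cross} \cdot \epsilon d|C_i| = \wt O(\epsilon\delta)\, d|C_i|$; and (b) edges with $\alpha(u) = \alpha(v)$, which must be incident to at least one misclassified vertex and hence number at most $d \cdot \wt O(\epsilon\delta)\, n = \wt O(\epsilon\delta)\, d|C_i|$ (using $k,\eta = O(1)$). The volume $\vol_{G'}(C_i)$ is at least $(1 - \wt O(\epsilon\delta))\, d|C_i|$ from the weight-$1$ contribution of edges with both endpoints in $C_i \cap \widehat{C}_i$, and the conductance bound follows.

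The main obstacle is the expansion bound $\Phi(G'\{C_i\}) \ge \Omega(1)$. The plan is to decompose $C_i = C_i^{\mathrm{cor}} \sqcup B_i$ with $C_i^{\mathrm{cor}} := C_i \cap \widehat{C}_i$ and $|B_i| = \wt O(\epsilon\delta)|C_i|$, and to use the standard fact that removing a set of small volume from a constant-degree $\Omega(1)$-expander preserves $\Omega(1)$-expansion; hence $G[C_i^{\mathrm{cor}}]$ remains an $\Omega(1)$-expander and all of its internal edges carry weight $1$ in $G'$. For any cut $S \subseteq C_i$ with $\vol_{G'}(S) \le \vol_{G'}(C_i)/2$ I case-split on which of $\vol_{G'}(S \cap C_i^{\mathrm{cor}})$ and $\vol_{G'}(S \cap B_i)$ dominates. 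When the former dominates, the expansion of $G[C_i^{\mathrm{cor}}]$ directly supplies $\Omega(\vol_{G'}(S))$ weight-$1$ cut edges into $C_i^{\mathrm{cor}} \setminus S$. When the latter dominates, each $u \in S \cap B_i$ has $\alpha(u) = \ell \neq i$, so essentially all of its weighted $G'[C_i]$-degree consists of weight-$w_{\cross}$ edges going to $C_i^{\mathrm{cor}}$; these edges necessarily cross out of $S$ and match $\vol_{G'}(S)$ up to constants.

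The most delicate step is verifying this in mixed regimes and, crucially, bounding uniformly over $u \in B_i$ the number of $u$'s neighbors that share the same incorrect $\alpha$-label $\ell$, since otherwise a locally consistent blob of misclassified vertices could absorb the cut with weight-$1$ edges and spoil expansion. I expect this to follow from the global bound $\sum_i |B_i| = \wt O(\epsilon\delta)n$ together with the $d$-regularity of $G$, possibly assisted by a light pruning step that relabels any $u \in B_i$ whose $G$-neighborhood grossly contradicts $\alpha(u)$. Once this step is established, conductance, expansion, and closeness together certify that $(C_1,\dots,C_k)$ is a $(k,\wt O(\epsilon\delta),\Omega(1))$-clustering of $G'$, completing the proof.
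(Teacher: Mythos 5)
Your reduction to \Cref{thm:informal_classsifier} is the same first step the paper takes, but the reweighting scheme you then propose is genuinely different from the paper's, and the expansion argument in it has a real gap.

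The paper proves \Cref{thm:informal_reweight_applied} by combining \Cref{thm:informal_classsifier} with the general reweighting result \Cref{thm:informal_reweight}, whose algorithm (\Cref{alg:sdp_oneshot_k}) solves an SDP: it minimizes the total weight on flagged edges subject to the constraint $P(\mathcal{L}_x - \theta I)P \succeq 0$, which via the higher-order Cheeger inequality certifies that the $(k{+}1)$-way expansion of $G_x$ is $\Omega(\phi^2)$. Crucially, the witness clustering $C_1',\dots,C_k'$ is \emph{not} $C_1,\dots,C_k$: it is built by \Cref{alg:GT_k}, which for each $i$ carves out the \emph{largest} sparse cut $X_i \subsetneq C_i$ and then greedily reassigns pieces of $\cup_i X_i$ between clusters. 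Your scheme instead sets a single hard-coded weight $w_{\cross}=\Theta(\delta\log(1/\delta))$ on flagged edges, and tries to use $C_1,\dots,C_k$ itself as the witness.

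The gap is in the sentence ``the standard fact that removing a set of small volume from a constant-degree $\Omega(1)$-expander preserves $\Omega(1)$-expansion.'' This is false. Removing an arbitrary small subset $B_i$ from a $d$-regular $\phi$-expander can leave a vertex $v\in C_i^{\mathrm{cor}}$ with $d-1$ of its neighbors in $B_i$, and then the singleton cut $\{v\}$ has conductance $\approx 1/d$ in $G\{C_i^{\mathrm{cor}}\}$. The statement that \emph{is} true, and that the paper uses, is Spielman--Teng's \Cref{lemma:ST_k}: if $X_i$ is chosen to be the \emph{largest} set of volume $\le \vol(C_i)/2$ violating $\psi$-expansion, then $C_i\setminus X_i$ induces an $\Omega(\psi)$-expander. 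Your $B_i = C_i \setminus \widehat{C}_i$ is not the extremal violating set, so the conclusion does not follow. Your case analysis (``former dominates'') relies directly on this false claim.

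Your second case (``latter dominates'') and your own flagged caveat about ``a locally consistent blob of misclassified vertices'' point to a second, independent obstruction that the global bound $\sum_i|B_i| = \widetilde O(\epsilon\delta)n$ and a local pruning rule cannot close. The classifier $\alpha$ from \Cref{thm:informal_classsifier} can misclassify a set of $d{+}1$ impostor vertices in $C_i$ \emph{to the same wrong label} $\ell$, where these vertices form a near-clique in $G$ (nothing in the worst-case model or in the classifier's guarantee rules this out, and the paper's general statement \Cref{thm:informal_reweight} is explicitly agnostic to how $\alpha$'s errors are distributed). For such an $S$, every internal edge has weight $1$, every edge to $C_i\setminus S$ has weight $w_{\cross}$, and so $\Phi_{G'\{C_i\}}(S) = O(w_{\cross}) = \widetilde O(\delta) \ne \Omega(1)$. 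Pruning by ``does $u$'s neighborhood grossly contradict $\alpha(u)$'' will not find these vertices because their neighborhoods are internally consistent with the wrong label; any attempt to fix this iteratively is no longer a ``light'' step, and would amount to reinventing the reassignment loop in \Cref{alg:GT_k}. The paper sidesteps the blob problem entirely because its global $(k{+}1)$-way expansion certificate rules out any $(k{+}1)$-st sparse piece, so the only sparse cut in any $C_i$ is of bounded size and can be extracted and reassigned.

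In short, your conductance bound and the closeness bound are fine, but the expansion bound cannot be established for $G'\{C_i\}$ with your scheme and witness: you would need to both (i) replace the ``remove any small set'' step by a Spielman--Teng-style extremal-cut extraction, and (ii) replace your fixed $w_{\cross}$ by something that comes with a global spectral certificate, at which point you have essentially re-derived the paper's SDP-plus-reassignment argument.
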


\noindent
 In fact, our result is more general: given any labeling which is correct on all but $\gamma \ll \epsilon$ vertices, without any assumption on how it is generated, we can reweight $G$ to boost the sparsity of cuts between clusters from $\epsilon$ to $\approx \gamma$.

 \begin{thm}[Refining communities, general -- see \Cref{thm:round_to_clustering}]
\label{thm:informal_reweight}
    Let $d\ge 3$, $k=O(1)$, $\gamma \ll\epsilon \ll 1$, let $G=(V,E)$ be a $d$-regular graph which admits a $(k,\epsilon,\Omega(1))$-clustering $C_1,\dots,C_k$ such that $|C_i|/|C_j| = O(1)$ for all $i,j \in [k]$, and let $\alpha:V\rightarrow [k]$ be a labeling such that $u \notin C_{\alpha(u)}$ for all but a $\gamma$-fraction of $u \in V$. Then, there is a polynomial-time algorithm that, given $G$ and $\alpha$, outputs a reweighting of $G$ that admits a $(k,O(\gamma),\Omega(1))$-clustering $C_1',\dots,C_k'$ that is $O(\gamma)$-close to the target clustering, i.e. $\sum_{i =1}^k |C_i'\triangle C_i| = O(\gamma) \cdot n$.
\end{thm}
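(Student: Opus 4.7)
The plan is to reweight edges of $G$ using the labeling $\alpha$ so as to downweight edges crossing apparent clusters, and then to verify that the reweighted graph admits the promised refined clustering by trimming each apparent cluster into a genuine expander.

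Let $L_i := \alpha^{-1}(i)$ for $i \in [k]$. Because $\alpha$ is correct on all but $\gamma n$ vertices, $|L_i \triangle C_i| \le 2\gamma n$, which yields
\[
|E_G(L_i, V \setminus L_i)| \le |E_G(C_i, V \setminus C_i)| + 2d\gamma n = O(\epsilon)\, dn
\]
using $k = O(1)$, the balance hypothesis, and $\gamma \ll \epsilon$. I set the weights to $w(uv) = 1$ if $\alpha(u) = \alpha(v)$ and $w(uv) = \beta := \Theta(\gamma/\epsilon) \in (0, 1]$ otherwise. Every edge crossing $L_i$ is cross-$\alpha$, hence has weight $\beta$, so the weighted cut out of $L_i$ is $\beta \cdot O(\epsilon)\, dn = O(\gamma)\, dn$. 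The $(1 - \gamma)$-fraction of good vertices $L_i \cap C_i$ keeps essentially all of its $d$ neighbors inside $L_i$ via weight-$1$ edges, giving $\vol^w(L_i) = \Theta(d|L_i|) = \Theta(dn)$. Hence $\Phi_{G^w}(L_i) = O(\gamma)$, matching the target cut conductance.

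The apparent cluster $L_i$ itself need not induce a $G^w$-expander, because the $O(\gamma)$-fraction of bad vertices $L_i \setminus C_i$ can have in-$L_i$ degree as low as $0$ and so form low-conductance subcuts. I therefore apply expander-trimming to each $L_i$ inside $G^w$, iteratively removing from $L_i$ any vertex whose in-$L_i$ neighbor count falls below $(1 - O(\gamma))\, d$. The procedure terminates at a set $C_i' \subseteq L_i$ with $|L_i \setminus C_i'| = O(\gamma) |L_i|$ such that every $u \in C_i'$ has at least $(1 - O(\gamma))\, d$ in-$C_i'$ neighbors. The crucial structural input is that $L_i \cap C_i$ differs from $C_i$ on only a $\gamma$-fraction of vertices, so $G[L_i \cap C_i]$ inherits $\Omega(1)$-expansion from $\Phi(G\{C_i\}) = \Omega(1)$, and this embedded expander prevents trimming from advancing into the core beyond a $\gamma$-fraction. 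Since every $u \in C_i'$ then satisfies $\deg^w(u) \in [(1 - O(\gamma))\, d,\, d]$, Definition~\ref{def:clustering}'s ambient-volume convention makes the internal conductance of $G^w\{C_i'\}$ equal, up to constants, to the subgraph conductance of $G[C_i']$, which is $\Omega(1)$ by inheritance from $G[C_i]$.

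Reassigning the at most $O(\gamma) n$ orphans $\bigcup_i (L_i \setminus C_i')$ arbitrarily to the $C_j'$'s yields a partition with $\sum_i |C_i' \triangle C_i| = O(\gamma) n$ and $\Phi_{G^w}(C_i') = O(\gamma)$ by the same cut-and-volume bookkeeping as for $L_i$. I expect the trimming step to be the main obstacle: vanilla expander-trimming applied to a set of conductance $O(\gamma)$ guarantees only $\Omega(\gamma)$ internal expansion, not $\Omega(1)$. Upgrading to $\Omega(1)$ requires exploiting the embedded $\Omega(1)$-expander $L_i \cap C_i$ inside $L_i$, together with the fact that all edges inside $L_i$ have weight $1$ in $G^w$. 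Making this precise needs a careful argument showing that trimming sticks near $L_i \cap C_i$ and does not degrade expansion by more than a constant factor, which is delicate because trimming operates in the reweighted graph while the inherited expansion lives in the unweighted induced subgraph.
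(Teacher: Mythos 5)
Your reweighting rule is the fundamental problem, and the gap is not merely that the trimming argument is ``delicate'' — the fixed choice $w(uv)=\beta$ for cross-$\alpha$ edges, $w(uv)=1$ otherwise, with $\beta=\Theta(\gamma/\epsilon)$, produces a graph that provably \emph{cannot} admit the required $(k,O(\gamma),\Omega(1))$-clustering, for an adversarial $\alpha$. Consider $k=2$, and let $\alpha$ mislabel a single connected $\gamma n$-subset $B \subseteq C_1$ as cluster $2$, keeping all other labels correct. Every edge incident on $B$ other than the internal $B$--$B$ edges is cross-$\alpha$ and hence gets weight $\beta$; since $B$ has $|E(B,V\setminus B)| = \Theta(d\gamma n)$, the cut weight $w(B,V\setminus B)$ is $O(\beta d\gamma n)$ while $\vol(B)=d\gamma n$, so $\Phi_{G^w}(B)=O(\beta)=O(\gamma/\epsilon)=o(1)$. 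Together with $C_1\setminus B$ and $C_2$ this exhibits $k+1$ disjoint sets of conductance $o(1)$, so $\rho_{k+1}(G^w)=o(1)$ and, by the higher-order Cheeger inequality, $\lambda_{k+1}(\mathcal{L}_{G^w})=o(1)$. But by \Cref{lem:bnd-lambda}, any graph that admits a $(k,\cdot,\phi')$-clustering has $\lambda_{k+1}\geq\phi'^2/2$, so $G^w$ cannot admit any clustering with $\Omega(1)$ internal expansion — regardless of what trimming or reassignment you do afterward. The missing idea is precisely what the paper's SDP \eqref{prog:exptime} (Algorithm~\ref{alg:sdp_oneshot_k}) supplies: the reweighting is \emph{solved for} subject to the spectral constraint $P(\mathcal{L}_x-\theta I)P\succeq 0$, which guarantees $\lambda_{k+1}(\mathcal{L}_x)\ge\theta=\phi^2/5$ (\Cref{lambda_above_threshold}), while \Cref{lemma:OPT_SDP_k} shows a feasible point of objective $\le d\gamma n$ exists by \emph{only} zeroing out the flagged edges that are also true cross-edges, keeping the edges incident on misclassified blobs like $B$ at full weight.

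Two further pieces of the proposal are also off. First, the iterative trimming with threshold $(1-O(\gamma))d$ has no termination/size guarantee: removing a vertex with in-degree just below $(1-O(\gamma))d$ deletes only $O(\gamma d)$ cut edges but can add up to $d$ new ones, so the potential used to bound trimming can \emph{increase} at every step and the cascade may eat the entire cluster. Second, the claim that $G[L_i\cap C_i]$ ``inherits $\Omega(1)$-expansion'' from $G\{C_i\}$ is not true: deleting a $\gamma$-fraction of an expander does not yield an expander. What is true (and what the paper uses via \Cref{lemma:ST_k}, the Spielman--Teng trimming lemma) is that there is a \emph{further} trimming of $L_i\cap C_i$, of size $O(\gamma/\phi)\cdot n$, after which the remainder is a $\phi/6$-expander. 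Finally, ``reassigning orphans arbitrarily'' does not work; the paper's \Cref{alg:GT_k} reassigns greedily so that \Cref{claim:simpleprop_k} holds, and then the $(k+1)$-way expansion bound $\rho_{k+1}(G_x)\ge\phi^2/20$ (a consequence of the SDP constraint) is exactly what is used to certify that the reassigned pieces do not destroy internal conductance.
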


\noindent
One can see that \Cref{thm:informal_reweight} and \Cref{thm:informal_classsifier} together readily give \Cref{thm:informal_reweight_applied}: we can compute a label $\alpha(u)$ for every vertex with the classifier from \Cref{thm:informal_classsifier}, which is correct on all but $ \widetilde{O}(\epsilon \delta )$ vertices, and then run the algorithm from \Cref{thm:informal_reweight} on $G$ and $\alpha$.

\subsection{Related work}
As we mentioned before, the problem of recovering the planted clustering in clusterable graphs has been well-studied in the literature. One of the first examples is the work of Goldreich and Ron~\cite{gr11} on expansion testing, which essentially considered the setting $k=1$. Follow-up works~\cite{KaleS08,NachmiasS10,gr11,DBLP:journals/cpc/CzumajS10,KalePS13} resulted in clustering oracles that work for $k=2$ and any small constant $\epsilon>0$, using similar collision counting primitives for approximating the spectral embeddings. Furthermore, the work of~\cite{CzumajPS15} was the first to consider the case of general $k$, but only for $\epsilon\ll 1/({\text{poly}(k) \log n})$. The case of general $\epsilon$ and general $k$ required new techniques. To extract more structure from random walk distributions, \cite{chiplunkar2018testing,GKLMS21} introduced new linear algebraic post-processing techniques.

A relatively newer body of work examines stochastic block models (SBM) augmented by noisy vertex labels. For example, \cite{DeshpandeSMM18, MosselXuITCS, JMLR:v24:20-1419} focused on a version where nodes receive noisy labels (or feature vectors), and proposed computationally efficient algorithms achieving optimal recovery. Several works build on the idea of using the information coming from the noisy labels jointly with the information about the graph structure, such as \cite{newman2016structure, jiang2015stochastic}, which allows for accurate recovery algorithms even if the vertex labels are misleading. 

\section{Technical overview}
\label{sec:tech_overview}

In this section, we discuss the main ideas behind our results. In \Cref{sec:techoverview_attemps}, we begin by describing a few natural attempts to design an algorithm that combines $G$ and $\sigma$ to achieve a misclassification rate of $\epsilon \delta$. In \Cref{sec:techoverview_classifier}, we present the main technical tools used to achieve this rate, thereby proving \Cref{thm:informal_classsifier}. Finally, in \Cref{sec:techoverview_reweight}, we show how to reweight $G$ to refine its community structure, as stated in \Cref{thm:informal_reweight}.

\subsection{Some natural approaches and why they fail}
\label{sec:techoverview_attemps}

\noindent
We are interested in the setting where we are given a $d$-regular graph $G=(V,E)$ and a labeling $\sigma : V \rightarrow [k]$ with the following properties: $G$ admits an unknown $(k,\epsilon,\phi)$-clustering $C_1,\dots,C_k$, $\sigma$ assigns the correct cluster id with probability $1-\delta$ or a wrong cluster id with probability $\delta$ independently for each vertex;  $k = O(1)$, $\phi = \Omega(1)$, $\epsilon ,\delta \ll 1$, and $|C_i|/|C_j| = O(1)$ for all $i,j \in [k]$. The goal is to design an algorithm that given $G$ and $\sigma$ is able to recover $C_1,\dots,C_k$ as accurately as possible. As mentioned above, the best one can hope for is a misclassification rate of $\epsilon \delta$. We begin our discussion by considering a few natural attempts at obtaining such an algorithm.

\paragraph{Majority-voting.} One might hope that a simple majority vote among a vertex's neighbors would suffice to classify it. This fails even for $k=2$, for the following reason: a $(2,\epsilon,\phi)$-clustering of $G$ allows for $ \epsilon n$ problematic vertices in, say, $C_1$ to have $1-\phi$ fraction of their neighbors in $C_2$. Hence, this approach misclassifies at least an $ \epsilon$-fraction of vertices. Even doing a majority vote among a vertex's $t$-hop neighborhood suffers from the same barrier: most walks starting from a problematic vertex leak to $C_2$ in just one step, so also their $t$-hop neighborhood mainly consists of vertices from $C_2$.

\paragraph{Majority-voting++.}
Still in the simplified setting of $k=2$, there is more nuanced (but still very simple) variant of the vanilla majority-voting scheme discussed above that achieves misclassification rate $\approx \epsilon \delta$ if $\delta \le \epsilon$. This can be done by moving the critical threshold from $1/2$ (i.e. the usual threshold for majority-voting) to, say, $2/3 \cdot \phi $. However, it is not clear how this algorithm can be generalized to the general regime of $\delta$ and $d$. We discuss in \Cref{sec:majvoting++} the correctness of this simple classifier and the obstacles in extending it to solve the problem in general. 
\paragraph{Naive spectral clustering.}
It is known that the spectrum of clusterable graphs is quite informative of the target partitioning. It is then natural to try to combine the information provided by the spectral properties of the graph with the information given by the labeling $\sigma$. Before discussing how one might go about this, we want to be more precise regarding the kind of signal given by the spectrum of $G$: consider the eigenvectors $x_1,\dots,x_n \in \R^V$ of its normalized Laplacian, and let us define the $k$-dimensional embedding of a vertex $u$ to be the vector $f_u \in \R^k$ whose $i$-th entry equals the value of $x_i$ in $u$'s coordinate. Then, for every $i$ we define the cluster mean of $C_i$ as
\begin{align}
\label{eq:clustermean}
    \mu_i = \frac{1}{|C_i|}\sum_{u \in C_i} f_u \, .
\end{align}
It is known that the cluster means are far from each other and that the spectral embedding of most vertices tends to be close to their cluster mean. Formally, this is captured in the following lemma~\cite{CzumajPS15,GKLMS21} (sometimes referred to as the \textit{variance bound} as it bounds the directional variance of points in the spectral embedding around their cluster means; these cluster means $\mu_i$ are nearly orthogonal):
\begin{lemma}[Variance bound-- see \Cref{lemma:variancebound}, \ref{lemma:clustermeans}]
\label{informal:varbound}
    For any $k$-dimensional unit-norm vector $\alpha$, one has
    \begin{equation*} \sum_{i =1}^k \sum_{u \in C_i} \langle f_u-\mu_i,\alpha\rangle^2 \lesssim \epsilon   \, ,
    \end{equation*}
    and moreover
    \begin{equation*}
           \forall \, i\neq j \in [k], \,\,\, \left|\langle \mu_i , \mu_j \rangle\right| \lesssim \sqrt{{\epsilon}}  \cdot \frac{k}{n} \quad   \quad \text{and}\quad   \quad \forall \, i \in [k], \,\,\, \|\mu_i\|^2_2 \approx  \left(1\pm \sqrt{\epsilon}\right) \cdot\frac{k}{n} \, .
    \end{equation*}
\end{lemma}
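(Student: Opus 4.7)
The plan is to reduce the three inequalities to standard facts about the spectrum of $(k,\epsilon,\phi)$-clusterable graphs. Write $L = I - A/d$ for the normalized Laplacian, $\lambda_1 \leq \ldots \leq \lambda_n$ for its eigenvalues and $x_1, \ldots, x_n$ for a corresponding orthonormal eigenbasis; set $V_k := \spn(x_1, \ldots, x_k)$ and let $\chi_i := \1_{C_i}/\sqrt{|C_i|}$ be the normalized cluster indicators, which are orthonormal because they are supported on disjoint sets, so $U_k := \spn(\chi_1, \ldots, \chi_k)$ is $k$-dimensional. Since $G$ is $d$-regular and $\Phi_G(C_i) \leq \epsilon$, we have $\chi_i^\top L \chi_i = |E(C_i, V\setminus C_i)|/(d|C_i|) \leq \epsilon$, which via Courant-Fischer gives $\lambda_k \leq \epsilon$. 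In the opposite direction, the hypothesis $\Phi(G\{C_i\})\geq \phi = \Omega(1)$ for each $i$, combined with a higher-order Cheeger-type inequality, yields $\lambda_{k+1} = \Omega(\phi^2) = \Omega(1)$.

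Given these two spectral bounds, expanding each $\chi_i$ in the eigenbasis and using $\chi_i^\top L \chi_i \leq \epsilon$ gives $\|\chi_i - P_{V_k}\chi_i\|^2 \leq \epsilon/\lambda_{k+1} = O(\epsilon)$. Because this holds for every $\chi_i$ and $\dim U_k = \dim V_k = k$, a standard principal-angle argument (summing $\sum_i \|(I-P_{V_k})\chi_i\|^2$ and invoking symmetry of canonical angles) implies $\|P_{U_k} - P_{V_k}\|_{\op} = O(\sqrt{\epsilon})$; in particular, for any unit vector $v \in V_k$ one has $\|(I - P_{U_k}) v\|^2 = O(\epsilon)$.

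The variance bound then follows from a short calculation. Let $y := \sum_{j=1}^k \alpha_j x_j \in V_k$, which is a unit vector because $\|\alpha\|=1$. By construction $\langle f_u, \alpha\rangle = y(u)$ and $\langle \mu_i, \alpha\rangle = \langle \chi_i, y\rangle/\sqrt{|C_i|}$, so the standard within-cluster variance identity gives
\[
\sum_{i=1}^k \sum_{u \in C_i} \langle f_u - \mu_i, \alpha\rangle^2 \;=\; \|y\|^2 - \sum_{i=1}^k \langle \chi_i, y\rangle^2 \;=\; 1 - \|P_{U_k} y\|^2 \;=\; \|(I - P_{U_k})y\|^2 \;\lesssim\; \epsilon.
\]

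For the cluster-mean estimates, note that $\mu_i = (1/\sqrt{|C_i|})(\langle \chi_i, x_1\rangle, \ldots, \langle \chi_i, x_k\rangle)$, hence $\|\mu_i\|^2 = \|P_{V_k}\chi_i\|^2/|C_i| = (1 \pm O(\epsilon))/|C_i|$; for $i \neq j$, self-adjointness of $P_{V_k}$ and $\langle \chi_i, \chi_j\rangle = 0$ yield $\langle \mu_i, \mu_j\rangle = -\langle \chi_i, (I - P_{V_k})\chi_j\rangle/\sqrt{|C_i||C_j|}$, which Cauchy-Schwarz bounds by $O(\sqrt{\epsilon})/\sqrt{|C_i||C_j|}$. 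The balance hypothesis $|C_i|/|C_j| = O(1)$ forces $|C_i| = \Theta(n/k)$, converting these denominators into the claimed $k/n$ scaling. The main conceptual obstacle is precisely the spectral-gap input $\lambda_{k+1} = \Omega(1)$: deriving it from intra-cluster expansion alone requires a nontrivial higher-order Cheeger-type argument, whereas everything downstream is robust linear algebra on the two $k$-dimensional subspaces $U_k$ and $V_k$.
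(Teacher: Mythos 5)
Your argument is essentially sound, and it is worth noting up front that the paper gives no proof of this lemma at all: \Cref{informal:varbound} simply restates Lemmas 6 and 7 of \cite{GKLMS21} (imported here as \Cref{lemma:variancebound} and \Cref{lemma:clustermeans}), and the spectral-gap ingredient $\lambda_k \le 2\epsilon$, $\lambda_{k+1}\ge\phi^2/2$ is Lemma 3 of \cite{GKLMS21} (\Cref{lem:bnd-lambda}), also cited without proof. So you are reconstructing an argument rather than reproducing one. Two technical caveats. First, $\chi_i^\top\mathcal L\chi_i\le\epsilon$ for each $i$ does \emph{not} by itself give $\lambda_k\le\epsilon$ via Courant--Fischer; the restriction of $\mathcal L$ to $U_k=\spn(\chi_1,\dots,\chi_k)$ has off-diagonal entries $\chi_i^\top\mathcal L\chi_j = -|E(C_i,C_j)|/(d\sqrt{|C_i||C_j|})$ that must be handled, and the usual bookkeeping (writing $y^\top\mathcal L y = \frac1d\sum_{\{u,v\}\in E}(y_u-y_v)^2$ for $y\in U_k$ and charging cross-edges to endpoints) gives $\lambda_k\le 2\epsilon$, matching \Cref{lem:bnd-lambda}. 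Second, the principal-angle step only gives $\|P_{U_k}-P_{V_k}\|_{\mathrm{op}}=O(\sqrt{k\epsilon}/\phi)$, not $O(\sqrt\epsilon)$: your bound $\sum_i\|(I-P_{V_k})\chi_i\|^2=O(k\epsilon/\phi^2)$ equals $\sum_j\sin^2\theta_j$, so $\sin\theta_1 = O(\sqrt{k\epsilon}/\phi)$, and the $\sqrt k$ is generically unavoidable from per-vector bounds alone. Both slips are harmless under the stated parameter setting $k=O(1)$, $\phi=\Omega(1)$.

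Where your route genuinely differs from the \cite{GKLMS21} one is the variance bound itself. You pass through $\|P_{U_k}-P_{V_k}\|_{\mathrm{op}}$ and the identity $\sum_{i}\sum_{u\in C_i}\langle f_u-\mu_i,\alpha\rangle^2 = \|(I-P_{U_k})y\|^2$ with $y=U_{[k]}\alpha$, yielding $O(k\epsilon/\phi^2)$. The cleaner route, and the one that produces the exact constant $4\epsilon/\phi^2$ in \Cref{lemma:variancebound} with \emph{no} $k$-dependence, never forms the projection difference: instead one observes $y^\top\mathcal L y\le\lambda_k\|y\|^2\le 2\epsilon$, then, since $G\{C_i\}$ is a $\phi$-expander, $\lambda_2(\mathcal L_{G\{C_i\}})\ge\phi^2/2$ and hence $\sum_{u\in C_i}(y_u-\bar y_{C_i})^2 \le \frac{2}{\phi^2}\cdot\frac1d\sum_{\{u,v\}\in E(C_i)}(y_u-y_v)^2$; summing over $i$ and noting $\sum_i\frac1d\sum_{\{u,v\}\in E(C_i)}(y_u-y_v)^2\le y^\top\mathcal L y$ gives the bound directly. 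Your approach buys a clean geometric picture (two nearby $k$-planes) that makes the $\mu_i$-estimates almost immediate; the direct Rayleigh-quotient argument buys the sharper, $k$-free constant, which is what the paper quantitatively relies on in \Cref{lemma:variancebound} and downstream.
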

\noindent
It is then natural to distinguish vertices that are close to some cluster mean from vertices that are far from each of them. We say that the vertices of the first kind form \textit{spectral clusters}, while the second kind are \textit{cross vertices}, defined below.
\begin{definition}[Spectral clusters and cross vertices -- see \Cref{def:spec}]
\label{informal:spec}
    For $i \in [k]$, we define the $i$-th spectral cluster as 
    $$
    \spec(i) = \{u \in V: \, \|f_u-\mu_i\| \lesssim \phi \|\mu_i\|\},
    $$
     which are pairwise disjoint by design, since \Cref{informal:varbound} implies that the $\mu_i$'s are far from each other.
    Moreover, let 
    $$\M = V \setminus \cup_{i \in [k]} \spec(i)$$ denote the set of cross vertices. The notion of a spectral cluster induces a natural {\emph spectral labeling}  $\tau: V \rightarrow [k] \cup \{*\}$ that maps a cross vertex to $*$ (undefined) and other vertices to the spectral clusters they belong to:
    \begin{equation*}
        \tau(u) = \begin{cases}
            i\, , & \quad \text{\upshape if } u \in \spec(i) \\
            *\, , & \quad \text{\upshape if } u \in \M
        \end{cases} \, .
    \end{equation*}
\end{definition}
\noindent

\begin{remark}\label{rm:spec}
    A vertex $u\in V$ that belongs to $\spec(i)$, i.e. satisfies $\tau(u)=i$, {\bf does not} necessarily belong to cluster $C_i$. Instead, $\tau(u)=i$ simply means that spectral information suggests that the cluster of $u$ is $i$. In fact, up to $\approx \epsilon/\phi$ fraction of vertices can be misclassified by this rule. Our algorithm will combine this prediction with the label $\sigma(u)$ carefully to achieve $\approx \epsilon \delta$ misclassification rate.
\end{remark}

\noindent
As per \Cref{rm:spec}, we would ideally like the vertices in the $i$-th spectral cluster to correspond to the vertices in $C_i$. However, this is not necessarily true, which motivates the notion of \textit{impostors}: vertices that spectrally look like they belong to $C_i$ but are actually from $C_j$.

\begin{definition}[Impostors -- see \Cref{def:impostor}]
    \label{informal:impostor}
    For $i \neq j\in[k]$, we define the set of $(j\to i)$-impostors, denoted $\im(i,j)$, to be the subset of $C_j$ that lies in the $i$-th spectral cluster, i.e. $\im(i,j) = \spec(i) \cap C_j$. Furthermore, let  for convenience $\im = \cup_{i \neq j \in [k]} \im(i,j)$ be the set of all impostor vertices.
\end{definition}
\noindent
With this notation in place, we can compactly state the recovery guarantee provided by the variance bound: all but $\epsilon n$ vertices are close to their cluster mean.

\begin{lemma}[Few cross and impostor vertices -- see \Cref{lemma:impostor_n_cross_size}]
    \label{informal:fewbad}
    At most $\approx \epsilon n$ vertices are cross or impostor vertices, i.e.
    $|\M \cup \im| \lesssim \epsilon n$.
    \end{lemma}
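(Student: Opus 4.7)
The plan is to convert the variance bound of \Cref{informal:varbound} into a global upper bound on $\sum_{i,u \in C_i} \|f_u - \mu_i\|^2$, and then to argue that every cross vertex and every impostor individually contributes $\Omega(k/n)$ to this sum. Dividing the two bounds yields $|\M \cup \im| \lesssim \epsilon n$.

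For the global bound, I would apply \Cref{informal:varbound} with $\alpha$ ranging over the $k$ standard basis vectors of $\R^k$ and sum the resulting inequalities; this gives $\sum_{i=1}^k \sum_{u \in C_i} \|f_u - \mu_i\|^2 \lesssim k\epsilon = O(\epsilon)$ using $k = O(1)$ from \pslabel{}. For the per-vertex lower bound, fix $u \in \M \cup \im$ and let $c(u)$ denote the true cluster index of $u$. If $u$ is a cross vertex in $C_i$, then $u \notin \spec(i)$ gives $\|f_u - \mu_i\|^2 \gtrsim \phi^2 \|\mu_i\|^2 \approx \phi^2 k/n$ directly from \Cref{informal:spec} and the $\|\mu_i\|^2 \approx k/n$ bound. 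If instead $u \in \im(i,j)$ with $j = c(u) \neq i$, then $u \in \spec(i)$ yields $\|f_u - \mu_i\| \lesssim \phi \|\mu_i\| \lesssim \phi \sqrt{k/n}$, while the near-orthogonality of cluster means from \Cref{informal:varbound} gives
$$\|\mu_i - \mu_j\|^2 = \|\mu_i\|^2 + \|\mu_j\|^2 - 2\langle \mu_i, \mu_j\rangle \;\geq\; 2(1 - \sqrt{\epsilon})k/n - O(\sqrt{\epsilon})k/n \;\gtrsim\; k/n.$$
A triangle inequality then yields $\|f_u - \mu_j\| \geq \|\mu_i - \mu_j\| - \|f_u - \mu_i\| \gtrsim \sqrt{k/n}$, and hence $\|f_u - \mu_j\|^2 \gtrsim k/n$. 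Combining the two cases and using $\phi = \Omega(1)$, every $u \in \M \cup \im$ satisfies $\|f_u - \mu_{c(u)}\|^2 \gtrsim k/n$, so
$$|\M \cup \im|\cdot \Omega(k/n) \;\leq\; \sum_{u \in \M \cup \im} \|f_u - \mu_{c(u)}\|^2 \;\leq\; \sum_{i=1}^k \sum_{u \in C_i} \|f_u - \mu_i\|^2 \;\lesssim\; \epsilon,$$
from which $|\M \cup \im| \lesssim \epsilon n$ follows.

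The only delicate point is the impostor case: for the triangle inequality to produce a nontrivial lower bound, the absolute constant hidden in the spectral cluster condition $\|f_u - \mu_i\| \lesssim \phi\|\mu_i\|$ must be strictly smaller than $\|\mu_i - \mu_j\|/(2\|\mu_i\|) \approx 1/\sqrt{2}$. This is a cosmetic issue, handled by fixing a sufficiently small absolute constant (e.g.\ $1/2$) in \Cref{informal:spec}; it works for every admissible $\phi \in (0,1)$. Everything else is routine once the variance bound and the $\|\mu_i\|$, $\langle \mu_i, \mu_j\rangle$ estimates from \Cref{informal:varbound} are in hand.
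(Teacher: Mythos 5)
Your proof is correct and follows essentially the same route as the paper's proof of Lemma~\ref{lemma:impostor_n_cross_size}: upper-bound $\sum_{u\in V}\|f_u-\mu_{\iota(u)}\|_2^2$ via the variance bound, show that every $u\in\M\cup\im$ contributes $\Omega(\phi^2 k/n)$ to it, and divide. The only cosmetic difference is in the impostor case, where the paper invokes disjointness of spectral clusters (Lemma~\ref{lemma:disjoint_balls}) to conclude $u\notin\spec(\iota(u))$ and then applies the definition of $\spec$ directly, whereas you inline the same near-orthogonality-plus-triangle-inequality argument that underlies the disjointness lemma.
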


\noindent
In order to better appreciate the guarantee of this lemma, we introduce some notation to refer to the true cluster ids.

\begin{nota}
    We use $\iota: V \rightarrow [k]$ to denote the true cluster ids, such that $v \in C_{\iota(v)}$ for every $v \in V$.
\end{nota}

\noindent
Then, \Cref{informal:fewbad} can be restated as follows.

\begin{remark}
    \Cref{informal:fewbad} guarantees that $\|f_u - \mu_{\iota(u)}\|_2 \lesssim \phi \|\mu_{\iota(u)}\|_2^2$ for all but $\approx \epsilon n$ vertices. In other words, one has $\tau(u)=\iota(u)$ for all but $\approx \epsilon n$ vertices.
\end{remark}

\noindent
In light of this, one might try to combine the spectral information of $G$ with the randomly perturbed labeling $\sigma$. A naive attempt to classify a vertex $u \in V$ can be the following.

\begin{algorithm}
\caption*{Naive spectral clustering}\label{alg:approach2}
\begin{algorithmic}[0]
\State \textbf{Input:} $G$, $\sigma$, and a vertex $u \in V$
\State \textbf{Output:} a cluster id in $[k]$
\If{$\tau(u) \neq * $}  \Comment{$u$ is in a spectral cluster} 
\State \Return $\tau(u)$
\Else  \Comment{$u$ is a cross vertex, i.e. $u \in \M$}
\State \Return $\sigma(u)$
\EndIf
\end{algorithmic}
\end{algorithm}

\noindent
The above algorithm would give misclassification rate $\epsilon \delta$, if we knew that any vertex that is far from its cluster mean is a cross vertex. While this is true for certain classes of random graphs~\cite{afwz17,DLS21,BJKMMW24}, it is not known for worst-case clusterable graphs: according to the variance bound and \Cref{informal:fewbad}, it is plausible that $\epsilon n$ vertices in $C_i$ are not only far from $\mu_i$ but also close to $\mu_j$. In other words, the naive spectral clustering algorithm is easily fooled by the (possibly $\epsilon n$ many) impostor vertices (see \Cref{informal:impostor} and \Cref{informal:fewbad}). While we do not know of an explicit example graph that realizes this possibility, we do not know how to rule it out even when $k=2$. Therefore, all we can say about the algorithm outlined above is that it achieves a misclassification rate of $\epsilon$.

\subsection{Our approach}
\label{sec:techoverview_classifier}

Our goal is to leverage the spectral structure of clusterable graphs together with the perturbed labeling $\sigma$, as in the naive spectral clustering algorithm. That approach attempts to exploit the fact that one can easily identify vertices that lie far from any cluster mean~-- these are the cross vertices, which belong to no spectral cluster (see \Cref{def:spec})~-- and assigns labels to them using  $\sigma$. The key obstacle to achieving a misclassification rate better than $\epsilon$ is that we cannot easily identify the impostor vertices -- those that appear close to a wrong cluster (see \Cref{def:impostor}) -- whose existence we cannot rule out or bound more tightly than  $\epsilon n$, as previously discussed. However, as it turns out, there is a crucial property of impostors that we can exploit to identify them: vaguely, most of them are able to leave the set of vertices that have the same label $\sigma$ and are in the same spectral cluster. To make this more precise, we introduce convenient notation to refer to the set of vertices that have a given label $i$ in $\sigma$. These sets, called \emph{label clusters}, can be used by an algorithm as proxies to true clusters $C_i$'s.

\begin{definition}[Label clusters -- see \Cref{def:labelcluster}]
\label{informal:labcluster}
For $i \in [k]$, we define the $i$-th label cluster to be the set $\lab(i) = \{u \in V: \sigma(u)=i\}$.
\end{definition}

\noindent
Then, consider a vertex $u$ in the $i$-th spectral cluster with label $\sigma(u)=j$. We want to test if $u$ is a $(j\to i)$-impostor by checking if it can leave the set $\spec(i) \cap \lab(j)$. Specifically, we check if it can reach the cross vertices $\M$ (which are disjoint from $\spec(i) \cap \lab(j)$ by \Cref{informal:spec}) by walking in the subgraph induced by $(\spec(i) \cap \lab(j)) \cup \M$. We call this subgraph the $(i,j)$-cross graph, defined below and illustrated in \Cref{fig:cross}.

\begin{definition}[Cross graphs -- see \Cref{def:crossgraph}]
\label{informal:crossgraph}
     For $i \neq j\in[k]$, we define the cross graph $G_{i,j}=(V,E_{i,j})$ to be the subgraph induced by the vertices in the $i$-th spectral cluster with label $j$, together with the cross vertices. In other words, $G_{i,j}$ is the subgraph induced by the edges on $(\spec(i) \cap \lab(j)) \cup \M$.
\end{definition}

\begin{figure}[ht]
	\centering
    
	\scalebox{0.9}{\tdplotsetmaincoords{20}{160}
\begin{adjustbox}{trim=0 0 0 0,clip}
\begin{tikzpicture}[tdplot_main_coords, scale=5]
\definecolor{mydarkgreen}{RGB}{30,100,10}


\pgfdeclarelayer{background}
\pgfdeclarelayer{foreground}
\pgfsetlayers{background,main,foreground}

\begin{pgfonlayer}{background}
  \draw[->, >=stealth, semithick] (0,0,0) -- (1.4, 0, 0);
  \draw[->, >=stealth, semithick] (0,0,0) -- (0, 1.2, 0);
  \draw[->, >=stealth, semithick] (0,0,0) -- (0, 0, 3.0);
\end{pgfonlayer}

\begin{pgfonlayer}{foreground}
  \draw plot [mark=*, mark size=7.5, mark options={fill=blue, opacity=0.2}] coordinates {(1.4,0,0)};
  \draw plot [mark=*, mark size=7.5, mark options={fill=blue, opacity=0.2}] coordinates {(0,1.2,0)};
  \draw plot [mark=*, mark size=7.5, mark options={fill=blue, opacity=0.2}] coordinates {(0,0,3.0)};
\end{pgfonlayer}

  \coordinate (A) at (1.4, 0, 0);
  \coordinate (B) at (0, 1.2, 0);
  \coordinate (C) at (0, 0, 3.0);


  \node[anchor=west, scale = 1.0, color = blue] at (2.2, 0, 0) {$\spec(i)$};
  \node[anchor=west, scale = 1.0, color = blue] at (0, 1.6, 0) {$\spec(j)$};
  \node[anchor=west, scale = 1.0, color = blue] at (0, 0, 2) {$\spec(l)$};


  \pgfmathsetseed{2025}

\foreach \i in {0,...,14} {
  \pgfmathsetmacro{\rx}{0.3 + 1.0*(rnd-0.5)}
  \pgfmathsetmacro{\ry}{0.4 + 1.0*(rnd-0.5)}
  \pgfmathsetmacro{\rz}{1.0 + 2.0*(rnd-0.5)}
  \coordinate (g\i) at (\rx,\ry,\rz);
  \draw[tdplot_main_coords, fill=mydarkgreen] (g\i) circle[radius=0.01];
}

\foreach \i in {15,...,19} {
  \pgfmathsetmacro{\rx}{0.3 + 1.0*(rnd-0.5)}
  \pgfmathsetmacro{\ry}{0.4 + 1.0*(rnd-0.5)}
  \pgfmathsetmacro{\rz}{1.0 + 2.0*(rnd-0.5)}
  \draw[tdplot_main_coords, fill=mydarkgreen] (\rx,\ry,\rz) circle[radius=0.01];
}

\foreach \i/\name in {1/s0, 2/s1, 3/s2} {
\pgfmathsetmacro{\sx}{0.55 + 0.5*(rnd-0.5)}
\pgfmathsetmacro{\sy}{0.45 + 0.5*(rnd-0.5)}
\coordinate (s0) at (\sx, \sy, 0);
\node[tdplot_main_coords, star, star points=5, fill=mydarkgreen, minimum size=8pt, inner sep=0pt] at (s0) {};

\pgfmathsetmacro{\sx}{0.55 + 0.5*(rnd-0.5)}
\pgfmathsetmacro{\sy}{0.45 + 0.5*(rnd-0.5)}
\coordinate (s1) at (\sx, \sy, 0);
\node[tdplot_main_coords, star, star points=5, fill=mydarkgreen, minimum size=8pt, inner sep=0pt] at (s1) {};

\pgfmathsetmacro{\sx}{0.55 + 0.5*(rnd-0.5)}
\pgfmathsetmacro{\sy}{0.45 + 0.5*(rnd-0.5)}
\coordinate (s2) at (\sx, \sy, 0);
\node[tdplot_main_coords, star, star points=5, fill=mydarkgreen, minimum size=8pt, inner sep=0pt] at (s2) {};

}

  \coordinate (v0) at (1.37, 0.09, 0.07);
  \coordinate (v1) at (1.57, 0.25, 0.13);
  \coordinate (v2) at (1.37, 0.24, 0.03);
  \coordinate (v3) at (1.47, 0.30, 0.16);
  \coordinate (v4) at (1.2, 0.14, 0.10);
  \coordinate (v5) at (1.22, 0.09, 0.12);

\draw[black, thin] (v0) -- (g0);
\draw[black, thin] (v4) -- (s0);
\draw[black, thin] (v5) -- (s1);
\draw[black, thin] (v0) -- (g7);

\foreach \v in {v0,v1,v2,v3,v4,v5}
  \node[tdplot_main_coords, star, star points=5, fill=black, minimum size=8pt, inner sep=0pt] at (\v) {};





  \draw[black, thin] (v1) -- (v2);
  \draw[black, thin] (v1) -- (v4);

  \draw[black, thin] (v2) -- (v4);
  \draw[black, thin] (v1) -- (v5);
  \draw[black, thin] (v2) -- (v5);

  
  \coordinate (f0) at (1.37, -0.09, 0);
  \coordinate (f1) at (1.62, -0.12, 0);
  \coordinate (f2) at (1.47, -0.17, 0);
  \coordinate (f3) at (1.56, -0.01, 0);
  \coordinate (f4) at (1.2, -0.14, 0);
  \coordinate (f5) at (1.25, -0.2, 0);
  
  \foreach \v in {f0,f1,f2,f3,f4, f5}
    \draw[tdplot_main_coords, fill=black] (\v) circle[radius=0.01];
    
  \draw[black, thin] (v3) -- (f4);
  \draw[black, thin] (v1) -- (f1);
  \draw[black, thin] (v3) -- (f3);
  \draw[black, thin] (v4) -- (f0);

  \draw[black, thin] (f0) -- (g1);

  \node[anchor=west, align=left, color = black] at (2.2, -0.8, 0) {vertices in  \\ $\spec(i) \cap \lab(j)$};
  

\draw[mydarkgreen, dashed, thick] (0.2, 0.1, 0) ellipse[x radius=0.8, y radius=0.8];

\node[anchor=west, scale = 1.0, color = mydarkgreen] at (1, 0, 3) {cross vertices $\M$};

\node[anchor=west, scale = 1.2, color = black] at (1.5, 0, 0) {$\mu_i$};
\node[anchor=west, scale = 1.2, color = black] at (0, 1.2, 0) {$\mu_j$};
\node[anchor=west, scale = 1.2, color = black] at (0, 0, 3.1) {$\mu_l$};

\end{tikzpicture}
\end{adjustbox}}
	  \caption{Illustration of the the cross graph $G_{i, j}$ (see \Cref{def:crossgraph}). The axes represent three cluster means $\mu_i, \mu_j$ and $\mu_l$ for distinct $i,j,l \in [k]$.  The vertex set of $G_{i, j}$ can be partitioned into $\spec(i)\cap \lab(j)$ (vertices inside the shaded circle labeled \( \spec(i) \)) and $\M$ (vertices inside the dashed circle labeled ``cross vertices $\M$"). The vertices marked by stars illustrate vertices that truly belong to $C_j$. 
      The black lines illustrate the edges with at least one endpoint in $\spec(i)\cap \lab(j)$, while edges with both endpoints in $\M$ are omitted. 
      Note that the vertices in \( \spec(i) \cap \lab(j) \) that truly belong to \( C_j \) are typically connected to $\M$, while most of the vertices that do not belong to $C_j$ are not. }
\label{fig:cross}
\end{figure}
\noindent
With this notation, we can state our main technical insight: all but a small fraction of $(j \to i)$-impostors can leave the set $\spec(i) \cap \lab(j)$ in the cross graph; in particular, they can reach the cross vertices $\M$. Note that this test is amenable to an algorithm, since it can compute $\spec(i)$, $\lab(j)$, $\M$, and therefore test reachability.

\begin{lemma}[Impostors can reach the cross vertices-- see \Cref{claim:reachability_p}, \ref{claim:proxy}]
    \label{informal:conn_to_X}
    For any $i\neq j \in [k]$, consider the $(j\to i)$-impostors $u \in \im(i, j)$ with label $\sigma(u)=j$, i.e. vertices from $C_j$ with a correct label $\sigma$ that are wrongly placed in the $i$-th spectral cluster. In expectation over $\sigma$, all but $\approx \epsilon \delta n$ of these vertices are connected to the cross vertices $\M$ in the cross graph $G_{i,j}$ induced by  $(\spec(i) \cap \lab(j)) \cup \M$.
\end{lemma}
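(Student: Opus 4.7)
The plan is to bound the expectation by linearity of expectation over $\sigma$, reducing to a per-impostor estimate, and then to exhibit, for most impostors $u \in \im(i, j)$, a short deterministic certificate path in $G$ from $u$ to $\M$ whose intermediate vertices are themselves impostors in $\im(i, j)$. Independence of the labels $\sigma$ will then imply that this path survives in $G_{i, j}$ with probability $1 - \widetilde O(\delta)$, yielding the claim.

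By linearity, the quantity to bound equals $\sum_{u \in \im(i, j)} \Pr_\sigma[\sigma(u) = j \text{ and } u \text{ not connected to } \M \text{ in } G_{i, j}]$. Since $|\im(i, j)| \le |\M \cup \im| \lesssim \epsilon n$ by \Cref{informal:fewbad}, it suffices to show that for all but an $O(\delta)$ fraction of impostors, the summand is at most $\widetilde O(\delta)$; the remaining ``exceptional'' impostors contribute at most $O(\delta) \cdot |\im(i, j)| = O(\epsilon \delta n)$ to the total.

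The deterministic core of the argument is the following structural claim: for all but an $\widetilde O(\delta |\im(i, j)|)$ fraction of impostors $u \in \im(i, j)$, there is a $G$-path $u = u_0, u_1, \ldots, u_t$ with $u_s \in \im(i, j)$ for every $s < t$, with $u_t \in \M$, and with $t = \poly(1/\phi)\log(1/\delta)$. Both $\im(i, j)$ and $\M \cap C_j$ sit inside $C_j$, whose induced subgraph $G\{C_j\}$ is a $\phi$-expander. I would combine (a) expansion of $G\{C_j\}$, which forces small sets to have many external neighbors, with (b) the spectral characterisation of impostors via the variance bound (\Cref{informal:varbound}): because the embedding of $u \in C_j$ lies near $\mu_i$ rather than $\mu_j$, short random walks from $u$ must carry non-trivial probability to parts of the graph outside the bulk of $C_j$, and the only way to do so is via the sparse cut separating $C_j$ from the rest of $G$, a region inhabited by the cross vertices $\M$. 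A localised conductance/flow argument in $G\{C_j\}$ then shows that most impostors lie within polylogarithmically many hops of $\M$ along a path that stays inside $\im(i, j) \cup \M$. Given such a path, each intermediate $u_s \in \im(i, j) \subseteq C_j$ has $\sigma(u_s) = j$ independently with probability $\ge 1 - \delta$, so by a union bound the path survives in $G_{i, j}$ with probability at least $(1 - \delta)^t \ge 1 - t\delta = 1 - \widetilde O(\delta)$. Summing over $u$ gives the claimed $\widetilde O(\epsilon \delta n)$ bound.

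The main obstacle is proving the deterministic structural claim. A vanilla expansion argument in $G\{C_j\}$ produces paths that leak out of $\im(i, j)$ through $\spec(j) \cap C_j$, the bulk of $C_j$, which is useless because those vertices never lie in $\spec(i) \cap \lab(j)$ and so the path is not present in $G_{i, j}$. The crux is to instead route paths through $\im(i, j)$ itself, which amounts to showing that $\im(i, j) \cup (\M \cap C_j)$ is a sufficiently well-connected ``boundary blob'' inside $C_j$. Making this quantitative is precisely the new insight of the paper, and I expect it to require a careful combination of the variance bound, the expansion of $G\{C_j\}$, and the defining geometric property of impostors versus cross vertices.
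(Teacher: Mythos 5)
Your high-level plan is sound and you have correctly isolated the crux, but you leave it unproven: you need $\im(i,j)$ to expand \emph{into itself}, so that paths to $\M$ do not leak through the bulk $\spec(j)\cap C_j$, and this is precisely where the paper's main technical insight lies. The missing step is \Cref{informal:neighbors} (formally \Cref{lemma:cores_sparsely_connected}): any $u\in \spec(i)\setminus\nei_G(\M)$ has fewer than $\frac{\phi}{2}d$ neighbors in other spectral clusters. This follows not from the variance bound, as you guess, but from the balance equation (\Cref{lemma:fx_close_to_avg_neighbors}) --- $f_u$ is nearly the average of its neighbors' embeddings --- combined with near-orthogonality and near-equal length of the $\mu_i$'s (\Cref{lemma:clustermeans}): too many neighbors near distinct $\mu_{j'}$ would pull $f_u$ too far from $\mu_i$. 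Now, any $S\subseteq\im(i,j)\setminus\nei_G(\M)$ lies in $C_j$, whose induced subgraph is a $\phi$-expander, so at least $\phi d|S|$ edges leave $S$ inside $C_j$; none go to $\M$, at most $\frac{\phi}{2}d|S|$ go to $\spec(\neg i)$, so the remaining $\ge\frac{\phi}{2}d|S|$ go to $\im(i,j)\setminus S$. That is exactly \Cref{informal:impostors_expand}, which you correctly flag as the needed ``boundary blob'' connectivity but do not derive.

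Beyond this gap, the route you sketch is more circuitous than the paper's and loses a $\log(1/\delta)$. Once impostor expansion is available, the paper does not build paths at all: writing $Q$ for the correctly-labeled impostors disconnected from $\M$ in $G_{i,j}$ and $R$ for the mislabeled ones, it observes $Q\cap\nei_G(\M)=\emptyset$, applies impostor expansion to $Q$, and notes that $Q$'s edges within $\im(i,j)$ can only land on $R$ (any edge to a correctly-labeled, $\M$-connected impostor would connect $Q$ to $\M$). Hence $\frac{\phi}{2}d|Q|\le|E(Q,\im(i,j)\setminus Q)|\le|E(Q,R)|\le d|R|$, giving $|Q|\le\frac{2}{\phi}|R|$ deterministically, and $\E[|R|]\le\delta|\im(i,j)|\lesssim\epsilon\delta n$ finishes. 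Your ball-peeling construction (one shows $|B_{r+1}|\le(1+\phi/2)^{-1}|B_r|$, truncates at $t=\Theta(\phi^{-1}\log(1/\delta))$, and union-bounds over the $t$ path vertices) would work once impostor expansion is known, but yields $O(\phi^{-1}\log(1/\delta)\,\epsilon\delta n)$ --- acceptable under the $\widetilde O$ in this informal statement, yet a $\log(1/\delta)$ weaker than \Cref{lem:false_neg_p} and avoidable.
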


\noindent
To prove \Cref{informal:conn_to_X}, we use the following fact: vertices in a spectral cluster that are not directed neighbors of $\M$ have few neighbors in other spectral clusters. In other words, spectral clusters are sparsely connected to each other.

\begin{lemma}[Few neighbors across spectral clusters-- see \Cref{lemma:cores_sparsely_connected}]
    \label{informal:neighbors}
    Let $i \in [k]$ and let $u$ be a vertex in the $i$-th spectral cluster that is not a direct neighbor of $\M$, i.e. $u \in \spec(i) \setminus \nei_G(\M)$. Then, the number of neighbors of $u$ in spectral clusters other than $\spec(i)$ is at most $\phi/2 \cdot d$, i.e.
    \begin{equation*}
        \left|\nei_G(u) \cap \left( \bigcup_{j \neq i} \spec(j)\right)\right| \le \frac{\phi}{2} \cdot d \, .
    \end{equation*}
\end{lemma}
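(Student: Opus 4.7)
The plan is to exploit the harmonic-like identity arising from the fact that the columns of $F = [x_1 \mid \cdots \mid x_k]$ are eigenvectors of the normalized Laplacian $\mathcal{L} = I - A/d$. Writing $\Lambda = \diag(\lambda_1, \ldots, \lambda_k)$, the eigenvalue equation $A F = d F(I-\Lambda)$ gives, row by row,
\[
\sum_{v \sim u} f_v \;=\; d\, f_u \,(I-\Lambda)
\]
for every $u \in V$. Since the $(k,\epsilon,\phi)$-clusterability implies, via a standard higher-order Cheeger bound, that $\lambda_1, \ldots, \lambda_k = O(\epsilon)$, this identity says the average spectral embedding over the $d$ neighbors of $u$ is a mild per-coordinate shrinkage of $f_u$, with each coordinate multiplied by a factor in $[1 - O(\epsilon), 1]$.

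Next I would take the inner product of both sides with $\mu_i$. Because $u \in \spec(i)$ gives $\|f_u - \mu_i\| \lesssim \phi \|\mu_i\|$, and because $\|\Lambda\|_{\op} \le O(\epsilon)$ together with $\|f_u\| = O(\|\mu_i\|)$, the right-hand side evaluates to $d \|\mu_i\|^2 \bigl(1 \pm O(\phi + \epsilon)\bigr)$ using the norm estimates $\|\mu_i\|^2 \approx k/n$ from \Cref{informal:varbound}. On the left-hand side, the hypothesis $u \notin \nei_G(\M)$ is crucial: it forces every neighbor $v$ into some spectral cluster $\spec(\tau(v))$, which lets me partition the sum cleanly. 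Writing $n_\ell = |\nei_G(u) \cap \spec(\ell)|$ and splitting $f_v = \mu_{\tau(v)} + \text{err}_v$ with $\|\text{err}_v\| \lesssim \phi \|\mu_{\tau(v)}\|$, I get
\[
\sum_{v \sim u} \langle f_v, \mu_i \rangle \;=\; n_i \|\mu_i\|^2 \;+\; \sum_{j \neq i} n_j \langle \mu_j, \mu_i\rangle \;+\; \sum_{v \sim u}\langle \text{err}_v, \mu_i\rangle.
\]
The second term is at most $O(\sqrt{\epsilon}) \|\mu_i\|^2 \sum_{j \neq i} n_j$ by the near-orthogonality of cluster means in \Cref{informal:varbound}, and the third term is at most $O(\phi) \cdot d \|\mu_i\|^2$ by Cauchy--Schwarz and the regularity of $G$.

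Because $u$ has no neighbors in $\M$, I can substitute $n_i = d - N$, where $N := \sum_{j \neq i} n_j$ is precisely the quantity to be bounded. Dividing by $d \|\mu_i\|^2$, the comparison of the two sides reduces to
\[
(d - N) \;+\; O(\sqrt{\epsilon})\cdot N \;\ge\; d\bigl(1 - O(\phi + \epsilon)\bigr),
\]
which rearranges to $N/d \le O(\phi + \epsilon)/\bigl(1 - O(\sqrt{\epsilon})\bigr) = O(\phi)$. The main obstacle in the proof is not conceptual but a matter of \emph{constant tracking}: the target bound $N \le (\phi/2)\cdot d$ is tighter than an opaque $O(\phi)\cdot d$, so to close the gap one must fix the spectral-cluster radius constant in \Cref{informal:spec} (as well as the $\epsilon$ regime relative to $\phi$) as a sufficiently small constant multiple of $\phi$ from the outset, so that the aggregate $O(\phi + \epsilon)$ error slack above is dominated comfortably by $\phi/2$. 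I expect the formal proof to commit to such constants inside \Cref{def:spec} and then carry them through the calculation above.
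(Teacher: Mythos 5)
Your proposal is correct and follows essentially the same route as the paper. The paper's \Cref{lemma:cores_sparsely_connected} also begins from the harmonic identity $\sum_{v\sim u} f_v = d(I-\Lambda)f_u$ (packaged as \Cref{lemma:fx_close_to_avg_neighbors}), uses $u\notin\nei_G(\M)$ to substitute $f_v = \mu_{\tau(v)} + \mathrm{err}_v$ for every neighbor with $\|\mathrm{err}_v\| \le \tfrac{\phi}{10\sqrt\eta}\|\mu_{\tau(v)}\|$, and then exploits near-orthogonality of the cluster means. The one cosmetic difference is that the paper lower-bounds the $\ell_2$ norm $\left\|\mu_i - \tfrac{1}{d}\sum_v \mu_{\tau(v)}\right\|_2$ directly (and runs the argument by contradiction, assuming $l\ge\phi d/2$), whereas you project onto $\mu_i$ and compare the two sides of the inner-product identity; these are equivalent up to a Cauchy--Schwarz step. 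Two small imprecisions worth flagging: (i) the near-orthogonality bound from \Cref{lemma:clustermeans} is $|\langle\mu_i,\mu_j\rangle| \le \tfrac{8\sqrt\epsilon}{\phi}\tfrac{1}{\sqrt{|C_i||C_j|}} \lesssim \tfrac{\sqrt\epsilon\,\eta}{\phi}\|\mu_i\|_2^2$, so the coefficient on $\sum_{j\ne i} n_j$ is $O(\sqrt\epsilon\,\eta/\phi)$ rather than $O(\sqrt\epsilon)$ --- this still leaves the dominant coefficient on $N$ bounded below $1$ under the regime $\epsilon/\phi^6 \le 10^{-5}/\eta^4$ of \Cref{fig:setting}; and (ii) converting $\|\mathrm{err}_v\|\lesssim\phi\|\mu_{\tau(v)}\|$ into a bound relative to $\|\mu_i\|$ picks up a $\|\mu_{\tau(v)}\|/\|\mu_i\| = O(\sqrt\eta)$ factor, which is precisely what the $1/\sqrt\eta$ baked into the spectral-cluster radius $\tfrac{\phi^2}{400\eta}$ of \Cref{def:spec} is designed to cancel --- your closing remark that the radius constant must be fixed small enough is exactly where this cancellation lives.
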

\begin{proof}[Proof sketch]
Since the eigenvalues associated with the first $k$ eigenvectors of $G$ are bounded by $\approx \epsilon$, the embedding $f_u$ is very close to the average of the embeddings of the neighbors of $u$, i.e.
\begin{equation*}
    \left\| d\cdot f_u -  \sum_{v \in \nei_G(u)}f_v \right\|_2 \lesssim d\epsilon \cdot \|f_u\|_2 \approx d \epsilon \cdot \|\mu_i\|_2 \, ,
\end{equation*}
where the last step follows from $u \in \spec(i)$. Now, note that the average \smash{$\frac{1}{d} \sum_{v \in \nei_G(u)}f_v$} is contributed to by $v$'s in $\spec(i)$ and $v$'s in other spectral clusters, since $u \notin \nei_G(\M)$. But every $v$ in $\spec(i)$ must have an embedding very close to $f_u$, which gives
\begin{equation*}
   \left\|  \sum_{\substack{v \in \nei_G(u):\\v\in \cup_{j\neq i} \spec(j)}}(f_u -f_v) \right\|_2  \approx  \left\|  \sum_{v \in \nei_G(u)}(f_u -f_v) \right\|_2   = \left\| d\cdot f_u -  \sum_{v \in \nei_G(u)}f_v \right\|_2 \lesssim d \epsilon \cdot \|\mu_i\|_2\, .
\end{equation*}
Recall again that $f_u \approx \mu_i$. Moreover, for any $v$ that is considered in the sum on the left-hand side, we have $f_v \approx \mu_j$ for some $j \neq i$ (by the definition of spectral clusters, see \Cref{def:spec}). Then, using the fact that all the cluster means 
have $\ell_2$-norm roughly $\approx \sqrt{k/n} $ and are almost orthogonal (by \Cref{informal:varbound}), we have
\begin{equation*}
      \Delta \cdot \sqrt{\frac{k}{n}}  \lesssim  \left\|  \sum_{\substack{v \in \nei_G(u):\\v\in \cup_{j\neq i} \spec(j)}}(f_u -f_v) \right\|_2  \lesssim d \epsilon \cdot \sqrt{\frac{k}{n}}\, ,
\end{equation*}
where $\Delta$ is the number of terms in the sum on the left-hand side, i.e. $\Delta = |\nei_G(u) \cap \cup_{j \neq i} \spec(j))|$. Rearranging, we get the claim. The reason why we obtain a bound of $\approx \epsilon d$ instead of $\approx \phi d$ is that we did not keep careful track of the errors approximating the embeddings by their cluster mean, as well as the errors incurred by assuming the cluster means to be all of equal length.
\end{proof}

\noindent
Equipped with this observation, we can show \Cref{informal:conn_to_X}.

\begin{proof}[Proof sketch of \Cref{informal:conn_to_X}]
The first crucial observation is that we can combine \Cref{informal:neighbors} with the assumption that $C_j$ induces an expander to argue that the $(j\to i)$-impostors $\im(i,j)$ have nice expansion properties.
    \begin{claim}[Impostors expand-- see \Cref{lem:imposters_expand}]
\label{informal:impostors_expand}
    Let $i \neq j \in [k]$, and let $S$ be a subset of $(j \to i)$-impostors that does not have direct neighbors in $\M$, i.e. $S \subseteq \im(i,j) \setminus \nei_G(\M)$. Then, one has $|E(S,\im(i,j) \setminus S)| \ge \frac{\phi}{2} d|S|$.
\end{claim}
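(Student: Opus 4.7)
The plan is to combine the expansion of the induced subgraph $G\{C_j\}$ with \Cref{informal:neighbors} on few neighbors across spectral clusters. The former provides a lower bound on the number of edges leaving $S$ within $C_j$, while the latter provides an upper bound on edges from $S$ that leave $\spec(i)$. Subtracting the latter from the former yields the claimed lower bound on $|E(S,\im(i,j)\setminus S)|$, because $\im(i,j)=\spec(i)\cap C_j$.

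First, I would verify that the expansion of $G\{C_j\}$ can be applied to the cut $(S, C_j\setminus S)$ with $S$ on the small side. Under the parameter setting \pslabel{} (with $k=O(1)$ and balanced cluster sizes $|C_i|/|C_j|=O(1)$), we have $|C_j|=\Theta(n)$. On the other hand, $S\subseteq \im(i,j)\subseteq \im$, so by \Cref{informal:fewbad} one has $|S|\lesssim \epsilon n \ll |C_j|/2$. Since $G$ is $d$-regular, $\vol_G(S)=d|S|$ is the smaller side of the cut inside $G\{C_j\}$, and the hypothesis $\Phi(G\{C_j\})\ge \phi$ gives
\begin{equation*}
|E(S, C_j\setminus S)| \;\ge\; \phi\, d\, |S|.
\end{equation*}

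Next, I decompose $C_j\setminus S$ as the disjoint union $(\im(i,j)\setminus S) \sqcup (C_j\setminus \spec(i))$. Since $S\subseteq \spec(i)\setminus \nei_G(\M)$, no vertex in $S$ has a neighbor in $\M$, so every edge from $S$ leaving $\spec(i)$ must land in $\bigcup_{l\neq i}\spec(l)$. In particular,
\begin{equation*}
|E(S, C_j\setminus \spec(i))| \;\le\; \sum_{u\in S}\Bigl|\nei_G(u)\cap \bigcup_{l\neq i}\spec(l)\Bigr| \;\le\; \frac{\phi}{2}\, d\, |S|,
\end{equation*}
where the last step applies \Cref{informal:neighbors} to each $u\in S$, valid because $u\in\spec(i)\setminus \nei_G(\M)$. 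Combining the two bounds yields
\begin{equation*}
|E(S,\im(i,j)\setminus S)| \;=\; |E(S, C_j\setminus S)| - |E(S, C_j\setminus \spec(i))| \;\ge\; \phi\, d\, |S| - \frac{\phi}{2}\, d\, |S| \;=\; \frac{\phi}{2}\, d\, |S|.
\end{equation*}

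Conceptually, the proof is a clean two-ingredient subtraction, so I do not anticipate a genuine obstacle. The only subtlety is ensuring that the expansion of $G\{C_j\}$ gives the naive bound $\phi d|S|$ rather than being bottlenecked by $\vol_G(C_j\setminus S)$; this is precisely where the ``few impostors'' estimate \Cref{informal:fewbad} together with the balance assumption in \pslabel{} comes in. Everything else is a direct instantiation of the lemmas already proved.
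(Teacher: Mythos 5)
Your proof is correct and matches the paper's argument for this claim (both the informal sketch and the formal \Cref{lem:imposters_expand}): lower-bound the cut $|E(S,C_j\setminus S)|$ via the $\phi$-expansion of $G\{C_j\}$ after verifying $|S|\le|C_j|/2$ via \Cref{informal:fewbad}, and upper-bound the "escaping" edges via \Cref{informal:neighbors} using $S\cap\nei_G(\M)=\emptyset$. The only cosmetic difference is that you write $C_j\setminus S$ as an exact disjoint union $(\im(i,j)\setminus S)\sqcup(C_j\setminus\spec(i))$ whereas the paper uses the covering $C_j\subseteq\M\cup\spec(\neg i)\cup\im(i,j)$ and then drops the $\M$ term; these are equivalent.
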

\begin{proof}[Proof sketch]
    By \Cref{informal:neighbors}, every $u \in S$ has $(1-\phi/2)d$ of its neighbors in $\im(i,j)$. Now, recall that the $(j\to i)$-impostors $\im(i,j)$ are vertices from $C_j$, so the cut $S$ must $\phi$-expand in $G\{C_j\}$. Combining these two observations, we get the claim.
\end{proof}

\noindent
Now, let $Q$ be the set of $(j \to i)$-impostors $u \in \im(i,j)$ with label $\sigma(u)=j$ that cannot reach $\M$ in the cross graph $G_{i,j}$. Also, let $R$ be the set of $(j \to i)$-impostors $v \in \im(i,j)$ with a wrong label $\sigma(v)\neq j$. One can note that $Q$ does not overlap $\nei_G(\M)$, because if it did it would be able to reach $\M$ in $G_{i,j}$. Therefore, \Cref{informal:impostors_expand} applies, and it gives
$$|E(Q,\im(i,j) \setminus Q)| \ge \frac{\phi}{2} d|Q| \, .$$
Next, we upper-bound
$|E(Q,\im(i,j) \setminus Q)|$ in terms of $R$ to prove the lemma. By definition of $Q$, it cannot have edges to $(j \to i)$-impostors with label $\sigma(u)=j$ that can reach $\M$. Hence, $|E(Q,\im(i,j) \setminus Q)|$ is only contributed to by edges from $Q$ to $R$, which we defined to be the set of $(j \to i)$-impostors with a wrong label $\sigma(v)\neq j$. Thus, we have
$$ d|R| \ge |E(Q,R)| \ge |E(Q,\im(i,j) \setminus Q)| \ge \frac{\phi}{2} d|Q| \, ,$$
where the leftmost inequality follows by $d$-regularity of $G$. Rearranging gives $|Q| \lesssim |R|$. Since $R$ consists of mislabeled impostors, we can combine \Cref{informal:fewbad} with linearity of expectation to bound its size by $\delta |\im(i,j)| \lesssim \epsilon\delta n$, which proves the result.
\end{proof}

\noindent
This insight provided by \Cref{informal:conn_to_X} suggests a modification of the naive spectral approach, where we identify both cross vertices and most impostor vertices: the first kind is detected by checking $\tau(u)=*$, while the second kind can be detected by checking connectivity in the cross graphs. There is a small technical obstacle in this approach, which actually also affects naive spectral approach: we cannot exactly compute the spectral clusters or the cross vertices, since these definitions as we have stated them (see \Cref{informal:spec}) depend on the exact spectral embedding and the exact cluster means, which in turn depend on the clusters $C_1, \dots,C_k$. Therefore, our algorithms will rely on an oracle providing approximate access to these quantities.
\begin{definition}[$\ell_2$-oracle, see \Cref{def:apx}, \ref{def:approxmeans}]
    \label{informal:oracle}
    We denote by $\textsc{apx-norm}$ an oracle that provides approximate access to the quantities needed to define spectral clusters $\spec(i)$, cross vertices $\M$, and the spectral labeling $\tau$. In particular, this oracle provides approximate access to the $\ell_2$ distances and $\ell_2$ norms of spectral embeddings $f_u$'s and cluster means $\mu_i$'s.
\end{definition}

\noindent
\Cref{alg:to} outlines our approach: it uses the provided oracle $\textsc{apx-norm}$ to compute $\tau(u)$, which determines whether $u$ is in a spectral cluster or if it is a cross vertex (see \Cref{informal:spec}). Then, if the spectral information provided by $\tau(u)$ matches the information provided by $\sigma(u)$, i.e. there is $i \in [k]$ such that $\tau(u)=\sigma(u)=i$, then we output $i$ as the cluster id of $u$ (line~\eqref{enum:honest_return}). If instead $\tau(u)=*$, i.e.  $u$ is a cross vertex, the spectral information is not useful and we output the cluster id given by $\sigma(u)$ (line~\eqref{enum:cross_return}). Finally, if the spectral information and the label disagree, i.e. $\tau(u)=i$ and $\sigma(u)=j$ for some $i\neq j \in [k]$, we want to test if $u$ is an impostor or not as suggested by \Cref{informal:conn_to_X}. Line~\eqref{enum:conntest} uses the following notation.

\begin{definition}[Reachability test]
    \label{informal:reach}
    For a vertex $u \in V$ and a subgraph $H$, we denote by $\textsc{reach}(u,H,\M)$ a procedure that checks if the vertex $u$ can reach the set of cross vertices $\M$ in $H$. In our context, $H$ will be the $(i,j)$-cross graph $G_{i,j}$ induced by the edges on $\spec(i) \cap \lab(j)$. The procedure $\textsc{reach}(u,H,\M)$ can be implemented as a simple BFS (as in our polynomial-time algorithm \Cref{alg:polytime}) or as random-walk-based primitive (as in our sublinear-time algorithm \Cref{alg:sketch+labels}).
\end{definition}

\begin{algorithm}
\caption{Meta-algorithm generalizing \Cref{alg:polytime} and \Cref{alg:sketch+labels}}\label{alg:to}
\begin{algorithmic}[1]
\State \textbf{Input:} $G$, $\sigma$, an oracle $\textsc{apx-norm}$, a vertex $u \in V$
\State \textbf{Output:} a cluster id in $[k]$
\If{$\tau(u) = \sigma(u)$} \Return $\sigma(u)$ \Comment{label agrees with spectral information} \label{enum:honest_return} 
\ElsIf{$\tau(u) = *$} \Return $\sigma(u)$ \Comment{spectral information is ambiguous}  \label{enum:cross_return}
\Else
    \Comment{label disagrees with spectral information}
    \State $H \gets $ cross graph induced by $\left(\spec(\tau(u)) \cap \lab(\sigma(u))\right) \cup \M$ \Comment{see \Cref{informal:crossgraph}}

    \If{$\textsc{reach}(u,H,\M)$ returns ``yes''} \label{enum:conntest} \Comment{$u$ can reach $\M$ in the cross graph $H$, see \Cref{informal:reach}}
        \State \Return $\sigma(u)$ \Comment{trust the label, $u$ is likely an impostor} \label{enum:impostor_return}
    \Else
        \State \Return $\tau(u)$ \Comment{trust the spectral information, the label is likely wrong} \label{enum:no_impostor_return}
    \EndIf
\EndIf

\end{algorithmic}
\end{algorithm}

\noindent
In \Cref{sec:setting}, we explicitly define the central spectral objects (spectral cluster, cross vertices, etc.) in terms of these approximate norms, and in \Cref{subsec:apx_spec_oracles} and \Cref{subsec:apxmeans} we discuss how to implement such an oracle. In \Cref{sec:robcon}, we detail how to implement the reachability test (see line~\eqref{enum:conntest}) in sublinear time. In the remainder of this section, for sake of simplicity, we assume that we have access to exact spectral embeddings and cluster means, and we assume that the reachability test is always correct.
\\~\\
We now show how to bound the number of vertices that can possibly be misclassified in lines~\eqref{enum:honest_return},~\eqref{enum:cross_return},~\eqref{enum:impostor_return},~\eqref{enum:no_impostor_return}.

\begin{lemma}[Informal version of \Cref{lemma:spec_or_cross}]
\label{informal:2cases}
The number of vertices that can be misclassified in lines~\eqref{enum:honest_return} and~\eqref{enum:cross_return} is at most $\approx \delta \epsilon n$ in expectation over $\sigma$.
\end{lemma}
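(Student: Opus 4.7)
The plan is to handle the two return statements separately and bound the expected misclassifications from each by $\delta\,|\im|$ and $\delta\,|\M|$, respectively, then invoke \Cref{informal:fewbad} to conclude. No new structural ideas are required here: both bounds come from a direct counting argument combined with the independence of the label noise across vertices. The real work in the algorithm analysis -- controlling the misclassifications at lines~\eqref{enum:impostor_return} and~\eqref{enum:no_impostor_return} via the cross-graph reachability test -- is postponed to the analysis that invokes \Cref{informal:conn_to_X}.

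First I would consider line~\eqref{enum:honest_return}, which returns $\sigma(u)$ whenever $\tau(u) = \sigma(u)$. A vertex $u$ is misclassified here precisely when $\sigma(u) = \tau(u) \neq \iota(u)$. In particular $\tau(u) \in [k]$ and $\tau(u) \neq \iota(u)$, so $u$ lies in a spectral cluster other than its true cluster, i.e.\ $u \in \im(\tau(u),\iota(u)) \subseteq \im$. For any fixed such vertex, the input model \imlabel{} guarantees $\Pr[\sigma(u) \neq \iota(u)] \leq \delta$ independently across vertices. Hence, by linearity of expectation, the expected number of vertices misclassified at line~\eqref{enum:honest_return} is at most $\delta\,|\im|$.

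Next I would consider line~\eqref{enum:cross_return}, which returns $\sigma(u)$ whenever $\tau(u) = *$; by \Cref{informal:spec} this condition is equivalent to $u \in \M$. Such a vertex is misclassified iff $\sigma(u) \neq \iota(u)$, again a probability-$\delta$ event independent across $u$. Linearity of expectation yields an expected count of at most $\delta\,|\M|$ vertices misclassified at line~\eqref{enum:cross_return}.

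Combining the two cases, the expected number of vertices misclassified at line~\eqref{enum:honest_return} or line~\eqref{enum:cross_return} is at most $\delta(|\im| + |\M|) \lesssim \delta \epsilon n$ by \Cref{informal:fewbad}, as desired. The main obstacle one might worry about -- that the spectral label $\tau(u)$ and the noisy label $\sigma(u)$ could be correlated in an adversarial way -- does not arise, because the labels are drawn independently of the graph conditional on $\iota$, and the argument above only uses the marginal mislabeling probability on a fixed set of vertices.
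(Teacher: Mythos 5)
Your proof is correct and follows essentially the same approach as the paper: observe that a vertex misclassified at line~\eqref{enum:honest_return} must be an impostor with a wrong label and one misclassified at line~\eqref{enum:cross_return} must be a cross vertex with a wrong label, then apply linearity of expectation together with $|\im \cup \M| \lesssim \epsilon n$ from \Cref{informal:fewbad}. The only cosmetic difference is that the paper's proof of \Cref{lemma:spec_or_cross} unions both cases into a single containment in $\{v \in \im \cup \M : \sigma(v) \neq \iota(v)\}$ before applying linearity, whereas you treat the two return lines separately.
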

\begin{proof}[Proof sketch]
We rely on a simple observation: every vertex $u$ that is misclassified in  line~\eqref{enum:honest_return} must be an impostor whose label $\sigma(u)$ is wrong (because $\tau(u)=\sigma(u)\neq \iota(u)$), and every vertex $u$ that is misclassified in line~\eqref{enum:cross_return} must be a cross vertex whose label $\sigma(u)$ is wrong (because $\tau(u)=*$ and $\sigma(u) \neq \iota(u)$). By \Cref{informal:fewbad} and linearity of expectation, we get the claimed bound.
\end{proof}

\noindent
Let us now consider the vertices that can be misclassified in lines~\eqref{enum:impostor_return},~\eqref{enum:no_impostor_return}. The ones misclassified in lines~\eqref{enum:impostor_return} can be thought of as the \textit{false positives} for the reachability test, while the ones misclassified in lines~\eqref{enum:no_impostor_return} can be thought of as \textit{false negatives}. \Cref{informal:conn_to_X} already suggests that there should be few false negatives, as shown below.

\begin{lemma}[Few false negatives-- see \Cref{lem:false_neg_p}]
\label{informal:falseneg}
    The number of vertices that can be misclassified in line~\eqref{enum:no_impostor_return} is at most $\approx \delta \epsilon n$ in expectation over $\sigma$.
\end{lemma}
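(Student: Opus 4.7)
The plan is to partition the set of vertices that could reach line~\eqref{enum:no_impostor_return} with a wrong output into two easy-to-bound families, apply \Cref{informal:conn_to_X} to the first and linearity of expectation together with \Cref{informal:fewbad} to the second, and then union-bound over the $O(k^2) = O(1)$ relevant pairs $(i,j)$.

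First I would characterize a false negative. A vertex $u$ is misclassified in line~\eqref{enum:no_impostor_return} exactly when the algorithm outputs $\tau(u) = i$ while $\iota(u) \neq i$; reaching that line moreover requires $\tau(u) = i$, $\sigma(u) = j$ for some $j \neq i$, and \textsc{reach}$(u, G_{i,j}, \M)$ returning ``no''. In particular $u$ lies in $\spec(i)$ but truly belongs to some $C_\ell$ with $\ell \neq i$, so $u \in \im(i,\ell)$ for $\ell = \iota(u)$. I would then split the analysis according to whether the label $\sigma(u)$ is correct, i.e. whether $j = \ell$ or $j \neq \ell$.

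In the first case, $u \in \im(i, j)$ with $\sigma(u) = j$ equal to its true cluster id, and by construction $u$ fails to reach $\M$ in $G_{i,j}$. \Cref{informal:conn_to_X} directly bounds the number of such vertices by $\widetilde{O}(\epsilon \delta n)$ in expectation over $\sigma$, for each pair $(i,j)$. In the second case, $u$ is an impostor whose label is wrong (since $\sigma(u) = j \neq \ell = \iota(u)$); by \Cref{informal:fewbad} the set $\im$ of impostors has size $\lesssim \epsilon n$, and each vertex is mislabeled independently with probability $\delta$, so linearity of expectation yields $\lesssim \epsilon \delta n$ such $u$ overall. Summing over the $O(1)$ pairs $(i,j) \in [k]^2$ with $i \neq j$ gives the claimed bound.

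The only delicate point, and where I expect the formal proof to actually spend work, is the first case: ensuring that the quantitative statement of \Cref{informal:conn_to_X} is applied with the right parameter bookkeeping (the expectation is over $\sigma$ only, while the set $\im(i,j)$ is deterministic, and reachability in $G_{i,j}$ depends on $\sigma$ through $\lab(j)$). The second case is essentially a one-line expectation computation. No new structural insight is required beyond the lemmas already proved; the argument is a clean case split followed by summation over pairs.
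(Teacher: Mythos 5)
Your proof is correct and follows essentially the same route as the paper's: the paper also partitions the false negatives according to whether $\sigma(u)$ is correct, dispatches the correct-label case to the reachability lemma (\Cref{informal:conn_to_X}) and the wrong-label case to \Cref{informal:fewbad} plus linearity of expectation, and then sums over pairs $(i,j)$. Your parenthetical about the bookkeeping in the first case (expectation over $\sigma$, determinism of $\im(i,j)$, dependence of $G_{i,j}$ on $\sigma$ through $\lab(j)$) correctly anticipates where the formal proof does the extra work.
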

\begin{proof}[Proof sketch]
Let $u$ be any vertex that is misclassified in line~\eqref{enum:no_impostor_return}, and let $i,j \in [k]$ such that $\tau(u)=i$ (i.e. $u$ is in the $i$-th spectral cluster) and $\iota(u)=j$ (i.e. $u \in C_j$). One can observe that, because $u$ has been misclassified, we must have $i \neq j$, which means that $u$ is a $(j \to i)$-impostor vertex, i.e. $u \in \im(i,j)$. Looking at $\sigma$, there are two types of vertices that can be misclassified in line~\eqref{enum:no_impostor_return}: those with the correct label $\sigma(u)=j$ and those with a wrong label $\sigma(u) \neq j$. The latter vertices are then impostors with the wrong label, so following the same proof as for \Cref{informal:2cases} we bound the number of such vertices by $\approx \delta \epsilon n$. For the vertices with the correct label $\sigma(u)=\iota(u)=j$, they must have failed the reachability test in line~\eqref{enum:conntest} in order to be misclassified in line~\eqref{enum:no_impostor_return}. The number of vertices in $\im(i,j)$ with the correct label $\sigma(u)=j$ that cannot reach $\M$ in the cross graph $G_{i,j}$ is at most $\approx \epsilon\delta n$ by \Cref{informal:conn_to_X}. Summing up over all $i \neq j \in [k]$ gives the claimed bound (recall that $k=O(1)$).
\end{proof}

\noindent
We are left with the task of showing that there are few false positives. Intuitively, they are contained in connected components induced by vertices with a wrong label, and these components can be shown to be small by a Galton-Watson type of argument.

\begin{lemma}[Few false positives-- see \Cref{lem:false_pos_p}]
\label{informal:falsepos}
    The number of vertices that can be misclassified in line~\eqref{enum:impostor_return} is at most $\approx d \delta \epsilon n$ in expectation over $\sigma$.
\end{lemma}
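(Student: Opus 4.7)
The proof plan follows the Galton-Watson style branching argument alluded to in the statement. Fix $i \neq j \in [k]$; since $k = O(1)$, it suffices to control the contribution from each pair separately. A vertex $u$ misclassified at line~\eqref{enum:impostor_return} via the $(i,j)$-cross graph test must belong to $\spec(i) \cap \lab(j)$, be connected to $\M$ in $G_{i,j}$, and satisfy $\iota(u) \neq j$ so that $\sigma(u)=j$ is a wrong label. Denote the set of such vertices $F_{i,j}$; the goal is to show $\E[|F_{i,j}|] \lesssim \delta d \epsilon n$.

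First I would exploit the fact that the component $R$ of $\M$ in $G_{i,j}$ decomposes cleanly: every vertex in $R \setminus \M$ lies in $\spec(i) \cap \lab(j)$, and is either a correctly-labeled impostor (in $\im(i,j) \cap \lab(j)$) or a false positive (in $F_{i,j}$). Together with $|\M \cup \im| \lesssim \epsilon n$ from \Cref{informal:fewbad}, this gives the deterministic bound $|R| \le |\M| + |\im(i,j)| + |F_{i,j}| \lesssim \epsilon n + |F_{i,j}|$.

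The next step is a lazy exposure argument: explore $R$ via a BFS from $\M$, revealing $\sigma(v)$ the first time a boundary vertex $v$ is considered, and adding $v$ to the component iff $v \in \spec(i)$ and $\sigma(v) = j$. Letting $P$ denote the set of processed vertices, $d$-regularity of $G$ yields $|P| \lesssim d|R|$. Since the labels are mutually independent, conditionally on the (adaptive) BFS history one has $\Pr[\sigma(v) = j] \le \delta$ for any $v \in P$ with $\iota(v) \neq j$. By the tower property,
\begin{equation*}
\E[|F_{i,j}|] \;\le\; \delta \cdot \E[|P|] \;\lesssim\; \delta d \cdot \E[|R|] \;\lesssim\; \delta d \bigl(\epsilon n + \E[|F_{i,j}|]\bigr).
\end{equation*}
Rearranging in the regime $\delta d \le 1/2$ (in which regime the claimed bound is meaningful, and is vacuous otherwise), one absorbs $\E[|F_{i,j}|]$ into the left-hand side to obtain $\E[|F_{i,j}|] \lesssim \delta d \epsilon n$, and summing over the $O(1)$ ordered pairs $(i,j)$ yields the lemma.

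The main obstacle is in the lazy exposure argument: the order in which $\sigma$ is revealed is itself determined by previously revealed labels, so one has to be careful about adaptivity. This is handled cleanly by the mutual independence of the labels $\sigma(v)$ across vertices: the conditional distribution of $\sigma(v)$ given the history at the moment it is revealed is identical to its unconditional distribution, which justifies the pointwise $\delta$ bound and thus the expectation estimate above. A secondary subtlety is that $R$ itself is a random variable depending on $\sigma$; this is resolved by expressing $|P|$ as a sum of indicators over an a priori fixed vertex set (the neighbours of $R$, dominated by $d|R|$) and applying the tower property.
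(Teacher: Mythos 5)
Your proposal is correct and the bound comes out right, but the packaging is genuinely different from the paper's. You work pair-by-pair: fix $(i,j)$, let $R$ be the component of $\M$ in $G_{i,j}$, note deterministically that $R\setminus\M$ splits into correctly-labeled impostors and false positives so $|R|\lesssim\epsilon n+|F_{i,j}|$, and then close the loop with a BFS-exposure supermartingale and a fixed-point absorption using $\delta d\le 1/2$. The paper instead works globally over the union $G'$ of \emph{all} cross graphs at once (\Cref{lem:false_pos_p}): it seeds a level-set exploration at $S_1=\nei_G(\M\cup\im)$, shows $\E[|\widehat S_{i+1}|]\le\delta d\,\E[|\widehat S_i|]$ (\Cref{claim:exp_drop}), and sums the geometric series, again needing $\delta d$ small. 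Both are Galton--Watson arguments and both hinge on $\delta d<1$; yours trades the explicit geometric decay for a self-consistency inequality, which is shorter, while the paper's level-set formulation deals with the adaptivity of the exposure by a slightly different conditioning (\Cref{claim:exp_drop}) rather than an optional-stopping argument. One small point worth flagging: your per-pair decomposition charges $|\M|$ once for each of the $k(k-1)$ ordered pairs, so summing gives a $k^2$ prefactor; this is harmless under the $k=O(1)$ hypothesis of this informal lemma, but the paper's union-of-cross-graphs seeding is what lets the formal version (\Cref{lem:false_pos_p}) avoid any explicit $k$-dependence beyond that already hidden in $|\M\cup\im|$, so the two approaches are not fully interchangeable in the general-$k$ regime.
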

\begin{proof}[Proof sketch]
 Let $u$ be a vertex that is misclassified in line~\eqref{enum:impostor_return}, and let $i\neq j \in [k]$ such that $u$ belongs to the $i$-th spectral cluster and to the $j$-th label cluster, i.e. $u \in \spec(i) \cap \lab(j)$. For $u$ to be misclassified in line~\eqref{enum:impostor_return}, it must be the case that the label $\sigma(u)$ is wrong, i.e. $\iota(u) \neq j$. Moreover, it must be the case that $u$ can reach $\M$ in the $(i,j)$-cross graph $G_{i,j}$. Recall that $G_{i,j}$ is the subgraph induced by the edges on $(\spec(i) \cap \lab(j))\cup \M$. Note that these vertices can be of three types: the cross vertices $\M$, the impostors $\im(i,j)$ with the correct label, and the vertices in $\spec(i)$ with a wrong label. Note that our problematic vertex $u$ must belong to the latter set. Therefore, the number of such problematic vertices $u$ corresponds to the number of vertices that are reachable from a mislabeled vertex in $\nei_G(\im(i,j))$ or in $\nei_G(\M)$. This can be thought of as starting a Galton-Watson process from every $\nei_G(\im(i,j)) \cup \nei_G(\M)$, where the offspring distribution is $d$ with probability $\delta$ and $0$ with probability $1-\delta$. Hence, the number of such vertices can be bounded by $\approx \delta |\nei_G(\im(i,j)) \cup \nei_G(\M)|$. By $d$-regularity of $G$ and \Cref{informal:fewbad}, we get the claim by summing over all pairs $i\neq j\in [k]$ (again, recall that $k=O(1)$). 
\end{proof}

\noindent
Combining \Cref{informal:2cases}, \Cref{informal:falseneg}, and \Cref{informal:falsepos}, we conclude the following result.

\begin{lemma}[$d\epsilon\delta$-rate in polynomial time-- see \Cref{thm:polytime}]
\label{informal:polyalgo}
    A naive implementation of \Cref{alg:to} gives a polynomial-time algorithm with an expected misclassification rate of
\begin{equation*}
    \epsilon\delta n + \epsilon\delta n + d \cdot \epsilon\delta n \approx d \cdot \epsilon \delta n \, . 
\end{equation*}
\end{lemma}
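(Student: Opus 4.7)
The plan is to obtain the misclassification bound by partitioning the misclassified vertices according to which of the four return statements of Algorithm~\ref{alg:to} is responsible, then invoking the three lemmas already established. First, I would observe that for every vertex $u \in V$, Algorithm~\ref{alg:to} reaches exactly one of lines~\eqref{enum:honest_return},~\eqref{enum:cross_return},~\eqref{enum:impostor_return},~\eqref{enum:no_impostor_return}: either $\tau(u)=\sigma(u)$ (line~\eqref{enum:honest_return}), or $\tau(u)=*$ (line~\eqref{enum:cross_return}), or $\tau(u)\neq *$ and $\tau(u)\neq\sigma(u)$, in which case the output comes from exactly one of lines~\eqref{enum:impostor_return} or~\eqref{enum:no_impostor_return} depending on the outcome of the reachability test. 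This reduces the task to bounding the expected misclassification count contributed by each line separately.

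Next, I would plug in Lemma~\ref{informal:2cases} for lines~\eqref{enum:honest_return} and~\eqref{enum:cross_return}, Lemma~\ref{informal:falseneg} for line~\eqref{enum:no_impostor_return}, and Lemma~\ref{informal:falsepos} for line~\eqref{enum:impostor_return}, obtaining expected contributions of $\approx \epsilon\delta n$, $\approx \epsilon\delta n$, and $\approx d \epsilon\delta n$ respectively, all taken over the draw of $\sigma$. Summing by linearity of expectation yields an expected total of $\approx \epsilon\delta n + \epsilon\delta n + d\epsilon\delta n \approx d\epsilon\delta n$, dominated by the false-positive contribution controlled by the Galton-Watson-type argument of Lemma~\ref{informal:falsepos}.

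For the polynomial-time claim, I would instantiate each primitive used in Algorithm~\ref{alg:to} with a standard polynomial-time subroutine: compute the top-$k$ eigenvectors of the normalized Laplacian of $G$ (to sufficient numerical precision) to obtain the spectral embedding $f_u$ for every $u$; approximate the cluster means $\mu_i$ via the polynomial-time construction of Section~\ref{subsec:apxmeans}; use these two ingredients to implement the $\textsc{apx-norm}$ oracle of Definition~\ref{informal:oracle} and thereby compute $\tau(u)$ for every $u$; finally, implement $\textsc{reach}(u,H,\M)$ as a BFS in the cross graph $H$, which has $O(n)$ vertices and $O(dn)$ edges. Under the simplifying assumption stated earlier that we have access to exact spectral embeddings, exact cluster means, and a correct reachability test, the three lemmas then apply verbatim and the result follows.

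The main obstacle is not in this final consolidation step — which is essentially bookkeeping — but in the three lemmas that feed into it, and in particular in Lemma~\ref{informal:falsepos}, where the $d$ factor enters through the Galton-Watson offspring argument. The only remaining subtlety at this stage is verifying that the four branches of Algorithm~\ref{alg:to} partition $V$ cleanly, which is immediate from the \textbf{if}/\textbf{elseif}/\textbf{else} structure of the pseudocode, so no further ideas are required beyond carefully tracking the expectations.
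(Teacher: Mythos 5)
Your proposal matches the paper's approach: both decompose the misclassified vertices by which return line of Algorithm~\ref{alg:to} is responsible, apply Lemmas~\ref{informal:2cases},~\ref{informal:falseneg}, and~\ref{informal:falsepos} respectively, and sum by linearity of expectation, with the polynomial-time claim following from a spectral-embedding computation and a BFS for the reachability test. This is precisely the bookkeeping the paper performs (formally, via the sets $W_1,\dots,W_4$ in the proof of Theorem~\ref{thm:polytime}).
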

\noindent
This essentially corresponds to \Cref{alg:polytime}. The analysis of this algorithm, in \Cref{sec:poly}, is essentially a formal version of the arguments described above. This result is summarized in \Cref{thm:polytime}.

Finally, we remark that while the proof sketches of \Cref{informal:falseneg} and \Cref{informal:falsepos} assumed $k = O(1)$, our more careful analysis in \Cref{sec:poly} gives the misclassification rate $O(\epsilon \delta d)$ also for general $k \lesssim 1/\epsilon$.

\paragraph{Removing the factor of $d$.} In \Cref{sec:sub_alg}, we consider a more refined implementation of \Cref{alg:to}, namely, \Cref{alg:sketch+labels}. There, we carry out an inherently more intricate analysis than the one of \Cref{thm:polytime}, which allows to get rid of the factor $d$ in the misclassification rate, at the cost of an extra \smash{$\log(1/\delta)$} factor, thereby nearly matching the best possible rate of $\epsilon\delta$. Note that the factor-$d$ loss in \Cref{informal:polyalgo} comes from the false positives (recall \Cref{informal:falsepos}), i.e. vertices with a wrong label that are able to reach $\M$ in their cross graph (see line~\eqref{enum:conntest} in \Cref{alg:to}). Hence, we should make the reachability test harder to pass for vertices with a wrong label, while not making it much harder to pass for the false negatives, i.e. impostors with the correct label. Recall that the way we proved that only few such vertices fail to pass the reachability test in \Cref{informal:conn_to_X} is by exploiting their expansion properties guaranteed by \Cref{informal:impostors_expand}. Therefore, imagine implementing the reachability test $\textsc{reach}(u,H,\M)$ in line~\eqref{enum:conntest} so as to check whether $u$ has many paths to $\M$: the number of false negatives, i.e. the number of impostors with the correct label, that fail to pass the test remains roughly $\epsilon\delta n$, by virtue of the expansion property we already proved in \Cref{informal:impostors_expand}; the number of false positives should intuitively decrease, since a vertex with a wrong label that are able to reach $\M$ through multiple paths for it to be misclassified.

\paragraph{Sublinear-time implementation.}  \Cref{alg:to} presented above is actually amenable to being made sublinear time: given the $\ell_2$-oracle $\textsc{apx-norm}$, we can check whether a vertex has multiple paths to a target set by running a few random walks of bounded length. Moreover, it is known how to implement $\ell_2$-oracle $\textsc{apx-norm}$ in sublinear time \cite{GKLMS21}. We present the sublinear version of \Cref{alg:to}
(\Cref{alg:sketch+labels}) and prove its guarantees in \Cref{sec:sub_alg}. 

\noindent

\subsection{Refining communities}
\label{sec:techoverview_reweight}
Given a graph $G$ that admits a $(k,\epsilon,\Omega(1))$-clustering $C_1,\dots,C_k$, and given some additional information about the cluster ids of its vertices, we would like to reweight the edges of $G$ to obtain a graph $G'$ with better clusterability in the following sense: for some $\gamma \ll \epsilon $, we demand that $G'$ admits a $(k,O(\gamma),\Omega(1))$-clustering $C'_1,\dots,C'_k$ that is $\gamma$-close to $C_1,\dots,C_k$. Of course, this task would be easy to achieve if we were given the clustering $C_1,\dots,C_k$: we can just down-weight every crossing edge in
$$E_\cross = \bigcup_{i\neq j \in [k]} E(C_i,C_j)$$
to have a weight of $0$. The reweighted graph $G'$ will then have a perfect community structure, as it would consists of $k$ connected components, each of which induces an expander.

In reality, we do not have access to $C_1,\dots,C_k$. However, we can have approximate access to $C_1,\dots,C_k$: with the classifier from the previous section (e.g. \Cref{informal:polyalgo} or \Cref{thm:informal_classsifier}), we can compute a labeling $\alpha:V\rightarrow [k]$ that assigns the correct cluster id to all but an $\epsilon \delta$ fraction of vertices. In particular, $\alpha$ gives approximate access to the set of crossing edges $\cup_{i\neq j} E(C_i,C_j)$.

\begin{definition}[Flagged edges -- see \Cref{def:flagged}]
\label{informal:flag}
We define the set of edges flagged by $\alpha$, denoted $F$, as the set of edges $(u,v) \in E$ whose endpoints are assigned to different clusters by $\alpha$, i.e. $F=\{(u,v) \in E: \, \alpha(u)\neq \alpha(v)\}$.
\end{definition}

\begin{lemma}[Flagged edges approximate crossing edges -- \Cref{claim:FcapE_cross_k}]
\label{informal:flagcross}
If $\alpha$ has a misclassification rate of $\gamma$, then the set of flagged and crossing edges differ in at most $\gamma dn$ edges, i.e. $|F \triangle E_\cross| \le \gamma dn$.
\end{lemma}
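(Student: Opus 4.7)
The plan is to show that every edge in the symmetric difference $F \triangle E_\cross$ must have at least one endpoint whose label under $\alpha$ disagrees with its true cluster id, and then to bound the number of such edges via $d$-regularity.

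First, I would introduce the set of mislabeled vertices $M = \{u \in V : \alpha(u) \neq \iota(u)\}$, which by the hypothesis that $\alpha$ has misclassification rate $\gamma$ satisfies $|M| \le \gamma n$. Next I would observe the following elementary fact: if both endpoints of an edge $(u,v) \in E$ are correctly labeled, i.e. $\alpha(u) = \iota(u)$ and $\alpha(v) = \iota(v)$, then $\alpha(u) = \alpha(v)$ holds exactly when $\iota(u) = \iota(v)$. Equivalently, $(u,v) \in F$ iff $(u,v) \in E_\cross$, so such an edge cannot lie in the symmetric difference $F \triangle E_\cross$. Contrapositively, every edge in $F \triangle E_\cross$ has at least one endpoint in $M$.

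Finally, I would bound the number of edges touching $M$ using $d$-regularity: the number of edges incident to a vertex set of size at most $\gamma n$ is at most $d \cdot \gamma n = \gamma d n$. Combining with the previous step yields $|F \triangle E_\cross| \le \gamma d n$, which is the desired bound.

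There is no real obstacle here; the statement is essentially a double-counting argument, and the only thing to be careful about is the case analysis establishing that a non-mislabeled edge cannot be in the symmetric difference. This is a quick check handled by the logical equivalence above.
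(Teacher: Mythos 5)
Your proof is correct and matches the paper's argument: both identify the set $Q$ of misclassified vertices, observe that an edge with two correctly labeled endpoints is flagged if and only if it is crossing (hence cannot lie in $F \triangle E_\cross$), and then bound the edges incident to $Q$ by $d|Q| \le \gamma d n$. Your write-up makes the case analysis slightly more explicit via the contrapositive, but the underlying argument is the same.
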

\begin{proof}[Proof sketch]
Every vertex misclassified by $\alpha$ contributes at most $d$ edges to $|F \triangle E_\cross|$.
\end{proof}

\noindent
In light of this, one might hope that down-weighting every edge in $F$ to $0$ results in a graph $G'$ that consists of $k$ disjoint expanders. However, this hope is readily dashed, as we now illustrate. Consider a vertex $u \in C_1$ with, say, $3$ neighbors $a,b,c$ in $C_1$; then, if $\alpha$ incorrectly flags $(u,a),(u,b),(u,c)$ (which is admissible by \Cref{informal:flagcross}), the vertex $u$ will be isolated in the resulting graph $G'$. The presence of an isolated vertex implies that $G'$ does not admit a $(k,\xi,\Omega(1))$-clustering for any $\xi > 0$.

This observation prompts us with the idea of down-weighting the flagged edges so as to minimize the weight of the flagged edges $F$ in $G'$ subject to the subgraphs $G'\{C_i\}$ remaining $\Omega(1)$-expanders. This idea cannot be implemented directly, as we do not have exact access to $G'\{C_i\}$. Instead, we resort to the notion of \textit{multi-way conductance}, defined below.

\begin{definition}[Multi-way conductance \cite{LGT12} -- see \Cref{def:rho}]
    The $(k+1)$-th way conductance of $G'$ is the value $\rho_{k+1}(G')$ such that no matter how we pick $k+1$ disjoint cuts $Q_1,\dots,Q_{k+1}$ in $G'$, one of them will be $\rho_{k+1}(G')$-expanding, i.e.
    \begin{equation*}
        \rho_{k+1}(G') = \min_{\substack{Q_1,\dots,Q_{k+1} \subseteq V :\\ \forall \, i\neq j, \, Q_i \cap Q_j = \emptyset}} \, \, \,  \max_{i \in [k+1]}\Phi_{G'}(Q_i) \, .
    \end{equation*}
\end{definition}

\noindent
We will use this quantity as a proxy for the conductance of the $G'\{C_i\}$'s. In particular, we consider the following optimization problem that solves for a weight function $x \in [0,1]^E$ so as to minimize the weight on the flagged edges subject to $(k+1)$-th way conductance of the reweighted graph $G_x=(V,E,x)$ remaining large enough.

\begin{equation}
\label{prog:exptime}
    \begin{aligned}
    \text{minimize} & \quad  \sum_{e \in F}x_e \\
    \text{subject to} & \quad x_e =1 \quad \forall \, e \in E\setminus F\\
    & \quad \rho_{k+1}(G_x) \ge \Omega(1) \\
    & \quad x \in [0,1]^E
\end{aligned}
\end{equation}

\noindent
Let us ignore for a moment the fact that we cannot solve such an optimization problem as is. We can show that a solution $x$ yields a graph $G_x$ that admits a clustering with quality proportional to the objective value of $x$.

\begin{lemma}[Clustering $G_x$ -- \Cref{lemma:good_clustering}]
\label{informal:reweight_clust}
    If $x$ is a feasible solution to the program in~\eqref{prog:exptime} with objective value $\text{\upshape OBJ}= \nu \cdot dn$, then the resulting graph $G=(V,E,x)$ admits a $(k,O(\nu),\Omega(1))$-clustering $C'_1,\dots,C'_k$ such that $\sum_i |C_i \triangle C'_i| \lesssim \nu \cdot n$.
\end{lemma}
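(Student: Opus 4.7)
My plan is threefold: (i) use the target partition $C_1,\dots,C_k$ as a test partition to bound the $k$-way conductance of $G_x$; (ii) invoke a structural theorem for well-clustered graphs to extract, from the combination of the bound in (i) with the feasibility constraint $\rho_{k+1}(G_x)\ge\Omega(1)$, a genuine $(k,O(\nu),\Omega(1))$-clustering $C'_1,\dots,C'_k$ of $G_x$; and (iii) conclude by a stability argument that this extracted clustering must be $O(\nu)$-close to $C_1,\dots,C_k$.

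For step (i), I would evaluate $\sum_i |E_{G_x}(C_i,\bar{C_i})|$ by splitting each boundary edge into flagged and non-flagged. The flagged contribution telescopes to $2\sum_{e\in F}x_e = 2\nu dn$. The non-flagged crossing edges carry fixed weight $1$, and their count $|E_\cross\setminus F|$ is controlled via \Cref{informal:flagcross} by the misclassification rate of $\alpha$, which in the regime of the lemma is itself $O(\nu)$ (otherwise the optimum objective could not be as small as $\nu dn$ while still forced to satisfy $x_e=1$ on non-flagged crossing edges). Combined with the balance $|C_i|=\Theta(n/k)$ and $d$-regularity, this gives $\max_i\Phi_{G_x}(C_i)=O(\nu)$ and hence $\rho_k(G_x)=O(\nu)$.

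For step (ii), I would apply the standard higher-order Cheeger / well-clustered graph structural result to $G_x$: the simultaneous bounds $\rho_k(G_x)=O(\nu)$ and $\rho_{k+1}(G_x)\ge\Omega(1)$ imply that any approximately-optimal $k$-way partition can be trimmed on an $O(\nu)$-fraction of vertices to produce $C'_1,\dots,C'_k$ with $\Phi_{G_x}(C'_i)\le O(\nu)$ and $\Phi(G_x\{C'_i\})\ge\Omega(1)$. The contrapositive is clean: an internal cut of conductance $o(1)$ inside some $C'_i$ would, together with the other parts, yield a $(k+1)$-partition of max conductance $o(1)$, contradicting $\rho_{k+1}(G_x)\ge\Omega(1)$. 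Step (iii) then follows by uniqueness of $(k,O(\nu),\Omega(1))$-clusterings in well-clustered graphs: any mismatched chunk of size $\Theta(\nu n)$ between $C_i$ and $C'_{\pi(i)}$ (for the appropriate matching $\pi$) would, by the $\Omega(1)$ internal expansion of both clusterings, contribute $\Omega(\nu dn)$ to the weighted boundary of one of the two partitions in $G_x$---which we have already bounded by $O(\nu dn)$ in step (i).

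The main obstacle I anticipate is in step (ii)/(iii): the structural extraction and the stability bound must both deliver constants compatible with the target $\Omega(1)$ internal expansion, without additional losses in $1/\phi$ or in the balance $\eta$. Preserving closeness to $C_1,\dots,C_k$ at cost exactly $O(\nu n)$ (rather than a higher polynomial in $\nu$) will likely require a tailored argument that exploits the specific structure of the program (in particular, that $x_e=1$ is fixed off $F$) rather than a black-box application of a higher-order Cheeger-style theorem.
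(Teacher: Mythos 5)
Your step (ii) is where the argument breaks, and the specific failure is quantitative. You propose to bound the size of the trim via a contrapositive of the constraint $\rho_{k+1}(G_x)\geq\Omega(1)$: ``an internal cut of conductance $o(1)$ inside some $C'_i$ would, together with the other parts, yield a $(k+1)$-partition of max conductance $o(1)$, contradicting $\rho_{k+1}(G_x)\geq\Omega(1)$.'' But the SDP feasibility constraint $P(\mathcal L_x-\theta I)P\succeq 0$ only gives $\lambda_{k+1}(\mathcal L_x)\geq\theta=\phi^2/5$, which by the easy direction of higher-order Cheeger (\Cref{thm:cheeger_higher_order}) yields $\rho_{k+1}(G_x)\geq\phi^2/10$. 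Meanwhile, the internal cuts you need to trim away can have conductance as large as $\phi/2$, and $\phi/2>\phi^2/10$: a $(k+1)$-partition assembling such a cut with the remaining clusters need not violate $\rho_{k+1}(G_x)\geq\phi^2/10$. So the spectral constraint alone neither rules out large sparse internal cuts in $G_x\{C_i\}$, nor bounds their size by $O(\nu n)$. The paper instead bounds the trim sizes via the complementary constraint you correctly flag at the end — $x\geq\mathbbm 1_{E^+}$ — combined with the $\phi$-expansion of $C_i$ in the \emph{unweighted} graph $G$: if $S\subseteq C_i$ has $\Phi_{G_x\{C_i\}}(S)<\phi/2$, then $\phi d|S|\leq|E(S,C_i\setminus S)|\leq w(S,C_i\setminus S)+|F(C_i)|<\tfrac{\phi}{2}d|S|+|F(C_i)|$, forcing $|S|\leq 2|F(C_i)|/(d\phi)$, and $\sum_i|F(C_i)|\leq|F\setminus E_\cross|\leq d\gamma n$ by \Cref{claim:FcapE_cross_k}. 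This is the content of \Cref{lemma:sparse_cut_k}; it has nothing to do with the spectral constraint and is a structural consequence of only being allowed to down-weight flagged edges.

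The overall architecture also differs: the paper is constructive (start from $C_1,\dots,C_k$, trim the largest non-expanding piece $X_i$ of each, apply \Cref{lemma:ST_k} to certify $G_x\{C_i\setminus X_i\}$ is a $\phi/6$-expander, then greedily reassign the trimmed vertices as in \Cref{alg:GT_k}), so closeness to $C_1,\dots,C_k$ is automatic from $\sum_i|X_i|\lesssim\gamma n/\phi$. The SDP constraint $\rho_{k+1}(G_x)\geq\Omega(\phi^2)$ is used only afterwards, in a case analysis, to show that the small reassigned sets $R_i$ do not break the internal expansion of $C'_i$ (the argument in Case~2 of Condition~2 of \Cref{lemma:good_clustering}). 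Your extract-then-match plan could conceivably be made to work, but only after replacing the Cheeger contrapositive with an argument of \Cref{lemma:sparse_cut_k}'s type; as written, the quantitative gap between $\phi/2$ and $\phi^2/10$ makes step (ii) fail.
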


\noindent
One can also observe that there is a solution to  the program in~\eqref{prog:exptime} with objective value $\approx \gamma$.

\begin{lemma}[Low-objective solution -- \Cref{lemma:OPT_SDP_k}]
\label{informal:lowobj}
    There is a feasible solution $x^*$ to the program in~\eqref{prog:exptime} with objective value $\approx \gamma dn$. Specifically, this solution gives a weight of $0$ to every flagged edge that is also crossing, and leaves  every other edge untouched, i.e. $x^* = \1 - \1_{F \cap E_\cross}$.
\end{lemma}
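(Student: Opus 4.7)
The plan is to verify that $x^* = \1 - \1_{F \cap E_\cross}$ satisfies the three constraints of~\eqref{prog:exptime} and to bound its objective. The first and third constraints are immediate: since $F \cap E_\cross \subseteq F$, we have $x^*_e = 1$ for every $e \in E \setminus F$, and the entries of $x^*$ lie in $\{0,1\} \subseteq [0,1]$. The objective value is also straightforward: $\sum_{e \in F} x^*_e = |F| - |F \cap E_\cross| = |F \setminus E_\cross| \le |F \triangle E_\cross| \le \gamma d n$, where the last inequality is \Cref{informal:flagcross}.

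The main obstacle is the remaining constraint $\rho_{k+1}(G_{x^*}) \ge \Omega(1)$. The crucial structural observation is that $F \cap E_\cross \subseteq E_\cross$, so only \emph{crossing} edges are deleted in passing from $G$ to $G_{x^*}$. Consequently the induced subgraphs $G_{x^*}\{C_i\}$ coincide with $G\{C_i\}$ for every $i \in [k]$ and therefore remain $\phi = \Omega(1)$-expanders.

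Given this, I would lower bound $\rho_{k+1}(G_{x^*})$ by a direct pigeonhole argument. Let $Q_1, \ldots, Q_{k+1}$ be any disjoint subsets of $V$; for each $j$ let $\pi(j) \in [k]$ be an index maximizing $|Q_j \cap C_{\pi(j)}|$, so that $|Q_j \cap C_{\pi(j)}| \ge |Q_j|/k$. Since there are $k+1$ cuts but only $k$ clusters, pigeonhole yields $a \neq b$ with $\pi(a) = \pi(b) = i^*$; disjointness of $Q_a, Q_b$ then forces one of them---say $Q_a$---to satisfy $|Q_a \cap C_{i^*}| \le |C_{i^*}|/2$. Now $Q_a \cap C_{i^*}$ is a cut of $G\{C_{i^*}\}$ on the smaller volume side, so the $\phi$-expansion of $G\{C_{i^*}\}$ gives $|E_G(Q_a \cap C_{i^*}, C_{i^*} \setminus Q_a)| \ge \phi \, d \, |Q_a \cap C_{i^*}| \ge (\phi/k) \, d \, |Q_a|$; these edges are intra-cluster, hence preserved in $G_{x^*}$ and contributing to $|E_{G_{x^*}}(Q_a, V \setminus Q_a)|$. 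Dividing by $\vol_{G_{x^*}}(Q_a) \le d |Q_a|$ yields $\Phi_{G_{x^*}}(Q_a) \ge \phi/k = \Omega(1)$.

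A minor wrinkle is that conductance uses $\min(\vol(Q_a), \vol(V \setminus Q_a))$ in the denominator, so the bound above needs $Q_a$ to lie on the smaller-volume side of the cut in $G_{x^*}$. Since the $Q_j$'s are disjoint, at most one of them has $|Q_j| > n/2$, and for the rest $\vol_{G_{x^*}}(Q_j) \le dn/2 \le \vol_{G_{x^*}}(V)/2 + O(\gamma d n)$, which suffices for $\gamma \ll 1$. A short case analysis---if both pigeonhole partners happen to be small-volume one can pick either; if one is large, then the argument forces the large cut to shed most of a cluster, so we can still invoke intra-cluster expansion on a small-volume partner---shows that a suitable $Q_a$ always exists, completing the feasibility check and thereby the lemma.
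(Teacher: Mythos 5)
Your argument is correct and follows essentially the same route as the paper's own sketch for \Cref{informal:lowobj}: the objective is $|F\setminus E_\cross|\le\gamma dn$ via \Cref{informal:flagcross}, intra-cluster edges are preserved so each $G_{x^*}\{C_i\}$ remains a $\phi$-expander, and a pigeonhole over $k+1$ disjoint sets against $k$ clusters produces a $Q_a$ with $\Phi_{G_{x^*}}(Q_a)\ge\phi/k=\Omega(1)$. One remark on your closing paragraph: the ``minor wrinkle'' is not actually there. Conductance divides by $\min(\vol(Q_a),\vol(V\setminus Q_a))\le\vol(Q_a)$, so the bound $w_{G_{x^*}}(Q_a,V\setminus Q_a)\ge(\phi/k)\,\vol_{G_{x^*}}(Q_a)$ already yields $\Phi_{G_{x^*}}(Q_a)\ge\phi/k$ unconditionally---dividing by the possibly smaller side can only increase the conductance, never decrease it. Moreover, $G_{x^*}$ is kept $d$-regular by the self-loops in \Cref{def:Lx_n_Gx}, so $\vol_{G_{x^*}}(V)=dn$ exactly and the ``$+O(\gamma dn)$'' correction you introduce is vacuous. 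The sketched case analysis at the end (``shed most of a cluster,'' etc.) does not read as a valid argument and should simply be deleted; nothing in the proof depends on it.
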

\begin{proof}[Proof sketch]
    Since we use $x^* = \1 - \1_{F \cap E_\cross}$, the objective value is given by the flagged edges that are not crossing, i.e. $|F \setminus E_\cross|$. By \Cref{informal:flagcross}, we now that this is at most $\gamma dn$. To check feasibility of $x^*$, we observe that we are not decreasing the weight of any edge inside the clusters $G\{C_i\}$'s, which therefore leaves their conductance unchanged. Hence, however we pick $k+1$ disjoint sets $Q_1,\dots,Q_{k+1}$, one of them must cross one of  $G_x\{C_1\}, \dots, G_x\{C_k\}$. Therefore, one of $Q_1,\dots,Q_{k+1}$ has conductance at least $\Omega(1)$ in $G_x$, showing that $\rho_{k+1}(G_x) \ge \Omega(1)$.
\end{proof}

\noindent
\Cref{informal:lowobj} together with \Cref{informal:reweight_clust} imply that solving the program in~\eqref{prog:exptime} and using the weights it outputs, we obtain a reweighted version of $G$ that admits a $(k,O(\gamma),\Omega(1))$-clustering $C'_1,\dots,C'_k$ such that $\sum_i |C_i \triangle C'_i| \lesssim \gamma \cdot n$. To turn the program in~\eqref{prog:exptime} into a polynomial-time algorithm, we use higher-order Cheeger inequalities~\cite{LGT12} to rewrite the constraint $\rho_{k+1}(G_x) \ge \Omega(1)$ as an eigenvalue constraint. We exploit the clusterability of the input graph $G$ to further rewrite the eigenvalue constraint as a semidefinite constraint, which gives \Cref{thm:informal_reweight}.

\section{Preliminaries and notation}
\label{sec:prelims}

\paragraph{Graph notation.} Given a graph $G= (V,E)$ and a set $S \subseteq V$, we let $E(S)$ denote the set of edges with both endpoints in $S$, and we let $G[S]$ denote the induced subgraph on $S$, that is $ G[S] \coloneqq (S, E(S)). $
We use the notation $G\{S\}$ to denote the graph $G[S]$ to which self-loops have been added so that every vertex in $G\{S\}$ has the same degree as in $G$. The volume of $S$, denoted $\vol(S)$, is defined as the sum of degrees of vertices in $S$. If $G$ has self-loops, we write it as a triple $G=(V,E,\ell)$, where $(V,E)$ is a simple graph (so the self-loops are not represented as edges) and $\ell$ maps each $u \in V$ to the number of self-loops on $u$. The degree of a vertex in $G=(V,E,\ell)$ is the number of its incident edges plus the number of its self-loops, and the volume of $S \subseteq V$ uses this notion of degree. The adjacency matrix of  $G=(V,E,\ell)$ is the adjacency matrix of $(V,E)$ plus $\diag(\ell)$. The set of neighbors of a vertex $u \in V$ is the set $\nei_G(u)=\{v \in V: (u,v) \in E\}$, and the set of neighbors of $S \subseteq V$ is the set $\nei_G(S) = \cup_{u \in S} \nei_G(u)$.

\paragraph{Linear algebra notation.} For a vertex $v \in V$, we write $\1_v \in \R^V$ for the indicator vector of $v$, that is the vector that is $1$ at index $v$ and $0$ elsewhere. Similarly, for a set $S \subseteq V$, we write $\1_S$ for the indicator of $S$. For a matrix $M$, we write $\|M\|_{\mathrm{op}}$ for its operator (spectral) norm, $\|M\|_{\mathrm{op}} = \max_{\|x\|_2 = 1} \|Mx\|_2.$

\paragraph{Graph access model.} In this paper, we work with the \emph{bounded-degree graph model}. This means that, for a graph $G=(V,E)$ with degrees bounded by $d$, we can specify a vertex $v \in V$ as well as a number $i: 1\leq i \leq d$, and access the $i$-th neighbor of the vertex $v$ in constant time.

\subsection{Facts from spectral graph theory}
Throughout this section, let us fix an integer $d \ge 3$ and $d$-regular graph $G=(V,E)$, and also fix an integer $k \ge 2$ and conductance parameters $\epsilon,\phi \in (0,1)$.
Let $A$ be the adjacency matrix of $G$, and let $\mathcal{L} = I -\frac{1}{d}A$ denote its normalized Laplacian. Also let $\calL = U \Sigma U^T$ denote its eigendecomposition.

\begin{definition}[Spectral embedding]
    \label{def:emb} Let $x_1,\dots,x_n$ be the eigenvectors of $\mathcal L$ sorted in non-decreasing order of eigenvalue. We define the spectral embedding of $V$, denoted $(f_u)_{u \in V}$, as \[f_u \coloneqq \sum_{i =1}^k \langle x_i, \1_u\rangle e_i = U_{[k]}^\top  \1_u\] for each $u \in V$, where $e_1,\dots, e_k$ is the standard basis of $\R^k$ and $U_{[k]} \in \mathbb{R}^{n\times k}$ denotes the matrix with columns $x_1, \ldots, x_k$.
\end{definition}

\begin{lemma}[Lemma 3 from \cite{GKLMS21}]
\label{lem:bnd-lambda}
Let $\lambda_k$ and $\lambda_{k+1}$ be the $k$-th and $(k+1)$-th smallest eigenvalue of $\calL$, respectively.
If $G$ is $(k,\epsilon,\phi)$-clusterable as per \Cref{def:clustering}, then $\lambda_k\leq 2\epsilon$ and $\lambda_{k+1}\geq {\phi^2}/{2}$. 
\end{lemma}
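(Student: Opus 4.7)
The plan is to prove the two bounds separately, both via the Courant–Fischer variational characterization applied to the subspace \smash{$S = \spn\{\1_{C_1},\dots,\1_{C_k}\}$} of dimension $k$ spanned by the (orthogonal) cluster indicators.

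\emph{Upper bound on $\lambda_k$.} I would use $\lambda_k = \min_{\dim(T)=k}\max_{x \in T,\, x\neq 0} x^\top \calL x / \|x\|_2^2$ and take $T = S$. For $x = \sum_{i=1}^k \alpha_i \1_{C_i}$, I would write
\begin{equation*}
    x^\top \calL x \;=\; \frac{1}{2d}\sum_{(u,v)\in E}(x_u-x_v)^2 \;=\; \frac{1}{2d}\sum_{i<j}|E(C_i,C_j)|(\alpha_i-\alpha_j)^2,
\end{equation*}
apply $(\alpha_i-\alpha_j)^2 \le 2(\alpha_i^2+\alpha_j^2)$ to regroup the sum by a single cluster, and use the conductance bound $|E(C_i,V\setminus C_i)|\le \epsilon\cdot d|C_i|$ (which follows from $\Phi_G(C_i)\le \epsilon$ together with $d$-regularity, regardless of whether $C_i$ is the smaller or larger side of its cut). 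Combining with $\|x\|_2^2 = \sum_i \alpha_i^2 |C_i|$ would give $x^\top \calL x \le 2\epsilon \|x\|_2^2$, as desired.

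\emph{Lower bound on $\lambda_{k+1}$.} I would use the dual characterization $\lambda_{k+1} = \max_{\dim(S)=k}\min_{y\perp S,\, y\neq 0} y^\top\calL y / \|y\|_2^2$ with the same $S$. For any $y \perp S$, one has $\sum_{u\in C_i} y_u = 0$ for each $i$, so writing $y^{(i)}$ for the restriction of $y$ to $C_i$, each $y^{(i)}$ is orthogonal to $\1_{C_i}$. The key step is to show
\begin{equation*}
    y^\top \calL y \;\ge\; \sum_{i=1}^k (y^{(i)})^\top \calL_i\, y^{(i)},
\end{equation*}
where $\calL_i$ is the normalized Laplacian of $G\{C_i\}$. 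This follows by splitting the Dirichlet form of $\calL$ by edge type and verifying that the within-cluster contribution $\tfrac{1}{2d}\sum_{(u,v)\in E(C_i)}(y_u-y_v)^2$ equals $(y^{(i)})^\top \calL_i y^{(i)}$ (the self-loops added in $G\{C_i\}$ precisely convert the diagonal so that the row sums vanish, and self-loops contribute zero to the quadratic form). Then, since $\Phi(G\{C_i\})\ge \phi$, the easy direction of Cheeger's inequality applied to the $d$-regular graph $G\{C_i\}$ gives that the second eigenvalue of $\calL_i$ is at least $\phi^2/2$, so $(y^{(i)})^\top \calL_i y^{(i)} \ge (\phi^2/2)\|y^{(i)}\|_2^2$. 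Summing over $i$ and using $\|y\|_2^2 = \sum_i \|y^{(i)}\|_2^2$ concludes the bound.

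\emph{Main obstacle.} The upper bound on $\lambda_k$ is essentially routine. The delicate step is the lower bound on $\lambda_{k+1}$: one must be careful that the cross-cluster edge contributions to $y^\top\calL y$ only help (they contribute the nonnegative quantity $\tfrac{1}{2d}\sum_{i<j}\sum_{(u,v)\in E(C_i,C_j)}(y_u-y_v)^2$), and that after dropping them the remaining quadratic form agrees exactly with $\sum_i (y^{(i)})^\top \calL_i y^{(i)}$ despite the self-loop correction that defines $G\{C_i\}$. Once this accounting is done, invoking Cheeger's inequality on each piece is immediate.
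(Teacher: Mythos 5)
Your plan is correct and is the standard argument for this lemma (the paper itself does not reprove it but cites Lemma~3 of \cite{GKLMS21}, whose proof is exactly the Courant--Fischer argument you describe: test $\lambda_k$ on $\spn\{\1_{C_1},\dots,\1_{C_k}\}$, and lower-bound $\lambda_{k+1}$ by decomposing vectors orthogonal to the cluster indicators and invoking Cheeger within each $G\{C_i\}$). Your self-loop accounting in the lower bound is the right thing to check and indeed works out: the self-loops added to make $G\{C_i\}$ $d$-regular exactly cancel in the Rayleigh quotient, so the within-cluster part of the Dirichlet form equals $(y^{(i)})^\top\calL_i y^{(i)}$, and dropping the nonnegative cross-edge terms is valid.

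Two small slips you should fix before writing this up. First, for a $d$-regular graph the Dirichlet form of the normalized Laplacian over \emph{unordered} edges is $x^\top\calL x = \tfrac{1}{d}\sum_{\{u,v\}\in E}(x_u-x_v)^2$, not $\tfrac{1}{2d}$; it is this $\tfrac{1}{d}$ which, together with $(\alpha_i-\alpha_j)^2\le 2(\alpha_i^2+\alpha_j^2)$ and $|E(C_i,V\setminus C_i)|\le \epsilon d|C_i|$, produces the claimed constant $\lambda_k\le 2\epsilon$ (with your $\tfrac{1}{2d}$ you would get $\epsilon$, which does not match the computation). Second, the direction of Cheeger's inequality you need, namely $\Phi(H)\le\sqrt{2\lambda_2(H)}$ (equivalently $\lambda_2\ge \Phi^2/2$), is the \emph{hard} discrete Cheeger direction; the easy direction $\lambda_2/2\le\Phi$ only gives an upper bound on $\lambda_2$ and would not serve here.
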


\begin{remark}\label{remark:spectral_unique}
If $G$ is $(k, \epsilon, \phi)$-clusterable with $\epsilon/\phi^2$ small enough, then
	it follows from Lemma~\ref{lem:bnd-lambda} that the space spanned by the bottom $k$ eigenvectors of $\mathcal{L}$ is uniquely defined, i.e. the choice of $U_{[k]}$ is unique up to multiplication by an orthonormal matrix \smash{$R\in \mathbb{R}^{k \times k}$} on the right. We note 
	that while the choice of $f_u$ for $u \in V$ is not unique,  the dot product 
	between the spectral embedding of $u\in V$ and $v\in V$ is well defined, since for every orthonormal 
	$R\in \mathbb{R}^{k\times k}$ one has  
	\[\langle Rf_u, Rf_v\rangle=(Rf_u)^\top(Rf_v)=\left(f_u\right)^\top (R^\top R) \left(f_v\right)=\left(f_u\right)^\top \left(f_v\right) \, .\]
\end{remark}

\noindent
Since we are ultimately interested in recovering the clustering by using (also) the spectral embedding, it is useful to know what the ``typical'' embedding is for a cluster $C_i$. Hence, we next define the cluster means.

\begin{definition}[Cluster means]
    \label{def:clusmeans}
    Let $C_1,\dots,C_k $ be a partitioning of $V$ and let $(f_u)_{u \in V}$ be its spectral embedding. We define the cluster means (or cluster centers), denoted $\mu_1,\dots,\mu_k$, as $$\mu_i=\frac{1}{|C_i|}\sum_{u \in C_i} f_u \quad \text{ for each $i \in [k]$.}$$
\end{definition}

\begin{lemma}[Spectral embedding is close to the average of neighbors] \label{lemma:fx_close_to_avg_neighbors} If $G$ is $(k, \epsilon, \phi)$-clusterable, then for every $v \in V$ one has 
\[\left\|f_v - \frac{1}{d}\sum_{w \in \nei_G(v)}f_{w}\right\|_2 \leq 2 \epsilon \|f_v\|_2 \, .\]
\end{lemma}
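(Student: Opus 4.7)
The plan is to express the quantity inside the norm purely in terms of the normalized Laplacian $\mathcal{L}$ and the matrix $U_{[k]}$ of bottom eigenvectors, and then bound it by the $k$-th eigenvalue using \Cref{lem:bnd-lambda}.

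First, I would rewrite both terms via \Cref{def:emb}. Since $f_u = U_{[k]}^\top \1_u$, the average of neighbors satisfies
\[
\frac{1}{d}\sum_{w \in \nei_G(v)} f_w = U_{[k]}^\top \left(\frac{1}{d}\sum_{w \in \nei_G(v)} \1_w\right) = U_{[k]}^\top \left(\tfrac{1}{d}A \1_v\right),
\]
where $A$ is the adjacency matrix (here I use $d$-regularity to identify $\sum_{w \in \nei_G(v)} \1_w$ with $A\1_v$). Subtracting gives
\[
f_v - \frac{1}{d}\sum_{w \in \nei_G(v)} f_w = U_{[k]}^\top\left(I - \tfrac{1}{d}A\right)\1_v = U_{[k]}^\top \mathcal{L}\, \1_v.
\]

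Next, I would exploit that the columns of $U_{[k]}$ are the bottom $k$ eigenvectors of $\mathcal{L}$, which means $\mathcal{L} U_{[k]} = U_{[k]} \Sigma_{[k]}$ (equivalently, $U_{[k]}^\top \mathcal{L} = \Sigma_{[k]} U_{[k]}^\top$) where $\Sigma_{[k]} = \mathrm{diag}(\lambda_1, \ldots, \lambda_k)$ with $\lambda_1 \le \cdots \le \lambda_k$. Substituting,
\[
U_{[k]}^\top \mathcal{L}\, \1_v = \Sigma_{[k]} U_{[k]}^\top \1_v = \Sigma_{[k]} f_v.
\]
Taking norms and using that $\Sigma_{[k]}$ is diagonal with entries at most $\lambda_k$,
\[
\left\|f_v - \frac{1}{d}\sum_{w \in \nei_G(v)} f_w\right\|_2 = \|\Sigma_{[k]} f_v\|_2 \le \lambda_k \|f_v\|_2.
\]

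Finally, applying \Cref{lem:bnd-lambda} to the $(k,\epsilon,\phi)$-clusterable graph $G$ yields $\lambda_k \le 2\epsilon$, which gives the claimed bound. I do not expect any real obstacle here: the argument is a direct spectral calculation, and the only point that requires care is the identification $\sum_{w \in \nei_G(v)} \1_w = A \1_v$, which is immediate in the $d$-regular setting assumed throughout.
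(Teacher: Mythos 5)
Your proof is correct and takes essentially the same route as the paper's: both reduce the quantity to $U_{[k]}^\top \mathcal{L} \1_v$, identify this with the eigenvalue-scaled embedding ($\Sigma_{[k]} f_v$ in your notation, written coordinate-wise as $\sum_{i=1}^k \lambda_i \langle \1_v, x_i\rangle e_i$ in the paper), and then invoke \Cref{lem:bnd-lambda} to bound $\lambda_k \le 2\epsilon$. The only cosmetic difference is that you use the matrix identity $U_{[k]}^\top \mathcal{L} = \Sigma_{[k]} U_{[k]}^\top$ directly, while the paper expands $\1_v$ in the full eigenbasis before projecting onto the first $k$ coordinates.
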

\begin{proof}

Let $0 \leq \lambda_1 , \ldots , \lambda_n$ denote the eigenvalues of $\mathcal{L}$, and let $x_1, \ldots, x_{n}$ be the corresponding eigenvectors. 
By \Cref{def:emb} we have 
\begin{equation}\label{eq:fx_close_to_avg_neighbors1}
    U_{[k]}^\top \mathcal{L} \1_v = \frac{1}{d} U_{[k]}^\top\sum_{w \in \nei_G(v)}(\1_v - \1_w) = f_v - \frac{1}{d}\sum_{w \in \nei_G(v)} f_w. 
\end{equation}
On the other hand, expanding $\1_v$ in the eigenbasis, we obtain 
\begin{equation}\label{eq:fx_close_to_avg_neighbors2}
    U_{[k]}^\top \mathcal{L} \1_v =    U_{[k]}^\top \mathcal{L}\left(\sum_{i = 1}^{n} \langle \1_v, x_i\rangle x_i \right) =   U_{[k]}^\top\left( \sum_{i = 1}^n \lambda_i \langle \1_v, x_i\rangle x_i \right) = \sum_{i = 1}^k \lambda_i \langle \1_v, x_i\rangle e_i,
\end{equation}
where $\{e_i\}_{i=1}^k$ form an orthonormal basis of $\R^k$. 
Combining~\eqref{eq:fx_close_to_avg_neighbors1} and~\eqref{eq:fx_close_to_avg_neighbors2}, we obtain 
\begin{equation*}
    \left \|f_v - \frac{1}{d}\sum_{w \in \nei_G(v)}f_{w}\right\|_2 = \left\| \sum_{i = 1}^k \lambda_i \langle \1_v, x_i\rangle e_i\right\|_2 \leq 2\e \left\|\sum_{i = 1}^k \langle \1_v, x_i\rangle e_i  \right\|_2 = 2\epsilon  \|f_v\|_2, 
\end{equation*}
where the first inequality uses the fact that $0 \leq \lambda_i \leq 2\epsilon$ for $i = 1, \ldots , k$ by \Cref{lem:bnd-lambda}. 
\end{proof}

 \begin{lemma}[Lemma 7 from \cite{GKLMS21}]\label{lemma:clustermeans} If $G$ admits a $(k,\epsilon,\phi)$-clustering $C_1,\dots,C_k$, then the following hold:

\begin{enumerate}[label=\textbf{\upshape (\arabic*)}]
    \item for all $i \in [k]$, one has
    \[
    \left| \|\mu_i\|_2^2 - \frac{1}{|C_i|} \right| \leq \frac{4 \sqrt{\epsilon}}{\phi} \frac{1}{|C_i|}  \, ;\label{bulletpt:mu_norm}
    \]
    \item for all $i \neq j \in [k]$, one has
    \[
    |\langle \mu_i, \mu_j \rangle| \leq \frac{8 \sqrt{\epsilon}}{\phi} \frac{1}{\sqrt{|C_i||C_j|}}\label{bulletpt:mu_i_dot_mu_j} \, .
    \]
\end{enumerate}
\end{lemma}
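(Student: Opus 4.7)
The plan is to express both quantities in terms of the orthogonal projection $P \coloneqq U_{[k]} U_{[k]}^\top$ onto the span of the bottom $k$ eigenvectors of $\calL$, and to show that the normalized cluster indicators $\bar{\1}_{C_i} \coloneqq \1_{C_i}/\sqrt{|C_i|}$ lie almost entirely inside the image of $P$. Starting from \Cref{def:clusmeans}, one has $\mu_i = \frac{1}{\sqrt{|C_i|}} U_{[k]}^\top \bar{\1}_{C_i}$, and since $\|U_{[k]}^\top v\|_2^2 = \|Pv\|_2^2$ for every $v$ (because $P$ is an orthogonal projection), this gives
\[
\|\mu_i\|_2^2 = \frac{\|P \bar{\1}_{C_i}\|_2^2}{|C_i|}, \qquad \langle \mu_i, \mu_j\rangle = \frac{\bar{\1}_{C_i}^\top P \bar{\1}_{C_j}}{\sqrt{|C_i||C_j|}}.
\]
Both claims therefore reduce to controlling $\|(I-P)\bar{\1}_{C_i}\|_2$ for each $i \in [k]$.

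Next, I would bound this projection error via a Rayleigh quotient argument. Since $G$ is $d$-regular, the identity $2|E(C_i)| = d|C_i| - |E(C_i, V \setminus C_i)|$ gives $\1_{C_i}^\top \calL \1_{C_i} = |E(C_i, V \setminus C_i)|/d$, hence $\bar{\1}_{C_i}^\top \calL \bar{\1}_{C_i} = |E(C_i, V \setminus C_i)|/(d|C_i|) \le \Phi_G(C_i) \le \epsilon$ (replacing $|C_i|$ by the possibly smaller $\min(|C_i|, |V \setminus C_i|)$ only enlarges the ratio, so the inequality holds in every case). Expanding $\bar{\1}_{C_i}$ in the orthonormal eigenbasis of $\calL$ and separating off the tail yields $\bar{\1}_{C_i}^\top \calL \bar{\1}_{C_i} \ge \lambda_{k+1} \|(I - P)\bar{\1}_{C_i}\|_2^2$, and combining this with $\lambda_{k+1} \ge \phi^2/2$ from \Cref{lem:bnd-lambda} produces $\|(I - P)\bar{\1}_{C_i}\|_2^2 \le 2\epsilon/\phi^2$.

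The two stated bounds would then follow quickly. For part (1), the orthogonal decomposition $\bar{\1}_{C_i} = P\bar{\1}_{C_i} + (I - P)\bar{\1}_{C_i}$ together with $\|\bar{\1}_{C_i}\|_2 = 1$ gives $\|P\bar{\1}_{C_i}\|_2^2 = 1 - \|(I - P)\bar{\1}_{C_i}\|_2^2 \in [1 - 2\epsilon/\phi^2,\, 1]$, so $\bigl|\|\mu_i\|_2^2 - 1/|C_i|\bigr| \le (2\epsilon/\phi^2)/|C_i|$, which implies the stated $(4\sqrt{\epsilon}/\phi)/|C_i|$ bound in the regime $\sqrt{\epsilon}/\phi \le O(1)$ where we work. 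For part (2), disjointness of $C_i$ and $C_j$ gives $\bar{\1}_{C_i}^\top \bar{\1}_{C_j} = 0$, whence $\bar{\1}_{C_i}^\top P \bar{\1}_{C_j} = -\bar{\1}_{C_i}^\top (I - P) \bar{\1}_{C_j}$, and Cauchy--Schwarz with the projection bound above yields $|\bar{\1}_{C_i}^\top P \bar{\1}_{C_j}| \le 2\epsilon/\phi^2$, hence $|\langle \mu_i, \mu_j\rangle| \le (2\epsilon/\phi^2)/\sqrt{|C_i||C_j|}$.

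I expect the main obstacle to be essentially cosmetic: the direct argument naturally produces constants scaling as $\epsilon/\phi^2$ whereas the statement is written with $\sqrt{\epsilon}/\phi$, and matching the precise constants $4$ and $8$ in the lemma would require minor bookkeeping in regimes where $\epsilon/\phi^2$ is not negligibly small. The only genuine subtlety is the smaller-side denominator in \Cref{def:conductance}: at most one cluster can satisfy $|C_i| > |V \setminus C_i|$ when $k \ge 2$, but even for that cluster the Rayleigh-quotient denominator $|C_i|$ only exceeds $|V \setminus C_i|$, so the inequality $\bar{\1}_{C_i}^\top \calL \bar{\1}_{C_i} \le \epsilon$ holds uniformly without case analysis.
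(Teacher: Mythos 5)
The paper does not prove this lemma; it cites it verbatim as ``Lemma~7 from~\cite{GKLMS21},'' so there is no in-paper proof to compare against. Your derivation is correct and is the standard spectral argument: writing $\mu_i = U_{[k]}^\top \1_{C_i}/|C_i|$, using $\|U_{[k]}^\top v\|_2 = \|Pv\|_2$ for the projector $P = U_{[k]}U_{[k]}^\top$, bounding $\bar{\1}_{C_i}^\top \mathcal{L}\,\bar{\1}_{C_i} \le \Phi_G(C_i) \le \epsilon$ (and you are right that the $\min$ in the conductance denominator of \Cref{def:conductance} only strengthens this, so no case split is needed), then using $\lambda_{k+1} \ge \phi^2/2$ from \Cref{lem:bnd-lambda} to get $\|(I-P)\bar{\1}_{C_i}\|_2^2 \le 2\epsilon/\phi^2$, and finishing both parts via the projector decomposition and Cauchy--Schwarz on $(I-P)$. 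The Cauchy--Schwarz step for part~(2) implicitly uses that $(I-P)$ is idempotent to move it to both sides before applying Cauchy--Schwarz; that is fine but worth stating.

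Your argument actually yields the sharper constants $2\epsilon/\phi^2$ in both bounds, rather than $4\sqrt{\epsilon}/\phi$ and $8\sqrt{\epsilon}/\phi$. As you observe, in the regime $\epsilon \le 4\phi^2$ (which is vastly implied by the paper's standing assumption $\epsilon/\phi^6 \le 10^{-5}/\eta^4$ in \Cref{fig:setting}), your bound is strictly stronger and subsumes the stated one. For completeness, one can check that the lemma as stated is also true in the complementary regime $\sqrt{\epsilon}/\phi > 1/8$ for trivial reasons (since $\|P\bar{\1}_{C_i}\|_2^2 \le 1$ forces $\|\mu_i\|_2^2 \le 1/|C_i|$ and hence $|\langle\mu_i,\mu_j\rangle| \le 1/\sqrt{|C_i||C_j|}$), so there is no gap; but this is exactly the ``cosmetic bookkeeping'' you correctly flag, and it is irrelevant to how the lemma is used in this paper.
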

\begin{remark}\label{rem:mu_i_norm}
Assuming that ${|C_i|}/{|C_j|} \leq \eta$ for all $i,j \in [k]$, one has
\smash{$\frac{1}{\eta}\cdot \frac{n}{k}\le |C_i| \le \eta \cdot \frac{n}{k}$} for all $i \in [k]$. Combining this with {\bf \ref{bulletpt:mu_norm}} and further assuming $\epsilon \le \phi^2/64$, for all $i \in [k] $ one has $$ \frac{1}{2 \eta}\cdot \frac{k}{n} \leq \|\mu_i\|^2_2 \leq 2 \eta \cdot \frac{k}{n} \, .$$
\end{remark}

\begin{lemma}[Lemma 6 from \cite{GKLMS21}; ``variance bounds"]\label{lemma:variancebound}
If $G$ admits a $(k,\epsilon,\phi)$-clustering $C_1,\dots,C_k$, then for all $\alpha \in \mathbb{R}^k$ with $\|\alpha\|_2=1$ we have 
\[ \sum_{i =1}^k \sum_{v \in C_i}\left \langle f_v - \mu_i, \alpha\right \rangle^2 \leq \frac{4\epsilon}{\phi^2}\, .\]
\end{lemma}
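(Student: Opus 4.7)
The plan is to transform the quantity of interest into the squared norm of a specific function on $V$, and then pinch that norm using the top-$k$ eigenvalues of $\mathcal{L}$ from above and Cheeger's inequality on each $G\{C_i\}$ from below. Concretely, I would introduce the scalar function $g = U_{[k]}\alpha \in \R^V$, so that $g(v) = \langle f_v, \alpha\rangle$ for every $v \in V$ and $\|g\|_2 = \|\alpha\|_2 = 1$. Since $g$ lies in $\spn(x_1,\ldots,x_k)$, a direct computation gives $g^\top \mathcal{L} g = \sum_{i=1}^k \lambda_i \alpha_i^2 \le \lambda_k \le 2\epsilon$ by \Cref{lem:bnd-lambda}.

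The next step is to decompose $g = h + z$, where $h$ is piecewise-constant with value $\bar{g}_i := \frac{1}{|C_i|}\sum_{v \in C_i} g(v) = \langle \mu_i, \alpha\rangle$ on each $C_i$. By construction, the restriction $z|_{C_i}$ has mean zero, and
\[
\|z\|_2^2 \;=\; \sum_{i=1}^k \sum_{v \in C_i}(g(v) - \bar{g}_i)^2 \;=\; \sum_{i=1}^k \sum_{v \in C_i}\langle f_v - \mu_i, \alpha\rangle^2,
\]
so bounding $\|z\|_2^2 \le 4\epsilon/\phi^2$ is exactly the claim.

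The crux is a two-sided pinching of the intra-cluster Dirichlet energy of $z$. From above, because $h$ is constant on each $C_i$, the intra-cluster contribution to the Laplacian energy of $z$ equals that of $g$, and is therefore dominated by the full energy:
\[
\frac{1}{d}\sum_{i=1}^k \sum_{(u,v) \in E(C_i)}(z(u)-z(v))^2 \;\le\; \frac{1}{d}\sum_{(u,v) \in E}(g(u)-g(v))^2 \;=\; g^\top \mathcal{L} g \;\le\; 2\epsilon.
\]
From below, on each cluster $C_i$, the function $z|_{C_i}$ is orthogonal to the all-ones vector on $C_i$, which (thanks to the self-loops in $G\{C_i\}$ making every vertex have degree $d$) is the bottom eigenvector of the normalized Laplacian $\mathcal{L}_{G\{C_i\}}$ with eigenvalue $0$. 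Combining the Cheeger inequality with $\Phi(G\{C_i\}) \ge \phi$ yields a spectral gap of at least $\phi^2/2$, hence
\[
\frac{1}{d}\sum_{(u,v) \in E(C_i)}(z(u)-z(v))^2 \;=\; z|_{C_i}^\top \mathcal{L}_{G\{C_i\}}\, z|_{C_i} \;\ge\; \frac{\phi^2}{2}\,\|z|_{C_i}\|_2^2,
\]
where I used that self-loops contribute zero to a Laplacian quadratic form. Summing over $i$ and chaining the two inequalities gives $\frac{\phi^2}{2}\|z\|_2^2 \le 2\epsilon$, as desired.

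The main place to be careful is the lower bound: I need that the self-loops in $G\{C_i\}$ put the constant vector at eigenvalue $0$ of $\mathcal{L}_{G\{C_i\}}$ (so mean-zero functions like $z|_{C_i}$ are in the span of higher eigenvectors), and that $\Phi(G\{C_i\}) \ge \phi$ translates via Cheeger to the required $\phi^2/2$ gap. Everything else is bookkeeping: $g$ has unit norm, $g^\top \mathcal{L} g$ is bounded by $\lambda_k \le 2\epsilon$, and $h, z$ are orthogonal so that the Dirichlet energies split cleanly along intra-cluster edges.
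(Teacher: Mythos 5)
Your proof is correct. Since the paper merely cites this as Lemma~6 from~\cite{GKLMS21} without reproducing the proof, I can only evaluate your argument on its own terms: it is the standard ``pinching'' argument from the spectral clustering literature (and almost certainly the one in~\cite{GKLMS21}), and every step holds. Briefly, you set $g = U_{[k]}\alpha$, note $\|g\|_2 = 1$ and $g^\top \mathcal{L} g \le \lambda_k \le 2\epsilon$ (using that $\lambda_1,\dots,\lambda_k \ge 0$, so the convex combination $\sum_i \lambda_i \alpha_i^2$ is at most $\lambda_k$), decompose $g = h + z$ with $h$ the per-cluster averages, observe that intra-cluster edge differences of $z$ and $g$ coincide, and then trap the intra-cluster Dirichlet energy of $z$ between $g^\top\mathcal{L} g \le 2\epsilon$ above and $\sum_i \lambda_2(\mathcal{L}_{G\{C_i\}})\|z|_{C_i}\|_2^2 \ge \tfrac{\phi^2}{2}\|z\|_2^2$ below, the latter via Cheeger applied to each $\phi$-expander $G\{C_i\}$. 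The one subtlety you flagged --- that $\1_{C_i}$ is the zero-eigenvector of $\mathcal{L}_{G\{C_i\}}$ and that the self-loops drop out of the quadratic form $z|_{C_i}^\top \mathcal{L}_{G\{C_i\}} z|_{C_i} = \tfrac{1}{d}\sum_{(u,v)\in E(C_i)}(z(u)-z(v))^2$ --- does hold exactly because $G\{C_i\}$ is made $d$-regular by the self-loops, so the quadratic form over an edge set with self-loops reduces to the sum over genuine (non-loop) edges. No gaps.
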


\begin{lemma}\label{claim:sum_of_distances}
If $G$ admits a $(k,\epsilon,\phi)$-clustering $C_1,\dots,C_k$, then 
$$\sum_{i =1}^k \sum_{v \in C_i} \|f_v - \mu_i\|^2_2 \leq \frac{4\epsilon k}{\phi^2} \, .$$
\end{lemma}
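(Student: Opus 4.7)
The plan is to reduce the claim to the variance bound (Lemma~\ref{lemma:variancebound}) by expanding the squared $\ell_2$-norm in an orthonormal basis of $\R^k$. Specifically, I would fix the standard basis $e_1, \dots, e_k$ of $\R^k$ and write, for each $i \in [k]$ and $v \in C_i$,
\[
\|f_v - \mu_i\|_2^2 \;=\; \sum_{j=1}^k \langle f_v - \mu_i, e_j\rangle^2.
\]

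Next, I would swap the order of summation and apply the variance bound once per basis vector. Concretely, summing over $i \in [k]$ and $v \in C_i$ and then interchanging with the sum over $j$ gives
\[
\sum_{i=1}^k \sum_{v \in C_i} \|f_v - \mu_i\|_2^2 \;=\; \sum_{j=1}^k \left( \sum_{i=1}^k \sum_{v \in C_i} \langle f_v - \mu_i, e_j\rangle^2 \right).
\]
Since each $e_j$ is a unit vector in $\R^k$, Lemma~\ref{lemma:variancebound} bounds the inner double sum by $4\epsilon/\phi^2$. Summing this bound over the $k$ basis directions yields the desired bound $4\epsilon k/\phi^2$.

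There is no real obstacle here: the claim is a direct averaging of the directional variance bound over any orthonormal basis of $\R^k$. The only minor point worth checking is that Lemma~\ref{lemma:variancebound} is stated for unit vectors $\alpha$, which the standard basis vectors satisfy, so applying it $k$ times is legitimate. The result therefore follows immediately from Lemma~\ref{lemma:variancebound} and Parseval's identity in $\R^k$.
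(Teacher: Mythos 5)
Your proof is correct and matches the paper's argument essentially verbatim: both expand $\|f_v-\mu_i\|_2^2$ over an orthonormal basis of $\R^k$ (the paper uses an arbitrary orthonormal basis $\alpha_1,\dots,\alpha_k$, you the standard one), swap the order of summation, and invoke Lemma~\ref{lemma:variancebound} once per basis direction to accumulate the factor of $k$.
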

\begin{proof}
    Let  $\alpha_1, \ldots, \alpha_k \in \mathbb{R}^k$ be an orthonormal basis of $\mathbb{R}^k$. Applying Lemma \ref{lemma:variancebound} to $\alpha_1, \dots ,\alpha_k$ and summing, we obtain 
\begin{equation*}\label{eq:Bdelta1}
\sum_{i = 1}^k\sum_{v \in C_i}\|f_v - \mu_i\|^2_2 = \sum_{j=1}^k \sum_{i=1}^k \sum_{v \in C_i}\left \langle f_v - \mu_i, \alpha_j\right \rangle^2 \leq \frac{4\epsilon k}{\phi^2}.
\end{equation*}
\end{proof}
\begin{lemma}[Lemma 9 from \cite{GKLMS21}]\label{lemma:gklmsL9}
     If $G$ admits a $(k,\epsilon,\phi)$-clustering $C_1,\dots,C_k$, then for all $\alpha \in \mathbb{R}^k$ with $\|\alpha\|_2=1$ we have
     $$\left|\alpha^\top \left(I_k -  \sum_{i =1}^k |C_i| \mu_i \mu_i ^\top \right)\alpha \right| \leq \frac{4\sqrt{\epsilon}}{\phi}\|\alpha\|^2_2 \, .$$
\end{lemma}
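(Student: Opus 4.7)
The plan is to rewrite $I_k - \sum_i |C_i|\mu_i\mu_i^\top$ as a positive semidefinite matrix that encodes the "variance" of the spectral embedding around the cluster means, and then invoke \Cref{lemma:variancebound} directly.

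First, I would use orthonormality of the first $k$ eigenvectors. Since $U_{[k]} \in \mathbb{R}^{n\times k}$ has orthonormal columns, $I_k = U_{[k]}^\top U_{[k]} = \sum_{v \in V}(U_{[k]}^\top \1_v)(U_{[k]}^\top \1_v)^\top = \sum_{v \in V} f_v f_v^\top$ by the definition of the spectral embedding. Partitioning the sum over $V$ according to the clusters $C_1,\dots,C_k$, I get
\[I_k = \sum_{i=1}^k \sum_{v \in C_i} f_v f_v^\top \, .\]

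Next, I would perform a standard bias--variance decomposition. Writing $f_v = (f_v - \mu_i) + \mu_i$ and expanding $f_v f_v^\top$, the cross terms inside each cluster telescope because $\sum_{v \in C_i}(f_v - \mu_i) = 0$ by definition of $\mu_i$, leaving
\[\sum_{v \in C_i} f_v f_v^\top = \sum_{v \in C_i} (f_v - \mu_i)(f_v - \mu_i)^\top + |C_i|\,\mu_i\mu_i^\top \, .\]
Summing over $i$ and rearranging yields the matrix identity
\[I_k - \sum_{i=1}^k |C_i|\,\mu_i\mu_i^\top = \sum_{i=1}^k \sum_{v \in C_i} (f_v - \mu_i)(f_v - \mu_i)^\top \, ,\]
which is manifestly PSD.

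Finally, I would take the quadratic form in $\alpha$ on both sides to obtain
\[\alpha^\top\!\left(I_k - \sum_{i=1}^k |C_i|\,\mu_i\mu_i^\top\right)\!\alpha = \sum_{i=1}^k \sum_{v \in C_i} \langle f_v - \mu_i,\alpha\rangle^2 \, ,\]
and apply the variance bound (\Cref{lemma:variancebound}) to $\alpha$ (which has $\|\alpha\|_2 = 1$), giving an upper bound of $4\epsilon/\phi^2$. Since the left-hand side is nonnegative, the absolute value equals the value itself, so we get $|\alpha^\top(I_k - \sum_i |C_i|\mu_i\mu_i^\top)\alpha| \le 4\epsilon/\phi^2$. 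To match the stated bound $\tfrac{4\sqrt{\epsilon}}{\phi}\|\alpha\|_2^2$, I would note that when $\sqrt{\epsilon}/\phi \le 1$ one has $4\epsilon/\phi^2 \le 4\sqrt{\epsilon}/\phi$, and when $\sqrt{\epsilon}/\phi > 1$ the bound is trivial (since $0 \le \alpha^\top(I_k - \sum_i |C_i|\mu_i\mu_i^\top)\alpha \le \alpha^\top I_k \alpha = 1 \le 4\sqrt{\epsilon}/\phi$, using that the subtracted matrix is PSD).

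There is no real obstacle here: the argument is essentially an orthonormality calculation followed by a bias--variance decomposition and an appeal to \Cref{lemma:variancebound}. The only subtlety to handle carefully is noting that the matrix on the right-hand side is PSD (which lets us drop the absolute value and then compare to the stated weaker bound in the two regimes of $\sqrt{\epsilon}/\phi$).
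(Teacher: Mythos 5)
Your proof is correct. The paper itself does not prove this lemma---it simply cites it as Lemma~9 of~\cite{GKLMS21}---so there is no in-paper argument to compare against, but your derivation is sound and self-contained: the orthonormality identity $I_k = U_{[k]}^\top U_{[k]} = \sum_{v\in V} f_v f_v^\top$, the within-cluster bias--variance split (valid since $\sum_{v\in C_i}(f_v-\mu_i)=0$ by definition of $\mu_i$), and the application of \Cref{lemma:variancebound} are all exactly as they should be. Note that your argument actually yields the \emph{stronger} bound $\alpha^\top\bigl(I_k - \sum_i |C_i|\mu_i\mu_i^\top\bigr)\alpha \le 4\epsilon/\phi^2$, together with nonnegativity of the left-hand side; the case split you perform to recover the stated $4\sqrt{\epsilon}/\phi$ is correct but somewhat cosmetic, since in the paper's operative regime ($\epsilon \le \phi^2/64$, cf.\ \Cref{rem:mu_i_norm}) one always has $4\epsilon/\phi^2 \le 4\sqrt{\epsilon}/\phi$.
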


\section{Model setting and basic concepts}
\label{sec:setting}

Throughout this paper, we fix an $n$-vertex $d$-regular graph $G=(V,E)$ that admits a $(k,\epsilon,\phi)$-clustering $C_1,\dots,C_k$ as per our input model \imlabel{}. A complete list of the objects and parameters that we use and assumptions we make on them is presented in \Cref{fig:setting} and \Cref{fig:labels}. In the remainder of this section we introduce these objects and prove some of their basic properties.

\begin{table}[!h]
 \fbox{\begin{minipage}{\textwidth}
\begin{multicols}{2}
\begin{flushleft}
	{\small
\begin{itemize}
	\item $G=(V,E)$ --- regular graph
    \item $d \ge 3$ --- degree of $G$
    \item $n =|V|$ --- number of vertices in $G$
    \item $\eta = O(1)$ --- upper bound on $\max_{i,j \in [k]} \frac{|C_i|}{|C_j|}$
    \item $\epsilon,\phi \in (0, 1)$ --- conductance parameters
    \item $k \ge 2$  --- number of communities
    \item $\{C_i\}_{i \in [k]}  $ --- $(k,\epsilon,\phi)$-clustering of $G$  (Def. \ref{def:clustering})
    \item $\mathcal{L}$ --- normalized Laplacian of $G$
    \item $(f_u)_{u \in V}$ --- spectral embedding of $V$  (Def. \ref{def:emb})
    \item $(\mu_i)_{i \in [k]}$ --- cluster means (Def. \ref{def:clusmeans})
    \item $(\tmu_i)_{i \in [k]}$ --- approximate cluster means (Def. \ref{def:approxmeans})
    \item $\delta \in (0,1)$ --- perturbation parameter 
    \item ${\xi} \in (0,1)$ -- approximate inner products quality
    \item $\langle \cdot, \cdot\rangle_{\apx}$ -- approximate inner products (Def. \ref{def:apx})
    \item $\| \cdot \|_{\apx}$ --- approximate distance (Rmk. \ref{def:dist_apx})
    \item $L = \lceil \frac{150}{\phi^{2}} \log(1/\delta) \rceil$ --- random walk length (Sec. \ref{sec:sub_alg})
    \item $\epsilon/\phi^6 \le 1/(10^{5}{\eta^4})$, $\phi^2\eta < 1/10^3$
    \item $k\log k \le {\phi^6 }/({10^9\eta^4 \epsilon })$
    \item $\delta d \le 1/100$
    \item ${\xi}/{n} \le {\phi^2}/{(20^4\eta)} \cdot \min_i\{\|\mu_i\|^2_2\}$
\end{itemize}}
\end{flushleft}
\end{multicols}	
 \end{minipage}}	
\caption{Parameters and objects used in this section (\Cref{sec:setting}) and \Cref{sec:poly,sec:sub_alg}.}
\label{fig:setting}
\end{table}

\begin{table}[!h]
 \fbox{\begin{minipage}{\textwidth}
\begin{flushleft}
	{\small
\begin{itemize}
    \item $\iota: V\rightarrow [k]$ --- target labeling: $u \in C_{\iota(u)}$
    \item $\sigma: V\rightarrow [k]$ --- perturbed labeling as per our input model \imlabel{}
    \item $\tau: V\rightarrow [k] \cup \{*\}$ --- spectral labeling: $u \in \M$ if $\tau(u)=*$, $u \in \spec(\tau(u))$ otherwise  (Def. \ref{def:spec}, Lem. \ref{lemma:disjoint_balls})
\end{itemize}}
\end{flushleft}
 \end{minipage}}	
\caption{Different labelings of the vertices available to us.}
\label{fig:labels}
\end{table}

\noindent
Our algorithm will work with the spectral embeddings of the vertices of $G$. More specifically, the algorithms which we build in this paper need to compute inner products between the spectral embeddings of the vertices of $G$. It suffices for our purposes to compute the inner products approximately. In \Cref{def:apx} we define the requirement which a function should satisfy in order to be a useful approximation.

\begin{definition}[Approximate spectral inner product function]\label{def:apx} Let $\xi$ be as per \Cref{fig:setting}. We say that a function $\langle \cdot, \cdot\rangle_{\apx}: V \times V \to \R$ is an \emph{approximate inner product function} if for any $u, v \in V$ we have 
\[\left|\langle f_u, f_v\rangle_{\apx} - \langle f_u, f_v\rangle \right| \leq \frac{\xi}{n}.\]
We sometimes concisely refer to the inner product and approximate inner product of spectral embeddings as \emph{spectral inner product} and \emph{approximate spectral inner product} respectively.
\end{definition}
\begin{remark}[Notation $\langle f_u, f_v \rangle_{\apx}$] Technically, the function $\langle \cdot, \cdot \rangle_{\apx}$ takes as input two vertices $u$ and $v$ and not their spectral embeddings. We will slightly abuse notation and write $\langle f_u, f_v \rangle_{\apx}$ instead of $\langle u, v\rangle_{\apx}$ to highlight that $\langle f_u, f_v \rangle_{\apx}$ is an approximation of a spectral object $\langle f_u, f_v \rangle$.
    
\end{remark}

\begin{remark}[Approximate spectral distance function]\label{def:dist_apx}
    We can use the the approximate inner product function as per \Cref{def:apx} to approximately calculate the $\ell_2$-norms of spectral embeddings $f_u$ as well as the distances $f_u - f_v$ as follows:
    \[\|f_u\|^2_{\apx} \coloneqq \langle f_u, f_u\rangle_{\apx}\, ,\]
    \[\|f_u - f_v\|^2_{\apx} \coloneqq \|f_u\|^2_{\apx} + \|f_v\|^2_{\apx} - 2\langle f_u, f_v \rangle_{\apx}\, .\] 
By \Cref{def:apx}, one has
\[\left|\|f_u\|^2_{\apx} - \|f_u\|^2_2\right| = \left|\langle f_u, f_u\rangle_{\apx} - \langle f_u, f_u\rangle\right| \leq \frac{\xi}{n}\,;\]
\[\left|\|f_u\|_{\apx} - \|f_u\|_2\right| \leq \sqrt{\left|\|f_u\|^2_{\apx} - \|f_u\|^2_2\right|}  \leq \sqrt{\frac{\xi}{n}}\,;\]
\[\left|\|f_u - f_v\|^2_{\apx} - \|f_u - f_v\|^2_{2}\right| \leq \left|\|f_u\|^2_{\apx} - \|f_u\|^2_{2}\right| + \left|\|f_v\|^2_{\apx} - \|f_v\|^2_2\right| + 2\left|\langle f_u, f_v\rangle_{\apx} - \langle f_u, f_v\rangle \right| \leq \frac{4\xi}{n}\,;\]
\[\left|\|f_u - f_v\|_{\apx} - \|f_u - f_v\|_{2}\right| \leq \sqrt{\left|\|f_u - f_v\|^2_{\apx} - \|f_u - f_v\|^2_{2}\right|} \leq 2\sqrt{\frac{\xi}{n}}\, .\]
\end{remark}

\noindent
We will not fix a particular approximate inner product function in the definitions and results~-- we state them in terms of $\langle \cdot, \cdot \rangle_{\apx}$. We specify the choice of a function whenever we state an algorithm which computes approximate spectral inner products as a subroutine. We describe the set of approximate spectral inner product functions which we use in \Cref{subsec:apx_spec_oracles}.

Apart from spectral  inner products, our algorithm will also require access to the cluster means, defined in \Cref{def:clusmeans}. Note that, as our algorithm never explicitly computes spectral embeddings, finding the cluster means as exact vectors is both not feasible nor useful. Just as with spectral inner products, we can opt for approximations to the cluster means. In \Cref{def:approxmeans} we formalize which vectors are both sufficiently close to the cluster means and are accessible to our algorithms.

\begin{definition}[Approximate cluster means]
    \label{def:approxmeans}
    Let $\phi$ and $\mu_1,\dots,\mu_k$ as per \Cref{fig:setting}, $\|\cdot\|_{\apx}$ as per \Cref{def:dist_apx}. We say that the vectors $\tmu_1,\dots\tmu_k \in \R^k$ are \emph{approximate cluster means} if \begin{itemize}
        \item $\|\mu_i-\tmu_i\|^2_2 \le \frac{\phi^2}{1600\eta} \|\mu_i\|^2_2$ for all $i \in [k]$;
        \item  there exists a set of vertices $\{u_i\}_{i \in [k]} \subseteq V$ such that $\tmu_i = f_{u_i}$ for all $i \in [k]$.
    \end{itemize}
\end{definition}

\noindent
Spectral embeddings together with the approximate cluster means give rise to natural vertex classes: the vertices whose embedding is close to an approximate cluster mean (called spectral clusters), and the vertices whose embedding is far from every approximate cluster mean (called cross vertices).

\begin{definition}[Spectral clusters and cross vertices]\label{def:spec}
   Consider the setting of \Cref{fig:setting}. For $i \in [k]$, we define the $i$-th spectral cluster, denoted $\spec(i)$, as
   $$ \spec(i) \coloneqq \left \{v \in V \colon  \|f_v - \tmu_i\|^2_{\apx} < \frac{\phi^2}{400 \eta}\|\tmu_i\|_{\apx}^2 \right \} \, .$$
    We also define the set of cross vertices, denoted $\M$, as
   $$ \M \coloneqq V  \setminus \left(\bigcup_{i \in [k]}\spec(i)\right)  \, .$$ 
\end{definition}

\noindent
If follows from Definitions \ref{def:approxmeans} and \ref{def:spec} that vertices in a spectral cluster $\spec(i)$ are close to the true cluster mean $\mu_i$. 

\begin{lemma}\label{claim:spec_exact_mu}
Consider the setting of \Cref{fig:setting}. For all  $i \in [k]$ and all $u \in \spec(i)$, one has 
$$ \|f_u - \mu_i\|^2_{2} \le \frac{\phi^2}{100\eta}\|\mu_i\|^2_{2} \, .$$
\end{lemma}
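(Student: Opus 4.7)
The statement is essentially obtained by chaining two approximations: the approximation $\tmu_i \approx \mu_i$ guaranteed by \Cref{def:approxmeans}, and the approximation $\|\cdot\|_{\apx} \approx \|\cdot\|_2$ guaranteed by \Cref{def:apx} (via \Cref{def:dist_apx}). The plan is therefore to first translate the hypothesis $u \in \spec(i)$ from approximate norms to exact norms, then apply the triangle inequality to pass from $\tmu_i$ to $\mu_i$, and finally use the parameter bounds in \Cref{fig:setting} to absorb the additive and multiplicative errors.

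\textbf{Step 1 (translating the hypothesis).} Since $\tmu_i = f_{u_i}$ for some $u_i \in V$ by \Cref{def:approxmeans}, both $\|\tmu_i\|_{\apx}$ and $\|f_u - \tmu_i\|_{\apx}$ are well-defined quantities governed by \Cref{def:dist_apx}. Using the bounds
\[\left|\|f_u - \tmu_i\|_{\apx}^2 - \|f_u - \tmu_i\|_2^2\right| \le 4\xi/n \quad\text{and}\quad \left|\|\tmu_i\|_{\apx}^2 - \|\tmu_i\|_2^2\right|\le \xi/n \, ,\]
the hypothesis $\|f_u - \tmu_i\|_{\apx}^2 < \tfrac{\phi^2}{400\eta}\|\tmu_i\|_{\apx}^2$ gives
\[\|f_u - \tmu_i\|_2^2 \le \tfrac{\phi^2}{400\eta}\|\tmu_i\|_2^2 + \tfrac{\phi^2}{400\eta}\cdot\tfrac{\xi}{n} + \tfrac{4\xi}{n} \, .\]

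\textbf{Step 2 (passing from $\tmu_i$ to $\mu_i$).} By \Cref{def:approxmeans}, $\|\mu_i - \tmu_i\|_2^2 \le \tfrac{\phi^2}{1600\eta}\|\mu_i\|_2^2$, which in particular yields $\|\tmu_i\|_2^2 \le \bigl(1+\tfrac{\phi}{\sqrt{1600\eta}}\bigr)^2\|\mu_i\|_2^2 \le 2\|\mu_i\|_2^2$ since $\phi^2\eta < 10^{-3}$. Combining with the triangle inequality in the form $(a+b)^2 \le 2a^2 + 2b^2$, we obtain
\[\|f_u - \mu_i\|_2^2 \le 2\|f_u - \tmu_i\|_2^2 + 2\|\tmu_i - \mu_i\|_2^2 \le 2\left(\tfrac{\phi^2}{400\eta}\cdot 2\|\mu_i\|_2^2 + \tfrac{\phi^2}{400\eta}\cdot\tfrac{\xi}{n} + \tfrac{4\xi}{n}\right) + \tfrac{\phi^2}{800\eta}\|\mu_i\|_2^2 \, .\]

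\textbf{Step 3 (absorbing error terms).} The parameter setting in \Cref{fig:setting} gives $\tfrac{\xi}{n} \le \tfrac{\phi^2}{20^4\eta}\|\mu_i\|_2^2$, which is more than enough to ensure that the two additive error terms above are each bounded by, say, $\tfrac{\phi^2}{1000\eta}\|\mu_i\|_2^2$. Adding the $\|\mu_i\|_2^2$-coefficients, the leading contributions sum to at most $\tfrac{\phi^2}{100\eta}\|\mu_i\|_2^2$, giving the claim. There is no real obstacle here; the only care needed is in tracking constants so that the final coefficient fits under $\tfrac{1}{100\eta}$, and the slack in the hypothesis ($\tfrac{1}{400\eta}$) together with the tight bound on $\xi/n$ makes this routine.
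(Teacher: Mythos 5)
Your overall plan is the right one, and it differs from the paper's only in working with squared norms throughout rather than unsquared norms; but as written the constants do not close, because the two factor-of-two losses you introduce saturate the budget before the remaining terms are added.

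Concretely, in Step 2 you invoke $(a+b)^2 \le 2a^2 + 2b^2$ (losing a factor $2$) and also relax the perfectly good bound $\|\tmu_i\|_2^2 \le \left(1+\tfrac{\phi}{40\sqrt\eta}\right)^2\|\mu_i\|_2^2$ down to $\|\tmu_i\|_2^2 \le 2\|\mu_i\|_2^2$ (losing another factor close to $2$). The leading coefficient of $\|\mu_i\|_2^2$ is then
\[
2 \cdot \frac{\phi^2}{400\eta} \cdot 2 \;=\; \frac{\phi^2}{100\eta} \,,
\]
which is already the entire target, and the term $2\|\tmu_i-\mu_i\|_2^2 \le \tfrac{\phi^2}{800\eta}\|\mu_i\|_2^2$ plus the $\xi/n$ error terms then push the sum strictly above $\tfrac{\phi^2}{100\eta}\|\mu_i\|_2^2$. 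So the claim in Step 3 that "the leading contributions sum to at most $\tfrac{\phi^2}{100\eta}\|\mu_i\|_2^2$" is false under your chosen constants. The fix is easy: keep the tight bound $\|\tmu_i\|_2^2 \le \left(1+\tfrac{1}{40}\right)^2\|\mu_i\|_2^2 \le 1.06\,\|\mu_i\|_2^2$ (valid since $\phi/\sqrt\eta\le 1$), which brings the leading coefficient down to about $2.62 \cdot \tfrac{\phi^2}{400\eta} < \tfrac{\phi^2}{100\eta}$ with comfortable slack for the $\xi/n$ errors. The paper sidesteps the issue entirely by chaining the plain (unsquared) triangle inequality $\|f_u-\mu_i\|_2 \le \|f_u-\tmu_i\|_2 + \|\tmu_i-\mu_i\|_2$, which is lossless, and only squaring at the very end; if you want to stay with squared norms you must not spend two factors of $2$ against a budget of exactly $4$.
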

\begin{proof}
We have
    \begin{align*}
      \|f_u - \mu_i\|_{2}   &\leq \|f_u - \tmu_i\|_{2} + \|\tmu_i - \mu_i\|_2 \\
      & \leq \|f_u - \tmu_i\|_{\apx} + \frac{\phi}{40\sqrt{\eta}}\|\mu_i\|_2 + 2\sqrt{\frac{\xi}{n}} && \text{by \Cref{def:dist_apx} and \Cref{def:approxmeans}}\\
      & \leq \frac{\phi}{20\sqrt{\eta}}\|\tmu_i\|_{\apx} + \frac{\phi}{40\sqrt{\eta}}\|\mu_i\|_2 + 2\sqrt{\frac{\xi}{n}} && \text{by \Cref{def:spec}}\\
 & \leq \frac{\phi}{20\sqrt \eta}\|\tmu_i\|_2 + \frac{\phi}{40\sqrt{\eta}}\|\mu_i\|_2 +  2\sqrt{\frac{\xi}{n}} + \frac{\phi}{20\sqrt{\eta}}\cdot\sqrt{\frac{\xi}{n}} &&  \text{by \Cref{def:dist_apx} } \\
 & \leq  \frac{\phi}{20\sqrt \eta}(\|\mu_i\|_2 + \|\tmu_i - \mu_i\|_2) +  \left(\frac{\phi}{40\sqrt{\eta}} +\frac{\phi}{100\sqrt{\eta}}\right)\|\mu_i\|_2&& \text{since $\sqrt{\frac{\xi}{n}} \leq \frac{\phi}{400\sqrt{\eta}}\|\mu_i\|_2$, $\frac{\phi}{\sqrt{\eta}} \leq 1$}  \\
 & \leq \frac{\phi}{20\sqrt{\eta}}\left(1+\frac{\phi}{40\sqrt{\eta}}\right)\|\mu_i\|_2 + \left(\frac{\phi}{40\sqrt{\eta}} +\frac{\phi}{100\sqrt{\eta}}\right)\|\mu_i\|_2 &&  \text{by Definition \ref{def:approxmeans}}\\
 & \leq \frac{69}{800} \frac{\phi}{\sqrt{\eta}}\|\mu_i\|_2, && \text{using the fact that $\frac{\phi}{\sqrt{\eta}} \leq 1$. } \\
 & <\frac{\phi}{10\sqrt \eta}\|\mu_i\|_2.&& 
 \end{align*}
\end{proof}

\noindent
It is clear from \Cref{def:spec} that the cross vertices are disjoint from any spectral cluster, but one can in fact show that any two spectral clusters are also disjoint. This means that the spectral clusters and the cross vertices form a partitioning of the vertices. In particular, the spectral labeling $\tau$ as per \Cref{fig:labels} is well defined.

\begin{lemma}[Spectral clusters are disjoint]\label{lemma:disjoint_balls}
Consider the setting of \Cref{fig:setting}. For all  $i , j \in [k]$ with $i \neq j$, one has
    $$ \spec(i) \cap \spec(j) = \emptyset.$$
\end{lemma}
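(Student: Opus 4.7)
The plan is to argue by contradiction: suppose there exists $v \in \spec(i) \cap \spec(j)$ for some $i \neq j$, and derive a contradiction by showing that $\mu_i$ and $\mu_j$ would then have to be much closer to each other than allowed by the near-orthogonality guaranteed by \Cref{lemma:clustermeans}.

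The first step is to pass from the approximate-inner-product world to the exact one. Since $v \in \spec(i) \cap \spec(j)$, \Cref{claim:spec_exact_mu} directly gives
\[
\|f_v - \mu_i\|_2 \le \tfrac{\phi}{10\sqrt{\eta}}\|\mu_i\|_2
\qquad\text{and}\qquad
\|f_v - \mu_j\|_2 \le \tfrac{\phi}{10\sqrt{\eta}}\|\mu_j\|_2.
\]
A triangle inequality then yields
\[
\|\mu_i - \mu_j\|_2 \le \tfrac{\phi}{10\sqrt{\eta}}\bigl(\|\mu_i\|_2 + \|\mu_j\|_2\bigr),
\]
and using the upper bound $\|\mu_i\|_2^2, \|\mu_j\|_2^2 \le 2\eta \cdot k/n$ from \Cref{rem:mu_i_norm} this upgrades to an upper bound on $\|\mu_i - \mu_j\|_2^2$ of order $\phi^2 k/n$ (with a small absolute constant).

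The second step is to lower bound $\|\mu_i - \mu_j\|_2^2$ using that $\mu_i$ and $\mu_j$ are essentially orthogonal with substantial norms. Expanding the square and applying \Cref{rem:mu_i_norm} and \Cref{lemma:clustermeans} gives
\[
\|\mu_i - \mu_j\|_2^2 = \|\mu_i\|_2^2 + \|\mu_j\|_2^2 - 2\langle \mu_i, \mu_j\rangle
\ge \frac{k}{\eta n} - \frac{16\sqrt{\epsilon}\,\eta k}{\phi n}
= \frac{k}{n}\left(\frac{1}{\eta} - \frac{16\sqrt{\epsilon}\,\eta}{\phi}\right).
\]
The assumption $\epsilon/\phi^6 \le 1/(10^5 \eta^4)$ from \Cref{fig:setting} makes $\sqrt{\epsilon}\,\eta/\phi$ negligible compared to $1/\eta$, so the right-hand side is at least, say, $k/(2\eta n)$.

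The main (but routine) obstacle is then to verify that the upper bound from step one and the lower bound from step two are actually incompatible under the parameter constraints in \Cref{fig:setting}. Concretely, we need
\[
\tfrac{\phi^2}{C}\cdot \tfrac{k}{n} < \tfrac{1}{2\eta}\cdot \tfrac{k}{n}
\qquad\text{i.e.}\qquad
\phi^2\eta < \tfrac{C}{2},
\]
for the appropriate absolute constant $C$, which is guaranteed by the assumption $\phi^2\eta < 1/10^3$. This contradiction establishes that no $v$ can lie simultaneously in $\spec(i)$ and $\spec(j)$, proving the claim. The only nontrivial arithmetic is bookkeeping the constants from \Cref{claim:spec_exact_mu}, \Cref{rem:mu_i_norm}, and \Cref{lemma:clustermeans} together with the parameter assumptions; no new spectral input is needed.
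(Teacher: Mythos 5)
Your proof is correct and takes essentially the same approach as the paper: both derive an upper and a lower bound on the distance between the two cluster means and show these are incompatible under the parameter constraints. The difference is one of presentation rather than substance. The paper works directly with the approximate quantities $\|\tmu_i - \tmu_j\|_{\apx}$ and handles the approximation errors (the $\sqrt{\xi/n}$ terms and the $\|\tmu_i - \mu_i\|_2$ terms from \Cref{def:approxmeans}) inline throughout both chains of inequalities. You instead observe that \Cref{claim:spec_exact_mu} has already absorbed all of that bookkeeping, so $v \in \spec(i) \cap \spec(j)$ immediately yields exact-world bounds $\|f_v - \mu_i\|_2 \le \frac{\phi}{10\sqrt\eta}\|\mu_i\|_2$ and $\|f_v - \mu_j\|_2 \le \frac{\phi}{10\sqrt\eta}\|\mu_j\|_2$; from there everything lives in the exact $\ell_2$ world, and squaring avoids the square root on the lower-bound side. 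This is a genuinely cleaner and more modular derivation of the same fact. The arithmetic checks out: the upper bound is at most $\frac{8\phi^2}{100}\cdot\frac{k}{n}$ while the lower bound is at least $\frac{1}{2\eta}\cdot\frac{k}{n}$ (using $\epsilon \le \phi^6/(10^5\eta^4)$ to suppress the cross term), and these conflict whenever $\phi^2\eta < 25/4$, which holds comfortably under the paper's assumption $\phi^2\eta < 10^{-3}$.
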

\begin{proof}
    Suppose that there exists $u \in \spec(i) \cap \spec(j)$ for some  $i , j \in [k]$ such that $i \neq j$. Then, by triangle inequality, we get
    \begin{align*}
        \|\tmu_i - \tmu_j\|_{\apx}  & \leq  \|\tmu_i - \tmu_j\|_{2} + 2\sqrt{\frac{\xi}{n}} && \text{by \Cref{def:dist_apx}}\\
        & \leq \|f_u - \tmu_i\|_2  + \|f_u - \tmu_j\|_2 + 2\sqrt{\frac{\xi}{n}} && \\
        & \leq \frac{\phi}{20\sqrt{\eta}}(\|\tmu_i\|_2 + \|\tmu_j\|_2) + 8\sqrt{\frac{\xi}{n}}&& \text{by Def. \ref{def:spec} and \Cref{def:dist_apx}}\\
        & \leq\frac{\phi}{20\sqrt{\eta}} \left( \| \tmu_i - \mu_i\|_2 + \|\mu_i\|_2 +\| \tmu_j - \mu_j\|_2 + \|\mu_j\|_2  \right) + 8\sqrt{\frac{\xi}{n}} \\
        & \leq \frac{\phi}{10}(\|\mu_i\|_2 + \|\mu_j\|_2) + 8\sqrt{\frac{\xi}{n}}&& \text{by Definition \ref{def:approxmeans}} \, .
    \end{align*}
    On the other hand, we have 
    \begin{align*}
         \|\tmu_i - \tmu_j\|_{\apx} & \geq  \|\tmu_i - \tmu_j\|_{2} - 2\sqrt{\frac{\xi}{n}} &&  \text{by \Cref{def:dist_apx}}\\
         & \geq \| \mu_i - \mu_j\|_2 - \| \tmu_i - \mu_i\|_2 - \| \tmu_j - \mu_j\|_2 - 2\sqrt{\frac{\xi}{n}} &&  \\ 
         & \geq \sqrt{\|\mu_i\|^2_2 + \|\mu_j\|^2_2 - 2 \langle \mu_i,\mu_j\rangle }  - \frac{\phi}{40 \sqrt{\eta}}(\|\mu_i\|_2 + \|\mu_j\|_2)- 2\sqrt{\frac{\xi}{n}}  && \text{by Definition \ref{def:approxmeans}}\\
         & \geq  \sqrt{ \|\mu_i\|^2_2 + \|\mu_j\|^2_2  - \frac{32\sqrt{\epsilon}}{\phi} \|\mu_i\|_2 \|\mu_j\|_2} - \frac{\phi}{10}(\|\mu_i\|_2 + \|\mu_j\|_2) - 2\sqrt{\frac{\xi}{n}} &&  \text{by Lemma \ref{lemma:clustermeans}} \\ 
         & \geq \frac{1}{2}(\|\mu_i\|_2 + \|\mu_j\|_2) -  \frac{\phi}{10}(\|\mu_i\|_2 + \|\mu_j\|_2) - 2\sqrt{\frac{\xi}{n}}  && \text{using $\frac{\epsilon}{\phi^2} \eta^2 \leq 10^{-5}$}\\
         & >  \phi\cdot ( \|\mu_i\|_2 + \|\mu_j\|_2) - 2\sqrt{\frac{\xi}{n}} &&  \text{using $\phi \leq 0.1$,}
         \end{align*}
where $\frac{\epsilon}{\phi^2} \eta^2 \leq 10^{-5}$ and $\phi \leq 0.1$ are assumptions stated in \Cref{fig:setting}. It remains to note that
$$\phi\cdot ( \|\mu_i\|_2 + \|\mu_j\|_2) - 2\sqrt{\frac{\xi}{n}} > \frac{\phi}{10}(\|\mu_i\|_2 + \|\mu_j\|_2) + 8\sqrt{\frac{\xi}{n}} \, , $$
because ${\phi} \leq 1/\sqrt{\eta}$ and $\sqrt{\frac{\xi}{n}} \leq \frac{\phi}{100\sqrt{\eta}}\min\{\|\mu_i\|_2, \|\mu_j\|_2\}$, which leads to a contradiction.  
\end{proof}

\noindent
While we would like the spectral clusters to reflect the target communities $C_1,\dots,C_k$, not all vertices in a spectral cluster are in the corresponding community. It is then useful to sometimes restrict our attention to the vertices in a spectral cluster that are indeed in the corresponding community.

\begin{definition}[Spectral cores]\label{def:speccore}
    Consider the setting of \Cref{fig:setting}. For $i \in [k]$, we define the $i$-th spectral core, denoted $\speccore(i)$, as
    $$\speccore(i) \coloneqq \spec(i) \cap C_i \, .$$
\end{definition}

\noindent
We will also want to restrict our attention to vertices that spectrally look like they are from the $i$-th cluster but in fact belong to $C_j$. We call these vertices \emph{impostors}, because they can easily fool a spectral algorithm. This is not the case for the cross vertices, since they do not pretend to be spectrally close to the wrong cluster, but rather their embedding is uninformative.

\begin{definition}[Spectral impostors]\label{def:impostor}
   Consider the setting of \Cref{fig:setting}. For  $i , j \in [k]$ with $i \neq j$, we define the $(j \to i)$-spectral impostors, denoted $\im(i,j)$, as
  $$\im(i,j) \coloneqq \spec(i) \cap C_j \, .$$
  We define the set of all spectral impostors, denoted $\im$, as
  $$ \im \coloneqq \bigcup_{\substack{i,j \in[k]: \\ i\neq j}} \im(i,j) \, .$$
We will often refer to spectral impostors simply as \emph{impostors}.
\end{definition}
\begin{remark}[Spectral impostors are disjoint]\label{rem:disjoint_im} For all $i_1, i_2, j_1, j_2 \in [k]$ with $(i_1, j_1)\neq (i_2, j_2)$, one has
\[\im(i_1, j_1) \cap \im(i_2, j_2) = \emptyset\]
since all of the spectral clusters are disjoint by \Cref{lemma:disjoint_balls} and since all of the clusters $C_1, \ldots, C_k$ are disjoint.
\end{remark}

\noindent
The perturbed labeling, which the algorithm is given as input, also gives rise to natural vertex classes: we can partition the vertices according to their label $\sigma(u)$, and call these classes \emph{label clusters}.

\begin{definition}[Label clusters]
\label{def:labelcluster}
Let $\sigma$, as per \Cref{fig:labels}. For $i \in [k]$, we define the $i$-th label cluster, denoted $\lab(i)$,  as
\begin{equation*}
    \lab(i) \coloneqq \{u \in V: \sigma(u)=i\} \, .
\end{equation*}
\end{definition}

\begin{definition}[Mislabeled vertices]\label{def:mislabeled} Let $\sigma, \iota$  as per \Cref{fig:labels}. We define the set of all vertices mislabeled by $\sigma$, denoted $\mislabeled$, as
\[\mislabeled \coloneqq \left\{v \in V: \sigma(v) \neq \iota(v)\right\}\]
    
\end{definition}

\begin{nota}
    Let $\sigma, \tau$ as per \Cref{fig:labels}. For a vertex $u \in V$, we abuse notation and write $\lab(u)$ and $\spec(u)$ to refer to $\lab(\sigma(u))$ and $\spec(\tau(u))$ respectively. For an index $i \in [k]$, we also write $\spec(\neg i)$ to refer to $\cup_{j \in [k]\setminus\{i\}} \spec(j)$.
\end{nota}

\noindent
As briefly mentioned before, our algorithm will combine spectral information with label information. It will in particular try to get a sense for which vertices are impostors by checking if vertices part of both the $i$-th spectral cluster and the $j$-th label cluster can reach the cross vertices.  This gives rise to the following notion.

\begin{definition}[Cross graphs]\label{def:crossgraph}
Let $G$ as per \Cref{fig:setting} and $\sigma$ as per \Cref{fig:labels}. For $i,j \in [k]$ with $i \neq j$, we let $V_{i,j} \coloneqq (\spec(i) \cap \lab(j))\cup \M$ and define  the $(i, j)$-cross graph to be the graph $G_{i,j}=(V, E_{i,j},\ell)$ consisting of all the edges induced by $V_{i,j}$ together with a number of self-loops to preserve regularity, i.e.
$$E_{i,j} \coloneqq \{(u,v) \in E: \, u,v \in V_{i,j}\} \quad \text{and} \quad \ell(u) \coloneqq d-|\{e \in E_{i,j}:\,u\in e\}| \text{ for all } u \in V \, . $$
\end{definition}

\begin{remark}
    Cross graphs are defined over the whole vertex set $V$ because we will want to conveniently take unions of them. For our purposes, it does not matter if there are isolated vertices, since we will be interested in the reachability of vertices in $(\spec(i) \cap \lab(j)) \cup \M$.
\end{remark}

\noindent
We now proceed to prove properties of spectral clusters that will be crucial for our analysis.

\subsection{Properties of spectral clusters}
\label{subsec:specinfo}
A key property of the spectral clusters is that any vertex that belongs to $\spec(i)$ and is not connected to $\M$ can have at most a $\frac{\phi}{2}$ fraction of its edges to other spectral clusters.

\begin{lemma}[Few neighbors across spectral clusters]\label{lemma:cores_sparsely_connected}
Consider the setting of \Cref{fig:setting}. For every $i \in [k]$ and every $u \in \spec(i) \setminus \nei_G(\M)$ one has
$$\left|\nei_G(u) \cap \spec(\neg i) \right| < d \cdot \frac{\phi}{2} \, .$$ 
\end{lemma}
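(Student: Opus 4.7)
The plan is to follow the intuition sketched in the technical overview, but carry the constants carefully so that the final bound is $d\phi/2$ rather than the loose $d\epsilon$ stated there. Fix $u \in \spec(i) \setminus \nei_G(\M)$, and partition $\nei_G(u)$ into $A = \nei_G(u) \cap \spec(i)$ and $B = \nei_G(u) \cap \spec(\neg i)$; since $u \notin \nei_G(\M)$, the cross vertices contribute nothing, so $|A|+|B|=d$. Write $\Delta = |B|$ and $\Delta_j = |\nei_G(u) \cap \spec(j)|$ for $j \neq i$, so $\Delta = \sum_{j\neq i} \Delta_j$. Our aim is to upper bound $\Delta$.

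The starting point is Lemma~\ref{lemma:fx_close_to_avg_neighbors}, which gives
\[
\left\| \sum_{v \in \nei_G(u)} (f_u - f_v)\right\|_2 \;=\; \left\| d f_u - \sum_{v \in \nei_G(u)} f_v \right\|_2 \;\le\; 2 d \epsilon \|f_u\|_2 .
\]
For $v \in A$ I apply Lemma~\ref{claim:spec_exact_mu} to both $f_u$ and $f_v$ to get $\|f_u - f_v\|_2 \le \frac{\phi}{5\sqrt{\eta}} \|\mu_i\|_2$; the triangle inequality then bounds $\bigl\|\sum_{v\in A}(f_u-f_v)\bigr\|_2 \le |A| \cdot \frac{\phi}{5\sqrt{\eta}} \|\mu_i\|_2$. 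Combining, and again using Lemma~\ref{claim:spec_exact_mu} to bound $\|f_u\|_2 \le 2\|\mu_i\|_2$, I obtain an upper bound of the form
\[
\left\| \sum_{v \in B}(f_u - f_v) \right\|_2 \;\le\; \Bigl( 4 d \epsilon + \tfrac{d \phi}{5\sqrt{\eta}} \Bigr) \|\mu_i\|_2 .
\]

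For the matching lower bound I expand each $f_v = \mu_{\tau(v)} + r_v$ with $\|r_v\|_2 \le \frac{\phi}{10\sqrt{\eta}}\|\mu_{\tau(v)}\|_2$ (and similarly for $f_u$ with cluster mean $\mu_i$). The sum decomposes as
\[
\sum_{v \in B}(f_u - f_v) \;=\; \Delta\, \mu_i \;-\; \sum_{j \neq i} \Delta_j\, \mu_j \;+\; \sum_{v \in B}(r_u - r_v),
\]
where the error term has norm at most $\tfrac{\phi}{5\sqrt{\eta}} \Delta \cdot \max_j \|\mu_j\|_2$. For the main term, I expand the squared norm and plug in the bounds from Lemma~\ref{lemma:clustermeans} and Remark~\ref{rem:mu_i_norm}: each $\|\mu_j\|_2^2 \ge \tfrac{1}{2\eta}\cdot \tfrac{k}{n}$ and $|\langle \mu_i,\mu_j\rangle| \lesssim \tfrac{\sqrt{\epsilon}}{\phi}\cdot \tfrac{k}{n}$. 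Dropping the non-negative term $\sum_{j\neq i}\Delta_j^2 \|\mu_j\|_2^2$, the squared norm is at least
\[
\Delta^2 \|\mu_i\|_2^2 \;-\; 2\Delta \sum_{j \neq i} \Delta_j |\langle \mu_i, \mu_j\rangle| \;\ge\; \Delta^2 \|\mu_i\|_2^2 \Bigl( 1 - O\bigl(\tfrac{\eta \sqrt{\epsilon}}{\phi}\bigr) \Bigr),
\]
which under our parameter assumptions (\Cref{fig:setting}) is at least $\tfrac{1}{2} \Delta^2 \|\mu_i\|_2^2$. Taking square roots gives $\bigl\|\sum_{v\in B}(f_u-f_v)\bigr\|_2 \ge \tfrac{1}{2}\Delta\|\mu_i\|_2 - \tfrac{\phi}{5\sqrt{\eta}}\Delta \max_j\|\mu_j\|_2$, and using $\max_j\|\mu_j\|_2 \le 2\sqrt{\eta}\|\mu_i\|_2$ from Remark~\ref{rem:mu_i_norm}, the right-hand side is at least $\tfrac{1}{3}\Delta \|\mu_i\|_2$.

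Equating the two sides yields $\tfrac{1}{3}\Delta \le 4 d \epsilon + \tfrac{d\phi}{5\sqrt{\eta}}$, and since $\epsilon \ll \phi^2$ by our parameter setting, this gives $\Delta < d\phi/2$ as desired. I expect the main technical nuisance to be the bookkeeping in the lower bound, specifically ensuring that the cross-term $\sum_{j\neq i}\Delta_j\langle \mu_i,\mu_j\rangle$ and the perturbation $\sum_v(r_u - r_v)$ do not erode the leading $\Delta\|\mu_i\|_2$ by too much; the parameter assumptions $\epsilon/\phi^6 \ll 1/\eta^4$ and $\phi\sqrt{\eta}\ll 1$ in \Cref{fig:setting} are precisely what is needed here.
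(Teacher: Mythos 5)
Your approach is essentially the same as the paper's: both start from Lemma~\ref{lemma:fx_close_to_avg_neighbors} to control the average neighbor displacement, both use \Cref{claim:spec_exact_mu} to replace embeddings by cluster means, and both lower-bound the resulting cluster-mean sum via near-orthogonality (\Cref{lemma:clustermeans}). The only organizational difference is that you bound $\bigl\|\sum_{v\in B}(f_u-f_v)\bigr\|_2$ from two sides and solve for $\Delta$, whereas the paper runs the same chain as a contradiction through $\|f_u - \mu_i\|_2$; this is cosmetic.

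There is, however, a genuine quantitative gap in your final step. You arrive at $\tfrac{1}{3}\Delta \le 4d\epsilon + \tfrac{d\phi}{5\sqrt{\eta}}$ and conclude $\Delta < d\phi/2$. But this does not follow: since $\eta \ge 1$, the best you can say is $\Delta \le 12d\epsilon + \tfrac{3d\phi}{5\sqrt{\eta}} \le 12d\epsilon + \tfrac{3}{5}d\phi$, and $\tfrac{3}{5}d\phi > \tfrac{1}{2}d\phi$, so the bound fails for $\eta$ close to $1$ even with $\epsilon$ arbitrarily small. The slack you gave away is entirely in the line ``Taking square roots gives $\ge \tfrac{1}{2}\Delta\|\mu_i\|_2 - \cdots$'': the square root of $\tfrac{1}{2}\Delta^2\|\mu_i\|_2^2$ is $\tfrac{1}{\sqrt{2}}\Delta\|\mu_i\|_2 \approx 0.707\,\Delta\|\mu_i\|_2$, not $\tfrac{1}{2}\Delta\|\mu_i\|_2$, and furthermore the cross-term loss $1 - O(\eta\sqrt{\epsilon}/\phi)$ is actually very close to $1$ under the \Cref{fig:setting} assumptions, so there is no need to round the squared norm down to $\tfrac{1}{2}\Delta^2\|\mu_i\|_2^2$ in the first place. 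Carrying the tighter constant $c \ge \tfrac{1}{\sqrt{2}} - O(\phi\sqrt{\eta}) > 0.6$ in place of your $\tfrac{1}{3}$ yields $\Delta \le \tfrac{1}{c}\bigl(4d\epsilon + \tfrac{d\phi}{5}\bigr) < \tfrac{1}{2}d\phi$, closing the argument. So the approach is right and matches the paper, but the stated constants do not deliver the claimed threshold; you need to not throw away the factor of $\sqrt{2}$ in the square-root step.
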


\begin{proof}
The high-level idea is as follows: If $u \in \spec(i)$, then by \Cref{claim:spec_exact_mu}, $f_u$ is close to the cluster mean $\mu_i$. On the other hand, by Lemma \ref{lemma:fx_close_to_avg_neighbors}, we know that $f_u$ is close to the average of its neighbours' embeddings. In particular, if $u$ has many neighbours in other spectral clusters, then $f_u$ has a large component coming from other cluster means. But since the cluster means are almost orthogonal (by Lemma \ref{lemma:clustermeans}), this would mean that $f_u$ has a large component orthogonal to $\mu_i$, contradicting the fact that $f_u$ is close to $\mu_i$. We now prove this more formally.
\\~\\
Suppose for contradiction that   $|\nei_G(u) \cap \spec(\neg i) | \geq  d \cdot {\phi}/{2}$ for some $u \in \spec(i) \setminus \nei_G(u)$. By Lemma \ref{lemma:fx_close_to_avg_neighbors},  we have 
\begin{equation}\label{eq:nbhrs}
   \left  \|f_u - \frac{1}{d}\sum_{v \in N_G(u)} f_v \right \|_2 \leq 2\epsilon \|f_u\|_2. 
\end{equation}
\noindent
By the assumption that $u \in \spec (i)$ (as per Definition \ref{def:spec}), we get
\begin{align*}
    \frac{\phi}{10\sqrt{\eta}}\|\mu_i\|_2 & \geq \|f_u - \mu_i\|_2 &&  \text{by \Cref{claim:spec_exact_mu}} \\
    & =  \left \|f_u - \left(\frac{1}{d}\sum_{v \in \nei_G(u)}f_v \right) + \left(\frac{1}{d}\sum_{v \in \nei_G(u)}f_v\right)  - \mu_i \right\|_2
   &&  \\ 
   & \geq  \left \|  \mu_i - \frac{1}{d}\sum_{v \in \nei_G(u)}f_v \right \|_2 -\left \| f_u - \frac{1}{d}\sum_{v \in \nei_G(u)}f_v \right \|_2 &&  \\
    & \geq  \left \|  \mu_i - \frac{1}{d}\sum_{v \in \nei_G(u)}f_v \right \|_2 - 2\epsilon \|f_u\|_2  &&  \text{by \eqref{eq:nbhrs}}. 
\end{align*}

\noindent
Recall the definition of the spectral labeling $\tau$ from \Cref{fig:labels}: $\tau(v) = *$ for a vertex $v \in \M$ and otherwise $\tau(v)$ returns the spectral cluster which $v$ belongs to. Since $ u \notin \nei_G(\M)$, we have that $\tau(v) \neq *$ for all  $v \in \nei_G(u)$. In particular, $ v\in \spec(\tau(v))$ for all $v \in \nei_G(u)$. Hence,  by \Cref{claim:spec_exact_mu}, for all  $v \in \nei_G(u)$, it holds that

\begin{equation}\label{eq:4.16}
    \|f_v - \mu_{\tau(v)}\|_2 \leq \frac{1}{10} \frac{\phi}{\sqrt{\eta}}\|\mu_{\tau(v)}\|_2 \leq \frac{\phi}{5}\|\mu_i\|_2, 
\end{equation}
where the last inequality follows from \Cref{lemma:clustermeans} {\bf \ref{bulletpt:mu_norm}} together with the assumption that $\eta \geq  \frac{|C_j|}{|C_l|}$ for all $j,l \in [k]$. Hence, we get 
\begin{align*}
\frac{1}{10} \frac{\phi}{\sqrt{\eta}}\|\mu_i\|_2 
& \geq \left \|  \mu_i - \frac{1}{d}\sum_{v \in \nei_G(u)}f_v \right \|_2 - 2\epsilon \|f_u\|_2  \\
& = \left\| \mu_i- \frac{1}{d}\sum_{v \in \nei_G(u)}\mu_{\tau(v)} - \frac{1}{d} \sum_{v \in \nei_G(u)}\left(f_v - \mu_{\tau(v)}\right) \right\|_2 - 2 \epsilon \|f_u\|_2 \\
& \geq \left\| \mu_i - \frac{1}{d}\sum_{v \in \nei_G(u)}\mu_{\tau(v)}\right\|_2 - \frac{1}{d}\sum_{v \in \nei_G(u)} \| f_v - \mu_{\tau(v)}\|_2 - 2\epsilon \|f_u\|_2 \\
&  \geq  \left\| \mu_i - \frac{1}{d}\sum_{v \in \nei_G(u)}\mu_{\tau(v)}\right\|_2 - \frac{\phi}{5} \|\mu_i\|_2 - 4 \epsilon \|\mu_i\|_2 \qquad\qquad\qquad \text{by \eqref{eq:4.16} and \Cref{claim:spec_exact_mu}}. 
\end{align*}
\noindent
Now let $l \coloneqq |\{ v \in \nei_G(u) : \tau(v) \neq i\}|$ be the number of neighbors that belong to a different spectral cluster. By assumption, it holds that $l \geq \frac{\phi}{2}d$. We have 
$$ \mu_i - \frac{1}{d}\sum_{v \in \nei_G(u)}\mu_{\tau(v)} = \frac{1}{d} \sum_{v \in  \nei_G(u)}(\mu_i - \mu_{\tau(v)} ) = \frac{1}{d}\sum_{v \in  \nei_G(u) : \tau(v) \neq i}(\mu_i - \mu_{\tau(v)}) = \frac{l}{d}\mu_i - \sum_{v \in  \nei_G(u) : \tau(v) \neq i}\mu_{\tau(v)} \, , $$
which gives 
\begin{align*}
    \left\| \mu_i - \frac{1}{d}\sum_{v \in  \nei_G(u)}\mu_{\tau(v)}\right\|_2^2 & = \left\|\frac{l}{d}\mu_i - \frac{1}{d}\sum_{v \in  \nei_G(u) : \tau(v) \neq i}\mu_{\tau(v)} \right\|_2^2 \\
    & = \left\|\frac{l}{d}\mu_i \right\|_2^2 + \frac{1}{d^2}\left\| \sum_{v \in  \nei_G(u)  :\tau(v) \neq i}\mu_{\tau(v)}\right\|_2^2 - 2\frac{l}{d}\left| \left\langle\mu_i, \frac{1}{d}\sum_{v \in  \nei_G(u)  :\tau(v) \neq i}\mu_{\tau(v)} \right \rangle\right| \\
 & \geq \frac{l^2}{d^2} \|\mu_i\|^2_2 - 2 \frac{l^2}{d^2}\frac{16 \sqrt{\epsilon}}{\phi^2}\sqrt{\eta}\|\mu_i\|_2^2 \qquad \text{by Lemma \ref{lemma:clustermeans} and using $\frac{l}{d} \geq \frac{\phi}{2}$} \\
 & = \frac{l^2}{d^2}\|\mu_i\|^2_2\left(1- \frac{32\sqrt{\epsilon} \sqrt{\eta}}{\phi^2} \right) \\
& \geq \left(\frac{l}{d} \|\mu_i\|_2 \left(1- \frac{32\sqrt{\epsilon} \sqrt{\eta}}{\phi^2} \right) \right)^2 \\
& \geq \left( \frac{\phi}{3}\|\mu_i\|_2 \right)^2 \qquad \text{by assumption that $l \geq \frac{\phi}{2}d$, and $\frac{\epsilon}{\phi^6} \leq \frac{10^{-5}}{\eta^4}$ as per \Cref{fig:setting}. } 
\end{align*}
\noindent
Hence, we get
\begin{align*}
   \frac{1}{10} \frac{\phi}{\sqrt{\eta}}\|\mu_i\|_2 &  \geq \left\| \mu_i - \frac{1}{d}\sum_{v \in  \nei_G(u)}\mu_{\tau(v)}\right\|_2 - \frac{\phi}{5} \|\mu_i\|_2 - 4 \epsilon \|\mu_i\|_2  \\
& \geq  \frac{\phi}{3}\|\mu_i\|_2-  \frac{\phi}{5}\|\mu_i\|_2 - 4 \epsilon \|\mu_i\|_2    \\
& =\left(  \frac{2}{15}\phi - 4\epsilon \right)\|\mu_i\|_2 \\
& > \frac{\phi}{10}\|\mu_i\|_2 \qquad \qquad\qquad\text{by assumption $\frac{\epsilon}{\phi^2} \leq 10^{-5}$},
\end{align*}
\noindent
which is a contradiction. 
\end{proof}

\noindent
The impostors together with the cross vertices are all the vertices for which the spectral embeddings assign a wrong cluster, in the sense that it does not give the correct classification. By known results, one can show that they account for at most an $\epsilon\cdot\poly(1/\phi)$ fraction of vertices.

\begin{lemma}[$\im \cup \M$ is small]\label{lemma:impostor_n_cross_size}
Consider the setting of \Cref{fig:setting}. Then,
    $$ |\im \cup \M| \leq 2 \cdot 10^4 \cdot \eta^2 \cdot \frac{\epsilon}{\phi^4 }n \, .$$
\end{lemma}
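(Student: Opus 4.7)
The plan is to show that every vertex in $\im \cup \M$ is quantitatively far (in spectral embedding) from its own true cluster mean $\mu_{\iota(v)}$, and then combine this lower bound with the variance bound of \Cref{claim:sum_of_distances} to cap the number of such vertices.

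First, I would observe that if $v \in \im \cup \M$, then $v \notin \spec(\iota(v))$. For $v \in \M$ this is by definition of $\M$. For $v \in \im$, we have $v \in \spec(i) \cap C_{\iota(v)}$ for some $i \neq \iota(v)$, and \Cref{lemma:disjoint_balls} guarantees $\spec(i) \cap \spec(\iota(v)) = \emptyset$, so $v \notin \spec(\iota(v))$. By \Cref{def:spec}, this means
\[
\|f_v - \tmu_{\iota(v)}\|^2_{\apx} \ge \frac{\phi^2}{400\eta}\|\tmu_{\iota(v)}\|^2_{\apx}.
\]

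Second, I would translate this approximate inequality into a bound on the exact quantities $\|f_v - \mu_{\iota(v)}\|_2$ and $\|\mu_{\iota(v)}\|_2$. Using the triangle-inequality-style bounds from \Cref{def:dist_apx} (which convert $\|\cdot\|_{\apx}$ to $\|\cdot\|_2$ at the price of an additive $O(\sqrt{\xi/n})$ term), the closeness $\|\tmu_i - \mu_i\|_2 \le \frac{\phi}{40\sqrt{\eta}}\|\mu_i\|_2$ from \Cref{def:approxmeans}, and the assumption $\sqrt{\xi/n} \lesssim \frac{\phi}{\sqrt{\eta}}\|\mu_i\|_2$ from \Cref{fig:setting}, essentially the same computation used in the proof of \Cref{claim:spec_exact_mu} (run in reverse) yields a bound of the form
\[
\|f_v - \mu_{\iota(v)}\|_2 \ge \frac{\phi}{c\sqrt{\eta}}\,\|\mu_{\iota(v)}\|_2
\]
for some absolute constant $c$. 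Squaring and summing over all $v \in \im \cup \M$, and partitioning this set by the true cluster id $\iota(v)$, gives
\[
\sum_{i=1}^k \sum_{v \in (\im \cup \M)\cap C_i} \|f_v - \mu_i\|_2^2 \;\ge\; \frac{\phi^2}{c^2\eta}\sum_{i=1}^k \bigl|(\im \cup \M)\cap C_i\bigr|\cdot \|\mu_i\|_2^2.
\]

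Third, I would invoke the lower bound $\|\mu_i\|_2^2 \ge \frac{1}{2\eta}\cdot\frac{k}{n}$ from \Cref{rem:mu_i_norm} on the right-hand side, and the variance bound $\sum_i \sum_{v \in C_i}\|f_v - \mu_i\|_2^2 \le \frac{4\epsilon k}{\phi^2}$ from \Cref{claim:sum_of_distances} on the left. Rearranging,
\[
|\im \cup \M| \;\le\; \frac{c^2\eta}{\phi^2}\cdot \frac{2\eta n}{k}\cdot \frac{4\epsilon k}{\phi^2} \;=\; \frac{8c^2\,\eta^2\,\epsilon}{\phi^4}\,n,
\]
which is the claimed bound up to the constant $2\cdot 10^4$ (which absorbs $c$ and the slack coming from the approximate-to-exact conversion).

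The only slightly delicate step is the second one: carefully chasing the approximation errors ($\xi/n$ slack in \Cref{def:dist_apx} and the $\frac{\phi}{40\sqrt{\eta}}\|\mu_i\|_2$ slack in \Cref{def:approxmeans}) so that the ``far from $\tmu$ in the $\apx$ norm'' conclusion degrades to a comparable ``far from $\mu$ in the exact norm'' conclusion, rather than collapsing. Because of the parameter conditions listed in \Cref{fig:setting}, these errors are dominated by $\frac{\phi}{\sqrt{\eta}}\|\mu_i\|_2$ by a safe constant factor, so this is a routine (but somewhat tedious) triangle-inequality computation, and no new idea is required beyond what already appears in \Cref{claim:spec_exact_mu} and \Cref{lemma:disjoint_balls}.
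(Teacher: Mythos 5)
Your proposal matches the paper's proof essentially step for step: establish $\|f_v - \tmu_{\iota(v)}\|^2_{\apx} \ge \frac{\phi^2}{400\eta}\|\tmu_{\iota(v)}\|^2_{\apx}$ for $v \in \im \cup \M$ (cross vertices by definition of $\M$, impostors via disjointness of spectral clusters), translate this into a lower bound $\|f_v - \mu_{\iota(v)}\|_2 \gtrsim \frac{\phi}{\sqrt{\eta}}\|\mu_{\iota(v)}\|_2$ by chasing the approximation errors, then sum and compare against $\sum_v \|f_v - \mu_{\iota(v)}\|_2^2 \le \frac{4\epsilon k}{\phi^2}$ from \Cref{claim:sum_of_distances} together with the lower bound on $\|\mu_i\|_2^2$ from \Cref{rem:mu_i_norm}. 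This is exactly the argument the paper carries out, so the proposal is correct and follows the same route.
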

\begin{proof}
The main idea is to combine \Cref{claim:sum_of_distances} together with the observation that for all $u \in \im \cup \M$, the distance $\|f_u - \mu_{\iota(u)}\|_2$ is large. We first show that for all $u \in \im \cup \M,$ it holds that 
 
 \begin{equation}\label{eq:imposter_size1}
 \|f_u- \tmu_{\iota(u)}\|_{\apx}^2 \geq   \frac{\phi^2}{400 \eta} \|\tmu_{\iota(u)} \|^2_{\apx}. 
 \end{equation}
 \noindent
 Indeed, if $u \in \M$ is a cross vertex, then~\eqref{eq:imposter_size1} holds by definition of $\M$ (Definition  \ref{def:spec}). 
 If instead $u \in \im$ is an impostor, then $u \in \spec(i) \cap C_j$ for some $j \neq i$ (by Definition \ref{def:impostor}). In particular, since the spectral clusters are disjoint (by Lemma \ref{lemma:disjoint_balls}), we have $u \notin \spec(j)$, and so 
$\|f_u - \tmu_{j}\|_{\apx}^2 \geq  {\phi^2}/({400 \eta})\cdot \|\tmu_{j}\|^2_{\apx}$ (by Definition \ref{def:spec}), which gives~\eqref{eq:imposter_size1} since $\iota(u)=j$. Thus, we get

 \begin{align*}
     \|f_u-  \mu_{\iota(u)}\|_2 & \geq \|f_u-  \tmu_{\iota(u)}\|_2  -  \|\mu_{\iota(u)} - \tmu_{\iota(u)}\|_2 &&  \\
     & \geq  \|f_u-  \tmu_{\iota(u)}\|_{\apx} - 2\sqrt{\frac{\xi}{n}} -  \|\mu_{\iota(u)} - \tmu_{\iota(u)}\|_2  &&  \text{by \Cref{def:dist_apx}}\\ 
     & \geq \frac{\phi}{20\sqrt{\eta}}\|\tmu_{\iota(u)}\|_{\apx} - \frac{\phi}{40\sqrt{\eta}}\|\mu_{\iota(u)}\|_2 - 2\sqrt{\frac{\xi}{n}} && \text{by \eqref{eq:imposter_size1} and Def. \ref{def:approxmeans}}\\
      & \geq \frac{\phi}{20\sqrt{\eta}}\|\tmu_{\iota(u)}\|_{2} - \frac{\phi}{40\sqrt{\eta}}\|\mu_{\iota(u)}\|_2 - 2\sqrt{\frac{\xi}{n}} - \frac{\phi}{20\sqrt{\eta}}\cdot\sqrt{\frac{\xi}{n}} && \text{by \Cref{def:dist_apx}}\\
     & \geq \frac{\phi}{20\sqrt{\eta}}\left( \|\mu_{\iota(u)}\|_2 - \|\mu_{\iota(u)} - \tmu_{\iota(u)}\|_2\right)  - \left(\frac{\phi}{40\sqrt{\eta}} + \frac{\phi}{100\sqrt{\eta}}\right)\|\mu_{\iota(u)}\|_2 && \text{by \Cref{fig:setting}} \\ 
     & \geq \frac{\phi}{20\sqrt{\eta}}\left(1-\frac{\phi}{40\sqrt{\eta}} \right)\|\mu_{\iota(u)}\|_2 - \left(\frac{\phi}{40\sqrt{\eta}} +\frac{\phi}{100\sqrt{\eta}} \right)\|\mu_{\iota(u)}\|_2  && \text{by Definition \ref{def:approxmeans}}\\
     & \geq \frac{119}{8000}\frac{\phi}{\sqrt{\eta}}\|\mu_{\iota(u)}\|_2, &&  \text{using $\frac{\phi}{\sqrt{\eta}} \leq 0.1$.}\\
 \end{align*}
 \noindent
 so 
\begin{equation}\label{eq:imposter_size4}
\|f_u-  \mu_{\iota(u)}\|_2^2 \geq \frac{14161}{64 \cdot 10^6} \frac{ \phi^2}{ \eta} \|\mu_{\iota(u)}\|^2_2.  
\end{equation}
\noindent
Summing~\eqref{eq:imposter_size4} over all $u \in \im \cup \M$, we get 
\begin{equation}\label{eq:imposter_size2}
 \sum_{u \in \im \cup \M}\|f_u-  \mu_{\iota(u)}\|_2^2 \geq \frac{14161}{64 \cdot 10^6} \frac{\phi^2}{\eta}  \sum_{u \in \im \cup \M} \|\mu_{\iota(u)}\|^2_2 \geq  |\im \cup \M| \cdot \frac{\phi^2}{5000\eta^2} \cdot \frac{k}{n}. 
\end{equation}
Here, the last inequality uses that $\| \mu_i\|^2_2 \geq  (1-4/10^5)\frac{k}{n}\frac{1}{\eta}$ for all $i$, by \Cref{rem:mu_i_norm}. 
On the other hand, by \Cref{claim:sum_of_distances}, we have 
\begin{equation}\label{eq:imposter_size3}
\sum_{u \in \im \cup \M}\|f_u-  \mu_{\iota(u)}\|_2^2 \leq \sum_{u \in V} \|f_u - \mu_{\iota(u)}\|_2^2 \leq \frac{4 \e k}{\phi^2}.
\end{equation}
Combining Equations \eqref{eq:imposter_size2} and \eqref{eq:imposter_size3} gives the result.   
\end{proof}

\noindent
An important property of impostors is that they essentially induce an expander. This will help the algorithm detect the impostors, hence averting being fooled by their spectral embedding.

\begin{lemma}[Impostors expand]\label{lem:imposters_expand}
Consider the setting of \Cref{fig:setting}.
For every $i ,j \in [k]$ with $i \neq j$ and every $S \subseteq \im(i,j) \setminus \nei_G(\M)$, it holds that
$$|E(S, \im(i,j) \setminus S)|\geq \frac{\phi}{2}d|S| \, .$$ 
\end{lemma}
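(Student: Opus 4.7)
The plan is to combine the sparse-across-clusters bound of \Cref{lemma:cores_sparsely_connected} with the intra-cluster expansion hypothesis $\Phi(G\{C_j\})\ge \phi$, the extra ingredient being that $S\subseteq \im(i,j)\subseteq C_j$ so the expansion of $G\{C_j\}$ can be applied to $S$.

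First, I would fix $u\in S$ and observe that $u\in \spec(i)\setminus\nei_G(\M)$, so \Cref{lemma:cores_sparsely_connected} gives $|\nei_G(u)\cap \spec(\neg i)|<\frac{\phi}{2}d$; moreover $u$ has no neighbor in $\M$ by assumption. Since every neighbor of $u$ either lies in $\spec(i)$, lies in $\spec(\neg i)$, or lies in $\M$, it follows that $u$ has at least $(1-\frac{\phi}{2})d$ neighbors in $\spec(i)$ and, symmetrically, at most $\frac{\phi}{2}d$ neighbors outside $\spec(i)$ (all of them actually in $\spec(\neg i)$).

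Second, I would appeal to the conductance lower bound $\Phi(G\{C_j\})\ge\phi$ applied to $S\subseteq C_j$. Since in $G\{C_j\}$ every vertex has degree $d$ (self-loops are added precisely for this), one has $\vol_{G\{C_j\}}(S)=d|S|$, and self-loops contribute nothing to the cut so $|E_{G\{C_j\}}(S,C_j\setminus S)|=|E_G(S,C_j\setminus S)|$. To use this as a lower bound $\phi d|S|$, I need $|S|\le |C_j\setminus S|$, i.e.\ $|S|\le |C_j|/2$. This is where one must check the quantitative parameter assumptions: $|S|\le|\im(i,j)|\le|\im\cup\M|\le 2\cdot 10^4\eta^2\epsilon/\phi^4\cdot n$ by \Cref{lemma:impostor_n_cross_size}, while $|C_j|\ge n/(\eta k)$ by the balance assumption, so $|S|\le|C_j|/2$ follows from the bounds $\epsilon/\phi^6\le 10^{-5}/\eta^4$ and $k\lesssim \phi^6/(\eta^4\epsilon)$ listed in \Cref{fig:setting}. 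Thus $|E_G(S,C_j\setminus S)|\ge \phi d|S|$.

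Third, I would decompose $C_j\setminus S = (\im(i,j)\setminus S)\dotcup (C_j\setminus \im(i,j))$, where $C_j\setminus \im(i,j)=C_j\setminus\spec(i)\subseteq \spec(\neg i)\cup \M$. By the first step, each $u\in S$ contributes at most $\frac{\phi}{2}d$ edges to $\spec(\neg i)\cup \M$ (no edges to $\M$, and at most $\frac{\phi}{2}d$ to $\spec(\neg i)$), hence
\begin{equation*}
|E_G(S,C_j\setminus \im(i,j))|\le \tfrac{\phi}{2}d|S|.
\end{equation*}
Subtracting, $|E(S,\im(i,j)\setminus S)|\ge \phi d|S|-\frac{\phi}{2}d|S|=\frac{\phi}{2}d|S|$, which is the desired bound. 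The main obstacle is really only the book-keeping check $|S|\le|C_j|/2$; the rest is a clean combination of \Cref{lemma:cores_sparsely_connected} with the $C_j$-expansion hypothesis.
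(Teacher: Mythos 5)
Your proof is correct and matches the paper's argument essentially line for line: both apply \Cref{lemma:cores_sparsely_connected} to each $u\in S\subseteq\spec(i)\setminus\nei_G(\M)$, both invoke $\Phi(G\{C_j\})\ge\phi$ after checking $|S|\le|C_j|/2$ via \Cref{lemma:impostor_n_cross_size} and the parameter bounds of \Cref{fig:setting}, and both split $E(S,C_j\setminus S)$ into the part going to $\im(i,j)\setminus S$ and the part going to $\spec(\neg i)\cup\M$. You have merely spelled out the book-keeping for $|S|\le|C_j|/2$ in slightly more detail than the paper does.
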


\begin{proof}
Note that \smash{$|S| \leq |\im| \leq 2 \cdot 10^4 \cdot \eta^2  \frac{\epsilon}{\phi^4}n\leq \frac{|C_j|}{2}$}, where the second inequality follows by \Cref{lemma:impostor_n_cross_size}, and the last inequality follows by the setting of $\epsilon, \phi$ as per \Cref{fig:setting}. Since $S \subseteq C_j$, and since $G\{C_j\}$ is a $\phi$-expander, we have 
\begin{equation}\label{eq:S_expands}
|E(S, C_j \setminus S)| \geq \phi\cdot  d \cdot |S|.
\end{equation}
\noindent
We will now show that a large fraction of the edges in $E(S, C_j \setminus S)$ go to the $(j \to i)$-impostors $\im(i,j)$. 
We have
\begin{align*}
 C_j 
 & \subseteq \M \cup \spec(\neg i) \cup  \im(i,j) \, .
 \end{align*}
Thus,
 \begin{equation}\label{eq:edgecounting}
 \begin{aligned}
 |E(S, C_j \setminus S)| &  \leq \left|E(S, \M)\right| +\left|E(S,  \spec(\neg i))  \right| + |E(S, \im(i,j) \setminus S)| \\
 & = \left|E(S, \spec(\neg i))\right| + |E(S, \im(i,j) \setminus S)|  \qquad \text{by the assumption $S \cap \nei_G(\M) = \emptyset$} \\
 & \leq \sum_{u \in S}\left|\nei_G(u) \cap  \spec(\neg i))  \right|  + |E(S, \im(i,j) \setminus S)| \\
 & \leq |S| \cdot d \cdot \frac{\phi}{2} + |E(S, \im(i,j) \setminus S)| \qquad \qquad \qquad \qquad \qquad \qquad \text{by \Cref{lemma:cores_sparsely_connected}.}
 \end{aligned}
 \end{equation}

\noindent
 Here the last inequality uses the assumption that $S \subseteq \im(i,j) \setminus \nei_G(\M) \subseteq \spec(i) \setminus  \nei_G(\M)$, (since $\im(i,j) \subseteq \spec(i) $ by Definitions \ref{def:spec} and \ref{def:impostor}), so every vertex $u \in S$ satisfies the assumption in \Cref{lemma:cores_sparsely_connected}. Combining Equations \eqref{eq:S_expands} and \eqref{eq:edgecounting}, and rearranging gives the result.
\end{proof}
\begin{remark}\label{remark:NM_forward_ref}
\Cref{lem:imposters_expand} and \Cref{lemma:cores_sparsely_connected} only hold outside of 
 $\nei_G(\M)$. Since $|\nei_G(\M)| \approx \epsilon dn $, this is one of the reasons why our algorithm in Section \ref{sec:poly} (Algorithm \ref{alg:polytime}) will incur a $d$-dependence in the misclassification rate. Later, in Section \ref{sec:sub_alg}, we improve this by defining a subset $\NM \subseteq \nei_G(\M)$ of size $|\NM| \approx \epsilon n$ (see Definition \ref{def:NM}), and proving 
 that \Cref{lem:imposters_expand} and \Cref{lemma:cores_sparsely_connected} can also be extended to $\nei_G(\M) \setminus \NM$ (See \Cref{lem:imposters_expand_stronger} and \Cref{claim:NM_neighbors}, respectively). 

\end{remark}

\subsection{Approximate spectral inner product oracles and approximate cluster means}\label{subsec:apx_spec_oracles}

In this section, we present two algorithms for approximating spectral inner products: a simpler polynomial time algorithm and a more sophisticated sublinear time and space algorithm. As it is essential for the polynomial and sublinear time classifiers to have access to approximate cluster means, we also present a subroutine for finding approximate cluster means in \Cref{lemma:pi_computation}, which may be used by both.  We present the analysis of \Cref{lemma:pi_computation} in \Cref{subsec:apxmeans}.

\noindent
We use the polynomial time algorithm in \Cref{rem:euclidian_oracle} as a subroutine in the polynomial time classifier, introduced in \Cref{sec:poly}.

\begin{restatable}[Polynomial time algorithm for approximating $U_{[k]}U^T_{[k]}$]{remark}{projmatrix} \label{rem:proj_matrix_apx} There exists an $O(n^2\cdot k^2\cdot \frac{\log n}{\phi^2})$ time algorithm which recovers a matrix $\widehat{Q}$ such that
\[\|\widehat{Q}\widehat{Q}^T - U_{[k]}U^T_{[k]}\|_2 \leq 1/n^{100}.\]
\end{restatable} We prove the above remark in \Cref{subsec:apxmeans}.

\begin{remark}[Euclidean inner product oracle]\label{rem:euclidian_oracle}
    From \Cref{def:emb}, we have $f_u = U^T_{[k]}\1_u$. From \Cref{rem:proj_matrix_apx}, there exists a polynomial time algorithm which recovers a matrix $\widehat{Q} \in \R^{n \times k}$ such that $\|U_{[k]}U^T_{[k]} - \widehat{Q}\widehat{Q}^T\|_2 \leq 1/\poly(n)$. From there, we have
    \[\left|\langle f_u, f_v\rangle  - \1_u\widehat{Q}\widehat{Q}^T\1_v\right|= \left|\1^T_uU_{[k]}U^T_{[k]}\1_v - \1_u\widehat{Q}\widehat{Q}^T\1_v\right| \leq \|U_{[k]}U^T_{[k]} - \widehat{Q}\widehat{Q}^T\|_2 \leq 1/\poly(n),\]
    so we can in polynomial time approximate the inner products up to precision $1/\poly(n)$. 
\end{remark}

In \Cref{sec:sub_alg}, we use a sublinear time and space algorithm, introduced in \cite{GKLMS21}. We refer to this algorithm as Spectral Dot Oracle and we provide its performance guarantees below. 

\begin{thm}[Spectral Dot Product Oracle from \cite{GKLMS21}]\label{thm:spec_dot_prod_oracle} Let $\epsilon, \phi \in (0, 1)$ with $\epsilon \leq \frac{\phi^2}{10^5}$. Let $G = (V, E)$ be a $d$-regular graph that admits a $(k, \epsilon, \phi)$-clustering $C_1, \ldots, C_k$. Let $1 > \xi > 1/n^5$ . There exists a preprocessing algorithm which computes in time $O\left(2^{O\left(\frac{\phi^2}{\epsilon}k^4\log(k)\right)}\cdot n^{1/2+O(\epsilon/\phi^2)}\cdot\poly(\log (n))\cdot\frac{1}{\phi^2}\cdot \poly(1/\xi)\right)$ a sublinear space data structure
$\mathcal{D}$ of size $O\left(\poly(k)\cdot n^{1/2+O(\epsilon/\phi^2)}\cdot\poly(\log (n))\cdot\poly(1/\xi)\right)$ such that with probability at least $1 - n^{97}$ the following
property is satisfied:

There exists an algorithm, referred to as \emph{Spectral Dot Product Oracle} which has access to $\mathcal{D}$ and for every pair of vertices $u, v \in V$ computes an output value $\specdp(u, v)$
such that 
\[ \left|\specdp(u, v) - \langle f_u, f_v\rangle\right| \leq \frac{\xi}{n}.\]
The running time of Spectral Dot Product Oracle for every query $u, v \in V$ is 
\[O\left(\poly(k)\cdot n^{1/2 + O(\epsilon/\phi^2)}\cdot\poly(\log(n))\cdot \frac{1}{\phi^2}\cdot\poly(1/\xi)\right).\]
\end{thm}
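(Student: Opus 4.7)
The statement is quoted verbatim from \cite{GKLMS21}, so the ``proof'' would in a publication be a citation; nonetheless, I can sketch how the argument goes.

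The plan is to build $\specdp$ around the identity
\[
\1_u^\top P^{2t} \1_v \;=\; \sum_{i=1}^n (1-\lambda_i)^{2t}\,\langle x_i,\1_u\rangle\langle x_i,\1_v\rangle,
\]
where $P = I - \mathcal{L}$ is the lazy random walk operator and $x_1,\dots,x_n$ are its eigenvectors. By \Cref{lem:bnd-lambda} we have $\lambda_i \le 2\epsilon$ for $i\le k$ and $\lambda_i \ge \phi^2/2$ for $i>k$. Choosing a walk length $t = \Theta(\log(n/\xi)/\phi^2)$ makes $(1-\lambda_i)^{2t}\in[1-\poly(\xi/n),1]$ for $i\le k$ while $(1-\lambda_i)^{2t} \le 1/\poly(n)$ for $i > k$. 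Consequently
\[
\1_u^\top P^{2t}\1_v \;=\; \langle f_u,f_v\rangle \;\pm\; \tfrac{\xi}{10n},
\]
so it suffices to estimate this bilinear form to additive error $O(\xi/n)$.

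Next, I would estimate $\1_u^\top P^{2t}\1_v$ via a ``meeting point'' decomposition $\1_u^\top P^{2t}\1_v = \sum_w P^t(u,w)\,P^t(v,w)$ viewed as an inner product of the two $t$-step distribution vectors $p_u,p_v \in \R^V$. Naively this still costs $\Omega(n)$, but one can apply a Johnson--Lindenstrauss style sketch: sample a set $S$ of $\Theta(n^{1/2+O(\epsilon/\phi^2)}\cdot\polylog(n)/\xi^2)$ seed vertices with appropriate (importance sampling) weights and, in the preprocessing phase, compute a compressed sketch of $p_s$ for each $s\in S$ by simulating $\poly(\log n,1/\xi)$ independent $t$-step random walks from $s$. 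The inner product $\langle p_u,p_v\rangle$ is then estimated at query time by running $\poly(\log n,1/\xi)$ fresh $t$-step walks from $u$ and $v$ and combining their sketches with those stored for $S$. The variance bound underlying the $\sqrt{n}$-scaling relies crucially on the fact that, because $G$ is a $(k,\epsilon,\phi)$-clusterable graph, the stationary mass of $p_u$ is spread across $\Theta(n/k)$ vertices per cluster, so a birthday-paradox argument controls the $L_2^2$ mass of the distributions and yields $\|p_u\|_2 \lesssim \sqrt{k/n}\cdot\text{poly}(1/\phi)$.

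The preprocessing then computes and stores only the $|S|$ sketches, giving the stated space bound; each query runs a bounded number of walks of length $t$, each taking $d\cdot t$ time, and combines them with the sketches, for a total query time $n^{1/2+O(\epsilon/\phi^2)}\cdot\polylog(n)\cdot\poly(k,1/\phi,1/\xi)$. The factor $2^{O((\phi^2/\epsilon)k^4 \log k)}$ in the preprocessing reflects an initial clustering-in-the-embedded-space step (needed to set the importance weights) whose cost is governed by an enumeration over cluster labels. The main obstacle is in fact this last variance analysis: one has to argue that, after the spectral filter $P^{2t}$ has suppressed the $i>k$ eigenmodes, the variance of the sampling estimator for $\langle p_u,p_v\rangle$ is bounded by $O(1/n)$ rather than $O(1)$, which is where the specific shape of $(k,\epsilon,\phi)$-clusterable graphs (and the variance bound of \Cref{lemma:variancebound}) enters; without the assumption $\epsilon \le \phi^2/10^5$ the filter would not be sharp enough to give a $\sqrt{n}$-time oracle.
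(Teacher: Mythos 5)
The paper cites this theorem as an external result from \cite{GKLMS21} and supplies no proof of its own, so you correctly identified that the ``proof'' is a citation. Evaluating your reconstruction of the cited argument, however, turns up a real gap.

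Your central claim --- that choosing $t = \Theta(\log(n/\xi)/\phi^2)$ makes $(1-\lambda_i)^{2t}\in[1-\poly(\xi/n),1]$ for $i\le k$, so that $\1_u^\top P^{2t}\1_v = \langle f_u,f_v\rangle \pm \xi/(10n)$ --- is false in the parameter regime of \Cref{thm:spec_dot_prod_oracle}. By \Cref{lem:bnd-lambda} the bottom $k$ eigenvalues can be as large as $2\epsilon$, and with $t = c\log(n/\xi)/\phi^2$ one gets
\[
(1-\lambda_i)^{2t} \approx e^{-4\epsilon t} = (n/\xi)^{-\Theta(\epsilon/\phi^2)},
\]
which is \emph{polynomially} small in $n$ once $\epsilon/\phi^2$ is a (small) constant; the hypothesis $\epsilon\le\phi^2/10^5$ only controls the constant in the exponent, not its growth with $\log n$. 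So $\1_u^\top P^{2t}\1_v$ is a polynomially attenuated --- and unevenly attenuated across the $k$ surviving modes, since the $\lambda_i$ differ --- version of $\langle f_u,f_v\rangle$, and more careful sampling cannot rescue the estimator because its \emph{expectation} is wrong. What you have sketched is in effect the CzumajPS15 collision scheme, which works only when $\epsilon \lesssim 1/(\poly(k)\log n)$, not in the constant-$\epsilon/\phi^2$ regime that this theorem is meant to cover.

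The missing ingredient is exactly the ``linear-algebraic post-processing'' that the paper's own related-work section highlights as the advance of \cite{chiplunkar2018testing,GKLMS21} over \cite{CzumajPS15}. Rather than reading off $\1_u^\top P^{2t}\1_v$, one builds a sketch matrix $Q$ whose rows approximate the walk distributions $M^t_{s,\cdot}$ over a random sample $S$ of size $n^{1/2+O(\epsilon/\phi^2)}\poly(\log n, 1/\xi)$, computes a rank-$k$ SVD $Q\approx W\Sigma Y^\top$, and estimates $\langle f_u,f_v\rangle \approx \1_u^\top YY^\top\1_v = (Q\1_u)^\top W\Sigma^{-2}W^\top(Q\1_v)$. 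The $\Sigma^{-2}$ factor performs a mode-by-mode division by the attenuations $(1-\lambda_i)^{2t}$, and the $n^{O(\epsilon/\phi^2)}$ overhead in the sample size and runtime is precisely what pays for the variance amplification caused by that division. Your sketch omits this de-biasing step entirely, so the estimator you describe does not achieve additive error $\xi/n$ in the claimed regime.
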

\begin{remark} We note that \Cref{thm:spec_dot_prod_oracle} is stated slightly differently in \cite{GKLMS21}. There, both the preprocessing algorithm and Spectral Dot Product Oracle are probabilistic algorithms; the probability with which the preproccessing algorithm is successful is $1 - n^{100}$ and the probability with which the Spectral Dot Product Oracle is successful, conditioned on the success of the preprocessing algorithm, is also $1 - n^{100}$ for any $u, v \in V$. We note, however, that the preproccessing algorithm and Spectral Dot Product Oracle use different random bits, and together they require $O\left(\poly(k)\cdot n^{1/2+O(\epsilon/\phi^2)}\cdot\poly(\log (n))\cdot\poly(1/\phi)\right)$ many random bits in total to process all pairs of vertices in $V$. Therefore, without loss of generality, we may assume that all of these random bits are sampled in the preproccesing stage. The preprocessing stage is then successful if it both returns a valid data structure $\mathcal{D}$ and if the Spectral Dot Product Oracle returns valid approximations to the spectral inner products for all pairs of vertices in $V$ which, by the union bound arguments, has probability at least
\[1 - n^{100} - n^2\cdot n^{100} > 1 - n^{97}.\]
\end{remark}

\begin{remark}\label{rem:specdp_runtime} We fix $\xi = \frac{\phi^2}{(20)^4\eta}\min_i\{\|\mu_i\|^2_2\}n \geq \Omega(\phi^2k)$ so that $\specdp$ is a valid approximate inner product function, as per \Cref{def:apx}. The space complexity of the data structure $\mathcal{D}$ is then bounded by
\[O\left(\poly(k)\cdot n^{1/2+O(\epsilon/\phi^2)}\cdot\poly(\log (n))\cdot\poly(1/\phi)\right),\]
and one call to the Spectral Dot Product Oracle takes time
\[O\left(\poly(k)\cdot n^{1/2+O(\epsilon/\phi^2)}\cdot\poly(\log (n))\cdot\poly(1/\phi)\right).\]
\end{remark}

Finally, we present \Cref{lemma:pi_computation}, which shows that we can compute the approximate clusters means. 
\begin{restatable}{lemma}{approxmu}\label{lemma:pi_computation}
There exists an algorithm that returns $k$ vertices $\{u_1, \dots, u_k\}$ (\Cref{alg:app_centers}) and an algorithm that returns a permutation $\pi: [k] \rightarrow [k]$ (\Cref{alg:permutation}) such that with probability 0.97 over the internal randomness of \Cref{alg:app_centers} and \Cref{alg:permutation} the following holds:
$$ \Pr_{\sigma}\left[ \forall i \in [k]: \|f_{u_i} - \mu_{\pi(i)}\|^2_2 \leq \frac{\phi^2}{1600\eta}\|\mu_{\pi(i)}\|^2_2 \right] \geq1 - 10^{-3}.$$ 
\end{restatable}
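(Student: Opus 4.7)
The plan is for Algorithm~\ref{alg:app_centers} to draw, for each label $i \in [k]$, a small set of vertices from the label cluster $\lab(i)$ and select one whose spectral embedding is close to the true cluster mean $\mu_i$, using only the approximate inner product oracle $\langle\cdot,\cdot\rangle_{\apx}$; Algorithm~\ref{alg:permutation} then matches the selected candidates to true cluster indices.

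Say that a vertex $v$ is \emph{good for $i$} if $\|f_v - \mu_i\|_2^2 \le \frac{\phi^2}{1600\eta}\|\mu_i\|_2^2$. Lemma~\ref{claim:sum_of_distances} combined with the lower bound $\|\mu_i\|_2^2 \ge \frac{1}{2\eta}\cdot\frac{k}{n}$ in Remark~\ref{rem:mu_i_norm} implies that the number of vertices of $C_i$ that are \emph{not} good for $i$ is $O(\eta^3\epsilon k/\phi^4)\cdot |C_i|$, a small constant fraction of $|C_i|$ under the parameter setting of Figure~\ref{fig:setting}. Since labels are independent $\delta$-perturbations of the true cluster ids and clusters are $\eta$-balanced, the conditional probability that a uniformly random $v \in \lab(i)$ lies in $C_i$ is at least $1 - O(\delta k\eta)$, so such a sample is good for $i$ with probability at least (say) $2/3$ under Figure~\ref{fig:setting}. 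Drawing $T = \Theta(\log k)$ independent samples $S_i \subseteq \lab(i)$ and applying a Chernoff bound, at least $3T/4$ of them are good for $i$ with probability $\ge 1 - k^{-20}$ over the joint randomness of sampling and $\sigma$.

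Algorithm~\ref{alg:app_centers} selects $u_i \in S_i$ to be the sample whose approximate-distance ball of radius $r_i \coloneqq \tfrac{\phi}{10\sqrt{\eta}}\|f_{v}\|_{\apx}$ contains the largest number of other members of $S_i$. Two vertices that are both good for $i$ are within $\ell_2$-distance $\tfrac{\phi}{20\sqrt{\eta}}\|\mu_i\|_2$ of each other by the triangle inequality and hence within $\apx$-distance $r_i$, the $\sqrt{\xi/n}$ slack being absorbed by the parameter choice in Figure~\ref{fig:setting}; conversely, any sample close to a different mean $\mu_j$ is separated from the good-for-$i$ cluster by roughly $\|\mu_i - \mu_j\|_2 \gtrsim \|\mu_i\|_2$ by Lemma~\ref{lemma:clustermeans}, and thus lies outside that ball. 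The majority guarantee therefore forces $u_i$ to be good for $i$. Union-bounding over $i \in [k]$ and adding the $n^{-97}$ failure probability of the Spectral Dot Product Oracle preprocessing yields a joint failure probability of at most $10^{-5}$ over the internal randomness and $\sigma$; a Markov-type rearrangement then produces the stated $0.97$-vs-$(1-10^{-3})$ decomposition.

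Finally, Algorithm~\ref{alg:permutation} can set $\pi(i)$ to be the index of the label class that $u_i$ was drawn from, which is $i$ by construction; or, more robustly, match each $u_i$ to its nearest candidate (in $\|\cdot\|_{\apx}$) produced by a fresh independent run of Algorithm~\ref{alg:app_centers}, using Lemma~\ref{lemma:clustermeans} to argue that good-for-distinct-clusters vertices are mutually far. The main obstacle will be juggling the three sources of slack~-- the $\sqrt{\xi/n}$ approximation error in $\|\cdot\|_{\apx}$, the $O(\eta^3\epsilon k/\phi^4)$ bad fraction of each $C_i$, and the label-corruption rate $\delta$~-- so that together they respect the tight constant $\tfrac{\phi^2}{1600\eta}$ in the conclusion. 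This requires an accounting in the style of Lemmas~\ref{claim:spec_exact_mu} and~\ref{lemma:disjoint_balls}, but is expected to be routine once the parameter inequalities from Figure~\ref{fig:setting} are invoked.
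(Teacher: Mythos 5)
Your design fundamentally diverges from the paper's, and the divergence introduces a gap. In the paper, Algorithm~\ref{alg:app_centers} samples $O(\eta k\log k)$ vertices \emph{uniformly from $V$} and makes no use of $\sigma$ whatsoever: the $\sigma$-independence is essential, because the proofs of Theorems~\ref{thm:polytime} and \ref{thm:sublinear} explicitly rely on the fact that the \emph{set} of approximate means is determined by the preprocessing randomness alone, and only the permutation $\pi$ depends on $\sigma$. The paper's argument then observes that a uniform sample avoids the small set $\M' \cup \im'$ of ``bad'' vertices with probability $0.99$ and hits every cluster with probability $0.99$; conditioned on both, the thresholded graph $H$ is exactly the same-cluster graph (a disjoint union of $k$ cliques), so picking arbitrary representatives works. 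Your proposal instead has Algorithm~\ref{alg:app_centers} sample from the label clusters $\lab(i)$, trivializing Algorithm~\ref{alg:permutation} but coupling the selected representatives to $\sigma$ and breaking the later decoupling argument.

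More concretely, your key probability estimate fails under the stated parameter regime. You claim that a uniform sample from $\lab(i)$ lies in $C_i$ with probability at least $1 - O(\delta k\eta)$ and assert this is at least $2/3$. But \Cref{fig:setting} only imposes $\delta d \le 1/100$, $\eta = O(1)$, and $k\log k \lesssim \phi^6/(\eta^4\epsilon)$; there is no bound of the form $\delta k\eta = O(1)$. With adversarial wrong labels concentrated on index $i$, $\lab(i)$ contains roughly $(1-\delta)|C_i| \approx n/(\eta k)$ correct vertices and up to $\delta n$ mislabeled ones, so the conditional probability of landing in $C_i$ is roughly $\frac{1}{1+\delta\eta k}$. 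For $\delta = 1/300$ (permitted when $d=3$) and $k = \polylog n$ (permitted by Remark~\ref{rem:additional_bounds_eps}), this probability is $o(1)$, not $\ge 2/3$. The paper sidesteps this entirely by sampling from $V$ (where only an $\epsilon/\phi^4$-fraction is bad regardless of $\sigma$) and only consulting $\sigma$ for the majority vote in Algorithm~\ref{alg:permutation}, where each voting set $S_{u_i}$ already has size $\Omega(\log k)$ and $\delta \le 1/100$ suffices for the Chernoff bound. You would need to restrict to a much narrower $\delta k \eta \ll 1$ regime, which the lemma does not assume.
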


Our algorithm for computing the cluster means (\Cref{alg:app_centers}) is similar to Algorithm 1 of \cite{SP23}. We present the algorithm and its analysis in Appendix \ref{subsec:apxmeans}. We also present the algorithm for finding the right permutation (\Cref{alg:permutation}) and its analysis, and prove \Cref{lemma:pi_computation}, in Appendix \ref{subsec:apxmeans}. As we see later, the algorithm for computing the cluster means depends on the choice of the approximate spectral inner product oracle and, if $Q$ is the time required to make one call to the chosen oracle, then the runtime of the algorithm is $O(k^2\cdot\log^2(k)\cdot Q)$.
\section{Warm-up: polynomial-time classifier with suboptimal rate}
\label{sec:poly}
\noindent
In this section, we outline and analyze \Cref{alg:polytime}, a polynomial-time algorithm with misclassification rate $\approx d\epsilon \delta$. The algorithm takes as input the graph, the approximate cluster means, and the perturbed label. It outputs a cluster id for each vertex. We will show that the number of vertices such that the output id is different from $\iota(u)$ is at most a $O({d\delta\epsilon}/{\phi^5})$-fraction.

In the previous sections, we defined several objects, such as the cross graphs $\crossgapx_{i, j}$, the spectral clusters $\spec(i)$, and the spectral labeling $\tau$, which we would like to use in our algorithm. The definitions of those objects depend on the choice of an approximate inner product function. In line~\eqref{line:fix_apx_p} we fix the choice of an approximate inner product function as the Euclidean inner product, introduced in \Cref{subsec:apx_spec_oracles}.

We now describe the main idea of \Cref{alg:polytime}. In the previous sections, we showed that the cluster means are roughly orthogonal (see \Cref{lemma:clustermeans}) and  that the spectral embeddings of all except for $\approx \epsilon n$ vertices are located very close to their cluster means (see \Cref{lemma:impostor_n_cross_size}). If we knew that for each of those $\approx \epsilon n$ vertices the spectral embedding is separated from each of the cluster means (recall that we refer to this set as $\M$, \Cref{def:spec}), then (given access to the cluster means), the following simple algorithm would suffice: (1) for a vertex $u$, verify if $f_u$ is close to some cluster mean; (2) if yes, return the number of this cluster mean; (3) if no, return $\sigma(u)$.
Indeed, in that case only vertices which are far from any cluster mean could potentially be misclassified, which, by Markov inequality, would yield that the total number of vertices misclassified by the above algorithm is $\approx \epsilon\delta n$. 
However, we are not able to establish that a vertex is either in $\M$ or it is spectrally close to its own cluster mean. There could potentially be $\approx \epsilon n$ many vertices that actually belong to some cluster $C_j$ but whose spectral embeddings are located close to the cluster mean $\mu_i$ for $i \neq j$ (recall that we refer to such vertices as $(j \to i)$-impostors \Cref{def:impostor}). If we settled for the above algorithm, we could misclassify up to $\approx \epsilon n$ many vertices.

Instead, in \Cref{lem:imposters_expand} we show that for any pair $i \neq j \in [k]$ the vertices in $\im(i, j)$ are well-connected to $\M$ and induce an expander-like subgraph. We argue that, due to the expander-like properties of $\im(i, j)$, even if we remove all of the vertices misclassified by $\sigma$ from $\im(i, j)$, the remaining set of vertices is still well-connected to $\M$. This allows us to modify the above tentative algorithm to detect not only vertices from $\M$, but also vertices from $\im$. This idea results in \Cref{alg:polytime}, which operates as follows: for a vertex $u$, if $f_u$ is close to some cluster mean and its predicted cluster $\sigma(u)$ agrees with the spectral information, then return $\sigma(u)$ (line~\eqref{line:core_p}); if $f_u$ is far from all cluster means, then return $\sigma(u)$ (line~\eqref{line:far_p}); if $f_u$ is close to some cluster mean but its predicted cluster $\sigma(u)$ is different from the one suggested by the spectral information (that is, if a vertex looks like it is in $\im$), then check if $u$ can reach $\M$ (see \Cref{def:reachability}) through only vertices in the same predicted cluster and whose embeddings are close to the same cluster mean (line~\eqref{line:conntest}). In other words, the last step checks reachability in the $(\tau(u),\sigma(u))$-cross graph (see \Cref{def:crossgraph}).

\begin{definition}[Reachability]\label{def:reachability}
    For a graph $H = (V(H), E(H))$ we say that a subset $S \subseteq V(H)$ is reachable from a vertex $u \in V(H)$ if the connected component of $u$ contains at least one vertex from $S$. We write $u \sim_H S$ if $S$ is reachable from $u$ in the graph $H$ and $u \slashed{\sim}_H S$ otherwise.
\end{definition}

\begin{algorithm}
\caption{Polynomial-time classifier}\label{alg:polytime}
\begin{algorithmic}[1]
\State \textbf{Input:} $G, \phi, \eta, (\widetilde{\mu}_i)_{i \in [k]}$ as per \Cref{fig:setting}, $\sigma$ as per \Cref{fig:labels}, and a vertex $u \in V$
\State \textbf{Output:} a cluster id in $[k]$
\State $\langle \cdot, \cdot \rangle_{\text{apx}} \gets \langle \cdot, \cdot \rangle$ \Comment{Euclidean inner products, see \Cref{rem:euclidian_oracle}} \label{line:fix_apx_p}

\If{$\tau(u) = \sigma(u)$} \Return $\sigma(u)$ \Comment{label agrees with spectral information} \label{line:core_p}
\ElsIf{$\tau(u) = *$} \Return $\sigma(u)$ \Comment{spectral information is ambiguous} \label{line:far_p}
\Else
    \Comment{label disagrees with spectral information}
    \State $H \gets G_{\tau(u), \sigma(u)}$ \Comment{see \Cref{def:crossgraph}}

    \If{$u \sim_H \mathcal{X}$} \label{line:conntest}\Comment{see \Cref{def:reachability}}
        \State \Return $\sigma(u)$ \Comment{trust the label, $u$ is likely a spectral impostor} \label{line:impostor_p}
    \Else
        \State \Return $\tau(u)$ \Comment{trust spectral information, the label is likely wrong} \label{line:non-impostor_p}
    \EndIf
\EndIf

\end{algorithmic}
\end{algorithm}

\noindent
For \Cref{alg:polytime} to run, we need to feed it $(\widetilde{\mu}_i)_{i \in [k]}$. In \Cref{subsec:apxmeans}, we describe and analyze \Cref{alg:app_centers} and \Cref{alg:permutation}, and show that with good probability they produce valid approximate cluster means $(\widetilde{\mu}_i)_{i \in [k]}$. Together, the three algorithms give the following result.

\begin{thm}[Polynomial time classifier]
\label{thm:polytime}
    There is a polynomial time algorithm that given as input $G,\phi,k,\eta$ as per \Cref{fig:setting} and $\sigma$ as per \Cref{fig:labels} outputs a labeling $\alpha: V\rightarrow [k]$ with misclassification rate $O({d\delta\epsilon}/{\phi^5})$ with probability $0.9$ over the internal randomness of the algorithm and the draw of $\sigma$.
\end{thm}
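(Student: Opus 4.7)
The plan is to first invoke \Cref{lemma:pi_computation} so that with probability at least $0.97$ over the randomness of the subroutine computing $(\tmu_i)_{i \in [k]}$ and the draw of $\sigma$, the approximate cluster means fed to \Cref{alg:polytime} satisfy \Cref{def:approxmeans}; this makes all the spectral objects of \Cref{sec:setting} (spectral clusters, cross vertices $\M$, impostors $\im$, spectral labeling $\tau$) well-defined and compliant with the structural lemmas of that section. Conditioning on this event, I would bound the expected number of misclassified vertices over the remaining randomness of $\sigma$ by partitioning misclassifications according to which of the four return lines \eqref{line:core_p}, \eqref{line:far_p}, \eqref{line:impostor_p}, \eqref{line:non-impostor_p} of \Cref{alg:polytime} produced the wrong answer, and finally apply Markov's inequality to convert the expected count into the stated high-probability bound.

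The two easy cases are \eqref{line:core_p} and \eqref{line:far_p}: a misclassified vertex is respectively an impostor or a cross vertex, and in both sub-cases $\sigma(u) \neq \iota(u)$ is forced, so by linearity of expectation and \Cref{lemma:impostor_n_cross_size} these contribute at most $\delta\,|\im \cup \M| = O(\delta\epsilon n / \phi^4)$ in expectation. For the false-negative case \eqref{line:non-impostor_p}, a misclassified vertex is a $(\iota(u) \to \tau(u))$-impostor that either has a wrong label (bounded as above) or has the correct label and failed the reachability test \eqref{line:conntest}. I would argue that the latter set $Q_{i, j} \subseteq \im(i, j) \cap \lab(j)$ satisfies $Q_{i, j} \cap \nei_G(\M) = \emptyset$ (otherwise the edge from $u \in Q_{i, j}$ to $\M$ lies in $G_{i, j}$ and witnesses reachability), which lets \Cref{lem:imposters_expand} apply and produce at least $(\phi/2)d|Q_{i, j}|$ edges from $Q_{i, j}$ into $\im(i, j) \setminus Q_{i, j}$. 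These edges can only land on mislabeled impostors, since a correctly labeled target would itself lie in $V_{i, j}$ and would propagate reachability from $u$ to $\M$. Dividing by $d$ yields $|Q_{i, j}| \le (2/\phi)|\im(i, j) \cap \mislabeled|$, and summing over $i \neq j$ gives $O(\delta \epsilon n / \phi^5)$ in expectation.

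The main obstacle, and the source of the extra $d$-factor in the rate, is the false-positive case \eqref{line:impostor_p}. A vertex $u$ misclassified there is necessarily mislabeled and satisfies $u \sim_{G_{i, j}} \M$ for $i = \tau(u)$, $j = \sigma(u)$. I would take a shortest such path $u = v_0, v_1, \ldots, v_m \in \M$ and look at the largest $l$ with $v_0, \ldots, v_l$ all mislabeled: either $l = m - 1$ and $v_l \in \nei_G(\M)$, or $v_{l+1}$ is correctly labeled and hence lies in $\spec(i) \cap C_j = \im(i, j)$, giving $v_l \in \nei_G(\im)$. Thus every case \eqref{line:impostor_p} misclassification lies in a connected component of the mislabeled subgraph of $G$ that contains a seed from $Y \coloneqq \nei_G(\M \cup \im)$, a set of size at most $d|\M \cup \im|$ by $d$-regularity. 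Each such mislabeled component can then be dominated by a Galton--Watson process with $\mathrm{Binom}(d, \delta)$ offspring, which is subcritical under the hypothesis $\delta d \le 1/100$ from \Cref{fig:setting} and has expected total progeny $O(1)$; summing over seeds gives expected contribution $O(\delta d |\M \cup \im|) = O(d \delta \epsilon n / \phi^4)$.

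Summing the four bounds yields $\E[X \cdot \1_E] = O(d \delta \epsilon n / \phi^5)$, where $X$ counts misclassifications and $E$ is the $0.97$-probability event that the approximate means are valid. A standard Markov bound on $X \cdot \1_E$ together with a union bound with $E$ then converts this to $\Pr[X \le c \cdot d\delta\epsilon n/\phi^5] \ge 0.9$ for an absolute constant $c$, giving the stated misclassification rate. The key technical subtleties, beyond routine case analysis, are (i) the observation in \eqref{line:non-impostor_p} that the expansion edges from \Cref{lem:imposters_expand} cannot reach correctly labeled impostors---this hinges on the precise definition of $G_{i, j}$ as the subgraph induced by $(\spec(i) \cap \lab(j)) \cup \M$---and (ii) the Galton--Watson comparison for \eqref{line:impostor_p}, which is exactly where the $d$-factor enters and motivates the more refined reachability primitive in \Cref{sec:sub_alg}.
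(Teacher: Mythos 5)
Your proposal is correct and follows essentially the same approach as the paper's proof: decompose misclassifications by the return line of \Cref{alg:polytime}, bound each piece via the same structural lemmas (impostor/cross vertex count from \Cref{lemma:impostor_n_cross_size} for lines \eqref{line:core_p}--\eqref{line:far_p}, the Galton--Watson comparison seeded in $\nei_G(\M \cup \im)$ for line \eqref{line:impostor_p}, and the expansion argument of \Cref{lem:imposters_expand} for line \eqref{line:non-impostor_p}), and conclude by Markov. Your false-negative argument, which works directly in $G_{\tau(u),\sigma(u)}$ and shows that the $(\phi/2)d|Q_{i,j}|$ expansion edges from $Q_{i,j}$ can land only on mislabeled impostors, is a mild streamlining of the paper's route through the auxiliary graph $G^*_{i,j}\setminus R_{i,j}$ in \Cref{claim:reachability_p} and \Cref{claim:proxy}, but it rests on the same observation and gives the same bound. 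The one step that deserves more care than your phrasing suggests is the final Markov/union bound: the permutation output by \Cref{alg:permutation} is a function of $\sigma$, so ``the approximate cluster means are valid'' is a $\sigma$-dependent event and you cannot directly condition the expectation bounds on it; the paper handles this via \eqref{eq:pr_justification_poly}, and your ``Markov on $X\cdot\1_E$ plus union bound'' is salvageable (and slightly cleaner) if you instead apply Markov to the misclassification count computed with the correct permutation $\pi^*$, then union-bound with $\Pr_\sigma[\pi(\sigma)\neq\pi^*]$, noting the two counts agree on the event $\{\pi(\sigma)=\pi^*\}$.
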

\noindent
To prove the theorem, we analyze the number of vertices misclassified by \Cref{alg:polytime}. We do so by bounding the number of vertices misclassified in each return point of \Cref{alg:polytime}, namely lines~\eqref{line:core_p},~\eqref{line:far_p},~\eqref{line:impostor_p},~\eqref{line:non-impostor_p}. We begin by bounding the expected number of vertices that can be misclassified in line~\eqref{line:core_p} or~\eqref{line:far_p}.

\begin{lemma}[Misclassification in lines~\eqref{line:core_p} and~\eqref{line:far_p}]\label{lemma:spec_or_cross}
Consider the setting of \Cref{fig:setting}, and let $\iota,\sigma,\tau$ as per \Cref{fig:labels}. The expected fraction of vertices $u \in V$ such that $\tau(u)$ is either $\sigma(u)$ or $*$ and $\sigma(u) \neq \iota(u)$, is at most $O({\delta\epsilon}/{\phi^4})$, i.e.

     \begin{equation*}
        \E_\sigma\left[\left|\{u \in V: \, \tau(u) \in \{\sigma(u),*\} \text{\upshape and } \sigma(u) \neq \iota(u) \}\right| \right] \le 2 \cdot 10^4 \cdot \eta^2 \cdot \frac{\epsilon \delta}{\phi^4 }n \, .
    \end{equation*}

    \end{lemma}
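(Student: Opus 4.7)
The plan is to show that the set of vertices under consideration is contained in $(\im \cup \M) \cap \mislabeled$, and then bound its expected size via \Cref{lemma:impostor_n_cross_size} and linearity of expectation over the draw of $\sigma$.

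First I would argue the containment. Let $u \in V$ satisfy $\tau(u) \in \{\sigma(u), *\}$ and $\sigma(u) \neq \iota(u)$. There are two cases. If $\tau(u) = *$, then by the definition of $\tau$ (\Cref{fig:labels}), we have $u \in \M$. If instead $\tau(u) = \sigma(u) \in [k]$, then $u \in \spec(\sigma(u))$, and since $\sigma(u) \neq \iota(u)$ and $u \in C_{\iota(u)}$, the vertex $u$ lies in $\spec(\sigma(u)) \cap C_{\iota(u)} = \im(\sigma(u), \iota(u)) \subseteq \im$ by \Cref{def:impostor}. In both cases, $u \in \im \cup \M$, and of course $u \in \mislabeled$ by \Cref{def:mislabeled}.

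Next I would take the expectation. The key observation is that the set $\im \cup \M$ is deterministic, i.e. it depends only on $G$ and the clustering $C_1, \dots, C_k$, and not on the random labeling $\sigma$ (cross vertices and impostors are defined purely in spectral terms in \Cref{def:spec}, \Cref{def:impostor}). By the input model \imlabel{}, each vertex $u$ is independently mislabeled with probability at most $\delta$, i.e. $\Pr_\sigma[\sigma(u)\neq\iota(u)] \le \delta$. Therefore, by linearity of expectation,
\begin{equation*}
\E_\sigma\left[\left|(\im \cup \M) \cap \mislabeled\right|\right] \;=\; \sum_{u \in \im \cup \M} \Pr_\sigma[\sigma(u)\neq\iota(u)] \;\le\; \delta\, |\im \cup \M|.
\end{equation*}

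Finally, applying \Cref{lemma:impostor_n_cross_size}, which gives $|\im \cup \M| \le 2 \cdot 10^4 \cdot \eta^2 \cdot \frac{\epsilon}{\phi^4}\, n$, yields the claimed bound. There is no significant obstacle in this argument; the content is essentially bookkeeping: identify the two reasons a vertex can be misclassified at lines \eqref{line:core_p} and \eqref{line:far_p} (it is either an impostor that happens to carry its own ``wrong'' spectral label, or a cross vertex whose label is wrong), and observe that both force the vertex into $\im \cup \M$, a set whose size is already under control.
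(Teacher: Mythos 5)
Your proof is correct and takes essentially the same approach as the paper's: the paper also observes that the misclassified set is contained in $\{v \in \im \cup \M : \sigma(v) \neq \iota(v)\}$ and concludes by linearity of expectation and \Cref{lemma:impostor_n_cross_size}. You have simply spelled out the case analysis for the containment and the expectation calculation in more detail than the paper's (quite terse) proof.
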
 
\begin{proof}
For any $\sigma$, the set of vertices $u \in V$ such that $\tau(u) \in \{\sigma(u),*\} \text{ and } \sigma(u) \neq \iota(u)$ is necessarily contained in $ \{v \in \im \cup \M: \,   \sigma(v) \neq \iota(v)\}$. By linearity of expectation and \Cref{lemma:impostor_n_cross_size}, we get the claim.
\end{proof}

\noindent
Next, we bound the expected number of vertices that can be misclassified in line~\eqref{line:impostor_p}, which can be thought of as the ``false positives'' of the connectivity test in line~\eqref{line:conntest}. Proving this bound relies on the randomness of $\sigma$ to relate the percolation of the connectivity test in line~\eqref{line:conntest} to a Galton-Watson type of process.

\begin{lemma}[Misclassification in line~\eqref{line:impostor_p}]
\label{lem:false_pos_p}
    Consider the setting of \Cref{fig:setting}, and let $\iota,\sigma,\tau$ as per \Cref{fig:labels}. The expected fraction of vertices $u \in V$ such that $\sigma(u)$ is wrong and $u$ is reachable from $\M$ in the $(\tau(u),\sigma(u))$-cross graph is at most $O({d\epsilon\delta}/{\phi^4})$, i.e.
    \begin{equation*}
        \E_\sigma\left[\left|\{u \in V: \,\sigma(u) \notin \{\iota(u), \tau(u)\}, \tau(u) \neq *\text{ \upshape and } u \sim
    _{G_{\tau(u),\sigma(u)}}\M\}\right| \right] \le 6 \cdot 10^4 \cdot \eta^2 \cdot \frac{d \epsilon \delta }{\phi^4 }n\, .
    \end{equation*}
\end{lemma}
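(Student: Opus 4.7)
The plan is to decompose the total count by the pair $(\tau(u),\sigma(u))=(i,j)$, bound each pair's contribution by a Galton-Watson style percolation argument on the mislabeled vertices, and then use \Cref{lemma:impostor_n_cross_size} to sum up. Fix $i\neq j\in[k]$ and let $W_{i,j}\coloneqq\{v\in V:\sigma(v)=j\text{ and }\iota(v)\neq j\}$ be the vertices that are labeled $j$ by $\sigma$ but do not belong to $C_j$. Any vertex $u$ contributing to the LHS with $\tau(u)=i$ and $\sigma(u)=j$ lies in $W_{i,j}\cap\spec(i)$ and admits a path $u=v_0,v_1,\dots,v_\ell$ in $G_{i,j}$ ending at $v_\ell\in\M$, with $v_0,\dots,v_{\ell-1}\in\spec(i)\cap\lab(j)$. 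Decomposing
\[
\spec(i)\cap\lab(j)=(\im(i,j)\cap\lab(j))\,\dot\cup\,(W_{i,j}\cap\spec(i)),
\]
I would let $t^*$ be the largest index such that $v_0,\dots,v_{t^*}\in W_{i,j}\cap\spec(i)$; since $\M\cap\spec(i)=\emptyset$ by \Cref{def:spec}, $t^*<\ell$, hence $v_{t^*+1}\in(\im(i,j)\cap\lab(j))\cup\M$, which puts $v_{t^*}$ in the ``boundary'' set $T_{i,j}\coloneqq\spec(i)\cap\lab(j)\cap\nei_G\bigl((\im(i,j)\cap\lab(j))\cup\M\bigr)$. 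Thus every problematic $u$ is connected in $G[W_{i,j}\cap\spec(i)]$ to some vertex in $T_{i,j}\cap W_{i,j}$, so
\[
|\{\text{problematic }u\text{ with }\tau(u)=i,\sigma(u)=j\}|\le\sum_{v\in T_{i,j}}\mathbf{1}[v\in W_{i,j}]\cdot|\mathcal{C}_v|,
\]
where $\mathcal{C}_v$ is the connected component of $v$ in $G[W_{i,j}\cap\spec(i)]$.

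For the expectation, I would exploit the fact that the labels $\sigma(v)$ are independent across $v$ by \imlabel, so each vertex $v$ with $\iota(v)\neq j$ is independently in $W_{i,j}$ with probability at most $\delta$ (and deterministically not in $W_{i,j}$ otherwise). A standard Galton-Watson bound then gives $\E[|\mathcal{C}_v|\mid v\in W_{i,j}]\le 1/(1-d\delta)\le 2$, using $d\delta\le 1/100$ from \Cref{fig:setting}. Multiplying by $\Pr[v\in W_{i,j}]\le\delta$ and summing over $v\in T_{i,j}$ gives $\E[|S_{i,j}|]\le 2\delta|T_{i,j}|$.

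The remaining task is to sum $|T_{i,j}|$ over $i\neq j$ without picking up a $k^2$ factor from $\M$. Bounding $|\nei_G(\im(i,j))|\le d|\im(i,j)|$ by $d$-regularity and using that the sets $\im(i,j)$ are disjoint across $(i,j)$ (\Cref{rem:disjoint_im}) gives $\sum_{i\neq j}|\nei_G(\im(i,j))|\le d|\im|$. For the $\M$ contribution, the key observation is that for each fixed $i$, the sets $\{\spec(i)\cap\lab(j)\}_{j\neq i}$ are disjoint, and similarly the sets $\{\spec(i)\}_{i\in[k]}$ are disjoint by \Cref{lemma:disjoint_balls}, so
\[
\sum_{i\neq j}|\nei_G(\M)\cap\spec(i)\cap\lab(j)|\le|\nei_G(\M)|\le d|\M|.
\]
Combining, $\sum_{i\neq j}|T_{i,j}|\le d(|\im|+|\M|)\le d|\im\cup\M|$, and \Cref{lemma:impostor_n_cross_size} bounds $|\im\cup\M|\le 2\cdot 10^4\eta^2\epsilon n/\phi^4$, yielding the claim.

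The main obstacle I foresee is the careful bookkeeping in the last step to avoid a $k^2$ loss from counting $\nei_G(\M)$ separately in each cross graph $G_{i,j}$; the resolution is to restrict $T_{i,j}$ to the part of $\nei_G(\M)$ that lies in $\spec(i)\cap\lab(j)$ and exploit disjointness of these sets across $(i,j)$. A secondary subtlety is verifying that the Galton-Watson domination is valid: the walk explores $W_{i,j}$ by revealing labels of neighbors only as needed, and since labels are independent across vertices by \imlabel{} and each vertex is queried at most once during the exploration, the cluster size is stochastically dominated by a Galton-Watson tree with offspring distribution $\mathrm{Bin}(d,\delta)$.
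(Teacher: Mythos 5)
Your proposal is correct and rests on the same core idea as the paper's proof: a percolation argument starting from $\nei_G(\im\cup\M)$ and spreading only through mislabeled vertices. The paper implements this as an iterated level-set BFS $S_1,S_2,\dots$ on the union of all cross graphs, proving a geometric decay $\E[|\widehat S_{i+1}|]\le \delta d\,\E[|\widehat S_i|]$ (\Cref{claim:exp_drop}); you instead work per pair $(i,j)$ and use a Galton--Watson domination for the cluster $\mathcal C_v$. These are equivalent formulations, and your packaging is arguably cleaner since it avoids the iterated sets and the appendix argument, while also not needing the looser containment the paper uses (which drags in $\{u\in\M:\sigma(u)\ne\iota(u)\}$ because it drops the constraint $\tau(u)\ne *$). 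Your constant comes out as $4\cdot 10^4$ versus the paper's $6\cdot 10^4$.

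One step needs tightening: since $T_{i,j}$ depends on $\sigma$ through $\lab(j)$, the inequality ``$\E[|S_{i,j}|]\le 2\delta|T_{i,j}|$'' is ill-formed as written (the right-hand side is a random variable), and replacing $T_{i,j}$ by the deterministic superset $\spec(i)\cap\nei_G(\im(i,j)\cup\M)$ costs you exactly the $(k-1)$ factor on the $\M$ contribution that you correctly flag as the main obstacle. The resolution is not to condition on $T_{i,j}$ but to expand the bound as $\sum_v \mathbf{1}[v\in T_{i,j}]\,\mathbf{1}[v\in W_{i,j}]\,|\mathcal C_v|$, observe that $\mathbf{1}[v\in T_{i,j}]\,\mathbf{1}[v\in W_{i,j}]\le \mathbf{1}[v\in\spec(i)\cap\nei_G(\im(i,j)\cup\M)]\cdot \mathbf{1}[\sigma(v)=j,\iota(v)\neq j]$ with the first factor now deterministic, and apply linearity of expectation. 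The no-$k$-loss for the $\M$ part then comes from $\sum_{j}\Pr[\sigma(v)=j,\,\iota(v)\neq j]\le\delta$ for each fixed $v$, not from a set-size count; the GW bound $\E[|\mathcal C_v|\mid\sigma(v)=j]\le 2$ applies because conditioning on $\sigma(v)$ does not affect the independent labels of other vertices. With that fix your sketch goes through and matches the lemma.
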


\begin{proof} 
    The proof idea can be summarized as follows: (1) consider a breadth-first search starting from the set of cross and impostor vertices $\M\cup \im$ that only traverses an edge if the other endpoint is mislabeled; (2) observe that any mislabeled vertex that can reach $\M$ in its cross graph is visited by this breadth-first search at distance $1$ or higher; (3) think of the breadth-first search as a Galton-Watson process whose offspring distribution has mean $\delta d < 1$, and conclude that we only visit $O(\delta d)|\M\cup \im|$ vertices on top of the starting ones. To make this formal, 
    let $S_1$ be the set of neighbors of cross and impostor vertices, i.e.

    \begin{equation*}
        S_1 = \nei_G(\M \cup \im) \, .
    \end{equation*}
    For a fixed labeling $\alpha$ (a possible realization of the random labeling $\sigma$),  we define the following objects.
    For all $i \in [n]$ we let $S_{i+1}(\alpha)$ be the set of neighbors in $G$ of the mislabeled vertices in $S_{i}(\alpha)$ that are not in any previously defined sets, i.e.
    \begin{equation*}
        S_{i+1}(\alpha) = \nei_G\left(\{u \in S_{i}(\alpha) : \alpha(u)\neq \iota(u)\}\right)\setminus \left(\bigcup_{j =1}^{i} S_j(\alpha)\right) \, .
    \end{equation*}
    Note that the definition of the set $S_1$ does not depend on the labeling $\alpha$ but, for the sake of uniformity of notation, we will still write $S_1(\alpha)$. For convenience, we let $\widehat{S}_i(\alpha)$ be the set of  vertices in $S_i(\alpha)$ with a wrong label, i.e.
    \begin{equation*}
        \widehat{S}_i(\alpha)= \{u \in S_i(\alpha) : \, \alpha(u)\neq \iota(u)\}\, ,
    \end{equation*}
    and let ${S}_{\le i}(\alpha)$ be the set of vertices in any of $S_1(\alpha),\dots,S_i(\alpha)$, i.e.
    \begin{equation*}
        S_{\le i}(\alpha) = \bigcup_{j =1}^i S_j(\alpha) \, ,
    \end{equation*}
    so we can rewrite
    \begin{equation*}
        S_{i+1}(\alpha)=\nei_G(\widehat{S}_i(\alpha))\setminus S_{\le i}(\alpha) \, .
    \end{equation*}
    Similarly, we also let $\widehat{S}_{\le i}(\alpha)$ be the set of vertices in any of $\widehat{S}_1(\alpha),\dots,\widehat{S}_i(\alpha)$, i.e.
    \begin{equation*}
        \widehat{S}_{\le i}(\alpha) = \bigcup_{j =1}^i \widehat{S}_j(\alpha) \, ,
    \end{equation*}
    Next, we relate these level sets to the quantity we want to ultimately bound.
    \begin{claim}\label{claim:sets_S}For any fixed $\alpha: V \rightarrow [k]$, one has
        \begin{equation*}
            \{u \in V: \, \alpha(u) \notin \{\iota(u), \tau(u)\}, \tau(u) \neq * \text{ and } u \sim_{G_{\tau(u),\sigma(u)}} \M\} \subseteq \widehat{S}_{\le n}(\alpha) \cup \{u \in \M: \, \alpha(u) \neq \iota(u)\}  \, .
        \end{equation*}
    \end{claim}
    \begin{proof} We let $G'=(V,E')$ be the subgraph of $G$ obtained by the union of all the $(i,j)$-cross graphs, i.e. we define
    \begin{equation*}
        E' =  \bigcup_{i,j \in [k]: \, i\neq j} E_{ij} \, .
    \end{equation*}
        Clearly, for any vertex $u \in V$ such that $\alpha(u) \notin \{\iota(u), \tau(u)\}, \tau(u) \neq * \text{ and } u \sim_{G_{\tau(u),\alpha(u)}} \M$ one has that $\alpha(u) \neq \iota(u) \text{ and } u \sim_{G'} \M$. Therefore, we prove
        \begin{equation}
        \label{eq:goal}
            \{u \in V: \,\alpha(u) \neq \iota(u) \text{ and } u \sim_{G'} \M\} \subseteq \widehat{S}_{\le n}(\alpha) \cup \{u \in \M: \, \alpha(u) \neq \iota(u)\}  \, .
        \end{equation}
        For any $u \in V$ such that $\alpha(u) \neq \iota(u)$ and $ u \sim_{G'} \M$, there exists a simple path $w_0,\dots,w_t$ for $0\le t \le n-1$ where $w_0 \in \M$, $w_t=u$, and $(w_{r-1},w_{r}) \in E'$ for all $r \in [t]$. Then, for every $t \in \{0,\dots,n-1\}$, we define the set $Z_t$ to be the set of vertices $u \in V$ with $\alpha(u) \neq \iota(u)$ such that there exists a simple path $w_0,\dots,w_t$ where $w_0 \in \M$, $w_t=u$, and $(w_{r-1},w_{r}) \in E'$ for all $r \in [t]$. Also, we define  \smash{$ \widehat{S}_{\le 0}(\alpha) \coloneqq \emptyset$}. We prove~\eqref{eq:goal} by showing by induction that for all $t \in \{0,\dots,n-1\}$ one has
        \begin{equation*}
            Z_t \subseteq  \widehat{S}_{\le t}(\alpha) \cup \{u \in \M: \, \alpha(u) \neq \iota(u)\} \, .
        \end{equation*}
        For the base case, let $u \in V$ be any vertex with  $\alpha(u)\neq \iota(u)$ which is also connected to $\M$ in $G'$ by a path of length $0$. Trivially, we have \smash{$u \in \{v \in \M: \, \alpha(v) \neq \iota(v)\}$}. For the inductive step, let $0\le t \le n-2$ and assume that \smash{$Z_t \subseteq  \widehat{S}_{\le t}(\alpha) \cup \{u \in \M: \, \alpha(u) \neq \iota(u)\}$}. Let \smash{$w \in Z_{t+1}\setminus Z_t$} be a vertex with \smash{$\alpha(w)\neq \iota(w)$} that is connected to \smash{$\M$} in \smash{$G'$} by a path of length $t+1$. Then, there is a vertex $u \in V$ connected to \smash{$\M$} in \smash{$G'$} by a path of length at most $t$ such that $w$ is a neighbor of \smash{$u$ in $G'$}. In particular, $w$ is a neighbor of $u$ in $G$. We then consider two cases.
        \begin{itemize}
            \item If $\alpha(u)\neq \iota(u)$, then $u \in Z_t$ and hence \smash{$u \in \widehat{S}_{\le t}(\alpha) \cup \{v \in \M: \, \alpha(v) \neq \iota(v)\}$} by the inductive hypothesis. Thus, \smash{$w \in S_{\le t+1}(\alpha)$}. Since $\alpha(w)\neq i(w)$ by assumption, we have that \smash{$w \in \widehat{S}_{\le t+1}(\alpha)$}.
            \item Suppose $\alpha(u)= \iota(u) = j$ for some $j \in [k]$. Note that since $u$ is not isolated in $G'$, $u \notin \spec(j)$. Hence, either $u \in \M$ or there exists $i \in [k]$ such that $i\neq j$ and $u \in \lab(j) \cap \spec(i)$. Hence, $u \in \im \cup \M$, which implies that $w \in S_1(\alpha)$. Recalling that $\alpha(w)\neq i(w)$, we conclude  $w \in \widehat{S}_1(\alpha)$.
        \end{itemize}
    \end{proof}
    \noindent
    By virtue of the above claim, it suffices to bound \smash{$|\widehat{S}_{\le n}(\alpha) \cup \{v \in \M: \, \alpha(v) \neq \iota(v)\}|$}. By linearity of expectation, the expected number of vertices in $\{u \in \M: \alpha(u)\neq \iota(u)\}$ is easily upper-bounded by $\delta |\M|$. Then, we only need to bound \smash{$|\widehat{S}_{\le n}(\alpha)|$}. To do so, we observe that for each $i \in [n]$, the expected size of \smash{$\widehat{S}_{i}(\alpha)$} drops by a factor of $\delta d$ at every level.
    \begin{restatable}{claim}{expdrop}
    \label{claim:exp_drop}
    For all $i =2,\dots,n-1$, one has $\E_\sigma[|\widehat{S}_{i+1}(\sigma)|] \le \delta d \cdot \E_\sigma[|\widehat{S}_{i}(\sigma)|]$.
    \end{restatable}
    \noindent
    We defer the proof of \Cref{claim:exp_drop} to \Cref{subsec:expdrop}, but the idea is simple: \smash{$\widehat{S}_{i+1}$} essentially consists of the mislabeled neighbors of \smash{$\widehat{S}_{i}$}, and since the graph is $d$-regular and each neighbor is mislabeled with probability $\delta$, we get a factor $\delta d$ drop. Then,
by virtue of \Cref{claim:exp_drop} and by definition of $S_1$, we have
    \begin{equation*}
        \E_\sigma\left[|\widehat{S}_i(\sigma)|\right] \le (\delta d)^i \cdot |\M\cup\im|
    \end{equation*}
    for every $i \in [n]$. Then, by linearity of expectation, we get
    \begin{equation*}
        \E_\sigma\left[|\widehat{S}_{\le n}(\sigma)|\right] \le |\M\cup\im| \sum_{i =1}^n (\delta d)^i \le 2 \delta  d|\M \cup \im| \, .
    \end{equation*}
    By \Cref{lemma:impostor_n_cross_size}, we conclude the proof.
\end{proof}

\noindent
Finally, we bound the number of vertices that can be misclassified in line~\eqref{line:non-impostor_p}. These vertices can be thought of as the ``false negatives'' of the connectivity test in line~\eqref{line:conntest}. To show this last bound, we rely on the randomness of $\sigma$ in order to apply Markov's inequality to the number of mislabeled impostors.

\begin{lemma}[Misclassification in line~\eqref{line:non-impostor_p}]
\label{lem:false_neg_p}
    Consider the setting of \Cref{fig:setting}, and let $\iota,\sigma,\tau$ as per \Cref{fig:labels}. The expected fraction of vertices over the draw of $\sigma$ such that $\tau(u) \in [k]$ is wrong but $\sigma(u)$ is correct and $u$ is not reachable from $\M$ in the $(\tau(u),\sigma(u))$-cross graph is at most $O(\epsilon\delta/\phi^5)$, i.e.
    \begin{equation*}
        \E_\sigma\left[\left|\{u \in V: \, \tau(u) \notin \{*,\sigma(u)\} \text{ and }  \sigma(u) = \iota(u)  \text{ and } u \slashed{\sim}_{G_{\tau(u),\sigma(u)}} \M\}\right| \right] \le 4 \cdot 10^4 \cdot \eta^2 \cdot \frac{\epsilon \delta}{\phi^5 }n \, .
    \end{equation*}
    In other words, at most $4 \cdot 10^4 \cdot \eta^2 \cdot {\epsilon \delta}/{\phi^5 } \cdot n$ correctly labeled impostor vertices $u$ are not connected to the set $\M$ in the cross graph $G_{\tau(u), \sigma(u)}$, in expectation over $\sigma$.
\end{lemma}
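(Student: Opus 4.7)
The plan is to isolate the set of problematic vertices pair by pair and then use the expansion guarantee from \Cref{lem:imposters_expand}, which was designed precisely to handle this kind of bad event. For every $i \neq j \in [k]$, let
\[ Q_{i,j}(\sigma) \coloneqq \{u \in \im(i,j) : \sigma(u) = j \text{ and } u \slashed{\sim}_{G_{i,j}} \M\} \]
be the correctly labeled $(j \to i)$-impostors that fail the reachability test in the $(i,j)$-cross graph. Any vertex $u$ counted by the lemma has $\tau(u) = i \neq \sigma(u) = \iota(u) = j$ for some $i \neq j$, hence $u \in \im(i,j)$ with correct label $j$, and the failure of reachability places it in $Q_{i,j}(\sigma)$. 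Since the impostor sets are pairwise disjoint by \Cref{rem:disjoint_im}, the target count is $\sum_{i \neq j} |Q_{i,j}(\sigma)|$.

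First, I would verify that $Q_{i,j}(\sigma) \subseteq \im(i,j) \setminus \nei_G(\M)$: any $u \in Q_{i,j}(\sigma)$ lies in $V_{i,j} = (\spec(i) \cap \lab(j)) \cup \M$, so an edge from $u$ to $\M$ in $G$ would also be an edge of $G_{i,j}$ (since $\M \subseteq V_{i,j}$), which would immediately contradict $u \slashed{\sim}_{G_{i,j}} \M$. This allows me to invoke \Cref{lem:imposters_expand} with $S = Q_{i,j}(\sigma)$ and conclude
\[ |E(Q_{i,j}(\sigma), \im(i,j) \setminus Q_{i,j}(\sigma))| \geq \frac{\phi}{2} d \cdot |Q_{i,j}(\sigma)|. \]

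Next comes the main structural observation: any edge from $u \in Q_{i,j}(\sigma)$ to $v \in \im(i,j) \setminus Q_{i,j}(\sigma)$ must have $\sigma(v) \neq j$. Indeed, if $\sigma(v) = j$ then $v \in V_{i,j}$, so the edge $(u,v)$ belongs to $E_{i,j}$; but $v \notin Q_{i,j}(\sigma)$ forces $v \sim_{G_{i,j}} \M$, and concatenating the edge with $v$'s path would give $u \sim_{G_{i,j}} \M$, contradicting $u \in Q_{i,j}(\sigma)$. Therefore, letting $R_{i,j}(\sigma) \coloneqq \{v \in \im(i,j) : \sigma(v) \neq j\}$, all the expander-guaranteed edges land in $R_{i,j}(\sigma)$, and by $d$-regularity
\[ \frac{\phi}{2} d \cdot |Q_{i,j}(\sigma)| \leq |E(Q_{i,j}(\sigma), R_{i,j}(\sigma))| \leq d \cdot |R_{i,j}(\sigma)|, \]
which rearranges to $|Q_{i,j}(\sigma)| \leq \frac{2}{\phi} |R_{i,j}(\sigma)|$.

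To finish, I would take expectation over $\sigma$. Since $\im(i,j) \subseteq C_j$ and each vertex in $C_j$ is mislabeled with probability at most $\delta$ by \imlabel{}, linearity of expectation gives $\E_\sigma[|R_{i,j}(\sigma)|] \leq \delta |\im(i,j)|$. Summing over the pairs $i \neq j$ and using \Cref{rem:disjoint_im} together with the bound $|\im| \leq 2 \cdot 10^4 \eta^2 \epsilon/\phi^4 \cdot n$ from \Cref{lemma:impostor_n_cross_size} yields
\[ \E_\sigma\Bigl[\sum_{i \neq j} |Q_{i,j}(\sigma)|\Bigr] \leq \frac{2\delta}{\phi} \cdot |\im| \leq 4 \cdot 10^4 \cdot \eta^2 \cdot \frac{\epsilon \delta}{\phi^5} n, \]
matching the claimed bound. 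The only subtlety I anticipate is the bookkeeping argument that edges leaving $Q_{i,j}(\sigma)$ within $\im(i,j)$ must go to mislabeled impostors; everything else is a direct application of the pre-established expansion lemma and size bound on impostors.
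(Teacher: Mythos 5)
Your proof is correct and follows essentially the same approach as the paper's: you identify the failing set $Q_{i,j}$, verify it avoids $\nei_G(\M)$ so that \Cref{lem:imposters_expand} applies, observe that the expansion edges within $\im(i,j)$ can only go to mislabeled impostors, rearrange to get $|Q_{i,j}| \leq \tfrac{2}{\phi}|R_{i,j}|$, and take expectation. The paper packages steps two and three as a standalone claim about deleting an arbitrary set $R$ from the induced subgraph $G^*_{i,j}$ on $\im(i,j)\cup\M$ (\Cref{claim:reachability_p}) plus a reduction from $G_{\tau(u),\sigma(u)}$ to $G^*_{i,j}\setminus R_{i,j}$ (\Cref{claim:proxy}), whereas you argue directly in $G_{i,j}$; the key inequality and the final accounting are identical.
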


\begin{proof} The main idea of this proof is as follows. Recall that in \Cref{lem:imposters_expand} we showed that for all $i \neq j$ the set $\im(i, j)\setminus \nei_G(\M)$ has certain expander properties: every subset $S$ of $\im(i, j)\setminus \nei_G(\M)$ has many edges going to $\im(i, j)\setminus S$. From here, we get that all vertices in $\im(i, j)$ are connected to $\M$ in the graph induced by $\im(i, j)\cup \M$. Moreover, as we show in \Cref{claim:reachability_p}, removing an arbitrary set of vertices $R$ from $\im(i, j)$ prevents only a few vertices in $\im(i, j)\setminus R$ from reaching $\M$ in the graph induced by $(\im(i, j)\setminus R)\cup \M$. If we choose $R \coloneqq \im(i, j) \cap \mislabeled$ (recall from  \Cref{def:mislabeled} that $\Lambda$ denotes the set of misclassified vertices), then the above implies the following: all the correctly labeled vertices in $\im(i, j)$, except for a small subset, are connected to the set $\M$ only through correctly labeled vertices in $\im(i, j)$. This is even stronger than the statement of \Cref{lem:false_neg_p}, as for every correctly labeled $u \in \im(i, j)$ we have that the cross graph $G_{i, j}$ contains both correctly labeled vertices from $\im(i, j)$ and some mislabeled vertices from $\spec(i)$.

We now formalize the above argument. 
First, we introduce notation for the induced graphs which we discussed above. Let $E^*_{i,j}=\{(u,v) \in E: \, u,v \in \im(i,j) \cup \M\}$ and let $G^*_{i, j} = (\im(i, j) \cup \M, E^*_{i, j})$ be the subgraph of $G$ induced by $\im(i, j) \cup \M$. Additionally, for a set of vertices $R \subseteq V$, we denote by $E^*_{i,j}\setminus R $ the subset of edges of $E^*_{i,j}$ except all of the edges incident on $R$ (i.e. $E^*_{i,j}\setminus R = E^*_{i,j}\setminus \{(u,v) \in E: u \in R \text{ or } v \in R\} $), and define $G^*_{i,j}\setminus R=((\im(i, j) \cup \M)\setminus R ,E^*_{i,j}\setminus R)$~-- the subgraph of $G$ induced by $(\im(i, j) \cup \M)\setminus R$. With this notation, we first argue that removing a few vertices from $G^*_{i,j}$ only causes a few more to become disconnected from $\M$.
\begin{claim}\label{claim:reachability_p} Let $i,j \in [k] $ with $i \neq j$ and let $R \subseteq \im(i,j) $. Then, number of vertices in $\im(i,j)$ that are not reachable from $\M$  in $G^*_{i,j}\setminus R$ is at most ${2|R|}/{\phi}$.
\end{claim}
\begin{proof}
Let $S \subseteq \im(i,j)\setminus R$ denote the set of vertices that are isolated from $\M$ in $G^*_{i,j}\setminus R$, i.e.
\begin{equation*}
    S = \left\{ u \in \im(i,j)\setminus R: \, u \slashed{\sim}_{G^*_{i,j}\setminus R} \M\right\} \, .
\end{equation*}
First, we argue that $S \cap \nei_G(\M) = \emptyset$. Indeed,  every vertex in $\nei_G(\M) \setminus R$ has a neighbor in $\M$. But from the definition of $S$, no vertex in $S$ can have a neighbor in $\M$. Therefore, $S \cap \nei_G(\M) = \emptyset$.  Hence, we have $S \subseteq \im(i,j) \setminus \nei_G(\M)$, which allows us to apply \Cref{lem:imposters_expand} and get 

$$|E(S, \im(i,j) \setminus S)|\geq \frac{\phi}{2}d|S| \, .$$
\noindent
Note that $\im(i,j) \setminus S $ consists of two types of vertices: the vertices reachable from $\M$ in $G^*_{i,j}\setminus R$, and the vertices in $R$.  Observe that $S$ cannot have any edges to the vertices reachable from $\M$ in $G^*_{i,j}\setminus R$, or else there would be at least one vertex in $S$ which is reachable from $\M$ which contradicts the definition of $S$. So $ E(S, \im(i,j)  \setminus S) \subseteq  E(S, R).$ Therefore, we have
\[\frac{\phi}{2}|S|d \leq |E(S, \im(i,j)  \setminus S)| \le |E(S, R)| \leq |R|d \, ,\]
which rearranges to $|S| \leq {2|R|}/{\phi}.$

\end{proof}
\noindent
We define the set $R_{i,j} = \im(i, j)\cap\mislabeled$ of mislabeled $(j \to i)$-impostors for each $i,j \in [k]$ with $i \neq j$. Just as discussed in the proof sketch, we would like to apply \Cref{claim:reachability_p} with $R \coloneqq R_{i,j}$ to each of the graphs $G^*_{i,j}$. We do this in \Cref{claim:proxy} and derive from that the statement of \Cref{lem:false_neg_p}. 

\begin{claim}
\label{claim:proxy}
   Suppose $u \in V$ is such that $\tau(u) \notin \{*, \sigma(u)\}$, $ \sigma(u) = \iota(u)$, and \smash{$ u \slashed{\sim}_{G_{\tau(u), \sigma(u)}} \M$}.  For $i = \tau(u)$ and $j = \iota(u)$ we have that \smash{$u \slashed{\sim}_{G^*_{i,j}\setminus R_{i,j}} \M$} and $u \in \im(i,j)$.
\end{claim}

\begin{proof}
    By the statement of the claim, $u$ is a correctly labeled impostor, and so, by the choice of $i$ and $j$, $u \in \im(i,j)$. Since $G^*_{i, j}\setminus R_{i, j}$ is the subgraph of $G$ induced by $(\im(i, j) \cup \M)\setminus R_{i,j}$ we have that $E_{\tau(u), \sigma(u)} \supseteq E^*_{i,j}\setminus R_{i,j}$, as $G_{\tau(u), \sigma(u)}$ contains $G^*_{i, j}\setminus R_{i, j}$ as a subgraph. Hence, the claim follows. 
\end{proof}

\noindent
By virtue of \Cref{claim:proxy} and \Cref{claim:reachability_p}, for any fixed outcome of $\sigma$ we have
\begin{align*}
    \left|\{u \in V: \, \tau(u) \notin \{*,\sigma(u)\},\, \sigma(u) = \iota(u)\, , u \slashed{\sim}_{G_{\tau(u), \sigma(u)}} \M\}\right| & \le \sum_{\substack{i,j \in [k]:\\i\neq j}} \left|\left\{ u \in \im(i,j): u \slashed{\sim}_{G^*_{i,j}\setminus R_{i,j}} \M\right\}\right|  && \text{by \Cref{claim:proxy}} \\
    & \le \sum_{\substack{i,j \in [k]:\\i\neq j}} \frac{2}{\phi }|R_{i,j}| && \text{by \Cref{claim:reachability_p}} \\
    & = \frac{2}{\phi }|\im \cap \mislabeled| \, .
\end{align*}
Finally, taking the expectation over the random labeling $\sigma$, we get
\begin{align*}
    \E_{\sigma}\left[  \left|\{u \in V: \, \tau(u) \notin \{*,\sigma(u)\},\, \sigma(u) = \iota(u)\, , u \slashed{\sim}_{G_{\tau(u), \sigma(u)}} \M\}\right|\right] & \le \frac{2}{\phi}\E_{\sigma}\left[|\im\cap \mislabeled|\right] \\
    & \le \frac{2\delta}{\phi}\left|\im\right| \qquad && \text{by \Cref{fig:labels}} \\
    & \le 4 \cdot 10^4 \cdot \eta^2 \cdot \frac{\epsilon \delta}{\phi^5 }n && \text{by \Cref{lemma:impostor_n_cross_size}.}
\end{align*}

\end{proof}

\noindent
Combining the above three lemmas, we can conclude a proof of the main result of the section.
\begin{proof}[Proof of \Cref{thm:polytime}]

The polynomial time algorithm which we refer to in the statement of \Cref{thm:polytime} consists of two blocks.
    \begin{enumerate}
        \item First, compute our approximate inner product function via \Cref{rem:euclidian_oracle} with $\xi = 1/n^{10}$ so as to meet the requirement of \Cref{fig:setting}.
        \item Then, we run \Cref{alg:app_centers} and \Cref{alg:permutation} (see Appendix \ref{subsec:apxmeans}) to obtain $\tmu_1,\dots,\tmu_k$\footnote{In reality,  \Cref{alg:app_centers} and \Cref{alg:permutation} do not return $k$ vectors but rather $k$ vertices $u_1,\dots,u_k$ that indicate the embedding that should be used for the corresponding approximate cluster mean. This is because we do not have explicit access to vectors, but only inner product access. Hence, \Cref{alg:polytime} will query $\langle f_{u_i}, \cdot\rangle_\apx$ whenever it wants to access $\langle \tmu_i, \cdot\rangle_\apx$. We write $\mu_1,\dots,\mu_k$ for readability.};
        \item Finally, we launch \Cref{alg:polytime} with $\tmu_1,\dots,\tmu_k$ as input to classify each of the vertices $v \in V$. Let $\alpha : V\rightarrow [k]$ be the mapping of a vertex $u \in V$ to the cluster id output by \Cref{alg:polytime} on input $u$.
    \end{enumerate}

\paragraph{Probability of success and mislassification rate.}

Since both \Cref{alg:polytime} and \Cref{alg:permutation} use the randomness of $\sigma$, we should be careful while analyzing the probability of success  of the above algorithm.  \Cref{alg:app_centers} and \Cref{alg:permutation} use additional random bits (for sampling) different from the random bits of $\sigma$, and therefore this randomness is independent from the randomness of $\sigma$. \Cref{lemma:pi_computation} shows that with probability $0.97$ over the randomness of \Cref{alg:app_centers} and \Cref{alg:permutation} one has that $\tmu_1,\dots,\tmu_k$ meet the condition of \Cref{def:approxmeans} with probability $0.999$ over $\sigma$.

Let us then fix the internal randomness \Cref{alg:app_centers} and \Cref{alg:permutation} such that $\tmu_1,\dots,\tmu_k$ meet the condition of \Cref{def:approxmeans} with probability $0.999$ over $\sigma$. One can note that $\sigma$ only influences the ordering of the $k$ vectors $\tmu_1,\dots,\tmu_k$: this is because the set of $k$ vectors is computed by \Cref{alg:app_centers} and \Cref{alg:permutation} computes the permutation $\pi$ that gives their ordering, but \Cref{alg:app_centers} does not use $\sigma$ at all. Hence, when the randomness of \Cref{alg:app_centers} and \Cref{alg:permutation} is fixed, the labeling $\alpha$ constructed with \Cref{alg:polytime} only depends on $\sigma$ and on the permutation $\pi$ produced by \Cref{alg:permutation}. Let $\textsc{Fail}(\sigma, \pi(\sigma))$ denote the event that the labeling $\alpha$ is wrong on more than $10^9 \eta^2 d\epsilon\delta/\phi^5 \cdot n$ vertices when \Cref{alg:polytime} is given as input the labels and $\tmu_1,\dots,\tmu_k$ which are indexed using the permutation $\pi$ (which is an output of \Cref{alg:permutation} and therefore it depends on $\sigma$). Observe that \Cref{alg:polytime} itself is deterministic, and therefore $\textsc{Fail}$ is a deterministic function of a labeling and a permutation. Let $\pi^*$ denote the correct permutation for the set of vectors produced by \Cref{alg:app_centers}. 
\begin{equation}\label{eq:pr_justification_poly}
    \begin{aligned}
    & \hphantom{=~} \Pr_{\sigma}[\textsc{Fail}(\sigma, \pi(\sigma))] \\
     & = \Pr_{\sigma}\left[\textsc{Fail}(\sigma, \pi(\sigma) ) \middle| \pi(\sigma) = \pi^*\right]\cdot \Pr_{\sigma}\left[\pi(\sigma) = \pi^*\right] + \Pr_{\sigma}\left[\textsc{Fail}(\sigma, \pi(\sigma) ) \middle| \pi(\sigma) \neq \pi^*\right]\cdot\Pr_{\sigma}[\pi(\sigma) \neq \pi^*] \\
    &\leq \Pr_{\sigma}\left[\textsc{Fail}(\sigma, \pi(\sigma) )\middle| \pi(\sigma) = \pi^*\right] + \Pr_{\sigma}[\pi(\sigma) \neq \pi^*] \\
    & \leq 2\Pr_{\sigma}[\textsc{Fail}(\sigma, \pi(\sigma) ) \land \pi(\sigma) = \pi^*] + \Pr_{\sigma}[\pi(\sigma) \neq \pi^*] \\
    & =2\Pr_{\sigma}[\textsc{Fail}(\sigma, \pi^*) \land \pi(\sigma) = \pi^*] + \Pr_{\sigma}[\pi(\sigma) \neq \pi^*] \\
    & \leq 2\Pr_{\sigma}[\textsc{Fail}(\sigma, \pi^*)] + \Pr_{\sigma}[\pi(\sigma) \neq \pi^*] \\
    & \leq 2\Pr_{\sigma}[\textsc{Fail}(\sigma, \pi^*)] +  10^{-3} \quad  \quad \text{by the fixed internal randomness \Cref{alg:app_centers} and \Cref{alg:permutation}.}
\end{aligned}
\end{equation}
 Observe that $\Pr_{\sigma}[\textsc{Fail}(\sigma, \pi^*)]$ is precisely the probability of failure of \Cref{alg:polytime}. To bound it, we now define four groups of vertices $W_1,W_2,W_3,W_4$ that together cover the set of vertices $u \in V$ such that $\alpha(u)\neq \iota(u)$.
    \begin{itemize}
        \item  Let $W_1$ be the set of vertices $u \in V$ such that $\tau(u) \in \{*,\sigma(u)\}$ and $\alpha(u)\neq \iota(u)$. For every $u \in W_1$, we know that $\alpha(u)=\sigma(u)$ by definition of \Cref{alg:polytime} (see the return statements in lines~\eqref{line:core_p} and~\eqref{line:far_p}). Then, each $u \in W_1$ satisfies $\tau(u) \in \{*,\sigma(u)\}$ and $\sigma(u)\neq \iota(u)$, so by virtue of \Cref{lemma:spec_or_cross} we know that the expected size of $W_1$ is at most $2 \cdot 10^4 \cdot \eta^2\phi^{-4} \cdot \epsilon \delta \cdot n$.
        \item Let $W_2$ be the set of vertices $u \in V$ such that  $\tau(u) \notin \{*,\sigma(u)\}$, $ u \sim_{G_{\tau(u),\sigma(u)}}\M $ and $\alpha(u)\neq \iota(u)$. For every $u \in W_2$, we know that $\alpha(u)=\sigma(u)$ by definition of \Cref{alg:polytime} (see the return statement in line~\eqref{line:impostor_p}). Then, each $u \in W_2$ satisfies $\sigma(u) \notin \{\iota(u), \tau(u)\}, \tau(u) \neq *$ and \smash{$ u \sim_{G_{\tau(u),\sigma(u)}}\M$}, so by \Cref{lem:false_pos_p} we know that the expected size of $W_2$ is at most \smash{$6 \cdot 10^4 \cdot \eta^2\phi^{-4} \cdot d\epsilon \delta \cdot n$}.
        \item Let $W_3$ be the set of vertices $u \in V$ such that $\tau(u) \notin \{*,\sigma(u)\}$, \smash{$ u \slashed{\sim}_{G_{\tau(u),\sigma(u)}}\M $}, $\sigma(u)=\iota(u)$, and $\alpha(u)\neq \iota(u)$. Then, each $u \in W_3$ satisfies the three conditions $\tau(u) \notin \{*,\sigma(u)\}$, $\sigma(u)=\iota(u)$,  and \smash{$ u \slashed{\sim}_{G_{\tau(u),\sigma(u)}}\M $}, so by \Cref{lem:false_neg_p} we know that the expected size of $W_3$ is at most \smash{$4 \cdot 10^4 \cdot \eta^2\phi^{-5} \cdot \epsilon \delta \cdot n$}.
        \item Let $W_4$ be the set of vertices $u \in V$ such that $\tau(u) \notin \{*,\sigma(u)\}$, \smash{$ u \slashed{\sim}_{G_{\tau(u),\sigma(u)}}\M $}, $\sigma(u)\neq\iota(u)$, and $\alpha(u)\neq \iota(u)$. For every $u \in W_4$, we know that $\alpha(u)=\tau(u)$ by definition of \Cref{alg:polytime} (see the return statement in line~\eqref{line:non-impostor_p}). Then, each vertex $u \in W_4$ satisfies $\tau(u)\neq \iota(u)$ and $\sigma(u)\neq\iota(u)$, which implies that $W_4 \subseteq \{u \in \im: \, \sigma(u) \neq \iota(u)\}$. By linearity of expectation and \Cref{lemma:impostor_n_cross_size}, the expected size of $W_4$ is at most $2 \cdot 10^4 \cdot \eta^2\phi^{-4} \cdot \epsilon \delta \cdot n$.
    \end{itemize}
    Thus, by a simple application of Markov's inequality we have an algorithm that misclassifies at most a \smash{$O({d\epsilon\delta}/{\phi^5})$} fraction of vertices with probability
    \[\Pr_{\sigma}[\textsc{Fail}(\sigma, \pi^*)] \leq 10^{-3} \, .\]
    By~\eqref{eq:pr_justification_poly}, we arrive at
    \[  \Pr_{\sigma}[\textsc{Fail}(\sigma, \pi(\sigma))] \leq 2\cdot10^{-3} +  10^{-3} \leq 0.99 \, .\]
    Since by \Cref{lemma:pi_computation} the probability that the internal randomness of \Cref{alg:app_centers} and \Cref{alg:permutation} meets our requirement is $0.97$, we get an overall success probability of $0.96$ as required.
\paragraph{Running time.} The running time of the algorithm is dominated by by the time needed to approximate $\langle f_u, f_v\rangle$ for all pairs $u, v \in V$ and the time needed to check whether the set $\M$ is reachable from a vertex $u$ for all $u \in V$. We approximate the spectral inner products with the polynomial-time algorithm in \Cref{rem:euclidian_oracle}, and we check reachability with the BFS algorithm.
\end{proof}

\section{Sublinear-time classifier with near-optimal rate}\label{sec:sub_alg}
In this section we outline and analyze a sublinear time algorithm that given the graph, the approximate cluster means, and the perturbed label, outputs a cluster id for each vertex. We will show that the number of vertices such that the output id is different from $\iota(u)$ is at most a ${\epsilon\delta/\phi^6 \cdot\log(1/\delta)}$ fraction.

\begin{algorithm}
\caption{Sublinear-time classifier}\label{alg:sketch+labels}
\begin{algorithmic}[1]
\State \textbf{Input:} $G, \phi, \eta, (\widetilde{\mu}_i)_{i \in [k]}$ as per \Cref{fig:setting}, $\sigma$ as per \Cref{fig:labels}, and a vertex $u \in V$
\State \textbf{Output:} a cluster id in $[k]$

\If{$\tau(u) = \sigma(u)$}
    \State \Return $\sigma(u)$ \Comment{Label agrees with spectral information} \label{line:core}
\ElsIf{$\tau(u) = *$}
    \State \Return $\sigma(u)$ \Comment{Spectral information is ambiguous} \label{line:far}
\Else
    \Comment{Label disagrees with spectral information}

    \If{\Call{\robcon}{$G, \phi, \eta, (\widetilde{\mu}_i)_{i \in [k]}, \sigma, u$} returns ``yes''} \Comment{See \Cref{alg:rob_con}}\label{line:if_impostor}
        \State \Return $\sigma(u)$ \Comment{Trust the label, $u$ is likely a spectral impostor} \label{line:impostor}
    \Else \label{line:if_non-impostor}
        \State \Return $\tau(u)$ \Comment{Trust spectral information, the label is likely wrong} \label{line:non-impostor}
    \EndIf
\EndIf

\end{algorithmic}
\end{algorithm}

\noindent
Note that \Cref{alg:sketch+labels} is very similar to \Cref{alg:polytime}. For a vertex $u$, both \Cref{alg:sketch+labels} and \Cref{alg:polytime} first check if the label and the spectral information about $u$ indicate the same cluster and whether $u$ is a cross vertex (as per \Cref{def:spec}). After this, \Cref{alg:polytime} simply checks whether there exists a walk in the graph $G_{\tau(u), \sigma(u)}$ connecting $u$ to some vertex in $\nei_G(\M)$. In the analysis of \Cref{alg:polytime}, \Cref{lem:false_pos_p} and \Cref{lem:false_neg_p} informally assert the following: for all except for $\approx d\epsilon\delta/\poly(\phi) \cdot n$ many vertices $u$ that are not cross vertices and for which the spectral information $\tau(u)$ disagrees with the label $\sigma(u)$, there exists a walk in the graph $G_{\tau(u), \sigma(u)}$ connecting $u$ and some vertex in $\nei_G(\M)$ if and only if $\sigma(u) = \iota(u)$. As it turns out, a stronger version of this statement is true: 

\begin{lemma*}[Informal -- see \Cref{claim:false_pos} and \Cref{cor:false_neg_updated}]
    There exists  a set $\NM \subseteq \nei_G(\M)$ of a much smaller size than $\nei_G(\M)$ such that the following is true: for all except $\approx \epsilon\delta\log(1/\delta)n$ many vertices $u$ that are not cross vertices and for which the spectral information $\tau(u)$ disagrees with the label $\sigma(u)$, one has that
\begin{itemize}
    \item if $\sigma(u) = \iota(u)$, then \emph{almost all} of bounded-length walks in $G_{\tau(u), \sigma(u)}$ starting at $u$ pass through $\NM$;
    \item if $\sigma(u) \neq \iota(u)$ then \emph{almost no} bounded-length walk in $G_{\tau(u), \sigma(u)}$ starting at $u$ passes through $\NM$.
\end{itemize}
\end{lemma*}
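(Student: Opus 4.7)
The plan is to define $\NM$ as the subset of $\nei_G(\M)$ consisting of vertices with unusually many neighbors in $\M$, and then to strengthen the structural lemmas from \Cref{subsec:specinfo} so that they hold outside $\NM$ rather than outside all of $\nei_G(\M)$. Concretely, I would set
\begin{equation*}
    \NM \coloneqq \bigl\{u \in V : |\nei_G(u) \cap \M| \geq c_0 \phi d\bigr\}
\end{equation*}
for a suitable small absolute constant $c_0$. A double-counting of edges between $\M$ and $\NM$ then yields $|\NM| \leq |\M|/(c_0 \phi)$, i.e. $|\NM| = O(\epsilon n / \phi^{5})$, which is a factor of $d$ smaller than the trivial bound $|\nei_G(\M)| \leq d|\M|$. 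Because any vertex $u \in \spec(i) \setminus \NM$ still has at most $c_0 \phi d$ of its edges going to $\M$, the proofs of \Cref{lemma:cores_sparsely_connected} and \Cref{lem:imposters_expand} go through nearly verbatim -- the extra $c_0 \phi d$ error term is absorbed into the $\frac{\phi}{2}d$ bound -- yielding strengthened statements that apply to all of $V \setminus \NM$. This is what the excerpt alludes to in \Cref{remark:NM_forward_ref}.

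Equipped with these strengthened lemmas, the \robcon{} oracle at \Cref{line:if_impostor} of \Cref{alg:sketch+labels} is designed to run several random walks of length $L = \Theta(\log(1/\delta)/\phi^{2})$ in $G_{\tau(u),\sigma(u)}$ and to declare ``yes'' iff a sufficiently large fraction of them pass through $\NM$. For the false-negative direction, starting from a correctly-labeled impostor $u \in \im(\tau(u),\sigma(u)) \setminus \NM$, the strengthened expansion statement implies that the subgraph of $G_{\tau(u),\sigma(u)}$ induced by the correctly-labeled impostors together with $\M$ has constant-order conductance relative to $\NM$; the choice of $L$ is then dictated by the standard mixing/hitting-time bound and ensures that a random walk visits $\NM$ with constant probability. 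The correctly-labeled impostors for which this fails are contained, via an iterated version of \Cref{claim:reachability_p} applied level-by-level, in a set of expected size $O(L \cdot |\im \cap \mislabeled| / \phi) = O(\epsilon \delta \log(1/\delta) n / \phi^{6})$, where the $\log(1/\delta)$ arises from taking a union bound over the $L$ walk steps.

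For the false-positive direction, starting from a mislabeled vertex $u \in \spec(i) \cap \lab(j) \cap \mislabeled$ with $\tau(u) \neq *$, the key observation is that any walk in $G_{\tau(u),\sigma(u)}$ that reaches $\NM$ either starts in $\nei_G(\NM)$ or proceeds through a chain of vertices in $\spec(i) \cap \lab(j)$ most of which are themselves mislabeled; here the fact that we go through $\NM$ rather than $\nei_G(\M)$ is what saves a factor of $d$. A Galton--Watson-style argument analogous to the proof of \Cref{lem:false_pos_p}, applied now to length-$L$ random walks and seeded at $\nei_G(\NM \cup \im)$, bounds the expected number of mislabeled vertices whose walk distribution has nontrivial hitting probability by $O(L \cdot \delta \cdot |\NM \cup \im|) = O(\epsilon \delta \log(1/\delta) n / \phi^{6})$. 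The main technical obstacle will be the false-negative analysis: making the walk-hitting bound quantitatively tight outside $\NM$, so that the gap between the fraction of walks hitting $\NM$ from a typical correctly-labeled impostor and from a typical mislabeled vertex is provably $\Omega(1)$ and can be certified by $\mathrm{polylog}(1/\delta)$ samples -- any loss here would re-introduce the factor of $d$ that we removed by moving from $\nei_G(\M)$ to $\NM$.
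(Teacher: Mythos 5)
Your proposed definition of $\NM$ is genuinely different from the paper's, and the difference is not cosmetic: it breaks the key step of the argument. You define $\NM$ as the set of vertices with many neighbors \emph{in $\M$}, whereas the paper (\Cref{def:NM}) defines $\NM$ as the set of vertices $u \in \nei_G(\M) \setminus \M$ with many neighbors in \emph{other spectral clusters} $\spec(\lnot\tau(u))$. The paper's choice makes the extension of \Cref{lemma:cores_sparsely_connected} to $V\setminus\NM$ (namely \Cref{claim:NM_neighbors}) hold tautologically, because the condition $|\nei_G(u)\cap\spec(\lnot\tau(u))| < \frac{\phi}{2}d$ is precisely the defining property of lying outside $\NM$; all the hard work is then shifted to bounding $|\NM|$, which is done via the ``strongly pulled vertices'' machinery of \Cref{sec:NM_subset_D,sec:bounding_D}. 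Your choice reverses the trade-off -- the size bound becomes a one-line double count, but the structural property no longer follows.

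The claim that the proofs of \Cref{lemma:cores_sparsely_connected} and \Cref{lem:imposters_expand} ``go through nearly verbatim'' for your $\NM$ is where the gap is. In the proof of \Cref{lemma:cores_sparsely_connected}, the balance equation $\left\|f_u - \frac{1}{d}\sum_{v\in\nei_G(u)}f_v\right\|_2 \lesssim \epsilon\|f_u\|_2$ is exploited by replacing every neighbor's embedding $f_v$ with the corresponding cluster mean $\mu_{\tau(v)}$, which is only possible because the hypothesis $u\notin\nei_G(\M)$ guarantees that \emph{every} neighbor lies in some spectral cluster. If you relax this to allow up to $c_0\phi d$ neighbors in $\M$, those neighbors contribute a term of order $\frac{1}{d}\sum_{v\in\nei_G(u)\cap\M}\|f_v\|_2$ to the error, and there is no pointwise bound on $\|f_v\|_2$ for $v\in\M$: the variance bound (\Cref{claim:sum_of_distances}) controls only the sum $\sum_{v}\|f_v-\mu_{\iota(v)}\|_2^2$, and a single cross vertex can have $\|f_v\|_2$ as large as a constant (versus $\|\mu_i\|_2 \approx \sqrt{k/n}$). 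Concretely, a vertex $u\in\spec(i)$ could have nearly all of its edges to $\spec(j)$ for some $j\neq i$ and a single edge to a cross vertex $v$ with $f_v \approx -(d-1)\mu_j + d\mu_i$; the balance equation is then satisfied, $u$ has only one $\M$-neighbor so $u\notin\NM$ under your definition, yet $u$ has $d-1$ neighbors in $\spec(\neg i)$, which falsifies the conclusion of your strengthened lemma. That failure then cascades into \Cref{lem:imposters_expand}: the impostors-expand argument needs $|E(S,\spec(\neg i))|$ to be small, and your $\NM$ gives no control over it.

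In short, proving $|\NM| \lesssim \epsilon n/\poly(\phi)$ is the easy part under your definition but the hard part under the paper's, while the ``few cross-spectral-cluster neighbors'' property is automatic under the paper's definition but false under yours. To repair your approach you would have to additionally exclude vertices with many cross-spectral-cluster neighbors and prove that the resulting set is still $O(\epsilon n/\poly(\phi))$-small -- but that is exactly the paper's $\NM$ and its bound in \Cref{lemma:N(B_far)_reg}, so there is no shortcut here. The remaining parts of your sketch (the walk-length choice $L = \Theta(\log(1/\delta)/\phi^2)$, bounding false negatives via an iterated reachability argument, bounding false positives by a walk-counting/seeded-percolation argument of size $O(L\delta|\NM\cup\im|)$) are in the right spirit and roughly track \Cref{claim:reachability_updated_new}, \Cref{cor:false_neg_updated} and \Cref{claim:false_pos}, but they all rely on the strengthened expansion lemma that your definition of $\NM$ fails to deliver.
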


\noindent
We formalize and prove this statement in \Cref{subsec:analysis}.
Intuitively, we think of $\nei_G(\M)$ as a set of ``problematic'' vertices: the vertices in the set $\M$ do not bear any spectral information about which cluster they belong to, and they ``confuse'' their immediate neighbors too. In \Cref{subsec:regnei} we show that all of the actually ``problematic'' vertices of $\nei_G(\M)$ are contained is a subset of much smaller size, referred to as $\NM$ in the above. This observation allows us to improve  the upper bound on the misclassification rate from $\approx d\epsilon\delta$ to $\approx \epsilon\delta\log(1/\delta)$.  Therefore, we replace the reachability test in \Cref{alg:polytime} with a \emph{robust connectedness vs separatedness} test, which amounts to sampling a few walks starting at $u$ in the graph $G_{\tau(u), \sigma(u)}$ and using them to approximate the total number of walks passing through $\textsc{S}$. We formally define this procedure, referred to as \robcon, and prove its performance guarantees in \Cref{sec:robcon}. This modification allows us to replace the expensive graph connectivity test with sublinear time sampling.
\\~\\
We formally state the performance guarantees of \Cref{alg:sketch+labels} in \Cref{thm:sublinear}.

\begin{thm}[Sublinear-time and -space classifier]
\label{thm:sublinear}
There is an algorithm running in time \smash{$d \cdot 2^{O({\phi^2}/{\epsilon}) k^4 \log k}$} \smash{$n^{1/2+O(\epsilon/\phi^2)}\cdot \poly(1/(\xi \phi))\poly(\log (n/\delta)) $} that, given as input $G,d,n,\eta,\epsilon,\phi,k,\delta$ as per \Cref{fig:setting} and $\sigma$ as per \Cref{fig:labels}, produces a data structure of size $d\cdot n^{1/2+O(\epsilon/\phi^2)}\poly(k/(\phi \xi))\poly(\log (n/\delta))$ with query time \smash{$d \cdot n^{1/2+O(\epsilon/\phi^2)}\cdot\poly(k/(\phi \xi))\poly(\log (n/\delta))$}. With probability $0.95$ over the internal randomness of the algorithm and the draw of $\sigma$, there exists a subset of vertices $B \subseteq V$ of size $|B| \leq O({\epsilon\delta} \cdot {\phi^{-6}} \log(1/\delta)) \cdot n$ such that for every vertex $u \in V \setminus B$, querying the data structure on $u$ returns a cluster id $i \in [k]$ such that $u \in C_i$.

\end{thm}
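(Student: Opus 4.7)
The proof plan mirrors the structure of the proof of \Cref{thm:polytime}, but replaces the deterministic reachability test of \Cref{alg:polytime} with the random-walk-based \robcon{} subroutine, which is what enables the improvement from a $d$ factor to a $\log(1/\delta)$ factor in the misclassification rate. First I would assemble the three sublinear-time building blocks: (i) the Spectral Dot Product Oracle of \Cref{thm:spec_dot_prod_oracle}, instantiated with $\xi = \frac{\phi^2}{20^4 \eta}\min_i\{\|\mu_i\|_2^2\}n$ as in \Cref{rem:specdp_runtime}, which gives a valid approximate inner product function per \Cref{def:apx} within the claimed time and space; (ii) the approximate cluster-mean routines \Cref{alg:app_centers,alg:permutation}, which by \Cref{lemma:pi_computation} produce $\widetilde{\mu}_1,\ldots,\widetilde{\mu}_k$ satisfying \Cref{def:approxmeans} with probability $0.97$ over their own independent randomness, conditioned on which they succeed on $\sigma$ with probability $1-10^{-3}$; and (iii) the \robcon{} subroutine of \Cref{sec:robcon}, which samples $\poly(\log(n/\delta))$ random walks of length $L = \lceil 150\phi^{-2}\log(1/\delta)\rceil$ in the cross graph $G_{\tau(u),\sigma(u)}$. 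Feeding these into \Cref{alg:sketch+labels}, the stated running time, space, and query time follow directly from the per-call cost of Spectral Dot Product Oracle, the bounded walk length $L$, and the $d$-bounded degree assumption.

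Next I would bound the misclassified vertices by partitioning them, exactly as in the proof of \Cref{thm:polytime}, into four groups $W_1, W_2, W_3, W_4$, one per return line (\ref{line:core}, \ref{line:far}, \ref{line:impostor}, \ref{line:non-impostor}). The contributions of $W_1$ (line \ref{line:core}) and the ``$\tau = *$'' sibling group (line \ref{line:far}) are bounded by $O(\eta^2\epsilon\delta/\phi^4)\cdot n$ in expectation directly from \Cref{lemma:spec_or_cross}, which does not involve the connectivity test at all. Similarly, the portion of $W_4$ consisting of mislabeled impostors is bounded by $\delta\cdot|\im| = O(\eta^2\epsilon\delta/\phi^4)\cdot n$ via linearity of expectation together with \Cref{lemma:impostor_n_cross_size}.

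The main obstacle, and the whole reason for invoking \robcon{} rather than plain reachability, is bounding the ``false positives'' in $W_2$ (mislabeled non-impostors that pass the test) and the ``false negatives'' in the remainder of $W_3 \cup W_4$ (correctly labeled impostors that fail the test) by $\widetilde O(\epsilon\delta/\phi^6)\cdot n$ rather than $\widetilde O(d\epsilon\delta/\phi^5)\cdot n$. For this I would rely on the refined ``small bad neighborhood'' set $\NM \subseteq \nei_G(\M)$ of size $\widetilde O(\epsilon n)$ constructed in \Cref{subsec:regnei}, and on the extensions of \Cref{lem:imposters_expand,lemma:cores_sparsely_connected} forecast in \Cref{remark:NM_forward_ref} that apply outside $\NM$ rather than outside $\nei_G(\M)$. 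Using these, I would establish the informal dichotomy quoted in the section: outside a small exceptional set, almost every length-$L$ walk from a correctly labeled impostor in $G_{\tau(u),\sigma(u)}$ visits $\NM$ (by expansion of $\im(i,j)\setminus\NM$ combined with a conductance-mixing-time bound that uses $L\asymp\phi^{-2}\log(1/\delta)$), while almost no length-$L$ walk from a mislabeled non-impostor reaches $\NM$ (by a Galton--Watson-type percolation argument on the set of mislabeled vertices, now yielding expected size $\delta^{\Theta(1)}$ rather than $\delta d$ at each level, which is the origin of the $\log(1/\delta)$ factor replacing $d$). The performance guarantees of \robcon{} established in \Cref{sec:robcon} then separate the two cases except on a set of measure $\widetilde O(\epsilon\delta/\phi^6)\cdot n$, after absorbing the oracle's own failure probability.

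Finally, I would stitch the three independent sources of randomness together: the bits of \Cref{alg:app_centers,alg:permutation}, the draw of $\sigma$, and the walks sampled by \robcon{}. Fixing the randomness of the cluster-mean routines on the event that \Cref{def:approxmeans} holds, the total expected misclassification is $O(\eta^2\epsilon\delta/\phi^6)\log(1/\delta)\cdot n$; by Markov this is exceeded by at most a constant factor with probability $10^{-3}$, and then a conditioning argument on the correct permutation $\pi^\ast$ identical to \eqref{eq:pr_justification_poly} converts this into the stated $0.95$-probability ``for all but $|B|$ vertices'' guarantee.
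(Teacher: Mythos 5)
Your overall plan follows the paper's structure: assemble the Spectral Dot Product Oracle, the approximate cluster-mean routines, and \robcon{}, run \Cref{alg:sketch+labels}, and decompose the misclassified vertices by return line as in the proof of \Cref{thm:polytime}. The running time, space, and query time accounting are fine, and your false-negative argument (expansion of $\im(i,j)\setminus\NM$ combined with the mixing-time bound of \Cref{claim:exp-like_sets} via \Cref{claim:relategraphs}, with $L\asymp\phi^{-2}\log(1/\delta)$ driving the $\exp(-\Omega(\phi^2 L))\le\delta$ decay) matches \Cref{claim:reachability_updated_new} and \Cref{cor:false_neg_updated}.

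The gap is in your false-positive argument, which is the crux of the improvement. You attribute the $\log(1/\delta)$ factor to a Galton--Watson percolation "now yielding expected size $\delta^{\Theta(1)}$ rather than $\delta d$ at each level." This cannot be right: in any percolation over mislabeled vertices the branching ratio is still $\delta d$ (each vertex has $d$ neighbors, each mislabeled with probability $\delta$), so the geometric sum still starts from $|\nei_G(\M\cup\NM\cup\im)|\lesssim d|\M\cup\NM\cup\im|$ and a Galton--Watson bound again produces a leading $d$. What the paper actually does in \Cref{claim:false_pos} is fundamentally different: set $X_{i,j}=\M\cup\NM\cup\im(i,j)$, observe (\Cref{claim:pass_through}) that every crossing walk started from a mislabeled vertex outside $\M\cup\NM$ must use an edge of $E_{i,j}(X_{i,j},V_{i,j}\setminus X_{i,j})$, bound the expected number of such edges by $\approx\delta d\cdot|\M\cup\NM\cup\im|\approx\delta d\epsilon n/\poly(\phi)$ (\Cref{claim:Eij_ub}, where $|\NM|\approx\epsilon n$ via \Cref{lemma:N(B_far)_reg} rather than $|\nei_G(\M)|\approx d\epsilon n$ is essential), and then count walks: of the $n(2d)^L$ lazy walks of length $L$, each special edge lies on at most $L(2d)^{L-1}$ of them (\Cref{claim:pij_ub}), so the crossing fraction is $\lesssim L\delta\epsilon/\poly(\phi)$. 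Since each false positive contributes $\ge 0.1$ to this fraction, $|\fp\setminus\NM|\lesssim L\delta\epsilon n$. The $L$ (hence $\log(1/\delta)$) enters as a walk-position multiplier, not as a percolation decay rate; without this edge-counting step the bound does not follow.

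A secondary omission: the walk randomness for \robcon{} must be sampled once at preprocessing (the paper stores a $\lceil 450\log n\rceil\times\lceil L\rceil$ seed table), so that queries are deterministic given the data structure and the fixed bad set $B$ is well-defined; one then union-bounds over all $n$ vertices using the $1-O(n^{-5})$ per-vertex guarantee of \Cref{cor:robcon_works}, rather than applying Markov to the walk randomness as you suggest.
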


\begin{remark}[Additional bounds on parameters]\label{rem:additional_bounds_eps} Observe that the above algorithm has truly sublinear time and space under the assumption
\[\frac{\phi^2}{\epsilon}\cdot k^4\log k \leq c \log n \]
for a small enough constant $c>0$, which implies both an upper bound on $k$ and a lower bound on $\epsilon$. In particular, the above is satisfied when $k \leq \log^{c_1} n$ for a sufficiently small constant $c_1$, and $\frac{\epsilon}{\phi^2} \geq 1/\log^{c_2} n$ for a sufficiently large constant $c_2$. We do not put these assumptions in the \Cref{fig:setting} as they are not necessary for anything apart from ensuring the data structure from \cite{GKLMS21} has sublinear time and space.  Furthermore, note that if the above assumptions are not satisfied, we may opt for the Euclidean inner product oracle introduced in \Cref{rem:euclidian_oracle} and used in \Cref{sec:poly} in design of a simpler algorithm. While this makes the running time of \Cref{alg:sketch+labels} polynomial, \Cref{alg:sketch+labels} still has merit as its misclassification rate guarantee is independent of $d$.
\end{remark}

\noindent
The rest of the section is structured as follows: in \Cref{sec:robcon},  we formally define the set $\NM$, present the \robcon{} algorithm, and prove its performance guarantees; in \Cref{subsec:regnei}, we establish the key properties of the set $\NM$; finally, in \Cref{subsec:analysis}, we prove \Cref{thm:sublinear}.

\subsection{\robcon}\label{sec:robcon}
Before we describe \robcon, we introduce several notions we will use in the discussion.

\begin{definition}[Lazy walk]
    Let $H$ be a graph, and let $l \ge 1$ be an integer. A sequence of $l$ vertices $\bf{w}$ $\in V(H)^l$ is a $l$-step lazy walk in $H$ if for all $i \in [l-1]$ one has either $(w_i,w_{i+1}) \in E(H)$ or $w_i=w_{i+1}$.
\end{definition}

\begin{definition}[Lazy random walks]\label{def:lazy_r_walk}
    Let $d\ge 3$ be an integer, let $H$ be a $d$-regular graph, let $A$ be its adjacency matrix, let $M = \frac{1}{2}I+\frac{1}{2d}A$, and let $l\ge 1$ be an integer. We define the $l$-step lazy random walk distribution in $H$, denoted $p^l[H]$, to be the following distribution over $l$-step lazy walks $\bf{w}$ in $H$: 
    \begin{itemize}
        \item $w_1$ is distributed uniformly over $V(H)$;
        \item  for each $i \in [l-1]$, $w_{i+1}$ conditioned on $w_{i}$ is distributed as $M \1_{w_{i}}$.
    \end{itemize} Similarly, for a set $S \subseteq V(H)$, we define the $l$-step lazy random walk distribution in $H$ starting from $S$, denoted $p_S^l[H]$, to be the following distribution over $l$-step lazy walks $\bf{w}$ in $H$: $w_1$ is distributed uniformly over $S$ and $w_{i+1}$ conditioned on $w_{i}$ is distributed as $M \1_{w_{i}}$ for each $i \in [l-1]$.
\end{definition}

\begin{nota}
    For $H, l ,S$ as in the definition above, if $S=\{u\}$ for some $u \in V(H)$, we allow ourselves to abuse notation and write $p^l_{u}[H]$ instead of $p_{\{u\}}^l[H]$.
\end{nota}

\begin{definition}[Walks reaching  a set]\label{def:crosswalk_new}
Let $H$ be a graph, let $l \ge 1$ be an integer, and let $\bf{w}$ be a $l$-step lazy walk in $H$, and let $T \subseteq V(H)$.  We say that $\bf{w}$ reaches $T$ if there exists $i \in [l]$ such that $w_i \in T$. If $\bf{w}$ reaches $T$ we write $\bf{w}$ $\sim T$, otherwise we write $\bf{w}$ $\slashed{\sim} \ T$.
\end{definition}

As we mentioned at the beginning of \Cref{sec:sub_alg}, \robcon \ (\Cref{alg:rob_con}) is based on testing reachability to a certain special set $\textsc{S} \subseteq \nei_G(\M)$. We define $\textsc{S}$ as $\M$ itself and the \emph{regularized neighborhood} of $\M$, denoted $\NM$, which we formally introduce in \Cref{def:NM} below. 

\begin{definition}[Regularized neighborhood of $\M$]\label{def:NM}
Let $\phi$, $(f_u)_{u \in V}$ and $\tmu_1,\dots,\tmu_k$ be as per \Cref{fig:setting}. 
    We define the regularized neighborhood of $\M$, which we denote $\NM $, to be the vertices in $\nei_G(\M)$ that belong to some spectral cluster $\spec(i)$ but have at least $\frac{\phi}{2} d$ neighbors in other spectral clusters, i.e.
    $$\NM \coloneqq \left\{ u \in \nei_G(\M) \setminus \M : |\nei_G(u) \cap \spec(\lnot \tau(u))| \geq \frac{\phi}{2} d\right\}.$$
\end{definition}

\begin{definition}[Crossing walk]\label{def:crosswalk} Let $L$ as per \Cref{fig:setting}. Fix $i$ and $j$, $i \neq j$. We say that a walk $\bf{w}$ in $G_{i, j}$ (as per \Cref{def:crossgraph}) starting from a vertex $u \in \specapx(i)$ of length $L$ is a {\em crossing walk} if $\bf{w}$ $\sim \Mapx \cup \NMapx$, i.e. if $\bf{w}$ reaches $ \Mapx \cup \NMapx$ (as per \Cref{def:crosswalk_new}). 
\end{definition}

\begin{algorithm}
\caption{\textsc{\robcon}}\label{alg:rob_con}
\begin{algorithmic}[1]
\State \textbf{Input:} $G, \phi, \eta, \{\widetilde{\mu}_{i}\}_{[k]}, L$ as per \Cref{fig:setting}, $\sigma, \tau$ as per \Cref{fig:labels}, a vertex $u$
\State \textbf{Output:} ``yes'' or ``no''

\State $\langle \cdot, \cdot \rangle_{\text{apx}} \gets \specdp$ \Comment{Dot product oracle from Thm.~\ref{thm:spec_dot_prod_oracle}} 
\State $r \gets 0$ \Comment{Counter of crossing walks}

\For{$i = 1$ to $450 \log n$}
\State $\mathbf{w}_i \gets$ lazy random walk of length $L$ in $G_{\tau(u), \sigma(u)}$ started at $u$ \label{line:sample_walk_cross}

    \For{every vertex $w$ in $\mathbf{w}_i$} \label{line:sample_from_cross}
        \If{$\|f_w - \widetilde{\mu}_{\tau(u)}\|^2_{\text{apx}} > \frac{\phi^2}{400\eta} \|\widetilde{\mu}_{\tau(u)}\|^2_{\text{apx}}$} \label{line:M}
            \State $r \gets r + 1$\label{line:count_r_1} \Comment{$w \in \M$, so $\mathbf{w}_i$ is a crossing walk }
            \State \textbf{break} \label{line:M_end}
        \EndIf

        \State $q \gets 0$ \Comment{Counter of neighbors in $\spec(\neg \tau(u))$} \label{line:NM_begin}
        \For{every neighbor $v \in \nei_G(w)$}
            \If{for some $\widetilde{\mu}_l \neq \widetilde{\mu}_{\tau(u)}$, it holds that $\|f_v - \widetilde{\mu}_l\|^2_{\text{apx}} \leq \frac{\phi^2}{400\eta} \|\widetilde{\mu}_l\|^2_{\text{apx}}$} \label{line:far_neighbor}
                \State $q \gets q + 1$ \Comment{$v \in \spec(\neg \tau(u))$}
            \EndIf
        \EndFor \label{line:NM:end}

        \If{$q \geq \frac{\phi}{2} d$} \label{line:NM}
            \State $r \gets r + 1$\label{line:count_r_2} \Comment{$w \in \NM$, so $\mathbf{w}_i$ is a crossing walk }
            \State \textbf{break} \label{line:NM_end}
        \EndIf
    \EndFor
\EndFor

\If{$r \geq 0.5 \cdot 450 \log n$} \label{line:yes}
    \State \Return ``yes'' \Comment{$u$ is robustly connected $\M \cup \NM$ }
\Else \label{line:no}
    \State \Return ``no'' \Comment{$u$ is \emph{not} robustly connected $\M \cup \NM$ }
\EndIf

\end{algorithmic}
\end{algorithm}

In \Cref{lemma:cores_sparsely_connected}, we showed that vertices in spectral clusters which are \emph{not} connected to the cross vertices $\M$, have few edges to other spectral clusters. This property is important because it allows us prove that spectral impostors have good expansion properties outside of $\nei_G(\M)$ (see \Cref{lem:imposters_expand}). 

This motivates the definition of $\NM$ as the ``problematic" part of $\nei_G(\M)$: It identifies the vertices in $\nei_G(\M)$ that have too many edges to other spectral clusters. With this definition, we can extend \Cref{lemma:cores_sparsely_connected} to also apply to vertices in $\nei_G(\M) \setminus \NM$, as the following lemma shows.

\begin{lemma}[Few neighbors across spectral clusters, extended to $ \nei_G(\M) \setminus \NM$]\label{claim:NM_neighbors}
Consider the setting of \Cref{fig:setting}.
For every $i \in [k]$ and every $u \in \spec(i) \setminus \NM$, it holds that 
\begin{equation*}
    \left|\nei_G(u) \cap  \spec(\lnot i)  \right|  \leq d\cdot \frac{\phi}{2}. 
\end{equation*}

\end{lemma}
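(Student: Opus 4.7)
The plan is to perform a case analysis on the vertex $u \in \spec(i) \setminus \NM$ depending on whether it belongs to $\nei_G(\M)$ or not. The point is that $\NM$ is \emph{defined} precisely to capture the vertices in $\nei_G(\M) \setminus \M$ that have too many neighbors in other spectral clusters, so combining this with \Cref{lemma:cores_sparsely_connected} (which handles the complementary case $\spec(i) \setminus \nei_G(\M)$) exactly covers all $u \in \spec(i) \setminus \NM$.

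Concretely, first I would observe that since $u \in \spec(i)$ and the spectral clusters are disjoint from $\M$ by \Cref{def:spec}, we have $u \notin \M$. I would then split into two cases. In the first case, assume $u \notin \nei_G(\M)$; then $u \in \spec(i) \setminus \nei_G(\M)$ and \Cref{lemma:cores_sparsely_connected} immediately yields $|\nei_G(u) \cap \spec(\lnot i)| < d\phi/2$, which implies the desired (non-strict) inequality. In the second case, assume $u \in \nei_G(\M)$; combined with $u \notin \M$ this gives $u \in \nei_G(\M) \setminus \M$, and since additionally $u \in \spec(i)$ (so $\tau(u) = i$ is well defined by \Cref{lemma:disjoint_balls}), the hypothesis $u \notin \NM$ together with \Cref{def:NM} forces
\[
\left|\nei_G(u) \cap \spec(\lnot \tau(u))\right| < \frac{\phi}{2} d,
\]
which is exactly the desired bound because $\tau(u) = i$.

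Both cases yield the conclusion, completing the proof. I do not expect any real obstacle here; the lemma is essentially a tautology once $\NM$ is defined the way it is, with the role of \Cref{lemma:cores_sparsely_connected} being to handle the ``already safe'' vertices outside $\nei_G(\M)$ and the definition of $\NM$ handling everything else by design. The only mild care needed is to check that $\tau(u)$ is unambiguously $i$ (which uses disjointness of spectral clusters from \Cref{lemma:disjoint_balls}) and that $u \notin \M$ (which follows from $u \in \spec(i)$ via \Cref{def:spec}).
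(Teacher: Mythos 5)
Your proposal is correct and follows essentially the same approach as the paper's proof: reduce to the case $u \in \nei_G(\M)$ (the other case being \Cref{lemma:cores_sparsely_connected}), note $u \notin \M$ via \Cref{def:spec}, and read off the bound directly from the definition of $\NM$. The only difference is cosmetic (you make explicit the disjointness needed to have $\tau(u) = i$), so nothing more needs to be said.
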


\begin{proof}
    This proof is an extension of \Cref{lemma:cores_sparsely_connected}. Recall that \Cref{lemma:cores_sparsely_connected} proves the above statement for $u \in \spec(i) \setminus  \nei_G(\M)$. Therefore, it remains to consider $u \in  \spec(i) \cap  \nei_G(\M)\setminus  \NM. $ 
    
    By definition of $\spec(i)$ (Definition \ref{def:spec}), $u$ is not in the cross vertices $\M$, so $u \in  \nei_G(\M)\setminus  (\NM \cup \M). $  So from the definition of $\NM$ (Definition \ref{def:NM}), we have \[\left|\nei_G(u) \cap  \spec(\lnot i)  \right|  \leq  d\cdot \frac{\phi}{2}.\] 
\end{proof}
 In \Cref{subsec:regnei} we strengthen the ``impostors expand" lemma (\Cref{lem:imposters_expand}) to also hold for vertices in $\nei_G(\M) \setminus \NM$ (see \Cref{lem:imposters_expand_stronger}), and we also  
 show an upper bound of $\approx \epsilon n$ on the size of $\NM$ (see \Cref{lemma:N(B_far)_reg}), which is stronger than the upper bound of $\approx d\epsilon n$ on the size of $\nei_G(\M)$. 

Recall from \Cref{def:crossgraph} the notion of a cross graph $G_{i, j}$ and $V_{i, j} = (\spec(i) \cap \lab(j))\cup \M$.

\begin{remark}[Sampling from $\crossgapx_{\tau(u), \sigma(u)}$, line~\eqref{line:sample_walk_cross}]\label{rem:sample_from_cross} Given a vertex $u \in \crossgapx_{\tau(u), \sigma(u)}$,  access to the $\langle \cdot, \cdot\rangle_{\apx}$, and a sampler from the graph $G$, we can generate a random walk $\walk = (u, w_1, \ldots, w_L)$ of length $L$ in the graph $\crossgapx_{\tau(u), \sigma(u)}$ with the following simple procedure:
\begin{enumerate}
    \item For every $l \leq L$, given $w_l$, sample a vertex $v \sim \unif(\nei_G(w_l))$;
    \item If $v \notin V_{\tau(u), \sigma(u)}$ set $w_{l+1} \coloneqq w_l$; otherwise, $w_{l+1} \coloneqq v$.
\end{enumerate}
Condition $v \notin V_{\tau(u), \sigma(u)}$ is satisfied either if $\sigma(v) \neq \sigma(u)$ or if $v \in \spec(\neg \tau(u))$. The first condition may be verified immediately. To verify the second condition, we need to check if there exists $i \in [k]$, $i \neq \tau(u)$ such that $\|f_v - \tmu_{i}\|^2_{\apx} \leq \frac{\phi^2}{400\eta}\|\tmu_{i}\|^2_{\apx}$. This can be done by using the Spectral Dot Product Oracle from \Cref{thm:spec_dot_prod_oracle} at most $ k$ times.

A random walk in the graph $\crossgapx_{\tau(u), \sigma(u)}$ starting from a vertex $u$ can thus be generated in time 
\[O(L\cdot k\cdot Q) = O(\poly(k)\cdot n^{1/2+O(\epsilon/\phi^2)}\cdot\poly(\log (n))\cdot\poly(1/\phi)\cdot\log(1/\delta)),\]
where $Q$ denotes the query time of the Spectral Dot Product Oracle. 
\end{remark}

We are finally ready to formally define robust connectedness/separatedness and prove why the procedure \robcon \ successfully determines vertices which satisfy either one of these properties in the corresponding cross graph.

\begin{definition}[Robustly connected/separated]\label{def:rob_con}
   Consider the setting of \Cref{fig:setting}.
    
We say that a vertex $u \in \crossvapx_{i, j}$ is \emph{robustly connected to} $\Mapx\cup\NMapx$ in the graph $\crossgapx_{i, j}$ if
    \[\Pr[ \text{a lazy random walk of length $L$ in $\crossgapx_{i, j}$ started at $u$ is crossing}] \geq 0.9.\]
    We write $u \rightsquigarrow_{\crossgapx_{i, j}} \Mapx\cup\NMapx$ if $u$ is robustly connected to $\Mapx\cup\NMapx$ and  $u \slashed{\rightsquigarrow}_{\crossgapx_{i, j}} \Mapx\cup\NMapx$ otherwise.
    
We say that a vertex $u \in \crossvapx_{i, j}$ is \emph{robustly separated from} $\Mapx\cup\NMapx$ in the graph $\crossgapx_{i, j}$ if
    \[\Pr[ \text{a lazy random walk of length $L$ in $\crossgapx_{i, j}$ started at $u$ is crossing}] < 0.1.\]
    We write $u \perp_{\crossgapx_{i, j}} \Mapx\cup\NMapx$ if  $u$ is robustly separated from $\Mapx\cup\NMapx$ and  $u \slashed{\perp}_{\crossgapx_{i, j}} \Mapx\cup\NMapx$ otherwise.
\end{definition}

\begin{lemma}[\robcon \  detects robust conectedness and separatedness]\label{cor:robcon_works} Conditioned on the success of all of the calls to the Spectral Dot Product Oracle, $\specdp$, we have that
\begin{enumerate}
       \item If a vertex $u$ is robustly connected to $\Mapx\cup\NMapx$, then \robcon \ outputs ``yes'' with probability at least $1 - 2/n^5$.
    \item If a vertex $u$ is robustly separated from $\Mapx\cup\NMapx$, then \robcon \ outputs ``no'' with probability at least  $1 - 2/n^{5}$.
 
\end{enumerate}

\end{lemma}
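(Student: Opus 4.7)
The plan is to argue correctness in two cleanly separated steps, given (as the lemma statement grants) that every call to $\specdp$ returns a valid approximation of the corresponding spectral inner product. First, I would show that \robcon{} increments the counter $r$ on the $i$-th iteration precisely when the sampled walk $\mathbf{w}_i$ is crossing in the sense of Definition~\ref{def:crosswalk}. Second, since the walks $\mathbf{w}_i$ are i.i.d., a Chernoff bound on $r$ finishes both halves of the lemma.

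For the first step, by Remark~\ref{rem:sample_from_cross} each $\mathbf{w}_i$ produced on line~\ref{line:sample_walk_cross} is a genuine $L$-step lazy random walk in $G_{\tau(u),\sigma(u)}$ started at $u$, so every vertex $w$ it visits lies in $V_{\tau(u),\sigma(u)} = (\spec(\tau(u)) \cap \lab(\sigma(u))) \cup \M$. The check on line~\ref{line:M} is literally the negation of the defining condition ``$w \in \spec(\tau(u))$'' from Definition~\ref{def:spec}, and by disjointness of spectral clusters (Lemma~\ref{lemma:disjoint_balls}) it succeeds exactly when $w \in \M$. If the $\M$-test fails we have $w \in \spec(\tau(u))$, and the per-neighbor test on line~\ref{line:far_neighbor} is again the literal defining condition for ``$v \in \spec(l)$'', so the counter $q$ equals $|\nei_G(w) \cap \spec(\neg \tau(u))|$ exactly. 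Combining Definition~\ref{def:NM} with Lemma~\ref{claim:NM_neighbors} then gives that $q \geq \phi d / 2$ holds iff $w \in \NM$. So \robcon{} increments $r$ on the $i$-th trial exactly when $\mathbf{w}_i$ reaches $\M \cup \NM$, i.e., when $\mathbf{w}_i$ is crossing.

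For the second step, set $X_i = \mathbf{1}[\mathbf{w}_i \text{ is crossing}]$ so that $r = \sum_{i=1}^{N} X_i$ with $N = 450 \log n$; the $X_i$ are i.i.d.\ Bernoulli with success probability equal to the quantity appearing in Definition~\ref{def:rob_con}. In the robustly connected case this mean is $\geq 0.9$, hence $\E[r] \geq 0.9 N$, and the multiplicative Chernoff bound gives $\Pr[r < N/2] \leq \exp(-\Omega(N)) \leq 2/n^5$. Symmetrically, in the robustly separated case the mean is $< 0.1$, $\E[r] < 0.1 N$, and the upper-tail Chernoff bound yields $\Pr[r \geq N/2] \leq 2/n^5$. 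The constant $450$ in $N = 450\log n$ is chosen precisely to drive both tails below $2/n^5$.

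I do not anticipate any genuine obstacle. The substantive content has already been front-loaded into Lemma~\ref{claim:NM_neighbors} and the $\ell_2^{\apx}$-based definition of $\spec$, which together make the two cheap local tests on lines~\ref{line:M} and~\ref{line:NM} exact (rather than merely approximate) witnesses of $\M$- and $\NM$-membership along the walk. The only micro-subtlety is the boundary case $q = \phi d/2$ at a vertex outside $\nei_G(\M)$, which either does not arise (e.g.\ when $\phi d$ is odd) or can be absorbed by a trivial reformulation of $\NM$ that leaves the Chernoff step and all downstream structural lemmas intact.
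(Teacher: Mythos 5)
Your proposal follows the paper's proof in structure and in substance: first certify that, conditioned on correct $\specdp$ answers, the counter $r$ equals the number of crossing walks among the $450\log n$ samples, then apply a two-sided Chernoff bound to the Bernoulli random variable ``walk is crossing'' with mean $\geq 0.9$ (resp.\ $<0.1$). Both halves of the argument are the ones the paper uses.

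The one place you over-worried is the boundary case $q = \phi d / 2$. You cite \Cref{claim:NM_neighbors}, whose stated conclusion is $|\nei_G(w)\cap\spec(\neg\tau(w))| \leq \phi d/2$ for $w \in \spec(i)\setminus\NM$, and this non-strict inequality is indeed not enough to conclude ``$q \geq \phi d/2$ iff $w\in\NM$.'' But the paper's own proof does not route through \Cref{claim:NM_neighbors}; it splits into $w \in \nei_G(\M)\setminus(\NM\cup\M)$, where \Cref{def:NM} gives the \emph{strict} inequality $|\nei_G(w)\cap\spec(\neg\tau(w))| < \phi d/2$ (since $\NM$ is defined with $\geq$), and $w \notin \nei_G(\M)$, where \Cref{lemma:cores_sparsely_connected} also gives strict $<$. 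In either case a non-$\NM$ vertex inside a spectral cluster yields $q < \phi d/2$, and the contrapositive of \Cref{lemma:cores_sparsely_connected} supplies the missing membership $w \in \nei_G(\M)$ in the forward direction. So the ``iff'' you want holds as stated, with no need to perturb $\NM$ or invoke the parity of $\phi d$; you simply should have quoted \Cref{lemma:cores_sparsely_connected} together with \Cref{def:NM} rather than the looser \Cref{claim:NM_neighbors}. A second, cosmetic nit: the fact that line~\ref{line:M} triggers exactly when $w\in\M$ follows from $w$ being a walk vertex in $V_{\tau(u),\sigma(u)} = (\spec(\tau(u))\cap\lab(\sigma(u)))\cup\M$, not from disjointness of spectral clusters per se, though disjointness does enter implicitly through \Cref{def:spec}.
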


\begin{proof}
First, we show that if all of the calls to \specdp{} are correct, then the parameter $r$, counted in lines~\eqref{line:count_r_1},~\eqref{line:count_r_2} is exactly equal to the number of crossing walks out of the $450\log n $ sampled in line~\eqref{line:sample_walk_cross}. 
Fix $i \in [450\log n]$. Suppose that $\mathbf{w}_i$ is a crossing walk. Then, by \Cref{def:crosswalk}, there exists a vertex $w \in \mathbf{w}_i$ such that $w \in \M \cup \NM$. Let $w$ be the first such vertex in $\mathbf{w}_i$.  If $w \in \M$, then by \Cref{def:spec}, line~\eqref{line:M} gets triggered and we increment the counter $r$ and finish processing $\mathbf{w}_i$. If instead $w \in \NM \setminus \M$, then $\tau(w) \neq *$, so $w \in \spec(\tau(w))$. Then in lines \eqref{line:NM_begin} - \eqref{line:NM:end}, $q$ counts the number of elements in $\nei_G(w) \cap \spec(\neg \tau(u))$. Since $w \in \NM$, it holds that $|\nei_G(w) \cap \spec(\neg \tau(u))|\geq \frac{\phi}{2}d$ (by \Cref{def:NM}), so we increment $r$ in line \eqref{line:count_r_2} and finish processing $\mathbf{w}_i$. Thus, if $\mathbf{w}_i$ is a crossing walk then the counter $r$ gets incremented by exactly one. 

Suppose instead that $\mathbf{w}_i$ is \emph{not} a crossing walk. Then, by \Cref{def:crosswalk}, for every $w \in \mathbf{w}_i$, it holds that $w \notin \M \cup \NM$. Fix $w \in \mathbf{w}_i$. Since $w \notin \M$, line~\eqref{line:M} does not get triggered (by \Cref{def:spec}). Furthermore, since $w \notin \NM$, by \Cref{def:NM} we either have $ |\nei_G(w) \cap \spec(\neg  \tau(w))| < \frac{\phi}{2}d$ or $w \notin \nei_G(\M)$. In the former case, we have $q < \frac{\phi}{2}d$,  so line \eqref{line:NM} does not get triggered. In the latter case, i.e. $w \notin \nei_G(\M)$,  by \Cref{lemma:cores_sparsely_connected} we have $ |\nei_G(w) \cap \spec(\neg  \tau(w))| < \frac{\phi}{2}d$, so again line \eqref{line:NM} does not get triggered. Since this holds for all $w \in \mathbf{w}_i$, the counter $r$ does not get incremented. 

Having shown that the parameter $r$ correctly counts the number of crossing walks, we now prove each of the two statements in the Lemma statement. 
\paragraph{Statement 1.} By \Cref{def:rob_con}, if $u \rightsquigarrow_{\crossgapx_{\tau(u), \sigma(u)}}\Mapx\cup\NMapx$ then a random lazy walk starting at $u$ in the graph $\crossgapx_{\tau(u), \sigma(u)}$ is a crossing walk with probability $ \geq 0.9$. Let $X$ be the number of crossing walks among the $450\log n$ generated walks. We can think of $X$ as the number of successes in a Bernoulli process with $450\log n$ trials and success probability $\geq 0.9$. Therefore, by Chernoff bounds, 
\[\Pr[X \leq 0.9\cdot 450\log n - 90\log n] \leq 2\exp(5\log n) \leq \frac{2}{n^5}.\]
    In other words, with probability $1 - \frac{2}{n^5}$ at least $0.7\cdot 450\log n$ of the walks are crossing walks. \robcon \ will recognize each of those walks and the line~\eqref{line:yes} will be triggered.

\paragraph{Statement 2.} By \Cref{def:rob_con}, if $u \perp_{\crossgapx_{\tau(u), \sigma(u)}}\Mapx\cup\NMapx$ then a random lazy walk starting at $u$ in the graph $\crossgapx_{\tau(u), \sigma(u)}$ is a crossing walk with probability $ \leq 0.1$. Let $X$ be the number of crossing walks among the $450\log n$ generated walks. We can think of $X$ as the number of successes in a Bernoulli process with $450\log n$ trials and success probability $\leq 0.1$. Therefore, by Chernoff bounds, 
    \[\Pr[X \geq 45\log n + 45\log n] \leq 2\exp(-5\log n) \leq \frac{2}{n^{5}}.\]
    In other words, with probability $1 - \frac{2}{n^5}$ at most $90\log n$ of the walks are crossing walks. \robcon \ will recognize at most $90\log n$ of the walks as crossing walks and the line~\eqref{line:no} will be triggered.

\end{proof}

\begin{lemma}[Runtime and space complexity of \robcon]\label{robcon:time_space} The space complexity of \robcon \ is the same as space complexity of the Spectral Dot Product Oracle, as per \Cref{thm:spec_dot_prod_oracle}, which equals
\[O(\poly(k)\cdot n^{1/2+O(\epsilon/\phi^2)}\cdot\polylog (n)\cdot\poly(1/\phi)),\]
and the runtime is
\[O(d\cdot L\cdot k\cdot Q\cdot\log n) = O(d\cdot \poly(k)\cdot n^{1/2+O(\epsilon/\phi^2)}\cdot\polylog (n)\cdot\poly(1/\phi)\cdot \log(1/\delta)),\]
where $Q$ denotes the query time of the Spectral Dot Product Oracle. 
\end{lemma}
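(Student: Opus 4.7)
The plan for this lemma is essentially careful bookkeeping over the nested loops of \Cref{alg:rob_con}, so I would organize the argument into a space claim and a runtime claim.

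For the space complexity, I would first observe that \robcon{} itself maintains only a constant number of counters ($r$ and $q$), the current lazy random walk of length $L = \lceil 150 \phi^{-2} \log(1/\delta) \rceil$, and the parameters needed to invoke the subroutine (the approximate means $\{\tmu_i\}_{i \in [k]}$, which are represented implicitly as $k$ vertices per the convention established with \Cref{lemma:pi_computation}). All of this uses at most $\poly(k,\log n, 1/\phi, \log(1/\delta))$ bits. Every nontrivial piece of information the algorithm queries is served by the Spectral Dot Product Oracle from \Cref{thm:spec_dot_prod_oracle}, whose data structure has the stated size $O(\poly(k) \cdot n^{1/2+O(\epsilon/\phi^2)} \cdot \polylog(n) \cdot \poly(1/\phi))$, which dominates.

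For the runtime, I plan to walk through the three nested loops. The outer loop runs $450 \log n$ times. Inside it, the algorithm first samples a lazy random walk of length $L$ in $G_{\tau(u), \sigma(u)}$: by \Cref{rem:sample_from_cross}, each of the $L$ steps requires sampling a uniform neighbor in $G$ and then performing at most $k$ dot product queries to verify membership in $V_{\tau(u), \sigma(u)}$, for a total of $O(L \cdot k \cdot Q)$ time, where $Q$ is the query time of the Spectral Dot Product Oracle. Then, for each of the $L$ vertices $w$ on the walk, line \eqref{line:M} costs $O(k \cdot Q)$ (checking each approximate mean), and the inner loop in lines \eqref{line:NM_begin}--\eqref{line:NM:end} iterates over all $d$ neighbors of $w$, each requiring $O(k)$ approximate dot products to test membership in $\spec(\neg \tau(u))$, for a total of $O(d \cdot k \cdot Q)$ per vertex on the walk. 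Summing, the cost of one iteration of the outer loop is $O(L \cdot d \cdot k \cdot Q)$, so the total runtime is $O(L \cdot d \cdot k \cdot Q \cdot \log n)$.

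Finally, I would plug in $L = O(\phi^{-2} \log(1/\delta))$ from \Cref{fig:setting} and the bound $Q = O(\poly(k) \cdot n^{1/2+O(\epsilon/\phi^2)} \cdot \polylog(n) \cdot \poly(1/\phi))$ from \Cref{thm:spec_dot_prod_oracle} (\Cref{rem:specdp_runtime}) to collapse these expressions into the stated $O(d \cdot \poly(k) \cdot n^{1/2+O(\epsilon/\phi^2)} \cdot \polylog(n) \cdot \poly(1/\phi) \cdot \log(1/\delta))$ form. There is no genuine obstacle here: the only thing to be slightly careful about is that the tests in lines \eqref{line:M} and \eqref{line:far_neighbor} each hide a factor of $k$ (one approximate dot product per candidate cluster mean), but these factors are absorbed into the $\poly(k)$ term in the final expression. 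The factor $d$ in the runtime is exactly the cost of the neighbor enumeration used to detect vertices of $\NM$, and cannot be reduced without a different implementation of the $\NM$-membership test.
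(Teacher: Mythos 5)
Your proof is correct and follows essentially the same bookkeeping argument as the paper: decompose the outer $O(\log n)$ loop, the $O(L \cdot k \cdot Q)$ cost of sampling a walk (from \Cref{rem:sample_from_cross}), and the $O(d \cdot k \cdot Q)$ per-vertex cost of the $\NM$-membership test. One very minor inaccuracy: line~\eqref{line:M} tests distance only against the single mean $\tmu_{\tau(u)}$, so it costs $O(Q)$ rather than $O(k \cdot Q)$ as you claim, but this is dominated by the $\NM$ test and has no effect on the final bound.
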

\begin{proof} Besides the data structure $\mathcal{D}$ required by the Spectral Dot Product Oracle, \robcon \ stores at most one random walk of length $L$ and two counters $r$ and $q$ at any point in time. Therefore, the space complexity of \robcon \ is the same as the space complexity of the Spectral Dot Product Oracle.

Now, we show an upper bound on the runtime of \robcon. By \Cref{rem:sample_from_cross}, one call to line~\eqref{line:sample_from_cross} takes time $O(L\cdot k\cdot Q)$. For every vertex $w \in \walk_i$, the procedure in lines~\eqref{line:M}-\eqref{line:M_end} which checks if $w \in \M$ takes time $O(Q)$, and the procedure in lines~\eqref{line:NM_begin}-\eqref{line:NM_end} which checks if $w \in \NM$ takes time $O(d\cdot k\cdot Q)$. Therefore, the total runtime of \robcon \ is $O(d\cdot k\cdot L\cdot Q\cdot \log n)$, as desired.
\end{proof}

\subsection{The regularized neighborhood}
\label{subsec:regnei}
In this section, we prove the two crucial properties of the set $\NM$: first, its size is upper-bounded by $\approx \epsilon n$; second, the ``impostors expand'' Lemma (see \Cref{lem:imposters_expand}), which states that subsets of $\im \setminus \nei_G(\M)$ expand, can be strengthened to guarantee that subsets of $\im \setminus \NM$ expand. These two facts together allow us to reduce the number of misclassified vertices from $ \approx  \delta | \nei_G(\M)|  \approx \delta  d \epsilon n$ to $ \approx \delta |\NM| \approx  \delta \epsilon n$.
\\~\\
We begin by showing the latter of the two properties of $\NM$.

\begin{lemma}[Impostors expand, extended to $\nei_G(\M) \setminus \NM$]\label{lem:imposters_expand_stronger}
Consider the setting of \Cref{fig:setting}.
For every $i ,j \in [k]$ with $i \neq j$ and every $S \subseteq \im(i,j) \setminus \NM$, one has
$$|E_{G[C_j]}(S, (\im(i,j) \cup \M ) \setminus S)|\geq \frac{\phi}{2}d|S| \, .$$ 
\end{lemma}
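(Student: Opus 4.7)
The plan is to mimic the proof of \Cref{lem:imposters_expand}, but invoke the strengthened \Cref{claim:NM_neighbors} in place of \Cref{lemma:cores_sparsely_connected}, so that the inclusion $S \subseteq \im(i,j) \setminus \NM$ (a larger set than $\im(i,j) \setminus \nei_G(\M)$) still suffices to control the number of edges leaving $\spec(i)$. The only genuinely new ingredient is that some of the expansion in $G\{C_j\}$ may now be absorbed by edges from $S$ to $\M \cap C_j$, so we need to include $\M$ on the right-hand side of the cut being bounded. Conveniently, this is exactly what the lemma statement allows.

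\emph{Step 1 (Size bound).} First I would note that $S \subseteq \im(i,j) \subseteq C_j$ and that $|S| \leq |\im| \leq 2\cdot 10^4 \eta^2 \epsilon/\phi^4 \cdot n \leq |C_j|/2$ by \Cref{lemma:impostor_n_cross_size} together with the parameter settings in \Cref{fig:setting} (using $|C_j| \geq n/(\eta k)$ from \Cref{rem:mu_i_norm}).

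\emph{Step 2 (Expansion in $C_j$).} Since $G\{C_j\}$ is a $\phi$-expander and $d$-regular, and self-loops do not contribute to any cut,
\begin{equation*}
|E_{G[C_j]}(S, C_j \setminus S)| \;\geq\; \phi\, d\,|S|.
\end{equation*}

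\emph{Step 3 (Decompose $C_j \setminus S$).} Using the partition $V = \M \sqcup \bigsqcup_{l\in[k]}\spec(l)$ and the identity $\spec(i)\cap C_j = \im(i,j)$ (from \Cref{def:impostor}, together with $i\neq j$), we have $C_j \setminus S \subseteq (\M \cap C_j) \,\cup\, (\spec(\neg i)\cap C_j)\,\cup\,(\im(i,j)\setminus S)$. Hence
\begin{align*}
|E_{G[C_j]}(S, C_j\setminus S)| &\leq |E_{G[C_j]}(S, \M\cap C_j)| + |E_{G[C_j]}(S, \spec(\neg i)\cap C_j)| + |E_{G[C_j]}(S, \im(i,j)\setminus S)|\\
&\leq |E_{G[C_j]}(S, (\im(i,j)\cup \M)\setminus S)| + \sum_{u \in S}\bigl|\nei_G(u)\cap \spec(\neg i)\bigr|.
\end{align*}

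\emph{Step 4 (Apply the strengthened sparse-connectivity lemma).} Because $S \subseteq \im(i,j)\setminus \NM \subseteq \spec(i)\setminus \NM$, \Cref{claim:NM_neighbors} applies to every $u\in S$, giving $|\nei_G(u)\cap \spec(\neg i)| \leq \tfrac{\phi}{2}d$. Combined with Steps 2--3, this yields
\begin{equation*}
\phi\,d\,|S| \;\leq\; |E_{G[C_j]}(S, (\im(i,j)\cup \M)\setminus S)| + \tfrac{\phi}{2}d\,|S|,
\end{equation*}
and rearranging completes the proof.

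The main obstacle would be purely bookkeeping: the inner edges of $G[C_j]$ against $\M$ are only counted when the $\M$-endpoint lies in $C_j$ (which is why $\M$ appears on the right-hand side of the cut in the statement), and one must verify that \Cref{claim:NM_neighbors} can be invoked uniformly on all of $S$ — both are handled cleanly above. No new quantitative estimates are needed beyond what is already in \Cref{claim:NM_neighbors} and \Cref{lemma:impostor_n_cross_size}.
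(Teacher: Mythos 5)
Your proof is correct and follows essentially the same approach as the paper: bound $|S|$ by $|C_j|/2$, invoke the $\phi$-expansion of $G\{C_j\}$, decompose $C_j$ into $\M \cup \spec(\neg i) \cup \im(i,j)$, and replace \Cref{lemma:cores_sparsely_connected} with \Cref{claim:NM_neighbors} to handle the weaker hypothesis $S \subseteq \im(i,j)\setminus\NM$. The only cosmetic difference is that the paper writes the three cut terms explicitly and combines $E_j(S,\M)$ with $E_j(S,\im(i,j)\setminus S)$ at the end, whereas you merge them earlier; the argument is the same.
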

\begin{proof}
The proof is almost identical to the proof of Lemma \ref{lem:imposters_expand}. 
For simplicity of notation, we write $E_j$ to denote $E(C_j)$ throughout this proof. 
Note that \smash{$|S| \leq |\im| \leq 2 \cdot 10^4 \cdot \eta^2 \cdot\frac{\epsilon}{\phi^4}n \leq \frac{|C_j|}{2}$}, where the second inequality follows by \Cref{lemma:impostor_n_cross_size}, and the third inequality follows by the setting of parameters in \Cref{fig:setting}. Since $S \subseteq C_j$, and since $C_j$ is a $\phi$-expander, we have
\begin{equation}\label{eq:S_expands_reg}
\left|E_j(S, C_j \setminus S)\right| =\left|E(S, C_j \setminus S)\right|\geq \phi\cdot  d \cdot |S|.
\end{equation}
\noindent
We will now show that a large fraction of the edges in $E_j(S, C_j \setminus S)$ go to $\im(i,j) \cup \M$. 
We have
\begin{align*}
 C_j 
 & \subseteq \M \cup \spec(\neg i) \cup  \im(i,j).
 \end{align*}
\noindent
Thus,
\begin{equation}\label{eq:edgecounting_reg}
 \begin{aligned}
 |E_j(S, C_j \setminus S)| &  \leq \left|E_j(S, \M)\right| +\left|E_j(S,   \spec(\neg i))  \right| + \left |E_j(S, \im(i,j) \setminus S)\right| \\
 & \leq \left|E_j(S, \M)\right| + \sum_{u \in S}\left|\nei_G(u) \cap \spec(\neg i)  \right|  + \left|E_j(S, \im(i,j) \setminus S)\right| \\
 & \leq\left|E_j(S, \M)\right| +  |S| \cdot d \cdot \frac{\phi}{2} + \left|E_j(S, \im(i,j) \setminus S)\right| \qquad \text{by \Cref{claim:NM_neighbors}.}
 \end{aligned}
 \end{equation}
 \noindent
If we combine the above bound with~\eqref{eq:S_expands_reg} and rearrange, we conclude 

\begin{align*}
    |E_j(S, (\im(i,j) \cup \M) \setminus S)| & =     |E_j(S, \im(i,j) \setminus S)|  +     |E_j(S,\M)|  &&  \text{since $S \subseteq \im$ and $\im \cap \M = \emptyset$} \\
    & \geq |E_j(S, C_j \setminus S)| - |S| \cdot d \cdot \frac{\phi}{2} &&  \text{by Equation \eqref{eq:edgecounting_reg}} \\
    & \geq  |S| \cdot d \cdot \frac{\phi}{2}  &&  \text{by Equation  \eqref{eq:S_expands_reg}.}
\end{align*}
\end{proof}

\noindent
In the rest of the section, we prove the other property of $\NM$, namely $|\NM| \lesssim \epsilon n$. 
\begin{lemma}\label{lemma:N(B_far)_reg}  Consider the setting of \Cref{fig:setting}. Then,
\[|\NM| \leq 10^8 \eta^5 \cdot \frac{\epsilon}{\phi^4} n\, .\]
\end{lemma}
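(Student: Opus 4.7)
The plan is to bound $|\NM|$ by counting edges whose two endpoints lie in distinct spectral clusters. The key observation is that every $u \in \NM$ lies in some spectral cluster $\spec(\tau(u))$ with $\tau(u) \in [k]$ (because $u \notin \M$), and by definition has at least $\phi d/2$ neighbors in $\spec(\neg \tau(u))$. Each such edge $(u,v)$ therefore satisfies $\tau(u), \tau(v) \in [k]$ and $\tau(u) \neq \tau(v)$. Summing these contributions over $\NM$ and double-counting each edge at most twice,
\[|\NM| \cdot \frac{\phi d}{2} \;\leq\; 2\,\bigl|\{(u,v) \in E : \tau(u), \tau(v) \in [k],\ \tau(u) \neq \tau(v)\}\bigr|.\]

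Next, I would upper bound the right-hand side by splitting each edge $(u,v)$ of this form into two types. In the first type $\iota(u) \neq \iota(v)$: such edges cross the planted clustering, so $\sum_i |E(C_i, V \setminus C_i)| \leq \sum_i \epsilon d |C_i| = \epsilon d n$ gives at most $\epsilon d n /2$ of them. In the second type $\iota(u) = \iota(v) = a$ for some $a \in [k]$; since $\tau(u) \neq \tau(v)$, at most one of $u, v$ satisfies $\tau(\cdot) = a$, so at least one endpoint is an impostor in the sense of \Cref{def:impostor}. By $d$-regularity this category contains at most $d \cdot |\im|$ edges. Substituting the bound $|\im| \leq 2 \cdot 10^4 \eta^2 \epsilon n / \phi^4$ from \Cref{lemma:impostor_n_cross_size}, the total is $O(\eta^2 \epsilon d n / \phi^4)$. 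Dividing by $\phi d/2$ then gives the desired bound on $|\NM|$, with the extra $\phi$ and $\eta$ factors absorbed into the prefactor $10^8 \eta^5$ using the parameter constraints from \Cref{fig:setting}.

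The main subtlety is that the local ``few neighbors across spectral clusters'' estimate of \Cref{lemma:cores_sparsely_connected} cannot be applied directly to vertices of $\NM$, since that lemma excludes $\nei_G(\M)$ and by definition $\NM \subseteq \nei_G(\M)$. A naive local spectral argument (applying \Cref{lemma:fx_close_to_avg_neighbors} to $u \in \NM$ and splitting its neighbors according to spectral cluster) breaks down precisely because the contribution of $\M$-neighbors to the average $\frac{1}{d}\sum_{v \in \nei_G(u)} f_v$ is not controlled by any small quantity. Sidestepping this obstacle is exactly why the argument routes through the \emph{global} count of inter-spectral-cluster edges and uses \Cref{lemma:impostor_n_cross_size} as a black box, rather than trying to reason vertex-by-vertex.
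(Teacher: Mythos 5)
Your global edge-counting scheme is clean and correct as far as it goes: every $u\in\NM$ contributes $\ge \tfrac{\phi}{2}d$ edges with both endpoints in spectral clusters but of distinct spectral labels, each such edge is double-counted at most twice, and the edges split into at most $\tfrac{\epsilon d n}{2}$ crossing ones plus at most $d|\im|$ edges touching an impostor. But the resulting inequality is
\[
|\NM|\cdot\frac{\phi d}{2}\ \le\ 2\left(\frac{\epsilon d n}{2}+d|\im|\right)\ \le\ \epsilon d n + 4\cdot 10^4 \eta^2 \frac{\epsilon d n}{\phi^4}\,,
\]
and dividing by $\phi d/2$ yields $|\NM|=O\!\left(\eta^2 \tfrac{\epsilon}{\phi^5}n\right)$, not $O\!\left(\eta^5\tfrac{\epsilon}{\phi^4}n\right)$. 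Your remark that the extra $1/\phi$ can be ``absorbed into the prefactor $10^8\eta^5$ using the parameter constraints'' is where the argument fails: absorption would require $\phi\ \gtrsim\ 1/(\eta^3)$, but no constraint in \Cref{fig:setting} gives a lower bound on $\phi$ in terms of $\eta$ alone --- $\epsilon/\phi^6\le 10^{-5}/\eta^4$ only bounds $\phi$ from below in terms of $\epsilon$, which can be arbitrarily small. So you have proven a genuinely weaker lemma, with $\phi^5$ in place of $\phi^4$. This would not be harmless downstream: the $|\NM|$ bound feeds into \Cref{claim:Eij_ub} and \Cref{claim:false_pos}, where the weaker bound propagates to a final misclassification rate of $\widetilde O(\epsilon\delta/\phi^7)$ rather than the $\widetilde O(\epsilon\delta/\phi^6)$ of \Cref{thm:sublinear}.

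The reason the paper manages to shave off this factor of $1/\phi$ is that it never pays $d\cdot|\im|$ for edges incident to impostors and then divides by $\phi d$. Instead, it shows $\NM\subseteq\bigcup_i P_i$ (\Cref{lemma:NM_in_D}), where $P_i$ is the set of ``strongly-pulled'' vertices, and bounds $|\bigcup_i P_i|$ (\Cref{lemma:D_ub}) via a balance-of-forces argument: the total force exerted across the cuts $H_i$ (which live in $\M$ by \Cref{lemma:H_in_M}) is controlled by $\sqrt{|\M|\cdot \sum_{v\in\M}\|f_v\|_2^2}$, and the second factor is bounded in \Cref{claim:norms_of_bfar} using the variance bound (\Cref{claim:sum_of_distances}). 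This routes the argument through a Cauchy--Schwarz-type square-root gain over the naive edge count, rather than a linear one, which is precisely what buys back the missing factor of $\phi$. A worthwhile exercise would be to check whether the weaker $\phi^5$ bound obtained by your simpler argument suffices for whatever downstream application you have in mind; if $\phi=\Omega(1)$ (as in the parameter setting \pslabel{}) both bounds are interchangeable, but the lemma as stated needs the sharper estimate.
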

\noindent
To prove the lemma, we first introduce an auxiliary family of sets $\{ H_i\}_{i \in [k]}$, which are the sets of vertices that have an abnormally high projection onto a cluster mean $\mu_i$.

\begin{definition}[High-projection vertices]\label{def:H}  Consider the setting of \Cref{fig:setting}. 
For $i \in [k]$, we define the $i$-th set of high-projection vertices, denoted $H_i$, as 
$$H_i \coloneqq \left \{ v \in V : \langle f_v, \mu_i\rangle \geq \left( 1 + \frac{\phi}{8}\right)\|\mu_i\|^2_2 \right \} \, .$$
\end{definition}

\noindent
We show that the high projection sets $H_i$ are subsets of the cross-vertices $\M$. Intuitively, this is true because vertices in $H_i$ are essentially parallel to $\mu_i$, so they cannot be in any $\spec(j)$ for $j \neq i$ since $\mu_i$ and $\mu_j$ are roughly orthogonal, and they cannot be in $\spec(i)$ since they are too long in the direction of~$\mu_i$.
\begin{lemma}\label{lemma:H_in_M} Consider the setting of \Cref{fig:setting}. 
Then, $H_i \subseteq \M$ for all $i \in [k].$
\end{lemma}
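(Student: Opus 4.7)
The plan is to fix $v \in H_i$ and rule out $v \in \spec(j)$ for every $j \in [k]$, which by \Cref{def:spec} forces $v \in \M$. The two cases $j = i$ and $j \neq i$ require different tools, but in both we will use the bound from \Cref{claim:spec_exact_mu} which controls the distance of a spectral cluster vertex from its true cluster mean.

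For the case $j = i$, I would assume towards contradiction that $v \in \spec(i)$. Then \Cref{claim:spec_exact_mu} gives $\|f_v - \mu_i\|_2 \le \frac{\phi}{10\sqrt{\eta}}\|\mu_i\|_2$, and by Cauchy--Schwarz
\[
\langle f_v, \mu_i\rangle = \|\mu_i\|_2^2 + \langle f_v - \mu_i, \mu_i\rangle \le \left(1 + \frac{\phi}{10\sqrt{\eta}}\right)\|\mu_i\|_2^2.
\]
Since $\eta \ge 1$, we have $\tfrac{\phi}{10\sqrt{\eta}} \le \tfrac{\phi}{10} < \tfrac{\phi}{8}$, which contradicts the defining inequality of $H_i$ (\Cref{def:H}).

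For the case $j \neq i$, I would assume $v \in \spec(j)$ and split
\[
\langle f_v, \mu_i\rangle = \langle \mu_j, \mu_i\rangle + \langle f_v - \mu_j, \mu_i\rangle.
\]
For the second term, \Cref{claim:spec_exact_mu} applied to $\mu_j$ plus Cauchy--Schwarz give $|\langle f_v - \mu_j, \mu_i\rangle| \le \frac{\phi}{10\sqrt{\eta}}\|\mu_j\|_2\|\mu_i\|_2$. \Cref{rem:mu_i_norm} implies $\|\mu_j\|_2 \le 2\eta\|\mu_i\|_2$ (after comparing the lower bound on $\|\mu_i\|_2^2$ with the upper bound on $\|\mu_j\|_2^2$), so this term is at most $\frac{\phi\sqrt{\eta}}{5}\|\mu_i\|_2^2$. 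For the first term, \Cref{lemma:clustermeans} gives $|\langle \mu_i, \mu_j\rangle| \le \frac{8\sqrt{\epsilon}}{\phi\sqrt{|C_i||C_j|}}$; using the volume bounds $|C_l| \ge n/(k\eta)$ together with $\|\mu_i\|_2^2 \ge \tfrac{k}{2\eta n}$ from \Cref{rem:mu_i_norm} yields $|\langle \mu_i, \mu_j\rangle| \le \frac{16\sqrt{\epsilon}\,\eta^2}{\phi}\|\mu_i\|_2^2$.

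Combining both estimates, $\langle f_v, \mu_i\rangle \le \bigl(\tfrac{\phi\sqrt{\eta}}{5} + \tfrac{16\sqrt{\epsilon}\,\eta^2}{\phi}\bigr)\|\mu_i\|_2^2$. The parameter constraints in \Cref{fig:setting} give $\phi\sqrt{\eta} < 1/\sqrt{10^3}$ and $\tfrac{\sqrt{\epsilon}\,\eta^2}{\phi} = \phi^2 \cdot \tfrac{\sqrt{\epsilon}\eta^2}{\phi^3} \le \phi^2/\sqrt{10^5}$, so the bracket is bounded by, say, $1/2$, and in particular strictly less than $1 + \phi/8$. This again contradicts $v \in H_i$, completing the proof. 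The main obstacle is tracking these constants carefully in the $j \neq i$ case, but the slackness in the parameter regime in \Cref{fig:setting} makes the inequality go through with considerable room to spare.
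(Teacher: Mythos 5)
Your proof is correct and takes essentially the same route as the paper: assume $v \in H_i \cap \spec(j)$, bound $\langle f_v,\mu_i\rangle$ via triangle inequality, \Cref{claim:spec_exact_mu}, and the near-orthogonality of the cluster means from \Cref{lemma:clustermeans}, split into $j=i$ and $j\neq i$, and derive a contradiction with the $(1+\phi/8)\|\mu_i\|_2^2$ threshold. The constants are tracked slightly differently (you pass through \Cref{rem:mu_i_norm} to compare $\|\mu_j\|_2$ with $\|\mu_i\|_2$ whereas the paper works with $|C_i|,|C_j|$ more directly), but the inequality chain is the same.
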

\begin{proof}
Suppose for contradiction that there exists some $v \in H_i \setminus \M$. Since $v \notin \M$ is not a cross vertex, we must have $v\in \spec(j)$ for some $j \in [k]$ (by Definition \ref{def:spec}). Therefore, by \Cref{claim:spec_exact_mu}, we have 
$$\|f_v - \mu_j\|_2 \leq \frac{\phi}{10\sqrt{\eta}}\|\mu_j\|_2,$$ 
so by the triangle inequality we have 
\begin{equation}\label{eq:H_in_M}
|\langle f_v, \mu_i\rangle | \leq |\langle f_v - \mu_j, \mu_i\rangle| + |\langle \mu_j, \mu_i\rangle| \leq \frac{\phi}{10\sqrt{\eta}}\|\mu_j\|_2\cdot\|\mu_i\|_2 + |\langle \mu_i, \mu_j\rangle|.
\end{equation}
If $i = j$, then by~\eqref{eq:H_in_M}, we get
\[|\langle f_v, \mu_i\rangle | \leq \left(1 + \frac{\phi}{10\sqrt{\eta}}\right)\cdot\|\mu_i\|^2_2 <\left(1 + \frac{\phi}{8}\right)\|\mu_i\|^2_2 \, ,\]
since $\eta \ge 1$ by definition (see \Cref{fig:setting}), which contradicts $v \in H_i$. If instead $i \neq j$, we plug guarantee~\ref{bulletpt:mu_i_dot_mu_j} of \Cref{lemma:clustermeans} into~\eqref{eq:H_in_M} and get
\begin{equation}
\label{eq:caseineqj}
    |\langle f_v, \mu_i\rangle| \leq \frac{\phi}{10\sqrt{\eta}}\cdot\|\mu_i\|_2\|\mu_j\|_2 + \frac{8\sqrt{\epsilon}}{\phi} \cdot \frac{1}{\sqrt{|C_i||C_j|}}
\end{equation}
Now, by guarantee~\ref{bulletpt:mu_norm} of \Cref{lemma:clustermeans}, we also have
\begin{equation}
\label{eq:clustermean-clustersize}
    \left(1-\frac{4\sqrt{\epsilon}}{\phi}\right) \frac{1}{|C_i|} \le \|\mu_i\|^2_2 \le \left(1+\frac{4\sqrt{\epsilon}}{\phi}\right) \frac{1}{|C_i|} \quad \text{and} \quad  \left(1-\frac{4\sqrt{\epsilon}}{\phi}\right) \frac{1}{|C_j|} \le \|\mu_j\|^2_2 \le \left(1+\frac{4\sqrt{\epsilon}}{\phi}\right) \frac{1}{|C_j|}  \, .
\end{equation}
We can plug these into~\eqref{eq:caseineqj} to get
\begin{equation*}
    |\langle f_v, \mu_i\rangle| \leq \frac{\phi}{10\sqrt{\eta}}\cdot\|\mu_i\|_2\|\mu_j\|_2 + \frac{8\sqrt{\epsilon}}{\phi} \cdot \frac{1}{1-\frac{4\sqrt{\epsilon}}{\phi}} \cdot \|\mu_i\|_2\|\mu_j\|_2 \, ,
\end{equation*}
Finally, we combine the definition of $\eta$ (see \Cref{fig:setting}) with~\eqref{eq:clustermean-clustersize} and upper-bound
\begin{equation*}
    \|\mu_j\|_2 \le \sqrt{\left(1+\frac{4\sqrt{\epsilon}}{\phi}\right)\frac{1}{|C_j|}} \le  \sqrt{\left(1+\frac{4\sqrt{\epsilon}}{\phi}\right)\frac{\eta}{|C_i|}} \le \sqrt{\frac{1+\frac{4\sqrt{\epsilon}}{\phi}}{1-\frac{4\sqrt{\epsilon}}{\phi}} \cdot \eta \|\mu_i\|_2^2} \le \frac{1+\frac{4\sqrt{\epsilon}}{\phi}}{1-\frac{4\sqrt{\epsilon}}{\phi}} \sqrt{\eta} \cdot \|\mu_i\|_2 \, ,
\end{equation*}
which ultimately gives
\begin{equation}
\label{eq:bound_L_i}
    |\langle f_v, \mu_i\rangle| \leq \left(\frac{\phi}{10} + \frac{8\sqrt{\epsilon}}{\phi} \cdot \frac{\sqrt{\eta}}{1-\frac{4\sqrt{\epsilon}}{\phi}}\right) \frac{1+\frac{4\sqrt{\epsilon}}{\phi}}{1-\frac{4\sqrt{\epsilon}}{\phi}}  \|\mu_i\|^2_2 < \frac{\phi}{8}\|\mu_i\|_2^2 \, ,
\end{equation}
where the penultimate step uses our assumption that ${\sqrt{\epsilon}}\sqrt{\eta} \le {10^{-5}} \phi$ (see \Cref{fig:setting}). We have thus reached a contradiction with $v \in H_i$, since $\phi/8 \cdot \|\mu_i\|_2^2 < (1+\phi/8)\|\mu_i\|_2^2$.
\end{proof}

\noindent
We now introduce some terminology that will help us motivate the definition of another auxiliary family of sets and facilitate our discussions in the remainder of the section.

\begin{term*}\label{def:pullforce}
    In \Cref{lemma:fx_close_to_avg_neighbors}, we show that a spectral embedding of a vertex roughly equals the average of the spectral embeddings of its neighbors. For a vertex $u$ and one of its neighbors $v$, we say that $v$ \emph{pulls} $f_u$ with \emph{force} $f_u - f_v$. Thus, \Cref{lemma:fx_close_to_avg_neighbors} informally says that the average force with which the neighbors of $v$ pull $f_v$ is roughly $0$.
\end{term*}

\noindent
 In \Cref{fig:pulled_set} we depict a vertex $u \in \spec(i)$ and it's neighbors whose projections on $\mu_i$ is either much bigger than $\|\mu_i\|_2$ (these are connected to $u$ with solid vetors) or much smaller (these are connected to you with dashed vectors). From the above, the sum of projections of the solid vectors on $\mu_i$ should we roughly equal to the sum of projections of dashed vectors. Furthermore, \Cref{fig:pulled_set} also provides an illustration of the set~$H_i$.

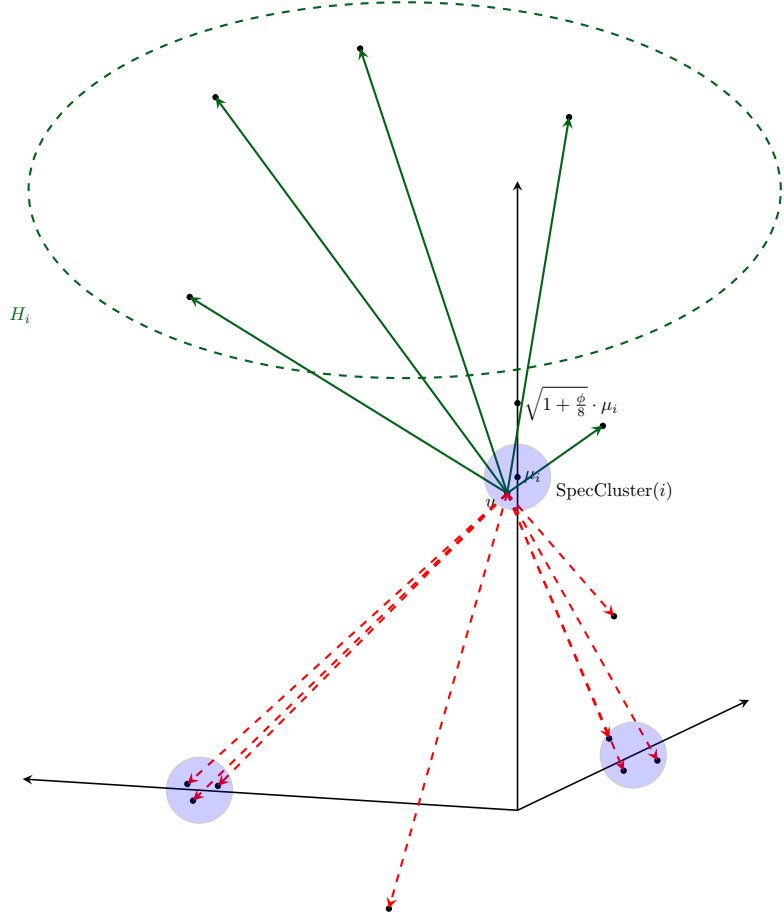
\begin{figure}[h]
	\centering
	\tdplotsetmaincoords{100}{160}
\begin{adjustbox}{trim=0 0 0 0,clip}
\begin{tikzpicture}[tdplot_main_coords, scale=5]
\definecolor{mydarkgreen}{RGB}{0,100,20}
  \draw[->, >=stealth, semithick] (0,0,0) -- (1.4,0,0);
  \draw[->, >=stealth, semithick] (0,0,0) -- (0,1.8,0);
  \draw[->, >=stealth, semithick] (0,0,0) -- (0,0,1.7);

  \draw plot [mark=*, mark size=0.2] coordinates{(0,0,1.1)}; 

  \draw plot [mark=*, mark size=0.2] coordinates{(0,0,0.9)}; 
  \node[anchor=west, scale = 0.7] at (0, 0, 0.9) {$\mu_i$};
   
  
  \draw plot [mark=*, mark size=0.2] coordinates{(0,-0.08,0.87)}; 
  \node[anchor=west, scale = 0.7] at (0,-0.3,0.88) {$u$};

  \draw plot [mark=*, mark size=0.2] coordinates{(0, 0.75, 0.4)}; 

  \foreach \x/\y/\z in {
    0.25/1.35/0.8,
    0.6/-0.9/1.5,
    0.4/1.5/1.6,
    1.0/0.4/1.8,
    0.7/0.7/1.9
  }{
    \draw plot [mark=*, mark size=0.2] coordinates {(\x,\y,\z)};
    \draw[->, >=stealth, mydarkgreen, thick] (0,-0.08,0.87) -- (\x,\y,\z);
  }

  \draw[->, >=stealth, red, thick, dashed] (0,-0.08,0.87) -- (0, 0.75, 0.4);

  \foreach \x/\y/\z in {
    0.05/0.85/0.05,
    -0.05/0.95/-0.02,
    0.02/0.88/-0.04
  }{
    \draw plot [mark=*, mark size=0.2] coordinates {(\x,\y,\z)};
    \draw[->, >=stealth, red, thick, dashed] (0,-0.08,0.87) -- (\x,\y,\z);
  }

  \draw plot [mark=*, mark size=0.2] coordinates {(0.0, -1, -0.1)};
  \draw[->, >=stealth, red, thick, dashed] (0,-0.08,0.87) -- (0.0, -1, -0.1);

\foreach \angle in {30, 150, 270} {
    \pgfmathsetmacro\x{0.9 + 0.05*cos(\angle)}
    \pgfmathsetmacro\y{0.0 + 0.05*sin(\angle)}
    \pgfmathsetmacro\z{0.02*sin(\angle)}  
    \draw plot [mark=*, mark size=0.2] coordinates {(\x,\y,\z)};
    \draw[->, >=stealth, red, thick, dashed] (0,-0.08,0.87) -- (\x,\y,\z);
}
  \draw plot [mark=*, mark size=2.5, mark options={fill=blue, opacity=0.2}] coordinates {(0,0.9,0)};
  \draw plot [mark=*, mark size=2.5, mark options={fill=blue, opacity=0.2}] coordinates {(0.9,0, 0)};
  \draw plot [mark=*, mark size=2.5, mark options={fill=blue, opacity=0.2}] coordinates {(0,0,0.9)};

  \node[anchor=west, scale = 0.7] at (0, 0, 1.1) {$\sqrt{1 + \frac{\phi}{8}}\cdot \mu_i$};

  \node[anchor=west, scale = 0.7] at (0, 0.25, 0.82) {$\spec(i)$};
  
\begin{scope}[tdplot_screen_coords]
  \draw[dashed, thick, color = mydarkgreen] (-0.3,1.65) ellipse [x radius=1, y radius=0.5];
\end{scope}

  \node[anchor=west, scale = 0.7, color = mydarkgreen] at (0, -4, 2) {$H_i$};

\end{tikzpicture}
\end{adjustbox}
	  \caption{An illustration of a vertex $u$ in $\spec(i)$. The solid arrows point to the neighbors of $u$ with high projection on $\mu_i$, and dashed arrows point to the neighbors with small projection on $\mu_i$. The solid and dashed arrows balance each other.}
\label{fig:pulled_set}
\end{figure}

Thus, we define the family of \textit{strongly-pulled vertices} $\{P_i\}_{i \in [k]}$ as follows: $P_i$ consists of the vertices in the $i$-th spectral cluster that experience a strong pulling force in the direction of $\mu_i$ from their neighbors in $H_i$, which have a large projection on $\mu_i$ (by \Cref{def:H}). Another way to think about $P_i$ is the following: it consists of vertices in the $i$-th spectral cluster that have ``many'' neighbors in $H_i$, where ``many'' is weighted by the force $f_v-f_u$. The sets $\{P_i\}_{i \in [k]}$ are formally introduced in \Cref{def:D} below.

\begin{definition}[Strongly-pulled vertices]\label{def:D} Consider the setting of \Cref{fig:setting}. 
For $i \in [k]$, we define the $i$-th set of \textit{strongly-pulled vertices}, denoted $P_i$, as 
$$ P_i \coloneqq \left\{ u \in \spec(i) : \sum_{v \in \nei_G(u) \cap H_i } \langle f_v - f_u, \mu_i \rangle \geq  \frac{\phi}{10}d  \cdot \|\mu_i\|_2^2\right\} \, . $$
\end{definition}

\begin{remark}
    For any $i \in [k]$, according to the definition above, we put $u \in V$ in $P_i$ if it satisfies two conditions:  $u \in \spec(i)$ and $\sum_{v \in \nei_G(u) \cap H_i } \langle f_v - f_u, \mu_i \rangle \geq d\phi/10 \cdot  \|\mu_i\|_2^2$. We remark that the first condition uses approximate cluster means and approximate inner products (see \Cref{def:spec}), while the second condition uses exact cluster means and the standard dot product between spectral embeddings.
\end{remark}

\noindent
Equipped with the notion of strongly-pulled vertices, we have the following two lemmas, from which we can easily conclude a proof of \Cref{lemma:N(B_far)_reg}.

\begin{lemma}\label{lemma:NM_in_D}
 Consider the setting of \Cref{fig:setting}. Then 
    $$\NM \subseteq \bigcup_{i\in [k]} P_i \, .$$
\end{lemma}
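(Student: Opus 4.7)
Fix $u \in \NM$, and let $i = \tau(u)$, so $u \in \spec(i)$, $u$ has a neighbor in $\M$, and
\[
b \coloneqq |\nei_G(u) \cap \spec(\neg i)| \ge \frac{\phi}{2} d.
\]
The plan is to upper bound the total pulling force in the direction of $\mu_i$ using \Cref{lemma:fx_close_to_avg_neighbors}, and then to split the sum over $N \coloneqq \nei_G(u)$ into contributions from $N \cap H_i$ and $N \setminus H_i$, bounding the latter from above using the structure established in \Cref{lemma:H_in_M}.

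\medskip\noindent
\emph{Step 1: Control the total force.} Applying \Cref{lemma:fx_close_to_avg_neighbors}, taking the inner product of $d f_u - \sum_{v \in N} f_v$ with $\mu_i$, and using $\|f_u\|_2 \le (1+\phi/(10\sqrt{\eta}))\|\mu_i\|_2$ (from \Cref{claim:spec_exact_mu} applied to $u \in \spec(i)$), we obtain
\[
\sum_{v \in N}\langle f_v - f_u, \mu_i\rangle \;\ge\; -\,4\epsilon d \,\|\mu_i\|_2^2.
\]

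\medskip\noindent
\emph{Step 2: Bound each $v \in N \setminus H_i$.} By \Cref{claim:spec_exact_mu}, $\langle f_u, \mu_i\rangle \ge (1-\phi/(10\sqrt{\eta}))\|\mu_i\|_2^2$. I will split $N \setminus H_i$ into three parts, using that $H_i \cap \spec(j) = \emptyset$ for every $j \in [k]$ by \Cref{lemma:H_in_M} combined with \Cref{def:spec}:
\begin{itemize}
    \item For $v \in N \cap \spec(i)$, both $f_u$ and $f_v$ lie within $\frac{\phi}{10\sqrt{\eta}}\|\mu_i\|_2$ of $\mu_i$ (\Cref{claim:spec_exact_mu}), so by Cauchy--Schwarz $|\langle f_v - f_u, \mu_i\rangle| \le \frac{\phi}{5\sqrt{\eta}}\|\mu_i\|_2^2$.
    \item For $v \in N \cap \spec(\neg i)$, the same estimate used inside the proof of \Cref{lemma:H_in_M} (culminating in bound \eqref{eq:bound_L_i}) gives $|\langle f_v, \mu_i\rangle| < \frac{\phi}{8}\|\mu_i\|_2^2$, hence
    \[
    \langle f_v - f_u, \mu_i\rangle \;\le\; \tfrac{\phi}{8}\|\mu_i\|_2^2 - \bigl(1-\tfrac{\phi}{10\sqrt{\eta}}\bigr)\|\mu_i\|_2^2 \;\le\; -\bigl(1-\tfrac{\phi}{4}\bigr)\|\mu_i\|_2^2.
    \]
    \item For $v \in N \cap (\M \setminus H_i)$, the hypothesis $v \notin H_i$ gives $\langle f_v, \mu_i\rangle < (1+\phi/8)\|\mu_i\|_2^2$, yielding $\langle f_v - f_u, \mu_i\rangle \le (\tfrac{\phi}{8} + \tfrac{\phi}{10\sqrt{\eta}})\|\mu_i\|_2^2$.
\end{itemize}

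\medskip\noindent
\emph{Step 3: Assemble.} Writing $a = |N \cap \spec(i)|$, $c = |N \cap (\M \setminus H_i)|$, and using $a,c \le d$, $b \ge \phi d/2$:
\[
\sum_{v \in N \setminus H_i}\langle f_v - f_u, \mu_i\rangle \;\le\; \Bigl(\tfrac{\phi}{5\sqrt{\eta}}\, d \;+\; \bigl(\tfrac{\phi}{8}+\tfrac{\phi}{10\sqrt{\eta}}\bigr) d \;-\; \tfrac{\phi}{2}\bigl(1-\tfrac{\phi}{4}\bigr) d\Bigr)\|\mu_i\|_2^2.
\]
Under the assumptions of \Cref{fig:setting} (in particular $\phi/\sqrt{\eta} \le 1$ and $\phi \le 0.1$), the bracketed coefficient is at most $-\phi/4$. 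Combining this with Step~1,
\[
\sum_{v \in N \cap H_i}\langle f_v - f_u, \mu_i\rangle \;=\; \sum_{v \in N}\langle f_v - f_u, \mu_i\rangle \;-\; \sum_{v \in N \setminus H_i}\langle f_v - f_u, \mu_i\rangle \;\ge\; \Bigl(\tfrac{\phi}{4}-4\epsilon\Bigr) d\,\|\mu_i\|_2^2 \;\ge\; \tfrac{\phi}{10} d\,\|\mu_i\|_2^2,
\]
where the final inequality uses $\epsilon \ll \phi$ from \Cref{fig:setting}. This is precisely the defining inequality of $P_i$ from \Cref{def:D}, so $u \in P_i$, as desired.

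\medskip\noindent
The main obstacle will be Step~2's case $v \in \spec(\neg i)$: one must extract from the proof of \Cref{lemma:H_in_M} the clean numerical bound $|\langle f_v, \mu_i\rangle| < \phi/8 \|\mu_i\|_2^2$ for every $v \notin \M$ lying in a different spectral cluster, rather than just the qualitative statement that such $v$ cannot belong to $H_i$. The other delicate point is tracking the additive $O(\epsilon d \|\mu_i\|_2^2)$ error from \Cref{lemma:fx_close_to_avg_neighbors} so that the final slack $\phi/4 - 4\epsilon$ remains comfortably above $\phi/10$; the parameter assumptions in \Cref{fig:setting} are exactly what make this go through.
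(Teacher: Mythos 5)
Your overall strategy---control the total pulling force in the direction of $\mu_i$ via \Cref{lemma:fx_close_to_avg_neighbors}, then split the neighbor set to isolate the contribution of $H_i$---is exactly the paper's. The only real difference is the decomposition of $N\setminus H_i$: the paper partitions by projection onto $\mu_i$ (into $L_i$, $M_i$, $H_i$, see \Cref{def:L_n_M}), while you partition by spectral-cluster membership into $\spec(i)$, $\spec(\neg i)$, and $\M\setminus H_i$. Both partitions are legitimate, and your per-term bounds in Step 2 are correct; the observation that $\spec(\neg i)\subseteq L_i$ (i.e., $|\langle f_v,\mu_i\rangle| < \frac{\phi}{8}\|\mu_i\|_2^2$ for $v\in\spec(\neg i)$, extracted from \eqref{eq:bound_L_i}) is exactly what the paper also uses inside \Cref{claim:sum_M_L}.

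There is, however, a genuine numerical gap in Step 3. You substitute $a\leq d$ \emph{and} $c\leq d$ separately into $a\cdot(\text{pos.}) + c\cdot(\text{pos.})$, effectively double-counting since $a + c \leq d$. Factoring $\phi d\|\mu_i\|_2^2$ out of your assembled bound, the bracketed coefficient reads
\[
\frac{1}{5\sqrt{\eta}} + \frac{1}{8} + \frac{1}{10\sqrt{\eta}} - \frac{1}{2} + \frac{\phi}{8},
\]
which at $\eta=1$ is approximately $-0.075$, \emph{not} $\leq -1/4$ as you assert. The chain $\sum_{N\cap H_i} \geq (0.075\phi - 4\epsilon) d\|\mu_i\|_2^2$ then falls short of the required $\frac{\phi}{10}d\|\mu_i\|_2^2 = 0.1\phi d\|\mu_i\|_2^2$, since $0.075 < 0.1$, so the proof as written does not close.

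The fix is immediate: $N\cap\spec(i)$ and $N\cap(\M\setminus H_i)$ are disjoint subsets of $N$, so $a+c\leq d$. Both per-term positive contributions are at most $\bigl(\frac{1}{8}+\frac{1}{10\sqrt{\eta}}\bigr)\phi\|\mu_i\|_2^2 \leq \frac{9}{40}\phi\|\mu_i\|_2^2$ for $\eta\geq 1$, which with $a+c\leq d$ gives a combined contribution $\leq \frac{9}{40}\phi d\|\mu_i\|_2^2$; together with the $\spec(\neg i)$ term the coefficient becomes roughly $\frac{9}{40}-\frac{1}{2}\approx -0.275$, comfortably below $-\frac{1}{10}-\frac{4\epsilon}{\phi}$. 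This is essentially what the paper achieves structurally: its $M_i$ class absorbs both $\spec(i)$-neighbors and $(\M\setminus H_i)$-neighbors, so the $|N\cap M_i|\leq d$ bound is applied only once.
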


\begin{lemma}\label{lemma:D_ub}
Consider the setting of \Cref{fig:setting}. Then 
    $$ \left|  \bigcup_{i\in [k]} P_i \right| \leq  10^8 \eta^5 \cdot \frac{\epsilon}{\phi^4} n \, .$$
\end{lemma}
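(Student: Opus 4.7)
The plan is to combine the defining inequality of each $P_i$ with a global upper bound on the Dirichlet energy of the spectral embedding. First, observe that the sets $\{P_i\}_{i \in [k]}$ are pairwise disjoint: since $P_i \subseteq \spec(i)$ and the spectral clusters are disjoint by \Cref{lemma:disjoint_balls}, we have $|\bigcup_{i \in [k]} P_i| = \sum_{i \in [k]} |P_i|$, so it suffices to upper bound this sum.

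Summing the defining inequality of $P_i$ over all $u \in P_i$ and then over $i$, and using the lower bound $\|\mu_i\|_2^2 \geq \frac{1}{2\eta}\cdot\frac{k}{n}$ from \Cref{rem:mu_i_norm}, we get
\[
\frac{\phi d k}{20 \eta n}\sum_{i \in [k]} |P_i|\;\leq\;\sum_{i \in [k]} |P_i|\cdot \frac{\phi d}{10}\|\mu_i\|_2^2\;\leq\; \sum_{i \in [k]}\sum_{u \in P_i}\sum_{v \in \nei_G(u)\cap H_i}\langle f_v - f_u,\mu_i\rangle.
\]
The right-hand side is then bounded via Cauchy–Schwarz: each term satisfies $\langle f_v-f_u,\mu_i\rangle \leq \|\mu_i\|_2\|f_v-f_u\|_2 \leq \sqrt{2\eta k/n}\cdot \|f_v-f_u\|_2$, using the upper bound on $\|\mu_i\|_2$ from \Cref{rem:mu_i_norm}.

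The main tool is a global control on the Dirichlet energy of the spectral embedding. For each eigenvector $x_j$ with eigenvalue $\lambda_j$, one has $\sum_{(u,v) \in E}(x_j(u)-x_j(v))^2 = d\lambda_j$ (since $G$ is $d$-regular and $\mathcal{L} = I - A/d$), whence
\[
\sum_{(u,v)\in E}\|f_v - f_u\|_2^2 \;=\; d\sum_{j=1}^k \lambda_j \;\leq\; 2dk\epsilon,
\]
by \Cref{lem:bnd-lambda}. Applying Cauchy–Schwarz once more to the outer sum, the total number of ordered pairs that appear is $N = \sum_{i \in [k]}\sum_{u \in P_i} |\nei_G(u)\cap H_i| \leq d\sum_{i \in [k]} |P_i|$ (here we crucially use disjointness of the $P_i$ together with $d$-regularity), and the sum of $\|f_v-f_u\|_2^2$ over those pairs is at most $4dk\epsilon$ (a factor $2$ for orientations). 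This gives
\[
\frac{\phi dk}{20\eta n}\sum_{i \in [k]}|P_i| \;\leq\; \sqrt{2\eta k/n}\cdot\sqrt{d\sum_{i \in [k]}|P_i|}\cdot\sqrt{4dk\epsilon}.
\]
Squaring both sides and solving for $\sum_{i \in [k]}|P_i|$ yields a bound of the form $O(\eta^{3}\epsilon/\phi^{2})\cdot n$, which is strictly stronger than the stated bound $10^{8}\eta^{5}\epsilon/\phi^{4}\cdot n$.

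The subtle point, and the main obstacle to watch for, is that Cauchy–Schwarz must be applied \emph{after} summing over $i$, rather than per index: a per-index argument bounds the sum $\sum_{u \in P_i, v \in \nei_G(u)\cap H_i}\|f_v-f_u\|_2^2$ by the entire Dirichlet energy $2dk\epsilon$, which then incurs an additional factor of $k$ when summed over $i$ and produces the inferior bound $|P_i| \lesssim k\eta\epsilon n/\phi^2$. Leveraging disjointness of the $P_i$ so that the number of ordered pairs counted is tied to $\sum_i|P_i|$ itself (yielding a self-referential inequality in the unknown $\sum_i |P_i|$) is precisely what lets us avoid this loss.
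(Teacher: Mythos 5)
Your proof is correct and takes a genuinely different, more elementary route than the paper's. The paper proceeds via a three-step chain: first it bounds $|\bigcup_i P_i|$ in terms of the total ``force'' $\sum_i \sum_{(v,u) \in E \colon v \in H_i,\, u \notin H_i}\langle f_v - f_u, \mu_i\rangle$ across the cuts $H_i$ (\Cref{lemma:D_ub_step1}); then it invokes the per-vertex balance equation (\Cref{lemma:fx_close_to_avg_neighbors}) and the near-orthogonality of the cluster means (\Cref{lemma:gklmsL9}) to relate this force to $|\M|$ and $\sum_{v \in \M}\|f_v\|_2^2$ (\Cref{lemma:ub_intermediate}); finally it bounds $\sum_{v \in \M}\|f_v\|_2^2$ (\Cref{claim:norms_of_bfar}). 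You replace the last two steps with a single Cauchy--Schwarz against the total Dirichlet energy $\sum_{\{u,v\} \in E}\|f_v - f_u\|_2^2 = d\sum_{j=1}^k \lambda_j \le 2dk\epsilon$. This avoids the need for \Cref{lemma:gklmsL9} and for controlling $\sum_{v \in \M}\|f_v\|_2^2$, and it yields a tighter bound of order $\eta^3\epsilon/\phi^2 \cdot n$. Your observation that the Cauchy--Schwarz must be applied globally rather than per-index $i$ — so that the ``budget'' $\sum_i |P_i|$ appears on both sides and the inequality becomes self-referential — is exactly what saves the factor of $k$, and the disjointness of the $P_i$ (via \Cref{lemma:disjoint_balls}) is the right ingredient to make the term count $N \le d\sum_i |P_i|$ and the energy bound $\sum \|f_v - f_u\|_2^2 \le 4dk\epsilon$ (each ordered pair appears at most once, since $u$ determines a unique index $i$) go through. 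Both approaches are correct; yours is shorter, sharper, and uses strictly weaker tools.
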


\begin{proof}[Proof of \Cref{lemma:N(B_far)_reg}]
   Follows immediately from \Cref{lemma:NM_in_D} and \Cref{lemma:D_ub}.  
\end{proof}

\noindent
To conclude, we prove \Cref{lemma:NM_in_D} in \Cref{sec:NM_subset_D} and
\Cref{lemma:D_ub} in \Cref{sec:bounding_D}.

\subsubsection{The regularized neighborhood consists of strongly-pulled vertices}\label{sec:NM_subset_D}
In this section, we prove \Cref{lemma:NM_in_D}, which states that the regularized neighborhood $\NM$ is a subset of the union of $P_1,\dots,P_k$. Before diving into the details, we discuss the intuition for the proof. 
\\~\\
Consider a vertex \smash{$u \in \NM$}. By definition, $u$ is in the $i$-th spectral cluster, for some \smash{$i \in [k]$}, and it has \smash{$\phi d/2$} neighbors in other spectral clusters. Informally, if $u \in \NM$ and $u \in \spec(i)$ then $u$ has many neighbors in other spectral clusters, so looking at \Cref{fig:pulled_set}, there must be many dashed arrows coming out of $u$. Since the dashed arrows have to be balanced by solid arrows, there must be many solid arrows as well. And this, in turn, implies that $u$ has to have many neighbors in $H_i$ which means that $u \in P_i$ by the definition of $P_i$. 

We now provide a more formal sketch of the proof. Let $u \in \NM \cap \spec(i)$ for some $i \in [k]$. 
Suppose that $v \in \nei_G(u) \cap \spec(\neg i)$. By definition of $\NM$ (\Cref{def:NM}), $u$ has at least $\phi d/2$ such neighbors. Note that the amount of force with which any such neighbor $v$ pulls \smash{$f_u$} in the direction of $\mu_i$ is negative, more specifically
$$\langle f_v - f_u , \mu_{i} \rangle \approx  \left \langle \mu_{\tau(v)} - \mu_{i} , \mu_{i} \right \rangle \approx - \|\mu_{i}\|_2^2 \, .$$
On the other hand, we know that $f_u$ actually ends up being very close to $\mu_i$ (since $u \in \spec(i)$ by definition of $P_i$), but by the balance equations (see \Cref{lemma:fx_close_to_avg_neighbors}), we know that
$$ \sum_{v \in \nei_G(u)} \langle f_v - f_u, \mu_{i} \rangle  \approx 0 \, .$$
This implies that the other neighbors of $u$ must compensate by exerting a strong pulling force on $f_u$ back in the direction of $\mu_{i}$, i.e. 
$$ \sum_{\substack{v \in \nei_G(u)\, : \\ v \notin \spec(\neg i)}} \langle f_v - f_u, \mu_{i} \rangle  \gtrsim d \frac{\phi}{2}  \cdot \|\mu_{i}\|^2_2 \, .$$
One can now observe that the sum above is contributed to by the neighbors of $u$ in $\nei_G(u)\cap H_{i}$ (which are those with a large projection onto $\mu_{i}$, see \Cref{def:H}, and note that they are not in $\spec(\neg i)$ by \Cref{lemma:H_in_M}) and by the neighbors of $u$ in $ (\nei_G(u) \setminus \spec(\neg i))\setminus H_i$. However, the latter type of neighbors, can only contribute weakly in this direction,  since for each $v \in  (\nei_G(u) \setminus \spec(\neg i))\setminus H_i$, it holds that 
$\langle f_v, \mu_{i}\rangle < (1+\phi/8) \|\mu_i\|_2^2$. Therefore, 
$$ \sum_{\substack{v \in \nei_G(v)\, : \\ v \notin \spec(\neg i), \\ v\notin H_i}} \langle f_v - f_u, \mu_{i}\rangle \approx   \sum_{\substack{v \in \nei_G(v)\, : \\ v \notin \spec(\neg i), \\ v\notin H_i}}  \langle f_v, \mu_{i}\rangle - 
\|\mu_{i}\|_2^2 \lesssim d  \frac{\phi}{4} \cdot \|\mu_i\|^2_2  \, ,$$
where the last inequality uses that each terms has $\langle f_v, \mu_{i}\rangle - \|\mu_{i}\|_2^2 \leq \frac{\phi}{8}\| \mu_i\|^2_2$. 

Therefore, $f_u$ must be strongly ``pulled'' by neighbors $  v \in H_{i}$ with a high projection onto $\mu_{i}$, which in turn implies that $u$ is in $P_i$:
$$ \sum_{v \in \nei_G(v) \cap H_i} \langle f_v - f_u, \mu_{i}\rangle = \sum_{\substack{v \in \nei_G(v)\, : \\ v \notin \spec(\neg i)}} \langle f_v - f_u, \mu_{i} \rangle - \sum_{\substack{v \in \nei_G(v)\, : \\ v \notin \spec(\neg i), \\ v\notin H_i}} \langle f_v - f_u, \mu_{i}\rangle   \gtrsim d \frac{\phi}{2}  \cdot \|\mu_{i}\|^2_2 \, .$$

\noindent
We now proceed to make the above proof plan formal. We start by showing a version of \Cref{lemma:fx_close_to_avg_neighbors} specialized to vertices $u$ in a spectral cluster $\spec(i)$, in order to bound the average force with which $u$'s neighbor pull $f_u$ in the direction of $\mu_i$.

\begin{lemma}\label{claim:sum_N} Consider the setting of \Cref{fig:setting}.
For all  $i \in [k]$ and  $u \in \spec(i)$, one has
$$\left|  \sum_{v \in \nei_G(u)} \langle f_v - f_u, \mu_i\rangle \right| \leq 4\epsilon d \cdot \|\mu_i\|_2^2 \, .$$
\end{lemma}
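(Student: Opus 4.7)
The plan is to combine the ``eigenvector quasi-invariance'' bound from \Cref{lemma:fx_close_to_avg_neighbors} with the fact that, by \Cref{claim:spec_exact_mu}, vertices in $\spec(i)$ have embedding length comparable to $\|\mu_i\|_2$. The inner product we need to bound is a linear functional (taking the dot product with $\mu_i$) applied to the vector $\sum_{v \in \nei_G(u)}(f_v-f_u)$, so a single application of Cauchy--Schwarz will turn our goal into an estimate on the $\ell_2$ norm of that vector, which is exactly what \Cref{lemma:fx_close_to_avg_neighbors} controls.

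Concretely, I would first note the identity
\[
\sum_{v \in \nei_G(u)}\langle f_v - f_u,\mu_i\rangle \;=\; \Big\langle \sum_{v \in \nei_G(u)} f_v \;-\; d\,f_u,\; \mu_i\Big\rangle,
\]
and apply Cauchy--Schwarz to obtain
\[
\left|\sum_{v \in \nei_G(u)}\langle f_v - f_u,\mu_i\rangle\right| \;\le\; \Big\| d\,f_u - \sum_{v \in \nei_G(u)} f_v\Big\|_2 \cdot \|\mu_i\|_2.
\]
By \Cref{lemma:fx_close_to_avg_neighbors} (multiplied through by $d$), the first factor is at most $2\epsilon d\, \|f_u\|_2$, so the left-hand side is at most $2\epsilon d\,\|f_u\|_2\,\|\mu_i\|_2$.

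To close the argument, I would control $\|f_u\|_2$ using $u \in \spec(i)$: by the triangle inequality and \Cref{claim:spec_exact_mu},
\[
\|f_u\|_2 \;\le\; \|\mu_i\|_2 + \|f_u - \mu_i\|_2 \;\le\; \left(1+\tfrac{\phi}{10\sqrt{\eta}}\right)\|\mu_i\|_2 \;\le\; 2\|\mu_i\|_2,
\]
using $\phi/\sqrt{\eta}\le 1$ as stipulated in \Cref{fig:setting}. Combining the two bounds yields $2\epsilon d\cdot 2\|\mu_i\|_2\cdot \|\mu_i\|_2 = 4\epsilon d\,\|\mu_i\|_2^2$, as claimed.

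There is no real obstacle here: the only subtle point is being careful to pass from the ``average of neighbors'' formulation of \Cref{lemma:fx_close_to_avg_neighbors} to the ``sum of pull forces'' formulation (i.e., multiplying by $d$ inside the norm), and then recognizing that the bound on $\|f_u\|_2$ in terms of $\|\mu_i\|_2$ follows cleanly from membership in the spectral cluster. The constants line up exactly because $\|f_u\|_2 \le 2\|\mu_i\|_2$ is tight enough to absorb the factor $2$ in front of $\epsilon d$ from \Cref{lemma:fx_close_to_avg_neighbors} into the final constant $4$.
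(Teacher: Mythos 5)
Your proof is correct and follows exactly the same route as the paper: rewrite the sum as a single inner product, apply Cauchy--Schwarz to reduce to the norm bound from \Cref{lemma:fx_close_to_avg_neighbors}, and then bound $\|f_u\|_2$ by $2\|\mu_i\|_2$ via \Cref{claim:spec_exact_mu} and the triangle inequality. Nothing to add.
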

\begin{proof}
  By Lemma \ref{lemma:fx_close_to_avg_neighbors}, we have 
  $$ \left \| \sum_{v \in \nei_G(u)} (f_u - f_v) \right\|_2 = d  \left \|f_u - \frac{1}{d} \sum_{v \in \nei_G(u)} f_v \right\|_2 \leq 2d\epsilon \|f_u\|_2 \, .$$
 Then, an application of Cauchy-Schwarz gives 
 \begin{equation}\label{eq:proj_value}
 \left|  \sum_{v \in \nei_G(u)} \langle f_v - f_u, \mu_i\rangle \right| \leq \left \| \sum_{v \in \nei_G(u)} (f_u - f_v) \right\|_2 \|\mu_i\|_2 \leq 2\epsilon d \|f_u\|_2 \|\mu_i\|_2 \, .
 \end{equation}
 Furthermore, since $u \in \spec(i)$, we have by \Cref{claim:spec_exact_mu} that $\|f_u - \mu_i\|_2 \leq \frac{\phi}{10\sqrt \eta}\|\mu_i\|_2$. So by triangle inequality we get 
 $$ \|f_u\|_2 \leq \|f_u - \mu_i\|_2 + \|\mu_i\|_2 \leq \left(1 + \frac{\phi}{10 \sqrt{\eta}}\right)\|\mu_i\| \leq 2 \|\mu_i\|_2 \, ,$$
where the last inequality uses that $\phi \le 1\le \eta$ as per \Cref{fig:setting}. Combining this with~\eqref{eq:proj_value} gives the statement. 
\end{proof}

\noindent
To continue, it is useful to partition the vertices based on their projection onto a cluster mean $\mu_i$: the high projection set $H_i$ (which we already defined in \Cref{def:H}), the medium projection set $M_i$, and the low projection set $L_i$, which we now define.
\begin{definition}[Medium- and low-projection vertices]\label{def:L_n_M}  Consider the setting of \Cref{fig:setting}. 
For $i \in [k]$, we define the $i$-th set of medium-projection vertices, denoted $M_i$, as
$$M_i \coloneqq \left \{ v \in V : \frac{\phi}{8}\|\mu_i\|^2_2  \leq \langle f_v, \mu_i\rangle < \left( 1 + \frac{\phi}{8}\right)\|\mu_i\|^2_2 \right \} \, ,$$
and define the $i$-th set of low-projection vertices, denoted $L_i$, as 
$$L_i \coloneqq \left \{ v \in V : \langle f_v, \mu_i\rangle < \frac{\phi}{8}\|\mu_i\|^2_2 \right \} \, .$$
\end{definition}

\noindent
With this notation, we argue that the average force with which medium-projection neighbors of $u \in \spec(i)$ pull $f_u$ in the direction of $\mu_i$ is positive but mild, while the average force exerted by its low-projection neighbors is negative in the direction of $\mu_i$.

\begin{lemma}\label{claim:sum_M_L}
Consider the setting of \Cref{fig:setting}.
  For all $i \in [k]$ and $u \in \NM \cap \spec(i)$, one has 
 $$\sum_{v \in \nei_G(u) \cap M_i} \langle f_v - f_u, \mu_i\rangle  \leq \frac{1}{4}\phi d \cdot \|\mu_i\|^2_2 \quad \text{and} \quad \sum_{v \in \nei_G(u) \cap L_i} \langle f_v - f_u, \mu_i\rangle  \leq -  \frac{3}{8}\phi d \cdot\|\mu_i\|_2^2 \, .$$
\end{lemma}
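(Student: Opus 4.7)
The plan is to prove each of the two inequalities by controlling the inner product $\langle f_v - f_u, \mu_i\rangle$ vertex-by-vertex using the projection classes $M_i$ and $L_i$, and for the $L_i$ bound additionally using the defining property of $\NM$ to guarantee enough low-projection neighbors.

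First, I would establish a lower bound on the projection of $f_u$ along $\mu_i$. Since $u \in \spec(i)$, \Cref{claim:spec_exact_mu} gives $\|f_u - \mu_i\|_2 \le \frac{\phi}{10\sqrt{\eta}}\|\mu_i\|_2$, so by Cauchy–Schwarz
\[
\langle f_u, \mu_i\rangle \;\ge\; \|\mu_i\|_2^2 - \|f_u-\mu_i\|_2\|\mu_i\|_2 \;\ge\; \left(1 - \tfrac{\phi}{10\sqrt{\eta}}\right)\|\mu_i\|_2^2.
\]
This is the common lower bound I would use throughout.

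For the bound on the $M_i$-sum, for each $v \in \nei_G(u)\cap M_i$ the definition of $M_i$ gives $\langle f_v,\mu_i\rangle < (1+\tfrac{\phi}{8})\|\mu_i\|_2^2$, so combining with the lower bound on $\langle f_u,\mu_i\rangle$ yields
\[
\langle f_v - f_u,\mu_i\rangle \;<\; \left(\tfrac{\phi}{8}+\tfrac{\phi}{10\sqrt{\eta}}\right)\|\mu_i\|_2^2 \;\le\; \tfrac{\phi}{4}\|\mu_i\|_2^2,
\]
using $\eta\ge 1$. Summing over at most $d$ neighbors gives the first claim.

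For the $L_i$-sum, I would first argue that every neighbor of $u$ in $\spec(\neg i)$ actually lies in $L_i$. Indeed, if $v \in \spec(j)$ for some $j\neq i$, then by the same calculation already performed in the proof of \Cref{lemma:H_in_M} (specifically the derivation leading to~\eqref{eq:bound_L_i}), one obtains $|\langle f_v,\mu_i\rangle| < \frac{\phi}{8}\|\mu_i\|_2^2$, so $v\in L_i$. Consequently $\nei_G(u)\cap \spec(\neg i)\subseteq \nei_G(u)\cap L_i$, and because $u\in \NM$ we have $|\nei_G(u)\cap L_i| \ge |\nei_G(u)\cap \spec(\neg i)|\ge \tfrac{\phi}{2}d$ by \Cref{def:NM}. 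For each $v\in \nei_G(u)\cap L_i$, the definition of $L_i$ combined with the lower bound on $\langle f_u,\mu_i\rangle$ gives
\[
\langle f_v - f_u,\mu_i\rangle \;<\; \tfrac{\phi}{8}\|\mu_i\|_2^2 - \left(1-\tfrac{\phi}{10\sqrt{\eta}}\right)\|\mu_i\|_2^2 \;\le\; -\left(1 - \tfrac{\phi}{4}\right)\|\mu_i\|_2^2 \;\le\; -\tfrac{3}{4}\|\mu_i\|_2^2,
\]
using $\phi\le 1$. Since every term in the sum is negative, we may drop all terms outside $\nei_G(u)\cap \spec(\neg i)$ (this can only increase the sum) and lower-bound the number of terms by $\tfrac{\phi}{2}d$, yielding $\sum_{v\in \nei_G(u)\cap L_i}\langle f_v - f_u,\mu_i\rangle \le -\tfrac{3}{4}\cdot \tfrac{\phi}{2}d\cdot \|\mu_i\|_2^2 = -\tfrac{3\phi d}{8}\|\mu_i\|_2^2$, which is the second claim.

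The proof is essentially a bookkeeping exercise with no real obstacle: the only non-trivial input is that the definition of $\NM$ forces $u$ to have a substantial number of neighbors with strictly negative projection contribution, which is precisely what makes the second inequality tight. The previously established fact that $\spec(j)\subseteq L_i$ for $j\neq i$ (implicit in the proof of \Cref{lemma:H_in_M}) closes the loop.
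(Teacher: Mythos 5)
Your proof is correct and follows essentially the same route as the paper: you establish the same lower bound on $\langle f_u,\mu_i\rangle$ via \Cref{claim:spec_exact_mu}, bound each summand in the $M_i$ and $L_i$ sums directly from the definitions of the projection classes, and use $\spec(\neg i)\subseteq L_i$ (via the \Cref{lemma:H_in_M} computation, i.e.~\eqref{eq:bound_L_i}) together with the $\NM$ condition to guarantee at least $\phi d/2$ neighbors with strongly negative contribution. The only cosmetic difference is that you restrict the $L_i$-sum to the subset $\nei_G(u)\cap\spec(\neg i)$ before counting, whereas the paper applies the per-term bound to all of $\nei_G(u)\cap L_i$ and then lower-bounds $|\nei_G(u)\cap L_i|$; both yield the identical constant.
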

\begin{proof}
Since $u \in \spec(i)$, by \Cref{claim:spec_exact_mu}, we have  $\|f_u - \mu_i\|_2 \leq \frac{\phi}{10\sqrt{\eta}}\|\mu_i\|_2$. Therefore, 
\begin{equation}\label{eq:sum_L_eq1}
 \langle f_u, \mu_i\rangle = \langle \mu_i, \mu_i\rangle - \langle \mu_i - f_u, \mu_i\rangle \geq \|\mu_i\|_2^2 - \|f_u - \mu_i\|_2\|\mu_i\|_2 \geq 
 \left( 1- \frac{\phi}{10\sqrt{\eta}}\right)\|\mu_i\|^2_2 \, .
 \end{equation}
For all $v \in \nei_G(u)\cap M_i$, by \Cref{def:L_n_M} we have  
 $ \langle f_v, \mu_i\rangle \leq (1 + {\phi}/{8})\|\mu_i\|^2_2$, which combined with gives \eqref{eq:sum_L_eq1} gives
 \[ \langle f_v, \mu_i\rangle - \langle f_u, \mu_i\rangle \leq \left(1 + \frac{\phi}{8}\right)\|\mu_i\|^2_2 - \left( 1- \frac{\phi}{10\sqrt{\eta}}\right)\|\mu_i\|^2_2 <  \frac{\phi}{4}\|\mu_i\|^2_2 \, ,\]
 since $\eta \ge 1$ by definition (see \Cref{fig:setting}). Summing over all $v \in \nei_G(u)\cap M_i$, we get
\begin{equation*}
\sum_{v \in \nei_G(u) \cap M_i} \langle f_v - f_u, \mu_i\rangle \leq \frac{\phi}{4} d \cdot \|\mu_i\|^2_2 \, .
\end{equation*}
Which proves the first bound in the statement.
Similarly, for all $v \in L_i \cap \nei_G(u)$, it holds that 
 \[ \langle f_v, \mu_i\rangle - \langle f_u, \mu_i\rangle \leq \frac{\phi}{8}\|\mu_i\|^2_2 - \left( 1- \frac{\phi}{10\sqrt{\eta}}\right)\|\mu_i\|^2_2 \le  - \frac{3}{4}\|\mu_i\|_2^2 \, ,\]
 where the first inequality uses \Cref{def:L_n_M} and~\eqref{eq:sum_L_eq1}.

Therefore, we have 
\begin{equation*}
    \sum_{v \in \nei_G(u) \cap L_i} \langle f_v - f_u, \mu_i\rangle  \leq - \frac{3}{4}|\nei_G(u) \cap L_i|\cdot\|\mu_i\|_2^2 \, . 
\end{equation*}
To lower bound $|\nei_G(u) \cap L_i|$, we will show that $\spec(\lnot i) \subseteq L_i$. Then we can lower bound $|\nei_G(u) \cap L_i| $ by $|\nei_G(u) \cap \spec(\lnot i)| $, which in turn we know to be at least $d\phi/2$ since $u \in \NM$ (see \Cref{def:NM}), giving us the second bound in the statement:
\begin{equation*}
    \sum_{v \in \nei_G(u) \cap L_i} \langle f_v - f_u, \mu_i\rangle  \leq - \frac{3}{4}|\nei_G(u) \cap L_i|\cdot\|\mu_i\|_2^2 \le  - \frac{3}{4}|\nei_G(u) \cap \spec(\neg i) |\cdot\|\mu_i\|_2^2 \le \frac{3}{8} d \phi \cdot \|\mu_i\|_2^2 \, . 
\end{equation*}
It remains to show that $\spec(\lnot i) \subseteq L_i$. To this end, let $v \in \spec(\lnot i)$ and let $j \in [k]$ such that $v \in \spec(j)$. One can repeat the calculation from the proof of \Cref{lemma:H_in_M} in the case $j \neq i$ and get the bound $\langle f_v, \mu_i\rangle < \frac{\phi}{8}\|\mu_i\|_2^2$, as shown in~\eqref{eq:bound_L_i}. Then, $v \in L_i$ as per \Cref{def:L_n_M}, proving that $\spec(\lnot i) \subseteq L_i$.
\end{proof}

\noindent
We are now ready to prove \Cref{lemma:NM_in_D}. 
\begin{proof}[Proof of \Cref{lemma:NM_in_D}]
Let $u \in \NM$. By Definition \ref{def:NM}, we have $\NM \cap \M = \emptyset$. Therefore, $u \notin \M$, so from Definition \ref{def:spec}, we have $u \in \spec(i)$ for some $i \in [k]$. Since $L_i$, $M_i$ and $H_i$ partition $V$, we get 
\begin{align*}
    \sum_{v \in \nei_G(u)\cap H_i} \langle f_v - f_u, \mu_i\rangle & =    \sum_{v \in \nei_G(u)} \langle f_v - f_u, \mu_i\rangle -   \sum_{v \in \nei_G(u)\cap M_i} \langle f_v - f_u, \mu_i\rangle -   \sum_{v \in \nei_G(u)\cap L_i} \langle f_v - f_u, \mu_i\rangle \, .
\end{align*}
The first term is lower-bounded using \Cref{claim:sum_N}, and the second and third term are lower-bounded using \Cref{claim:sum_M_L}, which together give
\begin{equation*}
    \sum_{v \in \nei_G(u)\cap H_i} \langle f_v - f_u, \mu_i\rangle  \geq - 4\epsilon d \cdot \|\mu_i\|_2^2- \frac{\phi}{4} d \cdot\|\mu_i\|^2_2 +  \frac{3}{8}\phi d \cdot \|\mu_i\|_2^2 \geq  \frac{\phi}{10} d \cdot\|\mu_i\|_2^2 \, ,
\end{equation*}
where the last inequality used $\epsilon \leq 10^{-5} \phi$ as per Table $\ref{fig:setting}$. Hence, we conclude $x \in \D_i$, as per \Cref{def:D}.
\end{proof}

\subsubsection{Bounding the number of strongly-pulled vertices}\label{sec:bounding_D}

In this section, we prove \Cref{lemma:NM_in_D}, which bounds the size of the storngly-pulled vertices $P_1 \cup \dots \cup P_k$ by $\epsilon n$. Again, before diving into the details, we discuss the intuition for the proof. 
\\~\\
Fix $i \in [k]$, and let us consider the total force exerted in the direction of $\mu_i$ by vertices in $H_i$ on their neighbors across the cut $H_i$, i.e. the sum $\sum_{(u,v) \in E: v \in H_i, u \notin H_i}\langle f_v - f_u, \mu_i\rangle$. We recall that  \Cref{lemma:fx_close_to_avg_neighbors} roughly states that $\sum_{u \in \nei_G(v)} \langle f_v - f_u, \mu_i\rangle \approx 0$ for any vertex $v$, so summing this up over $v \in H_i$ gives
\begin{equation*}
    \sum_{(u,v) \in E: v \in H_i, u \notin H_i}\langle f_v - f_u, \mu_i\rangle = \sum_{v \in H_i} \sum_{u \in \nei_G(v)}\langle f_v - f_u, \mu_i\rangle \approx 0 \, .
\end{equation*}
On the other hand, the force across $H_i$ cannot be too small, since any strongly-pulled vertex $u$ in $P_i$ non-trivially contributes to the sum (see \Cref{def:D}). Combining these two observation, we conclude that there can only be few vertices in $P_i$.
\\~\\
To implement this strategy, we start with an intermediate upper-bound on the size of $\cup_{i\in [k]}P_i$ in terms of the force across the cuts $H_1,\dots,H_k$ in direction $\mu_i$. 

\begin{lemma}\label{lemma:D_ub_step1} Consider the setting of \Cref{fig:setting}. Then 
    $$\left|\bigcup_{i \in [k]} P_i \right| \leq  \frac{n}{k} \cdot  \frac{20\eta}{d\phi}\cdot \sum_{i=1}^k \sum_{\substack{(v,u) \in E :\\v \in H_i,u \in  V \setminus H_i}}\langle f_v - f_u, \mu_i\rangle \, . $$
\end{lemma}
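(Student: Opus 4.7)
The plan is to obtain the bound on each $|P_i|$ separately by double-counting the quantity $\sum_{u \in P_i}\sum_{v \in \nei_G(u) \cap H_i}\langle f_v - f_u, \mu_i\rangle$, and then sum over $i \in [k]$, using that $\left|\bigcup_{i \in [k]} P_i\right| \le \sum_{i \in [k]} |P_i|$.

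For the lower bound, I would directly invoke \Cref{def:D}: each $u \in P_i$ contributes at least $\tfrac{\phi}{10} d \cdot \|\mu_i\|_2^2$ to the sum, giving
\begin{equation*}
    |P_i| \cdot \tfrac{\phi}{10} d \cdot \|\mu_i\|_2^2 \le \sum_{u \in P_i}\sum_{v \in \nei_G(u) \cap H_i}\langle f_v - f_u, \mu_i\rangle \, .
\end{equation*}
For the upper bound, the key observation is that every term in this double sum corresponds to an ordered edge $(v,u) \in E$ with $v \in H_i$ and $u \in P_i \subseteq \spec(i)$; by \Cref{lemma:H_in_M} we have $H_i \subseteq \M$, while \Cref{def:spec} gives $\spec(i) \cap \M = \emptyset$, so $u \notin H_i$ and hence the edge crosses the cut $H_i$. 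Moreover each such term is nonnegative, since $\langle f_v, \mu_i\rangle \ge (1+\phi/8)\|\mu_i\|_2^2 > \langle f_u, \mu_i\rangle$ by the very definition of $H_i$ (\Cref{def:H}). Therefore
\begin{equation*}
    \sum_{u \in P_i}\sum_{v \in \nei_G(u) \cap H_i}\langle f_v - f_u, \mu_i\rangle \le \sum_{\substack{(v,u) \in E :\\v \in H_i,u \in  V \setminus H_i}}\langle f_v - f_u, \mu_i\rangle \, ,
\end{equation*}
where the extension to all cut edges only adds nonnegative terms.

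Combining the two inequalities and using \Cref{rem:mu_i_norm} to lower-bound $\|\mu_i\|_2^2 \ge \tfrac{1}{2\eta}\cdot \tfrac{k}{n}$ (so that $1/\|\mu_i\|_2^2 \le 2\eta n/k$) yields
\begin{equation*}
    |P_i| \le \frac{10}{\phi d \|\mu_i\|_2^2}\sum_{\substack{(v,u) \in E :\\v \in H_i,u \in  V \setminus H_i}}\langle f_v - f_u, \mu_i\rangle \le \frac{20\eta}{\phi d}\cdot \frac{n}{k}\sum_{\substack{(v,u) \in E :\\v \in H_i,u \in  V \setminus H_i}}\langle f_v - f_u, \mu_i\rangle \, .
\end{equation*}
Summing over $i \in [k]$ and using the union bound gives the claim. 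I do not anticipate a real obstacle here: the argument is essentially a one-line double-counting, with the only subtlety being to verify that every edge appearing in the double sum is indeed a cut edge of $H_i$ with nonnegative $\langle f_v - f_u, \mu_i\rangle$, which follows immediately from $H_i \subseteq \M$ and $P_i \subseteq \spec(i)$.
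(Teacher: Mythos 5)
Your proof is correct and matches the paper's argument essentially step for step: both start from the definition of $P_i$ to lower-bound the cut contribution by $|P_i|\cdot\tfrac{\phi}{10}d\|\mu_i\|_2^2$, both upper-bound it by the full cut sum using $P_i\subseteq\spec(i)\subseteq V\setminus H_i$ (via $H_i\subseteq\M$) together with the nonnegativity of each term from \Cref{def:H}, and both then apply \Cref{rem:mu_i_norm} and sum over $i$. No substantive differences.
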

\begin{proof} We prove the statement by fixing $i \in [k]$ and bounding the size of $P_i$. First, we note that $P_i \subseteq V\setminus H_i$, since $P_i \subseteq \spec(i)$ by \Cref{def:D} while $H_i \subseteq \M$ by \Cref{lemma:H_in_M} and $\M \cap \spec(i) = \emptyset$ by \Cref{def:spec}. Also observe that for every $v \in H_i$ and $u \in V\setminus H_i$ one has
$$ \langle f_v - f_u, \mu_i \rangle \geq 0 $$
by \Cref{def:H}. Therefore, we have 
\begin{align*}
  \sum_{\substack{(v,u) \in E :\\v \in H_i,u \in  V \setminus H_i}}\langle f_v - f_u, \mu_i\rangle&  \geq  \sum_{\substack{(v,u) \in E :\\v \in H_i,u \in  P_i}}\langle f_v - f_u, \mu_i\rangle &&  \text{since $P_i \subseteq V \setminus H_i$} \\
  & = \sum_{u \in P_i} \sum_{v \in \nei_G(u) \cap H_i }\langle f_v - f_u, \mu_i\rangle &&  \\
  & \geq |P_i| \cdot d \cdot \frac{\phi}{10}\|\mu_i\|_2^2 && \text{by Definition \ref{def:D}} \\
  & \geq |P_i| \cdot d \cdot \frac{\phi}{20\eta} \cdot \frac{k}{n} &&  \text{by Remark \ref{rem:mu_i_norm}.}
\end{align*}
Rearranging gives the claim.
\end{proof}

\noindent
Next, we give an upper-bound on the sum of the forces across the cuts $H_1, \ldots, H_k$ in terms of $|\M|$ and $\sum_{ u \in \M} \|f_u\|^2_2$. \Cref{lemma:ub_intermediate} formalizes our intuition of the forces across these cuts being roughly zero. 
\begin{lemma}\label{lemma:ub_intermediate}
Consider the setting of \Cref{fig:setting}.  Then
    $$ \frac{1}{d}\sum_{i=1}^k \sum_{\substack{(v,u) \in E :\\v \in H_i,u \in  V \setminus H_i}}\langle f_v - f_u, \mu_i\rangle \leq \epsilon \cdot \frac{4\eta k}{\sqrt n }\sqrt{|\M|\cdot \sum_{v \in \M}\|f_v\|_2^2} \, .$$
\end{lemma}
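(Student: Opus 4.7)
}

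The plan is to recast each inner boundary sum as a Laplacian quadratic form, exploit the eigenvalue bound $\lambda_k \le 2\epsilon$ from \Cref{lem:bnd-lambda} to extract the factor of $\epsilon$, and then control the remaining quantities via Cauchy--Schwarz.

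First I would perform a discrete integration by parts. Fix $i \in [k]$. Since $\langle f_v - f_u, \mu_i\rangle$ is antisymmetric under swapping $v \leftrightarrow u$, every edge with both endpoints in $H_i$ contributes zero to $\sum_{v \in H_i}\sum_{u \in \nei_G(v)}\langle f_v - f_u, \mu_i\rangle$, so this quantity coincides with the boundary sum that appears in the statement. Combining this with the identity $\sum_{u \in \nei_G(v)}(f_v - f_u) = d\cdot U_{[k]}^\top \mathcal{L}\1_v$ (already used in the proof of \Cref{lemma:fx_close_to_avg_neighbors}) and summing over $v \in H_i$, I obtain
\[
\sum_{\substack{(v,u) \in E:\\ v \in H_i,\, u \notin H_i}}\langle f_v - f_u, \mu_i\rangle \;=\; d\cdot \langle U_{[k]}^\top \mathcal{L}\1_{H_i},\, \mu_i\rangle.
\]

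Next I would use the spectral gap. Writing $y_i \coloneqq U_{[k]}^\top \1_{H_i} = \sum_{v \in H_i} f_v \in \mathbb{R}^k$, the identity $U_{[k]}^\top \mathcal{L} = \Sigma_{[k]} U_{[k]}^\top$ together with $\|\Sigma_{[k]}\|_{\mathrm{op}} \le 2\epsilon$ (from \Cref{lem:bnd-lambda}) yields $\|U_{[k]}^\top \mathcal{L}\1_{H_i}\|_2 \le 2\epsilon \|y_i\|_2$, and therefore
\[
\bigl|\langle U_{[k]}^\top \mathcal{L}\1_{H_i}, \mu_i\rangle\bigr| \;\le\; 2\epsilon\, \|y_i\|_2\, \|\mu_i\|_2.
\]
Summing over $i$ and applying Cauchy--Schwarz,
\[
\sum_{i=1}^k \|y_i\|_2 \|\mu_i\|_2 \;\le\; \sqrt{\sum_i \|y_i\|_2^2}\cdot \sqrt{\sum_i \|\mu_i\|_2^2}.
\]
The second factor is immediately bounded by $\sqrt{2\eta}\cdot k/\sqrt{n}$ using \Cref{rem:mu_i_norm}.

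The most delicate step, and the main obstacle, is bounding $\sum_i \|y_i\|_2^2 \le |\M|\cdot \sum_{v \in \M}\|f_v\|_2^2$. By Cauchy--Schwarz I can write $\|y_i\|_2^2 \le |H_i|\sum_{v \in H_i}\|f_v\|_2^2$, and since $H_i \subseteq \M$ by \Cref{lemma:H_in_M} this gives $\|y_i\|_2^2 \le |\M|\sum_{v \in H_i}\|f_v\|_2^2$. Summing over $i$ then reduces the task to showing $\sum_i \sum_{v \in H_i}\|f_v\|_2^2 \le \sum_{v \in \M}\|f_v\|_2^2$, i.e., that each vertex $v \in \M$ lies in essentially only one set $H_i$. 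I would establish such an (approximate) disjointness of $H_1,\dots,H_k$ by exploiting the approximate dual orthonormality $\sum_j |C_j|\mu_j\mu_j^\top \approx I_k$ (\Cref{lemma:gklmsL9}): if $v \in H_{i_1}\cap\dots\cap H_{i_t}$, then $\sum_l |C_{i_l}|\langle f_v,\mu_{i_l}\rangle^2$ admits a lower bound growing in $t$, which must be reconciled with the projection-norm constraint $\sum_j |C_j|\langle f_v,\mu_j\rangle^2 \lesssim \|f_v\|_2^2$. This is the step I expect to require the most care, and where the approximate near-orthogonality of cluster means (\Cref{lemma:clustermeans}) combined with the lower bound $(1+\phi/8)\|\mu_{i_l}\|_2^2$ on each relevant projection must be pushed hard; any residual multiplicity can be absorbed into the constant in front of $4\eta$.

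Putting the four pieces together,
\[
\frac{1}{d}\sum_{i=1}^k \sum_{\substack{(v,u) \in E:\\ v \in H_i,\, u \notin H_i}}\langle f_v - f_u, \mu_i\rangle \;\le\; 2\epsilon\cdot \sqrt{|\M|\sum_{v\in\M}\|f_v\|_2^2}\cdot \sqrt{2\eta}\cdot \frac{k}{\sqrt{n}} \;\le\; \epsilon\cdot \frac{4\eta k}{\sqrt{n}}\cdot \sqrt{|\M|\sum_{v \in \M}\|f_v\|_2^2},
\]
using $2\sqrt{2\eta} \le 4\eta$ for $\eta \ge 1$, which is the claimed bound.
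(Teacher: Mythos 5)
Your opening steps are sound: the discrete integration by parts identifying the boundary sum with $d\cdot\langle U_{[k]}^\top\mathcal{L}\1_{H_i},\mu_i\rangle$, the observation that $U_{[k]}^\top\mathcal{L}=\Sigma_{[k]}U_{[k]}^\top$ gives $\|U_{[k]}^\top\mathcal{L}\1_{H_i}\|_2\le 2\epsilon\|y_i\|_2$, and the ensuing Cauchy--Schwarz over $i$ are all correct, and this is a genuinely different route to the $\epsilon$ factor than the paper (which applies the per-vertex bound $\|f_v-\frac{1}{d}\sum_{u\in\nei_G(v)}f_u\|_2\le 2\epsilon\|f_v\|_2$ from \Cref{lemma:fx_close_to_avg_neighbors} rather than the aggregate eigenvalue bound).

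The gap is exactly in the step you flag as delicate, and it is worse than you anticipate: the bound $\sum_i\|y_i\|_2^2\le|\M|\sum_{v\in\M}\|f_v\|_2^2$ requires the $H_i$'s to be disjoint, and they need not be, even approximately. Writing $m(v)=|\{i:v\in H_i\}|$, your reduction gives $\sum_i\|y_i\|_2^2\le|\M|\sum_v m(v)\|f_v\|_2^2$, so you would need $m(v)=O(1)$ on the relevant vertices. But the only constraint \Cref{lemma:gklmsL9} gives is $\sum_i|C_i|\langle f_v,\mu_i\rangle^2\le(1+\frac{4\sqrt{\epsilon}}{\phi})\|f_v\|_2^2$, and since each $i$ with $v\in H_i$ contributes $\gtrsim \frac{k}{\eta n}$ to the left side, this only yields $m(v)\lesssim\frac{\eta n}{k}\|f_v\|_2^2$. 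Because $\|f_v\|_2^2$ can be as large as $\Theta(1)$ (and is not individually controlled — only the aggregate $\sum_{v\in\M}\|f_v\|_2^2$ is, via \Cref{claim:sum_of_distances}), this does not produce a constant: a vertex with $f_v\approx(1+\phi/8)\sum_{i\in[k]}\mu_i$ satisfies the constraint and has $m(v)=k$. Combining with the trivial $m(v)\le k$, the best you obtain is $\sum_v m(v)\|f_v\|_2^2\le k\sum_v\|f_v\|_2^2$, which leaves an extra factor of $\sqrt{k}$ in the final bound — not a constant absorbable into the $4\eta$.

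The paper sidesteps the multiplicity issue entirely by never aggregating into $y_i$. It keeps the per-vertex residuals $\alpha_v=f_v-\frac{1}{d}\sum_{u\in\nei_G(v)}f_u$, uses $H_i\subseteq\M$ (\Cref{lemma:H_in_M}) only to replace the sum over $v\in H_i$ by a sum over $v\in\M$ (at which point the $H_i$'s disappear and disjointness is irrelevant), Cauchy--Schwarzes over the $k|\M|$ pairs $(i,v)$ to arrive at $\sqrt{k|\M|}\cdot\sqrt{\sum_i\sum_{v\in\M}\langle\alpha_v,\mu_i\rangle^2}$, and then controls the inner double sum via the near-isometry of \Cref{lemma:gklmsL9} applied to each $\alpha_v$, obtaining $\sum_i\langle\alpha_v,\mu_i\rangle^2\lesssim\frac{\eta k}{n}\|\alpha_v\|_2^2\lesssim\frac{\eta k}{n}\cdot 4\epsilon^2\|f_v\|_2^2$. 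If you want to preserve the aggregated-$y_i$ structure of your argument, you would need a different bound on $\sum_i\|y_i\|_2^2$ that does not pass through the per-index Cauchy--Schwarz $\|y_i\|_2^2\le|H_i|\sum_{v\in H_i}\|f_v\|_2^2$; as it stands this step is where the proof breaks.
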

\begin{proof}
We begin with an outline of the argument. First, we rewrite the left-hand side of the above as \smash{$\sum_{i=1}^k\sum_{v \in H_i} \langle f_v - \frac{1}{d}\sum_{u \in \nei_G(v)}f_u, \mu_i \rangle $}. Next, 
using the fact that the cluster means almost form a basis (see \Cref{lemma:gklmsL9}), we apply Cauchy-Schwarz to relate the quantity \smash{$\sum_{i=1}^k\sum_{v \in H_i} \langle f_v - \frac{1}{d}\sum_{u \in \nei_G(v)}f_u, \mu_i \rangle$} to \smash{$\sum_{v \in \cup_{i}H_i}\|f_v - \frac{1}{d}\sum_{u \in \nei_G(v)} f_u\|^2_2$}.  Finally, we resort to the fact that \smash{$\|f_v - \frac{1}{d}\sum_{u \in \nei_G(v)} f_u\|_2\leq 2 \epsilon \|f_v\|_2$} for all $v \in V$ (see \Cref{lemma:fx_close_to_avg_neighbors}), together with $\cup_i H_i \subseteq \M$ (see \Cref{lemma:H_in_M}), in order to upper-bound \smash{$\sum_{v \in \cup_{i}H_i}\|f_u - \frac{1}{d}\sum_{u \in \nei_G(v)} f_u\|^2_2$} by \smash{$4\epsilon^2 \sum_{v \in \M}\|f_v\|^2_2$}. 

With the outline in place, we now give the detailed proof.
Let us first fix $i \in [k]$ and consider the inner sum in the left-hand side. 
Observe that
\begin{align*}
\frac{1}{d}\sum_{\substack{(v,u) \in E :\\v \in H_i,u \in  V \setminus H_i}}(f_v - f_u) =  \frac{1}{d}\sum_{v \in H_i }\sum_{u \in \nei_G(v)}\left(f_v - f_u\right)  = \sum_{v \in H_i}\left( f_v- \frac{1}{d}\sum_{u \in \nei_G(v)}f_u\right)\, .
\end{align*}
Therefore, we have 
\begin{align*}
    \frac{1}{d}\sum_{\substack{(v,u) \in E :\\v \in H_i,u \in  V \setminus H_i}}\langle f_v - f_u, \mu_i\rangle =  \sum_{v \in H_i}\left \langle f_v- \frac{1}{d}\sum_{u \in \nei_G(v)}f_u, \mu_i \right \rangle \, .
\end{align*}
\noindent
Taking absolute values and using that $H_i \subseteq \M$ (by Lemma \ref{lemma:H_in_M}), we obtain 
\begin{equation}\label{eq:one_term}
     \frac{1}{d}\sum_{\substack{(v,u) \in E :\\v \in H_i,u \in  V \setminus H_i}}\langle f_v - f_u, \mu_i\rangle \leq  \sum_{v \in H_i}\left| \left \langle f_v- \frac{1}{d}\sum_{u \in \nei_G(v)}f_u, \mu_i \right \rangle \right| \leq  \sum_{v \in \M}\left| \left \langle f_v- \frac{1}{d}\sum_{u \in \nei_G(v)}f_u, \mu_i \right \rangle \right| \, .
\end{equation}
Summing~\eqref{eq:one_term} over all $i \in [k]$, we get
\begin{equation}\label{eq:ub1}
 \frac{1}{d}\sum_{i=1}^k \sum_{\substack{(v,u) \in E :\\v \in H_i,u \in  V \setminus H_i}}\langle f_v - f_u, \mu_i\rangle  \leq \sum_{i = 1}^k \sum_{v \in \M}\left| \left \langle f_v- \frac{1}{d}\sum_{u \in \nei_G(v)}f_u, \mu_i \right \rangle \right| \, .
\end{equation}
By Cauchy-Schwarz, we have 
\begin{equation}\label{eq:ub2}
\sum_{i = 1}^k \sum_{v \in \M}\left| \left \langle f_v- \frac{1}{d}\sum_{u \in \nei_G(v)}f_u, \mu_i \right \rangle \right| 
\leq \sqrt{k |\M|}\cdot \sqrt{\sum_{i=1}^k \sum_{v \in \M}\left \langle f_v- \frac{1}{d}\sum_{u \in \nei_G(v)}f_u, \mu_i \right \rangle^2 }  
\end{equation}
It remains to upper-bound the second factor in the right-hand side of~\eqref{eq:ub2}
For simplicity of notation, let $\alpha_v \coloneqq  f_v- \frac{1}{d}\sum_{u \in \nei_G(v)}f_u$ for $v \in \M$. 
We have
\begin{equation}\label{eq:ub3}
\begin{aligned}
     \sum_{i=1}^k \sum_{v \in \M}\left \langle f_v- \frac{1}{d}\sum_{u \in \nei_G(v)}f_u, \mu_i \right \rangle^2 
     & = 
       \sum_{i=1}^k \sum_{v \in \M}\left \langle \alpha_v, \mu_i \right \rangle^2 
 && \\
       &  \leq  \eta\cdot\frac{k}{n}\sum_{v \in \M} \sum_{i=1}^k |C_i|\left \langle \alpha_v, \mu_i \right \rangle^2 && \text{using $\eta = \frac{\max_j|C_j|}{\min_j|C_j|} \geq \frac{n}{k}\cdot \frac{1}{|C_i|}$}\\
     & = \eta\cdot \frac{k}{n}\sum_{v \in \M }\alpha_v^T\left( \sum_{i =1}^k |C_i|\mu_i \mu_i^\top\right) \alpha_v && \\
     & \leq  \eta\cdot \frac{k}{n} \sum_{v \in \M}\left( 1 + \frac{4 \sqrt \epsilon}{\phi}\right) \|\alpha_v\|_2^2 && \text{by Lemma \ref{lemma:gklmsL9}} \\
     & \le \frac{2\eta k}{n} \sum_{v \in \M}\|\alpha_v\|_2^2 && \text{using $\frac{\epsilon}{\phi^2} \leq 10^{-5}$.}
    \end{aligned}
\end{equation}
By Lemma \ref{lemma:fx_close_to_avg_neighbors}, for all $v \in \M$, it holds that 
\begin{equation}\label{eq:ub4}
 \|\alpha_v\|_2^2 = \left\| f_v- \frac{1}{d}\sum_{u \in \nei_G(v)}f_u\right\|_2^2 \leq 4 \epsilon^2 \|f_v\|_2^2 \, ,
\end{equation}
so combining~\eqref{eq:ub3} and~\eqref{eq:ub4} we get 
\begin{equation*}
    \sqrt{k |\M|} \cdot \sqrt{\sum_{i = 1}^k \sum_{v \in \M}\left \langle f_v- \frac{1}{d}\sum_{u \in \nei_G(v)}f_u, \mu_i \right \rangle ^2} \leq  \sqrt{k|\M|} \cdot \sqrt{ \frac{2\eta k}{n}}\cdot \sqrt{\sum_{v \in \M}4 \epsilon^2 \|f_v\|_2^2} \le\epsilon \cdot \frac{4\eta k}{\sqrt n }\sqrt{|\M|\cdot \sum_{v \in \M}\|f_v\|_2^2} \, . 
\end{equation*}
\noindent
Combining the above equation with~\eqref{eq:ub1} and~\eqref{eq:ub2} gives the lemma statement. 
\end{proof}

\noindent
The last thing that remains is to upper-bound $\sum_{v \in \M} \| f_v\|_2^2$.

\begin{lemma}\label{claim:norms_of_bfar} Consider the setting of \Cref{fig:setting}. Then 
$\sum_{v \in \M}\|f_v\|^2_2 \leq  10^5 \eta^3 \cdot \frac{\epsilon}{\phi^4}k$.
\end{lemma}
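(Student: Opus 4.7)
The plan is to bound $\|f_v\|_2^2$ pointwise for $v \in \M$ by its ``distance to its true cluster mean'' plus the norm of that cluster mean, and then sum over $\M$, using \Cref{claim:sum_of_distances} to handle the first contribution and the cardinality bound from \Cref{lemma:impostor_n_cross_size} combined with \Cref{rem:mu_i_norm} to handle the second.

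More concretely, I would first apply the triangle inequality and the elementary estimate $(a+b)^2 \le 2a^2+2b^2$ to write, for every $v \in \M$,
\[
\|f_v\|_2^2 \;\le\; 2\,\|f_v - \mu_{\iota(v)}\|_2^2 \;+\; 2\,\|\mu_{\iota(v)}\|_2^2 \, .
\]
Summing over $v \in \M$ gives two terms. The first is at most
$2\sum_{i=1}^k \sum_{v \in C_i}\|f_v-\mu_i\|_2^2 \le 8\epsilon k/\phi^2$ by \Cref{claim:sum_of_distances}.

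For the second term, I would bound each $\|\mu_{\iota(v)}\|_2^2$ uniformly using \Cref{rem:mu_i_norm}, which gives $\|\mu_i\|_2^2 \le 2\eta \cdot k/n$ for every $i \in [k]$ under the assumptions of \Cref{fig:setting}. Hence
\[
2\sum_{v \in \M}\|\mu_{\iota(v)}\|_2^2 \;\le\; 4\eta \cdot \frac{k}{n}\cdot |\M| \, ,
\]
and the size of $\M$ is controlled by \Cref{lemma:impostor_n_cross_size}, which yields $|\M| \le |\M \cup \im| \le 2\cdot 10^4\eta^2\,\epsilon n/\phi^4$. Plugging this in gives a bound of $8\cdot 10^4\,\eta^3\,\epsilon k/\phi^4$ for the second term.

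Combining the two contributions, one obtains
\[
\sum_{v \in \M}\|f_v\|_2^2 \;\le\; \frac{8\epsilon k}{\phi^2} \;+\; 8\cdot 10^4\,\eta^3\cdot\frac{\epsilon k}{\phi^4} \;\le\; 10^5\,\eta^3\cdot\frac{\epsilon}{\phi^4}\,k \, ,
\]
where the last step absorbs the (much smaller) $8\epsilon k/\phi^2$ term into the constant since $\eta \ge 1$ and $\phi \le 1$. There is no real obstacle here: the only care needed is the uniform lower/upper estimate on $\|\mu_i\|_2^2$, which is exactly what \Cref{rem:mu_i_norm} provides from the balancedness assumption $|C_i|/|C_j| \le \eta$ together with $\epsilon \le \phi^2/64$.
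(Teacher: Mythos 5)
Your proof is correct and uses the same two ingredients as the paper's (\Cref{claim:sum_of_distances} for the ``distance to cluster mean'' contribution, and \Cref{lemma:impostor_n_cross_size} together with \Cref{rem:mu_i_norm} for the $|\M|\cdot\|\mu_i\|_2^2$ contribution), but your decomposition is different and arguably cleaner. The paper splits $\M$ into vertices with $\|f_v\|_2^2\le 4\eta k/n$ (for which it applies the crude bound $\|f_v\|_2^2 \cdot |\M|$) and vertices with $\|f_v\|_2^2 > 4\eta k/n$ (for which it observes that $\|\mu_{\iota(v)}\|_2 \le \tfrac12\|f_v\|_2$ by \Cref{rem:mu_i_norm}, so that $\|f_v\|_2^2 \le 4\|f_v-\mu_{\iota(v)}\|_2^2$, and then invokes \Cref{claim:sum_of_distances}). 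You instead write $\|f_v\|_2^2 \le 2\|f_v-\mu_{\iota(v)}\|_2^2 + 2\|\mu_{\iota(v)}\|_2^2$ for \emph{every} $v\in\M$ at once, which yields the same two terms without the case split. Both routes land on bounds of the form $O(\epsilon k/\phi^2)$ plus $O(\eta^3\epsilon k/\phi^4)$, with the second term dominating; the paper's version gives $16\epsilon k/\phi^2 + 8\cdot 10^4\eta^3\epsilon k/\phi^4$ while yours gives $8\epsilon k/\phi^2 + 8\cdot 10^4\eta^3\epsilon k/\phi^4$, so the final constant $10^5$ comfortably absorbs either. Your single-inequality decomposition is more elementary and avoids having to verify that the threshold $4\eta k/n$ is the right one; the paper's case split makes it slightly more transparent that the ``bad'' vertices are exactly those whose embedding has anomalously large norm, which matches the geometric picture of cross vertices being far from every cluster mean.
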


\begin{proof}
We can split the sum as 
\begin{equation}\label{eq:l2_sum_bound}
\begin{aligned}
    \sum_{v \in \M}\|f_v\|^2_2 & = \sum_{\substack{v \in  \M: \\ \|f_v\|^2_2 \leq 4 \eta \cdot \frac{k}{n}}} \|f_v\|_2^2 +  \sum_{\substack{v \in \M: \\ \|f_v\|^2_2 > 4 \eta \cdot \frac{k}{n}}} \|f_v\|_2^2  \\
    & \leq  |\M| \cdot 4 \eta  \cdot \frac{k}{n} + \sum_{\substack{v \in  \M: \\ \|f_v\|^2_2 > 4 \eta \cdot \frac{k}{n}}} \|f_v\|_2^2   \\
    & \leq 8 \cdot 10^4 \cdot \eta^3 \cdot \frac{\epsilon}{\phi^4 }k  +  \sum_{\substack{v \in  \M: \\ \|f_v\|^2_2 > 4 \eta \cdot \frac{k}{n}}} \|f_v\|_2^2 \qquad \text{by \Cref{lemma:impostor_n_cross_size}.}
\end{aligned}
\end{equation}
It remains to bound the second term. 
Suppose $v \in \M$ is such that $\|f_v\|^2_2 > 4 \eta  \cdot \frac{k}{n}$. 
By Remark \ref{rem:mu_i_norm}, we have $\|\mu_{\iota(v)}\|_2^2 \leq \frac{2\eta k}{n}$, so then $\|\mu_{\iota(v)}\|_2 \leq \frac{1}{2}\|f_v\|_2$. This gives

\begin{equation*}
    \|f_v\|_2^2  = 4\left(\|f_v\|_2 - \frac{1}{2}\|f_v\|_2 \right)^2 \leq 4(\|f_v\|_2 - \|\mu_{\iota(v)}\|_2)^2 \leq4 \|f_v - \mu_{\iota(v)}\|^2_2 \, . 
\end{equation*}
\noindent
Summing over all $v \in \M$ with $\|f_v\|^2_2 > 4\eta \cdot \frac{k}{n}$, and applying \Cref{claim:sum_of_distances}, we get 

\begin{align*}
    \sum_{\substack{v \in  \M: \\ \|f_u\|^2_2 > 4\eta \cdot \frac{k}{n}}} \|f_v\|_2^2   \leq 4 \sum_{\substack{v \in  \M: \\ \|f_v\|^2_2 > 4\eta \cdot \frac{k}{n}}} \|f_v - \mu_{\iota(v)}\|_2^2 \leq 4 \sum_{i = 1}^k \sum_{v \in C_i} \| f_v - \mu_{i}\|^2_2 \leq 16 \frac{\epsilon}{\phi^2}k \, .
\end{align*}
Combining the above inequality with~\eqref{eq:l2_sum_bound} completes the proof. 
\end{proof}

\noindent
We are now ready to prove \Cref{lemma:D_ub}. 
\begin{proof}[Proof of \Cref{lemma:D_ub}]
Combining the results from this subsection, we obtain:
\begin{align*}
    \left|\bigcup_{i \in [k]} P_i \right| & \leq  \frac{n}{k} \cdot  \frac{20\eta}{d\phi}\cdot \sum_{i=1}^k \sum_{\substack{(v,u) \in E :\\v \in H_i,u \in  V \setminus H_i}}\langle f_v - f_u, \mu_i\rangle &&  \text{by \Cref{lemma:D_ub_step1}} \\
    & \leq 80\eta^2 \cdot \frac{\epsilon}{\phi} \sqrt{n} \sqrt{|\M|\cdot \sum_{v \in \M}\|f_v\|_2^2} && \text{by \Cref{lemma:ub_intermediate}} \\ 
    &  \leq 10^5 \eta^3 \cdot \frac{\epsilon^{3/2}}{\phi^3} n \sqrt{ \sum_{v \in \M}\|f_v\|_2^2} && \text{by \Cref{lemma:impostor_n_cross_size}} &&  \\ 
    &  \leq 10^5 \eta^3 \cdot \frac{\epsilon^{3/2}}{\phi^3} n \sqrt{ 10^5 \eta^3 \cdot \frac{\epsilon}{\phi^4}k} && \text{by \Cref{claim:norms_of_bfar}} &&  \\ 
    & \leq 10^8 \eta^5 \cdot \frac{\epsilon}{\phi^4} n  && \text{since $\epsilon k \leq \phi^2$ by Table \ref{fig:setting}.}
\end{align*}
\end{proof}

\subsection{Analysis}\label{subsec:analysis}
In this section, we analyze the misclassification rate and the probability of success of \Cref{alg:sketch+labels}. Just like with the simpler polynomial-time algorithm described in \Cref{sec:poly}, we upper bound the expected number of misclassified vertices over the randomness of $\sigma$ in each of the lines of \Cref{alg:sketch+labels} (Lemmas \ref{claim:false_pos} and \ref{cor:false_neg_updated}), and conclude with a simple applications of Markov's inequality.

Recall from \Cref{def:rob_con} the notion of robustly connected and robustly separated vertices. In \Cref{sec:robcon}, we show that the primitive \robcon \ (\Cref{alg:rob_con}) correctly distinguishes robustly connected vertices (and outputs ``yes'') and robustly separated vertices (and outputs ``no'') with high probability. Conditioned on the correctness of \robcon \, every vertex $u$ that is misclassified in line~\eqref{line:impostor} has to: have a wrong label $\sigma(u)$,  not be robustly separated from $\Mapx\cup\NMapx$, and satisfy $\tau(u) \notin \{*, \sigma(u)\}$. We think of these vertices as \emph{false positives} since the procedure \robcon, which was supposed to identify these vertices as mislabeled through the connectivity test, failed to do so.  We bound the number of such vertices in \Cref{claim:false_pos}.

\begin{lemma}[Misclassification in line~\eqref{line:impostor}]\label{claim:false_pos}   Consider the setting of \Cref{fig:setting}, and let $\iota,\sigma,\tau$ as per \Cref{fig:labels}. The expected fraction of vertices $u \in V$ over the draw of $\sigma$ such that $\sigma(u)$ is wrong and $u$ is not robustly separated (as per \Cref{def:rob_con}) from $\Mapx\cup\NMapx$ in the $(\tau(u),\sigma(u))$-cross graph, is at most $O(\epsilon\delta\log(1/\delta)/\phi^6)$, i.e.

    \begin{equation*}
        \E_\sigma\left[\left|\{u \in V: \, \tau(u) \notin \{\sigma(u), *\}\text{ and }\sigma(u) \neq \iota(u)  \text{ and } u \slashed{\perp}_{G_{\tau(u),\sigma(u)}} \Mapx\cup\NMapx\}\right| \right] \le 2\cdot 10^{10}\eta^5 \frac{\delta\epsilon}{\phi^6 } \log\frac{1}{\delta} \cdot  n\, .
    \end{equation*}
\end{lemma}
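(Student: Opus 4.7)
The plan is to mirror the Galton--Watson style argument of \Cref{lem:false_pos_p}, with two modifications that together remove the factor of $d$ at the cost of an extra $\log(1/\delta)$. First, I replace the neighborhood $\nei_G(\M)\cup\M$ used in the polynomial-time argument by the much smaller set $\M\cup\NM$, whose size is bounded by $O(\epsilon n/\phi^4)$ via \Cref{lemma:impostor_n_cross_size} and \Cref{lemma:N(B_far)_reg}---an improvement over the bound $|\nei_G(\M)|\lesssim d\epsilon n$ that was responsible for the factor $d$. Second, I replace the BFS-based reachability (which blows up by a factor $d$ per layer) by an $L$-step lazy-random-walk argument, where each walk visits at most $L+1$ vertices.

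Fix $i\neq j\in[k]$ and set
\[
\text{Bad}_{i,j}(\sigma)=\{u\in \spec(i)\cap \lab(j):\sigma(u)\neq \iota(u),\ u\slashed{\perp}_{G_{i,j}}\M\cup\NM\}.
\]
By \Cref{def:rob_con}, every $u\in \text{Bad}_{i,j}(\sigma)$ has the property that the $L$-step lazy walk in $G_{i,j}$ started at $u$ hits $\M\cup\NM$ with probability at least $0.1$. Summing this inequality over $u$, union-bounding the hitting probability over times $t\le L$, and using reversibility of the lazy random walk on the $d$-regular graph $G_{i,j}$ (whose transition matrix is symmetric once self-loops are added) to swap the roles of start and endpoint, I obtain
\[
0.1\,|\text{Bad}_{i,j}(\sigma)|\le \sum_{t=0}^{L}\sum_{v\in(\M\cup\NM)\cap V(G_{i,j})}\Pr_w\bigl[X^v_t\in \text{Bad}_{i,j}(\sigma)\bigr],
\]
where $X^v$ denotes the lazy random walk on $G_{i,j}$ started at $v$.

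Taking the expectation over $\sigma$, the key remaining step is to show that, for each fixed $v\in \M\cup\NM$ and each $t\le L$, one has $\E_\sigma[\Pr_w[X^v_t\in \text{Bad}_{i,j}(\sigma)]]\le \delta$. The intuition is that $X^v_t\in \text{Bad}_{i,j}(\sigma)$ forces $\sigma(X^v_t)=j$ while $\iota(X^v_t)\neq j$, and for any fixed vertex $u$ with $\iota(u)\neq j$ the probability that $\sigma(u)=j$ is at most $\delta$ by our input model and the independence of labels across vertices. Once this bound is established, multiplying by $(L+1)|\M\cup\NM|$ gives $\E_\sigma[|\text{Bad}_{i,j}|]=O(L\,\delta\,|\M\cup\NM|)$, and summing over the pairs $(i,j)$ (together with the negligible contribution of at most $\delta|\NM|$ vertices that are already in $\NM$ and trivially not robustly separated) and plugging in $L=O(\phi^{-2}\log(1/\delta))$ and $|\M\cup\NM|=O(\epsilon n/\phi^4)$ yields the claimed $O(\epsilon\delta\log(1/\delta)/\phi^6)\cdot n$ bound.

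The main obstacle I anticipate is precisely the expectation step above: both the walk transitions in $G_{i,j}$ and the vertex set $V(G_{i,j})$ itself depend on $\sigma$ through $\lab(j)$, so one must decouple the walk's randomness from the randomness of $\sigma$ before invoking the per-vertex $\delta$ bound. My plan is to couple the lazy random walk on $G_{i,j}$ with a lazy random walk on $G$ whose internal coin flips are independent of $\sigma$ and which defaults to a self-loop whenever it would exit $V(G_{i,j})$; conditioning on those internal coin flips and on $\sigma|_{V\setminus\{X^v_t\}}$ then leaves the mislabeling event governed solely by the independent random variable $\sigma(X^v_t)$, to which the $\delta$-bound cleanly applies.
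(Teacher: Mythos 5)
Your reversibility route is a genuinely different decomposition from the paper's, but the step you flagged as the main obstacle is in fact a real gap, and the proposed coupling does not close it. The decoupling argument yields $\Pr_\sigma[X^v_t=u]\le\delta$ \emph{per fixed candidate endpoint $u$}, but as $\sigma$ varies the random endpoint $X^v_t$ ranges over up to $\Theta(d)$ candidate vertices, and these per-candidate bounds must be summed. Concretely, take $v\in\M$ with $m=\Theta(d)$ neighbors $u_1,\dots,u_m\in\spec(i)$ with $\iota(u_r)\neq j$ (core or impostor vertices of cluster $i$), each having $v$ as its only neighbor in $\M\cup\bigl(\spec(i)\cap\lab(j)\bigr)$. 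The set $S=\{r:\sigma(u_r)=j\}$ satisfies $\E_\sigma[|S|]=\Theta(d\delta)$; given $|S|\ge1$, the lazy walk from $v$ on the star $\{v\}\cup\{u_r:r\in S\}$ occupies some $u_r$ with probability $|S|/(|S|+1)\ge\tfrac12$ once $t\gtrsim d$, and with probability $\Theta(t|S|/d)$ for $t\lesssim d$. Hence $\E_\sigma\bigl[\Pr_w[X^v_t\in\mathrm{Bad}_{i,j}(\sigma)]\bigr]=\Theta(\min(t,d)\,\delta)$, which already exceeds $\delta$ at $t=3$ and plateaus near $\Theta(d\delta)$. Substituting the correct estimate into your chain gives $\E_\sigma[|\mathrm{Bad}_{i,j}|]\lesssim L\cdot\min(d,L)\cdot\delta\cdot|\M\cup\NM|$, reintroducing a $\min(d,L)$ factor and failing to improve on \Cref{thm:polytime}.

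The paper sidesteps this because its accounting is per \emph{boundary edge}, not per endpoint vertex: \Cref{claim:pass_through} shows that every crossing walk starting from a mislabeled vertex outside $\M\cup\NM$ must traverse an edge from $X_{i,j}=\M\cup\NM\cup\im(i,j)$ to $V_{i,j}\setminus X_{i,j}$; \Cref{claim:pij_ub} bounds the fraction of length-$L$ walks that use any one fixed such edge by $L/(2d)$; and \Cref{claim:Eij_ub} bounds the expected number of such edges by $O(d\delta\,|\M\cup\NM\cup\im|)$. The per-edge factor $1/(2d)$ is precisely what cancels the $d$ in the edge count, and that cancellation has no analogue in a per-$(v,t)$ vertex count. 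To make your reversal route work you would still need to decompose each reversed-walk visit to a $\mathrm{Bad}$ vertex according to the first boundary edge crossed, i.e.\ effectively revert to the paper's edge-based bookkeeping.
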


\begin{proof}
 Denote the set that we aim to bound as $\fp$ (for ``False Positives"), i.e., 

 $$ \fp \coloneqq \{u \in V: \, \tau(u) \notin \{\sigma(u), *\}\text{ and }\sigma(u) \neq \iota(u)  \text{ and } u \slashed{\perp}_{G_{\tau(u),\sigma(u)}} \Mapx\cup\NMapx\} \, .$$

\noindent
The proof of this lemma relies on one crucial observation: any crossing walk which starts in a mislabeled vertex outside of $\M\cup\NM$ must contain a ``special'' edge, which we will define later. Indeed, any such walk must pass through some vertex in $\M\cup\NM$ so, at the very least, it must contain an edge incident on $\M\cup\NM$. This allows us to bound the number of crossing walks starting in a mislabeled vertex outside of $\M\cup\NM$. At the same time, by the definition of \fp, any vertex in \fp \ must be a starting point of sufficiently many such walks, which allows us to upper bound the size of \fp.

 Now we move on to the formal proof. Recall that we use $\Lambda$ to denote the set of mislabeled vertices (as per \Cref{def:mislabeled}), and note that 
 \[|\fp \cap \NMapx| \leq |\mislabeled \cap \NMapx|,\]
 since, by the definition of $\fp$, any vertex in $\fp$ must be mislabeled. Therefore, the expected contribution of the vertices in $\NMapx$ to $\fp$ is
 \begin{equation}\label{eq:FPcapNM}
 \E_{\sigma}[|\fp\cap \NMapx|] \le \delta\cdot|\NMapx| \leq 10^8\eta^5 \cdot \frac{\delta\epsilon}{\phi^4 }n \, ,
 \end{equation}
 \noindent where the last inequality uses \Cref{lemma:N(B_far)_reg}. 

 In what follows, we bound $|\fp\setminus \NMapx|$.
 \\~\\
 Fix $i, j \in [k]$ with $i \neq j$. We denote the subset of vertices $u \in \fp$  with $(\tau(u), \sigma(u)) = (i, j)$ as $\fp_{i, j}$. Recall from \Cref{def:crossgraph} that $V_{i, j} = (\specapx(i) \cap \lab(j)) \cup \Mapx$ and $E_{i,j} = E(V_{i,j})$. Define $\crossvstarapx_{i, j} = \specapx(i) \cap \lab(j)$. We have that $\fp_{i, j} \subseteq \crossvstarapx_{i, j}$, since for any $u \in \Mapx$ we have that $\tau(u) = *$. Since all of the $\crossvstarapx_{i, j}$ are disjoint (by \Cref{lemma:disjoint_balls}), we have that all $\fp_{i, j}$ are also disjoint. Moreover, we have that $\fp \setminus \NMapx= \cup_{i, j: i\neq j}(\fp_{i, j}\setminus \NMapx)$. We will now upper bound the size of each $\fp_{i,j}\setminus \NMapx$. 
 
 For a fixed $\sigma$, let $p_{i, j}$ denote the probability that a vertex $u \sim V$ sampled uniformly at random belongs to $V_{i,j}^* \setminus \NMapx$, is mislabeled, and that a lazy random walk of length $L$ in the graph $\crossgapx_{i, j}$ started from $u$ is a crossing walk (as per \Cref{def:crosswalk}), i.e. 
\begin{equation}\label{eq:p}
 p_{i,j} \coloneqq \frac{1}{n}\sum_{u \in V_{i,j}^* \cap \mislabeled \setminus \NMapx} \Pr[ \text{a lazy random walk in $\crossgapx_{i, j}$ of length $L$ stated from $u$ is a crossing walk}].
\end{equation}

\noindent
Note that $\fp_{i,j}\setminus \NMapx \subseteq V_{i,j}^* \cap \mislabeled \setminus \NMapx$ by definition of $\fp_{i,j}$. Furthermore, for every vertex $u \in \fp_{i,j}$, we have that $u$ is \emph{not} robustly separated from $\Mapx\cup\NMapx$, and so the probability that a sampled random walk in $\crossgapx_{i,j}$ of length $L$ started from $u$ is a crossing walk,  must be at least $0.1$ (by \Cref{def:rob_con}). Hence, 
 \begin{equation}\label{eq:lower_b_p}
 \begin{aligned}
 p_{i, j} & \geq \frac{1}{n} \sum_{u \in \fp_{i,j} \setminus \NMapx}\Pr[ \text{a lazy random walk in $\crossgapx_{i, j}$ of length $L$ stated from $u$ is a crossing walk}] \\
 & \geq 0.1 \frac{|\fp_{i, j} \setminus \NMapx|}{n}.
 \end{aligned}
 \end{equation}

 \noindent
The above inequality formally shows that any vertex in \fp \ is a starting point of sufficiently many crossing walks starting in a mislabeled vertex outside of $\M\cup\NM$. We now formally prove that any such crossing walk has to contain one of the few ``special'' edges. This will help us obtain an upper bound on $p_{i,j}$, which we will use to upper bound $|\fp_{i,j}|$. 

Recall the spectral impostors $ \im(i,j) \coloneqq \spec(i) \cap C_j $ (\Cref{def:impostor}), and let 
$$X_{i, j} \coloneqq \Mapx \cup \NMapx \cup \imapx(i, j).$$ 
\noindent
We now show that every crossing walk in $\crossgapx_{i, j}$ which starts from a mislabeled vertex $u \notin \Mapx\cup\NMapx$, must have an edge which connects a vertex from $X_{i, j}$ to some vertex outside of $X_{i, j}$. This is our notion of ``special" edges: 
\begin{claim}\label{claim:pass_through}
    Every walk in the graph $\crossgapx_{i, j}$ which starts from a mislabeled vertex $u \notin \Mapx\cup\NMapx$ and passes through $\Mapx \cup \NMapx$, must use at least one edge $e = (v,w)$ such that $v \in V_{i,j} \setminus X_{i,j}$ and $w \in X_{i,j}$. 
\end{claim}
\begin{proof}
Let $\mathbf{w}$ be a walk $\crossgapx_{i, j}$ which starts from a mislabeled vertex $u \notin \Mapx\cup\NMapx$ and passes through $\Mapx \cup \NMapx$. First, we will show that the starting point must be outside of $X_{i, j}$. Indeed, let $u \in V_{i,j}$ be the starting point of $\mathbf{w}$. Then
\begin{itemize}
    \item By assumption, $u \notin \Mapx\cup\NMapx$;
    \item By definition of $\im(i,j)$ and $V_{i,j}^*$, every vertex in $\imapx(i, j)\cap \crossvstarapx_{i,j}$ is correctly labeled. By assumption, $u$ is mislabeled, so $u \notin \imapx(i, j)$. 
\end{itemize} 
From the above two bullet points, we see that the starting vertex $u$ is indeed not in $X_{i,j}$. 
By assumption, $\mathbf{w}$ passes through $\M \cup \NM \subseteq X_{i,j}$, so is must use at lest one edge $e=(v,w)$ with $v \in V_{i,j} \setminus X_{i,j}$ and $w \in X_{i,j}$.

\end{proof}
Using the above claim, we can upper bound $p_{i,j}$ in terms of $|E_{i,j}(X_{i,j}, V_{i,j} \setminus X_{i,j})|$. 

\begin{claim}\label{claim:pij_ub}
    \[p_{i, j}\leq \frac{|E_{{i, j}}(X_{i, j}, \crossvapx_{i, j}\setminus X_{i, j})|L}{n\cdot 2d}.\]
\end{claim}
\begin{proof}
Recall that the lazy random walk on the graph $G_{i,j}$ has transition
matrix $\frac{1}{2d}A_{G_{i,j}} + \frac{1}{2}I.$ Equivalently, we can view this walk as a simple random walk on the graph $\widetilde{G}_{i,j}$
obtained by adding an additional $d$ self-loops to each vertex in $G_{i,j}$. In $\widetilde{G}_{i,j}$ every vertex has degree $2d$, and at each step we pick
one of the $2d$ incident edges uniformly at random. If the chosen edge is
a self-loop, the walk stays at the current vertex.

With this in mind, we can rewrite the definition \eqref{eq:p} of $p_{i,j}$ as 
$$p_{i,j} = \frac{1}{n}\cdot \frac{ | \{ \mathbf{w} : \mathbf{w} \text{ is a crossing walk in $\widetilde{G}_{i,j}$ of length $L$ starting from $ V_{i,j}^* \cap \mislabeled \setminus \NMapx$} \}|}{(2d)^L}.$$
By \Cref{claim:pass_through}, every crossing walk $\mathbf{w}$ in $\crossgapx_{i,j}$ starting from $ V_{i,j}^* \cap \mislabeled \setminus \NMapx$, must use  at least one edge from $E_{{i, j}}(X_{i, j}, \crossvapx_{i, j}\setminus X_{i, j})$. This gives 
$$p_{i,j} \leq \frac{1}{n}\cdot \frac{ | \{ \mathbf{w} : \mathbf{w} \text{ is a walk in $\widetilde{G}_{i,j}$ of length $L$ and uses } E_{i, j}(X_{i, j}, \crossvapx_{i, j}\setminus X_{i, j})  \}|}{(2d)^L}.$$
Finally, for fixed $\sigma$, the number of walks of length $L$ in $\widetilde \crossgapx_{i, j}$ which use an edge in $E_{i, j}(X_{i, j}, \crossvapx_{i, j}\setminus X_{i, j})$ is upper bounded by $|E_{{i, j}}(X_{i, j}, \crossvapx_{i, j}\setminus X_{i, j})|\cdot (2d)^{L-1}\cdot L$. This is because any such walk must have an edge from $E_{i, j}(X_{i, j}, \crossvapx_{i, j}\setminus X_{i, j})$ at place $l = 1, \ldots, L$. Once the edge from $E_{i, j}(X_{i, j}, \crossvapx_{i, j}\setminus X_{i, j})$ and the position $l$ are fixed, there are at most $(2d)^{L-1}$ choices for the remaining steps. 
\end{proof}
It remains to upper bound the quantities $|E_{{i, j}}(X_{i, j}, \crossvapx_{i, j}\setminus X_{i, j})|$. 
\begin{claim}\label{claim:Eij_ub}
$$ \E_{\sigma}\left[\sum_{\substack{i, j \in [k]:\\ i\neq j}}|E_{{i, j}}(X_{i, j}, \crossvapx_{i, j}\setminus X_{i, j})|\right] \leq 2 \cdot 10^{8}\eta^5 \cdot \frac{\delta \epsilon}{\phi^4 }d  \cdot n.$$ 
\end{claim}
\begin{proof}
 By definition of $X_{i,j}$, for every $i \neq j$, we have \newline
\smash{$E_{i,j}(X_{i,j}, V_{i,j}\setminus X_{i,j}) = E_{i,j}(\im(i,j), V_{i,j} \setminus X_{i,j}) \cup E_{i,j}(\M \cup \NMapx, V_{i,j} \setminus X_{i,j})$}. We start by bounding the contribution from $E_{i,j}(\im(i,j), V_{i,j} \setminus X_{i,j})$. 

Fix $i \neq j$. Observe that by definition of $X_{i,j}$, every vertex in $V_{i,j} \setminus X_{i,j}$ has to be mislabeled. This is because 
$V_{i,j} \setminus (\M \cup \im(i,j))$ is exactly the set of vertices that belong to $\spec(i) \cap \left( \lab(j) \setminus C_j \right)$. Therefore, for every edge in $E(\imapx(i, j), V\setminus X_{i, j})$, the probability (over the randomness of $\sigma$) that it belongs to the edge set $E_{i, j}$ of $G_{i,j}$ is at most $\delta$, since its end point in $V\setminus X_{i, j}$ has to be mislabeled. So
\begin{equation}\label{eq:Eij}
    \E_{\sigma}\left[|E_{i, j}(\imapx(i, j), \crossvapx_{i, j}\setminus X_{i, j})|\right] \leq \delta\cdot |E(\imapx(i, j), V\setminus X_{i, j})| \leq \delta\cdot d\cdot|\imapx(i, j)|,
\end{equation}
where the second inequality follows since $G$ is a $d$-regular graph.

Next, we bound the contribution from $E_{i,j}(\M \cup \NMapx, V_{i,j} \setminus X_{i,j})$. Similarly, for every edge in $E(\Mapx\cup\NMapx, V\setminus X_{i, j})$ the probability that it belongs to $E_{i,j}$ is at most $\delta$, since its end point in $V\setminus X_{i, j}$ has to be mislabeled. Therefore, 
\[\E_{\sigma}\left[\sum_{\substack{i, j \in [k]:\\ i\neq j}}|E_{{i, j}}(\Mapx\cup\NMapx, \crossvapx_{i, j}\setminus X_{i, j})|\right] \leq  \E_{\sigma}\left[\sum_{u \in \M \cup \NM} \left|\nei_G(u) \cap \mislabeled\right|\right]  \leq \delta\cdot d\cdot|\Mapx\cup\NMapx|\, .\]

Combining the above with Equation \eqref{eq:Eij}, we get 
\begin{equation*}
    \label{eq:exp_upper_b}
    \E_{\sigma}\left[\sum_{\substack{i, j \in [k]:\\ i\neq j}}|E_{{i, j}}(X_{i, j}, \crossvapx_{i, j}\setminus X_{i, j})|\right] \leq \delta\cdot d\cdot|\Mapx\cup\NMapx| + \sum_{i, j: i\neq j}\delta\cdot d\cdot|\imapx(i, j)| \leq  \delta d |\NMapx| + \delta d |\Mapx\cup \imapx|  \, .
\end{equation*}
where the last inequality uses that $\{ \im(i,j)\}_{i \neq j}$ partitions $\im$ (by \Cref{rem:disjoint_im}), and that $\im$ is disjoint from $\M$ (by \Cref{def:spec} and \Cref{def:impostor}). 
Finally, since \smash{$|\im\cup\Mapx| \leq 2 \cdot 10^4 \eta^2 \cdot \frac{\epsilon}{\phi^4 }n$} by \Cref{lemma:impostor_n_cross_size} and \smash{$|\NMapx| \leq  10^8\eta^5 \cdot \frac{\epsilon}{\phi^4 }n$} by \Cref{lemma:N(B_far)_reg},   we get 
\[ \E_{\sigma}\left[\sum_{\substack{i, j \in [k]:\\ i\neq j}}|E_{{i, j}}(X_{i, j}, \crossvapx_{i, j}\setminus X_{i, j})|\right] \leq 2\cdot 10^8\eta^5 \cdot \frac{\epsilon}{\phi^4 }n \, .\]

\end{proof}

\noindent

Putting everything together, we now obtain an upper bound on $|\fp \setminus \NMapx|$. By Equation~\eqref{eq:lower_b_p} and \Cref{claim:pij_ub}, for every fixed $\sigma$, we have 

$$ |\fp_{i, j} \setminus \NMapx| \leq 10n\cdot  p_{i,j} \cdot \leq \frac{5L}{d}  |E_{i, j}(X_{i, j}, \crossvapx_{i, j}\setminus X_{i, j})| .$$
\noindent
Summing over all $i \neq j$, taking the expectation over $\sigma$, and applying \Cref{claim:Eij_ub}, we get 

\begin{equation*}
    \E_{\sigma}[|\fp\setminus \NMapx|]  =  \sum_{\substack{i, j \in [k]:\\ i\neq j}}\E_{\sigma}\left[|\fp_{i, j}|\right] 
 \leq \frac{5L}{d}\E_{\sigma}\left[\sum_{\substack{i, j \in [k]:\\ i\neq j}}|E_{{i, j}}(X_{i, j}, \crossvapx_{i, j}\setminus X_{i, j})|\right]   \leq  10^{9}\eta^5 \cdot \frac{\delta \epsilon}{\phi^4 } L \cdot n. 
\end{equation*}
Recalling that $L = \lceil 150/\phi^2 \cdot {\log(1/\delta)} \rceil$ (see \Cref{fig:setting}), and combining with Equation \eqref{eq:FPcapNM} we get the desired result.

\end{proof}

\begin{lemma}\label{claim:reachability_updated_new}
Consider the setting of \Cref{fig:setting}, let $\sigma$ as per \Cref{fig:labels}, and let $\crossvstarapx_{i,j}= \spec(i) \cap \lab(j)$. For all $i,j \in [k]$ with $i \neq j$, one has

\begin{equation*}
    \left|\left\{u \in  \imapx(i,j)\setminus \mislabeled: u \slashed{\rightsquigarrow}_{\crossgapx_{i, j}}\Mapx\cup\NMapx\right\}\right|  \le 10\delta |\im(i,j)| +5L|\im(i,j) \cap \mislabeled| + \frac{5L}{d}|E(\im(i,j),\crossvstarapx_{i,j} \setminus \im(i,j))|  \, .
\end{equation*}

\end{lemma}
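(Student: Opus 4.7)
The plan is to reduce the bound on the set $S$ defined by the left-hand side to a walk-counting argument in the spirit of \Cref{claim:false_pos}, leveraging the extended expansion \Cref{lem:imposters_expand_stronger}. First I would observe that any vertex $u \in \NMapx \cap \crossvapx_{i,j}$ is trivially robustly connected to $\Mapx\cup\NMapx$, since the very first step $w_1 = u$ already lies in the target set, so $S \subseteq A := \imapx(i,j) \setminus \mislabeled \setminus \NMapx$. By negating \Cref{def:rob_con}, for each $u \in S$ the failure probability $q_u := \Pr[\walk \slashed{\sim} \Mapx \cup \NMapx]$ exceeds $0.1$, hence $|S| \le 10 \sum_{u \in A} q_u$. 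A failing walk must remain in $\crossvapx_{i,j} \setminus (\Mapx \cup \NMapx) = A \cup B$ throughout, where $B := \crossvstarapx_{i,j} \setminus \imapx(i,j)$ consists of mislabeled vertices in $\spec(i) \cap \lab(j)$.

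I would then decompose $q_u = q_u^{(a)} + q_u^{(b)}$, where $q_u^{(a)}$ is the probability that the walk enters $B$ at some step and $q_u^{(b)}$ is the probability that the walk stays in $A$ for all $L$ steps. For $q_u^{(a)}$, reversibility of the lazy walk on the $2d$-regular graph $\widetilde{\crossgapx}_{i,j}$ (whose stationary distribution is uniform) gives, for each edge $(v,w) \in E(A,B)$ and time step $t$, the inequality $\sum_{u \in A} \Pr_u[w_t = v, w_{t+1} = w] = \frac{1}{2d}\sum_{u \in A}\Pr_v[w_t = u] \le \frac{1}{2d}$. Summing over $t \in [L]$ and edges then yields $\sum_{u \in A} q_u^{(a)} \le L\,|E(\imapx(i,j), \crossvstarapx_{i,j}\setminus \imapx(i,j))|/(2d)$, which after multiplication by $10$ contributes the third term.

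For $q_u^{(b)}$, I apply \Cref{lem:imposters_expand_stronger} with $T = A$: there are at least $(\phi/2)d|A|$ edges in $G[C_j]$ from $A$ to $(\imapx(i,j) \cup \Mapx) \setminus A$. Of these, the edges to $\imapx(i,j)\cap\mislabeled$ (at most $d|\imapx(i,j)\cap\mislabeled|$ in number) become self-loops in $\crossgapx_{i,j}$ because mislabeled impostors lie outside $\crossvapx_{i,j}$, while all others become real escape edges landing in $\Mapx \cup (\NMapx \cap \crossvapx_{i,j})$. I now distinguish two cases. If $|\imapx(i,j) \cap \mislabeled| \le (\phi/4)|A|$, the conductance of $A$ in $\widetilde{\crossgapx}_{i,j}$ is at least $\phi/8$, so by the Dirichlet/Cheeger inequality for the lazy walk restricted to $A$ with absorption on the complement, the operator norm of the sub-transition matrix $P_A$ is bounded by $1-\phi^2/128$. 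With $L \ge 150\log(1/\delta)/\phi^2$, this yields $\sum_{u \in A} q_u^{(b)} \le |A|\cdot(1-\phi^2/128)^{L-1} \le \delta\,|\imapx(i,j)|$, producing the first term after multiplication by $10$. Otherwise $|S| \le |A| < 4|\imapx(i,j) \cap \mislabeled|/\phi \le 5L\,|\imapx(i,j) \cap \mislabeled|$ (using $L\phi \gg 1$), producing the second term.

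The main obstacle will be the first case above: establishing $\|P_A\|_{\op} \le 1 - \Omega(\phi^2)$ for the killed lazy walk on $A$. This requires carefully applying a Cheeger-type inequality for the Dirichlet eigenvalue using the conductance estimate $\phi_A \ge \phi/8$ derived from \Cref{lem:imposters_expand_stronger}, while properly accounting for both the self-loops induced by edges from $A$ to $\imapx(i,j) \cap \mislabeled$ and the $A$-to-$B$ edges (which contribute to the boundary of $A$ but do not land directly in $\Mapx \cup \NMapx$).
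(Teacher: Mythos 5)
Your approach diverges from the paper's at the crucial escape-probability step, and there is a genuine gap there. You attempt to bound the escape probability from $A$ by applying a Dirichlet/Cheeger inequality for the killed lazy walk directly in $\crossgapx_{i,j}$, but you verify the boundary condition only for the single set $Q = A$ (obtaining ``conductance of $A$ at least $\phi/8$'' when $|\imapx(i,j)\cap\mislabeled|\le(\phi/4)|A|$). A Dirichlet Cheeger bound on $\|P_A\|_{\op}$ requires $|E_{i,j}(Q,V_{i,j}\setminus Q)|\ge\Omega(\phi)\, d|Q|$ for \emph{every} $Q\subseteq A$, and this fails for small $Q$: \Cref{lem:imposters_expand_stronger} guarantees $(\phi/2)d|Q|$ edges from $Q$ to $(\imapx(i,j)\cup\Mapx)\setminus Q$ inside $G[C_j]$, but the portion of these landing in $\imapx(i,j)\cap\mislabeled$ is turned into self-loops in $\crossgapx_{i,j}$, and that portion can be as large as $d\cdot\min(|Q|,|\imapx(i,j)\cap\mislabeled|)$. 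In particular, for $|Q|\lesssim|\imapx(i,j)\cap\mislabeled|$ nothing prevents the entire $G[C_j]$-boundary of $Q$ from lying in $\imapx(i,j)\cap\mislabeled$; such a $Q$ would trap the lazy walk inside $A$ in $\crossgapx_{i,j}$, so $\|P_A\|_{\op}$ need not be bounded away from $1$. Your case split on $|\imapx(i,j)\cap\mislabeled|$ versus $(\phi/4)|A|$ controls whole-set conductance but says nothing about the worst small subset, so it does not rescue the Cheeger step.

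The paper circumvents exactly this by forming the augmented graph $H=(V,F,\ell')$ with $F=E_{i,j}\cup E(T,\imapx(i,j)\setminus T)$, in which \emph{all} $Q\subseteq T$ do satisfy $|F(Q,V\setminus Q)|\ge(\phi/2)d|Q|$, applying the all-subset escape-time estimate \Cref{claim:exp-like_sets} in $H$, and then invoking \Cref{claim:relategraphs} to bound the discrepancy between confinement probabilities in $\crossgapx_{i,j}$ and in $H$ by $L\cdot|E(T,\imapx(i,j)\setminus T)|/(2d|T|)$. That discrepancy term, after Markov's inequality, is precisely what becomes $5L|\imapx(i,j)\cap\mislabeled|$, with no case split needed. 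Two parts of your write-up are nonetheless worth keeping. First, removing $\NMapx$ from the starting set and appealing to the \emph{extended} \Cref{lem:imposters_expand_stronger} is a cleaner justification of the all-subset expansion hypothesis than the paper's invocation of the unextended \Cref{lem:imposters_expand}, whose hypothesis $Q\subseteq\imapx(i,j)\setminus\nei_G(\Mapx)$ is not automatic for $Q\subseteq T$. Second, your reversibility bound $\sum_u q_u^{(a)}\le L\,|E(A,B)|/(2d)$ is a sound way to handle the $B$-entering term, provided $q_u^{(a)}$ is understood as the probability of \emph{failing and} entering $B$, so that every such walk is forced to cross an $A$-to-$B$ edge (a non-failing walk can reach $B$ by passing through $\Mapx\cup\NMapx$ without ever crossing such an edge).
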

\begin{remark} We note that, while many of our results related to the labeling $\sigma$ hold in expectation or with high probability, \Cref{claim:reachability_updated_new} holds deterministically for any labeling $\sigma: V \to [k]$. 
\end{remark}

\begin{proof}
Let $T$  ($T$ for ``true") be the set of vertices in $\crossvstarapx_{i,j}$ with the correct label, i.e. 
\[T = \crossvstarapx_{i,j} \setminus \mislabeled = \crossvstarapx_{i,j} \cap \imapx(i,j) = \imapx(i,j) \setminus \mislabeled. \]
Consider a lazy walk $\bf{w}$ starting at $u \in T$ that does not reach $\Mapx \cup \NMapx$ within $L$ steps in $\crossgapx_{i,j}$. Then, $\bf{w}$ must satisfy at least one of this conditions: $\bf{w}$ stays inside $T$ for all $L$ steps, or $\bf{w}$ reaches $\crossvstarapx_{i,j}\setminus T = \crossvstarapx_{i,j}\setminus \imapx(i, j)$. Formally, we have

\begin{equation}
\label{eq:twoterms}
    \Pr_{\walk \sim p_T^L[\crossgapx_{i, j}]}\left[\walk \slashed{\sim} \Mapx \cup \NMapx\right] \le \Pr_{\walk \sim p_T^L[\crossgapx_{i, j}]}\left[\forall\,  r \in [L], \, w_r \in T\right]+ \Pr_{\walk \sim p_T^L[\crossgapx_{i, j}]}\left[\walk \sim \crossvstarapx_{i,j}\setminus T \text{ and } \walk \slashed{\sim} \Mapx \cup \NMapx\right] \,,
\end{equation}
where $\walk \sim p^L_T[\crossgapx_{i, j}]$ describes $\walk$ as an $L$-step lazy random walk in $\crossgapx_{i, j}$ starting in $T$, as per \Cref{def:lazy_r_walk}. 

To bound the first term in~\eqref{eq:twoterms}, we first define a bigger graph $H = (V,F,\ell')$ where $F$ consists of the edges $E_{i,j} \cup E(T,\im(i,j)\setminus T)$ (recall from \Cref{def:crossgraph} that $E_{i,j}$ is the set of edges of the cross graph $G_{i,j}$), and $\ell':V\rightarrow \mathbb{N}$ assigns a number of self-loops $\ell'(u)$ to every $u \in V$ such that $\deg_H(u)=d$ (recall that self-loops are not in the edge set). Note that the cross graph $G_{i,j}$ is a subgraph of $H$. With this notation, we have the following claim, which relates the probability of staying in a set $S$ when one walks in a graph $H_2$, to the probability of staying $S$ when one walks in a subgraph $H_1 \subseteq H_2$. The proof of \Cref{claim:relategraphs} is deferred to  \Cref{subsec:relating}.

\begin{restatable}{claim}{relategraphs}
    \label{claim:relategraphs}
    Let $d \ge 3$ be an integer, let $G=(V,E)$, $H_1=(V,E_1,\ell_1)$, $H_2,=(V,E_2,\ell_2)$ be $d$-regular graphs such that $E_1 \subseteq E_2 \subseteq E$, and there exists $S \subseteq V$ such that $E_2 \setminus E_1 \subseteq E(S,V\setminus S)$. Then, for every $t \geq 1$, it holds that
    \begin{equation*}
        \Pr_{\walk \sim p_S^t[H_1]}\left[\forall\,  r \in [t], \, w_r \in S\right] \le \Pr_{\walk \sim p^t_S[H_2]}\left[\forall\,  r \in [t], \,  \, w_r \in S\right] + t \cdot \frac{|E_2 \setminus E_1|}{2d|S|} \, .
    \end{equation*}
\end{restatable}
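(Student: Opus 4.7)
The plan is to couple the two lazy random walks (the one drawn from $p_S^t[H_1]$ and the one drawn from $p_S^t[H_2]$) so that they remain identical until one of the extra edges in $E_2 \setminus E_1$ is traversed, and then to show that on the event that the $H_1$-walk stays in $S$ the first such divergence must occur at a vertex of $S$. Concretely, I would realize the lazy transition $\tfrac12 I+\tfrac{1}{2d}A$ at each step $r$ via a uniform lazy bit $U_r\in\{0,1\}$ together with, if $U_r=1$, a uniform slot $V_r\in[d]$, and feed the same pair $(U_r,V_r)$ into both walks. At each current vertex $u$ I would partition its $d$ slots into three types: $E_1$-slots, pointing to an $E_1$-neighbour (identical in both graphs); divergence slots, one per edge $(u,v)\in E_2\setminus E_1$, which in $H_2$ lead to $v$ but in $H_1$ act as an extra self-loop; and the remaining slots, which are self-loops in both graphs. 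Under this coupling the two walks coincide exactly until the first time a divergence slot is sampled.

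Let $D$ be the event that some divergence slot is sampled within the first $t$ steps, and let $A_i=\{w^{(i)}_r\in S\text{ for all }r\in[t]\}$ for $i\in\{1,2\}$. On $\bar D$ the two walks are equal, hence $\Pr[A_1\cap\bar D]=\Pr[A_2\cap\bar D]\le\Pr[A_2]$, which yields $\Pr[A_1]\le\Pr[A_2]+\Pr[A_1\cap D]$. The observation I would exploit next is that on $A_1\cap D$ the $H_1$-walk never leaves $S$, so the vertex visited just before the first divergence step (which lies in the coupled common prefix and is therefore shared by both walks) must itself belong to $S$. A union bound over the time of the first divergence and over the current vertex then gives
\[
\Pr[A_1\cap D]\;\le\;\sum_{r=1}^{t}\sum_{u\in S}\Pr\!\bigl[w^{(1)}_{r-1}=u\bigr]\cdot\frac{k_u}{2d},
\]
where $k_u=|\{v:(u,v)\in E_2\setminus E_1\}|$, so that the conditional probability of picking a divergence slot at $u$ is exactly $k_u/(2d)$.

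The remaining step is to control the occupation probabilities. Since $H_1$ is $d$-regular, its lazy transition matrix $M_1$ is symmetric, and starting from $\unif(S)$ one has
\[
\Pr\!\bigl[w^{(1)}_{r-1}=u\bigr]=\frac{1}{|S|}\sum_{s\in S}M_1^{r-1}(s,u)=\frac{1}{|S|}\sum_{s\in S}M_1^{r-1}(u,s)\le\frac{1}{|S|},
\]
because $M_1^{r-1}(u,\cdot)$ is a probability distribution. The hypothesis $E_2\setminus E_1\subseteq E(S,V\setminus S)$ then gives $\sum_{u\in S}k_u=|E_2\setminus E_1|$, since every such edge contributes exactly one endpoint to $S$. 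Substituting both bounds yields the claimed inequality.

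The main subtlety I expect is resisting the temptation to bound $\Pr[D]$ in place of $\Pr[A_1\cap D]$: the global count $\sum_{u\in V}k_u=2|E_2\setminus E_1|$ would cost an extra factor of two in the error, and it is precisely by restricting to $A_1$ that one can charge each extra edge to its \emph{unique} endpoint in $S$ and obtain the sharp $|E_2\setminus E_1|/(2d|S|)$ factor.
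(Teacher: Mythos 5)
Your proof is correct, but it takes a genuinely different route from the paper's. The paper's argument is purely linear-algebraic: it writes each probability as the quadratic form $\1_S^\top(PM_i)^t\1_S/|S|$, applies the matrix telescoping identity $X^t - Y^t = \sum_{i=0}^{t-1}Y^i(X-Y)X^{t-i-1}$ with $X = PM_1$, $Y = PM_2$, identifies $M_1 - M_2$ as $\tfrac{1}{2d}L_H$ for the unnormalized Laplacian $L_H$ of the graph on $E_2\setminus E_1$, and then bounds each of the $t$ cross terms by $|E_2\setminus E_1|$ using that the two bracketing vectors $(PM_2)^i\1_S$ and $(PM_1)^{t-i-1}\1_S$ are entrywise in $[0,1]$ and supported on $S$. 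Your argument achieves the same charging scheme — each edge of $E_2\setminus E_1$ is counted once via its unique endpoint in $S$ — but via an explicit coupling of the two lazy walks followed by a union bound over the time and vertex of first divergence. Where the paper bounds the cross terms by observing that conditional reach-probabilities are at most $1$, you instead bound the occupation probability $\Pr[w^{(1)}_{r-1}=u]$ by $1/|S|$ through symmetry of $M_1$; both observations ultimately rest on $d$-regularity. Your approach is more elementary and does not require the Fréchet-derivative identity (the paper's Fact on $X^t - Y^t$) or the explicit Laplacian quadratic form manipulation, at the cost of introducing the coupling machinery, which the paper's framework has no need for elsewhere. One cosmetic slip: your union bound $\sum_{r=1}^t$ over-counts by one transition (a $t$-step walk as defined in the paper has only $t-1$ transitions), but since this only loosens the bound in the right direction, the stated inequality still follows.
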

\noindent
We want to apply  \Cref{claim:relategraphs} with $H_1 = \crossgapx_{i,j}$, $H_2=H$, $t = L$ and $S = T$. That way, to bound the first term in~\eqref{eq:twoterms} it will suffice to bound the probability of reaching $T$ in the graph $H$. We now verify that our setting meets the conditions of  \Cref{claim:relategraphs}. We have $E_1 = E_{i,j}$ and $E_2 = F = E_{i,j} \cup E(T,\im(i,j)\setminus T)$, so $E_1 \subseteq E_2$ and $E_2 \setminus E_1 = E(T,\im(i,j)\setminus T) \subseteq E(T, V\setminus T)$, as required. Hence, we have
\begin{equation}
\label{eq:applying}
        \Pr_{\walk \sim p^L_T[\crossgapx_{i,j}]}\left[\forall\,  r \in [L], \, w_r \in T\right] \le \Pr_{\walk \sim p^L_T[H]}\left[\forall\,  r \in [L], \, w_r \in T\right] + L \cdot \frac{|E(T,\imapx(i,j)\setminus T)|}{2d|T|} \, .
\end{equation}
To bound the first term, it remains to bound $\Pr_{\walk \sim p^L_T[H]}\left[\forall\,  r \in [L], \, w_r \in T\right]$. We use the following fact.
\begin{restatable}{claim}{explikesets}\label{claim:exp-like_sets}
Let $d\ge 3$ be an integer, let $\psi \in (0,1)$, let $H
=(V,E',\ell')$ be a $d$-regular graph, and let $\emptyset \neq S \subseteq V$ such that $|E'(Q,V\setminus Q)| \ge \psi d |Q|$ for all $Q \subseteq S$. Then, for every $t \ge 1$ one has
\begin{equation*}
    \Pr_{\walk \sim p^t_S[H]}\left[\forall \, r \in [t], \, w_r \in S \right] \le 2 \cdot \exp\left(-\frac{\psi^2}{36}t\right) \, .
\end{equation*}

\end{restatable}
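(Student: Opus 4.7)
The plan is to reduce the stay-in-$S$ probability to the top eigenvalue of a restricted transition matrix, and then lower-bound the corresponding Dirichlet eigenvalue via a Cheeger-type inequality that exploits the hypothesis on subsets $Q \subseteq S$.

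First, let $M = \tfrac12 I + \tfrac{1}{2d} A$ be the lazy walk matrix of $H$, and let $M_S$ denote the submatrix of $M$ indexed by $S \times S$ (with entries corresponding to transitions outside of $S$ ``killed''). Writing $p_0 = \mathbf{1}_S/|S|$ for the starting distribution, a direct expansion gives
\[
\Pr_{\walk \sim p^t_S[H]}\left[\forall\, r \in [t],\, w_r \in S\right] = \mathbf{1}_S^{\top} M_S^{t-1} p_0.
\]
Since $H$ is $d$-regular, $M$ (and hence $M_S$) is symmetric. Cauchy--Schwarz together with $\|\mathbf{1}_S\|_2 = \sqrt{|S|}$ and $\|p_0\|_2 = 1/\sqrt{|S|}$ then yields
\[
\Pr \le \sqrt{|S|} \cdot \lambda_{\max}(M_S)^{t-1} \cdot \frac{1}{\sqrt{|S|}} = \lambda_{\max}(M_S)^{t-1}.
\]
Next, relating $M_S$ to the normalized Laplacian $\mathcal{L}_H = I - A/d$ via $M = I - \tfrac12 \mathcal{L}_H$, we have $\lambda_{\max}(M_S) = 1 - \tfrac12 \lambda_D(S)$, where the Dirichlet eigenvalue is
\[
\lambda_D(S) \coloneqq \min_{\substack{x \in \mathbb{R}^V \setminus \{0\} \\ \supp(x) \subseteq S}} \frac{x^\top \mathcal{L}_H x}{\|x\|_2^2} = \min_{x} \frac{\tfrac{1}{d}\sum_{(u,v) \in E'}(x_u - x_v)^2}{\|x\|_2^2}.
\]

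The crux of the proof is lower-bounding $\lambda_D(S) \ge \psi^2/2$ using the hypothesis. By a standard Perron-type argument applied to $M_S$, we may restrict attention to a nonnegative minimizer $x$ supported on $S$. Set $y_v = x_v^2$ and apply the coarea formula to the level sets $S_t = \{v : y_v > t\}$, which all lie inside $S$ by construction:
\[
\sum_{(u,v) \in E'} |y_u - y_v| = \int_0^{\infty} |E'(S_t, V \setminus S_t)|\, dt \ge \int_0^{\infty} \psi d\, |S_t|\, dt = \psi d \cdot \|x\|_2^2,
\]
where the inequality invokes the hypothesis $|E'(Q, V\setminus Q)| \ge \psi d|Q|$ for every $Q \subseteq S$. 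On the other hand, writing $|y_u - y_v| = |x_u - x_v|(x_u + x_v)$ and applying Cauchy--Schwarz gives
\[
\sum_{(u,v) \in E'} |y_u - y_v| \le \sqrt{\sum_{(u,v) \in E'} (x_u - x_v)^2} \cdot \sqrt{\sum_{(u,v) \in E'} (x_u + x_v)^2} \le \sqrt{\sum_{(u,v) \in E'} (x_u - x_v)^2} \cdot \sqrt{2d\|x\|_2^2},
\]
using $(x_u + x_v)^2 \le 2(x_u^2 + x_v^2)$ and $d$-regularity of $H$. Combining these two estimates and squaring yields
\[
\frac{1}{d}\sum_{(u,v) \in E'}(x_u - x_v)^2 \ge \frac{\psi^2}{2}\|x\|_2^2,
\]
so $\lambda_D(S) \ge \psi^2/2$ and consequently $\lambda_{\max}(M_S) \le 1 - \psi^2/4$.

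Plugging back gives
\[
\Pr_{\walk \sim p^t_S[H]}\left[\forall\, r \in [t],\, w_r \in S\right] \le \left(1 - \frac{\psi^2}{4}\right)^{t-1} \le \exp\!\left(-\frac{(t-1)\psi^2}{4}\right) = e^{\psi^2/4} \exp\!\left(-\frac{t\psi^2}{4}\right),
\]
and since $\psi \le 1$ gives $e^{\psi^2/4} \le e^{1/4} < 2$, while $\psi^2/4 \ge \psi^2/36$, the right-hand side is bounded by $2\exp(-\psi^2 t/36)$ as required. The main subtlety in the plan is step three: the Cheeger-type bound must handle the Dirichlet (rather than full) setting, i.e., edges may leave $S$, and this is precisely where the quantifier ``for all $Q \subseteq S$'' (as opposed to ``for all $Q \subseteq V$'') is essential, since the coarea level sets $S_t$ are automatically subsets of $\supp(x) \subseteq S$.
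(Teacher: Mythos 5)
Your proof is correct, and it takes a genuinely different route from the paper. The paper's proof of \Cref{claim:exp-like_sets} works by explicitly attaching a large blob of dummy vertices $T$ to $S$ to build an auxiliary $d$-regular weighted graph $\widehat{H}$, then verifying three things: that $\widehat{H}$ is $d$-regular, that lazy walks restricted to staying in $S$ have the same law in $H$ and $\widehat{H}$, and that $\widehat{H}$ is a $\psi/3$-expander. It then applies the ordinary (undirected, full-graph) Cheeger inequality to $\widehat{H}$ and bounds the escape probability by a mixing argument, with the size of $T$ tuned as an increasing function of $t$ so that the stationary mass $|S|/|U|$ is itself exponentially small. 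Your proof instead stays inside $H$: it passes directly to the ``killed'' lazy transition operator $M_S = PMP$, controls the collision quantity $\mathbf 1_S^\top M_S^{t-1}p_0$ by $\lambda_{\max}(M_S)^{t-1}$, and lower-bounds the corresponding Dirichlet eigenvalue $\lambda_D(S)$ by a from-scratch Cheeger argument (coarea on the level sets $S_\tau \subseteq \supp(x) \subseteq S$, which is exactly where the hypothesis quantified over all $Q \subseteq S$ enters, followed by the usual Cauchy--Schwarz with $|x_u^2 - x_v^2| = |x_u - x_v|(x_u + x_v)$). This is shorter and avoids both the auxiliary construction and the $t$-dependent choice of $|T|$; it also gives a decay rate of $\psi^2/4$ per step before being loosened to match the stated $\psi^2/36$. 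The small places to be careful, which you handle correctly, are: (i) $M_S$ is PSD because $M \succeq 0$ for a $d$-regular graph, so $\|M_S\|_{\op} = \lambda_{\max}(M_S)$; and (ii) the reduction to nonnegative $x$ follows from the monotonicity of the Dirichlet quadratic form under taking absolute values, which is arguably cleaner than invoking a Perron-type argument for $M_S$.
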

\noindent
We defer the proof of \Cref{claim:exp-like_sets} to \Cref{subsec:explike}, but the main idea is as follows: a graph $H$ as in the lemma statement behaves like an expander locally to $S$, so a lazy random walk has a constant probability of leaving $S$ at every step, resulting in an exponential decay. 

We apply \Cref{claim:exp-like_sets} to the graph $H=(V,F , \ell')$ and the set $T$. 
Since $T \subseteq \im(i,j)$ and since \sloppy  \smash{$E(T, \im(i,j) \setminus T) \subseteq F$} by definition of $F$, we get that for all  $Q \subseteq T$, it holds that $F(Q, V \setminus Q) \subseteq E(Q, \im(i,j) \setminus Q)$, and so by \Cref{lem:imposters_expand}, it holds that 
$ |F(Q, V \setminus Q)| \geq |E(Q, \im(i,j) \setminus Q) | \geq \frac{\phi}{2}d|Q|$. So we can apply \Cref{claim:exp-like_sets} with $\psi = \phi/2$. Combining this with ~\eqref{eq:applying} and using the setting $L = \lceil \frac{150}{\phi^2}\log(1/\delta)\rceil$ (as per \Cref{fig:setting}), we have
\begin{equation}
\label{eq:firstterm}
    \Pr_{\walk \sim p^L_T[\crossgapx_{i,j}]}\left[\forall\,  r \in [L], \, w_r \in T\right] \le \delta + L \cdot \frac{|E(T,\imapx(i,j)\setminus T)|}{2d|T|} \, .
\end{equation}

\noindent
Then, we bound the second term in~\eqref{eq:twoterms} and get
\begin{align}
    \Pr_{\walk \sim p^L_T[\crossgapx_{i,j}]}\left[\walk \sim \crossvstarapx_{i,j}\setminus T \text{ and } \walk \slashed{\sim} \Mapx \cup \NMapx\right] & \le \Pr_{\walk \sim p^L_T[\crossgapx_{i,j}]}\left[\exists r \in [L-1]: \, (w_i,w_{i+1}) \in E(T, \crossvstarapx_{i,j}\setminus T)\right] \\
    & \le L \cdot \frac{|E(T,\crossvstarapx_{i,j}\setminus T)|}{2d|T|} \label{eq:secondterm} \, ,
\end{align}
and plug~\eqref{eq:firstterm} and~\eqref{eq:secondterm} into~\eqref{eq:twoterms} to obtain
\begin{equation}
\label{eq:combined}
     \Pr_{\walk \sim p^L_T[\crossgapx_{i,j}]}\left[\walk \slashed{\sim} \Mapx \cup \NMapx\right] \le \delta + L \cdot \frac{|E(T,\imapx(i,j)\setminus T)|}{2d|T|} + L \cdot \frac{|E(T,\crossvstarapx_{i,j}\setminus T)|}{2d|T|}  \, .
\end{equation}
By Markov's inequality, we have
\begin{align*}
    \Pr_{u \sim \unif(T)} \left[u \slashed{\rightsquigarrow}_{\crossgapx_{i, j}}\Mapx\cup\NMapx\right] & = \Pr_{u \sim \unif(T)} \left[\Pr_{\walk \sim p^L_{u}[\crossgapx_{i, j}]}\left[\walk \sim \Mapx\cup\NMapx\right] <0.9\right] \\
    & \le  10 \cdot \Pr_{\walk \sim p^L_T[\crossgapx_{i,j}]}\left[\walk \slashed{\sim} \Mapx \cup \NMapx\right] \, .
\end{align*}
Recalling $T = \imapx(i,j)\setminus \mislabeled = \crossvstarapx_{i,j} \setminus \mislabeled$, we plug~\eqref{eq:combined} into the above bound to obtain
\begin{align*}
     \left|\left\{u \in  \imapx(i,j)\setminus \mislabeled: u \slashed{\rightsquigarrow}_{\crossgapx_{i, j}}\Mapx\cup\NMapx\right\}\right| & \le 10\delta |T| +\frac{10L}{2d}|E(T,\imapx(i,j)\setminus T)| + \frac{10L}{2d}|E(T,\crossvstarapx_{i,j} \setminus T)|  \, .
\end{align*}

Note that $T \subseteq \imapx(i, j)$, that  $\crossvstarapx_{i,j} \setminus T =  \crossvstarapx_{i,j} \setminus \im(i,j) $ and that $\im(i,j) \setminus T \subseteq \mislabeled$.    Hence, we conclude
\begin{align*}
     \left|\left\{u \in  \imapx(i,j)\setminus \mislabeled: u \slashed{\rightsquigarrow}_{\crossgapx_{i, j}}\Mapx\cup\NMapx\right\}\right| & \le 10\delta |\im(i,j)| +5L|\im(i,j) \cap \mislabeled| + \frac{5L}{d}|E(\im(i,j),\crossvstarapx_{i,j} \setminus \im(i,j))|  \, .
\end{align*}

\end{proof}

\noindent
Every vertex misclassified in line~\eqref{line:non-impostor} has to: be an impostor, be not robustly connected to $\Mapx\cup\NMapx$, and satisfy $\tau(u) \notin \{*, \sigma(u)\}$. We think of these vertices as \emph{false negatives} since the procedure \robcon, which was supposed to identify these vertices as correctly labeled impostors through connectivity test, failed to do so.  We bound the number of such vertices in \Cref{cor:false_neg_updated}.

\begin{lemma}[Misclassification in line~\eqref{line:non-impostor}]\label{cor:false_neg_updated} 
      Consider the setting of \Cref{fig:setting}, and let $\iota,\sigma,\tau$ as per \Cref{fig:labels}. The expected fraction of vertices $u \in V$ over the draw of $\sigma$ such that $\tau(u) \in [k]$ is wrong but $\sigma(u)$ is correct and $u$ is not robustly connected to $\Mapx\cup\NMapx$ in the $(\tau(u),\sigma(u))$-cross graph (as per \Cref{def:rob_con}), is at most $O(\epsilon\delta\log(1/\delta)/\phi^6)$, i.e.

    \begin{equation*}
        \E_\sigma\left[\left|\{u \in V: \, \tau(u) \notin \{*,\sigma(u)\} \text{ and }  \tau(u)\neq \iota(u)  \text{ and } u \slashed{\rightsquigarrow}_{\crossgapx_{\tau(u),\sigma(u)}} \Mapx\cup\NMapx\}\right| \right] \le 5 \cdot 10^7 \eta^2 \cdot \frac{\epsilon\delta\log(1/\delta)}{\phi^6} \cdot n\, .
    \end{equation*}
\end{lemma}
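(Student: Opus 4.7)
The plan is to first observe that any vertex $u$ satisfying the event must be a spectral impostor: since $\tau(u)\ne *$ we have $u\in\spec(\tau(u))$, and combined with $\tau(u)\ne\iota(u)$ this forces $u\in\spec(\tau(u))\cap C_{\iota(u)}=\im(\tau(u),\iota(u))$. I would then split the event according to whether $u$ is correctly labeled or mislabeled:
\begin{equation*}
    \{u : \tau(u)\notin\{*,\sigma(u)\},\,\tau(u)\ne\iota(u),\,u\slashed{\rightsquigarrow}_{\crossgapx_{\tau(u),\sigma(u)}}\Mapx\cup\NMapx\} \subseteq A \cup (\im\cap\mislabeled),
\end{equation*}
where $A$ is the set of correctly labeled impostors that are not robustly connected. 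The mislabeled piece is trivial: $\E_\sigma[|\im\cap\mislabeled|]\le\delta|\im|\le 2\cdot 10^4\eta^2\delta\epsilon/\phi^4\cdot n$ by \Cref{lemma:impostor_n_cross_size}.

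For the main term $A$, the crucial point is that if $u$ is correctly labeled, then $\sigma(u)=\iota(u)$, so $u\in\im(\tau(u),\sigma(u))\setminus\mislabeled$, which is exactly the set analyzed in the already-proven deterministic bound \Cref{claim:reachability_updated_new}. I would therefore partition $A$ over pairs $(i,j)$ with $i\ne j$ via $i=\tau(u),\,j=\sigma(u)$, and apply that lemma to each pair to obtain, for every fixed $\sigma$,
\begin{equation*}
    |A|\le\sum_{i\ne j}\Bigl(10\delta|\im(i,j)|+5L|\im(i,j)\cap\mislabeled|+\tfrac{5L}{d}\bigl|E(\im(i,j),\crossvstarapx_{i,j}\setminus\im(i,j))\bigr|\Bigr).
\end{equation*}

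Now I would take expectation over $\sigma$ term by term. The first term is deterministic and sums to $10\delta|\im|$. The second term gives $5L\delta|\im|$ by linearity since each impostor is mislabeled with probability at most $\delta$ (and the spectral sets $\spec(i),\im(i,j)$ do not depend on $\sigma$). The third term is the one requiring a little care: since $\crossvstarapx_{i,j}\setminus\im(i,j)=(\spec(i)\cap\lab(j))\setminus C_j$, every such vertex $v$ satisfies $\iota(v)\ne j=\sigma(v)$, so it is mislabeled with the specific wrong label $j$. For any fixed edge $(u,v)$ with $u\in\im(i,j)$ and $v\in\spec(i)\setminus C_j$, the event $\sigma(v)=j$ has probability at most $\delta$, so $d$-regularity yields
\begin{equation*}
    \E_\sigma\bigl|E(\im(i,j),\crossvstarapx_{i,j}\setminus\im(i,j))\bigr|\le \delta d|\im(i,j)|,
\end{equation*}
so the third term contributes $5L\delta|\im|$ in total.

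Putting the three pieces together gives $\E_\sigma|A|\le (10+10L)\delta|\im|$. Plugging in $L=\lceil 150\log(1/\delta)/\phi^2\rceil$ and $|\im|\le 2\cdot 10^4\eta^2\epsilon/\phi^4\cdot n$, then adding the trivial $\delta|\im|$ contribution from mislabeled impostors, yields the claimed bound of $5\cdot 10^7\eta^2\epsilon\delta\log(1/\delta)/\phi^6\cdot n$. The main obstacle, which is already handled by \Cref{claim:reachability_updated_new} and its proof via the expansion of impostors (\Cref{lem:imposters_expand_stronger}) together with \Cref{claim:exp-like_sets}, is not in this lemma itself; here the work is to correctly account for the two failure modes (mislabeling versus spectral misplacement) and to verify that the edge-counting expectation is driven by the mislabeling probability of one endpoint.
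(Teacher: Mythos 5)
Your proof is correct and follows essentially the same route as the paper's: the key step in both is to split off the mislabeled impostors (whose expected count is trivially $\delta|\im|$) and apply the deterministic bound of \Cref{claim:reachability_updated_new} to the correctly labeled piece, summed over pairs $(i,j)$. Your treatment of the third term is marginally cleaner than the paper's -- you bound $\E_\sigma|E(\im(i,j),\crossvstarapx_{i,j}\setminus\im(i,j))|\le\delta d|\im(i,j)|$ directly by observing each candidate endpoint $v\in\spec(i)\setminus C_j$ lands in $\lab(j)$ with probability at most $\delta$, whereas the paper first passes through the (slightly less tight, and slightly loosely justified) intermediate bound $|E(\im,\mislabeled\setminus\im)|$ and uses a $2\delta$ per-edge probability -- but this is a cosmetic simplification, not a different argument.
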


\begin{proof} Denote the set that we aim to bound as $\fn$ (for ``False Negatives'').  Fix $i, j \in [k]$ with
$i \neq j$. We denote the subset of vertices $u \in \fn$  with $(\tau(u), \sigma(u)) = (i, j)$ as $\fn_{i, j}$. Note that $\fn_{i, j} \subseteq \imapx(i, j)$, since for any $u \in \Mapx$ we have that $\tau(u) = *$. From \Cref{rem:disjoint_im}, all $\imapx(i, j)$ are disjoint, and therefore we have that all $\fn_{i, j}$ are also disjoint. Moreover, by the definition of $\fn$, we have that $\fn = \cup_{i, j: i\neq j}\fn_{i, j}$.

 \Cref{claim:reachability_updated_new} bounds the number of impostors which have the correct $\sigma$ label and are not robustly connected to $\Mapx\cup\NMapx$ in the graph $\crossgapx_{i, j}$. In other words, \Cref{claim:reachability_updated_new} bounds precisely the size of $\fn_{i, j}\setminus \mislabeled$:

\[|\fn_{i, j}\setminus \mislabeled|\leq \frac{5L}{d}\cdot|E(\imapx(i, j), \crossvstarapx_{i, j}\setminus\imapx(i, j))| + 5L\cdot|\imapx(i, j)\cap \mislabeled| + 10\delta\cdot|\imapx(i, j)|,\]
where $\crossvstarapx_{i,j}= \spec(i) \cap \lab(j)$.
Therefore,
\begin{align*}
    |\fn\setminus \mislabeled| & \leq \sum_{\substack{i, j \in [k]:\\ i\neq j}}\left(\frac{5L}{d}\cdot|E(\imapx(i, j), \crossvstarapx_{i, j}\setminus\imapx(i, j))| + 5L\cdot|\imapx(i, j)\cap \mislabeled| + 10\delta\cdot|\imapx(i, j)| \right)  \\
    & = \sum_{\substack{i, j \in [k]:\\ i\neq j}}\frac{5L}{d}\cdot|E(\imapx(i, j), \crossvstarapx_{i, j}\setminus\imapx(i, j))|+ 5L\cdot|\imapx\cap\mislabeled| + 10\delta\cdot|\imapx| \, ,
\end{align*}
where the last equality follows from the fact that all $\imapx(i, j)$ are disjoint. Recall that \Cref{lemma:impostor_n_cross_size} ensures $|\im| \le 2 \cdot 10^4 \cdot \eta^2 \cdot \frac{\epsilon}{\phi^4 }n$, so
\[\E_{\sigma}\left[5L\cdot|\imapx\cap\mislabeled| + 10\delta\cdot|\imapx|\right] \leq 2 \cdot 10^5 \cdot \eta^2 \cdot \frac{\epsilon\delta}{\phi^4 }L \cdot n \, .\]
Now note that
\[  \sum_{i, j: i\neq j}\frac{5L}{d}|E(\imapx(i, j), \crossvstarapx_{i, j}\setminus\imapx(i, j))|  \leq \frac{5L}{d}|E(\imapx, \mislabeled\setminus\imapx)| \, ,\]
because any $u \in \crossvstarapx_{i, j}\setminus\imapx(i, j)$, by the definition of $\crossvstarapx_{i, j}$, is a mislabeled vertex, and because the sets $\crossvstarapx_{i, j}$ are disjoint. In the graph $G$, there are at most $d\cdot |\imapx|$ many edges with at least one endpoint in $\imapx$. For any edge, the probability to have at least one mislabeled endpoint  is bounded by $2\delta$. Therefore, 
\[\E_{\sigma}\left[|E(\imapx, \mislabeled\setminus\imapx)|\right] \leq 2\delta\cdot d\cdot |\imapx| \leq 4 \cdot 10^4 \cdot \eta^2 \cdot \frac{d\epsilon\delta}{\phi^4 }n \, ,\]
so, using the setting $L = \lceil 150\frac{\log(1/\delta)}{\phi^2} \rceil$ from \Cref{fig:setting}, we get
\[\E_{\sigma}\left[|\fn\setminus \mislabeled|\right] \leq 3 \cdot 10^5 \cdot \eta^2 \cdot \frac{\epsilon\delta L}{\phi^4 }n \leq 4.5 \cdot 10^7 \cdot \eta^2 \cdot \frac{\epsilon\delta\log(1/\delta)}{\phi^6} \cdot n \, .\]
\noindent
To finish the proof, it remains to bound the expected size of $\fn\cap\mislabeled$. Since $\fn \subseteq \im$, by \Cref{lemma:impostor_n_cross_size} we have
\[\E_{\sigma}[|\fn\cap \mislabeled|] \leq 2 \cdot 10^4 \cdot \eta^2 \cdot \frac{\epsilon\delta}{\phi^4 }n \, .\]
From here, we conclude 
$\E_{\sigma}[|\fn|] \leq \E_{\sigma}[|\fn\setminus\mislabeled|] + \E_{\sigma}[|\fn\cap\mislabeled|] \leq 5 \cdot 10^7 \cdot \eta^2 \cdot \frac{\epsilon\delta\log(1/\delta)}{\phi^6} \cdot n$
as desired.
\end{proof}

\noindent
We are now equipped to prove the main theorem.

\begin{proof}[Proof of \Cref{thm:sublinear}]
    The algorithm that we refer to in the statement of \Cref{thm:sublinear}, prepares a data structure as follows.
    \begin{enumerate}
        \item First, run the preprocessing of \Cref{thm:spec_dot_prod_oracle}  with precision parameter $\xi = \phi^2/(20^4\eta^2 k)$ (so as to meet the requirement of \Cref{fig:setting}). This produces a data structure that we can be queried with pairs $V \times V$ to obtain access to our approximate inner product function $\specdp$.
        \item Next, run \Cref{alg:app_centers} and \Cref{alg:permutation} using $\specdp$ as $\langle f_x, f_y \rangle_{\text{apx}}$ (and implementing access to it via the data structure from the previous step), and obtain vectors $\mu_1,\dots,\mu_k$\footnote{Again, in reality these are vertices $u_1,\dots,u_k$ that indicate the embedding that should be used for the corresponding approximate cluster mean. We write $\mu_1,\dots,\mu_k$ for readability.}.
        \item Additionally, generate a table $r$ of $\lceil 450 \log n \rceil\times \lceil L\rceil$ (recall the definition of $L$ from \Cref{fig:setting}) uniformly random integers in $[2d]$. The table $r$ is then used as seed that we use to generate lazy random walks in  \Cref{alg:rob_con} \robcon. More particularly, whenever we need to generate the $j$-th (out of $\lceil L\rceil $) step of the $i$-th (out of $\lceil 450 \log n \rceil$) random walk starting from a vertex $u$, we take the $r_{i, j}$-th neighbor of the $(j-1)$-st vertex of the $i$-th walk in the graph cross graph $G_{\tau(u), \sigma(u)}$. If the value $r_{i, j}$ is greater than $d$, then the $j$-th step of the $i$-th walk is a self-loop.  
    \end{enumerate}
    With the data structure resulting from these three steps, we answer a query $u \in V$ by running on it \Cref{alg:sketch+labels} with $\specdp$ as $\langle f_x, f_y \rangle_{\text{apx}}$ and feeding it $\mu_1,\dots,\mu_k$.
    \\~\\
    Denote by $\nu$  the randomness involved in the preprocessing, in particular $\nu_1,\nu_2,\nu_3$ is the randomness used in the first, second, and third step above repsectively. Note that for fixed $\sigma$, \Cref{alg:sketch+labels} is deterministic after the preprocessing. Then, for fixed $\nu$ and $\sigma$, let $\alpha:V \rightarrow [k]$ be the mapping of a vertex to cluster id output by \Cref{alg:sketch+labels} with this realization of the processing randomness. We want to show that with high constant probability over the realizations of $\nu$ and $\sigma$, we have an $\alpha$ that is wrong on at most $O(\epsilon \delta/\phi^6 \cdot \log(1/\delta))$ fraction of vertices.

    First, we condition on the randomness $\nu_1$: by \Cref{thm:spec_dot_prod_oracle}, we know that the data structure produced by the first step does give access to a legitimate approximate inner product function $\specdp$ as per \Cref{def:apx} with high probability, so let us assume that the realization of $\nu_1$ meets this condition.

    Then, we condition on the internal randomness $\nu_2$ of \Cref{alg:app_centers} and \Cref{alg:permutation}: by \Cref{lemma:pi_computation}, with probability $0.97$ over $\nu_2$, one has that $\tmu_1,\dots,\tmu_k$ meet the condition of \Cref{def:approxmeans} with probability $0.999$ over $\sigma$, so let us assume that the realization of $\nu_2$ meets this condition. As we noted in the proof of \Cref{thm:polytime}, at this point the vectors $\mu_1,\dots,\mu_k$ are determined up to permutation $\pi$, which is computed by \Cref{alg:permutation} as a function of $\sigma$.

    As in the proof of \Cref{thm:polytime}, we consider an event $\textsc{Fail}(\sigma, \pi(\sigma))$. However, here we do not want to directly relate this event to the misclassification rate of $\alpha$, since the latter also depends on the randomness $\nu_3$ use to generate the seed for \Cref{alg:rob_con}. Hence, we define $\textsc{Fail}(\sigma, \pi(\sigma))$ to be the event such that, permuting the $k$ vectors given by  \Cref{alg:app_centers} with $\pi(\sigma)$, one of the following happens:
    \begin{enumerate}
        \item the number of vertices $u \in V$ such that $\tau(u) \in \{*,\sigma(u)\}$ and $\sigma(u)\neq \iota(u)$ is more than $2 \cdot 10^9 \cdot \eta^2 \cdot {\delta\epsilon}/{\phi^4 }n$;
        \item the number of vertices $u \in V$ such that $\tau(u) \notin \{\sigma(u), *\}$, $\sigma(u) \neq \iota(u)$, and $ u \slashed{\perp}_{G_{\tau(u),\sigma(u)}} \Mapx\cup\NMapx$ is more than $2\cdot 10^{15}\eta^5 {\delta\epsilon}/{\phi^6 } \cdot \log({1}/{\delta}) \cdot  n$;
        \item the number of vertices $u \in V$ such that $\tau(u) \notin \{\sigma(u), *\}$, $\tau(u) \neq \iota(u)$, and $u \slashed{\rightsquigarrow}_{\crossgapx_{\tau(u),\sigma(u)}} \Mapx\cup\NMapx$ is more than $5\cdot 10^{12}\eta^2 {\delta\epsilon}/{\phi^6 } \cdot \log({1}/{\delta}) \cdot  n$.
    \end{enumerate}
    As in the proof of \Cref{thm:polytime}, we can bound
    \begin{equation*}
        \Pr_{\sigma}[\textsc{Fail}(\sigma, \pi(\sigma))] \leq 2\Pr_{\sigma}[\textsc{Fail}(\sigma, \pi^*)] +  10^{-3} \, ,
    \end{equation*}
    where $\pi^*$ denotes the correct permutation for the set of vectors produced by \Cref{alg:app_centers}. 
    Conditioned on the success of  $\nu_1,\nu_2$, the approximate cluster means $(\tmu_i)_{i \in [k]}$, the approximate inner product $\langle \cdot, \cdot\rangle_{\apx}$, and the approximate distance $\| \cdot \|_{\apx}$ satisfy the assumptions in \Cref{fig:setting}, so we are in the setting of \Cref{fig:setting}. Therefore, \Cref{lemma:impostor_n_cross_size}, \Cref{claim:false_pos}, \Cref{cor:false_neg_updated} apply, and, using Markov's inequality, give $\Pr_{\sigma}[\textsc{Fail}(\sigma, \pi^*)] \le 10^{-4}$.

    Finally, we condition on $\sigma$ such that $\textsc{Fail}(\sigma, \pi(\sigma))$ does not occur. Now, for any $\nu_3$, we can make the following observations about the number of vertices misclassified by \Cref{alg:sketch+labels}. 
    \begin{enumerate}
        \item For this choice of $\sigma$, from the first condition in the definition of the $\textsc{Fail}$ event, it follows that
        at most $2 \cdot 10^9 \cdot \eta^2 \cdot {\delta\epsilon}/{\phi^4 }n$ vertices can be misclassified in line~\eqref{line:core} or line~\eqref{line:far}.
        \item  For a vertex to be misclassified in line~\eqref{line:impostor}, we must have $\tau(u) \notin \{\sigma(u), *\}$, $\sigma(u) \neq \iota(u)$ and $\robcon$ must return ``yes''. For there to be more than $2\cdot 10^{15}\eta^5 {\delta\epsilon}/{\phi^6 } \cdot \log({1}/{\delta}) \cdot  n$ such vertices, it must be the case that there is a vertex $ u \perp_{G_{\tau(u),\sigma(u)}} \Mapx\cup\NMapx$ for which $\robcon$ returns ``yes'' (since, by our choice of $\sigma$, the number of vertices $u \in V$ such that $\tau(u) \notin \{\sigma(u), *\}$, $\sigma(u) \neq \iota(u)$, and $ u \slashed{\perp}_{G_{\tau(u),\sigma(u)}} \Mapx\cup\NMapx$ is at most $2\cdot 10^{15}\eta^5 {\delta\epsilon}/{\phi^6 } \cdot \log({1}/{\delta}) \cdot  n$).
        \item  For a vertex to be misclassified in line~\eqref{line:non-impostor}, we must have $\tau(u) \notin \{\sigma(u), *\}$, $\tau(u) \neq \iota(u)$ and $\robcon$ must return ``no''. For there to be more than $5\cdot 10^{12}\eta^2 {\delta\epsilon}/{\phi^6 } \cdot \log({1}/{\delta}) \cdot  n$ such vertices, it must be the case that there is a vertex $ u \rightsquigarrow_{G_{\tau(u),\sigma(u)}} \Mapx\cup\NMapx$ for which $\robcon$ returns ``no'' (since, by our choice of $\sigma$, the number of vertices $u \in V$ such that $\tau(u) \notin \{\sigma(u), *\}$, $\tau(u) \neq \iota(u)$, and $ u \slashed{\rightsquigarrow}_{G_{\tau(u),\sigma(u)}} \Mapx\cup\NMapx$ is at most $5\cdot 10^{12}\eta^2 {\delta\epsilon}/{\phi^6 } \cdot \log({1}/{\delta}) \cdot  n$).
    \end{enumerate}
    By \Cref{cor:robcon_works}, having fixed $\nu_1,\nu_2,\sigma$ as above, we have the following: for any fixed vertex $u$ such that $ u \rightsquigarrow_{G_{\tau(u),\sigma(u)}} \Mapx\cup\NMapx$, the probability over $\nu_3$ that \robcon outputs ``no'' is at most $2/n^5$; for any fixed vertex $u$ such that $ u \perp_{G_{\tau(u),\sigma(u)}} \Mapx\cup\NMapx$, the probability over $\nu_3$ that \robcon  outputs ``yes'' is at most $2/n^5$. By a union bound, we get that the probability over $\nu_3$ that 
    $2\cdot 10^{15}\eta^5 {\delta\epsilon}/{\phi^6 } \cdot \log({1}/{\delta}) \cdot  n$ vertices misclassified in line~\eqref{line:impostor} or more than $5\cdot 10^{12}\eta^2 {\delta\epsilon}/{\phi^6 } \cdot \log({1}/{\delta}) \cdot  n$ vertices misclassified in line~\eqref{line:non-impostor}, is at most $4/n^4$.

    Combining our different rounds of conditioning, we get that with probability $0.95$ over $\nu_1,\nu_2,\nu_3$ and $\sigma$, there exists a set $B$ of size $O(\epsilon \delta/\phi^6 \cdot \log(1/\delta))$ such that our data structure answers correctly all the queries for $u \in V \setminus B$.

    \paragraph{Preproccessing time and space complexity} The preprocessing stage of \Cref{alg:sketch+labels} is the same as that of the Spectral Dot Oracle from \Cref{thm:spec_dot_prod_oracle}. The space complexity of \Cref{alg:sketch+labels} is dominated by the size of the data structure from \Cref{thm:spec_dot_prod_oracle} which is computed in the preprocessing stage.

    \paragraph{Query time.} In lines~\eqref{line:core}, \eqref{line:far} we compute the approximate distance between $f_u$ and each of the $\{\tmu_i\}_{i \in [k]}$ generated by \specdp, as per \Cref{def:dist_apx}. This requires $O(k)$ calls to the  function \specdp. The remainder of the query time comes from running \robcon. By \Cref{robcon:time_space}, \robcon \ requires $O(d\cdot L\cdot k\cdot \log(n))$ many calls to the function \specdp. Letting $Q$ denote the time of one call to \specdp, we get that the total query time of \Cref{alg:sketch+labels} is
\[O(d\cdot L\cdot k\cdot \log(n) \cdot Q) = O(d\cdot \poly(k)\cdot n^{1/2+O(\epsilon/\phi^2)}\cdot\polylog (n)\cdot\poly(1/\phi)\cdot \log(1/\delta)).\]
    
\end{proof}
\pagebreak
\section{Refining communities}\label{sec:reweight}

In this section, we show how to round our classifier into a $(k, \widetilde O( \epsilon \delta),\phi)$-clusterable graph, without significantly increasing the misclassification rate. Specifically, we prove the following.

\begin{thm}[Refining communities]\label{thm:round_random_sigma}
    Let $G,\epsilon,\delta,\phi,k,\eta$ as per \Cref{fig:setting} and assume that these parameters satisfy the relation $10^{16} \eta^5 \epsilon\delta \log(1/\delta)/\phi^9 \le (100\eta k)^{-1}$.
    Then, there is a polynomial-time algorithm that, given as input $G,\eta,\phi,\delta,k$ and $\sigma$ as per \Cref{fig:labels}, computes a weight function $w: E\rightarrow [0,1]$ on the edges of $G$ such that, with probability $0.9$ over the randomness of the algorithm and the draw of $\sigma$, the weighted graph $G'=(V,E,w,\ell)$ (with self-loops $\ell$ to preserve $d$-regularity) admits a \smash{$(k, O({\epsilon\delta/\phi^7 \cdot \log(1/\delta)}), \Omega({\phi^3}/{k}))$}-clustering $C_1', \ldots, C_k'$ with \smash{$\sum_{i=1}^k |C_i' \triangle C_i| \leq O({\epsilon\delta/\phi^7 \cdot\log(1/\delta)}) n$}. 
\end{thm}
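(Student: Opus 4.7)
The plan is to prove \Cref{thm:round_random_sigma} by composing two subroutines. First, run the sublinear-time classifier from \Cref{thm:sublinear} on $(G,\sigma)$ to obtain a labeling $\alpha: V \to [k]$ whose misclassification rate is $\gamma = O(\epsilon\delta\log(1/\delta)/\phi^6)$ with constant probability over both the internal randomness and the draw of $\sigma$. Second, feed $(G,\alpha)$ into the polynomial-time reweighting algorithm of \Cref{thm:informal_reweight}, which upgrades any labeling that is correct outside a $\gamma$-fraction of vertices into a weight function $w \in [0,1]^E$ such that $G_w = (V,E,w,\ell)$ admits a $(k,O(\gamma/\phi),\Omega(\phi^3/k))$-clustering $C_1',\ldots,C_k'$ with $\sum_i |C_i' \triangle C_i| \leq O(\gamma/\phi)\cdot n$. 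Composing the two, substituting the value of $\gamma$, and taking a union bound over the two success events yields the claimed guarantees. The numerical assumption $10^{16}\eta^5 \epsilon\delta\log(1/\delta)/\phi^9 \leq (100\eta k)^{-1}$ is precisely what is needed to keep $\gamma/\phi$ small enough for the reweighting step to be nontrivially applicable (i.e.\ to ensure the honest solution below lies in the feasible region).

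The bulk of the work is in the reweighting step, for which I would implement the program sketched in \Cref{sec:techoverview_reweight}. Define the set of flagged edges $F = \{(u,v)\in E : \alpha(u)\neq \alpha(v)\}$ and consider the program that minimizes $\sum_{e \in F} x_e$ over $x \in [0,1]^E$, subject to $x_e = 1$ for every $e \notin F$ and to the multi-way conductance bound $\rho_{k+1}(G_x) \geq \Omega(\phi)$. Feasibility is witnessed by the honest solution $x^\star = \1 - \1_{F \cap E_\cross}$: leaving every edge inside the true clusters untouched preserves the conductance of each $G\{C_i\}$, so any $k+1$ disjoint cuts in $G_{x^\star}$ must include at least one that nontrivially cuts some $C_i$, giving $\rho_{k+1}(G_{x^\star}) \geq \Omega(\phi)$. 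Its objective is $|F \setminus E_\cross| \leq \gamma d n$ by the analogue of \Cref{informal:flagcross} (each misclassified vertex contributes at most $d$ flagged non-crossing edges). To turn this into a polynomial-time algorithm, I would replace the combinatorial constraint by the spectral relaxation $\lambda_{k+1}(\calL_x) \geq \Omega(\phi^2/k)$ justified by higher-order Cheeger \cite{LGT12}. Exploiting the clusterability of $G$ together with the accuracy of $\alpha$, the bottom $k$-dimensional eigenspace of $\calL_x$ can be approximated by the explicit span of $\{\1_{\alpha^{-1}(i)}/\sqrt{|\alpha^{-1}(i)|}\}_{i\in [k]}$, which lets me express the eigenvalue lower bound as a semidefinite constraint on $x$ and solve the resulting SDP.

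Given any feasible $x$ with objective value $\nu d n$, I would then extract a clustering of $G_x$ by running $k$-way spectral partitioning on it in the style of \cite{LGT12}: the partition inherits inner conductance $\Omega(\phi^3/k)$ from the constraint $\lambda_{k+1}(\calL_x) \geq \Omega(\phi^2/k)$, while its outer conductance is $O(\nu)$ by direct comparison against the honest partition $C_1,\ldots,C_k$ (which is itself feasible up to the $O(\nu)$ weight shed from crossing edges). Matching each output cluster to one of $C_1,\ldots,C_k$ by majority $\alpha$-label and absorbing the $O(\nu n)$ vertices on which the labels differ into the symmetric difference yields the stated $O(\nu)$-closeness. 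The main obstacle I anticipate is ensuring tightness of the SDP relaxation: one must show that replacing the true bottom eigenspace of $\calL_x$ by $\mathrm{span}\{\1_{\alpha^{-1}(i)}\}_{i\in[k]}$ loses only a $\mathrm{poly}(1/\phi,k)$ factor, which requires simultaneously coupling three approximations (clusterability of $G$, accuracy of $\alpha$, and the restriction of down-weighting to flagged edges only). Carefully tracking the $\mathrm{poly}(1/\phi,k,\eta)$ losses through the chain Cheeger $\to$ eigenspace approximation $\to$ rounding is what ultimately pins the bounds to the stated $(k,\widetilde{O}(\epsilon\delta/\phi^7),\Omega(\phi^3/k))$-clustering and matching closeness.
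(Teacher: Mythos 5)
Your high-level decomposition is exactly the paper's: Theorem \ref{thm:round_random_sigma} is proved by composing the sublinear classifier of Theorem \ref{thm:sublinear} with the general reweighting result (Theorem \ref{thm:round_to_clustering}), taking a union bound over the two success events. The paper's proof of this composite statement is literally two lines; the substance is in Theorem \ref{thm:round_to_clustering}, and your sketch of that reweighting step is where the differences and gaps appear.

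Two places in your sketch diverge from the paper. First, the linearization of the constraint $\lambda_{k+1}(\mathcal L_x)\geq\Omega(\phi^2)$. You propose to fix the projection onto $\mathrm{span}\{\1_{\alpha^{-1}(i)}\}_{i\in[k]}^\perp$; the paper instead uses the projection $P$ onto the top $n-k$ eigenvectors of $\mathcal L$ of the \emph{unweighted} graph $G$ (which does not depend on $\alpha$ or $\sigma$), and proves $\|P-P^*\|_{\op}\lesssim k^{1/2}\epsilon^{1/4}/\phi^{1/2}$ via Lemma~\ref{lemma:gklmsL9} and Weyl's inequality (Lemma~\ref{claim:projection_k}). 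Your choice is also a fixed projection and could in principle give a valid SDP, but the perturbation bound you would need ($\|P_\alpha - P^*\|_{\op}\lesssim\mathrm{poly}(\gamma\eta k)$) is not the one in the paper, and you would have to prove it from scratch; it is not a routine substitution.

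Second — and this is a genuine gap — the clustering extraction step conflates two different partitions. You write that the output partition has inner conductance $\Omega(\phi^3/k)$ (from the spectral constraint) and outer conductance $O(\nu)$ (by comparison with the honest partition). But the partition produced by $k$-way spectral partitioning of $G_x$ has inner conductance controlled by $\lambda_{k+1}(\mathcal L_x)$, while its outer conductance is controlled by $\lambda_k(\mathcal L_x)$ via the higher-order Cheeger bound $\rho_k\lesssim\mathrm{poly}(k)\sqrt{\lambda_k}$ — not by $\nu$. The SDP constraint says nothing about $\lambda_k(\mathcal L_x)$. Conversely, the honest partition $C_1,\dots,C_k$ does have outer conductance $O(\nu)$ in $G_x$, but its inner expansion may be broken: the SDP is allowed to down-weight up to $\gamma dn$ flagged edges that are actually internal ($F\setminus E_\cross$), and nothing prevents these from concentrating inside a small subset $S\subset C_i$ and turning $S$ into a sparse cut of $G_x\{C_i\}$. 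The paper's solution is to start from $C_1,\dots,C_k$, identify the maximal violating set $X_i$ in each cluster, use Lemma~\ref{lemma:sparse_cut_k} to show $|X_i|\lesssim\gamma n/\phi$ is small, apply Lemma~\ref{lemma:ST_k} to conclude $G_x\{C_i\setminus X_i\}$ is still a $\Omega(\phi)$-expander, and then greedily reassign $\cup_i X_i$ (Algorithm~\ref{alg:GT_k}, based on \cite{GT14}) so that the final $C_i'$ are well-separated and $O(\gamma/\phi)$-close to $C_i$. Without this repair argument, neither of your two candidate partitions actually satisfies both the inner- and outer-conductance requirements simultaneously, and the "majority $\alpha$-label matching" heuristic has no analysis backing it. Note also that the paper's theorem statements only require outputting the weight function $w$ and proving \emph{existence} of the clustering; Algorithm~\ref{alg:GT_k} is a proof device whose while-loop need not be polynomially bounded, so your plan to produce the clustering algorithmically is both more than the theorem requires and harder to analyze.
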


\noindent
We show this via a more general result: given a clusterable graph $G$ and a labeling of its vertices $\alpha$, we can improve the clusterability of $G$ all the way up to matching the misclassification rate of $\alpha$. From this, \Cref{thm:round_random_sigma} follows trivially by combining with \Cref{thm:sublinear}. Hence, our goal for the rest of the section is to prove the more general result. A complete list of the objects and parameters that we use and assumptions we make on them is presented in \Cref{fig:setting_reweight}.

\begin{table}[!h]
 \fbox{\begin{minipage}{\textwidth}
\begin{multicols}{2}
\begin{flushleft}
	{\small
\begin{itemize}
	\item $G=(V,E)$ --- regular graph
    \item $d \ge 3$ --- degree of $G$
    \item $n =|V|$ --- number of vertices in $G$
    \item $\eta = O(1)$ --- upper bound on $\max_{i,j \in [k]} \frac{|C_i|}{|C_j|}$
    \item $\epsilon,\phi \in (0, 1)$ --- conductance parameters
    \item $k \ge 2$  --- number of communities
    \item $\{C_i\}_{i \in [k]}  $ --- $(k,\epsilon,\phi)$-clustering of $G$  (Def. \ref{def:clustering})
    \item $\mathcal{L}$ --- normalized Laplacian of $G$
    \item $v_1, \dots v_n $ -- norm. eigenvectors of $\mathcal L$ ordered by ascending eigenvalue
    \item $P=I-\sum_{i=1}^k v_i v_i^\top$ -- projection onto top $n-k$ eigenvectors of $\mathcal L$
    \item $\alpha: V\rightarrow [k]$ --- the given labeling
    \item $\gamma \in (0,1)$ --- misclassification rate of the labeling $\alpha$
    \item  $\theta = \phi^2/5$ -- threshold used in \Cref{alg:sdp_oneshot_k}
    \item $ \frac{\epsilon}{\phi^6}\le \frac{10^{-5}}{\eta^4}, \phi^2\eta < 10^{-3}$
    \item $k \leq \frac{\phi^5}{10^6\epsilon^{1/2}}$
    \item $ \gamma \leq {\phi^3}({100\eta k})$
\end{itemize}}
\end{flushleft}
\end{multicols}	
 \end{minipage}}	
\caption{Parameters and objects used in \Cref{sec:reweight}.}
\label{fig:setting_reweight}
\end{table}

\noindent
In this setting, we prove the following result.

\begin{thm}[Refining communities, general]\label{thm:round_to_clustering}
There is a polynomial-time algorithm (Algorithm \ref{alg:sdp_oneshot_k}) that given $G,\phi$ and $\alpha$ as per \Cref{fig:setting_reweight} computes a weight function $w: E\rightarrow [0,1]$ on the edges of $G$ such that the weighted graph $G'=(V,E,w,\ell)$ (with self-loops $\ell$ to preserve $d$-regularity) admits a $(k, 3\gamma /\phi,  {\phi^3}/({480k}))$-clustering $C_1', \ldots, C_k'$ with $\sum_{i=1}^k |C_i' \triangle C_i| \leq \frac{4 \gamma}{\phi}n$. 
\end{thm}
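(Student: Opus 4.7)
\textbf{Proof proposal for \Cref{thm:round_to_clustering}.} The plan is to formalize the optimization heuristic described informally in \Cref{sec:techoverview_reweight} as a polynomial-time semidefinite program, show that its optimum has small objective, and round its solution into a clustering. Let $F = \{(u,v) \in E : \alpha(u) \neq \alpha(v)\}$ be the edges flagged by $\alpha$, and $E_{\cross} = \cup_{i\neq j} E(C_i,C_j)$ be the true crossing edges. Since each of the $\gamma n$ misclassified vertices contributes at most $d$ edges to the symmetric difference, $|F \triangle E_{\cross}| \le \gamma d n$ (this is \Cref{informal:flagcross}). The target SDP minimizes $\sum_{e \in F} x_e$ over $x \in [0,1]^E$ with $x_e = 1$ for $e \notin F$, subject to a semidefinite surrogate for $\rho_{k+1}(G_x) = \Omega(\phi^3/k)$. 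Using $P = I - \sum_{i=1}^k v_i v_i^\top$ from \Cref{fig:setting_reweight} as a fixed proxy for the top $n-k$ eigenspace of the reweighted normalized Laplacian $\mathcal{L}_{G_x}$, the constraint takes the form $P \, \mathcal{L}_{G_x} \, P \succeq \theta P$ where $\theta = \phi^2/5$, which is linear in $x$ and hence convex semidefinite. This program can be solved up to arbitrary precision in polynomial time via the ellipsoid method.

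The first main step is to exhibit a feasible low-objective solution, which will be $x^* = \mathbf{1} - \mathbf{1}_{F \cap E_{\cross}}$. Its objective value equals $|F \setminus E_{\cross}| \le \gamma dn$ directly from the bound above, so it remains to check feasibility. Since $x^*$ only removes edges that already cross between some $C_i$ and $C_j$, the induced subgraphs $G\{C_i\}$ are untouched, hence their expansion is unchanged. By \Cref{lem:bnd-lambda} applied to $G_{x^*}$ (which still admits the same clustering with $\epsilon' \leq \epsilon$), we have $\lambda_{k+1}(\mathcal{L}_{G_{x^*}}) \ge \phi^2/2$, and standard spectral stability together with the proximity of the bottom-$k$ eigenspaces of $\mathcal L$ and $\mathcal{L}_{G_{x^*}}$ (which follows from the Davis--Kahan theorem since the eigenvalue gap in $\mathcal L$ is $\gtrsim \phi^2 - \epsilon$) gives $P \mathcal{L}_{G_{x^*}} P \succeq \theta P$ for our choice of $\theta$. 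Thus $\opt \le \gamma dn$ as required.

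The second main step is to extract the final clustering from the SDP solution $\widehat{x}$. Set $G' = G_{\widehat{x}}$; by the above, $G'$ has $\sum_{e \in F} \widehat{x}_e \le \gamma dn$, i.e. only a $\gamma$-fraction of the edge mass is on flagged edges. The idea is to run the higher-order Cheeger rounding of Lee--Gharan--Trevisan on $G'$: the semidefinite constraint $P \mathcal{L}_{G'} P \succeq \theta P$ implies $\lambda_{k+1}(\mathcal{L}_{G'}) \ge \theta - O(\sqrt{\epsilon}/\phi)$ up to the approximation error of $P$, and higher-order Cheeger then yields $k$ disjoint sets $C_1', \ldots, C_k'$, each with conductance $O(k \sqrt{\theta}) = O(\phi/\sqrt{k}) \geq \phi^3/(480k)$ in $G'$, giving the inner expansion claim. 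To control the crossings $\Phi_{G'}(C_i')$ and closeness to $C_1,\ldots,C_k$, we compare $C_i'$ to $\alpha^{-1}(i)$ using that the spectral embedding of $G'$ (from its bottom $k$ eigenvectors) is within $O(\sqrt{\gamma})$ of being piecewise constant on the $C_i$'s, since the edge mass of $G'$ on $E(C_i,C_j)$ for $i\ne j$ is at most the crossing weight $\le \gamma d n/\phi$ by the objective bound combined with $|F \cap E_{\cross}| \le \gamma dn$.

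The main obstacle is the third step: showing that the SDP guarantee translates cleanly to the $(k, 3\gamma/\phi, \phi^3/(480k))$-clustering bounds. The rounding must produce clusters simultaneously $3\gamma/\phi$-sparse in the cut and $\Omega(\phi^3/k)$-expanding on the inside, while staying $O(\gamma/\phi)$-close in symmetric difference to the target $C_1,\ldots,C_k$. The delicate point is that the objective bound $\sum_{e \in F}\widehat{x}_e \le \gamma dn$ bounds only the weight on flagged edges, whereas we need to bound the actual crossing cut $|E_{G'}(C_i',V\setminus C_i')|$; this requires pairing every cluster $C_i'$ produced by rounding with a $C_{\pi(i)}$ such that $|C_i' \triangle C_{\pi(i)}| \le O(\gamma/\phi) |C_{\pi(i)}|$, and then using expansion of $G\{C_{\pi(i)}\}$ to argue that most of the crossing mass of $C_i'$ in $G'$ must come from flagged edges, whose total weight we have controlled. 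Everything else---polynomial-time solvability, constructing $x^*$, and the Galton-Watson-type adjustments already present in \Cref{sec:poly,sec:sub_alg}---follows from tools already developed in the paper.
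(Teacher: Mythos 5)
Your overall setup matches the paper: you identify the same SDP (minimize weight on flagged edges subject to a semidefinite surrogate for a $(k+1)$-way conductance bound), and you propose the same low-objective witness $x^*=\mathbf{1}-\mathbf{1}_{F\cap E_{\cross}}$ with objective $|F\setminus E_{\cross}|\le d\gamma n$. However, two of your steps have genuine gaps.

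\textbf{Feasibility of $x^*$.} You invoke Davis--Kahan to compare the bottom-$k$ eigenspace of $\mathcal L$ with that of $\mathcal L_{G_{x^*}}$. The difference $\mathcal L - \mathcal L_{G_{x^*}}$ is the Laplacian of the removed cross edges; a single vertex can lose $\Theta(d)$ incident edges, so the operator norm of this perturbation can be $\Omega(1)$, which is much larger than the spectral gap $\approx\phi^2$. Operator-norm Davis--Kahan therefore does not give the needed projection bound. The paper instead compares $P$ with the projection $P^*$ onto the bottom-$k$ eigenspace of the \emph{internal} Laplacian $\mathcal L^{\inedge}$ and proves $\|P-P^*\|_{\op}\lesssim k^{1/2}\epsilon^{1/4}/\phi^{1/2}$ via a dedicated argument (Claim \ref{claim:projection_k}) that exploits the approximate orthonormality of the cluster means (Lemma \ref{lemma:gklmsL9}), not a generic perturbation bound. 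It then uses $\mathcal L_{x^*}\succeq\mathcal L^{\inedge}$ to reduce feasibility of any cross-only downweighting to feasibility with $\mathcal L^{\inedge}$, and Weyl's inequality to compare $P(\mathcal L^{\inedge}-\theta I)P$ with $P^*(\mathcal L^{\inedge}-\theta I)P^*$.

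\textbf{Rounding.} This is the bigger gap. You propose running LGT higher-order Cheeger rounding on $G'$ from scratch and then arguing the resulting clusters are close to the $C_i$'s. Two problems. First, the higher-order Cheeger upper bound $\rho_k(G')\le O(k^2)\sqrt{\lambda_k(\mathcal L_{G'})}$ controls the \emph{outer} conductance of the produced pieces; it does not make them internal expanders, which is what the $(k,\cdot,\Omega(\phi^3/k))$-clustering demands. Second, even granting clusters with sparse boundary, there is no mechanism in this route that ties them to the target $C_1,\dots,C_k$; you acknowledge this is ``delicate'' but leave it entirely open. The paper resolves both issues by \emph{not} reclustering from scratch: it starts from $C_1,\dots,C_k$, finds the largest non-expanding subset $X_i\subseteq C_i$ of each, uses Lemma \ref{lemma:sparse_cut_k} to show $\sum|X_i|\le 2\gamma n/\phi$ (since a feasible $x$ can only downweight flagged edges, a sparse cut in $G_x\{C_i\}$ must be small), applies the Spielman--Teng Lemma \ref{lemma:ST_k} to show the remaining cores $C_i\setminus X_i$ are $\Omega(\phi)$-expanders in $G_x$, and then uses the greedy reassignment of Algorithm \ref{alg:GT_k}. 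The SDP's semidefinite constraint is used precisely once in the internal-expansion argument: it gives $\lambda_{k+1}(\mathcal L_x)\ge\theta$ (Lemma \ref{lambda_above_threshold}), hence $\rho_{k+1}(G_x)\ge\theta/2$ via higher-order Cheeger, which forces any ``outside'' portion $S_{out}$ of a bad cut to have high conductance because the $k$ cores are already verified to have low conductance. Closeness to $C_1,\dots,C_k$ is then immediate since the symmetric difference is contained in $\bigcup X_\ell$. This modify-and-reassign step is the key idea your proposal is missing.

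Your Step 2 as stated is correct in its conclusion ($\opt\le d\gamma n$) but needs the paper's Claim \ref{claim:projection_k} rather than Davis--Kahan. Your Step 3 needs to be replaced by the core/reassignment argument.
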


\noindent
We start by showing how to obtain \Cref{thm:round_random_sigma} from \Cref{thm:round_to_clustering}.

\begin{proof}[Proof of \Cref{thm:round_random_sigma}]
Let $\alpha$ be the classifier given by \Cref{thm:sublinear}, and apply \Cref{thm:round_to_clustering} to the classifier $\alpha$. By \Cref{thm:sublinear},  with probability $0.95$  over the internal randomness of the algorithm and the draw of $\sigma$, the misclassification rate of $\alpha$ is $O(\epsilon\delta\log(1/\delta)\frac{1}{\phi^6}n)$. So applying \Cref{thm:round_to_clustering}, given the classifier $\alpha$, gives the result. 
\end{proof}

\noindent
Having established \Cref{thm:round_random_sigma}, we now focus on proving \Cref{thm:round_to_clustering} for the rest of the section. Ideally, in order to sharpen the communities, we would like to down-weight the cross edges $E_{\cross}$ without touching the internal edges $E_{\inedge}$, as defined below.

\begin{definition}[Internal and cross edges]
    Consider the setting of \Cref{fig:setting_reweight}. We define the set of \emph{internal edges} of $G$, denoted $E_{\inedge}$, to be the edges with both endpoints belonging to the same cluster, that is 
    $$ E_{\inedge}  \coloneqq \bigcup_{i =1}^k E(C_i) \, .$$
    We define the set of \emph{cross-edges} of $G$, denoted $E_{\cross}$, to be the edges whose endpoints belong to different clusters, that is  
    $$E_{\cross} \coloneqq E \setminus E_{\inedge} \, . $$
    Furthermore, we let $\Lin$ and $\Lcross$ denote the normalized Laplacians of the graphs $(V,E_{\inedge},\ell)$ and $(V,E_{\cross},\ell')$ respectively, where the self-loops $\ell,\ell'$ ensure that these graphs are $d$-regular.
 \end{definition}

\noindent
 The challenge is of course that we do not have access to the target clustering $C_1, \ldots C_k$, and in particular, we do not know which edges belong to $E_{\cross}$ and which belong to $E_{\inedge}$. Instead, we use the given classifier $\alpha$ to identify \emph{flagged} edges, which are edges that we believe are likely to be cross-edges. 

\begin{definition}[Flagged edges]
\label{def:flagged}
Consider the setting of \Cref{fig:setting_reweight}. We say that an edge $(u,v) \in E$ is \emph{flagged} if $\alpha(u) \neq \alpha(v)$. Furthermore, we will use $F$ and $E^+$ to denote the set of flagged and non-flagged edges, that is 
    $$F \coloneqq \{ e = \{u, v\} \in E : \alpha(u) \neq \alpha(v) \} \quad \text{and} \quad  E^+ \coloneqq E \setminus F \, . $$
 \end{definition}

 \noindent
The following claim shows that the flagged edges are indeed a good proxy for the cross edges $E_{\cross}$. 
 \begin{lemma}\label{claim:FcapE_cross_k}
 Consider the setting of \Cref{fig:setting_reweight}. Then 
     $$ |F \triangle E_{\cross}| \leq d \gamma n \, .$$
 \end{lemma}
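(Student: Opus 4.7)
The plan is to show that every edge in the symmetric difference $F \triangle E_{\cross}$ must be incident to a misclassified vertex, and then to bound the total number of such edges using $d$-regularity.

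First, I would define the set of misclassified vertices $B = \{u \in V: \alpha(u) \neq \iota(u)\}$, which by hypothesis on $\alpha$ satisfies $|B| \leq \gamma n$. The key observation is that if an edge $(u, v) \in E$ has both endpoints correctly labeled, i.e.\ $u, v \notin B$, then $\alpha(u) = \iota(u)$ and $\alpha(v) = \iota(v)$, so $\alpha(u) \neq \alpha(v)$ if and only if $\iota(u) \neq \iota(v)$. In other words, $(u,v) \in F$ if and only if $(u,v) \in E_{\cross}$, so $(u,v) \notin F \triangle E_{\cross}$. Contrapositively, every edge in $F \triangle E_{\cross}$ has at least one endpoint in $B$.

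Next, I would bound $|F \triangle E_{\cross}|$ by counting edges incident to $B$. Since $G$ is $d$-regular, each vertex in $B$ is incident to at most $d$ edges, so the number of edges with at least one endpoint in $B$ is at most $d |B| \leq d \gamma n$. Combining this with the previous observation gives
\begin{equation*}
    |F \triangle E_{\cross}| \leq |\{e \in E: e \cap B \neq \emptyset\}| \leq d|B| \leq d \gamma n,
\end{equation*}
which is the desired bound. No significant obstacle here: the argument is a clean counting bound and the only subtlety is recognizing that a double-counted edge (both endpoints in $B$) is still counted only once as an edge while contributing to two vertices' degree sums, which is fine since we are only seeking an upper bound.
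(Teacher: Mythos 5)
Your proposal is correct and follows the same argument as the paper: observe that any edge in $F \triangle E_{\cross}$ must be incident to a misclassified vertex, then bound the number of such edges by $d$ times the number of misclassified vertices using $d$-regularity.
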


 \begin{proof}
 Let $Q \subseteq V$ denote the set of vertices misclassified by $\alpha$. We have $|Q| \leq \gamma n$, by assumption. Note that the edges in $E_{\inedge}$ can only be flagged if either of their endpoints is misclassified, and analogously the edges in $E^+$ can only be crossing if either of their endpoints is misclassified, i.e. 
    $$  F \triangle E_{\cross}  \subseteq \{ (u,v)  \in E : u \in Q \text{ or } v \in Q \} \subseteq \nei_G(Q) \, .$$
  Hence, we get 
    $$|F \triangle E_{\cross} | \leq d |Q| \leq d \gamma n \, .$$ 
 \end{proof}

\noindent
We are now equipped to present our algorithm. The idea is to use a semidefinite program to reweight the edges so as to minimize their weight while preserving the expansion properties of the clusters. \Cref{claim:FcapE_cross_k} suggests that down-weighting the edges in $F$ should be enough, so this is exactly what we do.

\begin{algorithm}
\caption{Refining communities via SDP}\label{alg:sdp_oneshot_k}
\begin{algorithmic}[1]
    \State \textbf{Input:} $G,\phi$ and $\alpha$ as per \Cref{fig:setting_reweight}
    \State \textbf{Output:} weight function $x^*:E\rightarrow [0,1]$
    
    \State $F \gets \{ e = \{u, v\} \in E : \alpha(u) \neq \alpha(v) \}$ \Comment{flag edges disagreeing endpoints, see \Cref{def:flagged}}
    \State $E^+ \gets E \setminus F$ 
    \State $x^* \gets \arg\min \left\{ \sum_{e \in F} x_e \,:\,
    \begin{array}{l}
        x \in [0,1]^E, \\
        x \geq \mathbbm{1}_{E^+}, \\
        P(\mathcal{L}_x - \theta I)P \succeq 0
    \end{array}
    \right\}$\Comment{$\mathcal L_x $ as per Definition \ref{def:Lx_n_Gx}}
    \State \Return $x^*$ \Comment{down-weight flagged edges subject to PSD constraint} 
\end{algorithmic}
\end{algorithm}
\begin{remark}[Computing the projection matrix $P$]
    Our SDP \ref{alg:sdp_oneshot_k} assumes that we have access to the projection matrix $P = I - \sum_{i = 1}^k v_i v_i^\top$ projecting onto the top $n-k$ eigenvectors of $\mathcal L$. Such a matrix can be computed in polynomial time to precision $1/\poly(n)$, which is sufficient for our purposes. See Appendix \ref{subsec:apxmeans} for more details. 
\end{remark}
\noindent
We now need to analyze the clusterability and spectral properties of a graph obtained by weights such as those produced by \Cref{alg:sdp_oneshot_k}. Hence, we now formally define how we obtain a weighted graph from a weight function $x \in [0,1]^E.$

\begin{definition}[$G_x$ and $\mathcal L_x$]\label{def:Lx_n_Gx}
Consider the setting of \Cref{fig:setting_reweight}, and let $x:E \rightarrow [0,1]$. We let
$G_x=(V,E,w,\ell)$ be the weighted graph with $w_e=x(e)$ for all $e \in E$ and $\ell(v) = d-\sum_{(u,v) \in E} x_e$ (note that $w_v \ge 0$ since $x \in [0,1]^E$). Furthermore, we let $\mathcal{L}_x$ be the normalized Laplacian of $G_x$.
\end{definition}

\noindent
 We now prove that the SDP \ref{alg:sdp_oneshot_k} has a high objective value, i.e., that it down-weights almost all of the flagged edges. This will be useful for showing that the graph $G_{x^*}$ obtained from the optimal solution $x^*$ has few edges across clusters. First, we need the following lemma. 

\begin{restatable}{lemma}{projectionk}
\label{claim:projection_k}
Consider the setting of \Cref{fig:setting_reweight}. Let $v_1, \dots v_k$ and $v_1^*, \dots v_k^*$ denote the bottom $k$ eigenvectors of $\mathcal{L}$ and $\Lin$, respectively. 
 Let $P = I - \sum_{i =1}^k v_i v_i^{\top} $ and $P^* =I - \sum_{i =1}^{k} v_i^*{v_i^*}^\top$. Then 

$$ \| P - P^* \|_{\op} \leq 8k^{1/2}\frac{\epsilon^{1/4}}{\phi^{1/2}}\, .  $$
\end{restatable}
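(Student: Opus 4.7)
The plan is to compute $\|P - P^*\|_{\frob}^2$ explicitly via a trace identity, and then bound $\|P - P^*\|_\op$ by $\|P - P^*\|_{\frob}$. The key simplification is that $P^*$ admits a completely transparent description: the graph $(V, E_{\inedge}, \ell)$ (with self-loops preserving $d$-regularity) is the disjoint union of the $k$ subgraphs $G\{C_i\}$, each of which is $d$-regular and a $\phi$-expander, hence connected. Consequently $\Lin$ has eigenvalue $0$ with multiplicity exactly $k$, the associated eigenspace is $\spn\{\1_{C_1}, \ldots, \1_{C_k}\}$, and by Cheeger's inequality applied inside each $G\{C_i\}$ the next eigenvalue is at least $\phi^2/2$. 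Choosing the natural orthonormal basis $\{\1_{C_i}/\sqrt{|C_i|}\}_{i \in [k]}$ of this kernel yields
\[ P^* = I - \sum_{i=1}^k \frac{1}{|C_i|} \1_{C_i} \1_{C_i}^\top. \]

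Next I would exploit the trace identity for orthogonal projections: since both $P$ and $P^*$ have rank $n-k$,
\[ \|P - P^*\|_\frob^2 \;=\; \tr(P) + \tr(P^*) - 2 \tr(P P^*) \;=\; 2(n-k) - 2\tr(P P^*). \]
Expanding $\tr(PP^*)$ with $P = I - U_{[k]}U_{[k]}^\top$ and the explicit $P^*$ above, all terms are trivial except $\sum_{i=1}^k \frac{1}{|C_i|}\|U_{[k]}^\top \1_{C_i}\|_2^2$. Using the defining identity $U_{[k]}^\top \1_{C_i} = \sum_{u \in C_i} f_u = |C_i|\mu_i$, a short algebraic manipulation collapses the whole expression to
\[ \|P - P^*\|_\frob^2 \;=\; 2\sum_{i=1}^k \bigl(1 - |C_i|\,\|\mu_i\|_2^2\bigr). \]

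At this point the first bullet of \Cref{lemma:clustermeans} finishes the job: it guarantees $\bigl| |C_i|\,\|\mu_i\|_2^2 - 1 \bigr| \leq 4\sqrt{\e}/\phi$ for each $i \in [k]$, hence $\|P - P^*\|_\frob^2 \leq 8k\sqrt{\e}/\phi$, and finally
\[ \|P - P^*\|_\op \;\leq\; \|P - P^*\|_\frob \;\leq\; \sqrt{8k\sqrt{\e}/\phi} \;\leq\; 8\,k^{1/2}\,\e^{1/4}/\phi^{1/2}. \]
There is no real obstacle: everything reduces to an algebraic identity together with the already-established bounds on $\|\mu_i\|_2$. The only point requiring a small amount of care is justifying the stated form of $P^*$; this follows at once from the fact that each $G\{C_i\}$ is a connected $d$-regular expander, so $\Lin$ decomposes as a block-diagonal sum of $k$ connected pieces whose kernel is spanned by the indicators $\1_{C_i}$.
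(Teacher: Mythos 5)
Your proof is correct and takes a genuinely different route from the paper's. You bound the Frobenius norm of $P - P^*$ via the trace identity for rank-$(n-k)$ orthogonal projections, which, using $U_{[k]}^\top \1_{C_i} = |C_i|\mu_i$, collapses cleanly to $\|P - P^*\|_{\frob}^2 = 2k - 2\tr(U_{[k]}U_{[k]}^\top \sum_i \frac{1}{|C_i|}\1_{C_i}\1_{C_i}^\top) = 2\sum_{i=1}^k \bigl(1 - |C_i|\,\|\mu_i\|_2^2\bigr)$; you then invoke only bullet {\bf (1)} of \Cref{lemma:clustermeans} together with $\|\cdot\|_{\op} \le \|\cdot\|_{\frob}$. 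The paper instead bounds the operator norm directly: it decomposes an arbitrary unit vector into its components in $\spn(\{v_i\}_{i=1}^k)$ and the orthogonal complement, bounds the quadratic form $x^\top(\sum_i v_i^*{v_i^*}^\top - \sum_i v_iv_i^\top)x$ separately on each piece (the parallel piece via \Cref{lemma:gklmsL9}, the perpendicular piece via a contradiction/trace-counting argument), and handles the cross term by Cauchy--Schwarz. Your approach is shorter, avoids the contradiction argument entirely, and needs a weaker ingredient (\Cref{lemma:clustermeans} bullet {\bf (1)} rather than \Cref{lemma:gklmsL9}); both yield the same bound $O(k^{1/2}\epsilon^{1/4}/\phi^{1/2})$. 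The one step you flag as needing care --- the explicit form of $P^*$ --- is justified exactly as you say and matches the paper's own derivation: each $G\{C_i\}$ is a connected $d$-regular $\phi$-expander, so $\Lin$ is block-diagonal with a $k$-dimensional kernel $\spn\{\1_{C_1},\ldots,\1_{C_k}\}$ and $\lambda_{k+1}(\Lin) \ge \phi^2/2 > 0$ by \Cref{lem:bnd-lambda}, making $P^* = I - \sum_i \frac{1}{|C_i|}\1_{C_i}\1_{C_i}^\top$ the unique projection away from the bottom-$k$ eigenspace. Note also that $\sqrt{8} < 8$, so your final constant has comfortable slack.
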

\noindent
We defer the proof of \Cref{claim:projection_k} to \Cref{sec:projection_k}. 

\begin{lemma}[Existence of a good SDP solution]\label{lemma:OPT_SDP_k} Consider the setting of \Cref{fig:setting_reweight}. Let $\opt$ denote the optimum value of the SDP in Algorithm \ref{alg:sdp_oneshot_k}. Then
   $$ \opt \leq d \gamma n.$$
\end{lemma}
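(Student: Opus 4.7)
The natural plan is to exhibit an explicit feasible solution with objective value at most $d\gamma n$, following the intuition already described in the informal Lemma \ref{informal:lowobj}. Take
\[
\widehat{x} \coloneqq \1_E - \1_{F \cap E_{\cross}},
\]
that is, zero out precisely the edges that are both flagged and truly crossing, and leave every other edge at weight $1$. The objective value is $\sum_{e \in F} \widehat{x}_e = |F \setminus E_{\cross}|$, and \Cref{claim:FcapE_cross_k} immediately gives $|F \setminus E_{\cross}| \le |F \triangle E_{\cross}| \le d\gamma n$. Feasibility of the box constraint $\widehat{x} \in [0,1]^E$ and of $\widehat{x} \ge \1_{E^+}$ are trivial: no $e \in E^+ = E \setminus F$ lies in $F \cap E_{\cross}$, so $\widehat{x}_e = 1$ on $E^+$.

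The only nontrivial requirement is the PSD constraint $P(\mathcal{L}_{\widehat{x}} - \theta I)P \succeq 0$. The first step is to show $\mathcal{L}_{\widehat{x}} \succeq \Lin$ in the PSD order. This follows from the standard rank-one update: turning on one edge $(u,v)$ while compensating with one fewer self-loop at each endpoint (to keep $d$-regularity) modifies the normalised Laplacian by $\frac{1}{d}(\1_u - \1_v)(\1_u - \1_v)^\top \succeq 0$. Since $G_{\widehat{x}}$ is obtained from $(V, E_{\inedge}, \ell)$ by adding the unflagged cross edges $E_{\cross} \setminus F$ with such self-loop adjustments, iterating this observation gives $\mathcal{L}_{\widehat{x}} \succeq \Lin$, hence $P\mathcal{L}_{\widehat{x}}P \succeq P \Lin P$.

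It remains to show $P \Lin P \succeq \theta P$ with $\theta = \phi^2/5$. The graph $(V, E_{\inedge}, \ell)$ is block-diagonal with one block per cluster $G\{C_i\}$, and each block is $\phi$-expanding. Cheeger's inequality then gives that $\Lin$ has exactly $k$ zero eigenvalues and the remaining eigenvalues are at least $\phi^2/2$. Equivalently, writing $P^* = I - \sum_{i=1}^k v_i^* (v_i^*)^\top$ for the projection orthogonal to the bottom $k$ eigenvectors of $\Lin$, we have $P^* \Lin P^* \succeq (\phi^2/2) P^*$. The main (minor) obstacle is that the SDP uses the projection $P$ coming from $\mathcal{L}$, not $P^*$. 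To bridge this, I invoke \Cref{claim:projection_k}, which gives $\|P - P^*\|_{\op} \le 8 k^{1/2} \epsilon^{1/4}/\phi^{1/2}$. For any $z$ in the range of $P$, decompose $z = z_\| + z_\perp$ with $z_\| = P^* z$ and $z_\perp = (I - P^*) z$. A short calculation bounds
\[
\|(I - P^*)P\|_{\op} \le 2 \|P - P^*\|_{\op},
\]
so that $\|z_\perp\|_2^2 \le 4\|P-P^*\|_{\op}^2 \cdot \|z\|_2^2$. Since $z_\perp$ lies in the kernel of $\Lin$, we get $z^\top \Lin z = z_\|^\top \Lin z_\| \ge (\phi^2/2)\|z_\|\|_2^2 = (\phi^2/2)(1 - \|z_\perp\|_2^2)\|z\|_2^2$.

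Finally, plugging in the bound $\|P - P^*\|_{\op}^2 \le 64 k \epsilon^{1/2}/\phi$ and the parameter assumption $k \le \phi^5/(10^6 \epsilon^{1/2})$ from \Cref{fig:setting_reweight}, the term $\|z_\perp\|_2^2$ is negligible (say, $\le 1/10$), so $z^\top \Lin z \ge (\phi^2/2)(1 - 1/10)\|z\|_2^2 \ge (\phi^2/5) \|z\|_2^2 = \theta\|z\|_2^2$. This yields $P \Lin P \succeq \theta P$ and hence $P(\mathcal{L}_{\widehat{x}} - \theta I) P \succeq 0$, establishing feasibility. Combined with the objective bound $|F \setminus E_{\cross}| \le d\gamma n$, this gives $\opt \le d \gamma n$ as claimed.
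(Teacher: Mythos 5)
Your proof is correct and follows the same high-level plan as the paper's: same test solution $\widehat{x} = \1_E - \1_{F \cap E_{\cross}}$, same objective bound via \Cref{claim:FcapE_cross_k}, same reduction $P(\mathcal{L}_{\widehat{x}} - \theta I)P \succeq P(\Lin - \theta I)P$, and the same reliance on \Cref{claim:projection_k}. The one place you diverge is the final step establishing $P(\Lin - \theta I)P \succeq 0$. The paper defines $A = P^*(\Lin - \theta I)P^*$ and $B = P(\Lin - \theta I)P$, bounds $\|A-B\|_{\op} \le 4\|P-P^*\|_{\op}$, and invokes Weyl's inequality to lower-bound $\lambda_{k+1}(B)$; combined with the observation that $B$ has $k$ kernel directions, this yields $B\succeq 0$. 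You instead argue at the level of individual test vectors: for $z$ in the range of $P$, split $z = P^*z + (I-P^*)z$, observe that the $(I-P^*)z$-piece lies in $\ker(\Lin)$ (using that $\Lin$ has exactly $k$ zero eigenvalues, one per cluster), bound its norm via $\|(I-P^*)P\|_{\op} = \|(I-P^*)(P-P^*)\|_{\op} \le \|P-P^*\|_{\op}$, and then read off the Rayleigh quotient directly. Both are valid; your version is slightly more elementary (no appeal to Weyl or to the eigenvalue-counting argument for $B$), while the paper's is perhaps more modular in that it never needs to identify $\ker(\Lin)$ with the range of $I-P^*$ — it only uses the spectral gap $\lambda_{k+1}(\Lin) \ge \phi^2/2$ in the form $\lambda_{k+1}(A) \ge \phi^2/2 - \theta$.

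One small stylistic note: you assert $\|(I - P^*)P\|_{\op} \le 2\|P - P^*\|_{\op}$, which is true but not tight — since $(I-P^*)P^* = 0$, the factor of $2$ can be dropped. This does not affect correctness; the parameter margin is large enough for either constant.
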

\begin{proof}
    Consider the solution given by 
    $$x_e = \begin{cases} 0 \qquad \text{if $e \in F \cap E_{\cross}$} \\
    1 \qquad \text{otherwise} \end{cases}.$$
    To prove the lemma, we show that this is a feasible solution with objective value at most $d \gamma n$. 
    
    \paragraph{Objective value.} The objective value of the solution $x$ is 
    $$ \sum_{e \in F}x_e = |F \setminus E_{\cross}|\leq d \gamma n \, ,$$
    where the last inequality follows by \Cref{claim:FcapE_cross_k}. 
    \paragraph{Feasibility.} We clearly have $x \in [0,1]^{E}$, by definition of $x$. Furthermore, $x_e = 1$ for all $e \in E^+ = E\setminus F$, so the condition $x \geq \mathbbm{1}_{E^+}$  also holds. It remains to check the constraint $P(\mathcal{L}_x-\theta I)P \succeq 0$. Since $x$ only down-weights cross edges and leaves the internal edges $E_{\inedge}$ unchanged, it follows that $\mathcal L_x - \mathcal L_{\inedge} \succeq 0$. This is because $\mathcal L_x - \mathcal L_{\inedge} $ is the degree-$d$ normalized Laplacian of the subgraph graph $(V,E_{\cross},w)$ where the weights are given by $w(e) = x_e$ for all $e \in E_{\cross}$. Hence, we have 
    $$P(\mathcal{L}_x-\theta I)P \succeq P(\mathcal{L}^{\inedge}-\theta I)P \, .$$
   We know that $\calL_{\inedge}$ has $k$ orthonormal eigenvectors $v_1^*,\dots,v_k^*$ with eigenvalue $0$. Let $P^* =  I - \sum_{i=1}^k v_i^*{v_i^*}^\top $, and define
\begin{equation*}
    A = P^*(\mathcal{L}^{\inedge}-\theta I)P^*  \quad \text{and} \quad B  = P(\mathcal{L}^{\inedge}-\theta I)P \, .
\end{equation*}

\noindent
 We conclude the proof by showing that $B \succeq 0$.  Since $B v_i = 0$ for $i =1, \dots, k$, the matrix $B$ has at least $k$ zero eigenvalues. Therefore, to rule out any negative eigenvalues, it suffices to show that $\lambda_{k+1}(B) > 0$. Indeed, if $\lambda_{k+1}(B) > 0$, then the $k$ zero eigenvalues must be $\lambda_1(B) = \lambda_2(B) =  \dots = \lambda_k(B) = 0$ (otherwise  $\lambda_i(B) = 0 < \lambda_{k+1}(B)$ for some $i \geq k+1$, which is impossible), and in particular $B$ only has non-negative eigenvalues. Therefore, our task reduces to proving $\lambda_{k+1}(B) > 0$. 
 
First, we show that $\lambda_{k+1}(A) \geq \phi^2/2  - \theta$. The matrix $P^*$ projects out the $k$ bottom eigenvectors $ v_1^*, \ldots, v_k^k$ of $\Lin$, so for any vector $x \in \spn (  \{ v_i^*\}_{i =1}^k)$, it holds that $Ax = 0$, and for any vector $x \in   \spn( \{ v_i^*\}_{i =1}^k)^\perp$, it holds that $x^\top A x = x^\top ( \mathcal L^{in} - \theta I) x \geq (\lambda_{k+1}(\mathcal L^{in} ) - \theta)\|x\|^2_2 \geq (\phi^2/2 - \theta)\|x\|_2^2$. Here the last inequality is using that $\Lin$ is the Laplacian of a $(k, 0,\phi)$-clusterable graph, so by \Cref{lem:bnd-lambda}, it holds that $\lambda_{k+1}(\mathcal \Lin) \geq \phi^2/2.$
So 
\begin{equation}\label{eq:lambdak_A}
    \lambda_{k+1}(A) \geq \phi^2/2  - \theta \, .
\end{equation}
\noindent
Next, we have 
\begin{align*}
    \|A - B\|_{\op} & =  \| P^*(\mathcal{L}^{\inedge}-\theta I ) P^* - P(\mathcal{L}^{\inedge}-\theta I)P \|_{\op} \\
    & = \| (P^* - P)(\mathcal{L}^{\inedge}-\theta I )P^* +P(\mathcal{L}^{\inedge}-\theta I) (P^* - P) \|_{\op} \\
    & \leq  \| P^* - P \|_{\op} \|\mathcal{L}^{\inedge}-\theta I\|_{\op}\|P^*\|_{\op} + \| P - P^*\|_{\op} \|\mathcal{L}^{\inedge}-\theta I\|_{\op}\|P \|_{\op}\\
    & \leq 4  \| P^* - P\|_{\op}  \\
    & \leq 40 \cdot k^{1/2}\frac{\epsilon^{1/4}}{\phi^{1/2}}, 
\end{align*}
where the second inequality uses $\| \mathcal L - \theta I\|_{\op} = \max_i | \lambda_i(\mathcal L) - \theta| \leq 2$ since $\lambda_i(\mathcal L) \in [0,2]$ for all $i$, and the last inequality follows from \Cref{claim:projection_k}. 
By Weyl's inequality,  we have
\begin{equation*}\label{eq:B'eval}
 \lambda_{k+1}(B) \geq  \lambda_{k+1}(A) - \|A - B\|_{\op}  \geq  \phi^2/2 - \theta-   40\cdot k^{1/2}\frac{\epsilon^{1/4}}{\phi^{1/2}} > 0,
\end{equation*} 
where the second inequality follows by~\eqref{eq:lambdak_A}, and the last inequality follows by the assumption that $k \leq {\phi^5}/({10^6\epsilon^{1/2}})$.
\end{proof}

\noindent
From this, we show that the optimal solution removes almost all of the weight from the cross edges $E_{\cross}$. 

\begin{lemma}[The optimal SDP solution puts little weight on the cross edges]
Consider the setting of \Cref{fig:setting_reweight}, and let $x^* \in [0,1]^E$ be the optimal solution to SDP \eqref{alg:sdp_oneshot_k}. Then, $w(E_{\cross})= \sum_{e \in E_{\cross}}x^*_e \leq 2 d \gamma n$. 
\end{lemma}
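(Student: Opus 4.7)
The plan is to split the sum $\sum_{e \in E_\cross} x^*_e$ according to whether an edge is flagged by $\alpha$ or not. Concretely, I would write
\[
\sum_{e \in E_\cross} x^*_e \;=\; \sum_{e \in E_\cross \cap F} x^*_e \;+\; \sum_{e \in E_\cross \cap E^+} x^*_e,
\]
and bound each term by $d\gamma n$ using, respectively, the optimality of $x^*$ (\Cref{lemma:OPT_SDP_k}) and the fact that flagged edges are a good proxy for cross edges (\Cref{claim:FcapE_cross_k}).

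For the first term, I would simply note that $E_\cross \cap F \subseteq F$ and $x^*_e \ge 0$ everywhere, so
\[
\sum_{e \in E_\cross \cap F} x^*_e \;\le\; \sum_{e \in F} x^*_e \;=\; \opt \;\le\; d\gamma n,
\]
by \Cref{lemma:OPT_SDP_k}. For the second term, I would use the SDP feasibility constraint $x^* \le \mathbbm 1$: every edge contributes at most $1$ to the sum, so
\[
\sum_{e \in E_\cross \cap E^+} x^*_e \;\le\; |E_\cross \cap E^+| \;=\; |E_\cross \setminus F| \;\le\; |F \triangle E_\cross| \;\le\; d\gamma n,
\]
where the last inequality is \Cref{claim:FcapE_cross_k}. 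Adding the two bounds gives exactly $2d\gamma n$.

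There is no real obstacle here: all the work has been done in the preceding two lemmas. The content of the statement is just that the SDP is able to take advantage of the approximate-cross-edge set $F$ supplied by $\alpha$, together with the trivial upper bound $x^* \le \mathbbm 1$ on all other edges, to control the total weight on the true (unknown) cross edges $E_\cross$.
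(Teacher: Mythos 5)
Your proof is correct and follows essentially the same route as the paper: the paper also decomposes $\sum_{e \in E_\cross} x^*_e$ into the contributions from $E_\cross \cap F$ and $E_\cross \setminus F$ (which is exactly your $E_\cross \cap E^+$), bounding the former by $\opt \le d\gamma n$ via \Cref{lemma:OPT_SDP_k} and the latter by $|E_\cross \setminus F| \le d\gamma n$ via \Cref{claim:FcapE_cross_k} together with $x^* \in [0,1]^E$. The only cosmetic difference is the order in which the two bounds are presented.
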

\begin{proof}
    We have 
    \begin{align*}
        \sum_{e \in E_{\cross}}x^*_e &= \sum_{e \in E_{\cross}\setminus F}x^*_e + \sum_{e \in E_{\cross} \cap F}x^*_e  && \\
        & \leq |E_{\cross}\setminus F| + \sum_{e \in F}x^*_e &&  \text{since $0 \leq x^*_e \leq 1$ for all $e \in E$} \\
        & \leq d \gamma n + \sum_{e \in F}x^*_e  && \text{by \Cref{claim:FcapE_cross_k}}\\
        & =   d \gamma n + \opt &&  \text{by optimality of $x^*$} \\
        & \leq 2 d \gamma n &&  \text{by \Cref{lemma:OPT_SDP_k}}. 
    \end{align*}
\end{proof}

\noindent
Next, we need to show that the graph $G_{x^*}$ obtained from the optimal solution $x^*$ of SDP \ref{alg:sdp_oneshot_k} can be partitioned into $k$ clusters, each of which has a high internal conductance. We will show a stronger statement, namely that this is true for \emph{any} feasible solution $x$ of the SDP \ref{alg:sdp_oneshot_k}. 
Before we proceed,  we introduce the notion of \emph{$k$-way expansion}.
\begin{definition}[$k$-way expansion]\label{def:rho} Recall the notion of conductance from \Cref{def:conductance}. Define the \emph{$k$-way expansion} of the graph $G$ as
\[\rho_k(G) := \min_{\text{disjoint }A_1, \dots, A_k} \max_{1 \leq i \leq k} \Phi_G(A_i).\]
\end{definition}

\begin{thm}[Theorem 1.1 in \cite{LGT12}]\label{thm:cheeger_higher_order}For a graph $G$, let $\lambda_1 \leq \lambda_2 \leq \ldots \leq \lambda_n$ be the eigenvalues of $\mathcal{L}$, where $\mathcal{L}$ is the normalized Laplacian of $G$. We have that for every $k$, 
    $$ \frac{\lambda_k}{2} \leq \rho_k(G) \, . $$
\end{thm}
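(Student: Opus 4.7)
The statement is the easy direction of the higher-order Cheeger inequality, and my plan is to prove it via the Courant-Fischer min-max principle applied to an explicit test subspace built from the partition witnessing $\rho_k(G)$. Recall that
$$\lambda_k \;=\; \min_{\substack{W \subseteq \R^V \\ \dim W = k}} \; \max_{0 \neq x \in W} \frac{x^\top \mathcal{L} x}{x^\top x},$$
so it suffices to exhibit a single $k$-dimensional subspace on which every Rayleigh quotient is at most $2\rho_k(G)$.

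For the construction, I would take disjoint sets $A_1, \dots, A_k \subseteq V$ attaining $\rho_k(G)$, so that $\Phi_G(A_i) \le \rho_k(G)$ for every $i \in [k]$, and define $x_i \coloneqq D^{1/2}\mathbbm{1}_{A_i}/\sqrt{\vol(A_i)} \in \R^V$, where $D$ is the diagonal degree matrix of $G$. Because the $A_i$'s are pairwise disjoint, so are the supports of the $x_i$'s, hence $\langle x_i, x_j\rangle = \delta_{ij}$ and $W \coloneqq \spn\{x_1, \dots, x_k\}$ is a $k$-dimensional subspace on which orthonormality gives $\|\sum_i c_i x_i\|_2^2 = \sum_i c_i^2$. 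I would then invoke the standard edge-sum identity $x^\top \mathcal{L} x = \sum_{(u,v) \in E}(y_u - y_v)^2$ with $y \coloneqq D^{-1/2} x$, and note that for $x = \sum_i c_i x_i \in W$ the vector $y$ equals $c_i/\sqrt{\vol(A_i)}$ on $A_i$ and zero outside $\bigcup_i A_i$. Consequently only edges with at least one endpoint in some $A_i$ but not wholly inside one of the $A_i$'s contribute to the edge-sum.

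The remaining step slacks $(y_u - y_v)^2 \le 2(y_u^2 + y_v^2)$, groups the contributions by cluster, and uses $|E(A_i, V\setminus A_i)| \le \vol(A_i) \cdot \Phi_G(A_i)$ (which holds because $\min(\vol(A_i), \vol(V\setminus A_i)) \le \vol(A_i)$) to obtain
$$x^\top \mathcal{L} x \;\le\; 2\sum_{i=1}^k \frac{c_i^2}{\vol(A_i)}\,|E(A_i, V\setminus A_i)| \;\le\; 2\sum_{i=1}^k c_i^2\,\Phi_G(A_i) \;\le\; 2\rho_k(G)\,\|x\|_2^2.$$
Courant-Fischer then yields $\lambda_k \le 2\rho_k(G)$, i.e.\ $\lambda_k/2 \le \rho_k(G)$. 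This is a textbook argument—the direct $k$-set analogue of the easy direction $\lambda_2 \le 2\Phi(G)$ of ordinary Cheeger—so I do not anticipate substantial obstacles; the only bookkeeping care needed is to verify that edges internal to some $A_i$ and edges entirely outside $\bigcup_i A_i$ contribute zero to the edge-sum, and that the factor $2$ slack correctly absorbs cross-edges between distinct clusters $A_i, A_j$ (where one might otherwise worry about double counting).
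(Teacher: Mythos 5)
Your proof is correct, and worth noting is that the paper imports \Cref{thm:cheeger_higher_order} as a black-box result from \cite{LGT12} and does not reprove it, so there is no in-paper argument to compare against. Your construction is the standard Courant--Fischer test-subspace argument for the easy direction of the higher-order Cheeger inequality, and the bookkeeping you flag works out exactly as you anticipate: edges with both endpoints in the same $A_i$ contribute zero because $y$ is constant on $A_i$; edges with both endpoints outside $\bigcup_i A_i$ contribute zero because $y$ vanishes there; and for each remaining crossing edge $(u,v)$ the slack $(y_u - y_v)^2 \le 2(y_u^2 + y_v^2)$ charges $y_u^2$ to the cluster of $u$ and $y_v^2$ to the cluster of $v$ (or nothing, if that endpoint lies outside all $A_i$). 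Summing these charges over crossing edges gives exactly $2\sum_{i=1}^k \frac{c_i^2}{\vol(A_i)}\,|E(A_i,V\setminus A_i)|$, and your chain closes using $\vol(A_i)\ge\min\{\vol(A_i),\vol(V\setminus A_i)\}$ together with orthonormality of the $x_i$ (which follows from disjoint supports, so $\|x\|_2^2=\sum_i c_i^2$). The factor of $2$ is intrinsic to this charging scheme, since an edge between two distinct $A_i$, $A_j$ is paid for on both sides, and it matches the constant in the cited theorem.
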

\noindent
In order to lower bound the internal conductance of the clusters of $G_x$, we want to bound the $(k+1)$-way expansion $\rho_{k+1}(G_x)$.  Motivated by \Cref{thm:cheeger_higher_order}, we start by showing that any feasible solution $x$ of the SDP satisfies $\lambda_{k+1}(\mathcal L _x)\gtrsim {\phi^2}$. 

\begin{lemma}[$\lambda_{k+1}(\mathcal{L}_x) \geq\theta$]\label{lambda_above_threshold}
Consider the setting of \Cref{fig:setting_reweight}.
For all weight assignments $x: E \rightarrow [0,1]$ satisfying $ x \ge \mathbbm{1}_{E^+} $ and $  P(\mathcal{L}_x-\theta I)P \succeq 0$, it holds that $$\lambda_{k+1}(\mathcal L_x) \ge \theta = \frac{\phi^2}{5}\, .$$ 
\end{lemma}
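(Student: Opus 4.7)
The plan is to apply the Courant--Fischer (min--max) variational characterization of eigenvalues directly to the semidefinite constraint of the SDP. Recall that for any symmetric matrix $M \in \mathbb{R}^{n \times n}$, one has
\[
\lambda_{k+1}(M) \;=\; \max_{\substack{S \subseteq \mathbb{R}^n \\ \dim(S) = k}} \;\min_{\substack{y \perp S \\ \|y\|_2 = 1}} y^\top M y,
\]
so to lower-bound $\lambda_{k+1}(\mathcal{L}_x)$ by $\theta$ it suffices to exhibit a single $k$-dimensional subspace $S$ on whose orthogonal complement the Rayleigh quotient of $\mathcal{L}_x$ is at least $\theta$.

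The natural candidate is $S = \mathrm{span}(v_1,\dots,v_k)$, where $v_1,\dots,v_k$ are the bottom $k$ eigenvectors of the unweighted Laplacian $\mathcal{L}$ used to build $P$. By definition, $P = I - \sum_{i=1}^k v_i v_i^\top$ is the orthogonal projector onto $S^\perp$, so every $y \in S^\perp$ can be written as $y = Pz$ for some $z \in \mathbb{R}^n$ (in fact for $z = y$ itself). The SDP feasibility constraint states that $P(\mathcal{L}_x - \theta I)P \succeq 0$, which unpacks to
\[
z^\top P(\mathcal{L}_x - \theta I) P z \;\ge\; 0 \qquad \forall z \in \mathbb{R}^n.
\]
Substituting $y = Pz$ (so that $Py = y$ for $y \in S^\perp$) gives $y^\top (\mathcal{L}_x - \theta I) y \ge 0$ for every $y \in S^\perp$, i.e.\ $y^\top \mathcal{L}_x y \ge \theta \|y\|_2^2$ on the entire $(n-k)$-dimensional subspace $S^\perp$.

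Plugging this into the min--max formula with the choice $S = \mathrm{span}(v_1,\dots,v_k)$ yields $\lambda_{k+1}(\mathcal{L}_x) \ge \theta = \phi^2/5$, which is the claim. Note that the hypothesis $x \ge \mathbbm{1}_{E^+}$ plays no role in this particular lemma; the conclusion follows purely from the PSD constraint together with the fact that $P$ projects onto a subspace of codimension exactly $k$. The proof is essentially a one-line application of the variational principle, so there is no real obstacle to overcome: the only thing to verify carefully is that $P$ indeed projects onto $\mathrm{span}(v_1,\dots,v_k)^\perp$, which is immediate from its definition since $v_1,\dots,v_k$ are orthonormal.
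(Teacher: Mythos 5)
Your proof is correct and follows the same approach as the paper: choose the subspace $\spn(v_{k+1},\dots,v_n)$ (equivalently, the orthogonal complement of $\spn(v_1,\dots,v_k)$), use that $P$ fixes every vector in it, unpack the PSD constraint into a Rayleigh-quotient bound on that subspace, and conclude via Courant--Fischer. Your observation that the hypothesis $x \geq \mathbbm{1}_{E^+}$ is not used here is also accurate.
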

 
\begin{proof}
Let $v_1, \dots v_n$ denote the eigenvectors of the Laplacian $\mathcal L$ of $G$ corresponding to eigenvalues $\lambda_1(\mathcal L) \leq \dots \leq\lambda_n(\mathcal L) $. Let 
$S \coloneqq \spn \{v_{k+1}, \dots v_n \}$ be the $n-k$ dimensional subspace spanned by the top $n-k$ eigenvectors of $\mathcal L$. 
Since $Pz = z$ for all $z \in S$, we get that for all for all $z \in S$ it holds that 
\begin{align*}
    z^\top \mathcal L_x z & =     z^\top P \mathcal L_x P z && \\
    &= z^\top P(\mathcal L_x - \theta I)Pz  + \theta \| z\|_2^2 &&  \text{by adding and substracting $\theta z^\top PIPz  = \theta \|z\|_2^2$}\\
    & \geq \theta \|z\|_2^2 && \text{by assumption that $P(\mathcal L_x-\theta I)P \succeq 0$.} 
\end{align*}
In particular, 
$$ \min_{\substack{z \in S : \\ \|z\|_2 =1}}z^\top \mathcal L_x z \geq \theta \, . $$
\noindent
Therefore, from the Courant-Fischer Theorem, we get that 
\begin{align*}
    \lambda_{k+1}(\mathcal{L}_x)& = \max_{\substack{S' \subseteq V: \\ \dim(S') = n-k}} \min_{\substack{z \in S': \\ \|z\|_2 =1}}z^\top \mathcal L_x z \\
    & \geq \min_{\substack{z \in S : \\ \|z\|_2 =1}}z^\top \mathcal L_x z \\
    & \geq \theta. 
\end{align*}
\end{proof}
\noindent
From \Cref{lambda_above_threshold} and \Cref{thm:cheeger_higher_order}, we can infer that for any feasible solution $x$, there \emph{exists} a clustering $C_1', \dots C_k'$ of $G_x$. However, a priori it is not clear why this partitioning would be close to the target partitioning $C_1, \dots C_k$. We will argue the existence of a good refined partitioning $C_1', \dots C_k'$ that is close to the target partitioning $C_1, \dots C_k$ as follows: we start by trying the target partitioning $C_1, \dots C_k$. If a cluster $C_i$ has too low expansion in the reweighted graph $G_x$, then we will remove the portion violating the expansion and reassign it to another cluster. The following lemma shows that after removing the largest violating portion $S \subseteq C_i$, the rest of the cluster, $C_i \setminus S$, has a good expansion.   

\begin{lemma}[Lemma 7.2 in \cite{ST11}]\label{lemma:ST_k} Let $G = (V, E,w)$ be a graph and let $\psi \leq 1$. Let
$C \subseteq V$ and let $S \subset C$ be a set maximizing $\vol(S)$ among those satisfying 
\begin{enumerate}
    \item $\vol(S) \leq \vol(C)/2$
    \item $\frac{w(S, C \setminus S)}{\vol(S)} \leq \psi$ 
\end{enumerate}
If $\vol(S) \leq r \cdot  \vol(C)$ for $r \leq 1/3$, then 
$$\Phi(G\{ C \setminus S\}) \geq \psi \cdot \frac{1-3r}{1-r}\, .$$
\end{lemma}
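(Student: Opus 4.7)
I would approach this by showing that \emph{any} small cut $T \subsetneq C \setminus S$ with $\vol(T) \le \vol(C\setminus S)/2$ has cross-edge weight $w(T, D) \ge \psi \vol(T) \cdot (1-3r)/(1-r)$, where $D = (C\setminus S)\setminus T$; since $G\{C\setminus S\}$ preserves degrees via self-loops, $\vol_{G\{C\setminus S\}}(T)$ and the cut weight coincide with those in $G$, and this bound is exactly the claimed inner conductance. The main idea is to use the maximality of $S$ to constrain what $T$ can look like, with a case split on whether $S \cup T$ still fits inside the volume budget $\vol(C)/2$.

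In the easy case $\vol(S \cup T) \le \vol(C)/2$, the set $S \cup T$ satisfies property (1) and strictly contains $S$, so by maximality it violates property (2). Writing $w(S \cup T, D) = w(S, D) + w(T, D)$ and using $w(S, D) \le w(S,C\setminus S) \le \psi\vol(S)$ immediately gives the even stronger bound $w(T, D) > \psi \vol(T)$.

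The real work is the complementary case $\vol(S \cup T) > \vol(C)/2$. Here I cannot enlarge $S$ and apply maximality, so instead I turn the spotlight on $D$ viewed as a candidate set inside $C$. Since $\vol(D) = \vol(C) - \vol(S\cup T) < \vol(C)/2$, the set $D$ already satisfies property (1). The plan is a contradiction argument: assuming $w(T, D) < \psi \vol(T) \cdot (1-3r)/(1-r)$, the bound $w(D, C\setminus D) = w(D, S) + w(D, T) \le \psi\vol(S) + w(T, D)$ together with the substitutions $\vol(T) \le (\vol(C) - \vol(S))/2$ and $\vol(S) \le r\vol(C)$ reduces the inequality $w(D, C\setminus D) \le \psi\vol(D)$ to the scalar condition $\alpha(1-r) \le 1-3r$ (with $\alpha := w(T,D)/(\psi\vol(T))$), which is precisely the contradiction hypothesis. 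Thus $D$ satisfies both properties, maximality of $S$ forces $\vol(D) \le \vol(S)$, and combining this with $\vol(T) = \vol(C\setminus S) - \vol(D) \ge \vol(C) - 2\vol(S)$ and $\vol(T) \le (\vol(C)-\vol(S))/2$ forces $\vol(C) \le 3\vol(S)$, contradicting $\vol(S) \le r\vol(C) \le \vol(C)/3$.

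The main obstacle is calibrating the algebra in the hard case so that the coefficient $(1-3r)/(1-r)$ comes out exactly right: one has to plug in the worst-case $\vol(T) = (\vol(C)-\vol(S))/2$ at the correct step and keep the direction of all inequalities consistent, and then see that the constraint $r \le 1/3$ is \emph{exactly} what is needed to convert the two bounds on $\vol(T)$ into a contradiction. The boundary case $r = 1/3$ is not an issue, since there the target bound $(1-3r)/(1-r) = 0$ is trivial.
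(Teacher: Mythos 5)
The paper does not prove this lemma — it simply cites Lemma~7.2 of Spielman--Teng~\cite{ST11} — so there is no in-paper argument to compare against. Your reconstruction is correct and is in the spirit of the original proof: the dichotomy on whether $S \cup T$ still respects the volume budget, the easy case where maximality applied to $S \cup T$ immediately gives $w(T,D) > \psi\vol(T)$, and the harder case where one instead exhibits $D = (C\setminus S)\setminus T$ as a competitor to $S$, are all the right moves. I checked the algebra in the second case: with $v := \vol(S)/\vol(C)$ and $t := \vol(T)/\vol(C)$, the constraints $t \le (1-v)/2$ and $v \le r$ together with $\alpha = (1-3r)/(1-r)$ do give $v + \alpha t \le 1 - v - t$, so $D$ satisfies both feasibility conditions; maximality then yields $\vol(D)\le\vol(S)$, which forces $\vol(C)\le 3\vol(S)\le 3r\vol(C)$, a contradiction for $r<1/3$ (and for $r=1/3$ the contradiction comes directly from assuming $w(T,D) < 0$). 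Two small points worth making explicit when writing this up: (i) one needs $D\neq\emptyset$ for $D$ to be a legal candidate, which follows since $\vol(T)\le\vol(C\setminus S)/2 < \vol(C\setminus S)$; and (ii) the reduction to the scalar inequality $\alpha(1-r)\le 1-3r$ is safe because the target inequality is monotone in both $\vol(S)$ and $\vol(T)$ in the direction of the bounds, which is what licenses substituting the extreme values.
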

\noindent
We now show that the largest set $S \subseteq C_i$ violating the expansion of a cluster $C_i$ has to be small. This will be useful for arguing that our clustering $C_1', \ldots C_k'$ obtained by reassigning the violating sets will be close to the target partitioning $C_1, \dots, C_k$. 

\begin{lemma}[Sparse cuts are small]\label{lemma:sparse_cut_k} Consider the setting of \Cref{fig:setting_reweight}, and let $x \in [0,1]$ be a feasible solution to SDP \ref{alg:sdp_oneshot_k}. Let $i \in[k]$, and let $\emptyset \neq S \subsetneq C_i$ such that $\Phi_{G_x\{C_i\}}(S) < \phi/2$ and $|S| \le |C_i|/2$. Then, $|S| \le \frac{2}{d\phi}|F(C_i)|$, where $F(C_i)$ denotes the set of edges in $F$ with both endpoints in $C_i$. 
\end{lemma}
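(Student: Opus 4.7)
\textbf{Proof plan for \Cref{lemma:sparse_cut_k}.} The plan is to compare the cut $(S, C_i \setminus S)$ in the original graph $G$, where it is forced to be large by the $\phi$-expansion of $G\{C_i\}$, with the same cut in the reweighted graph $G_x$, where the hypothesis $\Phi_{G_x\{C_i\}}(S) < \phi/2$ forces it to be small. The discrepancy must be due to flagged edges, which the feasible $x$ is allowed to down-weight below $1$.

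First I will fix the correct notion of volume. Because $\ell(v) = d - \sum_{(u,v)\in E} x_e$ is chosen precisely so that $G_x$ is $d$-regular, every vertex in $G_x\{C_i\}$ has degree exactly $d$, and hence $\vol_{G_x\{C_i\}}(S) = d|S|$. The same holds in $G\{C_i\}$. Consequently the assumption $\Phi_{G_x\{C_i\}}(S) < \phi/2$ translates into
\begin{equation*}
w_{G_x}\!\left(S, C_i \setminus S\right) \;=\; \sum_{e \in E(S, C_i \setminus S)} x_e \;<\; \frac{\phi}{2}\, d|S|,
\end{equation*}
and the $\phi$-expansion of $G\{C_i\}$, combined with $|S| \le |C_i|/2$, gives $\bigl|E(S, C_i \setminus S)\bigr| \ge \phi\, d|S|$.

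Next I will split the reweighted cut weight along $F$ and $E^+ = E \setminus F$. Since $x \ge \mathbbm{1}_{E^+}$ (a constraint imposed by SDP \ref{alg:sdp_oneshot_k}), every non-flagged edge contributes exactly $1$, and every flagged edge contributes a nonnegative quantity; thus
\begin{equation*}
w_{G_x}\!\left(S, C_i \setminus S\right) \;\ge\; \bigl|E(S, C_i \setminus S) \setminus F\bigr| \;=\; \bigl|E(S, C_i \setminus S)\bigr| \;-\; \bigl|E(S, C_i \setminus S) \cap F\bigr|.
\end{equation*}
Combining the previous two displays,
\begin{equation*}
\bigl|E(S, C_i \setminus S) \cap F\bigr| \;>\; \phi\, d|S| \;-\; \frac{\phi}{2}\, d|S| \;=\; \frac{\phi}{2}\, d|S|.
\end{equation*}

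Finally, every edge in $E(S, C_i \setminus S) \cap F$ has both endpoints inside $C_i$ and is flagged, so it lies in $F(C_i)$. Hence $|F(C_i)| \ge |E(S, C_i \setminus S) \cap F| > (\phi/2)\, d|S|$, which rearranges to the claimed bound $|S| \le \tfrac{2}{d\phi}|F(C_i)|$. There is no real obstacle here; the argument is a direct counting comparison, and the only subtle point to handle carefully is that volumes are preserved under the self-loop-compensated reweighting so that conductance ratios can be compared between $G$ and $G_x$ using the common denominator $d|S|$.
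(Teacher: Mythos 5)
Your proof is correct and follows essentially the same route as the paper: translate $\Phi_{G_x\{C_i\}}(S) < \phi/2$ into a bound on the reweighted cut weight using $d$-regularity, use $x \ge \mathbbm{1}_{E^+}$ to lower-bound that weight by the number of non-flagged cut edges, invoke the $\phi$-expansion of $G\{C_i\}$, and observe that the flagged cut edges all lie in $F(C_i)$. The only cosmetic difference is that you spell out the intermediate step $\bigl|E(S, C_i\setminus S) \cap F\bigr| > (\phi/2)d|S|$ before bounding by $|F(C_i)|$, whereas the paper bounds $|F(S,C_i\setminus S)| \le |F(C_i)|$ inline.
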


\begin{proof}
Any feasible solution $x$ to the SDP \ref{alg:sdp_oneshot_k} must satisfy $x \geq \1_{E^+}$, i.e. it can only down-weight edges in $F$. In particular, we have $ w(S,C_i\setminus S) \ge |E(S,C_i\setminus S)| -  |F(S,C_i\setminus S)| $. Therefore, 
    \begin{align*}
        \frac{\phi}{2} d|S| & > w(S,C_i\setminus S) \\
        & \ge |E(S,C_i\setminus S)| -  |F(S,C_i\setminus S)|  \\
        & \ge |E(S,C_i\setminus S)| -  |F(C_i)| \qquad \text{since $S \subseteq C_i$} \\
        & \ge \phi d|S| -  |F(C_i)|, 
    \end{align*} 
    where the last inequality holds because $C_i$ is a $\phi$-expander.
 Rearranging we get the lemma.
\end{proof}

\noindent
Finally, we need to show that the removed sets can be reassigned to other clusters without breaking their expansion. 
The following (possibly exponential time) algorithm shows how one can achieve that, and obtain the required $(k,O(\gamma/\phi),\poly(\phi)/k)$-clustering  $C_1', \dots, C_k'$ of $G_x$. \Cref{alg:GT_k} below is based on Algorithm 2 in \cite{GT14}. 

\begin{algorithm}[H]
\caption{Construction of $C_1', \dots, C_k'$}\label{alg:GT_k}
\begin{algorithmic}[1]
    \State \textbf{Input:} graph $G_x = (V, E, w)$, partition $C_1, \dots, C_k$
    \State \textbf{Output:} partition $C_1', \dots, C_k'$
    
    \For{$i = 1$ to $k$}
        \State $X_i \gets \arg\max \left\{ |S| : \emptyset \neq S \subsetneq C_i,\ \Phi_{G_x\{C_i\}}(S) < \phi/2,\ |S| \leq |C_i|/2 \right\}$ \label{line:X_i}
        \State $C_i' \gets C_i$ \label{line:C'_i}
    \EndFor

    \While{there exist $i \neq j \in [k]$ and $S \subseteq C_i' \cap \left( \bigcup_{\ell=1}^k X_\ell \right)$ such that $w(S, C_i' \setminus S) < w(S, C_j')$}
        \State Update $C_i' \gets C_i' \setminus S$
        \State Update $C_j' \gets C_j' \cup S$
    \EndWhile
\end{algorithmic}
\end{algorithm}

\noindent
In other words, \Cref{alg:GT_k} does the following: We start by selecting the largest non-expanding subset $X_i$ from each cluster $C_i$. By \Cref{lemma:sparse_cut_k}, the remaining ``core” $C_i \setminus X_i$ is an induced expander. It remains to decide how to reassign the vertices from the sets $X_i$ so that they don’t break the expansion of the cores. We start by trying $C'_1 = C_1, \ldots, C'_k = C_k$, and do the following ``greedy" updates:  while there exists a subset $S \subseteq \cup_{l = 1}^kX_l$ inside of a cluster $C'_i$ that expands more to a different cluster $C'_j$ than to its own, we simply move it to  $C'_j$. 

  Note that the algorithm always terminates in a finite number of steps. This is because for each iteration of the while loop, the total weight $w(\cup_{i\neq j\in [k]}E(C'_i,C'_j))$ across the partitions decreases by at least  $\Delta \coloneqq  \min_{e \in E : w(e) > 0}w(e)$, which is a constant positive amount. Furthermore, we have two important properties.
\begin{claim}[Fact 2.4 in \cite{GT14}]
\label{claim:simpleprop_k}
    The output of Algorithm \ref{alg:GT_k} satisfies the following: 
    \begin{enumerate}
        \item $C_i\setminus X_i \subseteq C_i' $ for $i \in [k]$
        \item For every $i \in [k]$ and every $S \subseteq C_i'\cap(\cup_{\ell=1}^k X_\ell)$, we have 
        $$w(S,C_i'\setminus S) \geq \frac{1}{k}w(S,V\setminus S).$$
    \end{enumerate}
\end{claim}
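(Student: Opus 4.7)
The plan is to prove both parts directly from the structure of \Cref{alg:GT_k}, without needing any of the heavier machinery from earlier in the section. The key observation is that the only modifications to the $C_i'$s performed by the while loop move subsets $S$ that are contained in $\bigcup_{\ell=1}^k X_\ell$, while vertices outside of this union are untouched. So part 1 should follow by noting that each $X_\ell$ is a subset of $C_\ell$ (by line~\ref{line:X_i}), hence $(\bigcup_\ell X_\ell) \cap C_i = X_i$ since the $C_i$'s are disjoint. Consequently $C_i \setminus X_i$ has empty intersection with $\bigcup_\ell X_\ell$, so no vertex of $C_i \setminus X_i$ is ever moved. Since $C_i' = C_i$ at initialization (line~\ref{line:C'_i}), we get $C_i \setminus X_i \subseteq C_i'$ upon termination.

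For part 2, I would use the termination condition of the while loop. When the loop terminates, for every pair $i \neq j$ and every $S \subseteq C_i' \cap \left(\bigcup_{\ell} X_\ell\right)$, we must have
\[
w(S, C_i' \setminus S) \geq w(S, C_j') \, ,
\]
or else the loop would still be running. Then I decompose $V \setminus S$ into the pieces $C_i' \setminus S$ and the other $C_j'$ for $j \neq i$ (which partition $V \setminus S$ since $C_1', \ldots, C_k'$ partition $V$ throughout the execution), and sum the edge weights across the cut:
\[
w(S, V \setminus S) = w(S, C_i' \setminus S) + \sum_{j \neq i} w(S, C_j') \leq w(S, C_i' \setminus S) + (k-1) \cdot w(S, C_i' \setminus S) = k \cdot w(S, C_i' \setminus S) \, ,
\]
which rearranges to the desired bound.

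The only subtlety I expect is in arguing that the while loop actually terminates, so that the termination condition can be invoked in the first place. For this I would note that each iteration strictly decreases the total cross-cluster weight $\sum_{i \neq j} w(C_i', C_j')$ by at least $\Delta > 0$, where $\Delta$ is the smallest nonzero edge weight in $G_x$. Since this sum is bounded above by the total edge weight of $G_x$, the loop can run only finitely many times. With termination in hand, both parts of the claim fall out immediately; there is no real obstacle and the proof is essentially bookkeeping, closely following Fact~2.4 of \cite{GT14}.
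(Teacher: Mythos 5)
Your proof is correct and supplies the details that the paper simply cites from \cite{GT14}. Part 1 works exactly as you describe: $X_\ell \subseteq C_\ell$ and the $C_\ell$ are pairwise disjoint, so $C_i \setminus X_i$ is disjoint from $\bigcup_\ell X_\ell$; the while loop only ever removes from $C_i'$ subsets of $\bigcup_\ell X_\ell$, so $C_i \setminus X_i$ stays in $C_i'$ from initialization onward. Part 2 follows by negating the loop guard and summing over the $k-1$ other clusters; the one implicit step worth making explicit is that $C_1',\dots,C_k'$ remain a partition of $V$ throughout, so that indeed
\[
w(S, V\setminus S) = w(S, C_i'\setminus S) + \sum_{j\neq i} w(S, C_j') \, ,
\]
but this is immediate since each iteration merely transfers a subset from one cell to another.

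One small quibble with your termination argument, which you inherit from the paper's own prose preceding the claim: the per-iteration decrease of $\sum_{i\neq j} w(C_i',C_j')$ equals $w(S,C_j') - w(S,C_i'\setminus S)$, a \emph{difference} of two edge-weight sums, and for real-valued SDP weights this need not be at least $\Delta = \min_{e:\,w(e)>0} w(e)$ (take two edge weights that are close but unequal). Termination still holds, because the cross-cluster weight strictly decreases and can take only finitely many values --- it is always a subset-sum of $\{w(e)\}_{e\in E}$ --- but the quoted $\Delta$ lower bound on the per-step decrease is not quite correct. Since nothing in the claim or its downstream use requires a bound on the number of iterations, only that the loop terminates, this is cosmetic, but worth being precise about.
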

\noindent
Moreover, the induced subgraphs $G_x\{C_i \setminus X_i\}$  are expanders.
\begin{lemma}\label{lemma:expanders} Consider the setting of \Cref{fig:setting_reweight}, let $x$ be a feasible solution for SDP \ref{alg:sdp_oneshot_k}, and let $G_x$ be obtained from $x$ as per \Cref{def:Lx_n_Gx}. Let $X_i \subseteq C_i$ be the sets obtained in line \ref{line:X_i} of Algorithm \ref{alg:GT_k} on input $G_x$. Then $\sum_{i = 1}^k|X_i| \leq 2 \gamma n/\phi $ and 
$$\forall \, i \in [k], \, \, \, \Phi(G_x\{C_i \setminus X_i\}) \geq \frac{\phi}{6} \, .$$
\end{lemma}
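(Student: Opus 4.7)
The plan is to split the claim into two parts: bound $\sum_i |X_i|$ first, and then argue that each $G_x\{C_i \setminus X_i\}$ is an $\Omega(\phi)$-expander via \Cref{lemma:ST_k}. Both parts hinge on relating $X_i$ to the flagged edges inside $C_i$ and then exploiting the feasibility constraint $x \ge \1_{E^+}$, which ensures only flagged edges can be down-weighted.

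First I would bound $\sum_i |X_i|$. For each $i \in [k]$ for which $X_i$ is non-empty, the definition in line~\ref{line:X_i} of \Cref{alg:GT_k} exhibits $S = X_i \subsetneq C_i$ with $|X_i|\le |C_i|/2$ and $\Phi_{G_x\{C_i\}}(X_i) < \phi/2$, which is precisely the hypothesis of \Cref{lemma:sparse_cut_k}. That lemma then gives $|X_i| \le (2/(d\phi)) |F(C_i)|$. Summing over $i$ and using that the sets $E(C_i)$ are disjoint yields $\sum_i |X_i| \le (2/(d\phi)) |F\cap E_{\inedge}|$. Since $E$ is partitioned into $E_{\inedge}$ and $E_{\cross}$, we have $F \cap E_{\inedge} = F \setminus E_{\cross} \subseteq F \triangle E_{\cross}$, and \Cref{claim:FcapE_cross_k} gives $|F \triangle E_{\cross}| \le d \gamma n$. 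Combining, $\sum_i |X_i| \le 2\gamma n / \phi$.

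Next I would argue the expansion bound. Fix $i \in [k]$. If $X_i = \emptyset$, then by the definition in line~\ref{line:X_i} no non-empty $S \subsetneq C_i$ with $|S| \le |C_i|/2$ has $\Phi_{G_x\{C_i\}}(S) < \phi/2$; by symmetry of the cut $(S, C_i \setminus S)$, this immediately gives $\Phi(G_x\{C_i\}) \ge \phi/2 \ge \phi/6$. Otherwise, I would apply \Cref{lemma:ST_k} to the $d$-regular weighted graph $G_x\{C_i\}$ with $\psi = \phi/2$, $C = C_i$, $S = X_i$. Since in $G_x\{C_i\}$ volume and cardinality agree up to the factor $d$, the two numerical conditions of \Cref{lemma:ST_k} and its $\arg\max$ over volumes match exactly the three conditions defining $X_i$. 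To invoke the conclusion, I must verify $r := |X_i|/|C_i| \le 1/3$: the balance assumption $|C_i| \ge n/(\eta k)$ in \Cref{fig:setting_reweight} combined with step one gives $r \le (2\gamma n/\phi)/(n/(\eta k)) = 2\eta k \gamma / \phi$, and the standing bound $\gamma \le \phi^3/(100\eta k)$ then yields $r \le \phi^2/50 \le 1/50$. Plugging in, \Cref{lemma:ST_k} gives $\Phi(G_x\{C_i \setminus X_i\}) \ge (\phi/2)\cdot (1-3r)/(1-r) \ge (\phi/2)\cdot (47/49) \ge \phi/6$.

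I do not expect a substantive obstacle here: \Cref{lemma:sparse_cut_k} and \Cref{lemma:ST_k} do the heavy lifting, and the remaining work is the bookkeeping step of checking $r \le 1/3$, which is cleanly afforded by the assumption $\gamma \le \phi^3/(100\eta k)$. The only points requiring minor care are handling the degenerate case $X_i = \emptyset$ and making the identification $F \cap E_{\inedge} = F \setminus E_{\cross}$ to connect step one with \Cref{claim:FcapE_cross_k}.
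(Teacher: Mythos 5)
Your proposal is correct and follows essentially the same route as the paper: invoke \Cref{lemma:sparse_cut_k} to bound $|X_i|$ by flagged internal edges, sum and apply \Cref{claim:FcapE_cross_k} to get $\sum_i|X_i| \le 2\gamma n/\phi$, then verify the $r \le 1/3$ hypothesis of \Cref{lemma:ST_k} using the balance condition $|C_i| \ge n/(\eta k)$ and the standing bound on $\gamma$. The only cosmetic differences are that you explicitly dispose of the $X_i = \emptyset$ case (which the paper handles implicitly) and you carry the tighter bound $r \le \phi^2/50$ rather than relaxing to $r \le 1/4$ as the paper does; both yield $\Phi(G_x\{C_i\setminus X_i\}) \ge \phi/6$.
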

\begin{proof} 
By its the definition, $X_i$ satisfies the conditions of \Cref{lemma:sparse_cut_k} for all $i \in [k]$. Therefore, we conclude that 
\[\sum_{l = 1}^k|X_l| \leq \frac{2}{d\phi}\sum_{l = 1}^k|F(C_l)|.\]
Note that $F(C_l)\cap E_{cross} = \emptyset$ and all $F(C_l)$ are disjoint since all $C_l$ are disjoint. By \Cref{claim:FcapE_cross_k} we have that 
\[\sum_{l = 1}^k|F(C_l)| \leq |F\setminus E_{cross}| \leq d\gamma n,\] which gives $\sum_{l = 1}^k|X_l| \leq 2 \gamma n/\phi$ as desired. Since $G_x$ is a $d$-regular graph, we have that, for any $i$
\[\frac{\vol(X_i)}{\vol(C_i)} = \frac{|X_i|}{|C_i|} \leq \frac{2\eta\gamma k}{\phi},\]
where the last inequality follows from $|X_i| \leq \sum_{l = 1}^k|X_l| \leq  2 \gamma n/\phi$ and $|C_i| \geq \frac{n}{\eta k}$ as per \Cref{fig:setting_reweight}. By \Cref{lemma:ST_k}, applied to $G = G_x$, $C = C_i$, $S = X_i$, $\psi = \phi/2$ and $r = {2\eta\gamma k}/{\phi} \leq {1}/{4} $ we get that 
\[\Phi(G_x\{C_i \setminus X_i\}) \geq \frac{\phi}{2}\cdot \frac{1 - (3/4)}{1 - (1/4)} = \frac{\phi}{6} \, .\]
\end{proof}
\noindent
We are now ready to show that any feasible solution $x$ of the SDP \ref{alg:sdp_oneshot_k} gives rise to a clustering $C_1', \dots C_k'$. Note that the quality of the clustering depends on how much weight $x$ puts on the cross-edges. 

\begin{lemma}[Existence of a good $k$-clustering on $G_x$]\label{lemma:good_clustering}
Consider the setting of \Cref{fig:setting_reweight}, let $x$ be a feasible solution for SDP \ref{alg:sdp_oneshot_k}, let $G_x$ be obtained from $x$ as per \Cref{def:Lx_n_Gx}, and assume that  $\lambda_{k+1}(\mathcal{L}_x) \geq \frac{\phi^2}{10} $. Then, there exists a partitioning $C_1',\dots,C'_k$ of $V$ such that:
    \begin{enumerate}
        \item $\sum_{i =1}^k |C'_i \triangle C_i| \leq \frac{4\gamma  n}{\phi}$; \label{item:symdiff}
        \item $\Phi(G\{C_i'\}) \geq \frac{\phi^3}{480k}$ for $i \in [k]$; \label{item:expansion}
        \item $w\left(\bigcup_{i,  j\in [k]: i \neq j}E(C'_i,C'_j)\right) \leq w(E_\cross)+  2d \gamma n/\phi $. \label{item:cross_weight}
    \end{enumerate}
\end{lemma}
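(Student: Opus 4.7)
My plan is to take the partition $C_1',\dots,C_k'$ produced by running \Cref{alg:GT_k} on $G_x$ with input partition $\{C_1,\dots,C_k\}$, and verify the three claimed properties. \emph{Items 1 and 3 are essentially bookkeeping.} By \Cref{claim:simpleprop_k}(1), $C_i\setminus X_i\subseteq C_i'$, so every vertex in $C_i\triangle C_i'$ either sits in $X_i$ (moved out of $C_i$) or is an immigrant taken from $\cup_{\ell\neq i}X_\ell$. Summing the bound $|C_i\setminus C_i'|\le |X_i|$ over $i$ and using $\sum_i|C_i\setminus C_i'|=\sum_i|C_i'\setminus C_i|$ (both are partitions of $V$) together with \Cref{lemma:expanders} yields $\sum_i|C_i\triangle C_i'|\le 2\sum_i|X_i|\le 4\gamma n/\phi$. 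For Item 3, any edge in $\bigcup_{i\neq j}E(C_i',C_j')\setminus E_\cross$ must have at least one endpoint in $\bigcup_i(C_i\triangle C_i')$, so its weighted contribution is at most $d\cdot|\bigcup_i(C_i\triangle C_i')|\le 2d\gamma n/\phi$, giving the claim.

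\emph{Item 2 requires the real work.} First I reduce to showing the expansion in $G_x\{C_i'\}$ rather than $G\{C_i'\}$: since every edge weight in $G_x$ is at most $1$ while the self-loops keep both graphs $d$-regular, for any $S\subsetneq C_i'$ one has $|E_G(S,C_i'\setminus S)|\ge w_{G_x}(S,C_i'\setminus S)$ with equal volumes on both sides, so $\Phi(G\{C_i'\})\ge\Phi(G_x\{C_i'\})$. Then, fixing $S\subsetneq C_i'$ with $\vol(S)\le \vol(C_i')/2$, I will decompose $S=S_c\cup S_y$ where $S_c=S\cap(C_i\setminus X_i)$ is the core part and $S_y=S\cap Y_i$ with $Y_i\coloneqq C_i'\setminus(C_i\setminus X_i)\subseteq\bigcup_\ell X_\ell$ is the immigrant part, and carry out a case analysis driven by which part dominates $|S|$.

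In the core-dominated case ($|S_c|$ is a constant fraction of $|S|$), I invoke the $\phi/6$-expansion of the core granted by \Cref{lemma:expanders}. If $|S_c|\le|C_i\setminus X_i|/2$ this directly bounds $w_{G_x}(S_c,(C_i\setminus X_i)\setminus S_c)$ from below by $\tfrac{\phi}{6}d|S_c|$, and since $(C_i\setminus X_i)\setminus S_c\subseteq C_i'\setminus S$ we inherit a bound of order $\phi d|S|$ on the desired cut. Otherwise I apply the core expansion to the smaller side $(C_i\setminus X_i)\setminus S_c$; using the parameter assumption $\gamma\le\phi^3/(100\eta k)$ from \Cref{fig:setting_reweight}, I get $|Y_i|\le 2\gamma n/\phi\ll|C_i\setminus X_i|\asymp n/(\eta k)$, so $|(C_i\setminus X_i)\setminus S_c|\ge \tfrac{1}{2}(|C_i\setminus X_i|-|Y_i|)=\Omega(|S|/\eta^2)$, still yielding a cut of order $\phi d|S|$. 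In the immigrant-dominated case ($|S_y|>(1-c)|S|$ for a small $c$ chosen below), I apply \Cref{claim:simpleprop_k}(2) to $S_y\subseteq C_i'\cap\bigcup_\ell X_\ell$, giving $w_{G_x}(S_y,C_i'\setminus S_y)\ge w_{G_x}(S_y,V\setminus S_y)/k$, and then lower-bound $w_{G_x}(S_y,V\setminus S_y)$ via higher-order Cheeger (\Cref{thm:cheeger_higher_order}): because $\lambda_{k+1}(\mathcal{L}_x)\ge\phi^2/10$ by assumption, $\rho_{k+1}(G_x)\ge\phi^2/20$, so among the $k+1$ disjoint sets $\{S_y,\,C_i'\setminus S_y,\,C_j':j\neq i\}$ one must have conductance $\ge\phi^2/20$ in $G_x$. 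Finally, $w_{G_x}(S_y,C_i'\setminus S)\ge w_{G_x}(S_y,C_i'\setminus S_y)-d|S_c|$, and choosing the threshold $c$ between the two cases on the order of $\phi^2/k$ absorbs this $d|S_c|$ loss while retaining a $\phi^3/(480k)\cdot d|S|$ bound.

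\emph{The main obstacle} is the immigrant-dominated case: higher-order Cheeger only guarantees that \emph{some} of the $k+1$ sets has conductance $\ge\phi^2/20$, not necessarily $S_y$. I will handle this by a sub-case split on which set realises the lower bound: if it is $S_y$, the argument above applies directly; otherwise the witness is one of the $C_j'$ (or $C_i'\setminus S_y$), and the symmetric argument applied to that cluster (swapping the roles of $C_i'$ and $C_j'$) again yields the bound, since $S_y$ is then too small to account for the $k+1$-way expansion deficit unless its own conductance is already $\ge\phi^2/20$. The constants come from combining the $\phi^2/20$ of Cheeger, the $1/k$ of \Cref{claim:simpleprop_k}(2), and the additional $\phi$-factor lost in the $d|S_c|$ correction and in the comparison of the two cases.
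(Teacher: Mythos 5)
Your Items 1 and 3, and the reduction from $G$ to $G_x$, are fine and match the paper. The overall architecture of Item 2 (split $S$ into a core part $S_c$ and an immigrant part $S_y$, use \Cref{lemma:expanders} for the core, \Cref{claim:simpleprop_k} plus higher-order Cheeger for the immigrants) is also the paper's. But the way you propose to finish the immigrant-dominated case has a genuine gap.

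Higher-order Cheeger only says that \emph{some} one of your $k{+}1$ disjoint sets $\{S_y,\, C_i'\setminus S_y,\, C_j' : j\neq i\}$ has $G_x$-conductance at least $\rho_{k+1}(G_x)\ge\phi^2/20$. You need it to be $S_y$. Your proposed sub-case split (``if the witness is $C_j'$, apply the symmetric argument; $S_y$ is then too small to account for the deficit'') is not a valid argument: learning that $\Phi_{G_x}(C_j')\ge\phi^2/20$ for some $j\neq i$ says nothing about $w_{G_x}(S_y, V\setminus S_y)$, and ``too small to account for the deficit'' just restates the desired conclusion. The standard way to close this — and the way the paper does it — is to choose the competitor sets so that each of them provably has \emph{small} $G_x$-conductance, which then forces $S_y$ (resp.\ $S_{out}$) to be the witness by Definition~\ref{def:rho}. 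The paper uses $\{S_{out},\, C_1\setminus X_1,\dots,C_k\setminus X_k\}$ and proves $\Phi_{G_x}(C_\ell\setminus X_\ell)\le\phi^2/25 < \rho_{k+1}(G_x)$ for every $\ell$ by a direct edge-counting argument (the down-weighting only shrinks the cut, $|X_\ell|\le 2\gamma n/\phi$ is small, and $\Phi_G(C_\ell)\le\epsilon$). You would need to supply an analogous bound for $C_j'$ and $C_i'\setminus S_y$ — which is doable but not free, since those sets include reassigned immigrants — or simply switch to the paper's choice of $\{C_\ell\setminus X_\ell\}$, which sidesteps the issue entirely.

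A secondary (fixable) difference: where you bound $w_{G_x}(S_y,C_i'\setminus S)\ge w_{G_x}(S_y,C_i'\setminus S_y)-d|S_c|$ and absorb the loss by choosing a threshold of order $\phi^2/k$, the paper instead folds the $d|S_c|$ term in \emph{positively}, using $d|S_c|\ge w(S_c,S_y)$ together with the core expansion so that $w(S_y,C_i'\setminus S)+w(S_c,S_y)=w(S_y,C_i'\setminus S_y)$, and keeps the clean $1/2$ threshold. Your route can be made to work, but the paper's manipulation is tighter and avoids a threshold that would otherwise have to be retuned against the core case. This is a matter of constants, not of correctness — the conductance lower bound you lose in the core case by shrinking the threshold to $\phi^2/k$ is still of the stated $\phi^3/k$ order — but the first issue (identifying the Cheeger witness) must be fixed for the proof to stand.
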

\begin{proof}
Let $C_1', \dots, C_k'$ be the output of \Cref{alg:GT_k} given the input $G_x$ and $C_1, \dots C_k$. We now show that $C_1', \dots, C_k'$ satisfies the three conditions in the lemma statement. \\
\textbf{Condition 1:} 
We have 
   $C_i \triangle C_i' \subseteq \cup_{\ell=1}^k X_\ell$, by construction of $C_1', \dots, C_k'$, and $\left|\bigcup_{\ell=1}^k X_\ell\right| \leq \frac{2 \gamma n}{\phi}, $ by \Cref{lemma:expanders}.
    So 
    \[\sum_{i = 1}^k|C_i \triangle C_i'| = \sum_{i = 1}^k|C_i \setminus C_i'| + \sum_{i = 1}^k|C_i' \setminus C_i| \leq 2\left|\bigcup_{\ell=1}^k X_\ell\right| \leq \frac{4 \gamma n}{\phi},\]
   where the first inequality follows from the fact that the sets in $\{ C_i \setminus C_i'\}_{i \in [k]}$ are pairwise disjoint, and similarly the sets in $\{ C_i' \setminus C_i\}_{i \in [k]}$ are pairwise disjoint. 
   \\~\\
   \textbf{Condition 2:} 
   Let $R_i = C_i' \cap (\cup_{\ell=1}^k X_\ell)$ be the additional vertices that got reassigned to $C_i'$ outside of the ``core" $C_i \setminus X_i$.

Given $S \subseteq C_i'$ with $|S| \leq \frac{|C_i'|}{2}$, let $S_{core} = S \setminus R_i$, and let $S_{out} = S \cap R_i$. Note that since $C_i\setminus X_i =  C'_i\setminus (\cup_{\ell=1}^k X_\ell) = C'_i\setminus R_i$, we have that  $S_{core} = S \setminus R_i= S \cap (C_i \setminus X_i)$ (see \Cref{fig:core} for an illustration).
\vspace{0.05in}
\begin{figure}[H]
	\centering
		\begin{tikzpicture}

  \definecolor{lightblue}{RGB}{200, 230, 255}
  \definecolor{lightred}{RGB}{255, 182, 193}

  \draw[thick, fill=lightblue, draw=blue] (0,0) ellipse (4cm and 2cm);
  \node[text=blue] at (3.5,-1.7) {$C_i'$};

  \draw[thick, fill=lightred, draw=red] (0,0.4) ellipse (2cm and 0.8cm);
 \node[text=red] at (0.7, 1.4) {$S$};

  \node[text=red] at (-1.0,0.4) {$S_{\text{core}}$};
  \node[text=red] at (1.0,0.4) {$S_{\text{out}}$};

  \draw[thick, gray] (0,-2.0) -- (0,2.0);

  \node[text=blue] at (-2.5,-0.5) {$C_i \setminus X_i$};
  \node[text=blue] at (2.3,-0.5) {$R_i$};

\end{tikzpicture}
	  \caption{The cluster $C_i'$ can be partitioned into the ``core" $C_i \setminus X_i$ and the ``reassigned" vertices $R_i$. For $S \subseteq C_i'$, we let $S_{core} = S \setminus R_i $ and $S_{out} = S \cap R_i$. }
\label{fig:core}
\end{figure}
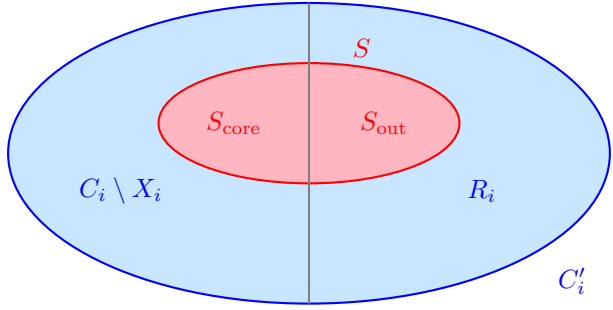

\noindent
We need to show that 
$\frac{w(S,C_i' \setminus S)}{d|S|} \geq \frac{\phi^4}{2000k}.$
   We consider two cases.
   \\~\\
   \textbf{Case 1:} $|S_{core}| \geq |S_{out}|$. We have that $|S| \leq 2|S_{core}|$. From the above derivation, $(C_i\setminus X_i)\setminus S_{core} = C'_i \setminus (R_i \cup S_{core}) \subseteq C'_i \setminus S$, so
   \begin{align*}
       \frac{w(S, C_i'\setminus S)}{d|S|} \geq  \frac{w(S_{core}, C_i' \setminus S)}{2d|S_{core}|} \geq \frac{w(S_{core}, (C_i \setminus X_i) \setminus S_{core})}{2d|S_{core}|}  \geq \frac{\phi}{12},
   \end{align*}
   since $G_x\{C_i\setminus X_i\}$ is a $\frac{\phi}{6}$-expander, by \Cref{lemma:expanders}. Since $k \geq 1$ and $\phi \leq 1$, $\frac{\phi}{12} \geq \frac{\phi^3}{480k}$, which concludes the proof.
   \\~\\
 \textbf{Case 2:} $|S_{core}| < |S_{out}|$. We have that $|S| \leq 2|S_{out}|$. 
 Then 
\begin{equation}\label{eq:SvsSout}
    \begin{aligned}
\frac{w(S, C_i'\setminus S)}{d|S|} & \geq \frac{w(S_{out}, C_i' \setminus S ) + w(S_{core},C_i' \setminus S)}{2d|S_{out}|} && \text{since $|S| \leq 2|S_{out}|$}\\
& \geq \frac{w(S_{out}, C_i' \setminus S ) + w(S_{core}, (C_i \setminus X_i)\setminus S_{core})}{2d|S_{out}|}  &&  \text{since $(C_i\setminus X_i)\setminus S_{core} \subseteq C'_i \setminus S$} \\
& \geq \frac{w(S_{out}, C_i' \setminus S ) + (\phi/12)\cdot d |S_{core}|}{2d|S_{out}|}  && \text{since $G_x\{C_i \setminus X_i\}$ is a $\phi/12$-expander by \Cref{lemma:expanders}} \\
& \geq \frac{w(S_{out}, C_i' \setminus S ) + (\phi/12)\cdot w(S_{core},  S_{out})}{2d|S_{out}|} && \text{since $d |S_{core}| \geq w(S_{core}, V) \geq w(S_{core}, S_{out}) $} \\
& \geq \frac{\phi}{24} \cdot \frac{w(S_{out}, C_i' \setminus S )  
 + w(S_{core},  S_{out})}{d|S_{out}|} && \text{since $\phi \leq 1$} \\ 
& =  \frac{\phi}{24} \frac{w(S_{out}, C_i' \setminus S_{out})}{d|S_{out}|} && \text{since $C_i' \setminus S_{out} = (C_i' \setminus S)\cup S_{core}$} \\
&  \geq \frac{\phi}{24k} \frac{w(S_{out}, V \setminus S_{out})}{d|S_{out}|}  &&   \text{by \Cref{claim:simpleprop_k}.}
\end{aligned}
\end{equation}
\noindent
To continue, we have 
\begin{align*}
    \Phi_{G_x}(C_\ell \setminus X_\ell) & \le \Phi_{G}( C_\ell \setminus X_\ell)  &&  \\
    & = \frac{|E(C_{\ell} \setminus X_{\ell}, V \setminus (C_{\ell} \setminus X_{\ell})) }{d|C_\ell \setminus X_\ell|} && \\
    & \leq \frac{|E(C_{\ell}, V \setminus C_{\ell})| + |E(X_{\ell}, V \setminus X_{\ell})| }{d\left(|C_\ell| - |X_\ell| \right)} && \\
    & \leq \frac{\epsilon dn + d|X_\ell|}{d\left(|C_\ell| - |X_\ell| \right)} && \text{by assumption that $C_1, \dots C_k$ is a $(k, \epsilon, \phi)$-clustering} \\
    & \leq \frac{\epsilon n + 2 \gamma n/\phi}{|C_\ell| - |X_\ell|} && \text{since $|X_\ell| \le 2\gamma n/\phi$ by \Cref{lemma:expanders}} \\
    & =  \frac{\epsilon + 2\gamma/\phi}{1-|X_\ell|/|C_\ell|}\cdot  \frac{n}{|C_\ell|} \\
    & \leq\frac{\epsilon + 2\gamma/\phi}{3/4}\cdot  \frac{n}{|C_\ell|} && \text{since $|X_{\ell}| \leq 2 \gamma n /\phi \leq \frac{n}{4k \eta} \leq \frac{|C_i|}{4}$} \\
  & \leq \frac{\epsilon + 2\gamma/\phi}{3/4}\cdot k \eta && \text{since $\min_i |C_i| \geq \frac{n}{k\eta}$ by \Cref{fig:setting_reweight}}\\
    & \leq \frac{\phi^2}{25}  && \text{since $\epsilon, \gamma/\phi  \le\frac{ \phi^2}{100\eta k}$ by the setting of \Cref{fig:setting_reweight}.}
\end{align*}

On the other hand, by the assumption that  $\lambda_{k+1}(\mathcal L_x) \ge \phi^2/10$, by  \Cref{thm:cheeger_higher_order} we get $\rho_{k+1}(G_x) \ge \phi^2/20$. 

Since $S_{out},C_1 \setminus X_1,\dots, C_k \setminus X_k$ are $k+1$ disjoint sets and since $\Phi_{G_x}(C_{\ell} \setminus X_{\ell}) \leq \phi^2/25 < \rho_{k+1}(G_x)$ for all $\ell \in [k]$, it must hold that $\Phi_{G_x}(S_{out}) \geq \rho_{k+1}(G_x)$, by Definition \ref{def:rho}. Therefore, we have
\begin{align*}
\frac{w(S, C_i'\setminus S)}{d|S|} &\geq \frac{\phi}{24k} \frac{w(S_{out}, V \setminus S_{out})}{d|S_{out}|} \qquad \text{by Equation \eqref{eq:SvsSout}} \\
& \geq \frac{\phi}{24k} \rho_{k+1}(G_x) \\
& \geq \frac{\phi^3}{480k}. 
\end{align*}
\noindent
\textbf{Condition 3:} Again, for $i \in [k]$, let $R_i = C_i' \cap (\cup_{\ell=1}^k X_\ell)$ be the additional vertices that got reassigned to $C_i'$ outside of the ``core" $C_i \setminus X_i$. In other words, $R_i = C'_i \setminus (C_i \setminus X_i)$. Then, we have

\begin{align*}
       w(\cup_{i\neq j\in [k]}E(C'_i,C'_j)) & \leq w\left(\cup_{i\neq j\in [k]}E(C_i\setminus X_i,C_j\setminus X_j)\right) +  w\left(\{e \in E: e \cap (\cup_{i\in [k]}R_i) \neq \emptyset\}\right) \\
       & \leq  w(E_{\cross})  + d\left|\cup_{\ell \in [k]} X_\ell\right| \qquad \qquad\qquad \qquad \qquad \qquad  \text{since $\cup_{\ell \in [k]}R_{\ell} \subseteq \cup_{\ell \in [k]}X_{\ell}$} \\
       & \leq  w(E_{\cross}) + 2d\gamma n/\phi \qquad \qquad\qquad \qquad\qquad \qquad\qquad \text{by \Cref{lemma:expanders}.}
\end{align*}

\end{proof}

\noindent
Finally, we can combine \Cref{lemma:good_clustering} above together with the fact that the optimal solution $x^*$ to the SDP in \Cref{alg:sdp_oneshot_k} puts little weight on the cross edges (see \Cref{lemma:sparse_cut_k}), so show that $x^*$ gives rise to the required refined clustering. 
\begin{proof}[Proof of Theorem \ref{thm:round_to_clustering}]
Let $x$ be the SDP solution returned by Algorithm \ref{alg:sdp_oneshot_k}, and let $G_x$ be the weighted graph obtained from it. By \Cref{lambda_above_threshold}, we have $\lambda_{k+1}(\mathcal{L}_x) \geq  {\phi^2}/{10}$, and so we can apply Lemma \ref{lemma:good_clustering} to $G_x$. 
By Lemma \ref{lemma:good_clustering} item~\eqref{item:cross_weight}, the obtained partitioning $C_1', \ldots C_k'$ satisfies 
\begin{align*}
    w(\cup_{i\neq j\in [k]}E(C'_i,C'_j)) & \leq w(E_\cross)+  2d \gamma n/\phi && \text{by Lemma \ref{lemma:good_clustering},  item~\eqref{item:cross_weight}} \\
    & = \opt +  2d \gamma n/\phi  && \text{by the assumption that $x$ is an optimal SDP solution} \\
    & \leq d \gamma n +  2d \gamma n /\phi && \text{by \Cref{lemma:OPT_SDP_k}}\\
    & \leq 3d \gamma n /\phi. &&
\end{align*}
Furthermore, by Lemma \eqref{lemma:good_clustering} item~\eqref{item:expansion}, the partitioning $C_1', \ldots C_k'$ satisfies  $\Phi(G\{C_i'\}) \geq {\phi^4}/({2000k})$ for $i \in [k]$.
So $C_1', \ldots C_k'$ is a $(k, {3\gamma n}/\phi,{\phi^3}/({480k}))$-clustering. Finally, by \Cref{lemma:good_clustering} item~ \eqref{item:symdiff}, we have $\sum_{i =1}^k |C'_i \triangle C_i| \leq {4\gamma  n}/{\phi}$. 
\end{proof}

\addcontentsline{toc}{section}{References}
\bibliographystyle{alpha}
\bibliography{references}

\appendix
\section{Obtaining approximate cluster means}
\label{subsec:apxmeans}
In this section, we prove two results stated in \Cref{subsec:apx_spec_oracles} without proof: namely, \Cref{rem:proj_matrix_apx} and \Cref{lemma:pi_computation}. 

\projmatrix*

\begin{proof}[Proof of \Cref{rem:proj_matrix_apx}]
For matrices with orthonormal columns $Q, \widehat{Q} \in \R^{n \times k}$ define 
\[\sin\Theta(Q, \widehat{Q}) \coloneqq (I - QQ^T)\widehat{Q}\widehat{Q}^T.\]
From \cite[Theorem 2.5.1]{matrixcomputations} and \cite[Section 3]{bjorck} we have
\begin{equation*}
    \|\widehat{Q}\widehat{Q}^T - QQ^T \|_2 =  \|\sin\Theta(Q, \widehat{Q})\|_2;
\end{equation*} 
see also \cite[Lemma 1.5]{drineas}. 

Let $M = I + \frac{1}{2d}A$ be the lazy random walk matrix in $G$.  Observe that $M$ is a symmetric matrix with eigenvalues $\lambda^M_1 \geq \ldots \geq \lambda^M_n \geq 0$ and, by \Cref{lem:bnd-lambda}, $\alpha:= \frac{\lambda^M_{k+1}}{\lambda^M_k} \leq \frac{1 - \phi^2/4}{1 - \epsilon}$. Let $Q \in \mathbb{R}^{n \times k}$ be a matrix whose columns form an orthonormal basis for the top-$k$ dimensional eigenspace $\mathcal{Q} \subseteq \mathbb{R}^n$ of $M$. The matrix $U_{[k]}$ from \Cref{def:emb} is an example of such a matrix $Q$. Note that the orthogonal projection onto $\mathcal{Q}$ is the Gram matrix of vectors in $Q^\top$:
\[P_{\mathcal{Q}} = QQ^T.\]

Let $\Omega \in \mathbb{R}^{n \times k}$ be a standard Gaussian matrix and let $q \in \N$. Let $\widehat{Q}$ be an orthonormal basis for the range of $M^q \Omega$. Results on subspace iteration \cite{parlett1998symmetric,pmlr-v37-boutsidis15,saibaba_subspace} show that there exist two matrices $\Omega_2 \in \mathbb{R}^{n-k \times k}$ and $\Omega_1 \in \mathbb{R}^{k \times k}$ which deterministically depend on $\Omega$ and 
\begin{equation*}
    \|\sin\Theta(Q, \widehat{Q})\|_2 \leq \alpha^q \|\Omega_2 \Omega_1^{-1}\|_2
\end{equation*} for all possible realizations of $\Omega$. Under the randomness of $\Omega$, $\Omega_1$ and $\Omega_2$ are independent standard Gaussian matrices. Consequently, with probability $1 - 1/\poly(n)$ over the sampling of $\Omega$, we have that
\[\|\Omega_1^{-1}\|_2 \leq O\left(\poly(n)\cdot\sqrt{k}\right) \text{ and } \|\Omega_2\|_2 \leq \sqrt{k} + \sqrt{n-k} + \sqrt{O(\log n)}.\]
where the first is shown in  \cite[Theorem 3.3]{sankar2006smoothed} and the second in shown in \cite[Proposition 10.1 and 10.3]{rsvd}.  Hence, if $q \geq C \cdot \frac{\log(n)}{\log(1/\alpha)}$ for a sufficiently large constant $C$, then  
\[\|\widehat{Q}\widehat{Q}^T - QQ^T \|_2 \leq 1/\poly(n).\]
It remains to note that $1/\alpha \geq \frac{1 - \epsilon}{1 - \phi^2/4} \geq 1 + \phi^2/4$ by the setting of parameters in \Cref{fig:setting}, and therefore it suffices to take $q = O\left(\frac{\log(n)}{\log(1 + \phi^2/4)}\right) \leq O\left(\frac{(1 + \phi^2/4)\log(n)}{\phi^2/4}\right)$. The algorithm for finding $\widehat{Q}$ consists of three steps:
\begin{itemize}
    \item Sample $\Omega$ (takes $O(n\cdot k)$ operations);
    \item Compute $M^q\Omega$ (takes $O(n^2\cdot k\cdot q)$ operations);
    \item Find an orthonormal basis of $M^q\Omega$ (takes $O(n\cdot k^2)$ operations).
\end{itemize}
\end{proof}

\Cref{lemma:pi_computation}, restated below for the convenience of the reader, states that we can compute a set of approximate cluster means \(\left(\widetilde{\mu}_i\right)_{i \in [k]}\) that satisfy \Cref{def:approxmeans}. 
\approxmu*
The algorithm for computing the approximate cluster means \(\left(\widetilde{\mu}_i\right)_{i \in [k]}\) consists of two parts. First, in \Cref{alg:app_centers}, we obtain a set $\{ \tmu_i\}_{i \in [k]}$ of approximate cluster means, where each approximate cluster mean \(\widetilde{\mu}_i\) is close to some true cluster mean \(\mu_j\) in the sense that \(\| \widetilde{\mu}_i - \mu_j \|_{\apx}^2 \le \frac{\phi^2}{1600\eta} \| \mu_j \|_{\apx}^2\). This part of the algorithm is very similar to Algorithm 1 in \cite{SP23}. For completeness, we include our own analysis of it. 

Although the cluster means output by \Cref{alg:app_centers} are indexed as \((\widetilde{\mu}_i)_{i \in [k]}\), the ordering is arbitrary and does not necessarily correspond to the true cluster labels.
To resolve this, we use a second algorithm (described in \Cref{alg:permutation}) to recover the correct ordering on the approximate cluster means. That is, we find the permutation \(\pi\) such that each \(\widetilde{\mu}_i\) is close to the cluster mean \(\mu_{\pi(i)}\). That way, we can relabel the  cluster means, so that they satisfy \Cref{def:approxmeans}.

We begin by presenting \Cref{alg:app_centers}, which outputs the set of approximate cluster means (up to a possible permutation). 
\begin{algorithm}
\caption{Approximate cluster means}\label{alg:app_centers}
\begin{algorithmic}[1]
\State \textbf{Input:} $d$-regular graph $G = (V, E)$
\State \textbf{Output:} vertices $u_1, u_2, \ldots$ \Comment{$u_i$ defines $i$-th approximate cluster center: $\widetilde{\mu}_i \coloneqq f_{u_i}$}
\State $T \gets$ $10\eta \cdot k \log k$ vertices sampled i.i.d. from $\unif(V)$
\State $H \gets \emptyset$\label{line:init_H} \Comment{$H$ is the same-cluster graph induced by $T$}
\ForAll{$x, y \in T$}
    \If{$\langle f_x, f_y \rangle_{\text{apx}} \geq 0.9 \cdot \|f_x\|_{\text{apx}}^2$}
        \State $H \gets H \cup \{(x, y)\}$
    \EndIf
\EndFor\label{line:end_init_H}
\State $i \gets 1$
\While{$H \neq \emptyset$}\label{line:begin_centers}
    \State Select arbitrary vertex $x$ from $H$\label{line:select_center} \Comment{Select a representative of a new spectral cluster}
    \State $u_i \gets x$
    \State $H \gets H \setminus \{(x, y) : y \in N_H(x)\}$ \Comment{Remove all vertices of this spectral cluster}
    \State $i \gets i + 1$
\EndWhile\label{line:end_centers}
\State \Return $u_1, u_2, \ldots $
\end{algorithmic}
\end{algorithm}

We now outline the intuition behind \Cref{alg:app_centers}. The algorithm relies on two key properties of the spectral embeddings \(\{f_v\}_{v \in V}\): Firstly, for almost of all vertices, \(f_v\) is well-concentrated around the corresponding cluster mean \(\mu_i\), and secondly, the cluster means \(\{\mu_i\}_{i \in [k]}\) are nearly orthogonal. Therefore, the inner product \(\langle f_u, f_v \rangle_{\apx}\) is large when \(u\) and \(v\) belong to the same cluster, and small when they belong to different clusters.

Using this observation, the algorithm samples a set of vertices uniformly at random and constructs a similarity graph \(H\) by connecting pairs \(u, v\) if \(\langle f_u, f_v \rangle_{\apx}\) is large. We will show that with a good probability, this graph consists of \(k\) disjoint cliques, each corresponding to a single cluster.

To select the approximate cluster means, we simply select one representative vertex \(u_i\) from each clique and use \(f_{u_i}\) as a proxy for \(\mu_i\). To ensure we select only one representative per cluster, the algorithm iteratively picks an arbitrary vertex from the remaining graph and removes all its neighbors before repeating.

We now analyze the algorithm more formally. 

Note that our definitions of $\M$, $\im$ and $\spec$ from \Cref{sec:setting} assume that we already have the cluster means. Therefore, to analyze the algorithm, we introduce notation $\Mapx'$, $\imapx'$, $\specapx'$ to refer to the objects defined with exact cluster means instead of approximate, with different parameters and with $\langle \cdot, \cdot\rangle$ taken as $\langle \cdot, \cdot \rangle_{\apx}$. Namely, 
\begin{definition}\label{def:primes}
    \[\specapx'(i) \coloneqq \left \{x \in V \colon  \|f_x - \mu_i\|^2_{2} < \frac{\phi^2}{1600 \eta}\|\mu_i\|_{2}^2 \right \},\]
\[ \Mapx' \coloneqq V  \setminus \left(\bigcup_{i \in [k]}\specapx'(i)\right), \]
\[\imapx'(i,j) \coloneqq \specapx'(i) \cap C_j.\]
\end{definition}

At least intuitively, these objects have the same properties as $\specapx, \Mapx, \imapx$ since the exact cluster means satisfy the properties of approximate cluster means, and changing the radius in the definition of $\specapx$ by a factor of constant does not influence the properties of $\specapx$. We will still provide proofs whenever we use any particular properties.

With the definitions of $\M'$, $\im'$ and $\spec'$ in place, we can now show that the number of sampled vertices in \Cref{alg:app_centers} is sufficient to hit all of the clusters while avoiding the small set of ``bad'' vertices that do not concentrate around their cluster means.

\begin{claim}\label{claim:center_samples}
Let $T$ be a set of $10\eta \cdot k\log k$ vertices from $V$ sampled uniformly at random. Then, 
\begin{enumerate}
    \item With probability 0.99, $T$ does not contain any vertices from $\Mapx'\cup\imapx'$, i.e.  $\Pr[ T \cap (\Mapx'\cup\imapx') = \emptyset ] \geq 0.99$.
    \label{item:approxmu_event1}
    \item With probability 0.99, $T$ contains representatives from every cluster, i.e. $\Pr[ T \cap C_i \neq \emptyset \ \forall i] \geq 0.99$.\label{item:approxmu_event2}
\end{enumerate}
\end{claim}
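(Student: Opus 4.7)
The plan is to handle the two items separately: item (2) is a clean coupon-collector-style argument, while item (1) reduces to controlling $|\Mapx' \cup \imapx'|$ via a bound that mirrors \Cref{lemma:impostor_n_cross_size}, and then applying a union bound over the at most $10\eta k\log k$ samples.

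For item (2), the plan is to first note that by the balance assumption (\Cref{fig:setting}) one has $|C_i| \ge n/(\eta k)$ for each $i \in [k]$. Hence for a single uniform sample $v \sim \unif(V)$ we have $\Pr[v \in C_i] \ge 1/(\eta k)$, so by independence of the $10\eta k\log k$ draws
\begin{equation*}
\Pr[T \cap C_i = \emptyset] \le \left(1 - \frac{1}{\eta k}\right)^{10\eta k\log k} \le e^{-10\log k} = k^{-10}.
\end{equation*}
A union bound over $i \in [k]$ then gives $\Pr[\exists i: T \cap C_i = \emptyset] \le k^{-9} \le 0.01$, as desired.

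For item (1), the main step is to establish a quantitative bound of the form $|\Mapx' \cup \imapx'| \lesssim \eta^2\epsilon/\phi^4 \cdot n$, in the spirit of \Cref{lemma:impostor_n_cross_size}. The approach is to re-run the argument of that lemma with the primed definitions from \Cref{def:primes}: if $v \in \Mapx' \cup \imapx'$, then $v \notin \specapx'(\iota(v))$, so by \Cref{def:primes} we have $\|f_v - \mu_{\iota(v)}\|_2^2 \ge \frac{\phi^2}{1600\eta}\|\mu_{\iota(v)}\|_2^2$. Summing this lower bound over $v \in \Mapx' \cup \imapx'$ and combining with $\|\mu_i\|_2^2 \ge \frac{1}{2\eta}\cdot\frac{k}{n}$ (\Cref{rem:mu_i_norm}) yields
\begin{equation*}
\sum_{v \in \Mapx' \cup \imapx'} \|f_v - \mu_{\iota(v)}\|_2^2 \;\gtrsim\; |\Mapx' \cup \imapx'|\cdot \frac{\phi^2}{\eta^2}\cdot \frac{k}{n}.
\end{equation*}
On the other hand, \Cref{claim:sum_of_distances} gives the matching upper bound $\sum_{v \in V} \|f_v - \mu_{\iota(v)}\|_2^2 \le 4\epsilon k/\phi^2$, and rearranging yields $|\Mapx' \cup \imapx'| = O(\eta^2 \epsilon/\phi^4) \cdot n$. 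This is entirely analogous to the proof of \Cref{lemma:impostor_n_cross_size} and should not require any new ideas; the constants differ only because the primed threshold $\phi^2/(1600\eta)$ is a fixed multiple of the unprimed one.

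Once this size bound is in hand, the union bound finishes item (1):
\begin{equation*}
\Pr[T \cap (\Mapx' \cup \imapx') \ne \emptyset] \;\le\; 10\eta k\log k \cdot \frac{|\Mapx' \cup \imapx'|}{n} \;\lesssim\; \eta^3 k\log k \cdot \frac{\epsilon}{\phi^4},
\end{equation*}
which is at most $0.01$ by the parameter constraint $k\log k \le \phi^6/(10^9\eta^4 \epsilon)$ from \Cref{fig:setting}. The only real obstacle is bookkeeping the constants to confirm that the prime-version of \Cref{lemma:impostor_n_cross_size} goes through with the slightly smaller spectral-cluster radius and against \emph{exact} cluster means $\mu_i$ rather than approximate ones $\tmu_i$; this replaces some of the triangle-inequality error terms in the original proof by zero and so only simplifies the analysis.
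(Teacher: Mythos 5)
Your proposal is correct and follows the paper's proof essentially verbatim: item (2) is the identical coupon-collector calculation, and item (1) re-derives $|\Mapx'\cup\imapx'| = O(\eta^2\epsilon/\phi^4)\cdot n$ from \Cref{def:primes}, \Cref{rem:mu_i_norm}, and \Cref{claim:sum_of_distances} exactly as the paper does, then finishes with a first-moment bound (your union bound over samples is arithmetically the same as the paper's Markov step on $|T\cap(\Mapx'\cup\imapx')|$). The only point both you and the paper leave implicit is that for $v\in\imapx'(i,j)$ one needs disjointness of the primed spectral clusters to conclude $v\notin\specapx'(\iota(v))$, but this follows by the same near-orthogonality argument as \Cref{lemma:disjoint_balls} with fewer error terms, so there is no gap.
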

\begin{proof} 
We first show the statement \ref{item:approxmu_event1}, and then we show the statement \ref{item:approxmu_event2}. 
\paragraph{1.} 

We start by bound the size $|\Mapx'\cup\imapx'|$. The argument is similar to \Cref{lemma:impostor_n_cross_size}. From \Cref{def:primes}, for all $u \in \Mapx'\cup\imapx'$, it holds that $\| f_u - \mu_{\iota(u)} \|_2^2 \geq \frac{\phi^2}{1600 \eta } \|\mu_{\iota(u)}\|^2_2 \geq \frac{\phi^2}{3200\eta^2}\cdot \frac{k}{n}$, where the second inequality follows by \Cref{rem:mu_i_norm}. Summing over all $u \in \Mapx'\cup\imapx' $, we get 
$$\sum_{u \in \Mapx'\cup\imapx'}\|f_u-  \mu_{\iota(u)}\|_2^2 \geq  |\im' \cup \M'| \cdot \frac{\phi^2}{3200\eta^2} \cdot \frac{k}{n}. 
 $$
 On the other hand, by \Cref{claim:sum_of_distances}, we have 
$$
\sum_{u \in \im' \cup \M'}\|f_u-  \mu_{\iota(u)}\|_2^2 \leq \sum_{u \in V} \|f_u - \mu_{\iota(u)}\|_2^2 \leq \frac{4 \e k}{\phi^2}.
$$
Combining the above two equations, we get $$|\Mapx'\cup\imapx' | \leq 10^5\eta^2\frac{\epsilon}{\phi^4}n.$$

We now have 
    \[\mathbb{E}_{T}[|(\Mapx'\cup\imapx')\cap T|] = 10\eta\cdot k\log k \cdot \frac{|\Mapx'\cup\imapx'|}{n} \leq 10^6\eta^3 \cdot k\log k\cdot \frac{\epsilon}{\phi^4} \leq 10^{-7},\]
where the second inequality follows from $\eta^3 k\log k\cdot\frac{\epsilon}{\phi^6}\leq 10^{-9}$ as per \Cref{fig:setting}. Therefore, with probability 0.99 by Markov's inequality $(\Mapx'\cup\imapx')\cap T$ is empty.
\paragraph{2.} Fix $i$. A vertex sampled from $G$ uniformly at random is not from $C_i$ with probability $1 - \frac{|C_i|}{n} \leq 1 - \frac{1}{\eta k}.$ 

Therefore, since $T$ is a set of $10\eta\cdot k\log k$ vertices sampled independently uniformly at random, we get 
    \[ \Pr[T \cap C_i = \emptyset]\leq  \left(1 - \frac{1}{\eta k}\right)^{|T|}  =  \left(1 - \frac{1}{\eta k}\right)^{10\eta\cdot k\log k} \leq \frac{1}{k^{10}}. \] 

    By the union bound argument, we get 
 \[ \Pr[T \cap C_i = \emptyset \quad \forall i ]\leq  k\cdot \frac{1}{k^{10}} = \frac{1}{k^9} \leq 0.01,\]
    setting $k \geq 2$ as per \Cref{fig:setting}.
\end{proof}

\begin{definition}[Same-cluster graph] Given a set of vertices $T$, we say that $H = (T, E_H)$ is the same-cluster graph induced by $T$ if the set of edges $E_H$ is exactly the set of pairs of vertices in $T$ belonging to the same cluster, i.e.  
$$E_H = \{ (x,y) \in T\times T : \iota(x) = \iota(y)\}. $$
\end{definition}

\begin{lemma}[Correctness of \Cref{alg:app_centers}]\label{lemma:app_centers} 
    With probability $0.98$, Algorithm \ref{alg:app_centers} stops at $i = k+1$ and returns $k$ vertices $u_1, \ldots u_k$ such that, for some permutation $\pi': [k] \to [k]$, for every $i \in [k]$ it holds that
    \[\|\mu_{\pi'(i)} - f_{u_i}\|_{2} \leq \frac{\phi}{40\sqrt{\eta}}\|\mu_{\pi'(i)}\|_{2}.\] 
We use notation $\tmu_i \coloneqq f_{u_i}$ and refer to $\{\tmu_i\}_{i \in [k]}$ as approximate cluster centers.
\end{lemma}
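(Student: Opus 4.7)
The plan is to condition on the good event from \Cref{claim:center_samples}, argue that under this event the similarity graph $H$ built in lines~\ref{line:init_H}--\ref{line:end_init_H} is exactly the same-cluster graph on $T$, and then read off the conclusion from the structure of the loop in lines~\ref{line:begin_centers}--\ref{line:end_centers}.

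First, by a union bound on the two parts of \Cref{claim:center_samples}, with probability at least $0.98$ the sample $T$ avoids $\Mapx'\cup\imapx'$ and hits every cluster $C_1,\dots,C_k$. I will carry out the rest of the argument deterministically under this event. In particular, every $x\in T$ belongs to $\specapx'(\iota(x))$, so by \Cref{def:primes} one has $\|f_x-\mu_{\iota(x)}\|_2\le \tfrac{\phi}{40\sqrt\eta}\|\mu_{\iota(x)}\|_2$.

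The core of the proof will be to show that, under this event, the edge test $\langle f_x,f_y\rangle_\apx \geq 0.9\|f_x\|_\apx^2$ holds if and only if $\iota(x)=\iota(y)$. For \emph{completeness}, when $\iota(x)=\iota(y)=j$, I will expand
\begin{equation*}
\langle f_x,f_y\rangle = \|\mu_j\|_2^2 + \langle f_x-\mu_j,\mu_j\rangle + \langle \mu_j, f_y-\mu_j\rangle + \langle f_x-\mu_j,f_y-\mu_j\rangle,
\end{equation*}
apply Cauchy--Schwarz together with the bound $\|f_x-\mu_j\|_2,\|f_y-\mu_j\|_2 \le \tfrac{\phi}{40\sqrt\eta}\|\mu_j\|_2$ from the previous paragraph, and absorb the $\xi/n$ approximation error from \Cref{def:apx} (which is dominated by $\|\mu_j\|_2^2$ thanks to the assumption $\xi/n \le \tfrac{\phi^2}{20^4\eta}\min_i\|\mu_i\|_2^2$). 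A similar expansion of $\|f_x\|_\apx^2$ around $\|\mu_j\|_2^2$ then yields $\langle f_x,f_y\rangle_\apx \ge 0.9\|f_x\|_\apx^2$ with plenty of slack. For \emph{soundness}, when $\iota(x)=i\neq j=\iota(y)$, I will combine the same triangle-inequality expansion with \Cref{lemma:clustermeans}\ref{bulletpt:mu_i_dot_mu_j} and \Cref{rem:mu_i_norm} to show that $|\langle f_x,f_y\rangle|$ is at most $O(\sqrt{\epsilon}/\phi \cdot k/n + \phi/\sqrt\eta\cdot k/n)$, while $\|f_x\|_\apx^2 \ge (1-o(1))\|\mu_i\|_2^2 = \Omega(k/(\eta n))$; under the parameter assumptions in \Cref{fig:setting}, these are well separated by the $0.9$ threshold.

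With $H$ identified as the disjoint union of $k$ cliques indexed by the clusters represented in $T$ (all of them, by the good event), the while loop picks, at each iteration, an arbitrary vertex from a not-yet-removed clique and deletes the entire clique. Hence it terminates after exactly $k$ iterations, returning one representative $u_i$ per clique, each satisfying $u_i \in \specapx'(\iota(u_i))$. Setting $\pi'(i)\coloneqq \iota(u_i)$ (which is a permutation since the $u_i$ come from distinct cliques) and invoking \Cref{def:primes} gives $\|f_{u_i}-\mu_{\pi'(i)}\|_2 \le \tfrac{\phi}{40\sqrt\eta}\|\mu_{\pi'(i)}\|_2$, as required.

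The main technical obstacle will be the soundness direction of the edge test: one has to simultaneously control the off-diagonal term $\langle \mu_i,\mu_j\rangle$, the two cross terms $\langle f_x-\mu_i,\mu_j\rangle$ and $\langle \mu_i,f_y-\mu_j\rangle$, the error $\langle f_x-\mu_i,f_y-\mu_j\rangle$, the $\xi/n$ oracle error, and the asymmetry coming from the fact that the threshold is $0.9\|f_x\|_\apx^2$ rather than $0.9\sqrt{\|f_x\|_\apx^2\|f_y\|_\apx^2}$. The asymmetry is handled by the $\eta$-bound on $\|\mu_i\|_2^2/\|\mu_j\|_2^2$ from \Cref{rem:mu_i_norm}, so it is ultimately the interplay $\epsilon/\phi^6 \le 10^{-5}/\eta^4$ and $\xi/n \le \tfrac{\phi^2}{20^4\eta}\min_i\|\mu_i\|_2^2$ from \Cref{fig:setting} that provides the needed slack on both sides of the threshold $0.9$.
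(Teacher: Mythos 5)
Your proof plan is correct and follows essentially the same route as the paper: condition on the joint event of \Cref{claim:center_samples} (giving the $0.98$ probability via a union bound), show the edge test identifies same-cluster pairs exactly (this is the paper's \Cref{claim:H_cliques}, established by the same expansion of $\langle f_x,f_y\rangle$ and $\|f_x\|^2_{\apx}$ around the cluster means with the same use of \Cref{lemma:clustermeans} and \Cref{rem:mu_i_norm}), and then read off the conclusion from the clique structure of $H$ and \Cref{def:primes}. The paper proves the soundness direction against the weaker bound $0.8\|f_x\|^2_{\apx}$ to leave room for the $0.9$ threshold; your narrative ``well separated by the $0.9$ threshold'' captures the same point.
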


\begin{proof} We first show in \Cref{claim:H_cliques} that conditioned on the event $(\Mapx' \cup \imapx')\cap T = \emptyset$, the graph $H$ computed in \Cref{alg:app_centers} is the same-cluster graph. 
\begin{claim}[$H$ is the same-cluster graph induced by $T$]\label{claim:H_cliques}
If $(\Mapx' \cup \imapx')\cap T = \emptyset$, then $H = \{ (x,y) \in T \times T : \iota(x) = \iota(y)\}. $
\end{claim}
\begin{proof}
The main idea is to use that for all $x , y \notin \Mapx' \cup \imapx' $,  $\langle f_x, f_y\rangle_{\apx}$ is very close to $\|f_x\|^2_{\apx}$ if and only if $x$ and $y$ belong to the same cluster.

We start by proving upper bounds on $|\langle f_x, f_y\rangle_{\apx} - \langle\mu_{\iota(x)}, \mu_{\iota(y)}\rangle|$ and $|\|f_x\|^2_{\apx} - \|\mu_{\iota(x)}\|^2_{\apx}|$. 

We have that
\[\langle f_x, f_y\rangle  =  \langle f_x - \mu_{\iota(x)}, f_y - \mu_{\iota(y)}\rangle + \langle f_x - \mu_{\iota(x)}, \mu_{\iota(y)}\rangle + \langle \mu_{\iota(x)}, f_y - \mu_{\iota(y)}\rangle + \langle\mu_{\iota(x)}, \mu_{\iota(y)}\rangle.\]

From here, by \Cref{thm:spec_dot_prod_oracle}, 
\begin{equation}\label{eq:langlefxfyrangle}
\begin{aligned}
     \left|\langle f_x, f_y\rangle_{\apx} - \langle\mu_{\iota(x)}, \mu_{\iota(y)}\rangle\right|&\leq \left|\langle f_x, f_y\rangle - \langle f_x, f_y\rangle_{\apx}\right| +  |\langle f_x - \mu_{\iota(x)}, f_y - \mu_{\iota(y)}\rangle| \\
    &+  |\langle f_x - \mu_{\iota(x)}, \mu_{\iota(y)}\rangle| + |\langle f_y - \mu_{\iota(y)}, \mu_{\iota(x)}\rangle| \\
     &\leq \frac{\xi}{n} +  |\langle f_x - \mu_{\iota(x)}, f_y - \mu_{\iota(y)}\rangle| +  |\langle f_x - \mu_{\iota(x)}, \mu_{\iota(y)}\rangle|+ \ |\langle f_y - \mu_{\iota(y)}, \mu_{\iota(x)}\rangle|.
\end{aligned}
\end{equation}

Since $x \notin \Mapx'$, by \Cref{def:primes}, $x$ is spectrally close to some cluster center, and since $x \notin \imapx'$, by \Cref{def:primes} again, $x$ can only be spectrally close to its own cluster center. More formally, we have 
\[\|f_x - \mu_{\iota(x)}\|^2_{2} \leq \frac{\phi^2}{1600\eta}\|\mu_{\iota(x)}\|^2_{2}  \leq \frac{\phi^2}{800}\cdot\frac{k}{n},\]
where the last inequality follows since  $\|\mu_{\iota(x)}\|^2_{2} \leq \frac{2\eta k}{n}$, by \Cref{rem:mu_i_norm}.  Similarly, $\|f_y - \mu_{\iota(y)}\|_{2} \leq \frac{\phi}{\sqrt{800}}\cdot\sqrt{\frac{k}{n}}$. Combining this with a stronger bound $\|f_x - \mu_{\iota(x)}\|^2_{2} \leq \frac{\phi^2}{1600\eta}\|\mu_{\iota(x)}\|^2_{2} $ above we get
\begin{equation}\label{eq:11}
    |\langle f_x - \mu_{\iota(x)}, f_y - \mu_{\iota(y)}\rangle| \leq \| f_x - \mu_{\iota(x)}\|_2 \cdot \| f_y - \mu_{\iota(y)}\|_2 \leq  \frac{\phi^2}{800}\cdot\sqrt{\frac{k}{n}}\cdot\|\mu_{\iota(x)}\|_2,
\end{equation}
and 
\begin{equation}\label{eq:12}
    |\langle f_x - \mu_{\iota(x)}, \mu_{\iota(y)}\rangle|, \ |\langle f_y - \mu_{\iota(y)}, \mu_{\iota(x)}\rangle| \leq \frac{\phi}{40\sqrt{\eta}}\|\mu_{\iota(x)}\|_2\cdot\|\mu_{\iota(y)}\|_2 \leq \frac{\phi}{20}\cdot\sqrt{\frac{k}{n}}\cdot\|\mu_{\iota(x)}\|_2,
\end{equation}
where the last inequality follows from $\|\mu_{\iota(y)}\|^2_2 \leq \frac{2\eta k}{n}$, by \Cref{rem:mu_i_norm}.
Substituting \Cref{eq:11} and \Cref{eq:12} into  \Cref{eq:langlefxfyrangle}, we get
\begin{equation}\label{eq:bound_on}
\begin{aligned}
    \left|\langle f_x, f_y\rangle_{\apx} -  \langle\mu_{\iota(x)}, \mu_{\iota(y)}\rangle \right|\leq\frac{3\phi}{20}\cdot\sqrt{\frac{k}{n}}\cdot\|\mu_{\iota(x)}\|_2 + \frac{\xi}{n} \leq\frac{1}{2} \frac{\phi}{\|\mu_{\iota(x)}\|_2}\cdot\sqrt{\frac{k}{n}}\cdot\|\mu_{\iota(x)}\|^2_2 \leq \phi\cdot\sqrt{\eta}\cdot\|\mu_{\iota(x)}\|^2_2,
\end{aligned}
\end{equation} 
where the second inequality follows by setting $\frac{\xi}{n} \leq \frac{\phi^2}{800\eta}\|\mu_{\iota(x)}\|_2\cdot\sqrt{\frac{k}{n}}$ as per \Cref{fig:setting}, and the last inequality follows since $\|\mu_{\iota(x)}\|_2 \geq \frac{1}{2 \sqrt \eta}$, by \Cref{rem:mu_i_norm}. 

Next, we have
\begin{equation}\label{eq:bound_on_2}
    \left|\|f_x\|^2_{\apx}- \|\mu_{\iota(x)}\|^2_2\right|\leq \left|\|f_x\|^2_{\apx}- \|f_x\|^2_2\right| + \left|\|f_x\|^2_2- \|\mu_{\iota(x)}\|^2_2\right| \leq  \frac{3\phi}{10\sqrt{\eta}}\cdot\|\mu_{\iota(x)}\|^2_2 + \frac{\xi}{n} \leq  \frac{\phi}{2\sqrt{\eta}}\cdot\|\mu_{\iota(x)}\|^2_2. 
\end{equation}
Here, the second inequality follows from \Cref{def:dist_apx} and
\[\|f_x\|_2 \leq \|\mu_{\iota(x)}\|_2 + \|f_x - \mu_{\iota(x)}\|_2 \leq \left(1 + \frac{\phi}{10\sqrt{\eta}}\right)\cdot\|\mu_{\iota(x)}\|_2\] by \Cref{claim:spec_exact_mu}, and the last inequality follows by setting $\frac{\xi}{n} \leq \frac{\phi^2}{800\eta}\|\mu_{\iota(x)}\|^2_2$ as per \Cref{fig:setting}. 
    
If $\iota(x) = \iota(y)$ then
\[\langle f_x, f_y\rangle_{\apx} \geq \|\mu_{\iota(x)}\|^2_2 -  \left|\langle f_x, f_y\rangle_{\apx} -  \|\mu_{\iota(x)}\|^2_2 \right|\geq   (1 - \phi\sqrt{\eta})\cdot\|\mu_{\iota(x)}\|^2_2 \geq \frac{1 - \phi\sqrt{\eta}}{1 + \phi/(2\sqrt{\eta})}\|f_x\|^2_{\apx} \geq 0.9\cdot\|f_x\|^2_{\apx},\]
where the second inequality follows from  \Cref{eq:bound_on}, the third inequality follows from \Cref{eq:bound_on_2}, and the last inequality follows from $\phi^2\eta \leq 10^{-3}$ as per \Cref{fig:setting}. 

If $\iota(x) \neq \iota(y)$ then
\[\langle f_x, f_y\rangle_{\apx} \leq \langle \mu_{\iota(x)}, \mu_{\iota(y)}\rangle + \left|\langle f_x, f_y\rangle_{\apx} -   \langle \mu_{\iota(x)}, \mu_{\iota(y)}\rangle  \right|\leq   2\phi\sqrt{\eta}\cdot\|\mu_{\iota(x)}\|^2_2 \leq \frac{2\phi\sqrt{\eta}}{1 - \phi/(2\sqrt{\eta})}\|f_x\|^2_{\apx} \leq 0.8\cdot\|f_x\|^2_{\apx},\]
where the second inequality follows from  \Cref{eq:bound_on} and 
\[|\langle \mu_{\iota(x)}, \mu_{\iota(y)}\rangle| \leq \frac{8\sqrt{\epsilon}}{\phi}\cdot\frac{\eta k}{n} \leq \frac{10^{-1}\phi}{\eta}\cdot \frac{k}{n}  \leq 10^{-1}\phi\|\mu_{\iota(x)}\|^2_2,\] 
the third inequality follows from \Cref{eq:bound_on_2}, and the last inequality follows from $\phi^2\eta \leq 10^{-3}$ as per \Cref{fig:setting}.

\end{proof}

By \Cref{claim:H_cliques} $H$ is a spectral proximity graph, and by the second item of \Cref{claim:center_samples} it has exactly $k$ cliques. Every time the line~\eqref{line:select_center} is triggered, we pick a vertex $x$, declare it an approximate cluster center by setting $\tmu_i \coloneqq f_x$, and remove the rest of the vertices from $C_{\iota(x)}$ from $H$.

Because $x \notin \Mapx'\cup\imapx'$, $x$ is spectrally close to the center of its own cluster, which by the definition means:
\[\|\tmu_i - \mu_{\iota(x)}\|_{2}  = \|f_x - \mu_{\iota(x)}\|_{2} \leq \frac{\phi}{40\sqrt{\eta}}\|\mu_{\iota(x)}\|_{2},\] as desired.

\end{proof}

\Cref{lemma:app_centers} shows that we can compute cluster means $\{ \tmu_i\}_{i \in [k]}$ up to permutation. We use Algorithm \ref{alg:permutation} to recover the correct permutation. 

\begin{algorithm}
\caption{Obtaining the permutation}\label{alg:permutation}
\begin{algorithmic}[1]
\State \textbf{Input:} $d$-regular graph $G = (V, E)$, labeling $\sigma$ as per \Cref{fig:setting}, 
\State \qquad \qquad $u_1, \ldots, u_k  \coloneqq $ output of \Cref{alg:app_centers}  \Comment{$\exists \pi: [k] \to [k]$: $C_{\pi(i)}$: $\widetilde{\mu}_{\pi(i)} \coloneqq f_{u_i}$}
\State \textbf{Output:} permutation $\pi: [k] \to [k]$ 
\State $S \gets 50\eta \cdot k \log k$ vertices sampled independently $\sim \unif(V)$
\ForAll{$u$}
    \State $S_u \gets \{ v \in S : \| f_v - f_u \|_{\apx}^2 \leq \frac{\phi^2}{400\eta} \|f_u\|_{\apx}^2 \}$\label{line:S_u} \Comment{$S_u$ are the samples that are spectrally close to $u$}
    \State $\pi(u) \gets \arg\max_{j \in [k]} \left| \{ v \in S_u : \sigma(v) = j \} \right|$\label{line:pi} \Comment{Predict the label $\pi(u)$ by majority voting}
\EndFor
\State \Return $\pi$
\end{algorithmic}
\end{algorithm}
Given the cluster means $\{\tmu_i\}_{i \in [k]}$, 
Algorithm \ref{alg:permutation} simply samples $\widetilde O(k)$ vertices at random and uses the labels $\sigma$ of the sampled vertices to determine the permutation. Specifically, for each cluster mean $\tmu_i$, we identify the sampled vertices that are spectrally close to $\tmu_i$, and do a majority vote over these vertices in to determine the label we assign to $\tmu_i$. 

Finally, we prove \Cref{lemma:pi_computation}, which proves the guarantee of \Cref{alg:permutation}. 

\begin{remark}
In the proof of \Cref{lemma:pi_computation}, we will refer to the spectral objects $\spec(i)$, $\M$ and $\im(i, j)$, as defined in \Cref{sec:setting}. Note that these objects are defined in terms of the approximate cluster means $\{ \tmu_i\}_{i \in [k]}$. Conditioned on the success of  \Cref{alg:app_centers}, the spectral objects $\spec(i)$, $\M$ and $\im(i, j)$ exist and satisfy all of the properties from \Cref{sec:setting}. Note that without knowing the correct labeling on $\{ \tmu_i\}_{i \in [k]}$, we cannot verify if a given vertex belongs to a particular $\spec(i)$ or $\im(i, j)$. However, we remark that our analysis below only uses the existence of these objects, but not access to them. 
\end{remark}

\begin{proof}[Proof of \Cref{lemma:pi_computation}] Note that all of the vertices in $G$, except for $\M\cup \im$, are spectrally close to their own cluster mean. First, we would like to show that with high constant probability none of the sampled vertices in $S$ belong to $\M \cup \im$, and that $S$ contains sufficiently many vertices from each of the clusters. In that case, it is possible to partition $S \cup \{u_1, \ldots, u_k\}$ into $k$ groups of spectrally close vertices, and all of the vertices in every group belong to the same cluster. Since every group is sufficiently large, and $\sigma$ errs on every vertex with bounded probability $\delta$, it suffices to declare $\pi(u)$ for every vertex $u \in \{u_1, \ldots, u_k\}$ to be the most common $\sigma$ label in the group which $u$ belongs to.

Now we state the proof more formally. Let $\pi^*$ be the correct permutation, which exists by \Cref{lemma:app_centers}. The goal of the analysis, thus, is to show that  $\pi$ constructed by \Cref{alg:permutation} coincides with $\pi^*$.

Let $\mathcal E_1$ be the event that $S \cap  (\M \cup \im) = \emptyset$. From line~\eqref{line:S_u}, we have that for all $u \in \{u_1, \ldots, u_k\}$, the set $S_u$ is defined by $S_u \gets \{ v \in S : \| f_v - f_u \|_{\apx}^2 \leq \frac{\phi^2}{400\eta} \|f_u\|_{\apx}^2 \}$. 
In other words, $S_{u_i}$ is precisely the set of vertices sampled from $\spec(\pi^*(i))$, as per \Cref{def:spec}.

Let  $\mathcal E_2$ be the event that $S_{u_i} \setminus \im \geq 20 \log k$ simultaneously for all $i \in [k]$. As mentioned in the proof outline above, we would want to show that event $\bar{ \mathcal{E}_1} \cup \bar{\mathcal{E}_2 }$ takes place with small constant probability. The randomness of each of these events is over the choice of $S$ but not over $\sigma$. To bound $\Pr_S[\bar{\mathcal{E}_1}]$, note that by Markov's inequality
\begin{align*}
\Pr_S[\bar{\mathcal{E}_1}] &= \Pr_S[ |S \cap (\M \cup \im)| \geq 1] \\
& \leq \E_S[|S \cap (\M \cup \im)|] \\
& = |S| \cdot \frac{|\M \cap \im|}{n} \\
& \leq 50 \eta k \log k  \cdot 2 \cdot 10^4 \cdot \eta^2 \cdot \frac{\epsilon}{\phi^4 }\frac{n}{n} \qquad \qquad &&\text{by Lemma \ref{lemma:impostor_n_cross_size}} \\
& \leq 10^{-3} \qquad &&\text{by the assumptions in  \Cref{fig:setting}. }
\end{align*}
Next, we bound $\Pr_S[\bar{\mathcal{E}_2}].$ Fix a $i \in [k].$ We have 
\begin{align*}
\E_S\left [| S_{u_i} \setminus \im|\right] & = |S|\cdot  \frac{|\speccore(\pi^*(i))|}{n} \qquad && \text{since $\spec(\pi^*(i))\setminus \im  = \speccore(\pi^*(i))$}\\
& \geq 50\eta k \log k \cdot \frac{|C_{\pi^*(i)}| - |\M \cup \im|}{n} \qquad && \text{since $C_{\pi^*(i)} \setminus (\M\cup \im) \subseteq \speccore(\pi^*(i))$}\\
& \geq \frac{50 \eta k \log k}{n} \left( \frac{1}{\eta}\frac{n}{k} - 2 \cdot 10^4 \cdot \eta^2 \cdot \frac{\epsilon}{\phi^4 }n\right)  \qquad && \text{by definition of $\eta$ and \Cref{lemma:impostor_n_cross_size}} \\
& \geq 40 \log k \qquad && \text{by the assumptions in \Cref{fig:setting}.}
\end{align*}
So by Chernoff bounds, we get
$$\Pr_S[ |S_{u_i} \setminus \im| \leq 20 \log k] \leq \exp(- 50 \log k/12) \leq \exp(-3 \log k) \leq \frac{10^{-3}}{k}.$$
By a union bound over all $i \in [k]$, we get $\Pr_S[\bar{\mathcal{E}_2}] \leq 10^{-3}$. 

Now, condition on $\mathcal{E}_1$ and $\mathcal{E}_2$ as well as on the success of \Cref{alg:app_centers}, and note that these events are independent of the labels $\sigma$. The probability of these events over internal randomness of \Cref{alg:app_centers} and \Cref{alg:permutation} is bounded by 
\[0.02 + \Pr_{S}[\mathcal{E}_1] + \Pr_{S}[\mathcal{E}_2] \leq 0.03.\] 

Since we are conditioning on the event $\mathcal{E}_1$, we have that 
\[S_{u_i} =  S_{u_i} \setminus \im  = S \cap (\spec(\pi^*(i)) \setminus \im) = S \cap \speccore(\pi^*(i)). \]

By the definition of $\pi$ in line~\eqref{line:pi}, we have that $\pi(i)$ can be different from $\pi^*(i)$ only if at least half of the vertices in $S_{u_i}$ are mislabeled. Fix $i \in [k]$, and recall from \Cref{def:mislabeled} the set $\mislabeled$ of mislabeled vertices. 
Note that $|S_{u_i} \cap \mislabeled|$ is the number of successes in the Bernoulli process with $|S_{u_i}|$ many trials and success probability $\delta$. Therefore, 
\begin{align*}
 \Pr_{\sigma}\left[ |S_{u_i} \cap \mislabeled| \geq \frac{1}{2}|S_{u_i}| \right]\leq \exp(-|S_{u_i}|/(20\delta)), 
\end{align*} since $\delta \leq 1/100$ as per \Cref{fig:setting}.
Since we are conditioning on the event $\mathcal{E}_2$ we have that $|S_{u_i}| \geq 20\log k$, and so
\begin{align*}
 \Pr_{\sigma}\left[ |S_{u_i} \cap \mislabeled| \geq \frac{1}{2}|S_{u_i}| \right]\leq \exp(-|S_{u_i}|/(20\delta)) \leq \exp(\log k/\delta) \leq \frac{10^{-3}}{k},
\end{align*}since $\delta \leq 1/100$.

By the union bound argument, with probability at east $1 - 10^{-3}$ it holds that $ |S_{u_i} \cap \mislabeled| < \frac{1}{2}|S_{u_i}| $ for all $i \in [k]$ , which implies 
\[\Pr_{\sigma}\left [ \exists i \in [k] : \pi(i) \neq \pi^*(i)\right ] \leq 10^{-3}.\]
From here, we get 
$$ \Pr_{\sigma}\left[ \forall i \in [k]: \pi(i) = \pi^*(i) \right]  =  1 - \Pr_{\sigma}\left [ \exists i \in [k] : \pi(\tmu_i) \neq i\right ] \geq  1- 10^{-3}.$$
Finally, if $\pi = \pi^*$ then, by \Cref{lemma:app_centers}
\[\|f_{u_i} - \mu_{\pi(i)}\|^2_2 =  \|f_{u_i} - \mu_{\pi^*(i)}\|^2_2 \leq \frac{\phi^2}{1600\eta}\|\mu_{\pi^*(i)}\|^2_2 = \frac{\phi^2}{1600\eta}\|\mu_{\pi(i)}\|^2_2,\]
as desired.
\end{proof}

\section{Simple classifier for $k=2$ and $\delta \le \epsilon$}
\label{sec:majvoting++}

We now describe a more nuanced variant of the vanilla majority voting scheme introduced in \Cref{sec:techoverview_attemps} which we shall call ``majority-voting++'', still in the simple case of $k=2$. The key idea is to make the decision rule less brittle for vertices whose neighborhoods are dominated by the opposite cluster. The algorithm classifies a vertex $u \in V$ as follows.
\begin{algorithm}
\caption{Majority-voting++}\label{alg:approach3}
\begin{algorithmic}[1]
\State \textbf{Input:} $G$, $\sigma$, and a vertex $u \in V$
\State \textbf{Output:} a cluster id in $\{1,2\}$
\If{$|\{v \in \nei_G(u) : \, \sigma(v)=1 \}| \le \frac{2}{3}\phi d $}  \Comment{$u$ is probably in $C_2$} 
\State \label{case:1} \Return $2$
\ElsIf{$|\{v \in \nei_G(u) : \, \sigma(v)=2 \}| \le \frac{2}{3}\phi d $} \Comment{$u$ is probably in $C_1$} 
\State  \label{case:2} \Return $1$
\Else \Comment{for both $i=1,2$, $u$ has  $> \frac{2}{3}\phi d$ neighbors $v$ with $\sigma(v)=i$}
\State \label{case:3} \Return $\sigma(u)$
\EndIf
\end{algorithmic}
\end{algorithm}

\noindent
For simplicity, we only consider the setting where $d$ is bigger than a large enough constant (as a function of $\phi$) and $\delta \ll \phi^2$. To analyze the misclassification rate of this algorithm, we bound how many vertices in $C_1$ can be misclassified. By symmetry, the same reasoning will apply to $C_2$, giving the overall misclassification rate. We consider two types of vertices in $C_1$: those with at least $(1-\phi/10) d$ neighbors in $C_1$, and those with more than  $\phi/10 \cdot d$ neighbors in $C_2$. We first analyze the vertices of the first kind.

\begin{lemma}\label{lem:maj++}
Any vertex of $C_1$  with at least $\left(1-\frac{\phi}{10}\right) d$ neighbors in $C_1$ is misclassified with probability $\approx \delta^2$. 
\end{lemma}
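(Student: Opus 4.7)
The plan is to split the event ``$u$ is misclassified'' into two disjoint cases depending on which branch of \Cref{alg:approach3} fires on input $u$, and to bound each probability separately. Writing $X = |\{v \in \nei_G(u):\sigma(v)=1\}|$ and $Y = d - X$, since $u \in C_1$ the algorithm returns the wrong label~$2$ either in case~\ref{case:1} (when $X \le \tfrac{2}{3}\phi d$) or in case~\ref{case:3} (when $X > \tfrac{2}{3}\phi d$ and $Y > \tfrac{2}{3}\phi d$, and additionally $\sigma(u)=2$).

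For the first case I would use the hypothesis $|\nei_G(u)\cap C_1|\ge (1-\phi/10)d$ to write $X \ge (1-\phi/10)d - Z$, where $Z$ is the number of mislabeled $C_1$-neighbors of $u$. Then $X \le \tfrac{2}{3}\phi d$ forces $Z \ge (1 - \tfrac{\phi}{10} - \tfrac{2\phi}{3})d = \Omega(d)$ (for $\phi$ bounded away from~$1$). Since $Z$ is stochastically dominated by a $\mathrm{Bin}(d,\delta)$ variable, a direct binomial tail bound gives
\[
\Pr[\text{case~\ref{case:1} fires}] \le \binom{d}{\Omega(d)}\,\delta^{\Omega(d)} = \bigl(O(\delta)\bigr)^{\Omega(d)},
\]
which is $o(\delta^2)$ as soon as $d$ is a sufficiently large constant depending on~$\phi$.

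For the second case I would exploit independence of $\sigma(u)$ from $\{\sigma(v)\}_{v\in\nei_G(u)}$ (no self-loop at~$u$) to factor
\[
\Pr[\text{case~\ref{case:3} fires and }\sigma(u)=2] = \delta\cdot\Pr\!\left[X>\tfrac{2}{3}\phi d \text{ and } Y>\tfrac{2}{3}\phi d\right] \le \delta\cdot\Pr\!\left[Y>\tfrac{2}{3}\phi d\right].
\]
Decomposing $Y = Y_1+Y_2$ into contributions from $C_1$- and $C_2$-neighbors and using $Y_2 \le |\nei_G(u)\cap C_2| \le \tfrac{\phi}{10}d$, the event $Y > \tfrac{2}{3}\phi d$ forces $Y_1 \ge \tfrac{17}{30}\phi d$. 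Since $Y_1$ is a binomial with failure probability $\delta$ and at most $d$ trials, a binomial tail bound combined with $\delta \ll \phi^2$ yields
\[
\Pr\!\left[Y_1 \ge \tfrac{17}{30}\phi d\right] \le \binom{d}{\lceil (17/30)\phi d\rceil}\delta^{\lceil (17/30)\phi d\rceil} \le \left(\tfrac{30 e \delta}{17\phi}\right)^{(17/30)\phi d},
\]
whose base is much smaller than~$1$, so the right-hand side is at most~$\delta$ once $d$ is large enough. Multiplying by the independent $\delta$ from $\sigma(u)=2$ gives the $\delta^2$ bound for this case, and summing the two cases yields the claimed misclassification probability $\approx \delta^2$.

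The main obstacle is the balance in the second case: the threshold $\tfrac{2}{3}\phi d$ must be high enough that $Y$ exceeds it only when many $C_1$-neighbors of $u$ are mislabeled, so that the binomial tail contributes enough powers of~$\delta$; yet low enough that $X>\tfrac{2}{3}\phi d$ still holds with overwhelming probability, which is what makes case~\ref{case:1} exponentially rare. The assumption $\delta \ll \phi^2$ is precisely what makes both binomial tails collapse below the relevant thresholds, and it is also the ingredient whose failure in the general regime $\delta \gtrsim \phi^2$ motivates the more refined spectral approach of \Cref{alg:polytime} and \Cref{alg:sketch+labels}.
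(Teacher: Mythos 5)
Your proof is correct and follows essentially the same route as the paper's sketch: the same two-way case split on which branch of the algorithm fires, the same observation that case~\ref{case:1} forces $\Omega(d)$ of $u$'s $C_1$-neighbors to be mislabeled, and the same factoring in case~\ref{case:3} of the independent event $\sigma(u)=2$ against a binomial tail for $Y_1 \ge (2/3-1/10)\phi d$ mislabeled $C_1$-neighbors. The explicit $\binom{n}{k}\delta^k$ bookkeeping you wrote out is what the paper's invocation of Chernoff compresses, so the two arguments coincide.
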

\begin{proof}[Proof sketch] Any such vertex $u$ is misclassified only in one of these two situations:
\begin{itemize}
    \item in line~\eqref{case:1}, if at least $\left(1-2/3 \cdot \phi - \phi/10 \right)d =\Omega(d)$ of its neighbors are wrongly labeled as $\sigma(v)=2$, which happens with probability at most $\approx \Omega(\delta)^{(1-O(\phi))d} \le \delta^2$. This follows by an application of Chernoff bounds, as we only expect to see $\delta$ fraction of the neighbors mislacssified and, by above, at least $(1 - O(\phi))$ has to be mislassified;
    \item in line~\eqref{case:3}, if $u$ is wrongly labeled as $\sigma(u)=2$, which happens with probability $\delta$, and  more than $(2/3 - 1/10)\cdot \phi \cdot d$ of its neighbors in $C_1$ are also wrongly labeled, which happens with probability $\approx \delta^{\Omega(\phi d)} \le\delta$,  by application of Chernoff bounds. Since the two events are independent, they occur simultaneously with probability $\leq \delta^2$. 
\end{itemize}
Overall, any vertex of this type is misclassified with probability $\delta^2$.
\end{proof}

\noindent
Considering now the second type of vertices, we have the following.

\begin{lemma}
Any vertex of $C_1$  with more than  $\phi/10 \cdot d$ neighbors in $C_2$, of which there at most $\approx \epsilon/\phi\cdot n$, is misclassified with probability $\approx \delta$.
\end{lemma}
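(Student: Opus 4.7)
The plan is to prove the lemma in two parts: a combinatorial count using the sparsity of the planted cut $E(C_1, C_2)$, and a per-vertex probability estimate obtained by applying Chernoff to the independent label noise restricted to the neighborhood of $u$.

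For the count, I observe that each vertex $u \in C_1$ of the stated type contributes at least $\phi d/10$ edges to $E(C_1, C_2)$. Since $C_1, C_2$ form a $(2, \epsilon, \phi)$-clustering, $|E(C_1, C_2)| \le \epsilon d \min(|C_1|, |C_2|) \le \epsilon d n / 2$, and double counting yields at most $5\epsilon n/\phi$ such vertices, matching the parenthetical count in the statement.

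For the per-vertex probability, fix such a $u \in C_1$ and write $a = |\nei_G(u) \cap C_1|$ and $b = d - a \ge \phi d/10$. By independence of the labels, the count $X_1$ of $u$'s neighbors labeled $1$ decomposes as $X_1 = X_a + Y_b$ with independent $X_a \sim \mathrm{Bin}(a, 1-\delta)$ and $Y_b \sim \mathrm{Bin}(b, \delta)$. Algorithm~\ref{alg:approach3} returns $2$ in exactly two disjoint scenarios: either line~\ref{case:1} triggers (the event $X_1 \le \tfrac{2}{3}\phi d$), or both lines~\ref{case:1} and~\ref{case:2} fail and line~\ref{case:3} hands out $\sigma(u)=2$. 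The second scenario contributes at most $\Pr[\sigma(u)=2] = \delta$ by independence of $\sigma(u)$ from the neighborhood labels. For the first, in the generic regime $a \ge \phi d$ one has $\E[X_1] \ge \phi d(1-\delta)$, which sits above the threshold $\tfrac{2}{3}\phi d$ by a constant-fraction margin (using $\delta \ll 1$); a multiplicative Chernoff bound then gives $\Pr[X_1 \le \tfrac{2}{3}\phi d] \le \exp(-\Omega(\phi d)) \le \delta$ as soon as $d = \Omega(\phi^{-1}\log(1/\delta))$, which is the regime assumed at the top of this subsection. Summing the two contributions yields the desired $O(\delta)$ per-vertex bound.

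The main obstacle, and the reason the bound is only approximate, is the extreme slice $a < \phi d$: for such vertices $\E[X_1]$ can be as low as $\delta d \ll \tfrac{2}{3}\phi d$, so line~\ref{case:1} triggers with probability close to $1$ and the per-vertex estimate genuinely degrades to $\Omega(1)$. The saving grace is that these extreme vertices are themselves heavily penalised by the sparse cut: each contributes more than $(1-\phi) d$ cross-edges, so there are only $O(\epsilon n)$ of them, and they must therefore be absorbed into a sharper aggregate count rather than into the per-vertex probability bound. It is precisely this inability to prove a uniform per-vertex $O(\delta)$ estimate, together with the need to treat the extreme slice separately, that prevents majority-voting++ from being pushed beyond the regime $\delta \le \epsilon \ll \phi^2$ and large $d$, and that motivates the spectral machinery developed in the rest of the paper.
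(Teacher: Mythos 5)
Your count is correct and matches the paper, and your Chernoff treatment of the case $a = |\nei_G(u) \cap C_1| \ge \phi d$ is in the right spirit. The genuine gap is the paragraph on the ``extreme slice $a < \phi d$.'' That slice is empty: the planted clustering has $\Phi(G\{C_1\}) \geq \phi$, and for the singleton cut $\{u\}$ in $G\{C_1\}$ (where $u$'s degree is kept at $d$ by self-loops) this reads $|E(\{u\}, C_1\setminus\{u\})|/d \geq \phi$, i.e.\ $a \geq \phi d$ for every $u \in C_1$ without exception. The paper states this explicitly (``every vertex in $C_1$ must still have at least $\phi d$ neighbors in $C_1$, since the singleton cuts must $\phi$-expand'') and it is exactly the ingredient that makes the per-vertex $O(\delta)$ bound uniform. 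Your concluding claim — that the inability to prove a uniform per-vertex estimate, and the need to count the extreme slice separately, is what limits majority-voting++ — is therefore a misdiagnosis; the actual obstruction discussed at the end of \Cref{sec:majvoting++} is the $\delta^{\Omega(\phi d)}$ term and the failure of graph powering to preserve clusterability.

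Two smaller points. First, your Chernoff estimate $\exp(-\Omega(\phi d))$ is weaker than what the bad event really admits: seeing at least $\frac13\phi d$ mislabeled vertices among $\leq d$ neighbors, each mislabeled independently with probability $\delta$, has probability $\approx \binom{d}{\phi d/3}\delta^{\phi d/3} = \delta^{\Omega(\phi d)}$ (for $\delta \ll \poly(\phi)$), which is $\leq \delta$ once $\phi d$ exceeds a constant. You instead impose $d = \Omega(\phi^{-1}\log(1/\delta))$, which is an unnecessary and stronger assumption than the paper's ``$d$ larger than a constant depending on $\phi$.'' Second, you should also fold in the sub-event inside line~\ref{case:3} that line~\ref{case:1} failed to trigger; this only shrinks the probability, so the crude bound $\Pr[\sigma(u)=2]=\delta$ you use is fine, but it's worth stating.
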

\begin{proof}
    There can only be $\approx \epsilon/\phi\cdot n$ of them, otherwise there would be too many edges between $C_1,C_2$; second, every vertex in $C_1$ must still have at least $\phi d$ neighbors in $C_1$, since the singleton cuts must $\phi$-expand. Therefore, a vertex $u$ of the this type is misclassified only in one of these two situations:
    \begin{enumerate}
        \item in line~\ref{case:1}, if all but $2\phi/3 \cdot d$ of its neighbors are wrongly labeled as $\sigma(v)=2$, which happens with probability at most $\delta^{\Omega(\phi d)}\le \delta$. This is similar to the proof of \Cref{lem:maj++}~-- note that $u$ has at least $\phi d$ neighbors in its own cluster, which means that at least a $1/3$ fraction of its neighbors have to be mislassified, while we only expect to see $\delta$ fraction of the neighbors mislassified;
        \item in line~\ref{case:3}, if $u$ is wrongly labeled as $\sigma(u)=2$, which happens with probability $\delta$.
    \end{enumerate}
    Overall, any vertex of this type is misclassified with probability $\delta$.
\end{proof}
\noindent
Combining the failure probabilities of the two types of vertices with the fact that there are only $\approx \epsilon/\phi \cdot n$ vertices of the second type, we get a misclassification rate of $\approx \delta^2+\epsilon \delta/\phi$.

This simple algorithm, gives the sought-after rate of $\approx \epsilon \delta$ whenever $\delta \le \epsilon$, or more generally $\delta^{\Omega(\phi d)} \le \epsilon \delta$. Even if this already provides a partial answer to our initial question, we do not want to make any such assumptions on $\delta$ and $d$. A natural attempt to artificially increase the degree, in order to bring $\delta^{\Omega(\phi d)}$ below $\epsilon \delta$, is the following: power the graph $G$ for $t$ times, until \smash{$\delta^{\Omega(\phi d^t)} \le \epsilon \delta$}, and apply the algorithm above to the powered graph $G^t$. However, there is an issue with this plan: we do not know how to argue that $G^t$ remains $(2,\epsilon,\phi)$-clusterable with respect to $C_1,C_2$. To illustrate this, consider the following example graph, where $d \ge 1/\phi$: the cluster $C_1$ contains two sets $S,T$ of size $|S|=\epsilon/10 \cdot n$ and $|T|=\phi d \cdot|S|$. Connect the sets $S$ and $T$ so that $\nei_{C_1}(S) =T$, so each vertex in $S$ has $\phi d$ neighbors in $C_1$. Additionally, connect the set $T$ to the rest of $C_1$ so that every vertex in $T$ has $2 \phi d$ neighbors in $ C_1\setminus (S \cup T)$.  Every other vertex in $C_1$ has all but $1$ of its neighbors in $C_1 \setminus (S \cup T)$, and $G[C_1\setminus (S \cup T)]$ induces a $2\phi$-expander. One can see that this construction does not violate the requirement that $G$ is clusterable. On the other hand, the number of vertices in $C_1$ in the two-hop neighborhood of $u \in S$ is only $\phi d+2\phi d \cdot \phi d \le 3 \phi^2 d^2$. This means that we would need to change the thresholds in the algorithm to be on the order of $\phi^2 d^2$, while we would have liked to use $\phi d^2$. Hence, the misclassification rate would be at least suffer the term \smash{$\delta^{\Omega((\phi d)^2)}$}: this is much worse than the loss \smash{$\delta^{\Omega(\phi d^2)}$} we were wishing for.

\section{Deferred proofs}
\label{sec:apndx}

\subsection{Proof of \Cref{claim:exp_drop}}
\label{subsec:expdrop}
Recall that \Cref{claim:exp_drop} is a technical tool for proving \Cref{lem:false_pos_p}. Before we revisit the statement of the Claim, recall notation used in the proof of \Cref{lem:false_pos_p}.  Let $S_1$ be the set of neighbors of cross and impostor vertices, i.e.

    \begin{equation*}
        S_1 = \nei_G(\M \cup \im) \, .
    \end{equation*}
    For a fixed labeling $\alpha$, for all $i \in [n]$ we let $S_{i+1}(\alpha)$ be the set of neighbors in $G$ of the vertices mislabeled vertices in $S_{i}(\alpha)$ that are not in any previously defined sets, i.e.
    \begin{equation*}
        S_{i+1}(\alpha) = \nei_G(\{u \in S_{i}(\alpha) : \alpha(u)\neq \iota(u)\})\setminus \left(\bigcup_{j =1}^{i} S_j(\alpha)\right) \, .
    \end{equation*}
    Note that the definition of the set $S_1$ does not depend on the labeling $\alpha$ but, for the sake of uniformity of notation, we will still write $S_1(\alpha)$. For convenience, we let $\widehat{S}_i(\alpha)$ be the set of  vertices in $S_i(\alpha)$ with a wrong label, i.e.
    \begin{equation*}
        \widehat{S}_i(\alpha)= \{u \in S_i(\alpha) : \, \alpha(u)\neq \iota(u)\}\, ,
    \end{equation*}
    and let ${S}_{\le i}(\alpha)$ be the set of vertices in any of $S_1(\alpha),\dots,S_i(\alpha)$, i.e.
    \begin{equation*}
        S_{\le i}(\alpha) = \bigcup_{j =1}^i S_j(\alpha) \, ,
    \end{equation*}
    so we can rewrite
    \begin{equation*}
        S_{i+1}(\alpha)=\nei_G(\widehat{S}_i(\alpha))\setminus S_{\le i}(\alpha) \, .
    \end{equation*}
    Similarly, we also let $\widehat{S}_{\le i}(\alpha)$ be the set of vertices in any of $\widehat{S}_1(\alpha),\dots,\widehat{S}_i(\alpha)$, i.e.
    \begin{equation*}
        \widehat{S}_{\le i}(\alpha) = \bigcup_{j =1}^i \widehat{S}_j(\alpha) \, ,
    \end{equation*}
\expdrop*
\begin{proof}
        Consider $i$ fixed everywhere in this proof. Let \[\mathcal{T} = \{\mathbf{T}= (T,\widehat{T}) \in ((2^V)^{i})^2: \forall\, j \in [i], \, \widehat{T}_j \subseteq T_j \text{ and }  \forall\, j \in [i-1], \, T_{j+1} = \nei_G(\widehat{T}_{j})\setminus (\cup_{h=1}^j T_h)\}.\] In other words, any $\mathbf{T}$ is a pair of two sequences $T = T_1, \ldots, T_i$, $\widehat{T} = \widehat{T}_1, \ldots, \widehat{T}_i$, where for each $j \in [i]$ both $T_j$ and $\widehat{T}_j$ are subsets of $V$. The sequence $T$ is defined by its first element $T_1$ and the sequence $\widehat{T}$. $\mathcal{T}$ is the set of all possible pairs of sequences $\mathbf{T}$. Note that for any $\alpha$, $((S_j(\alpha))_{j=1}^i,(\widehat{S}_j(\alpha))_{j=1}^i) \in \mathcal{T}$. $\mathcal{T}$ is thus designed to contain all $((S_j(\alpha))_{j=1}^i,(\widehat{S}_j(\alpha))_{j=1}^i)$ for all possible realizations of $\sigma$.
        
        Define $\text{nxt}(\mathbf{T}) = \nei_G(\widehat{T}_{i})\setminus \cup_{j=1}^i T_j$ for all $\mathbf{T} \in \mathcal{T}$. Note that $\text{nxt}(\mathbf{T})$ is defined by $T = T_1, \ldots, T_i$ and $\widehat{T} = \widehat{T}_1, \ldots, \widehat{T}_i$ in the same way as $T_{j+1}$ is defined by $T_1, \ldots, T_j$ and $\widehat{T}_1, \ldots, \widehat{T}_j$ for any $j < i$, so it's meant to be the \emph{next} object in the sequence $T$. In particular, if $\mathbf{T} = ((S_j(\alpha))_{j=1}^i,(\widehat{S}_j(\alpha))_{j=1}^i)$ then $\text{nxt}(\mathbf{T}) = S_{i+1}(\alpha)$.   
        
        For any $\alpha:V\rightarrow [k]$, let $\text{seq}(\alpha,i)=((S_j(\alpha))_{j=1}^i,(\widehat{S}_j(\alpha))_{j=1}^i)$. Also, for any $Q \subseteq V$, and labelings $\gamma: Q\rightarrow[k]$, $\beta: V\setminus Q\rightarrow[k]$, we denote by $\gamma \| \beta : V\rightarrow [k]$ the labeling that assigns $\gamma(u)$ to each $u \in Q$ and assigns $\beta(u)$ to each $u \in V\setminus Q$. With this notation in place, we rewrite
        \begin{align*}
            \E_\sigma\left[\left|\widehat{S}_{i+1}(\sigma)\right|\right] & = \sum_{\alpha: V \rightarrow [k]} \left|\widehat{S}_{i+1}(\alpha)\right|\cdot \Pr_\sigma[\sigma=\alpha] \\
            & = \sum_{\mathbf{T} \in \mathcal{T}} \sum_{\substack{\alpha: V \rightarrow [k] \text{ s.t.}\\ \text{seq}(\alpha,i)=\mathbf{T}}}  |\widehat{S}_{i+1}(\alpha)|\cdot \Pr_\sigma[\sigma=\alpha] \\
            &= \sum_{\mathbf{T} \in \mathcal{T}} \sum_{\substack{\gamma: \text{nxt}(\mathbf{T}) \rightarrow [k], \\ \beta: V\setminus\text{nxt}(\mathbf{T}) \rightarrow [k] \\ \text{s.t. } \text{seq}(\gamma \| \beta,i)=\mathbf{T}}}  \left|\{u \in \text{nxt}(\mathbf{T}): \gamma(u) \neq \iota(u)\}\right|\cdot  \Pr_\sigma[\sigma=\gamma \| \beta ] \, .
        \end{align*}
        In more details, in the last line we decouple the labeling $\alpha$ into $\gamma$ and $\beta$. As mentioned before, if $\text{seq}(\alpha,i)=\mathbf{T}$ then $S_{i+1}(\alpha) =\text{nxt}(\mathbf{T})$. By the definition of $\widehat{S}_{i+1}(\alpha)$ we have that $\widehat{S}_{i+1}(\alpha)\subseteq S_{i+1}(\alpha)$ is the set of all vertices in $S_{i+1}(\alpha)$ misclassified by $\alpha$. Since $\text{nxt}(\mathbf{T})$ is the domain of $\gamma$, we have that $\widehat{S}_{i+1}(\alpha) = \{u \in \text{nxt}(\mathbf{T}): \gamma(u) \neq \iota(u)\}$.

        Note that for any $\mathbf{T} \in \mathcal{T}$ and $\gamma: \text{nxt}(\mathbf{T}) \rightarrow [k]$, $\beta: V\setminus\text{nxt}(\mathbf{T}) \rightarrow [k]$ one has
        \begin{equation}
        \label{eq:split}
             \Pr_\sigma[\sigma=\gamma \| \beta ] =  \Pr_{\sigma(\text{nxt}(\mathbf{T}))}[\sigma( \text{nxt}(\mathbf{T}))=\gamma] \cdot \Pr_{\sigma(V\setminus \text{nxt}(\mathbf{T}))}[\sigma(V\setminus \text{nxt}(\mathbf{T}))=\beta] \,,
        \end{equation} since random variables $\sigma(u)$ and $\sigma(v)$ are independent for any $u \neq v$. Next, we argue that for any $\mathbf{T} \in \mathcal{T}$ and $ \beta: V\setminus\text{nxt}(\mathbf{T}) \rightarrow [k]$, one has that
        \begin{equation}
        \label{eq:eitheror}
            \text{either} \quad  \forall \, \gamma: \text{nxt}(\mathbf{T}) \rightarrow [k], \,  \text{seq}(\gamma \| \beta,i)=\mathbf{T} \quad \text{or} \quad \forall \, \gamma: \text{nxt}(\mathbf{T}) \rightarrow [k], \,  \text{seq}(\gamma \| \beta,i)\neq \mathbf{T} \, .
        \end{equation}
        To show this, we prove that if there exists $\gamma: \text{nxt}(\mathbf{T}) \rightarrow [k]$ such that $ \text{seq}(\gamma \| \beta,i)=\mathbf{T}$, then one has $ \text{seq}(\gamma' \| \beta,i)=\mathbf{T}$ for all $\gamma': \text{nxt}(\mathbf{T}) \rightarrow [k]$. To this end, we fix $\gamma': \text{nxt}(\mathbf{T}) \rightarrow [k]$ and argue by induction on $j \in [i]$ that for all $h \in [j]$ one has \smash{$S_h(\gamma'\| \beta) = S_h(\gamma\| \beta)$ and $\widehat{S}_h(\gamma'\| \beta) = \widehat{S}_h(\gamma\| \beta)$}. Let $\alpha=\gamma\| \beta$ and $\alpha'=\gamma'\| \beta$ for brevity. For the base case $j=1$, we have
        $$S_1(\alpha')=S_1=S_1(\alpha)$$
        and
        $$\widehat{S}_1(\alpha') = \{u \in S_1: \alpha'(u)\neq \iota(u)\}  = \{u \in T_1: \alpha(u)\neq \iota(u)\}= \{u \in S_1: \alpha(u)\neq \iota(u)\} = \widehat{S}_1(\alpha) \, ,$$
        since $T_1=S_1$, $T_1 \cap \text{nxt}(\mathbf{T})=\emptyset$, and therefore $T_1$ lies entirely in the domain of $\beta$. For the inductive step, let $j \in [i-1]$ and assume that for all $h \in [j]$ one has \smash{$S_h(\alpha') = S_h(\alpha)$ and $\widehat{S}_h(\alpha') = \widehat{S}_h(\alpha)$}. Then, we have
        $$ S_{j+1}(\alpha') = \nei_G(\widehat{S}_j(\alpha')) \setminus  \left(\bigcup_{h=1}^j {S}_h(\alpha')\right) = \nei_G(\widehat{S}_j(\alpha)) \setminus  \left(\bigcup_{h=1}^j {S}_h(\alpha)\right) = S_{j+1}(\alpha) \, ,$$
        and
        $$\widehat{S}_{j+1}(\alpha') = \{u \in {S}_{j+1}(\alpha'): \alpha'(u)\neq \iota(u)\}  = \{u \in T_{j+1}: \alpha(u)\neq \iota(u)\}= \{u \in {S}_{j+1}(\alpha): \alpha(u)\neq \iota(u)\} = \widehat{S}_{j+1}(\alpha) \, ,$$
        since ${S}_{j+1}(\alpha)={S}_{j+1}(\alpha)=T_{j+1}$, $T_{j+1} \cap \text{nxt}(\mathbf{T})=\emptyset$, and therefore $T_{j+1}$ lies entirely in the domain of $\beta$. The inductive proof is thus concluded. 
        
        For a fixed $\mathbf{T} \in \mathcal{T}$, we  use~\eqref{eq:eitheror} to write
        \begin{align*}
             & \sum_{\substack{\gamma: \text{nxt}(\mathbf{T}) \rightarrow [k], \\ \beta: V\setminus\text{nxt}(\mathbf{T}) \rightarrow [k] \\ \text{s.t.}\\ \text{seq}(\gamma \| \beta,i)=\mathbf{T}}}  |\{u \in \text{nxt}(\mathbf{T}): \gamma(u) \neq \iota(u)\}|\cdot  \Pr_\sigma[\sigma=\gamma \| \beta ] \\
             = & \sum_{\substack{ \beta: V\setminus\text{nxt}(\mathbf{T}) \rightarrow [k] \\ \text{s.t.}\\ \forall \, \gamma': \text{nxt}(\mathbf{T}) \rightarrow [k], \, \text{seq}(\gamma' \| \beta,i)=\mathbf{T}}} \sum_{\gamma: \text{nxt}(\mathbf{T}) \rightarrow [k]}  |\{u \in \text{nxt}(\mathbf{T}): \gamma(u) \neq \iota(u)\}|\cdot  \Pr_\sigma[\sigma=\gamma \| \beta ] \, .
        \end{align*}
        Now, for any $\beta: V\setminus\text{nxt}(\mathbf{T}) \rightarrow [k] $ such that $\text{seq}(\gamma' \| \beta,i)=\mathbf{T}$ for all $\gamma': \text{nxt}(\mathbf{T}) \rightarrow [k]$, we use~\eqref{eq:split} to write
        \begin{equation}\label{eq:one_T}
        \begin{aligned}
            & \sum_{\gamma: \text{nxt}(\mathbf{T}) \rightarrow [k]}  |\{u \in \text{nxt}(\mathbf{T}): \gamma(u) \neq \iota(u)\}|\cdot  \Pr_\sigma[\sigma=\gamma \| \beta ]\\
            = & \Pr_{\sigma(V\setminus \text{nxt}(\mathbf{T}))}[\sigma(V\setminus \text{nxt}(\mathbf{T}))=\beta] \sum_{\gamma: \text{nxt}(\mathbf{T}) \rightarrow [k]}  |\{u \in \text{nxt}(\mathbf{T}): \gamma(u) \neq \iota(u)\}|\cdot\Pr_{\sigma(\text{nxt}(\mathbf{T}))}[\sigma( \text{nxt}(\mathbf{T}))=\gamma] \\
            = & \Pr_{\sigma(V\setminus \text{nxt}(\mathbf{T}))}[\sigma(V\setminus \text{nxt}(\mathbf{T}))=\beta] \cdot \E_{\sigma(\text{nxt}(\mathbf{T}))}\left[ |\{u \in \text{nxt}(\mathbf{T}): \sigma(u) \neq \iota(u)\}|\right] \\
            = & \Pr_{\sigma(V\setminus \text{nxt}(\mathbf{T}))}[\sigma(V\setminus \text{nxt}(\mathbf{T}))=\beta] \cdot \delta \cdot |\text{nxt}(\mathbf{T})| \\
            \le & \delta d \cdot |\widehat{T}_i| \cdot \Pr_{\sigma(V\setminus \text{nxt}(\mathbf{T}))}[\sigma(V\setminus \text{nxt}(\mathbf{T}))=\beta], 
             \end{aligned}
        \end{equation}
            
         where the last inequality follows from the fact that $|\text{nxt}(\mathbf{T})| \leq |\nei_G(\widehat{T}_i)| \leq d\cdot|\widehat{T}_i|$. From \Cref{eq:split} and \Cref{eq:eitheror} we have that for all $\mathbf{T}$
         \begin{equation}\label{eq:probs}
             \begin{aligned}
               \sum_{\substack{\beta: V\setminus\text{nxt}(\mathbf{T}) \rightarrow [k] \\ \text{s.t.}\forall \gamma\\ \text{seq}(\gamma \| \beta,i)=\mathbf{T}}} \Pr_{\sigma(V\setminus \text{nxt}(\mathbf{T}))}[\sigma(V\setminus \text{nxt}(\mathbf{T}))=\beta] = \sum_{\substack{\alpha: V \rightarrow [k] \text{ s.t.}\\ \text{seq}(\alpha,i)=\mathbf{T}}}  \Pr_\sigma[\sigma=\alpha ] 
             \end{aligned}
         \end{equation}
         since
         \begin{align*}
         &\sum_{\substack{\beta: V\setminus\text{nxt}(\mathbf{T}) \rightarrow [k] \\ \text{s.t.}\forall \gamma\\ \text{seq}(\gamma \| \beta,i)=\mathbf{T}}} \Pr_{\sigma(V\setminus \text{nxt}(\mathbf{T}))}[\sigma(V\setminus \text{nxt}(\mathbf{T}))=\beta] = \qquad \qquad \text{since} \sum_{\gamma: \text{nxt}(\mathbf{T}) \rightarrow [k]}\Pr_{\sigma(\text{nxt}(\mathbf{T}))}[\sigma( \text{nxt}(\mathbf{T}))=\gamma]  = 1 \\
         &\sum_{\substack{\beta: V\setminus\text{nxt}(\mathbf{T}) \rightarrow [k] \\ \text{s.t.}\forall \gamma\\ \text{seq}(\gamma \| \beta,i)=\mathbf{T}}} \sum_{\gamma: \text{nxt}(\mathbf{T}) \rightarrow [k]}\Pr_{\sigma(V\setminus \text{nxt}(\mathbf{T}))}[\sigma(V\setminus \text{nxt}(\mathbf{T}))=\beta] \cdot \Pr_{\sigma(\text{nxt}(\mathbf{T}))}[\sigma( \text{nxt}(\mathbf{T}))=\gamma] = \sum_{\substack{\alpha: V \rightarrow [k] \text{ s.t.}\\ \text{seq}(\alpha,i)=\mathbf{T}}}  \Pr_\sigma[\sigma=\alpha ] 
         \end{align*}

        By summing the inequality \ref{eq:one_T} over all $\beta$ such that for all $\gamma$ $\text{seq}(\gamma \| \beta,i)=\mathbf{T}$, and applying \Cref{eq:probs} we get that for any $\mathbf{T} \in \mathcal{T}$
        \begin{align*}
            \sum_{\substack{\gamma: \text{nxt}(\mathbf{T}) \rightarrow [k], \\ \beta: V\setminus\text{nxt}(\mathbf{T}) \rightarrow [k] \\ \text{s.t.}\\ \text{seq}(\gamma \| \beta,i)=\mathbf{T}}}  |\{u \in \text{nxt}(\mathbf{T}): \gamma(u) \neq \iota(u)\}|\cdot  \Pr_\sigma[\sigma=\gamma \| \beta ]  \le \delta d \cdot \sum_{\substack{\alpha: V \rightarrow [k] \text{ s.t.}\\ \text{seq}(\alpha,i)=\mathbf{T}}}   |\widehat{T}_i| \cdot \Pr_\sigma[\sigma=\alpha] \, ,
        \end{align*}
        which implies
        \begin{align*}
            \E_\sigma\left[|\widehat{S}_{i+1}(\sigma)|\right] & = \sum_{\mathbf{T} \in \mathcal{T}} \sum_{\substack{\gamma: \text{nxt}(\mathbf{T}) \rightarrow [k], \\ \beta: V\setminus\text{nxt}(\mathbf{T}) \rightarrow [k] \\ \text{s.t.}\\ \text{seq}(\gamma \| \beta,i)=\mathbf{T}}}  |\{u \in \text{nxt}(\mathbf{T}): \gamma(u) \neq \iota(u)\}|\cdot  \Pr_\sigma[\sigma=\gamma\| \beta ] \\
            & \le  \delta d\sum_{\mathbf{T} \in \mathcal{T}} \sum_{\substack{\alpha: V \rightarrow [k] \text{ s.t.}\\ \text{seq}(\alpha,i)=\mathbf{T}}}  |\widehat{T}_i| \cdot \Pr_\sigma[\sigma=\alpha] \\
            & =  \delta d\sum_{\mathbf{T} \in \mathcal{T}} \sum_{\substack{\alpha: V \rightarrow [k] \text{ s.t.}\\ \text{seq}(\alpha,i)=\mathbf{T}}}  |\widehat{S}_i(\alpha)| \cdot \Pr_\sigma[\sigma=\alpha] \\
            & = \delta d \sum_{\alpha: V \rightarrow [k]}  |\widehat{S}_i(\alpha)| \cdot \Pr_\sigma[\sigma=\alpha ] \\
            & = \delta d\cdot \E_\sigma\left[|\widehat{S}_{i}(\sigma)|\right] \, .
        \end{align*}
    \end{proof}

\subsection{Proof of \Cref{claim:exp-like_sets}}
\label{subsec:explike}
\explikesets*

In the above, $\walk \sim p_S^t[H]$ describes $\walk$ as a $t$-step lazy random walk in $H$ starting in $S$, as per \Cref{def:lazy_r_walk}. We first describe the high-level idea. Imagine that $H[S]$ is a subgraph of a sufficiently big expander $\widehat{H}$. Then, for $t$ large enough, the endpoint of a lazy random walk of length $t$ starting in some vertex $u \in S$, is equally likely to end at any vertex in $\widehat{H}$, since random walks mix rapidly on expanders. In particular, the probability that the endpoint is in $S$ is small.

\begin{proof} 
Recall that $H$ is a graph with self-loops, whose edge set excluding the self-loops is denoted by $E'$, and whose self-loops are reflected by the function $\ell': V \to \mathbb{N}$.

    We construct a weighted graph $\widehat{H}=(U,\widehat{E}, \widehat{\ell},\omega)$ defined as follows: its vertex set is $U=S \cup T$, where $T$ is a set of dummy vertices disjoint from $S$ of size $|T|=20 \lceil \exp(\psi^2 t /36)\rceil \cdot |S|$; its edge set $\widehat{E}$ contains all the edges in $E'(S)$ and an edge $(u,v)$ for every $u,v \in U$ such that at least one of $u,v$ is in $T$; its self-loops are copied over $S$ and set to $0$ everywhere else, i.e. $\widehat{\ell}(u)=\ell(u)$ for all $u \in S$ and $\widehat{\ell}(u)=0$ for all $u \in T$; its weight function $\omega:F\rightarrow \mathbb{R}_{\ge 0}$ is
\begin{equation*}
    \omega(u,v)=\begin{cases}
        1 \, ,&\quad \text{if } (u,v) \in E'(S)\\
        \frac{1}{|T|}|E'(\{u\},V\setminus S)|\, ,&\quad \text{if } u \in S, v\in T\\
        \frac{1}{|T|-1}\left(d-\frac{1}{|T|}|E'(S,V\setminus S)|\right)\, ,&\quad \text{if } u,v\in T\\
    \end{cases} \, .
\end{equation*}
First, one can observe the following.
\begin{claim}\label{claim:H-regular}
    The graph  $\widehat{H}$ is $d$-regular, i.e $\widehat{\ell}(u)+\sum_{v: u \neq v \in U}\omega(u,v) = d$ for all $u \in U$.  
\end{claim}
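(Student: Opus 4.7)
The plan is to verify $d$-regularity of $\widehat{H}$ by a straightforward case analysis on whether the vertex lies in $S$ or in the dummy set $T$, invoking $d$-regularity of the original graph $H$ to close the computation. Everything reduces to bookkeeping of the three weight expressions defining $\omega$, together with the fact that $\widehat{\ell}$ inherits the self-loops of $H$ on $S$ and is zero on $T$.

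First I would handle $u \in S$. Here $\widehat{\ell}(u) = \ell'(u)$, and the neighbors of $u$ in $\widehat{H}$ split into those in $S \setminus \{u\}$ (edges inherited from $E'(S)$, each of weight $1$) and the $|T|$ dummy vertices (each joined to $u$ with weight $|E'(\{u\}, V \setminus S)|/|T|$). Summing, the contribution to the degree is
\[
\ell'(u) + |E'(\{u\}, S)| + |T| \cdot \frac{|E'(\{u\}, V \setminus S)|}{|T|} = \ell'(u) + |E'(\{u\}, S)| + |E'(\{u\}, V \setminus S)|,
\]
which equals $d$ by the $d$-regularity of $H$ (self-loops plus incident edges sum to $d$).

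Next I would handle $u \in T$. Here $\widehat{\ell}(u) = 0$, and $u$ is adjacent in $\widehat{H}$ to every $v \in S$ with weight $|E'(\{v\}, V \setminus S)|/|T|$ and to every $w \in T \setminus \{u\}$ with weight $\frac{1}{|T|-1}\bigl(d - |E'(S, V\setminus S)|/|T|\bigr)$. Summing and using $\sum_{v \in S}|E'(\{v\}, V \setminus S)| = |E'(S, V \setminus S)|$, the contribution to the degree becomes
\[
\frac{|E'(S, V \setminus S)|}{|T|} + (|T|-1) \cdot \frac{1}{|T|-1}\!\left(d - \frac{|E'(S, V \setminus S)|}{|T|}\right) = d.
\]

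This exhausts all cases, so $\widehat{\ell}(u) + \sum_{v \in U, v \neq u} \omega(u,v) = d$ for every $u \in U$. There is no real obstacle here; the only mild caveat is to make sure the convention for $E'(\{u\}, S)$ counts edges to $S \setminus \{u\}$ (self-loops being tracked separately through $\ell'$), which is consistent with the graph notation fixed in \Cref{sec:prelims}.
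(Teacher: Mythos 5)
Your proof is correct and matches the paper's argument essentially verbatim: both split into the cases $u \in S$ and $u \in T$, sum the three weight expressions defining $\omega$, and close each case by invoking $d$-regularity of $H$ (for $u \in S$) or the identity $\sum_{v \in S}|E'(\{v\},V\setminus S)| = |E'(S,V\setminus S)|$ (for $u \in T$). The caveat you flag about $E'(\{u\},S)$ meaning edges to $S \setminus \{u\}$ is exactly what the paper makes explicit by writing $|E'(\{u\},S\setminus\{u\})|$, so there is no gap.
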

\begin{proof}
For every $u \in S$, we have
\begin{align*}
    \widehat{\ell}(u)+\sum_{v: u \neq v \in U}\omega(u, v) & =\widehat{\ell}(u)+ \sum_{v: u \neq v \in S }\omega(u, v)  + \sum_{v \in T }\omega(u, v)\\
    & = \widehat{\ell}(u)+\left(|E'(\{u\},S\setminus \{u\})| + |E'(\{u\},V\setminus S)|\right) \\
    & = \widehat{\ell}(u)+ \left(d -\ell(u)\right) \qquad \qquad \qquad \qquad \qquad \qquad \qquad \qquad \text{since $H$ is $d$-regular}\\
    & = d\, .
\end{align*}
For every dummy vertex $u \in T$, we have
\begin{align*}
\widehat{\ell}(u)+\sum_{v: u \neq v \in U}\omega(u, v) & =\sum_{v \in S }\omega(u, v)  + \sum_{v: u \neq v \in T }\omega(u, v)\\
& = \sum_{v \in S }\frac{1}{|T|}|E'(\{v\},V\setminus S)| + d-\frac{1}{|T|}|E'(S,V\setminus S)| \\
& = \frac{1}{|T|}|E'(S,V\setminus S)| + d-\frac{1}{|T|}|E'(S,V\setminus S)| \\
& = d \, .
\end{align*}
\end{proof}
\noindent
\Cref{claim:H-regular} allows us to conveniently talk about the distribution of random walks in $\widehat{H}$, and in particular allows us to show the following.
\begin{claim}
      \label{claim:relate-rw}
      One has
      \begin{equation*}
           \Pr_{\walk \sim p^t_S[H]}\left[\forall \, r \in [t], \, w_r \in S\right] =  \Pr_{\walk \sim p^t_S[\widehat{H}]}\left[\forall \, r \in [t], \, w_r \in S\right] \, .
      \end{equation*}
  \end{claim}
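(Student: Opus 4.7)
The plan is to show that the lazy random walk restricted to the event "stay in $S$" has identical dynamics in $H$ and in $\widehat{H}$, so that the two probabilities in the claim are in fact equal coordinate-by-coordinate in $S$. I will carry this out by computing the one-step transition probabilities of the lazy random walk in both graphs and verifying they agree for every transition that keeps the walk in $S$.

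First, I recall (see \Cref{def:lazy_r_walk}) that the lazy walk in a $d$-regular graph with self-loops has transition matrix $M = \tfrac{1}{2}I + \tfrac{1}{2d}A$, where $A$ is the (weighted) adjacency matrix including self-loops on its diagonal. Applying this in $H=(V,E',\ell')$, for $u,v \in S$ with $u \neq v$ one has $M_H(u,v) = \tfrac{1}{2d}\mathbbm{1}[(u,v)\in E']=\tfrac{1}{2d}\mathbbm{1}[(u,v)\in E'(S)]$, and $M_H(u,u) = \tfrac{1}{2} + \tfrac{1}{2d}\ell'(u)$. In $\widehat{H}$, since $\widehat{H}$ is $d$-regular by \Cref{claim:H-regular}, for the same $u,v \in S$ with $u \neq v$ I get $M_{\widehat H}(u,v) = \tfrac{1}{2d}\omega(u,v) = \tfrac{1}{2d}\mathbbm{1}[(u,v)\in E'(S)]$ (using the definition $\omega(u,v)=1$ on $E'(S)$), while $M_{\widehat H}(u,u) = \tfrac{1}{2} + \tfrac{1}{2d}\widehat{\ell}(u) = \tfrac{1}{2} + \tfrac{1}{2d}\ell'(u)$. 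Hence the transition probabilities inside $S$ coincide exactly:
\[
M_{\widehat H}(u,v) = M_H(u,v) \qquad \text{for all } u,v \in S.
\]

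With this in hand, the claim follows by direct expansion of the joint probability of staying in $S$. Writing $\pi_S$ for the uniform measure on $S$ (the common starting distribution for $p^t_S[H]$ and $p^t_S[\widehat H]$) and letting $\walk=(w_1,\ldots,w_t)$, I would write
\begin{equation*}
\Pr_{\walk \sim p^t_S[H]}\!\left[\forall r\in[t],\ w_r\in S\right]
= \sum_{(u_1,\ldots,u_t)\in S^t}\! \pi_S(u_1) \prod_{r=1}^{t-1} M_H(u_r,u_{r+1}),
\end{equation*}
and likewise for $\widehat H$ with $M_{\widehat H}$ in place of $M_H$. Since the two transition matrices agree on $S \times S$, the two sums are term-by-term equal, yielding the claim. (Equivalently, one can observe that the process obtained by killing the walk upon its first exit from $S$ depends only on the restriction of the transition kernel to $S\times S$ plus the total exit mass from each $u\in S$, both of which match in $H$ and $\widehat H$ by the computation above; this gives a coupling-style proof if preferred.)

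I do not expect a real obstacle here: the graph $\widehat{H}$ has been constructed precisely so that $E'(S)$ and the self-loops $\ell'|_S$ are copied verbatim and the remaining $d$-regularity budget at each $u\in S$ is redistributed to the dummy set $T$. The only thing worth being careful about is double-counting in the lazy-walk self-loop term: the laziness contributes $\tfrac{1}{2}$ from the $\tfrac{1}{2}I$ part of $M$, and the self-loops $\ell'(u)$ contribute an additional $\tfrac{\ell'(u)}{2d}$ via $A$, and this must be matched correctly on the $\widehat H$ side using $\widehat{\ell}(u)=\ell'(u)$. Once this bookkeeping is done, the identity of the two walks' restrictions to $S$ is immediate.
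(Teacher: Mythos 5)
Your proposal is correct and rests on the same key observation as the paper's proof: the lazy-walk transition probabilities agree on $S \times S$ because $E'(S)$ and the self-loops on $S$ are copied verbatim into $\widehat{H}$, and both graphs are $d$-regular. The paper packages this as an induction on $(PM)^t\1_S$ showing support containment in $S$ and coordinate-wise agreement, which is exactly the matrix-form of your path-by-path expansion, so the two proofs are the same argument written in different notation.
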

\begin{proof}
     Let $A \in \R^{V\times V}$ and $\widehat{A} \in \R^{U\times U}$ be the adjacency matrix of $H$ and $\widehat{H}$ respectively, and define the corresponding lazy random walk transition matrices as $M = \frac{1}{2}I+\frac{1}{2d}A$ and \smash{$\widehat{M} = \frac{1}{2}I+\frac{1}{2d}\widehat{A}$} (where we used that \smash{$\widehat{H}$} is $d$-regular, by \Cref{claim:H-regular}). Observe that $M_{u,v} = \widehat M_{u,v}$ for all $u,v \in S$. We also let $x$ and $\widehat{x}$ be $\1_S$ in $\R^V$ and $\R^U$ respectively, and define $P=\diag(x)$, $\widehat{P}=\diag(\widehat{x})$. Then, we have
  \begin{equation*}
      \Pr_{\walk \sim p^t_S[H]}\left[\forall \, r \in [t], \, w_r \in S\right] =  x^\top (P M)^t x\frac{1}{|S|} \, ,
  \end{equation*}
  and
  \begin{equation*}
      \Pr_{\walk \sim p^t_S[\widehat{H}]}\left[\forall \, r \in [t], \, w_r \in S\right] = \widehat{x}^\top (\widehat{P} \widehat{M})^t \widehat{x}\frac{1}{|S|} \, .
  \end{equation*}
  To prove the statement, it then suffices to show
\begin{equation}
\label{eq:ind-statement}
       \supp((P M)^t x), \ \supp((\widehat{P}\widehat{M})^t \widehat{x}) \subseteq S \quad \text{ and } \quad  \forall \, u \in S, \, \left( (P M)^t x\right)_u = \left( (\widehat{P} \widehat{M})^t \widehat{x} \right)_u \, .
\end{equation}
We argue this by induction on $t$. For the base case $t=1$, one has
\begin{equation*}
    \left( P M x\right)_u = \sum_{v \in S} M_{u,v} =  \sum_{v \in S} \widehat{M}_{u,v} = \left( \widehat{P} \widehat{M} \widehat{x}\right)_u
\end{equation*}
for all $u \in S$. Moreover, $\supp(P z) \subseteq S$ for all $z \in \R^V$ and $\supp(\widehat{P} \widehat{z}) \subseteq S$ for all $\widehat{z} \in \R^U$, so one in particular has \smash{$\supp(P M x), \ \supp(\widehat{P} \widehat{M} \widehat{x} ) \subseteq S$}. For the inductive step, let us assume that~\eqref{eq:ind-statement} holds for $t$. Again, since \smash{$\supp(P z), \ \supp(\widehat{P} \widehat{z})\subseteq S$} for all $z \in \R^V,\widehat{z} \in \R^U$, we have \smash{$\supp((P M)^{t+1} x)$}, \  \smash{$\supp((\widehat{P} \widehat{M} )^{t+1}\widehat{x} ) \subseteq S$}. Then, we are left with task of showing the second condition, so we fix $u \in S$ and write
\begin{equation*}
     \left( (P M)^{t+1} x\right)_u = \left( PM(P M)^{t} x\right)_u  = \sum_{v\in S} M_{u,v} \left((P M)^{t} x\right)_v\, ,
\end{equation*}
where the last equality uses that $\supp((PM)^t x) \subseteq S$ by the inductive hypothesis. Since by the inductive hypothesis we also know $\left((P M)^{t} x\right)_v =  \left((\widehat{P} \widehat{M} )^{t} \widehat{x} \right)_v$ for all $v \in S$, we continue from above and  get
\begin{equation*}
     \left( (P M)^{t+1} x\right)_u = \sum_{v\in S} M_{u,v} \left((P M)^{t} x\right)_v = \sum_{v\in S} \widehat{M} _{u,v} \left((\widehat{P}  \widehat{M} )^{t} \widehat{x} \right)_v\, .
\end{equation*}
Now, from the inductive hypothesis we also know \smash{$\supp((\widehat{P}  \widehat{M} )^{t} \widehat{x} ) \subseteq S$}, so we can continue from above to conclude
  \begin{align*}
      \left( (P M)^{t+1} x\right)_u = \sum_{v\in S} \widehat{M} _{u,v} \left((\widehat{P}  \widehat{M} )^{t} \widehat{x} \right)_v = \left( \widehat{P}  \widehat{M} (\widehat{P}  \widehat{M} )^{t}  \widehat{x}\right)_u  = \left( (\widehat{P}  \widehat{M} )^{t+1} \widehat{x} \right)_u \, .
  \end{align*}
\end{proof}
\noindent
By \Cref{claim:relate-rw}, we have
\begin{equation}
\label{eq:colprob-ub}
    \Pr_{\walk \sim p^t_S[H]}\left[\forall \, r \in [t], \, w_r \in S\right] =  \Pr_{\walk \sim  p^t_S[\widehat{H}]}\left[\forall \, r \in [t], \, w_r \in S\right] \le \Pr_{\walk \sim p^t_S[\widehat{H}]}\left[\walk \text{ ends in } S\right] \, .
\end{equation}
To bound the right-hand side above, we  want to resort to the mixing properties of expander. One can indeed argue that $\widehat{H}$ is an expander graph.
\begin{claim}\label{claim:H-expander}
    The graph $\widehat{H}$ is a $\psi/3$-expander, i.e $\omega(Q,U\setminus Q) \ge \psi/3 \cdot d |Q| $ for all $Q\subseteq U$ with $|Q| \le |U|/2$.  
\end{claim}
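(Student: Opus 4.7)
My plan is to split an arbitrary cut $Q \subseteq U$ into $Q_S = Q \cap S$ and $Q_T = Q \cap T$, decompose $\omega(Q, U \setminus Q)$ into the four contributions
\begin{equation*}
\omega(Q, U\setminus Q) = \omega(Q_S, S\setminus Q_S) + \omega(Q_S, T\setminus Q_T) + \omega(Q_T, S\setminus Q_S) + \omega(Q_T, T\setminus Q_T),
\end{equation*}
and then handle two cases based on the size of $Q_T$. The key soft ingredient I will rely on throughout is that $|T| \geq 20|S|$ (since $|T| = 20\lceil \exp(\psi^2 t/36)\rceil \cdot |S| \geq 20|S|$), which in particular ensures the ``dummy" weights $\frac{1}{|T|-1}(d - \frac{1}{|T|}|E'(S,V\setminus S)|) \geq \frac{19d}{20(|T|-1)}$ using $|E'(S,V\setminus S)| \leq d|S|$.

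\textbf{Case 1 ($|Q_T| \leq |T|/2$).} Here $|T\setminus Q_T| \geq |T|/2$. The edges between $Q_S$ and the rest decompose by definition of $\omega$ as
\begin{equation*}
\omega(Q_S, S\setminus Q_S) + \omega(Q_S, T\setminus Q_T) = |E'(Q_S, S\setminus Q_S)| + \tfrac{|T\setminus Q_T|}{|T|}|E'(Q_S, V\setminus S)| \geq \tfrac{1}{2}|E'(Q_S, V\setminus Q_S)| \geq \tfrac{\psi}{2}d|Q_S|,
\end{equation*}
where the last step uses the hypothesis that $|E'(Q', V\setminus Q')| \geq \psi d|Q'|$ for every $Q' \subseteq S$, applied to $Q' = Q_S$. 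For the dummy–dummy contribution, $\omega(Q_T, T\setminus Q_T) \geq |Q_T| \cdot \frac{|T|}{2} \cdot \frac{19d}{20(|T|-1)} \geq \frac{19}{40}d|Q_T| \geq \frac{\psi}{2}d|Q_T|$ since $\psi \leq 1$. Summing, $\omega(Q, U\setminus Q) \geq \frac{\psi}{2}d|Q| \geq \frac{\psi}{3}d|Q|$.

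\textbf{Case 2 ($|Q_T| > |T|/2$).} Here I will simply throw away all contributions except $\omega(Q_T, T\setminus Q_T)$. From $|Q| \leq |U|/2$ I get $|S\setminus Q_S| + |T\setminus Q_T| \geq |U|/2 \geq |T|/2$, and combining with $|S\setminus Q_S| \leq |S| \leq |T|/20$ yields $|T\setminus Q_T| \geq 9|T|/20$. Therefore
\begin{equation*}
\omega(Q, U\setminus Q) \geq \omega(Q_T, T\setminus Q_T) \geq |Q_T| \cdot \tfrac{9|T|}{20} \cdot \tfrac{19d}{20(|T|-1)} \geq \tfrac{1}{3}d|Q_T|.
\end{equation*}
Finally, since $|Q_S| \leq |S| \leq |T|/20 < |Q_T|/10$, we have $|Q_T| \geq \frac{10}{11}|Q|$, giving $\omega(Q, U\setminus Q) \geq \frac{10}{33}d|Q| \geq \frac{\psi}{3}d|Q|$, again using $\psi \leq 1$. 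No step is really the hard one here; the only thing to be careful about is verifying that the multiplicative slack provided by $|T| \geq 20|S|$ is enough to absorb the constants from the dummy-edge weights in Case 2 — but the factor 20 was chosen precisely for this.
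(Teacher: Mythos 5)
Your casing strategy is a genuinely different route from the paper's: rather than splitting on the size of $Q_T$, the paper writes $\omega(Q, U \setminus Q) \geq \omega(Q_S, U \setminus Q_S) - \omega(Q_S, Q_T) + \omega(Q_T, T \setminus Q_T)$ and bounds each side uniformly, deriving $|T \setminus Q_T| \geq \frac{19}{40}|T|$ and $\frac{|Q_T|}{|T|} \leq \frac{21}{40}$ from $|Q_T| \leq (|S|+|T|)/2$ with no case split. Both routes are sound in spirit, but your version needs its constants tightened.

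Concretely, Case~2 as written does not close. You round $\omega(Q_T, T\setminus Q_T) \geq |Q_T|\cdot\frac{9|T|}{20}\cdot\frac{19d}{20(|T|-1)}$ down to $\frac{1}{3}d|Q_T|$ (valid, since $\frac{9\cdot 19}{400} = \frac{171}{400} > \frac{1}{3}$), but then multiply by the slack factor $\frac{10}{11}$ from $|Q_T| \geq \frac{10}{11}|Q|$. The product $\frac{10}{33} \approx 0.303$ is \emph{strictly less} than $\frac{1}{3}$, so the claimed chain $\frac{10}{33}d|Q| \geq \frac{\psi}{3}d|Q|$ requires $\psi \leq \frac{10}{11}$, which is not guaranteed when $\psi \in (0,1)$. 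The fix is simply not to round too early: carry $\frac{171}{400}$ through to get $\frac{171}{400}\cdot\frac{10}{11} = \frac{171}{440} \approx 0.389 > \frac{1}{3} \geq \frac{\psi}{3}$, and the step goes through. A similar (but harmless) slip occurs in Case~1, where the intermediate claim $\omega(Q_T, T\setminus Q_T) \geq \frac{\psi}{2}d|Q_T|$ — and hence the stated intermediate bound $\omega(Q, U\setminus Q) \geq \frac{\psi}{2}d|Q|$ — only holds for $\psi \leq \frac{19}{20}$; there the final goal of $\frac{\psi}{3}$ still follows directly from $\frac{\psi}{2}d|Q_S| + \frac{19}{40}d|Q_T|$ since both coefficients exceed $\frac{\psi}{3}$, so Case~1 is correct once you drop the misleading $\frac{\psi}{2}$ intermediate.
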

\noindent
Before proving this claim, we show that the statement of \Cref{claim:exp-like_sets} now easily follows from~\eqref{eq:colprob-ub}. Let \smash{$\widehat{A},\widehat{\mathcal{L}},\widehat{M}$} be the adjacency matrix, normalized Laplacian, and lazy random walk transition matrix of \smash{$\widehat{H}$} respectively, i.e. $\widehat{\mathcal{L}}=I-\frac{1}{d}\widehat{A}$ and $\widehat{M} = \frac{1}{2}I+\frac{1}{2d}\widehat{A} = I-\frac{1}{2}\widehat{\mathcal{L}}'$. Also, let $0=\lambda_1 \le \lambda_2 \le \dots \le \lambda_{|U|}$ be the eigenvalues of $\mathcal{L}'$, and let $x_1,\dots,x_{|U|}$ be the corresponding eigenvectors, where $x_1=\frac{1}{\sqrt{|U|}} \1$. Then, one has
\begin{equation*}
    \Pr_{\walk \sim p^t_S[\widehat{H}]}\left[\walk \text{ ends in } S\right] = \1_S^\top \widehat{M}^t\1_S\frac{1}{|S|} = \frac{1}{|S|}\sum_{i = 1}^{|U|} \left(1-\frac{\lambda_i}{2}\right)^t \langle \1_S, x_i\rangle^2 = \frac{|S|}{|U|} + \frac{1}{|S|}\sum_{i = 2}^{|U|} \left(1-\frac{\lambda_i}{2}\right)^t \langle \1_S, x_i\rangle^2 \, .
\end{equation*}
Here the last equality follows from the fact that $\lambda_1 = 0$ and $x_1 = \1_U/\sqrt{|U|}$. By Cheeger's inequality, we know that $\lambda_2 \ge (\psi/3)^2/2$, so we continue from above and conclude
\begin{equation*}
    \Pr_{\walk \sim p^t_S[H]}\left[\forall \, r \in [t], \, w_r \in S\right] \le \Pr_{\walk \sim p^t_S[\widehat{H}]}\left[\walk \text{ ends in } S\right] \le \frac{|S|}{|U|}+ \frac{1}{|S|}\sum_{i = 1}^{|U|} \exp\left(-\frac{\psi^2}{36}t\right) \langle \1_S, x_i\rangle^2 =  \frac{|S|}{|U|}+ \exp\left(-\frac{\psi^2}{36}t\right) \, ,
\end{equation*}
where the first inequality uses~\eqref{eq:colprob-ub}. Recalling that $|U| \ge \lceil \exp(\psi^2 t /36)\rceil \cdot |S|$, we get the statement of \Cref{claim:exp-like_sets}. We finish by showing \Cref{claim:H-expander}.

\begin{proof}[Proof of \Cref{claim:H-expander}]
Given a set $Q \subseteq U$ with  $Q \leq |U|/2$, split $Q$ into $Q_S = Q \cap S$ and $Q_T \coloneqq Q \cap T$. We note that
\begin{equation}\label{eq:E(T, V(H)setminus T)}
\begin{aligned}
\omega(Q, U \setminus Q) & =  \omega(Q_S, U \setminus Q_S) - \omega(Q_S, Q_T) + \omega(Q_T, U\setminus Q)\\
&\geq \omega(Q_S, U \setminus Q_S) - \omega(Q_S, Q_T) + \omega(Q_T,T \setminus Q_T) \, .
\end{aligned}
\end{equation}
\noindent
We now bound the three terms, starting with $\omega(Q_T,T \setminus Q_T)$. 
By construction of $H$, we have
\begin{equation}\label{eq:TD_cut}
\omega(Q_T,T \setminus Q_T) = |Q_T|\cdot |T \setminus Q_T|\cdot \frac{1}{|T| - 1}\left(d-\frac{1}{|T|}|E'(S,V\setminus S)|\right) \, .
\end{equation}
By the assumption that $t,\psi \geq 0$, we get $|T| \ge 20(\exp(\psi^2t/36))|S| \geq 20|S|$, which rearranges to 
\begin{equation}\label{eq:S_size}
 |S| \leq \frac{|T|}{20}.
 \end{equation}
Thus, 
$|E'(S, V\setminus S)| \leq d\cdot |S| \leq d \cdot \frac{|T|}{20}$, which together with Equation~\eqref{eq:TD_cut} gives
\[\omega(Q_T,T \setminus Q_T) \geq 0.94\cdot|Q_T|\cdot |T \setminus Q_T|\cdot \frac{d}{|T|} \, .\]
Using Equation~\eqref{eq:S_size} again, and noting that $|Q_T| \leq |Q| \leq |U|/2 = (|S| + |T|)/2$, we have 
$|T \setminus Q_T|\geq  |T| -\frac{1}{2}(|S|+ |T|) =  \frac{1}{2}|T| - \frac{1}{2}|S| \geq\frac{19}{40}|T|$. Hence, 
\begin{equation}\label{eq:E(T_D, V(H)...)}
    \omega(Q_T,T \setminus Q_T) \geq 0.94 \cdot \frac{19}{40}d \cdot |Q_T|  > \frac{\psi}{3}d \cdot |Q_T|
\end{equation}
where the last inequality used the assumption that $\psi \leq 1$. Next, we bound the two terms $  \omega(Q_S, U \setminus Q_S)  - \omega(Q_S, Q_T)$ together. By construction of $\widehat{H}$, we get 

\begin{equation}\label{eq:E(T_S, T_D)}
     \omega(Q_S, Q_T)  = \sum_{u \in Q_S}\frac{1}{|T|}|E'(\{u\}, V\setminus S)|\cdot|Q_T|  = |E'(Q_S, V\setminus S)|\cdot \frac{|Q_T|}{|T|}  \leq  |E'(Q_S, V\setminus Q_S)| \cdot \frac{2}{3}\, ,  
\end{equation}
where the last inequality uses $Q_S \subseteq S$ and $\frac{|Q_T|}{|T|} \leq \frac{|S| + |T|}{2|T|} \leq \frac{21}{40}|T|$ as per \Cref{eq:S_size}. The definition of $\omega$ for edges incident on $S$, in conjunction with the expansion property of $S$ in $H$, gives
\begin{equation}
    \label{eq:eedges_leaving_S'}
      \omega(Q_S, U \setminus Q_S)  = |E'(Q_S,V\setminus Q_S)| \ge \psi d |Q_S| \, .
\end{equation}
Combining~\eqref{eq:eedges_leaving_S'} with~\eqref{eq:E(T_S, T_D)}, we get 
\begin{equation}\label{eq:(1 - T_D^2/n^2)}
 \omega(Q_S, U \setminus Q_S)  - \omega(Q_S, Q_T)\ge  |E'(Q_S,V\setminus Q_S)| -|E'(Q_S, V\setminus Q_S)| \cdot \frac{2}{3} \ge \frac{\psi}{3}d |Q_S|\, .
\end{equation}

Now we plug~\eqref{eq:E(T_D, V(H)...)} and~\eqref{eq:(1 - T_D^2/n^2)} into~\eqref{eq:E(T, V(H)setminus T)} and get
\[\omega(Q, U \setminus Q)\geq \frac{\psi}{3}d\cdot |Q_S| + \frac{\psi}{3}d\cdot|Q_T| = \frac{\psi}{3}d\cdot|Q| \, ,\]
which concludes the proof.
\end{proof}

\end{proof}

\subsection{Proof of \Cref{claim:relategraphs}}
\label{subsec:relating}
\relategraphs*
\begin{proof}[Proof of \Cref{claim:relategraphs}]
  Let $A_1$ and $A_2$ be the adjacency matrix of $H_1$ and $H_2$ respectively, and define the corresponding lazy random walk transition matrices as $M_i = \frac{1}{2}I+\frac{1}{2d}A_i$ for $i =1,2$. Then, letting $P=\diag(\1_S)$ we have
  \begin{equation*}
      \Pr_{\walk \sim p^t_S[H_i]}\left[\forall \, r \in [t], \, w_r \in S\right] = \1_S^\top (P M_i)^t\1_S\frac{1}{|S|} \, .
  \end{equation*}
  By virtue of this rewriting, it suffices to show
  \begin{equation}
  \label{eq:matrixgoal}
      \1_S^\top (P M_1)^t\1_S\frac{1}{|S|}  \le \1_S^\top (P M_2)^t\1_S\frac{1}{|S|} + t\frac{|E_2\setminus E_1|}{2d|S|} \, .
  \end{equation}
  
  Consider the matrix $M_1 - M_2$. Since both $H_1$ and $H_2$ are $d$-regular, the sum of values in any row of $M_1 - M_2$ is 0. Moreover, $(M_1 - M_2)_{u, v} = -1/(2d)$ for all $u \neq v$ if $(u, v) \in E_2\setminus E_1$ and 0 otherwise, which means that $(M_1 - M_2)_{u, u} = (l_1(u) - l_2(u))/(2d) = |\{v \in V: (u, v) \in E_2\setminus E_1\}|/(2d)$.
  
  Let  $H=(V,E_2\setminus E_1)$ be the graph induced by the edges $E_2\setminus E_1$. Observe that \[M_1-M_2 = \frac{1}{2d} L_H,\] where $L_H$ is the unnormalised Laplacian of $H$.  We apply \Cref{fact:frechetderivative} with $X=PM_1$ and $Y = PM_2$, and get
  \begin{equation*}
      (PM_1)^t = (PM_2)^t + \frac{1}{2d} \sum_{i=0}^{t-1} (PM_2)^i P  L_H (PM_1)^{t-i-1} \, .
  \end{equation*}
  In particular, this gives
  \begin{equation*}
      \1_S^\top (P M_1)^t\1_S\frac{1}{|S|}  = \1_S^\top (P M_2)^t\1_S\frac{1}{|S|} + \frac{1}{2d|S|}  \sum_{i=0}^{t-1} \1_S^\top(PM_2)^i P  L_H (PM_1)^{t-i-1}\1_S \, .
  \end{equation*}
    We use the following claim to bound each term in the sum above.
  \begin{claim}
  \label{claim:boundterms}
      For  all $i = 0,\dots, t-1$, one has
    \begin{equation*}
         \1_S^\top(PM_2)^i P  L_H (PM_1)^{t-i-1}\1_S \le |E_2\setminus E_1| \, .
    \end{equation*}
  \end{claim}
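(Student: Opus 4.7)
\medskip

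\noindent
The plan is to expand $L_H$ as a sum of edge contributions and exploit the fact that every such edge crosses the cut $(S, V\setminus S)$. First I would write
\[
L_H = \sum_{(u,v) \in E_2 \setminus E_1} (e_u-e_v)(e_u-e_v)^\top,
\]
and for each such edge use the hypothesis $E_2 \setminus E_1 \subseteq E(S, V\setminus S)$ to fix, WLOG, $u \in S$ and $v \notin S$. Setting $c^\top := \1_S^\top(PM_2)^i$ and $a := (PM_1)^{t-i-1}\1_S$, the bilinear form becomes
\[
\1_S^\top(PM_2)^i P L_H (PM_1)^{t-i-1}\1_S \;=\; \sum_{\substack{(u,v) \in E_2\setminus E_1 \\ u \in S, v \notin S}} (c^\top P(e_u-e_v))\,((e_u-e_v)^\top a),
\]
and the leading $P$ annihilates the $e_v$ term while $a_v = 0$ since $(PM_1)^{t-i-1}$ has a leading $P$ (for $t-i-1\geq 1$; the case $t-i-1=0$ gives $a=\1_S$, which is trivially supported on $S$). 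Thus each edge $(u,v)$ contributes simply $c^\top_u \cdot a_u$.

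The crux is then showing $c^\top_u, a_u \in [0,1]$ for $u \in S$ via a probabilistic interpretation. For $a_u$ this is direct: $a_u = \sum_{w \in S}((PM_1)^{t-i-1})_{u,w}$ equals the probability that a $(t-i-1)$-step lazy random walk on $H_1$ started at $u$ keeps all of its intermediate vertices (and its endpoint) inside $S$, which is at most $1$. For $c^\top_u$ the same claim is less obvious because the sum $\sum_{v \in S}((PM_2)^i)_{v,u}$ ranges over $|S|$ different starting points, so the naive union bound would give $|S|$. The trick is to use symmetry of $M_2$ (undirected graph) and reverse every path: $((PM_2)^i)_{v,u} = ((M_2 P)^i)_{u,v}$, and summing the right-hand side over $v \in S$ collapses to the probability
\[
c^\top_u \;=\; \Pr\bigl[\text{lazy walk on }H_2\text{ started at }u\text{ has positions }1,\dots,i\text{ all in }S\bigr] \;\le\; 1.
\]

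Once each summand is bounded by $1$, the claim follows from the combinatorial fact that, since every edge of $E_2 \setminus E_1$ has exactly one endpoint in $S$, one has $\sum_{u \in S}\deg_H(u) = |E_2 \setminus E_1|$, so
\[
\sum_{\substack{(u,v) \in E_2\setminus E_1 \\ u \in S, v \notin S}} c^\top_u a_u \;\le\; \sum_{u \in S} \deg_H(u) \;=\; |E_2 \setminus E_1|.
\]
The main obstacle is the path-reversal step: without exploiting symmetry of $M_2$ via $(PM_2)^i \leftrightarrow (M_2P)^i$, the row sums of $(PM_2)^i$ restricted to $S$ do not admit a clean walk interpretation, and a naive termwise bound loses a factor of $|S|$.
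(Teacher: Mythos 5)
Your proof is correct and takes the same route as the paper's: expand $L_H$ over the edges of $E_2\setminus E_1$, use the cut condition $E_2\setminus E_1\subseteq E(S,V\setminus S)$ together with the fact that both vectors vanish outside $S$ to reduce each edge's contribution to a single diagonal product $c_u a_u$, and bound each factor by $1$ as a lazy-walk ``stay in $S$'' probability.

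One point worth flagging: the paper's writeup of this step contains an error that your argument quietly repairs. After expanding $\1_S=\sum_{z\in S}\1_z$, the paper asserts $y_u=\sum_{z\in S}\1_u^\top(PM_2)^i\1_z = \1_u^\top(PM_2)^i\1_u$, i.e.\ it drops all cross terms $z\neq u$, and then bounds only the return probability $\1_u^\top(PM_2)^i\1_u$ by $1$. Since the true factor $y_u=\1_u^\top(PM_2)^i\1_S$ dominates that return probability, the bound as written does not establish $y_u\le 1$. Your transpose step $\1_S^\top(PM_2)^i\1_u=\1_u^\top(M_2P)^i\1_S$ and the resulting interpretation as $\Pr[\text{walk from }u\text{ stays in }S\text{ for }i\text{ steps}]$ is exactly the correct bound on the full sum — it is the same reasoning the paper already applies (correctly) to the second factor $\1_u^\top(PM_1)^{t-i-1}\1_S$, just applied to the left factor as well.
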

\begin{proof}[Proof of \Cref{claim:boundterms}]
  Let $i = 0,\dots, t-1$, and define $x = (PM_1)^{t-i-1}\1_S$ and $y = P (M_2 P)^i \1_S = (PM_2)^i \1_S$. We note that $x,y$ are vectors with support contained in $S$. Observe that $L_H = B_H^TB_H$, where $B_H \in \R^{|E_2\setminus E_1|\times |V|}$ is the incidence matrix of $H$, where, just to formally define $B_H$, edges in $H$ are directed arbitrarily. Since $E_2\setminus E_1 \subseteq E(S,V\setminus S)$, one has
  \begin{equation*}
  \label{eq:quadform}
      y^\top L_H x = \sum_{\substack{(u,v) \in E_2\setminus E_1; \\ u \in S, \  v\in V\setminus S}} (y_u-y_v)(x_u-x_v) = \sum_{\substack{(u,v) \in E_2\setminus E_1; \\ u \in S, \  v\in V\setminus S}} y_u x_u  \, ,
  \end{equation*}where the last equation follows from the fact that $x$ and $y$ are both 0 at any coordinate $v$ such that $v \notin S$. 
  To continue, let us fix $u \in S$ and expand out the definition of $x,y$ to write 
  \begin{align*}
      y_u x_u & = \1_u^\top (PM_2)^i \1_S \1_u^\top  (PM_1)^{t-i-1} \1_S \\
      & = \sum_{z \in S} \1_u^\top (PM_2)^i \1_z \1_u^\top  (PM_1)^{t-i-1} \1_S \\
      & = \left(\1_u^\top (PM_2)^i \1_u\right) \left(\1_u^\top  (PM_1)^{t-i-1} \1_S \right) \, .
  \end{align*}
  We now bound both factors above by $1$: if $i = 0$, then trivially $\1_u^\top (PM_2)^i \1_u = 1$, otherwise we can write
  \begin{equation*}
      \1_u^\top (PM_2)^i \1_u = \Pr_{\walk \sim p^i_u[H_2]}\left[w_i=u \text{ and } \forall \, r \in [i], \, \walk \in S \right]  \le 1 \, ,
  \end{equation*}
  and analogously if $i = t-1$ we trivially have $\1_u^\top  (PM_1)^{t-i-1} \1_S = 1$, otherwise we bound
  \begin{equation*}
      \1_u^\top  (PM_1)^{t-i-1} \1_S = \Pr_{\walk \sim p^{t-i-1}_u[H_1]}\left[ \forall \, r \in [i], \, \walk \in S \right]  \le 1 \, .
  \end{equation*}
  Hence, every term in~\eqref{eq:quadform} is bounded by $1$ and we can conclude.
  \end{proof}
  \noindent
  One can see that \Cref{claim:boundterms} gives~\eqref{eq:matrixgoal}, and the statement is proven.
\end{proof}

\begin{fact}
    \label{fact:frechetderivative}
    Let $X,\Delta$ be real square matrices, let $Y = X-\Delta$, and let $t \ge 1$ be an integer. Then, one has
    \begin{equation*}
        X^t = Y^t + \sum_{i=0}^{t-1} Y^i \Delta X^{t-i-1} \, .
    \end{equation*}
\end{fact}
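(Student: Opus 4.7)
The plan is to prove this by recognizing the right-hand side as a telescoping sum. Since $Y = X - \Delta$, we have $\Delta = X - Y$, so the summand on the right becomes $Y^i(X-Y)X^{t-i-1} = Y^i X^{t-i} - Y^{i+1} X^{t-i-1}$. This is the key identity that makes everything collapse.

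Concretely, I would substitute $\Delta = X - Y$ into $\sum_{i=0}^{t-1} Y^i \Delta X^{t-i-1}$ and split each term into the two pieces above. Setting $a_i := Y^i X^{t-i}$, the summand is $a_i - a_{i+1}$, so the sum telescopes to $a_0 - a_t = X^t - Y^t$. Rearranging yields the claimed identity. This requires only that $X$ and $Y$ be square matrices of the same size; no assumption such as commutativity of $X$ and $Y$ is needed, which is important because in the application in Claim~\ref{claim:relategraphs} the matrices $PM_1$ and $PM_2$ do not commute.

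A secondary (equivalent) approach would be induction on $t$: the base case $t=1$ reads $X = Y + \Delta$, which holds by definition of $\Delta$; for the inductive step one writes $X^{t+1} = Y \cdot X^t + \Delta \cdot X^t$, applies the inductive hypothesis to the first factor after a small reshuffling, and reindexes. However, induction here is strictly more cumbersome than the telescoping argument, since the telescoping reveals exactly which intermediate terms cancel without any case analysis.

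There is no real obstacle: the statement is a purely algebraic identity, and the only thing to watch is the non-commutativity of $X$ and $Y$ (so one must preserve the left-to-right order $Y^i \Delta X^{t-i-1}$ throughout and be careful that both halves of the split, $Y^i X^{t-i}$ and $Y^{i+1} X^{t-i-1}$, are written with $Y$-powers on the left and $X$-powers on the right so that consecutive terms genuinely cancel).
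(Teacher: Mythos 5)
Your proof is correct. The paper proves \Cref{fact:frechetderivative} by induction on $t$: the base case is $X = Y + \Delta$, and the inductive step multiplies the hypothesis on the right by $X$ and uses $Y^t X = Y^{t+1} + Y^t\Delta$ to absorb the new term into the sum. You instead prove it directly by a telescoping argument: substituting $\Delta = X-Y$ turns the summand into $Y^i X^{t-i} - Y^{i+1}X^{t-i-1}$, so with $a_i := Y^i X^{t-i}$ the sum collapses to $a_0 - a_t = X^t - Y^t$. Both are elementary and handle non-commutativity correctly; the telescoping route is slightly more illuminating, since it makes the cancellation structure explicit in one pass rather than hiding it in the inductive bookkeeping, whereas the induction is more in keeping with how the identity is typically first discovered. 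You also correctly observe (and the paper implicitly relies on this in Claim~\ref{claim:relategraphs}) that no commutativity of $X$ and $Y$ is needed. The two proofs are essentially equivalent in content but differ in presentation; either is a complete and valid argument.
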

\begin{proof}
    We prove the claim by induction on $t$. The base case for $t=1$ is trivial as $X = Y+\Delta$. For the inductive step, let us assume that
    \begin{equation*}
        X^t = Y^t + \sum_{i=0}^{t-1} Y^i \Delta X^{t-i-1} \, .
    \end{equation*}
    Then, we have
    \begin{equation*}
        X^{t+1} = X^t \cdot X = \left(Y^t + \sum_{i=0}^{t-1} Y^i \Delta X^{t-i-1}\right)X
    \end{equation*}
    by the inductive hypothesis. We distribute $X$ and obtain
    \begin{equation*}
        X^{t+1} = Y^t X + \sum_{i=0}^{t-1} Y^i \Delta X^{t-i-1} X = Y^{t+1} +Y^t\Delta + \sum_{i=0}^{t-1} Y^i \Delta X^{(t+1)-i-1} = Y^{t+1} + \sum_{i=0}^{t} Y^i \Delta X^{(t+1)-i-1} \, ,
    \end{equation*}
    which proves the identity for $t+1$.
\end{proof}

\subsection{Proof of Lemma \ref{claim:projection_k}}\label{sec:projection_k}
\projectionk*
\begin{proof}[Proof of Lemma \ref{claim:projection_k}]
For every real symmetric matrix $A$, it holds that $\|A\|_{\op} = \sqrt{\lambda_{\max}(A^\top A)} = \max_{i \in [n]} \{ |\lambda_i(A)|\}$. Therefore, we have \begin{equation}\label{eq:P_op_main}
    \begin{aligned}
       \| P - P^* \|_{\op}  & = \max_{\|x\|_2 = 1}\|(P - P^*)x\|_2 \\
       & = \max_{i \in [n]} \{|\lambda_i(P-P^*)| \} \qquad  \text{ since $P - P^*$ is symmetric}\\
       & = \max_{\|x\|_2 = 1} | x^\top(P-P^*)x| \\
       & = \max_{\|x\|_2 = 1} \left|x^\top\left(\sum_{i = 1}^k {v_i^* v^*_i}^{\top}- \sum_{i = 1}^{k}v_i v_i^{\top}\right)x\right| \, . \\
    \end{aligned}
    \end{equation}
\begin{claim}\label{claim:inspan}
    For all $x \in \spn(\{v_i\}_{i=1}^k)$, it holds that 
    $$\left|x^\top\left(\sum_{i = 1}^k {v_i^* v^*_i}^{\top}- \sum_{i = 1}^{k}v_i v_i^{\top}\right)x\right|\leq \frac{4\sqrt{\epsilon}}{\phi}\|x\|_2^2 \, .$$
\end{claim}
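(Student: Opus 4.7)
My plan is to reduce the claim directly to \Cref{lemma:gklmsL9}, which is the variance-type bound on the cluster means $\mu_i$ stated earlier. The key observation is that the bottom $k$ eigenvectors of $\Lin$ can be chosen to be the normalised indicators of the connected components of the internal-edge graph, namely $v_i^* = \frac{1}{\sqrt{|C_i|}} \1_{C_i}$ for each $i \in [k]$ (these are orthonormal, lie in the kernel of $\Lin$, and span the kernel since $\Lin$ has exactly $k$ connected components of nonzero expansion). With this choice, the cluster mean $\mu_i$ from \Cref{def:clusmeans} satisfies
\[
\mu_i \;=\; \frac{1}{|C_i|} \sum_{u\in C_i} f_u \;=\; \frac{1}{|C_i|} U_{[k]}^\top \1_{C_i} \;=\; \frac{1}{\sqrt{|C_i|}} U_{[k]}^\top v_i^*,
\]
so that $|C_i|\mu_i \mu_i^\top = U_{[k]}^\top v_i^* (v_i^*)^\top U_{[k]}$, and consequently
\[
\sum_{i=1}^k |C_i|\mu_i\mu_i^\top \;=\; U_{[k]}^\top\!\Big(\sum_{i=1}^k v_i^* (v_i^*)^\top\Big) U_{[k]}.
\]

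The rest of the argument is a change of coordinates. For $x \in \spn(\{v_i\}_{i=1}^k)$, write $x = U_{[k]} \alpha$ with $\alpha \in \R^k$ and $\|\alpha\|_2 = \|x\|_2$ (since $U_{[k]}$ has orthonormal columns). Then, using that $U_{[k]}^\top U_{[k]} = I_k$,
\[
x^\top \sum_{i=1}^k v_i v_i^\top x \;=\; \alpha^\top U_{[k]}^\top U_{[k]} U_{[k]}^\top U_{[k]} \alpha \;=\; \alpha^\top I_k\, \alpha,
\]
while by the identity above
\[
x^\top \sum_{i=1}^k v_i^* (v_i^*)^\top x \;=\; \alpha^\top U_{[k]}^\top\!\Big(\sum_{i=1}^k v_i^* (v_i^*)^\top\Big) U_{[k]} \alpha \;=\; \alpha^\top \Big(\sum_{i=1}^k |C_i|\mu_i\mu_i^\top\Big) \alpha.
\]

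Subtracting the two displays yields
\[
\Big|x^\top\!\Big(\sum_{i=1}^k v_i^* (v_i^*)^\top - \sum_{i=1}^k v_i v_i^\top\Big) x\Big| \;=\; \Big|\alpha^\top \Big(I_k - \sum_{i=1}^k |C_i| \mu_i \mu_i^\top\Big) \alpha\Big|,
\]
and applying \Cref{lemma:gklmsL9} to the unit vector $\alpha/\|\alpha\|_2$ bounds the right-hand side by $\tfrac{4\sqrt{\epsilon}}{\phi}\|\alpha\|_2^2 = \tfrac{4\sqrt{\epsilon}}{\phi}\|x\|_2^2$, as claimed. The only step that requires any verification beyond bookkeeping is the identification $v_i^* = \frac{1}{\sqrt{|C_i|}}\1_{C_i}$, which follows from the fact that $\Lin$ is the normalised Laplacian of a $d$-regular graph whose connected components are exactly $G\{C_1\},\dots,G\{C_k\}$ and each has $\Phi \ge \phi > 0$, so its kernel has dimension exactly $k$ and is spanned by these indicators.
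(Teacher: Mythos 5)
Your proposal is correct and follows essentially the same route as the paper's proof: identify the bottom eigenvectors of $\Lin$ with the normalised cluster indicators $\frac{1}{\sqrt{|C_i|}}\1_{C_i}$, express the cluster means as $\mu_i = \frac{1}{|C_i|}U_{[k]}^\top\1_{C_i}$, change coordinates via $x = U_{[k]}\alpha$ to reduce the quadratic form to $\alpha^\top(I_k - \sum_i |C_i|\mu_i\mu_i^\top)\alpha$, and invoke \Cref{lemma:gklmsL9}. The only cosmetic difference is that the paper justifies the identification by observing the projector $\sum_i v_i^*(v_i^*)^\top$ is basis-independent rather than fixing a particular choice of $v_i^*$, but the substance is identical.
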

\begin{proof}
    Suppose $x \in \spn(\{v_i\}_{i=1}^k)$. Then $x = U_{[k]}\alpha$ for some $\alpha \in \R^k$. 
    We will now show that 
$$x^\top\left(\sum_{i = 1}^{k}v_i v_i^{\top} - \sum_{i = 1}^k {v_i^* v^*_i}^{\top} \right)x =\alpha^\top \left(I -  \sum_{i =1}^k |C_i| \mu_i \mu_i ^\top \right)\alpha \, . $$

\noindent
    First, note that 
    \begin{equation}\label{eq:sumvi}
        \sum_{i=1}^k v_i v_i^\top = U_{[k]}U_{[k]}^\top \, .
    \end{equation}
\noindent
    Furthermore, note that $\{(1/\sqrt{|C_i|})\cdot\mathbbm{1}_{C_i}\}_{i=1}^{k}$ is a set of $k$ orthonormal eigenvectors of $\Lin$ with eigenvalue $0$,  and therefore they span the eigenspace of $\Lin$ corresponding to eigenvalue $0$ (otherwise, the orthogonal component gives an additional $(k+1)$-th $0$-eigenvalue, which is impossible as $\lambda_{k+1}\geq \phi^2/2$ by \Cref{lem:bnd-lambda}). Consequently, we have 
    $$\sum_{i = 1}^kv_i^*{v_i^*}^\top = \sum_{i = 1}^k \frac{1}{|C_i|}\mathbbm{1}_{C_i}\mathbbm{1}_{C_i}^\top \, .$$
We also have that 
    \begin{align*}
     \mu_i &=\frac{1}{|C_i|}\sum_{u\in C_i }f_u = \frac{1}{|C_i|}\sum_{u \in C_i }U_{[k]}^\top \mathbbm{1}_u 
     = \frac{1}{|C_i|}U_{[k]}^\top \sum_{u \in C_i}\mathbbm{1}_u  = \frac{1}{|C_i|}U_{[k]}^\top\mathbbm{1}_{C_i}\, ,
    \end{align*}
which gives
\begin{equation}\label{eq:sumvi*}
    \sum_{i=1}^k |C_i|\mu_i \mu_i^\top  =   \sum_{i=1}^k \frac{1}{|C_i|}U_{[k]}^\top \mathbbm{1}_{C_i}\mathbbm{1}_{C_i}^\top U_{[k]}=  U_{[k]}^\top \left(\sum_{i=1}^k \frac{1}{|C_i|} \mathbbm{1}_{C_i} \mathbbm{1}_{C_i}^\top \right) U_{[k]}  =  U_{[k]}^\top \left(\sum_{i=1}^k v_i^*{v_i^*}^\top \right) U_{[k]}\, . 
\end{equation}
Combining~\eqref{eq:sumvi} and~\eqref{eq:sumvi*}, and recalling that $x = U_{[k]}\alpha$ for some $\alpha \in \R^k$,  we obtain
\begin{align*}
x^\top\left( \sum_{i = 1}^{k}v_i v_i^{\top} - \sum_{i = 1}^k v_i^* {v^*_i}^{\top}\right)x &=\alpha^\top U_{[k]}^\top \left(\sum_{i = 1}^{k}v_i v_i^{\top} - \sum_{i = 1}^k {v_i^* v^*_i}^{\top}\right)U_{[k]}\alpha \\
& = \alpha^\top  U_{[k]}^\top \left(\sum_{i=1}^k v_i v_i^{\top} \right)U_{[k]}\alpha -  \alpha^\top  U_{[k]}^\top \left(\sum_{i=1}^k v_i^* {v_i^*}^{\top} \right)U_{[k]}\alpha \\
& = \alpha^\top  U_{[k]}^\top U_{[k]}U_{[k]}^\top U_{[k]} \alpha - \alpha^\top \left(  \sum_{i=1}^k |C_i|\mu_i \mu_i^\top \right)\alpha \\
& = \alpha^\top\left(I - \sum_{i=1}^k |C_i|\mu_i \mu_i^\top \right)\alpha\, ,
\end{align*}
where the last equality follows by the fact that $U_{[k]}^\top U_{[k]} = I_k.$
Thus, applying Lemma \ref{lemma:gklmsL9}, we obtain 
\begin{equation*}
    \left|x^\top\left(\sum_{i = 1}^k {v_i^* v^*_i}^{\top}- \sum_{i = 1}^{k}v_i v_i^{\top}\right)x\right|  =  \left| \alpha^\top\left(I - \sum_{i=1}^k |C_i|\mu_i \mu_i^\top \right)\alpha \right| \leq \frac{4\sqrt{\epsilon}}\phi\|\alpha\|^2_2 = \frac{4\sqrt{\epsilon}}\phi\|U_{[k]}^\top x\|^2_2  = \frac{4\sqrt{\epsilon}}\phi\|x\|_2^2\, . 
\end{equation*}
\end{proof}
\begin{claim}\label{claim:inspanperp}
    For all $x \in \spn\left(\{ v_i\}_{i = 1}^k\right) ^{\perp}$, it holds that 
     $$\left|x^\top\left(\sum_{i = 1}^k {v_i^* v^*_i}^{\top}- \sum_{i = 1}^{k}v_i v_i^{\top}\right)x\right|\leq k \frac{4\sqrt{\epsilon}}{\phi}\|x\|_2^2 \, .$$
\end{claim}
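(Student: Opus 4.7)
The plan is to exploit the fact that $x$ lives in the high-eigenvalue subspace of $\mathcal{L}$, while the vectors $v_i^*$ --- being supported on sets of small conductance --- are nearly orthogonal to this subspace.

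First, since $x \in \spn(\{v_i\}_{i=1}^k)^\perp$ we have $\left(\sum_{i=1}^k v_i v_i^\top\right)x = 0$, so the quantity to bound collapses to
\begin{equation*}
\left| x^\top \!\left(\sum_{i=1}^k v_i^*{v_i^*}^\top - \sum_{i=1}^k v_i v_i^\top\right)\! x\right| = \sum_{i=1}^k \langle v_i^*, x\rangle^2.
\end{equation*}
This will reduce the problem to bounding the squared projections $\langle v_i^*, x\rangle^2$.

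Next, I would show that each $v_i^*$ has only a small component in the top-$(n-k)$ eigenspace of $\mathcal{L}$. Recall from the proof of Claim~\ref{claim:inspan} that $v_i^* = \1_{C_i}/\sqrt{|C_i|}$. A direct computation using $d$-regularity gives ${v_i^*}^\top \mathcal{L}\, v_i^* = |E(C_i, V\setminus C_i)|/(d|C_i|) = \Phi_G(C_i) \le \epsilon$. Writing $v_i^* = v_{i,1}^* + v_{i,2}^*$ where $v_{i,1}^* \in \spn(\{v_j\}_{j=1}^k)$ and $v_{i,2}^* \in \spn(\{v_j\}_{j\ge k+1})$, expanding in the eigenbasis and using $\lambda_{k+1} \ge \phi^2/2$ (by \Cref{lem:bnd-lambda}) yields
\begin{equation*}
\epsilon \ge {v_i^*}^\top \mathcal{L}\, v_i^* = \sum_{j=1}^n \lambda_j \langle v_i^*, v_j\rangle^2 \ge \lambda_{k+1}\, \|v_{i,2}^*\|_2^2 \ge \frac{\phi^2}{2}\, \|v_{i,2}^*\|_2^2,
\end{equation*}
so $\|v_{i,2}^*\|_2^2 \le 2\epsilon/\phi^2$.

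Finally, since $x \perp \spn(\{v_j\}_{j=1}^k)$, only the $v_{i,2}^*$ component of $v_i^*$ contributes to $\langle v_i^*, x\rangle$; that is, $\langle v_i^*, x\rangle = \langle v_{i,2}^*, x\rangle$. Applying Cauchy--Schwarz term by term and summing,
\begin{equation*}
\sum_{i=1}^k \langle v_i^*, x\rangle^2 = \sum_{i=1}^k \langle v_{i,2}^*, x\rangle^2 \le \|x\|_2^2 \sum_{i=1}^k \|v_{i,2}^*\|_2^2 \le \frac{2k\epsilon}{\phi^2}\, \|x\|_2^2 \le k\cdot \frac{4\sqrt{\epsilon}}{\phi}\, \|x\|_2^2,
\end{equation*}
where the last step uses $\sqrt{\epsilon}/\phi \le 2$, which is implied by the parameter assumptions in \Cref{fig:setting_reweight}. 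No step is particularly delicate here; the only thing to verify carefully is the computation ${v_i^*}^\top \mathcal{L}\, v_i^* = \Phi_G(C_i)$, which I would spell out using $d$-regularity and the definition of $\mathcal{L}$.
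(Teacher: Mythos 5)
Your proof is correct and takes a genuinely different route from the paper's. The paper proves the claim by contradiction: it extends $v_1,\dots,v_k,x$ to an orthonormal basis, expands $\sum_i \|v_i^*\|_2^2 = k$ in that basis, and invokes Claim~\ref{claim:inspan} applied to each $v_j$ to derive a contradiction. You instead bound each $\langle v_i^*, x\rangle^2$ directly: using that $v_i^* = \1_{C_i}/\sqrt{|C_i|}$ has Rayleigh quotient $(v_i^*)^\top \mathcal{L}\, v_i^* = |E(C_i, V\setminus C_i)|/(d|C_i|) \le \epsilon$ (here note it is $\le \Phi_G(C_i)$, with equality only when $\vol(C_i) \le \vol(V\setminus C_i)$, but the inequality is all you need), together with the spectral gap $\lambda_{k+1} \ge \phi^2/2$, you get that the component of $v_i^*$ outside $\spn(\{v_j\}_{j\le k})$ has squared norm at most $2\epsilon/\phi^2$, and a term-by-term Cauchy--Schwarz finishes the argument. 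This is conceptually cleaner (it makes transparent \emph{why} the $v_i^*$ nearly lie in $\spn(\{v_j\}_{j\le k})$: low cut conductance forces low Rayleigh quotient), it does not rely on Claim~\ref{claim:inspan} at all, and it actually yields the sharper bound $2k\epsilon/\phi^2\,\|x\|_2^2$, which in the parameter regime $\epsilon \ll \phi^2$ improves on the stated $4k\sqrt{\epsilon}/\phi$ by a factor of roughly $\sqrt{\epsilon}/\phi$. The paper's contradiction argument has the advantage of reusing Claim~\ref{claim:inspan} rather than introducing a new computation, but the two routes give the same (and for you, a better) conclusion.
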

\begin{proof}
    Suppose not, and let $x \in \spn(\{ v_i\}_{i = 1}^k)^{\perp}$ be a unit norm vector with $|x^\top(\sum_{i = 1}^k {v_i^* v^*_i}^{\top}- \sum_{i = 1}^{k}v_i v_i^{\top})x| > k {4\sqrt{\epsilon}}/{\phi} $. Since $x \perp v_i$ for $i =1, \ldots, k$, we have 

    \begin{equation}\label{eq:xvi*}
     k \frac{4\sqrt{\epsilon}}{\phi} < \left|x^\top\left(\sum_{i = 1}^k {v_i^* v^*_i}^{\top}- \sum_{i = 1}^{k}v_i v_i^{\top}\right)x\right|= \sum_{i = 1}^k \langle x, v_i^* \rangle ^2.
    \end{equation}

    \noindent
    Furthermore, $v_1, \dots, v_k,x$ are $(k+1)$ orthonormal vectors, so we can extend them to an orthonormal basis $ v_1, \ldots, v_k,x,  u_{k+2}, \ldots u_{n} $ of $\R^n$ for some vectors $u_i$. 
   Expanding $v_1^*, \ldots, v_k^*$ with respect to this basis, we obtain 
   \begin{align*}
       k &= \sum_{i=1}^k \|v_i^*\|^2_2 \\
       & = \sum_{i = 1}^{k}\left( \sum_{j =1}^{k}\langle v_i^*, v_j\rangle ^2 + \langle v_i^*, x\rangle^2 + \sum_{j = k+2}^n\langle v_i^*, u_j\rangle ^2\right) \\
       & \geq \sum_{i = 1}^k \sum_{j = 1}^{k}\langle v_i^*, v_j\rangle ^2 + \sum_{i = 1}^k\langle v_i^*, x\rangle^2 \\
       & > \sum_{j =1}^k v_j^\top \left(\sum_{i = 1}^k v_i^*{v_i^*}^\top \right)v_j + k \frac{4\sqrt{\epsilon}}{\phi}  \qquad \text{by~\eqref{eq:xvi*}} \\
       & \geq \sum_{j =1}^k v_j^\top \left(\sum_{i = 1}^k v_i{v_i}^\top \right)v_j - \frac{4\sqrt{\epsilon}}{\phi}\sum_{i = 1}^k\|v_i\|^2_2 + k \frac{4\sqrt{\epsilon}}{\phi}   \qquad \text{by applying \Cref{claim:inspan} to $v_1, \dots, v_k$} \\
       & = k \, ,
   \end{align*}
   which is a contradiction. 
\end{proof}
\noindent
To complete the proof of the lemma, we need to show that $ \max_{\|x\|_2 = 1} |x^\top(\sum_{i = 1}^k {v_i^* v^*_i}^{\top}- \sum_{i = 1}^{k}v_i v_i^{\top})x| \leq 8   \sqrt{k {\sqrt{\epsilon}}/{\phi}}$. Given any $x \in \R^n$ with $\|x\|_2 = 1$, decompose $x = u +  w$ for vectors $u \in \spn(\{v_i \}_{i=1}^k)$,  $w \in \spn(\{v_i \}_{i=1}^k)^{\perp}$. Applying \Cref{claim:inspan} and \Cref{claim:inspanperp} to $u$ and $w$, respectively, we get 
\begin{equation}\label{eq:xmixed}
\begin{aligned}
    \left|x^\top\left(\sum_{i = 1}^k {v_i^* v^*_i}^{\top}- \sum_{i = 1}^{k}v_i v_i^{\top}\right)x\right|& =\left| ( u + w)^\top \left(\sum_{i = 1}^k {v_i^* v^*_i}^{\top}- \sum_{i = 1}^{k}v_i v_i^{\top}\right)(u + w) \right| \\
    & \leq \left| u^\top  \left(\sum_{i = 1}^k {v_i^* v^*_i}^{\top}- \sum_{i = 1}^{k}v_i v_i^{\top}\right)u \right| + \left|w^\top \left(\sum_{i = 1}^k {v_i^* v^*_i}^{\top}- \sum_{i = 1}^{k}v_i v_i^{\top}\right)w \right| \\
    & + 2\left| u^\top  \left(\sum_{i = 1}^k {v_i^* v^*_i}^{\top}- \sum_{i = 1}^{k}v_i v_i^{\top}\right)w\right| \\
    & \leq 4 \frac{\sqrt{\epsilon}}{\phi}\|u\|^2_2 + 4k \frac{\sqrt{\epsilon}}{\phi}\|w\|^2_2 + 2\left| u^\top  \left(\sum_{i = 1}^k {v_i^* v^*_i}^{\top}- \sum_{i = 1}^{k}v_i v_i^{\top}\right)w\right| \, .
\end{aligned}
\end{equation}
To bound the term $2| u^\top  (\sum_{i = 1}^k {v_i^* v^*_i}^{\top}- \sum_{i = 1}^{k}v_i v_i^{\top})w|$, note that 
\begin{align*}
     \left| u^\top  \left(\sum_{i = 1}^k {v_i^* v^*_i}^{\top}- \sum_{i = 1}^{k}v_i v_i^{\top}\right)w \right| & = \left| u^\top  \left(\sum_{i = 1}^k {v_i^* v^*_i}^{\top}\right)w \right| \qquad \text{since $w \perp v_i$ for all $i \in [k]$} \\
     & = \left|\sum_{i=1}^k \langle u, v_i^*\rangle \langle w, v_i^*\rangle\right| \\
     & \leq \sqrt{ \left(\sum_{i=1}^k \langle u,v_i^*\rangle ^2\right)\left(\sum_{i=1}^k \langle w,v_i^*\rangle ^2 \right)} \qquad \text{by Cauchy-Schwarz} \\
     & \leq \|u\|_2\sqrt{\left(\sum_{i=1}^k \langle w,v_i^*\rangle ^2 \right)} \\
     & =  \|u\|_2 \sqrt{ w^\top\left(\sum_{i = 1}^k {v_i^* v^*_i}^{\top}- \sum_{i = 1}^{k}v_i v_i^{\top}\right) w} \qquad \text{since $w \perp v_i$ for all $i \in [k]$} \\
     & \leq \|u\|_2 \|w\|_2 \cdot 2 k^{1/2} \frac{\epsilon^{1/4}}{\phi^{1/2}} \qquad \text{by Claim \ref{claim:inspanperp}.}
\end{align*}
Substituting back into~\eqref{eq:xmixed}, we obtain 
\begin{align*}
     \left|x^\top\left(\sum_{i = 1}^k {v_i^* v^*_i}^{\top}- \sum_{i = 1}^{k}v_i v_i^{\top}\right)x\right| &\leq 4 \frac{\sqrt{\epsilon}}{\phi}\|u\|^2_2 + 4k \frac{\sqrt{\epsilon}}{\phi}\|w\|^2_2 + 2\left| u^\top  \left(\sum_{i = 1}^k {v_i^* v^*_i}^{\top}- \sum_{i = 1}^{k}v_i v_i^{\top}\right)w\right|   \\
     & \leq  4 \frac{\sqrt{\epsilon}}{\phi}\|u\|^2_2 + 4k \frac{\sqrt{\epsilon}}{\phi}\|w\|^2_2 + 4 \|u\|_2 \|w\|_2 \cdot k^{1/2} \frac{\epsilon^{1/4}}{\phi^{1/2}} \\
     & \leq 8 \max \left \{ k \frac{\sqrt{\epsilon}}{\phi}, \sqrt{k \frac{\sqrt{\epsilon}}{\phi}}  \right\} \|x\|_2^2 \qquad \text{using the fact that $\|u\|^2_2, \|w\|^2_2 \leq \|x\|^2_2$}   \\
     & =  8   \sqrt{k \frac{\sqrt{\epsilon}}{\phi}}\|x\|_2^2 \qquad \text{since $k \leq \frac{\phi}{\sqrt{\epsilon}}$, by the lemma assumption.}
     \end{align*}
Finally, combining with~\eqref{eq:P_op_main}, we get 
$ \|P - P^*\|_{\op} \leq \max_{\|x\|_2=1}  |x^\top(\sum_{i = 1}^k {v_i^* v^*_i}^{\top}- \sum_{i = 1}^{k}v_i v_i^{\top})x|  \leq  8   \sqrt{k \frac{\sqrt{\epsilon}}{\phi}}$.
\end{proof}

\end{document}